\newcommand{\Dslash}{\ensuremath \hspace{0.25cm}\raisebox{0.025cm}{\slash}\hspace{-0.32cm} D}
\newcommand{\dslash}{\not{\hbox{\kern-2pt $\partial$}}}
\newcommand{\pslash}{\not{\hbox{\kern-2.3pt $p$}}}
 \newtoks\nslashfraction
 \newcommand{\nslash}[1]{\setbox0\hbox{$ #1 $}
   \setbox0\hbox to \the\nslashfraction\wd0{\hss \box0}/\box0 }
\def\appendix#1{\addtocounter{section}{1}\setcounter{equation}{0}
\renewcommand{\thesection}{\Alph{section}}
\section*{Appendix \thesection\protect\indent \parbox[t]{11.715cm} {#1}}
\addcontentsline{toc}{section}{Appendix \thesection\ \ \ #1} }
\newcommand{\eq}{\begin{equation}}
\newcommand{\eqend}{\end{equation}}
\newenvironment{remark}{\\[2mm]
\textbf{Remark }}
{\\[2mm]}
\newenvironment{romanlist}{%

        \begin{enumerate}
        }{%
        \end{enumerate}}
\newbox\ncintdbox \newbox\ncinttbox
\def\Dirac{{D\!\!\!\!/\,}} 
\def\={\ =\ }
\def\spinc{spin$^c$~}
\newcommand{\complex}{{\mathbb C}} 
\newcommand{\zed}{{\mathbb Z}} 
\newcommand{\nat}{{\mathbb N}} 
\newcommand{\real}{{\mathbb R}} 
\newcommand{\reals}{{\mathbb R}} 
\newcommand{\rat}{{\mathbb Q}} 
\newcommand{\mat}{{\mathbb M}} 
\newcommand{\id}{{1\!\!1}} 
\def\slash{{\!\!\!/\,}} 
\def\alg{{\mathcal A}}
\def\balg{{\mathcal B}}
\def\hil{{\mathcal H}}
\def\hilR{{\mathcal H}_\real}
\def\bun{{\mathcal E}}
\def\lin{{\mathcal L}}
\def\comp{{\mathcal K}}
\def\pt{{\rm pt}}
\def\ev{{\rm ev}}
\def\CS{{\rm WZ}}
\def\cpt{{\rm cpt}}
\def\K{{\rm K}}
\def\H{{\rm H}}
\def\C{{\rm C}}
\def\E{{\rm E}}
\def\B{{\mathbb{B}}}
\def\Ltwo{{\rm L}^2}
\def\S{{\mathbb{S}}}
\def\P{{\mathbb{P}}}
\def\U{{\rm U}}
\def\im{{\rm im}}
\def\Tor{{\rm Tor}}
\def\Ext{{\rm Ext}}
\def\Hom{{\rm Hom}}
\def\End{{\rm End}}
\def\Cl{{{\rm C}\ell}}
\def\ch{{\rm ch}}
\def\Id{{\rm id}}
\def\pr{{\rm pr}}
\def\Index{{\rm Index}}
\def\Todd{{\rm Todd}}
\def\ind{{\rm ind}}
\def\Fred{{\rm Fred}}
\def\Cliff{{\rm Cliff}}
\def\Thom{{\rm Thom}}
\newcommand{\Tr}[1]{\:{\rm Tr}\,#1}
\def\e{{\,\rm e}\,}
\def\be{\begin{equation}}
\def\ee{\end{equation}}
\def\bea{\begin{eqnarray}}
\def\eea{\end{eqnarray}}
\def\bd{\begin{displaymath}}
\def\ed{\end{displaymath}}
\def\dd{{\rm d}}
\def\ii{{\,{\rm i}\,}}
\def\tor{{\rm tor}}
\def\Vect{{\rm Vect}}
\newdimen\normalarrayskip              
\newdimen\minarrayskip                 
\newif\ifold             \oldtrue            
\def\arraymode{\ifold\relax\else\displaystyle\fi} 
\def\@arrayskip{\ifold\baselineskip\z@\lineskip\z@
     \else
     \baselineskip\minarrayskip\lineskip2\minarrayskip\fi}
\def\@arrayclassz{\ifcase \@lastchclass \@acolampacol \or
\@ampacol \or \or \or \@addamp \or
   \@acolampacol \or \@firstampfalse \@acol \fi
\edef\@preamble{\@preamble
  \ifcase \@chnum
     \hfil$\relax\arraymode\@sharp$\hfil
     \or $\relax\arraymode\@sharp$\hfil
     \or \hfil$\relax\arraymode\@sharp$\fi}}
\def\@array[#1]#2{\setbox\@arstrutbox=\hbox{\vrule
     height\arraystretch \ht\strutbox
     depth\arraystretch \dp\strutbox
     width\z@}\@mkpream{#2}\edef\@preamble{\halign \noexpand\@halignto
\bgroup \tabskip\z@ \@arstrut \@preamble \tabskip\z@ \cr}%
\let\@startpbox\@@startpbox \let\@endpbox\@@endpbox
  \if #1t\vtop \else \if#1b\vbox \else \vcenter \fi\fi
  \bgroup \let\par\relax
  \let\@sharp##\let\protect\relax
  \@arrayskip\@preamble}
\newcommand{\beq}{\begin{eqnarray}}
\newcommand{\eeq}{\end{eqnarray}}
\newcommand{\G}{\Gamma}
\newcommand{\ta}{\tau}
\newcommand{\del}{\partial}
\newcommand{\KO}{{\rm KO}}
\newcommand{\KK}{{\rm KK}}
\newcommand{\KKO}{{\rm KKO}}
\newcommand{\Spin}{{\rm Spin}}
\newcommand{\SO}{{\rm SO}}
\newcommand{\UU}{{\rm U}}
\newcommand{\SU}{{\rm SU}}
\newcommand{\SL}{{\rm SL}}
\newcommand{\ass}{{\rm ass}}
\newcommand{\normx}{{\lVert x \rVert}}
\newcommand{\normy}{{\lVert y \rVert}}
\newcommand{\normxy}{{\lVert x\,y \rVert}}
\newcommand{\norm}{{\lVert 1 \rVert}}
\newcommand{\HR}{\mathcal{H}_{\mathbb{R}}}
\newcommand{\ocat}[1]{\textsf{Or}(#1)}
\newcommand{\subc}[1]{\textsf{Sub}(#1)}
\newcommand{\cat}[1]{\textsf{#1}}
\newcommand{\lmod}[1]{{#1}\!-\!\text{Mod}}
\newcommand{\rmod}[1]{\text{Mod}\!-\!{#1}}
\newcommand{\man}[1]{\text{#1}}
\newcommand{\group}[1]{\text{#1}}
\newcommand{\ram}[1]{\text{#1}}
\newcommand{\rgenus}[1]{\hat{\mathcal{A}}(#1)}
\newcommand{\kred}[2]{\widetilde{\rm K}^{#1}(#2)}
\newcommand{\kgroup}[2]{{\rm K}^{#1}(#2)}
\newcommand{\kogroup}[2]{{\rm KO}^{#1}(#2)}
\newcommand{\redk}[2]{{\rm \widetilde{K}}^{#1}(#2)}
\newcommand{\redko}[2]{{\rm \widetilde{KO}}^{#1}(#2)}
\newcommand{\ckgroup}[2]{{\rm K}_{cpt}^{#1}(#2)}
\newcommand{\ckogroup}[2]{{\rm KO}_{cpt}^{#1}(#2)}
\newcommand{\topsp}[1]{{\rm #1}}
\newcommand{\cliffmod}[1]{\widehat{\mathfrak{M}}^{#1}}
\newcommand{\appCat}{Appendix A}
\newcommand{\appclifford}{Appendix B}
\newcommand{\appcharac}{Appendix C}
\newcommand{\appequiv}{Appendix D} 
\newtheorem{theorem}{Theorem}[chapter]
\newtheorem{lemma}[theorem]{Lemma}
\newtheorem{cor}[theorem]{Corollary}
\newtheorem{proposition}[theorem]{Proposition}
\theoremstyle{definition}
\newtheorem{definition}[theorem]{Definition}
\newtheorem{example}[theorem]{Example}
\newtheorem{remark2}[theorem]{Remark}
\author{Alessandro Valentino}
\title{K-Theory, D-Branes and Ramond-Ramond Fields}
\date{}
\numberwithin{equation}{section}
\begin{document}

\maketitle

\begin{abstract}
This thesis is dedicated to the study of K-theoretical properties of D-branes and Ramond-Ramond fields.\\
We construct abelian groups which define a homology theory on the category of CW-complexes, and prove that this homology theory is equivalent to the bordism representation of KO-homology, the dual theory to KO-theory. We construct an isomorphism between our geometric representation and the analytic representation of KO-homology, which induces a natural equivalence of homology functors. We apply this framework to describe mathematical properties of D-branes in type I String theory.\\
We investigate the gauge theory of Ramond-Ramond fields arising from type II String theory defined on global orbifolds. We use the machinery of Bredon cohomology and the equivariant Chern character to construct abelian groups which generalize the properties of differential K-theory defined by Hopkins and Singer to the equivariant setting, and can be considered as a differential extension of equivariant K-theory for finite groups. We show that the Dirac quantization condition for Ramond-Ramond fieldstrengths on a good orbifold is dictated by equivariant K-theory and the equivariant Chern character, and study the group of flat Ramond-Ramond fields in the particular case of linear orbifolds in terms of our orbifold differential K-theory.   
\end{abstract}
\renewcommand{\abstractname}{Acknowlegdements}
\begin{abstract}
I want to thank my supervisor Prof. Richard Szabo for the support and patience during these years, for the many discussions we had, and for allowing me to always express and develop my own point of view in any aspect of my research.\\
I want to thank my  colleague and friend Rui Reis for the uncountable discussions on maths and science in general, which made me better appreciate many aspects of these subjects.\\
A particular thanks goes to my examiners Jacek Brodzki, Jos\'e Figueroa-O'Farrill, and Des Johnston, whose questions and comments helped improve the quality of the present work.\\
I want to thank U.Bunke, J.Figueroa-O'Farrill, D.Freed, J.Greenlees, J.Howie, A. Konechny, W.L\"{u}ck, T.Schick, P.Turner, and S.Willerton for helpful suggestions and correspondence.\\
I want to thank Dr. Mark Lawson for the discussions on maths, physics, and British culture.\\ 
I want to thank Michele Cirafici and Mauro Riccardi for the many conversations we had in front of a pint of ale.\\
I want to thank Fedele Lizzi for the support during difficult moments, and Patrizia Vitale for the long time interest in my scientific developments.\\ 
I want to thank Giorgos, Henry, and Yorgos for being such great mates, and for the many hours of fun we had.\\
A thank you goes to my officemates Emma, Kenny, Sally, and Singyee, for being the best officemates I could have ever hoped for.\\
I want to thank Christine, Claire and Pat for helping with bureocratic matters of any sort.\\
A special thanks goes to my parents, for letting me always pick any decision by my own, which highly contributed to the person I am now.\\
Last, but in no way least, I want to thank Antonella. Unfortunately, words are useless to describe my deep gratitude, and her relevance to the very existence of this work. They say that ``behind every great man there is always a great woman'': my case trivially suggests that a feminine ``version'' of the previous statement could not possibly hold.
\end{abstract}
\newpage
\thispagestyle{empty}
\quad
\vspace{7.5cm}
\begin{center}
The present work of thesis is based on the following articles
\begin{itemize}
\item[-] R.J.Szabo and A.Valentino, ``Ramond-Ramond Fields, Fractional Branes and Orbifold Differential K-theory'', arXiv:0710.2773
\item[-] R.M.G.Reis, R.J.Szabo, and A.Valentino, ``KO-Homology and Type I String Theory'', arXiv:hep-th/0610177 
\end{itemize}
\end{center}
\pagestyle{plain}
\newpage
\pagenumbering{roman}
\tableofcontents
\newpage
\quad
\thispagestyle{empty}
\pagenumbering{arabic}
\chapter*{Introduction}
\addcontentsline{toc}{chapter}{Introduction}
\setcounter{page}{1}
\pagestyle{plain}
\begin{center}
\emph{``The most powerful method of advance that can be suggested at present is to employ all the resource of pure mathematics in attempts to perfect and generalise the mathematical formalism that forms the existing basis of theoretical physics, and \emph{after} each success in this direction, to try to interpret the new mathematical features in terms of physical entities''}\\
P.A.M. Dirac, 1931
\end{center}

Since the dawn of science, the interaction between physics and mathematics has always been an interesting and debated one, given the apparent difference between these two disciplines. Indeed, on one hand, physics deals with natural phenomena as they happen in an ``objective reality'', external to the observer, and the aim of physics is then to understand and formulate the laws they obey. On the other hand, mathematics appears to deal with a reality which is internal to the human being, populated by objects which need only obey the laws of logic and consistency. Furthermore, their methodology seems rather different: physics proceeds by experiments and particular cases, while mathematics proceeds by deduction and chains of logical statements. As it always happens when different disciplines come in contact, the interaction of these two human activities is bound to enlarge our knowledge of the reality we live in, and of the nature of the human being himself. In this respect, the 20th century has been of a crucial importance. In particular, most part of it has been dominated by the influence of the advances in modern mathematics, such as differential geometry, functional analysis, and algebra in the formulation and understanding of fundamental physical theories, such as General Relativity, Quantum Mechanics, and Quantum Field Theory. This influence was so prominent and surprising that Eugene Wigner was lead to celebrate it in the now classic paper ``The Unreasonable Effectiveness of Mathematics in the Natural Sciences'' \cite{Wigner}. This fruitful interaction has become even stronger in the late part of the last century. Interestingly, though, we have somehow witnessed the shift of influence from mathematics to physics, in such a way that we are lead to wonder about the ``unreasonable effectiveness of physics on mathematics''. The description of Jones' polynomials via quantum field theoretical techniques, which earned Edward Witten a Fields Medal in 1990, is only the most evident result of the above interaction. This is not something new: after all, differential calculus was born mainly out of the necessity to solve problems in mechanics. However, mathematics in the 20th century, in particular in its second half, has evolved independently of any application, following more abstract and formal principles. This makes the possibility of its relation with physics even more exciting and surprising.\\

Despite the difficulties and controversies surrounding it, it is undeniable that String theory has played a major role in these developments, generating an incredible number of interesting problems in different fields of mathematics, in particular algebra, geometry and topology. The rich structure of String theory has lead mathematical physicists and pure mathematicians to work on the same problems, albeit with different attitudes and motivations, allowing both communities to better appreciate the extent of their own subject of study.\\
The present work of thesis belongs to the above trend, finding its place at the interface of geometry, topology, and String theory. Loosely speaking, the ``middle point'' is represented by the circle of ideas surrounding K-theory, a generalized cohomology theory developed by Atiyah, Hirzebruch, and Grothendieck, among others, which can be defined in terms of complex vector bundles on topological spaces. The relevance of K-theory in String theory rests on the fact that D-branes, extended objects present in the theory, have charges which are not classified by the homology cycle of their worldvolumes, as expected, but rather by the K-theory of their normal bundles in the spacetime manifold. In the first part of this thesis, we will exploit the fact that a more natural description of D-branes can be given in terms of K-homology, which is the generalized homology theory dual to K-theory. The interesting mathematical aspect concerning K-homology is the fact that it can very naturally be constructed by using both a geometric and an analytic representation, respectively in terms of ${\rm Spin}^{c}$-manifolds and vector bundles, and of Fredholm modules. In particular, Baum and Douglas \cite{Baumdouglas} showed that an isomorphism can be constructed at the level of the representative cycles, and that such an isomorphism induces a natural equivalence between the geometric and analytic K-homology functors. The authors then showed that the existence of such an isomorphism is equivalent to the Index Theorem of Atiyah and Singer for the canonical Dirac operator, and conjectured that this is the case for any flavour of K-theory.\\
In this thesis we will support this conjecture, by constructing a natural equivalence between geometrical and analytical KO-homology, the dual theory to the K-theory of \emph{real} vector bundles, elucidating its relation with a suitable Index Theorem.\\
The difficult aspect of the above analysis lies in the fact that the index homomorphism takes values in abelian groups with torsion, which does not allow a local description in terms of characteristic classes. Since KO-homology describes D-branes in type I superstring theory, we will be able to construct D-branes with torsion charges which are not associated to any spacetime form field.\\
Indeed, D-branes can also be realised as currents for Ramond-Ramond fields. These ``dual'' objects are gauge fields locally described by differential forms of different degrees, depending on the type of String theory considered. Their fieldstrengths satisfy generalized Maxwell equations, and they interact with D-branes via the usual integral coupling. Since D-brane charges are classified by K-theory/K-homology, it is certainly expected that K-theory plays some role in the Ramond-Ramond gauge theory behaviour. This is indeed the case: the suitable mathematical formalism to describe these fields in nontrivial backgrounds is known as differential K-theory, which is an example of generalized differential cohomology theories recently developed in \cite{Hopkins2005}. In particular, the main ingredient is the Chern character homomorphism from K-theory to the even part of the deRham cohomology ring of the spacetime manifolds, which roughly realizes the Ramond-Ramond current generated by D-branes.\\         
In the second part of this thesis, we will generalize the arguments leading to the above conclusions to the case of type II String theory on orbifolds. An orbifold can be loosely described as a singular object which is locally isomorphic to the quotient of an Euclidean space by a finite subgroup of the group of linear transformations. Despite the presence of the singularities, String theory behaves well on such objects: in particular, D-branes can be introduced, and their charges are classified by equivariant (or orbifold) K-theory, as proposed by \cite{Witten1998}. We will analyse the Ramond-Ramond gauge theory arising from the closed type II String theory on global orbifolds, and demonstrate the role played by equivariant K-theory. In particular, we will construct some abelian groups satisfying all the expected properties to be considered a generalization of differential K-theory for global orbifolds. This will require the introduction of an equivariant cohomology theory not very popular in the physical literature, developed by Bredon in \cite{Bredon}, which nevertheless, as we will argue, captures all the relevant physical properties of Ramond-Ramond fields on orbifolds.\newpage
\begin{center}
\textbf{\large Plan of the Work}
\end{center}
We will know give a brief description of the contents of this thesis, trying to highlight the main new results in mathematics and String theory.

In Chapter 1 we introduce some generalities about String theory, focusing on the structure of its quantum spectrum and on the low energy limit. We have tried throughout to express all the relevant notions in a precise mathematical framework whenever possible, and we have avoided details of the constructions involved, referring the reader to the extensive literature on the subject. In this way, the basic concepts relevant to this work of thesis should be accessible to a mathematically minded audience acquainted with the basics of field theory and quantum mechanics.

In Chapter 2 we introduce D-branes as boundary conditions for open String theory, emphasising the geometric and topological properties of such objects. Indeed, we will focus mainly on properties of D-branes that are expected in any topological nontrivial background, such as the behaviour of the Chan-Paton vector bundle. The main properties of supersymmetric D-branes are briefly reviewed via the analysis of the spinor bundle on the D-brane worldvolume, which is a recurrent ingredient throughout this thesis. We will introduce the gauge theory of Ramond-Ramond fields, and discuss Ramond-Ramond charges and the anomalous couplings with D-branes. Also in this case, we will avoid the rather lengthy computations regarding the inflow mechanism, since these techniques will not play any relevant role in this work of thesis. Finally, we state Sen's conjectures on D-brane decay and mention Witten proposal on D-brane charge classification.

Chapter 3 consists of a quick introduction to topological K-theory. For a given CW-complex $\rm X$, we define the group ${\rm K}^{0}({\rm X})$ as the Grothedieck group associated to the monoid of vector bundles over $\rm X$, and define the higher K-groups via suspension. We describe the multiplicative structure possessed by K-theory, and discuss the Atiyah-Bott-Shapiro isomorphism in some details. We then restrict ourselves to the category of ${\rm Spin^{c}}$-manifolds, discussing the concept of $\rm K$-orientation, Thom isomorphisms, and the Chern character. Finally, we illustrate how the K-theoretical machinery is used in the classification of D-brane charges in type II and type I String theory. This chapter is of great importance, since the concepts therein will be tacitly assumed throughout the rest of the thesis.

Chapter 4 consists mostly of original material regarding KO-homology. After a brief introduction on dual theories and spectral KO-homology, we will recall the basic notions about the theory of \emph{real} ${\rm C}^{*}$-algebras, emphasising the differences with the analogous results in the theory of ordinary complex ${\rm C}^{*}$-algebras. We will then introduce Kasparov's formalism for KKO-theory, which is based on the notions of Hilbert modules and generalized Fredholm modules, and define analytic KO-homology for a topological space $\rm X$ via the ${\rm C}^{*}$-algebra of \emph{real} functions on $\topsp{X}$. We proceed to construct geometric KO-homology in terms of Spin manifolds and real vector bundles. By comparison with the bordism description of spectral K-homology developed in \cite{jakob}, we prove that our construction is a generalized homology theory dual to KO-theory, and discuss the various relevant homological properties. We then introduce the main mathematical result of this chapter, given by the construction of an isomorphism $\mu$ between the geometric and analytic representation of KO-homology which induces a natural equivalence between the geometric and analytic KO-homology functors. To this aim we introduce some index homomorphism on geometric and analytic KO-homology, and prove that the Index theorem for a suitable Dirac operator implies that $\mu$ is indeed an isomorphism. We point out that the proof of the above theorem appears also in \cite{baum-2007-3}, albeit it is completely different from the one we present here, which is more suitable for the applications we present later. We also construct a homological real Chern character, and use it to give an alternative derivation of cohomological index formulas for the canonical $\Cl_{n}$-linear Atiyah-Singer operator. From the physical point of view, we introduce the concept of \emph{wrapping charge} of a \emph{wrapped D-brane}, showing that in type I String theory it is a genuinely different notion from that of an ordinary D-brane. Finally, we construct nontrivial generators for the KO-homology of a point, and interpret these in terms of wrapped D-branes.\\
In Chapter 5 we give a detailed account on (generalized) differential cohomology theories. We motivate this mathematical formalism by discussing the properties of ordinary electromagnetism with Dirac quantization of charges in topologically nontrivial backgrounds. We give a rather extensive treatment of both Cheeger-Simons groups and Deligne, in order to build a solid intuition for these mathematical objects. We then discuss the Moore and Witten argument regarding the charge quantization of Ramond-Ramond fields, as proposed in \cite{Moore2000}. We conclude the chapter by explaining the construction of differential K-theory of Hopkins and Singer, which will be generalized later in the thesis.

In Chapter 6 we present new physical and mathematical results regarding String theory on global orbifolds. We first give a brief introduction to equivariant cohomology theories on the category of $\rm G$-CW complexes, and consider equivariant K-theory as an example. We then proceed to define Bredon cohomology in terms of natural transformations between functors on the orbit category of a finite group. This is the main ingredient used to constructed the equivariant Chern character defined in \cite{Luck1998}, which has the unique property of inducing an isomorphism on rational equivariant K-theory. One of the main new physical result of this chapter is to demonstrate that the Dirac quantization of Ramond-Ramond fields on orbifolds is dictated by the above Chern character, which, after a process of \emph{delocalization}, can be used to describe the couplings of Ramond-Ramond field on global orbifold with \emph{fractional} D-branes, which we will describe in terms of equivariant K-homology. We check this statement on linear orbifolds, which are the usual cases studied in the physics literature. Our approach has the main advantage of being applicable to the case of nonabelian orbifolds and for quotients of general Spin manifolds by finite groups. From the mathematical point of view, we construct abelian groups which have all the desired properties for a generalization of differential K-theory to global orbifolds. This is mathematically needed, since the general results of Hopkins and Singer in \cite{Hopkins2005} hold only for generalized cohomology theories on the category of manifolds. Far from reaching the generality of \cite{Hopkins2005}, we find this an important step towards the general construction of differential extensions of equivariant generalized cohomology  theories. We will use our equivariant (or orbifold) differential cohomology theory to describe Ramond-Ramond fields on orbifolds, and study in particular flat Ramond-Ramond potentials.\\

We conclude the work with Appendices which aim to settle the notations for some of the standard notions used throughout the thesis.

\newpage
\quad
\vspace{50mm}
\begin{flushright}
\parbox[c]{7cm}{\emph{``Freedom is the freedom to say that two plus two make four. If that is granted, all else follows.''}\vspace{-2mm}
\begin{flushright}
G.Orwell, \textbf{1984}
\end{flushright}}
\end{flushright}
\thispagestyle{empty}
\chapter{Generalities on String Theory}
\pagestyle{fancy}
In this chapter we will collect some basic results in String theory that will be used as the starting point for the rest of the thesis.We will review some well known aspects of the perturbative formulation of bosonic and supersymmetric string theories. We direct the reader to \cite{Green},\cite{Polchinski},\cite{quantumath} for more information about these constructions.  
\section{The Bosonic String}
The action functional for a string propagating in spacetime is a direct generalization of the functional describing the motion of relativistic point particle. In the case of a point particle, the action functional for a given curve $\gamma:[0,1]\to\man{M}$, where $\man{M}$ is a d-dimensional Lorentzian manifold, is given by
\begin{equation}
\text{S}[\gamma]:=\int_{\gamma}\mu_{g|\gamma}
\end{equation}
where $\mu_{g|\gamma}$ is the invariant volume form for the spacetime metric restricted to the curve $\gamma$. The action $\text{S}$ computes the length of the curve $\gamma$, the worldline of the point particle, and its stationary points are the geodesics for the metric $g$ on $\man{M}$.\\
\indent It is natural to generalize this action for an extend p-dimensional object propagating in $\man{M}$ as the ``volume'' of the surface swept by the object while propagating. In other words, given $f:\Sigma\to\man{M}$, where $\Sigma$ is a p-dimensional manifold, called the \emph{worldsheet}, and $f$ is a smooth immersion, we have that
\begin{equation}
\text{S}[f]:=\int_{\Sigma}\mu_{f^{*}g}
\end{equation} 
The case of a free string propagating in $\man{M}$ is given by p=2, and by $\Sigma\simeq\text{S}^{1}\times{\mathbb{R}}$ for closed string, and $\Sigma\simeq[0,1]\times{\mathbb{R}}$ for open strings.\\  
The functional $\text{S}$ is called the \emph{Nambu-Goto action} and in a local system of coordinates $\{\sigma^{i}\}$ over $\Sigma$, the functional $\text{S}$ can be represented as
\begin{equation}
\text{S}[f]=\int_{\Sigma}\dd^{p}\sigma\sqrt{-\text{det}(g_{\mu\nu}\frac{\partial{x}^{\mu}}{\partial{\sigma^{i}}}\frac{\partial{x}^{\nu}}{\partial{\sigma^{j}}})}
\end{equation}
where $x^{\mu}(\sigma)$ are local representatives for the function $f$.\\
Unfortunately, this action is non-polynomial in the $x^{\mu}$ and its derivatives, making its quantization difficult to define unambiguously, even in flat spacetime.\\
For this reason, it is generally preferred to use the classically equivalent \emph{Polyakov action} given by
\begin{equation}\label{Polyakov}
\text{S}[f,\gamma]:=k\int_{\Sigma}\mu_{\gamma}<\gamma,f^{*}g>
\end{equation}
In the above expression, $\gamma$ is a Lorentzian metric on $\Sigma$, $<,>$ denotes $||{\rm d}f||^{2}$, defined by considering ${\rm d}f$ as an element in $\Omega^{1}(\Sigma;f^{*}{\rm TM})$, and using both $\gamma$ on ${\rm T^{*}\Sigma}$ and $g$ on ${\rm TM}$, and $k$ is called the \emph{tension}. Notice that now the intrinsic metric $\gamma$ is a dynamical variable, while the metric tensor $g$ is considered as a ``background'', i.e. it is a nondynamical quantity. When also $\gamma$ is held fixed, the action S describes a \emph{nonlinear sigma model}, and its stationary points when p=2 are called \emph{harmonic maps}.\\
In the local system of coordinates as above the Polyakov action can be represented as
\begin{equation}
\text{S}[f,\gamma]=k\int_{\Sigma}\dd^{p}\sigma\sqrt{-\text{det}(\gamma)}\gamma^{ij}(g_{\mu\nu}\frac{\partial{x}^{\mu}}{\partial{\sigma^{i}}}\frac{\partial{x}^{\nu}}{\partial{\sigma^{j}}})
\end{equation}
By construction, the Polyakov action is invariant under $Diff^{+}(\Sigma)$, the group of orientation preserving diffeomorphisms, and under $ISO(g)$, the group of isometries of $g$. Moreover, only in the peculiar case p=2 the functional S is invariant under $C^{\infty}_{+}(\Sigma)$, the group of smooth positive functions on $\Sigma$, acting as \emph{Weyl rescaling} of the metric $\gamma$, i.e. transformations of the type $\gamma\to\rho\cdot\gamma$.\\
 Hence the full group of symmetries of the string action is given by $Diff^{+}(\Sigma)\ltimes{C^{\infty}_{+}(\Sigma)}\times{ISO(g)}$, where $\ltimes$ means semidirect product.\\
 The invariance under diffeomorphims and Weyl scalings is crucial in giving the Polyakov action a more tractable form, allowing to use canonical quantization techniques, and as this is possible only for p=2, this is seen as a reason to rule out higher dimensional extended objects other than strings.\\
 Indeed, a 2-dimensional manifold is always locally \emph{conformally flat}, i.e. there always exist local coordinates $(u,v)$ in which the metric $\gamma$ can be expressed as $\rho\cdot\eta_{ab}$, where $\eta_{ab}$ is the usual Minkowski metric in 2 dimensions. Diffeomorphisms and Weyl invariance then allow to ``pick a gauge'', called the \emph{conformal gauge}, in which the functional S is the action functional for a vibrating string in a curved spacetime.\\
 When $\man{M}$ is the d-dimensional Minkowski spacetime, the equations of motion of the string in the conformal gauge can be completely solved, and its canonical quantization carried on thanks to an additional symmetry enjoyed by the string functional.\\
Indeed, conformal transformations of the Minkowski metric in two dimensions do not spoil the (local) conformal gauge, implying that String theory (at any fixed $\gamma$) is a two dimensional conformal field theory, hence completely integrable by symmetry considerations alone, as the conformal algebra in two dimensions is infinite dimensional.\\
Conformal invariance is so important that it is required to hold also at the quantum level, where an anomaly could possibly spoil it: the cancellation of such a conformal anomaly, indeed, fixes the spacetime dimensionality to d=26, called the \emph{critical dimension}.
\begin{remark}
During this thesis, we will mainly consider both the worldsheet and spacetime manifolds as being equipped with a Riemannian as opposed to a Lorentzian metric tensor. Even if this is not per se physically realistic, in most of the cases one can perform a \emph{Wick rotation} on the spacetime manifold and obtain the relativistic description we have introduced so far.\\
Moreover, we will also consider different topologies for the worldsheet, and in particular we will regard $\Sigma$ as a Riemann surface of genus $g$: this is essentially due to how \emph{interactions} are introduced in String theory at the quantum level. This perturbative formulation of String theory also forces the use of Riemannian worldsheets, as opposed to Lorentzian, since a compact manifold admits  a Lorentzian metric if and only if the Euler number vanishes. For example, the transition amplitude for the propagation of a quantum string will be given by
\begin{displaymath}
A\sim\sum_{\text{topologies of}\:\Sigma}\:\int_{\text{Met}(\Sigma)}\mathcal{D}\gamma\int_{\text{Map}(\Sigma,\man{M})}\mathcal{D}f\:e^{-\text{S}[f,\gamma]}
\end{displaymath}
where $\mathcal{D}\gamma$ and $\mathcal{D}f$ are ``path integral measures'' over a space of metrics and maps, respectively. More precisely, the above expression should be ``gauge fixed'': indeed, the path integral over the space of metrics reduces to an integral over a finite dimensional moduli space. 
\end{remark}
\section{The Supersymmetric String}
Despite the rich structure of its symmetries, the bosonic string has some fundamental flaws. The most relevant ones, apart from the high dimensionality of the spacetimes allowed, are the presence of a tachyonic state, which is a strong signal towards instability of the quantum theory, and the absence of fermions, which is not a feature of an inconsistent theory, but nevertheless fermions are required for physical reasons.\\
A way out of these two problems is the so called \emph{Neveu-Schwarz-Ramond (NSR) formulation} of String theory, which consists in introducing additional fermionic degree  on the woldsheet.\\
More precisely, the Polyakov action can be at first generalized as
\begin{equation}\label{spolyakov}
\text{S}[f,\gamma,\psi]:=k\int_{\Sigma}\mu_{\gamma}\left\{<\gamma,f^{*}g>+\bar{\psi}\Dslash\psi\right\}
\end{equation}
where $\psi$ is a section of $\text{S}\otimes{f^*{\man{TM}}}$, with $\text{S}$ the spin bundle over $\Sigma$ for a given spin structure, and $\Dslash$ is the Dirac operator associated to $\gamma$, coupled to $f^{*}{\man{TM}}$.\\
Notice first that $\Sigma$ is a spin manifold, being 2-dimensional, and it admits, at any genus $g$, $2^{2g}$ inequivalent spin structures which can be distinguished by a $\pm$ sign along the homology cycles of $\Sigma$. Moreover in even dimensions the spin bundle $\text{S}$ decomposes according to the chirality operator as $\text{S}^{+}\oplus\text{S}^{-}$.\\
One also requires $\psi$ to be a Majorana spinor field to ensure that the full action is real. Majorana spinors exist on worldsheets with Lorentzian signature, but not on those with Euclidean signature. Anyway, in the Euclidean case one can use an ordinary chiral spinor field $\psi_{+}$, and choose $\psi_{-}$ to be its complex conjugate, in order to preserve the degrees of freedom.\\
More concretely, $\psi_{+}$ will be a section of $\text{K}^{1/2}$, the square root of the canonical bundle on $\Sigma$, and $\psi_{-}$ a section of $\bar{\text{K}}^{1/2}$ for the same spin structure.\\
In the free (closed) superstring case, the topology of $\Sigma$ admits a single homology cycle, and hence there are two spin structures, which are conventionally referred to as \emph{Ramond}(R) and \emph{Neveu-Schwarz}(NS), and can be characterized by
\begin{displaymath}
\begin{array}{l}
\psi_{\pm}(\tau,\sigma+2\pi)=+\psi_{\pm}(\tau,\sigma)\quad\text{R: periodic conditions}\\
\psi_{\pm}(\tau,\sigma+2\pi)=-\psi_{\pm}(\tau,\sigma)\quad\text{NS: anti-periodic conditions}
\end{array}
\end{displaymath}
On a flat spacetime, the free superstring can be quantized using canonical quantization: the main difference with the bosonic string is given by the appearance of new sectors, NS and R, given by the boundary conditions for the spinor fields $\psi$.\\
In particular, the states in the Fock space $\mathcal{F}^{\text{NS}}$ for Neveu-Schwarz degrees of freedom can be shown to be spacetime bosons, while the states in the Fock space $\mathcal{F}^{\text{R}}$ for Ramond degrees of freedom can be shown to be spacetime fermions.\\[2mm]
Unfortunately, both the above spaces contain \emph{negative norm states} that are not eliminated by any kind of symmetry, in contrast to what happens in the bosonic string case thanks to the Virasoro algebra constraints. To ensure this, one modifies the action (\ref{spolyakov}) by introducting a \emph{gravitino}, a spin 3/2 field, and an interaction term in order for the action to be invariant under \emph{local worldsheet supersymmetry}. In analogy with the bosonic case, the modified superstring action will define a \emph{superconformal} field theory: the preservation of the super conformal symmetry at the quantum level requires the spacetime dimensionality d to be 10.\\
Even with these modifications, the theory is still inconsistent, as there is still a tachyonic state in the NS sector, with no supersymmetric partner state, making it impossible to have spacetime supersymmetry.\\
To overcome this final problem, Gliozzi-Scherk and Olive proposed a procedure for a truncation of the RNS String theory that produces a spectrum with spacetime supersymmetry. This truncation is called the \emph{GSO projection}, which can be thought as a projection on the space of invariant states for the operator $(-1)^{F}$, which assigns to each state the number of fermions present modulo 2.\\
This operator can be carefully defined in both the Ramond and Neveu-Schwarz sector, obtaining that the free open GSO projected string is N=1 supersymmetric, while the free closed GSO projected string is N=2 supersymmetric.
\begin{remark}
In the functional integration formulation, the GSO projection corresponds to a weighted sum over the spin structures of the worldsheet, with the weight choosen in such a way that the resulting amplitudes are invariant under the action of the modular group of $\Sigma$.
\end{remark}
\section{Quantum aspects}\label{quantum}
In this section we will briefly recall the (massless) content of the supersymmetric string space of states, in flat 10-dimensional Minkowski spacetime \cite{Green,quantumath}.\\
Denote by $\mathcal{F}_{k}$ and $\tilde{\mathcal{F}}_{k}$ the Fock spaces for the bosonic degrees of freedom at momentum $k$ for \emph{left} and \emph{right} movers, obtained upon a holomorphic decomposition of the fields in local complex worldsheet coordinates.\\
Then the full RNS Fock space for open and closed strings is given by
\begin{eqnarray}\label{fockspaces}
\mathcal{F}_{\text{open}}&:=&\bigoplus_{k}\mathcal{F}_{k}^{RNS}\nonumber\\
\mathcal{F}_{\text{closed}}&:=&\bigoplus_{k}\mathcal{F}_{k}^{RNS}\otimes{\tilde{\mathcal{F}}_{k}}^{RNS}\qquad k\in\mathbb{R}^{10}
\end{eqnarray} 
where
\begin{displaymath}
\mathcal{F}_{k}^{RNS}:=\mathcal{F}_{k}\otimes(\mathcal{F}^{R}\oplus\mathcal{F}^{NS})
\end{displaymath}
and the same for $\tilde{\mathcal{F}}_{k}^{RNS}$.\\
After imposing the physical constaints required by the superconformal symmetry, we will obtain two positive-definite Hilbert spaces for open and closed strings, denoted by $\mathcal{F}^{\text{phys}}_{open}$ and $\mathcal{F}^{\text{phys}}_{closed}$, respectively.\\
By (\ref{fockspaces}), the open string Fock space contains two sectors, Ramond and Neveu-Schwarz. The ground state in the Ramond sector satisfies a massless Dirac equation, suggesting that it is single particle state for a spacetime fermion field, while the ground state in the Neveu-Schwarz sector is a bosonic tachyon, obtained by a spacetime scalar field.\\
Moreover, the first excited state is a \emph{massless vector state}, with its degree of freedom suggesting it is a one particle state for a Yang-Mills field \cite{Polchinski,Green}.\\
These, and an infinite tower of massive states, are contained in the left-moving sector, but by (\ref{fockspaces}) this suffices to construct the Hilbert space of the open string.\\[2mm]
For the closed string, instead, we need to consider a tensor product for left and right movers Hilbert spaces: we will have four sectors, characterized by the spin structure choice for left and right-moving degrees of freedom.\\
The ground state in the NS-NS sector is a tachyon, as for the open string, while the massless states contains, in the same sense as before, a \emph{graviton}, associated to a symmetric spacetime tensor of type (2,0), the \emph{B-field}, coming from a spacetime two form, and a \emph{dilaton}, a spacetime scalar field. In particular, the graviton state satisfies linearized Einstein equations, hence its name.\\
The R-NS (and NS-R) ground state is a massless state, which is reducible into a spinor state coming from a spinor field $\lambda$, called the \emph{dilatino}, and a \emph{gravitino} state, coming from a spinor-vector field. These are the superpartners of the dilaton and graviton, respectively.\\
Finally, the ground state in the R-R sector is massless, and can be reduced in states that are one particle states for spacetime differential forms of degree 0,\ldots,10, called \emph{Ramond-Ramond fields}. In a certain sense, this is the ``most important'' sector for the content of this thesis: indeed, most of the next chapters  will be devoted to explore the rich mathematical properties of these objects, and the interaction with their ``sources'', called \emph{D-branes}.\\
Of course, all the sectors described above contain also a (infinite) number of massive excited states.\\[2mm]
As was pointed out in the previous section, the theory is still inconsistent, due to the presence of the tachyon state in the NS sector, and does not present spacetime supersymmetry, required by the consistency of the coupling of the \emph{massless} gravitino appearing in the spectrum.\\ 
This enforces the use of the GSO projection, both for the open and closed string. We are left then with three different spacetime supersymmetric string theories, called \emph{type IIA}, \emph{type IIB}, and \emph{type I}, which uses unoriented worldsheets. Moreover, the ``I'' and ``II'' refers to the fact that the theory is spacetime supersymmetric with one and two supercharge generators, respectively.\\[2mm]
Type IIA and type IIB are superstring theories constructed from GSO projecting the Fock space of closed strings: in this case, the GSO projection requires a choice of chirality for the R ground state in the left and right moving sector, but by spacetime parity symmetry the theories obtained by the same choice of chirality coincide. Indeed, under exchange of left and right movers, type IIA is a non-chiral theory, while type IIB is chiral. Moreover, it's a theory of oriented closed strings.\\
The GSO projection modifies, among other things, the content of the massless R-R sector: in type IIA there will be one particle states coming from differential form of \emph{odd} degree, while in type IIB there are states associated to differential forms of \emph{even} degree. Moreover, in type IIB the 5-form fieldstrenght is required to be selfdual.\\[2mm]
Type I is a superstring theory constructed from open \emph{and} closed strings, and will be discussed further on, when we will introduce additional degrees of freedom for open strings, called ``Chan-Paton'' factors, which are essential in obtaining massless Yang-Mills states, i.e the \emph{standard model gauge interactions}.\\[2mm]
To end, we should mention the Heterotic String theory, which is a hybrid theory obtained by combining right movers of type II String theory with bosonic left movers. We will not discuss this theory in this thesis.
\begin{remark}
As pointed out at the beginning of this section, the discussion above refers to the spectrum of a string propagating in 10-dimensional Minkowski spacetime. This is one of the very few cases in which the quantization of the theory can be done ``accurately'', even if in particular gauges and with a particular choices of spacetime coordinates, and the spectrum can be found explicitly\footnote{In case of interests, the choice of a gauge does no affect the quantum theory, as gauge invariance is restored at the quantum level.}. In particular, the associated \emph{classical} fields for the various particle states can be inferred thanks to the Poincar\'{e} symmetry of Minkowski spacetime. It is generally ``assumed'' that the field content of the effective theory does not change for a more generic choice of spacetime manifold: this is a point of view we will adopt in the development of the following chapters.   
\end{remark}
 \section{Background fields and low energy limit}    
 As we mentioned in the previous section, the closed string Fock space contains states in its NS-NS sector that can be associated to a symmetric, an antisymmetric and scalar massless field.  These fields can arise as a modification of the Polyakov action, describing a string propagating in \emph{background fields}.  By background field we mean a (spacetime) field which is not affected by the presence of the propagating string, and that does not represent a dynamical variable; moreover, background fields are not integrated over in the path integral.\\
 The bosonic part of the modified action is given by
 \begin{equation}\label{backgrounds}
 \dfrac{1}{8\pi{l_{s}}^{2}}\left\{\int_{\Sigma}\mu_{\gamma}<\gamma,f^{*}g>+\int_{\Sigma}f^{*}\text{B}+\int_{\Sigma}\mu_{\gamma}\:\text{R}_{\gamma}f^{*}\Phi\right\}
 \end{equation}
 where $g$ is the metric tensor on the spacetime $\man{M}$, $\text{B}$ is locally an element in $\Omega^{2}(\man{M};\mathbb{R})$, $\Phi\in{C^{\infty}(\man{M};\mathbb{R})}$, and $\text{R}_{\gamma}$ is the Ricci scalar for the worldsheet metric. Moreover we have expressed the tension $k$ in term of the worldsheet length scale $l_{s}$.\\
Notice that the first term in the modified string action ``coincides'' with the definition (\ref{Polyakov}), in the sense that in the formulation of the string dynamics as a nonlinear sigma model we have tacitly assumed the presence of a gravitational background.\\
The second term contains the B-field, and for closed strings, i.e in the case $\partial{\Sigma}=\O$, it  is invariant under the gauge transformation $\text{B}\to\text{B}+\dd\lambda$, with $\lambda\in\Omega^{1}(\man{M};\mathbb{R})$.\\
The last term plays an important role, as for any Riemann surface $\Sigma$ of genus $g$ one has
\begin{displaymath}
\chi(\Sigma)=\int_{\Sigma}\mu_{\gamma}\:\text{R}_{\gamma}
\end{displaymath}
where $\chi(\Sigma)=2(1-g)$ is the \emph{Euler number} of $\Sigma$, a topological invariant.\\
This makes the expansion in powers of $l$, called the \emph{low energy expansion}, somehow complicated, as the term containing the dilaton is directly dependent on the loop order.\\
 Moreover, the action (\ref{backgrounds}) is invariant under ${Diff}^{+}(\Sigma)$ and ${Diff}(\man{M})$, but not under Weyl rescaling. Indeed, enforcing these (super) symmetries constrains the choice for the above background fields, providing equations they have to obey at the lowest order in $l$, which are essentially given by the vanishing of $\beta$-functions governing the Weyl scaling of the model described by (\ref{backgrounds}).\\
These equations can be obtained as Euler-Lagrange equations for the following action \cite{Polchinski,Green}
\begin{equation}
2\int_{\man{M}}\mu_{g}\:e^{-2\Phi}(\text{R}_{g}+4\langle\text{D}\Phi,\text{D}\Phi\rangle-\dfrac{1}{2}\text{H}\wedge\star{\text{H}})
\end{equation}   
where $\text{D}$ the usual covariant derivative, $\text{H}$ is the 3-form field strength for the B-field, and $\star$ denotes the Hodge operator on M. Moreover, the critical dimension d of $\man{M}$ is again 10.\\
The important aspect about the above action is that it corresponds to (part of) the bosonic part of N=2 supergravity. More precisely, the above action has to be complemented with an additional set of fields coming from the low energy approximation of String theory, but that cannot be incorporated as background fields. These are essentially the field strengths for the Ramond-Ramond fields, giving a type IIA or type IIB supergravity theory. We will discuss these additional contributions in the next chapters, as they play a prominent role in the study of Ramond-Ramond fields as a (generalized) gauge theory.\\[2mm]
We conclude this section by briefly mentioning the main differences between the perturbative analysis around general field configurations in String theory and quantum field theory.
In quantum field theory, one usually starts with a set of canonical fields $\phi$ and a classical action $\text{S}[\phi]$ which is independent of any quantum parameter, such as $\hslash$. In the functional integral formalism, perturbation theory in the loop expansion is carried out by choosing a classical field configuration $\phi_{0}$, and expanding the fields around this configuration as $\phi=\phi_{0}+\sqrt{\hslash}\varphi$. The Feynman rules are then obtained by expanding the action $S[\phi]$ in powers of $\sqrt{\hslash}$: the propagator for the field $\varphi$ and the interaction vertices will then depend upon the choice of the background field $\phi_{0}$. As $\phi_{0}$ is a classical solution, all the Feynman graphs usually referred to as \emph{tadpoles}, corresponding to the linear term in $\sqrt{\hslash}$, vanish. This works also in the other way, in the sense that the vanishing of all the tadpoles graphs allows one to infer that $\phi_{0}$ is a classical solution. \\
As stated, this procedure has no analogue in String theory, simply because there is no equivalent for the canonical fields $\phi$: indeed, the S-matrix in String theory is not obtained by an action governing string interactions, but it is defined perturbatively through a generalization of Feynman diagrams. Hence it is not possible to check directly that a given configuration for the fields $g$, $\text{B}$ and $\Phi$ (and possibly other fields) is a background configuration, as the string action does not include their equations of motion. One can then think of using the vanishing of tadpole graphs as a criterion to decide when a set of fields defines a background solution.\\
It is remarkable, then, that asking for the action (\ref{backgrounds}) to define a conformal field theory of a given central charge, which is the case in which one is assured that there exists a Hilbert space of states (no nonnegative and zero length vectors), forces the tadpole graphs (of certain fields) to vanish, giving us the equations of motion that the background fields have to satisy.      
\section{Some comments on the B-field}\label{Bfield}
As we have seen in the previous section, an additional term containing a two form $\text{B}$ can be consistently added to the (bosonic) string action. We have also mentioned that the action so extended enjoys a ``gauge symmetry'' with respect to the field B. This gauge simmetry gives the B-field a very rich mathematical interpretation: indeed, even if the B-field ``plays no active role'' for the extent of this thesis, we find it neverthless important to give some informations about its mathematical nature.\\[2mm]
We start by noticing that the description of the $\text{B}$ field as a two form on the spacetime $\man{M}$ need not hold globally, as the $\text{B}$ field is defined up to the transformation $\text{B}\to\text{B}+d\lambda$, since the action (\ref{backgrounds}) is invariant under this process. A local description, then, can be obtained as follows.\\
Let $\left\{\mathcal{U}_{\alpha}\right\}$ be a open cover of $\man{M}$ enjoying the property that $\mathcal{U}_{\alpha}$ and all its intersections are contractible sets. Such a cover is called a \emph{good cover}, and any manifold can be equipped with one.\footnote{Just consider a cover made of geodesically complete open sets for some Riemannian metric.} Moreover, denote with $\mathcal{U}_{\alpha\beta\dots\tau}$ the intersection $\mathcal{U}_{\alpha}\cap\mathcal{U}_{\beta}\cap\dots\cap\mathcal{U}_{\tau}$.\\
As in any open set $\mathcal{U}_{\alpha}$ the form $\text{B}_{\alpha}$ is only defined up to the transfomation above, we have that on any $\mathcal{U}_{\alpha\beta}$ the following equation holds
\begin{equation}
\text{B}_{\beta}-\text{B}_{\alpha}=d\lambda_{\alpha\beta}\qquad\lambda_{\alpha\beta}\in\Omega^{1}(\mathcal{U}_{\alpha\beta};\mathbb{R})
\end{equation}
These equations imply that on a triple overlap $\mathcal{U}_{\alpha\beta\gamma}$ one has
\begin{equation}\label{eq1}
d(\lambda_{\alpha\beta}+\lambda_{\beta\gamma}+\lambda_{\gamma\alpha})=0
\end{equation}
Then by Poincar\'{e}'s lemma one has that
\begin{equation}
\lambda_{\alpha\beta}+\lambda_{\beta\gamma}+\lambda_{\gamma\alpha}=df_{\alpha\beta\gamma}\qquad{f_{\alpha\beta\gamma}\in{C^{\infty}(M;\mathbb{R})}}
\end{equation} 
for some smooth real function $f_{\alpha\beta\gamma}$.\\
Finally, on a quadruple intersections $\mathcal{U}_{\alpha\beta\gamma\sigma}$
\begin{equation}\label{eq4}
f_{\beta\gamma\sigma}-f_{\alpha\gamma\sigma}+f_{\alpha\beta\sigma}-f_{\beta\gamma\alpha}=2\pi\:\omega_{\alpha\beta\gamma\sigma}
\end{equation}
for some real numbers $\omega_{\alpha\beta\gamma\sigma}$.\\
Notice, at this point, that the local field strengths defined as $\text{H}_{\alpha}:=\dd\text{B}_{\alpha}$ can be ``glued'' together to form a globally defined three form on $\man{M}$. If we require $\text{H}$ to have integral periods, then one can choose $\omega_{\alpha\beta\gamma\sigma}$ to be integer numbers. According to this description, then, a B-field configuration is defined as a triple $(\text{B}_{\alpha},\lambda_{\alpha\beta},f_{\alpha\beta\gamma})$ satisying the above equations.\\
 In a mathematical language, the triple above defines a \emph{(abelian) gerbe with connection}, which is a generalization of the idea of principal bundle with connection. As in the case of a connection on a principal bundle, the field strength $\text{H}$ is related to the cohomology of the manifold $\man{M}$: indeed, the class $[\text{H}]$ represented by $\text{H}$ in $\text{H}^{3}(\man{M};\mathbb{R})$ is the image of the class $[\omega_{\alpha\beta\gamma\sigma}]\in\check{\text{H}}^{3}(\man{M};\mathbb{R})$ under the isomorphism between DeRham cohomology and C\v{e}ch cohomology with coefficients in the sheaf of locally constant real functions.\\
For the rest of this thesis, we will always assume that the B-field is ``turned off'', meaning that we will always work under the hypothesis that $\text{B}=0$. We will mention, though, the major modifications that the presence of a nontrivial B-field induces.\\[2mm]   
Notice, at this point, that the above definition of a gerbe has been given in a particular choice of a good cover: this raises the question of the independence of this definition from any particular choice.\\
A more elegant formulation, which manifestly does not rely on any particular local choice,  will be the subject of the next chapters.  

\chapter{D-branes and Ramond-Ramond fields}
In this chapter we will introduce \emph{D-branes} and analyze the main features of open strings in the presence of such objects. We will review the generalized electromagnetic theory of Ramond-Ramond fields, and discuss their anomalous coupling with D-branes. Finally, we will discuss Sen's conjectures, and motivate the relevance of K-theory in the description of D-brane charges.  
\section{D-branes as boundary conditions}
As we have mentioned in the previous chapter, the bosonic string propagating in a flat 26-dimensional Minkowski spacetime describes a two dimensional conformal field theory, in which the spacetime cartesian coordinates are fields defined over a surface of a given genus. We have also seen that String theory can be defined, apart from instabilities and other unpleasant effects, both for closed and open strings: in the latter case, the worldsheet $\Sigma$ has a nonempty boundary $\partial\Sigma$.
Since $\Sigma$ has a boundary, we are faced with the task of specifying boundary conditions for the conformal fields defined on $\Sigma$. Moreover, one has to require that these boundary conditions preserve the conformal invariance of the worldsheet theory.\\
For a general conformal field theory, the classification of local boundary conditions preserving conformal invariance is very complicated to obtain; moreover, very often the boundary conditions have no geometrical interpretation, making it more difficult to understand their relation with the fundamental strings described in the previous chapter.\\
For this reason, we will for the moment focus on those boundary conditions that have a clear geometrical interpretation, postponing the discussion of more general cases to the next chapters.\\[2mm]
Consider the Polyakov action (\ref{Polyakov}) in the case in which $\Sigma$ is a Riemann surface of genus $g$ with boundary $\partial\Sigma$. Recall that the dynamical variables are given by maps from the worldsheet to the spacetime $\man{M}$, and the worldsheet metric $\gamma$.\\
 At this point we can consider the action restricted to a subspace of $\text{Maps}(\Sigma,\man{M})$ specified by the condition
\begin{equation}\label{brane}
f(\partial\Sigma)\subset{\man{Q}}
\end{equation}
where $\man{Q}\subset{\man{M}}$ is a submanifold of the target spacetime.\\
We have then for the moment the following geometrical working definition\\[2mm]
\noindent\textbf{Definition} \emph{A D-brane is a (physical) object whose dynamical evolution is described by a submanifold $Q$ of the spacetime
containing the woldlines of the end points of open strings.}\\[2mm]
In the above definition we are also supposing that the manifold $\man{Q}$ is chosen in such a way that the worldsheet field theory is still conformal, and are usually referred to as \emph{D-submanifold}.\\
The ``D'' in D-brane refers to the fact that some of the coordinates representatives for the map $f$ are locally subject to Dirichlet boundary conditions. More precisely, consider a set of coordinates $\{x^{\mu}\}$ in $\man{M}$ in such a way that $\man{Q}$ can locally be represented by the condition $x^{\alpha}=0, \alpha=p+1,\ldots,26$, with $p=\dim\man{Q}$, and choose coordinates $\{\sigma,\tau\}$ on $\Sigma$ such that $\tau$ is the coordinate along the boundary, for $\sigma=0,\pi$.\\
Then condition (\ref{brane}) implies that
\begin{equation}\label{Dirichlet}
x^{\mu}(\tau,\sigma)|_{\sigma=0}=x^{\mu}(\tau,\sigma)|_{\sigma=0}=0,\quad\mu=p+1,\ldots,26
\end{equation}
which are referred to as Dirichlet boundary conditions.\\
Usually, (\ref{Dirichlet}) are supplemented with local Neumann conditions for the fields representing the coordinates on $Q$. Briefly, these can be enforced by requiring that
\begin{equation}
\dd{f}|_{(\partial\Sigma)_{p}^{\perp}}=0,\qquad{p\in\partial\Sigma}
\end{equation}
where $(\partial\Sigma)_{p}^{\perp}$ denotes the space of normal vectors to the boundary in $p$. In other words, the end points are free to move on the submanifold $\man{Q}$.\\
It is common use to refer to a D-brane represented by a p+1-dimensional manifold $\man{Q}$ as \emph{Dp-brane}. More precisely, in the following we will refer to $\man{Q}$ as the \emph{worldvolume} of the Dp-brane, suggesting a conceptual difference between a D-brane and the submanifold it ``wraps'', i.e. it is represented with. This difference will play an important role in the next chapters of this thesis.\\[2mm]
We conclude this section by stressing out that the above boundary conditions do not exhaust all the possible conditions for a boundary conformal field theory, and they represent a particular well-behaved subset of boundary conditions. Neverthless, their presence introduces remarkable additional features to String theory, as we will see in the next sections.
Also, notice that D-branes can be formally added also to the theory of closed strings, since the condition ($\ref{brane}$) does not apply. In particular, D-branes do not interact if no open strings are present.
\section{The spectrum of open strings on Dp-branes}\label{openDbranes} 
As usual in String theory, open strings in the presence of a D$p$-brane can be quantized only in very specific settings. Indeed, one usually considers open strings propagating in flat $d$-dimensional Minkowski spacetime in the presence of \emph{hyperplanar} D-branes, i.e. D-branes whose worldvolume is specified by linear conditions in (spatial) cartesian coordinates. Then, the spectrum content of the effective theory and its properties are assumed to be the same for more general D-brane configurations.\\
Notice, at this point, that introducing a D-brane generally reduces the spacetime symmetries of the conformal field theory. For instance, in the flat case above, introducing a fixed $p$+1-dimensional hyperplane in Minkowski spacetime destroys the translational symmetry, and reduces the $\group{SO}(1,d)$ action to that of $\group{SO}(1,p)\times{\group{SO}}(d-p-1)$. This is analogous to the fact that introducing an electron in spacetime breaks its translational symmetry, and is not surprising. The analogy is even deeper, in view of the fact that D-branes can arise as solitonic solutions of a low energy effective theory.\\
 Consider then a D$p$-brane in $d$-dimensional Minkowski spacetime represented by the hyperplane $x^{\mu}=\bar{x}^{\mu},\:\mu=p+1,\ldots,d$, and where $x^{0}$ is the time coordinate. In this situation, the Fock space of the quantized open string contains, as in the case without any D-brane, states that can be recognised as one particle states for a quantized classical field. Moreover, these classical fields are constrained to ``live'' on the worldvolume representing the D$p$-brane, in the sense that they can be naturally interpreted as objects defined only on the brane itself.\footnote{ This is due essentially to the fact that they are states generated from the ground state by the action of creation operators associated to the momenta corresponding to the coordinate describing points on the brane.} Notice, however, that statements about the support of these fields are usually dependent on the quantization procedure, and different procedures can yield in principle different answers.\\
In the following, we will just list the first few states arising from the ground state, referring the reader to \cite{Zwiebach} for more details.\\
As in the case of a bosonic open string without the presence of a D-brane, the ground state contains tachyonic states, which arise by a scalar field on the brane, and constitute a section of the normal bundle.\\
The next states are massless states, arising by a vector field on the D-brane, with the degrees of freedom of a gauge field, i.e. they can be described by a 1-form A, which can be shifted by an exact 1-form. This is usually interpreted as a Maxwell field defined on the worldvolume of the D$p$-brane. Actually, this field has a deeper geometrical and topological description, as we will see in the next chapters.\\
Finally, there are massless states for each normal direction to the D$p$-brane, coming from scalar fields defined on the brane itself. These are usually interpreted as the fields generating the excitations of the D$p$-brane, describing fluctuations in the D$p$-brane position in spacetime.\\[2mm]
In the case of general spacetimes and brane configurations, we will assume that the picture described above is still valid locally for any given neighborhood of a point in the D$p$-brane worldvolume.
\section{Chan-Paton factors and Adjoint bundles}\label{chanpaton}
In this section we will consider how the description in the previous section is modified when we allow for the presence of $n$ D$p$-branes represented by the \emph{same} worldvolume $\man{Q}$. This is indeed possible, due to the fact D-branes represent boundary conditions for the worldsheet conformal field theory, and hence they are distinguished by the open string boundary. In this case, the $n$ D-branes are said to be \emph{coincident}.\\
Then, any quantum state of an open string constrained on the worldvolume of a set of $n$ coincident D$p$-branes can be labelled by $|k;ij>$, where $k$ is a given collection of quantum numbers, and $i,j=1,\ldots,n$ label the branes which the end points of the strings are constrained to. Allowing a general superposition of this states, one has that a general state assumes the form $|k;a>=\sum_{i,j}\lambda^{a}_{ij}|k;ij>$: the coefficients $\lambda^{a}_{ij}$ are called the \emph{Chan-Paton factors} of the open strings.\\
For physical reasons (\cite{Polchinski},\cite{quantumath}), the Chan-Paton factors $\lambda^{a}=(\lambda_{ij}^{a})$ are constrained to be antihermitian, i.e. $\lambda^{a\dagger}=-\lambda^{a}$; moreover, they are conserved in interactions. Hence, the Fock space for an open string in the presence of $n$ coincident D$p$-branes is given by
\begin{displaymath}
\mathcal{F}^{\text{D}}_{\text{open}}\otimes\mathfrak{u}(n)
\end{displaymath}
where $\mathcal{F}^{\text{D}}$ is the space of an open string in the presence of a single D$p$-brane, and $\mathfrak{u}(n)$ is the Lie algebra of $\text{U}(n)$. Consequently, all the massless fields described in the previous section will take values, at least locally on the brane, in $\mathfrak{u}(n)$.\\
As the Chan-Paton factors are conserved in interactions, their contribution in the interaction amplitude at any level is of the form of a trace of matrix products. As the trace of a matrix is preserved under the adjoint action of the group of invertible matrices, the transformation $\lambda^{a}\to{g\lambda^{a}g^{\dagger}}$, with $g\in\group{U}(n)$ is a symmetry of the system.\footnote{$\group{U}(n)$ is singled out because it preserves the antihermitian property of $\lambda^{a}$.} Hence the Lie algebra $\mathfrak{u}(n)$ can be seen as the adjoint representation for the matrix group $\group{U}(n)$.\\[2mm]
Consider now a set of $n$ D$p$-branes wrapping a submanifold $\man{Q}$ of the spacetime, and let $\left\{\mathcal{U}_{\alpha}\right\}$ be a good cover for $\man{Q}$. Recall that at low energy the spectrum produces a $\mathfrak{u}(n)$-valued 1-form $\text{A}_{\alpha}$, for any $\alpha$. Consistently with the symmetry of Chan-Paton factors and the properties of $\text{A}_{\alpha}$, on a double intersection $\mathcal{U}_{\alpha\beta}$ we \emph{can} have
\begin{equation}\label{connection}
\text{A}_{\beta}=g_{\alpha\beta}(p)\text{A}_{\alpha}g^{-1}_{\alpha\beta}(p)+g_{\alpha\beta}^{-1}(p)\dd{g_{\alpha\beta}}(p)
\end{equation}
where $g_{\alpha\beta}:\mathcal{U}_{\alpha\beta}\to\group{U}(n)$ is a smooth function. At the same time, other fields will obey similar properties. For instance, the $\mathfrak{u}(n)$-valued scalar field $\text{T}$ describing the tachyon will satisfy on $\mathcal{U}_{\alpha\beta}$
\begin{equation}
\text{T}_{\beta}=g_{\alpha\beta}(p)\text{T}_{\alpha}g^{-1}_{\alpha\beta}(p)
\end{equation}
The function $g_{\alpha\beta}$ must be the same as it arises from a symmetry of the transition amplitude.\\
Consistency on triple overlaps requires that
\begin{equation}
\text{Ad}(g_{\alpha\beta})\text{Ad}(g_{\beta\gamma})\text{Ad}(g_{\gamma\alpha})=1
\end{equation}
The above condition implies that
\begin{equation}\label{cocycle}
g_{\alpha\beta}g_{\beta\gamma}g_{\gamma\alpha}=\omega_{\alpha\beta\gamma}1
\end{equation}
where $\omega_{\alpha\beta\gamma}\in\group{U}(1)$, i.e. in the kernel of the adjoint representation.
At this point one can ask if it is possible to redefine the functions $g_{\alpha\beta}$ in such a way that $\omega_{\alpha\beta\gamma}=1$, i.e. if the vector bundle $\man{E}^{adj}$, of which the tachyon field is a section, is the adjoint bundle associated to a principle bundle $\man{P}$ over $\man{Q}$ defined by the transition functions $g_{\alpha\beta}$. In this case, the $\mathfrak{u}(n)$-valued forms $\text{A}_{\alpha}$ can be seen as the local representatives of a connection defined on the principal bundle P, hence as a Yang-Mills field over $\man{Q}$.\\
This ``lifting'' process is not always possible for general vector bundles: anyway, we will always assume that this is possible. With this we mean that choosing functions $g_{\alpha\beta}$ behaving as transition functions of a principal bundle is \emph{consistent} with the path integral for open strings in the presence of D-branes being well defined and gauge invariant. In this case, the path integral is supplemented with a factor
\begin{equation}
\text{hol}_{\partial\Sigma}(f^{*}\text{A})
\end{equation}
which only makes sense when $\text{A}$ is a connection on some principal bundle.\\
In the case in which a topologically nontrivial B-field is present, though, the equation (\ref{connection}) has to be modified in order to take into account the gauge transformations for the B-field discussed in section \ref{Bfield}: this, in turn, forces a redefinition of the function $g_{\alpha\beta}$, which will now obey condition (\ref{cocycle}), with the cocycle $\omega_{\alpha\beta\gamma}$ directly determined by the topological properties of the B-field. See \cite{Kapustin} for more details.\\[2mm]
Hence, in absence of a B-field, the bundle $\man{E}^{adj}$ over $\man{Q}$ is chosen to be decomposable as $$\man{E}^{adj}~\simeq~\man{E}~\otimes~\bar{\man{E}}$$ where $\man{E}$ is a $\group{U}(n)$-vector bundle, and $\bar{\man{E}}$ is its complex conjugate. It is customary to refer to the vector bundle $\man{E}$ as the \emph{Chan-Paton bundle}, as from this last we can obtain the bundle of Chan-Paton factors $\man{E}^{adj}$.\\
We can then summarize the crucial aspects of this section by the following\\[2mm]
\noindent\textbf{Fact} \emph{In the absence of a nontrivial B-field and in the low energy approximation, a set of n coincident D-branes with worldvolume} $\man{Q}$ \emph{gives rise to a} $\group{U}(n)$-\emph{vector bundle} $\text{E}\to\man{Q}$ \emph{equipped with a linear connection.}\\[2mm] 
Notice that the topology of the vector bundle $\man{E}$ is not determined by the string dynamics, and has to be specified \emph{a priori} when introducing a set of D-branes. Also for a single D-brane, we have to assign a line bundle over its worldvolume: in the case of hyperplanar D-branes discussed in the previuos section, this is hidden by the fact that any line bundle over such a hypersurface is topologically trivial.
\newpage
\section{D-branes and Supersymmetry}\label{super}
In this section we will consider the effect of the presence of a D-brane in superstring theories; we will focus, in particular, on the case of type IIA and IIB superstring theories.\\
Recall that supestring theories, in the RNS formalism, are obtained by introducing degrees of freedom on the worldsheet $\Sigma$ which are sections of $\text{S}(\man{T}\Sigma)\otimes{f^{*}\man{TM}}$, where $\text{S}(\man{T}\Sigma)$ is the spinor bundle associated to $\man{T}\Sigma$, and that the GSO projection ensures spacetime supersymmetry. For this reason, we will require $\man{M}$ to be a spin manifold, where $\man{M}$ is a 10-dimensional euclidean spacetime.\\
We have seen in the previous sections that the defining property of a D$p$-brane with worldvolume $\man{Q}$ is that of constraining the maps $f$ to satisfy $f(\partial\Sigma)\subset\man{Q}$, and we have argued that this requirement modifies the spectrum of the open bosonic string. Indeed, the presence of a D$p$-brane modifies also the fermionic sector for the open supersymmetric string \cite{Szabobook,Polchinski}.\\
Notice, first, that for open strings whose end points are constrained to the submanifold $\man{Q}$, one has $f^{*}\man{TM}\simeq{f^{*}\man{TM}_{|\man{Q}}}$, where $\man{TM}_{|\man{Q}}$ is the restriction of $\man{TM}$ to $\man{Q}$. Moreover, as $\man{Q}$ is a submanifold of $\man{M}$, the following exact sequence holds
\begin{equation}
0\to\man{TQ}\to\man{TM}_{|\man{Q}}\to{\nu_{\rm Q}}\to{0}
\end{equation}
and with a choice of a metric on $\man{TM}$ it splits orthogonally as $\man{TM}_{|\man{Q}}\simeq\man{TQ}\oplus{\nu_{\rm Q}}$, where $\nu_{\rm Q}$ is the normal bundle to $\man{Q}$ in $\man{M}$. Hence, the ground state in the Ramond sector will give rise to a section of
\begin{equation}\label{worldspinor}
\text{S}(\man{TM}_{|\man{Q}})\simeq{\text{S}(\man{TQ})\otimes\text{S}(\nu_{\rm Q})}
\end{equation}
i.e. a worldvolume spinor charged under an internal $\group{SO}(9-p)$ symmetry, the structure group of $\nu_{\rm Q}$\footnote{Actually, the decomposition (\ref{worldspinor}) depends on the parity of the codimension of Q in M, since in general ${\rm C}\ell({\rm V\oplus W})\simeq{\rm C}\ell({\rm V})\hat{\otimes}{\rm C}\ell({\rm W})$, where $\hat{\otimes}$ denotes the $\mathbb{Z}_{2}$-graded product.}.\\
To be more precise, notice that in general neither $\text{S}(\man{TQ})$ nor $\text{S}({\nu_{\rm Q}})$ are \emph{well defined} as vector bundles, even if their tensor product is. Anyway, we will for the moment require that the normal bundle $\nu_{\rm Q}$ admits a spin structure: together with the requirement that $\man{M}$ is a spin manifold, one has that $\man{TQ}$ also admits a spin structure, i.e. $\man{Q}$ is a spin manifold.\\ 
We refer the reader to Appendix B for more details on spin manifolds and Clifford algebras.\\[2mm] 
Because of the properties of a system of $n$ coincident D-branes discussed in section \ref{chanpaton}, the worldvolume spinors obtained in the low energy approximation are sections of $\text{S}(\man{TQ})\otimes\text{S}(\nu_{\rm Q})\otimes\text{E}^{adj}$, and hence they are charged under the gauge field localized on the worldvolume. Then, it is natural to think that, in a certain sense, D$p$-branes in the low energy approximation are described by the $p$+1-dimensional gauge theories localized on their worldvolume.\\[2mm]
As we have mentioned in section \ref{openDbranes}, the introduction of a D-brane can reduce the spacetime symmetries of the vacuum configuration, and hence affect the properties of the open and closed string spectrum. As should be expected, a D-brane affects also the amount of supersymmetry present in the theory.  Indeed, the introduction of a D-brane in an interacting String theory requires the presence of open string states in the spectrum: this is due to the fact that a closed string, though not affected by the boundary conditions, can \emph{break} on the worldvolume of the D-brane into an open string, and open strings have at most N=1 supersymmetry \cite{Polchinski,becker,Szabobook}.\\
A D-brane is said to be \emph{supersymmetric} if the spectrum of open and closed superstrings in its presence is supersymmetric. Moreover, a D-brane is said to be \emph{stable} if the spectrum does not contain a tachyonic state. In this sense, all D-branes in bosonic String theory are unstable. Supersymmetry and stability of a D-brane are very difficult features to determine for general brane configurations and in nontrivial backgrounds, but are well understood in the case of hyperplanar D-branes in Minkowski spacetime.\\
Indeed, one can show that flat D$p$-branes in type IIA on a Minkowski spacetime are supersymmetric for $p$ even, while $p$ must be odd in type IIB. This can be inferred by analyzing the conserved supersymmetry in each case, and by using \emph{T-duality}\cite{Polchinski,Szabobook,becker}. As usual, we will assume these properties of D-branes to hold for a more general type II configurations.\\[2mm]
As suggested by the terminology used above, a D-brane appears not only to be a geometrical region enforcing the boundary conditions for the worldsheet conformal field theory, but also to enjoy particle-like properties, like mass, charge, etc., detected through the behavior of the strings propagating around it. Even if a full theory of quantum D-branes, which would be the right framework to describe quantum properties like decay, etc., has not yet been developed, the superstrings in the presence of a D-brane give useful information about the dynamics of these extended objects.
\begin{remark}
The definition of supersymmetric and stable D-brane given above are sensible in the low energy approximation, and in the case in which we are considering a D-brane that is \emph{fixed}, i.e. we are neglecting its dynamics: indeed, to take into account that a D-brane is a dynamical object, and to describe some aspects of its quantum behavior, a formalism know as the \emph{boundary state formalism} is more suitable. In this formalism, the boundary conditions characterizing the D-brane are imposed \emph{after} the theory of closed superstrings has been quantized: this suggests to identify states of a \emph{quantum} D-brane as coherent states in the Fock space of the closed superstring, making it  more precise the investigation of stability, supersymmetry, etc. of these states. Moreover, it makes it possible to introduce D-branes even when the boundary condition it represents has no clear geometrical description.\\
As in this thesis we will be concerned only with a semiclassical and geometrical description of D-branes, we will not invoke this formalism, directing the reader to \cite{clifford,Polchinski}, and references therein, for a complete review of these topics.
\end{remark}
\section{D-branes and Ramond-Ramond charges}\label{generalized} 
As we have mentioned in section \ref{quantum}, the low energy approximation of type IIA and type IIB superstring theory contains particle states described by a gauge theory of differential forms $\ram{C}^{(p)}$ defined on a $d$-dimensional spacetime manifold $\man{M}$, with $p$ odd in type IIA and $p$ even in type IIB.\\
Moreover, if we denote by $\ram{F}^{p}=\dd\ram{C}^{(p)}$ the Ramond-Ramond field strength, the GSO projection imposes the constraint $\star\ram{F}^{p}=\ram{F}^{d-p}$, representing the fact that not all the forms $\ram{C}^{(p)}$ are \emph{independent} degrees of freedom \cite{Polchinski}.\\
It follows that the Ramond-Ramond field strengths satisfy the linearized equations
\begin{displaymath}
\begin{array}{c}
\dd\star{\ram{F}^{(p)}}=0\\
\dd\ram{F}^{(p)}=0
\end{array}
\end{displaymath}
These equations are a generalization of the Maxwell equations in 4 dimensions.\\
As the field theory they describe plays a prominent role in this thesis, we will digress slightly to review some of its aspects, following \cite{currents,Freed2000a,Freed2002}.
\subsection{Generalized electromagnetism and sources}\label{genmax}
In analogy with Maxwell electromagnetism in 4 dimensions, consider a theory of differential forms $\ram{A}\in\Omega^{n}(\man{M};\mathbb{R})$, where $\man{M}=\mathbb{R}\times\man{Y}$ is a $d$-dimensional oriented Lorentzian manifold with $\man{Y}$ the spatial slice, and whose equations of motion are given by
\begin{displaymath}
\begin{array}{c}
\dd\star\ram{G}=0\\
\dd\ram{G}=0
\end{array}
\end{displaymath}
with $\ram{G}=\dd\ram{A}$.\\
 Such a theory is usually referred to as \emph{generalized electromagnetism}, as the Maxwell equations (in vacuum) are obtained for $n$=1 and $d$=4. The equations of motion are invariant under the transformation $\ram{A}\to{\ram{A}+\omega}$, with $\omega$ a closed $n$-form, describing an abelian gauge theory of $n$-forms. It is natural, then, to introduce an \emph{electric} source for the field $\ram{A}$ in complete analogy with the electromagnetic case, i.e. by modifying the equations of motion as
\begin{equation}\label{genelec}
\begin{array}{c}
\dd\star\ram{G}=j_{e}\\
\dd\ram{G}=0
\end{array}
\end{equation}
where $j_{e}\in\Omega_{s}^{d-n}(\man{M};\mathbb{R})$ is a ($d-n$)-form with compact support on the spatial slice, called the \emph{electric current distribution}.\\
The equations of motion imply that $\dd{j_{e}}=0$, hence $j_{e}$ represents a class 
\begin{displaymath}
[j_{e}]\in\text{H}_{s}^{d-n}(\man{M};\mathbb{R})
\end{displaymath}
in the cohomology of $\topsp{M}$ with real coefficients, and compact support on $\man{Y}$.\\
Notice that the equations of motion \emph{do not} imply that the class $[j_{e}]$ is vanishing, as $\star{\ram{G}}$
 is not required to have compact support on the spatial slice.\\ 
More precisely, if we denote with $i_{t}:\man{Y}\to\man{M}$ the map $\man{Y}\to\{t\}\times{\man{Y}}\subset\man{M}$, the class
\begin{equation}
Q_{e}:=[i_{t}^{*}j_{e}]\in\text{H}_{\text{cpt}}^{d-n}(\man{Y};\mathbb{R})
\end{equation}
is in general non vanishing, and it is called the \emph{total charge} of the electric current distribution. Indeed, the fact that $j_{e}$ is a closed form implies\footnote{This is also due to the particular choice made for $\man{M}$.} that the class $Q_{e}$ does not depend upon the choice of $i_{t}$, hence it is a conserved quantity for the equations of motion.\\
 This cohomological interpretation of the electric charge may sound ``exotic'' at first: but notice that in the ordinary electromagnetism case, with $n$=1, $d$=4, and $\man{Y}=\mathbb{R}^{3}$, one has $\text{H}_{\text{cpt}}^{3}(\mathbb{R}^{3};\mathbb{R})\simeq\mathbb{R}$, and the total electric charge is given by a real number, as usual.\\[2mm]
Equations (\ref{genelec}) can be obtained from the functional
\begin{equation}\label{action}
\text{S}[\ram{A}]:=-\dfrac{1}{2}\int_{\man{M}}\ram{G}\wedge\star\ram{G}-\dfrac{1}{2}\int_{\man{M}}{j_{e}}\wedge\ram{A}
\end{equation}
via a variational principle. This functional is \emph{not} gauge invariant, but one can show that the equations of motion obtained do not depend on the particular choice of a gauge. Moreover, also the coupling term in (\ref{action}) can be given a cohomological interpretation. As $j_{e}$ is a closed form with compact support on $\man{M}$, it will induce a homomorphism
\begin{displaymath}
\psi_{j_{e}}:\text{H}^{n}(\man{M};\mathbb{R})\to\mathbb{R}
\end{displaymath}
defined as
\begin{displaymath}
\psi_{j_{e}}([a]):=<[j_{e}]\cup[a],[\man{M}]>=\int_{\man{M}}j_{e}\wedge{a}
\end{displaymath}
for any \emph{closed} $n$-form $a$.\\
As in cohomology with real coefficients we have that $\text{Hom}(\text{H}^{n}(\man{M};\mathbb{R}))\simeq{\text{H}_{n}(\man{M};\mathbb{R})}$, one has that there exists a class in $[\topsp{Q}]\in\text{H}_{n}(\man{M};\mathbb{R})$ represented by a compact oriented submanifold $\topsp{Q}$ such that
\begin{displaymath}
\psi_{j_{e}}([a])=<[a],[\topsp{Q}]>=\int_{\topsp{Q}}a
\end{displaymath}
The class $[\topsp{Q}]$ is called the \emph{Poincar\'{e} dual} of $[j_{e}]$.\\
Indeed, using this class we could rewrite the coupling term in (\ref{action}) as
\begin{equation}\label{coupling}
\int_{M}j_{e}\wedge{\ram{A}}=\int_{\topsp{Q}}\ram{A}|_{\topsp{Q}}
\end{equation}
which gives rise to the straightforward interpretation of $\topsp{Q}$ as the worldvolume of an extended object acting as a source for the field $\ram{A}$, with coupling given by (\ref{coupling}). Moreover, in the usual physical case in which $\topsp{Q}\simeq\mathbb{R}\times{\widetilde{\topsp{Q}}}$, the total charge of the distribution is represented by $[\widetilde{\topsp{Q}}]$.\\
Unfortunately, as it is stated, equation (\ref{coupling}) is not quite true, as $\ram{A}$ is in general \emph{not} closed.\\ 
The mathematical solution to this problem is to regard $j_{e}$ as a \emph{current} \cite{currents,griffith}, i.e. as a distribution valued differential form, with support on $\topsp{Q}$. This indeed allows equation (\ref{coupling}) to hold for a non-closed differential form $\man{A}$; notice that this is perfectly analogous to the electromagnetic case, where an electron current is described with a Dirac delta distribution having support on the worldline of the electron itself.\\[2mm]
To summarize, in this section we have argued that a generalized abelian gauge theory of $n$-forms admits extended objects as electric sources, whose worldvolume $\topsp{Q}$ are submanifolds of the spacetime, and whose charges can be identified with the homology classes represented by $\topsp{Q}$ in de Rham homology.
\subsection{Ramond-Ramond charges and anomalies}
As we have seen in the previous section, a generalized electromagnetic theory naturally admits extended objects as its electric sources. Since the dynamics of Ramond-Ramond fields are governed by such a gauge theory, it is natural to ask for their sources. Recall that Ramond-Ramond fields are given by  $p$-forms, with $p$ odd  in type IIA String theory, and $p$ even in type IIB: sources for Ramond-Ramond fields in type IIA are necessarily odd-dimensional submanifolds, while in type IIB they need to be even-dimensional. Hence, the natural candidates for the role of sources are the supersymmetric stable D-branes of type II String theory: indeed, we have seen in section \ref{super} that they occur with the right dimension, i.e. as odd-dimensional in type IIA, and as even-dimensional in type IIB. This would indeed take into account the stability of D-branes of certain dimensions.\\
As it is stated here, this could be merely a coincidence, in the sense that the only property of having the right dimension does not necessarily identify D-branes with the sources of Ramond-Ramond fields.\\
On the contrary, in the seminal paper \cite{PolDbrane} strong support was given to the fact that D-branes \emph{do} interact with Ramond-Ramond fields: this was achieved by scattering closed strings with D-branes, and showing that the interaction amplitude contains a term that can be manipulated to a coupling of the form (\ref{coupling}). Moreover, it was shown that D-branes coupling to Ramond-Ramond fields enjoy the \emph{BPS} property: their mass, that in the case of D-branes is called the \emph{tension}, coincides with their charge. D-branes in type II with the ``wrong'' dimension do not couple to Ramond-Ramond fields, and they are unstable: they are called \emph{non-BPS} D-branes.\\
Hence, let $\man{Q}\subset{\man{M}}$ be a Dp-brane: the term
\begin{equation}\label{ramcoup}
\int_{\man{Q}}\ram{C}^{(p+1)}|_{\man{Q}}
\end{equation}
is called the \emph{Ramond coupling}, and the class $[\man{Q}]$ is called the \emph{Ramond charge} of the Dp-brane $\man{Q}$.\\
It turns out \cite{Minasian1997,GHM}, though, that (\ref{ramcoup}) is not yet the right coupling term, and has to be modified in order to take into account some peculiar properties of D-branes.\\[2mm]
\indent Before explaining the reason for this modification, we add a comment on the coupling (\ref{ramcoup}). Indeed, in the above discussion we have willingly neglected the fact that the different Ramond-Ramond fields are dependent upon the relation $\star\ram{F}^{(p)}=\ram{F}^{(d-p)}$. This means that any electric source for the field $\ram{C}^{(p)}$ is a magnetic source for $\ram{C}^{(d-p)}$, which requires a shift in the meaning of $\ram{C}^{(p)}$ themselves, for the appropriate values of $p$. More precisely, in the presence of a Dp-brane, $\ram{C}^{(d-p-1)}$ cannot be a globally defined differential $d-p-1$-form, as the Bianchi identities for its field strength do not hold any longer. Moreover, this poses a serious threat to the quantization of the Ramond-Ramond abelian gauge theory.
For the moment, we will continue to neglect the problems posed by the duality constraints.\\[2mm]
Recall from section \ref{super} that in the low energy approximation the worldvolume of a set of $n$ D$p$-branes carries a dimensionally reduced Yang-Mills theory coupled with a spinor field charged under some additional internal symmetries; moreover, if the D$p$-brane is supersymmetric, the worldvolume gauge theory is an N=1 super Yang-Mills theory, and the tachyon field is absent.\\
A consistency requirement on the gauge theory supported by the D-brane is that the theory does not present any ``quantum'' anomaly, which would spoil its gauge invariance after quantization. Recall, indeed, that this gauge theory is obtained by considering only the massless states in the open string spectrum in the presence of a D-brane, which should represent one particle states for the full quantized gauge theory: an anomaly would render this an inconsistent procedure.\\
 It turns out that the gauge theory on the D-brane suffers from two anomalies. The \emph{abelian} anomaly is related to the fact that the spinor fields on the worldvolume are charged under an internal symmetry, represented by the tensor product $\text{S}(\nu_{\rm Q})\otimes\text{E}^{adj}$, where Q is the D-brane worldvolume. More precisely, in the type IIB case, where $\man{Q}$ is an even-dimensional (spin) manifold, the dynamics for the spinor fields is governed by the Dirac operator
\begin{equation}\label{Diracred}
\Dslash:\text{S}^{+}(\man{TM}|_{\man{Q}})\otimes\text{E}^{adj}\to\text{S}^{-}(\man{TM}|_{\man{Q}})\otimes\text{E}^{adj}
\end{equation}
where the $\pm$ splitting is given by the chiral decomposition, since $\man{M}$ even-dimensional.\\
From equation (\ref{worldspinor}) one can see that
\begin{displaymath}
\text{S}^{\pm}(\man{TM}|_{\man{Q}})\simeq\left[\left(\text{S}^{+}(\man{TQ})\otimes\text{S}^{\pm}(\nu_{\rm Q})\right)\oplus\left(\text{S}^{-}(\man{TQ})\otimes\text{S}^{\mp}(\nu_{\rm Q}\right)\right]
\end{displaymath}
Hence, from the worldvolume perspective the spinor fields do not have a definite chirality: this, in general, leads to the so called ``chiral'' anomaly, and is measured by the index of the Dirac operator in (\ref{Diracred}).\\ 
The other possible source of anomaly is given by the intersection of two D-branes, usually referred to as \emph{I-brane}. The spectrum of the open string in the presence of an I-brane is different from the one obtained in the presence of a D-brane wrapping the same worldvolume \cite{Polchinski}. In particular, the fermion sector on the I-brane develops itself an anomaly, depending on the dimension of the two intersecting branes.\\
Both the anomalies need to be cancelled: the mechanism used is known as the \emph{inflow mechanism}, and essentially consists in modifying the coupling (\ref{ramcoup}) in such a way that its variation under gauge transformations cancel the quantum anomalies above. We refer the reader to \cite{GHM,currents,Szabobook} for a detailed exposition, as the computations involved are rather lengthy, and the techniques used therein will not play any relevant role for the rest of this thesis.\\
The coupling (\ref{ramcoup}) is modified as
\begin{equation}\label{anomalous}
\int_{\man{Q}}i^{*}\text{C}\wedge\text{ch}(\text{E}){i^{*}\sqrt{\rgenus{\man{TM}}}}\dfrac{1}{\rgenus{\man{TQ}}}
\end{equation}
where $i:\man{Q}\to\man{M}$ is the embedding map, $\text{ch}$ denotes the (total) Chern character, and $\hat{\mathcal{A}}$ denotes the A-roof genus\footnote{More precisely, these are the Chern-Weil forms representing  such characteristic classes.}. See {\appcharac} for details on characteristic classes of vector bundles.\\
Moreover, in the coupling (\ref{anomalous}), $\ram{C}$ is the \emph{total} Ramond-Ramond field, defined as the formal sum $\ram{C}:=\ram{C}^{(i)}+\ram{C}^{(i+2)}+\cdots$, with $i=0,1$ in type IIB, type IIA, respectively, and the integration is understood to be 0 when the degrees do not match.\\
The coupling term (\ref{ramcoup}) is referred to as the Wess-Zumino, or Chern-Simons term for D-branes: we will refer to it as the \emph{anomalous coupling}.\\
The anomalous coupling forces us to rethink the charge classification of D-branes: indeed, the current generated by such extended objects is no longer given by the Poincar\'{e} class dual to the homology cycle of their worldvolume, but needs to be corrected according to the equations of motion for the (total) Ramond-Ramond field induced by (\ref{anomalous}). We will refer to the charge of this current as the \emph{D-brane charge}. The modern interpretation for the charge of a D-brane was first exploited in \cite{Minasian1997}, where it was proposed that the correct mathematical tool to represent D-brane charges is not de Rham or singular cohomology, but K-theory, which represents, together with some of its ``flavours'', the main mathematical subject of this thesis.\\[2mm]     
\indent We conclude this section by noticing that the anomalous coupling (\ref{anomalous}) suffers from the same, and actually worse, problems of the coupling (\ref{ramcoup}). Indeed, the anomalous coupling requires that \emph{all} the Ramond-Ramond differential forms should be defined on the worldvolume $\man{Q}$ of the D-brane. This is not the case when we take into account the fact that D-branes are electric \emph{and} magnetic sources: indeed, the usual description for a gauge field $\ram{A}$ in the presence of a magnetic distribution prescribes that $\ram{A}$ should be defined on the \emph{complement} of the distribution's support, which renders the expression (\ref{anomalous}) ill defined \cite{Freed2000a}. Again, its correct description requires a more powerful formalism: we will address a possible solution for this problem in later chapters.
    
\section{D-brane decay and Sen's conjectures}\label{senconjecture}
The fact that Dp-branes are charged under Ramond-Ramond fields allows the introduction of \emph{anti-D-branes}. Of course, such objects should arise in a proper quantum description of D-branes, which we currently lack. In any case, one can give the following semiclassical definition:\\[2mm]
\noindent \textbf{Definition}\emph{ An \emph{anti-D-brane} for a charged Dp-brane} $\man{Q}$ \emph{supporting a Chan-Paton line bundle} $\man{E}$ \emph{is given by a submanifold} $\man{Q}'$ \emph{carrying the opposite Dp-brane charge, and supporting a line bundle} $\man{E}'$ \emph{which is topologically equivalent to} $\man{E}$.\\[2mm]
As D-brane charges are conserved during the dynamics, charged D$p$-branes are stable, and hence cannot decay. Again, the description of the decay process would require a proper quantum theory of D-branes: in this context, we will say that a D$p$-brane has \emph{decayed} if the theory of open and closed strings in the presence of the D$p$-brane is \emph{equivalent}, in some sense, to a theory of only closed strings in the absence of the D$p$-brane itself. This is believed to be described at low energy by the dynamics of the tachyon field living on the worldvolume of the D$p$-brane, which mimics the ``slow rolling'' dynamics of the Higgs field in the standard model of elementary particles. In particular, the decay process of the D$p$-brane ends when the tachyon field reaches a stable minimum of the potential modelling its dynamics \cite{Zwiebach}. Even if at the present stage such a process cannot be described in a satisfactory way, it has greatly contributed to a more basic understanding of D-branes and their charges. More precisely, it is at the base of the so called \emph{Sen's conjectures}.\\
Recall that in the bosonic String theory all D-branes are unstable, as the open string spectrum contains a tachyon. In particular, one has a spacetime filling D25-brane which is unstable.  In \cite{Sen1998rg,Sen1998ii} Sen stated the following\\[2mm] 
\textbf{Conjectures(Sen)} \emph{ The open bosonic String theory in the presence of a spacetime filling D25-brane is such that
\begin{itemize}
\item[a)] The tachyon field potential has a stable local minimum. Moreover, the energy density of this minimum as measured with respect to that of the initial unstable point is equal with opposite sign to the tension of the D25-brane;\\[-8mm]
\item[b)] Lower dimensional Dp-branes can be obtained as solitonic solutions of the field theory living on the D25-brane worldvolume;\\[-8mm]
\item[c)] The stable minimum of the tachyon potential corresponds to the closed string vacuum with no open string excitations  
\end{itemize} }
The above conjecture can be stated also in the case of a lower dimensional unstable D$p$-brane: the advantage of using a spacetime filling D-brane consists in the fact that the spacetime symmetries are not broken.\\
Moreover, the conjecture can be adapted to superstring theory on a 10-dimensional spacetime, but some modifications are needed. For instance, in type IIB String theory the spacetime filling D9-brane is stable, and hence does not obey Sen's conjecture. In any case, an instability appears in a system of a coincident D$p$-brane and its anti-D-brane, denoted as D$\bar{p}$-brane, wrapped on a submanifold $\man{Q}$: at an intuitive level, the system has no conserved D-brane charge, and hence it should be able to annihilate by Sen's conjectures. This is supported by the fact that the spectrum of the open strings ``stretching'' between the D$p$-brane and the D$\bar{p}$-brane contains a tachyonic state, which is not projected out by the GSO projection. This extends to a system of $n$ D$p$-branes and $n$ D$\bar{p}$-branes wrapping the same submanifold of the spacetime.\\
Notice that a system of $n$ D$p$-branes with Chan-Paton bundle E and $m$ anti-D-branes with Chan-Paton bundle F wrapping the submanifold $\man{Q}$ has Chan-Paton bundle $\text{E}\oplus\text{F}$, as the D-branes and the anti-D-branes are distinguished by the open string endpoints. In particular, at low energy the surviving open string tachyon field is described by a section of $\text{E}\otimes\text{F}^{*}$, where $\text{F}^{*}$ denotes the dual bundle.\\[2mm]
\indent Despite the resemblances, there is a major difference between an unstable bosonic D$p$-brane and a brane-antibrane system: indeed, while by Sen's conjectures \emph{all} the bosonic D$p$-branes will eventually decay to the vacuum state, a generic brane-antibrane system can decay to a stable D$p$-brane, if the starting configuration has a net D-brane charge different from ``zero''. In particular, in \cite{sen-1998-9809} Sen was able to construct a D$p$-brane as a decay product of a system of a D$p$+2-brane and a coincindent D$\overline{p+2}$-brane, identifying the worldvolume wrapped by the D$p$-brane as a ``vortex'' for the tachyon field living on the brane-antibrane worldvolume. Indeed, recall that in this case the tachyon field $\ram{T}$ is a section of the complex line bundle $\text{E}\otimes{\text{F}^{*}}$ defined on the brane-antibrane worldvolume: in Sen's construction, the D$p$-brane is identifyed with the submanifold representing the Poincar\'{e} dual class to the (first) Chern class of $\text{E}\otimes{\text{F}^{*}}$, the zero loci for the section $\ram{T}$.\footnote{We are supposing that the Chern class of $\text{E}\otimes{\text{F}^{*}}$ is \emph{nontorsion} and that $\ram{T}$ is a transversal section.}\\[2mm] 
A somewhat simple observation was used by Witten in the seminal paper \cite{Witten1998} to give an elegant mathematical description of the various D-brane configurations. More precisely, consider in type IIB String theory a system of $n$ spacetime filling D9-branes with Chan-Paton bundle E, and a system of $n$ D$\bar{\text{9}}$-branes carrying a Chan-Paton bundle F;\footnote{The number of branes and antibranes needs to be the same in type IIB to cancel the Ramond-Ramond tadpole.} label such a configuration by the pair (E,F). As the process of brane-antibrane creation and annihilation does not change the D-brane charge, we can add any collection of $m$ D9-branes and $m$ D$\bar{\text{9}}$-branes with Chan-Paton bundle H.\footnote{We use the same notation H for both the gauge bundle on the D-brane and that on the anti-D-brane, as they are by definition topologically equivalent.}\\
Hence, the configuration (E,F) should be considered to have the same D-brane charge as the configuration ($\text{E}\oplus\text{H}$,$\text{F}\oplus\text{H}$): pairs of vector bundles with such an equivalence relation constitute, in a nutshell, the basic ingredients for the K-theory group K(M) of the spacetime.\\
In particular, Witten showed that the construction used by Sen can be generalized to arbitrary spacetime and D-brane configurations, and that such a generalization perfectly corresponds to mathematical properties of K-theory. More importantly, Witten's classification of D-branes via K-theory applies not only to type IIB/A superstring theories, but also to type I and to String theory on \emph{orbifolds}: in these latter cases, indeed, the K-theory description makes new and unexpected predictions, as we will see in the following chapters.      
\chapter{K-Theory, an introduction}
\section{The group $\kgroup{0}{\topsp{X}}$}
In this section we introduce the basic definitions used to construct the group $\kgroup{0}{\topsp{X}}$ of a topological space $\topsp{X}$. In particular, we will restrict ourselves to compact Hausdorff topological spaces which can carry a structure of a finite CW-complex. Even if most of the constructions can be extended to more general topological spaces, the choice of CW-complexes is not restrictive for the aim of this thesis, as we will mainly be interested in the K-theory groups of finite dimensional manifolds, which naturally carry a canonical CW-complex structure. In the following exposition we tacitily refer to \cite{Atiyah1967,KAROUBI}, unless otherwise stated. \\[2mm]
\indent Let ${\rm Vect^{F}(\topsp{X})}$ denote the set of isomorphism classes of topological F-vector bundles over $\topsp{X}$, where $\rm F=\mathbb{C},\mathbb{R}$. The direct (Whitney) sum of vector bundles gives ${\rm Vect^{F}(\topsp{X})}$ the structure of an abelian monoid: in a nutshell, the group $\kgroup{0}{\topsp{X}}$ is constructed via a procedure that consists intutively in adding ``inverses'' to ${\rm Vect^{F}(\topsp{X})}$. In particular, the procedure can be generalized to any abelian monoid, which is the case we will illustrate in the following.\\[2mm]
\indent Let $\mathcal{A}$ be an abelian monoid. We can then associate to $\mathcal{A}$ an abelian group $\kgroup{0}{\mathcal{A}}$ and a monoid homomorphism $\alpha:\mathcal{A}\to\kgroup{0}{\mathcal{A}}$ with the the following \emph{universal} property. For any group G, and any morphism of the underlying monoids $f:\mathcal{A}\to{\rm G}$, there is a unique group homomorpshim $\tilde{f}$ such $\tilde{f}\alpha=f$. Because of the uniqueness property, it is  an immediate consequence that if such a $\kgroup{0}{\mathcal{A}}$ exists, then it is unique up to isomorphism. The group $\kgroup{0}{\mathcal{A}}$ is known usually as the \emph{Grothendieck group} of $\mathcal{A}$.\\ 
A possible construction for the group $\kgroup{0}{\mathcal{A}}$ is the following. Denote with $\text{F}(\mathcal{A})$ the free abelian group generated by the elements of $\mathcal{A}$, and let $\text{E}(\mathcal{A})$ be the subgroup of $\text{F}({\mathcal{A}})$ generated by those elements of the form $a+b-(a\oplus{b})$, with $a,b\in\mathcal{A}$, and $\oplus$ denoting the addition in $\mathcal{A}$. Then for $\kgroup{0}{\mathcal{A}}:=\text{F}(\mathcal{A})/\text{E}(\mathcal{A})$, the universality property above holds, with $\alpha:\mathcal{A}\to\kgroup{0}{\mathcal{A}}$ the obvious map.\\[2mm]
 A different and sometimes convenient construction of $\kgroup{0}{\mathcal{A}}$ is the following. Consider the diagonal homomorphism of monoids $\Delta:\mathcal{A}\to\mathcal{A}\times\mathcal{A}$, i.e the map $\Delta(a)=(a,a)$, and denote with $\kgroup{0}{\mathcal{A}}$ the set of cosets of $\Delta(\mathcal{A})$ in $\mathcal{A}\times\mathcal{A}$. It is clearly a quotient monoid, but the interchange of factors induces inverses in $\kgroup{0}{\mathcal{A}}$, giving it a group structure. If we define $\alpha:\mathcal{A}\to\kgroup{0}{\mathcal{A}}$ to be the composition of $a\to(a,0)$ with the natural projection $\mathcal{A}\times\mathcal{A}\to\kgroup{0}{\mathcal{A}}$, then the universality property above holds.\\[2mm]  
\indent The association to a monoid $\mathcal{A}$ of its Grothendieck group as defined above induces in an obvious way a unique covariant functor ${\rm K^{0}}$ from the category $\mathscr{A}$ of abelian monoids with monoid morphisms to the category $\mathscr{G}$ of abelian groups with group morphisms. Indeed, to any morphisim $f:\mathcal{A}\to\mathcal{B}$, the functor ${\rm K^{0}}$ associates the morphism $\kgroup{0}{f}:\kgroup{0}{\mathcal{A}}\to\kgroup{0}{\mathcal{B}}$ satysfying $\alpha_{\mathcal{B}}\circ{f}=\kgroup{0}{f}\circ\alpha_{\mathcal{A}}$, and such a morphism is unique by the universality property. 
\begin{example}
Consider $\mathcal{A}:=(\mathbb{N}_{0},+)$. Then $\kgroup{0}{\mathcal{A}}\simeq{\mathbb{Z}}$
\end{example}
\begin{example}
Consider $\mathcal{A}:=(\mathbb{Z}-\{0\},\cdot)$. Then $\kgroup{0}{\mathcal{A}}\simeq{\mathbb{Q}-\{0\}}$.
\end{example}
A fundamental example of the above construction that allows a higher degree of generality is the following. Let $\mathscr{C}$ an additive category, and denote with $\dot{\rm E}$ the isomorphism class of the object ${\rm E}$. Then the set $\Phi(\mathscr{C})$ of such classes can be provided with a structure of an abelian monoid if one defines $\dot{\rm E}+\dot{\rm F}$ to be $\dot{\rm E\oplus F}$: the well-definedness of the + operation and the algebraic identities derive from the additivity of the category $\mathscr{C}$. In this case one denotes with $\kgroup{0}{\mathscr{C}}$ the group $\kgroup{0}{\Phi(\mathscr{C})}$, called the \emph{Grothendieck group of the category $\mathscr{C}$}. Moreover, if $\varphi:\mathscr{C}\to\mathscr{C}^{'}$ is an additive functor, then $\varphi$ naturally induces a monoid homomorphism $\Phi(\mathscr{C})\to\Phi(\mathscr{C}^{'})$, hence a group homomorphism $\kgroup{0}{\Phi(\mathscr{C})}\to\kgroup{0}{\Phi(\mathscr{C}^{'})}$, denoted with $\varphi_{*}$. The usual composition rule follows.\\[2mm]  
\indent We will now specialize to the case $\mathcal{A}={\rm Vect^{F}(\topsp{X})}$, for $\topsp{X}$ a topological space. We denote with $\kgroup{0}{\topsp{X}}$ the group $\kgroup{0}{{\rm Vect^{F}(\topsp{X})}}$, or equivalently the Grothendieck group of the additive category of topological F-vector bundles on $\topsp{X}$, that we will denote with $\mathscr{V}^{\rm F}$. To follow the notations usually found in literature, we will use $\kgroup{0}{\topsp{X}}$ for $\rm F=\mathbb{C}$, and $\kogroup{0}{\topsp{X}}$ for $\rm F=\mathbb{R}$. When the results \emph{do not} depend on the choice of $\rm F$, we will use the notation for $\rm F=\mathbb{C}$ .\\
The basic constructions for the $\rm K^{0}$ group described above allow to prove the following basic, but very useful results. 
\begin{proposition}\label{decomp} Let $\topsp{X}$ be a compact space. Then every element $x\in\kgroup{0}{\topsp{X}}$ can be written in the form ${\rm [E]-[F]}$, with $\rm E,F$ vector bundles on $\topsp{X}$. Moreover,\\ ${\rm[E]-[F]=[E^{'}]-[F^{'}]}$ in $\kgroup{0}{\topsp{X}}$ if and only if there exists a vector bundle {\rm G} on $\topsp{X}$ such that ${\rm E\oplus{F^{'}}\oplus{G}}\simeq{\rm E^{'}\oplus{F}\oplus{G}}$
\end{proposition}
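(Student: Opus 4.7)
The plan is to read off both statements directly from the coset construction of the Grothendieck group reviewed in the excerpt. First I would argue that every class in $\kgroup{0}{\topsp{X}}$ admits the form $[E]-[F]$. Writing $\kgroup{0}{\topsp{X}}$ as the quotient monoid $(\Vect(\topsp{X})\times\Vect(\topsp{X}))/\Delta(\Vect(\topsp{X}))$, every element is represented by a coset $[(E,F)]$. The coset $[(F,E)]$ is an additive inverse of $[(E,F)]$, since $[(E,F)]+[(F,E)] = [(E\oplus F,\,F\oplus E)]$ lies in $\Delta(\Vect(\topsp{X}))$ and thus equals the identity in the quotient. Combined with the obvious splitting $[(E,F)] = [(E,0)]+[(0,F)]$ and the identification $\alpha(E)=[(E,0)]$, this yields $[(E,F)] = \alpha(E) - \alpha(F) = [E]-[F]$.

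For the equivalence criterion, the ``if'' direction is immediate: an isomorphism $E\oplus F'\oplus G \simeq E'\oplus F \oplus G$ passes to $[E\oplus F']+[G] = [E'\oplus F]+[G]$ in $\kgroup{0}{\topsp{X}}$, and cancellation together with rearrangement gives $[E]-[F] = [E']-[F']$.

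For the ``only if'' direction, I would recast the equality $[E]-[F] = [E']-[F']$ as $\alpha(E\oplus F') = \alpha(E'\oplus F)$, i.e.\ $[(E\oplus F',\,0)] = [(E'\oplus F,\,0)]$ in $(\Vect(\topsp{X})\times\Vect(\topsp{X}))/\Delta(\Vect(\topsp{X}))$. By the definition of the quotient of a monoid by a submonoid, this means there exist bundles $G_{1},G_{2}$ on $\topsp{X}$ with $(E\oplus F' \oplus G_{1},\,G_{1}) \simeq (E'\oplus F \oplus G_{2},\,G_{2})$ as pairs of bundles. The second coordinates force $G_{1}\simeq G_{2}$, and setting $G:=G_{1}$ the first coordinates then produce the desired isomorphism $E\oplus F'\oplus G \simeq E'\oplus F \oplus G$.

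I expect no serious obstacle; the whole proof is an exercise in unwinding the Grothendieck construction. The one point requiring mild care is the characterization of equality in the quotient monoid, which must be read off directly from the equivalence relation induced by cosets of a submonoid, rather than imported unreflectively from group theory. Compactness of $\topsp{X}$ plays no role in the argument itself---the statement is purely algebraic and holds for the Grothendieck group of any abelian monoid---but it enters the broader theory when one strengthens the conclusion by taking $G$ to be trivial, using that over a compact Hausdorff space every vector bundle admits a complement.
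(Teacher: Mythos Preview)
Your proof is correct and follows essentially the same route as the paper: both unwind the Grothendieck construction to reduce $[E]-[F]=[E']-[F']$ to $\alpha(E\oplus F')=\alpha(E'\oplus F)$ and then invoke the standard cancellation criterion $\alpha(a)=\alpha(b)\iff a+c=b+c$ for some $c$. You are more explicit in deriving that criterion from the coset description and you include the easy ``if'' direction, which the paper omits; your closing remark on the (non-)role of compactness is also accurate.
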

\begin{proof}
By definition, $x=[({\rm \dot{E},\dot{F}})]$, for some vector bundles $\rm E,F$. Then we have that ${\rm [ (\dot{E},\dot{F})]=[(\dot{E},0)+(0,\dot{F})]=[(\dot{E},0)]+[(0,\dot{F})]=[E]-[F]}$, where with $\rm [E],[F]$ we denote the composition of $\rm {E}\to\dot{E}$ with $\rm\dot{E}\to[\dot{E}]$.\\
Let ${\rm [E]-[F]=[E^{'}]-[F^{'}]}$. Then one has that ${\rm [E\oplus{F^{'}}]=[E^{'}\oplus{F}]}$, which implies $\rm \dot{E}+\dot{F}^{'}+\dot{G}=\dot{E}^{'}+\dot{F}+\dot{G}$, for some $\rm G$. Hence $\rm {E}\oplus{F}^{'}\oplus{G}\simeq{E}^{'}\oplus{F}\oplus{G}$.
\end{proof}
The above proposition allows almost immediately to prove the following
\begin{proposition}
Let $\rm E$ and $\rm F$ be vector bundles over $\topsp{X}$. Then $\rm [E]=[F]$ if and only if ${\rm E}\oplus\theta_{n}\simeq{{\rm F}\oplus\theta_{n}}$, for some $\theta_{n}$ a trivial bundle of rank n. 
\end{proposition}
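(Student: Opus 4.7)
The plan is to deduce this from Proposition \ref{decomp}, using as the key additional ingredient the standard fact that every vector bundle over a compact Hausdorff space admits a complement to a trivial bundle.

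The easy direction is the ``if'': assuming $\rm E \oplus \theta_n \simeq F \oplus \theta_n$, I would pass to $\kgroup{0}{\topsp{X}}$ to get $\rm [E] + [\theta_n] = [F] + [\theta_n]$, and then cancel $\rm [\theta_n]$ (which is legitimate in the group $\kgroup{0}{\topsp{X}}$) to obtain $\rm [E] = [F]$.

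For the nontrivial direction, I would first apply Proposition \ref{decomp} with $\rm F$ replaced by the zero bundle, rewriting $\rm [E] = [F]$ as $\rm [E] - [0] = [F] - [0]$. The proposition then yields a vector bundle $\rm G$ over $\topsp{X}$ with
\begin{equation}
{\rm E \oplus G \simeq F \oplus G}.
\end{equation}
This is not yet what is required, because $\rm G$ need not be trivial. Here is where compactness enters: for any vector bundle $\rm G$ on a compact Hausdorff (CW) space $\topsp{X}$ there exists a bundle $\rm G'$ such that $\rm G \oplus G' \simeq \theta_n$ for some $n$ (one embeds $\rm G$ as a subbundle of a trivial bundle, using a finite partition of unity to produce a surjection from $\theta_N$ onto $\rm G$, and takes $\rm G'$ to be the orthogonal complement with respect to a fibre metric). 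Taking the direct sum of both sides of $\rm E \oplus G \simeq F \oplus G$ with $\rm G'$ gives
\begin{equation}
{\rm E \oplus \theta_n \simeq F \oplus \theta_n},
\end{equation}
as desired.

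The only subtle step is the existence of the stabilizing complement $\rm G'$; it is where the compactness hypothesis on $\topsp{X}$ is genuinely needed (on a general space stable triviality can fail), but on a finite CW-complex this is routine, and the rest of the argument is purely formal manipulation in the Grothendieck group.
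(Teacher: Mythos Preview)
Your proof is correct and follows essentially the same route as the paper: invoke Proposition~\ref{decomp} to obtain ${\rm E}\oplus{\rm G}\simeq{\rm F}\oplus{\rm G}$, then use compactness to find a complement ${\rm G}'$ with ${\rm G}\oplus{\rm G}'\simeq\theta_n$ and add it to both sides. You are slightly more thorough in that you also spell out the easy converse direction, which the paper leaves implicit.
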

\begin{proof}
Recall that for any vector bundle $\rm G$ on a compact space $\topsp{X}$ there exists a vector bundle $\rm  G^{'}$ such that $\rm G\oplus{G}^{'}\simeq\theta_{n}$, for some $n$, i.e. any vector bundle is a \emph{projective} object in $\mathscr{V}^{F}$.\\
By Proposition \ref{decomp}, if $\rm [E]=[F]$, then $\rm E\oplus{G}\simeq{F\oplus{G}}$, for some vector bundle $\rm G$. By adding $G^{'}$ chosen as above, one gets that ${\rm E}\oplus{\theta_{n}}\simeq{{\rm F}\oplus{\theta_{n}}}$, which yelds the desired result. 
\end{proof}
An easy consequence of Proposition \ref{decomp} is that any element $x\in\kgroup{0}{\topsp{X}}$ can be written as $\rm [H]-[\theta_{n}]$, for some vector bundle $\rm H$ and some $\theta_{n}$.\\[2mm]
As we have seen, the group $\kgroup{0}{\mathcal{A}}$ ``depends'' covariantly on the monoid $\mathcal{A}$. However, $\kgroup{0}{\topsp{X}}$ depends contravariantly on the topological space $\topsp{X}$, i.e. $\rm K^{0}$ is a contravariant functor from the category $\mathscr{T}$ of topological spaces with continuous maps to $\mathscr{G}$.\\  
More precisely, if we let $f:\topsp{X}\to\topsp{Y}$ be a continuous map, then $f$ induces a monoid morphism $f^{*}:{\rm Vect^{F}(\topsp{Y})}\to{\rm Vect^{F}(\topsp{X})}$ via the pullback of vector bundles, hence a map $\kgroup{0}{\topsp{Y}}\to\kgroup{0}{\topsp{X}}$, which we still denote with $f^{*}$.\\
By the homotopy theory of vector bundles one has immediately the following
\begin{theorem}
Let $\topsp{X}$ and $\topsp{Y}$ be compact topological spaces, and let $f_{0},f_{1}:\topsp{X}\to\topsp{Y}$ be continuous and homotopic maps. Then $f_{0}$ and $f_{1}$ induces the same homomorphism $\kgroup{0}{\topsp{Y}}\to\kgroup{0}{\topsp{X}}$.
\end{theorem}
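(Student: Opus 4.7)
The plan is to reduce the statement to the homotopy invariance of pullback vector bundles, and then to invoke functoriality of the Grothendieck construction.

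First, I would note that by the universal property of $\rm K^{0}$ discussed above, the induced homomorphism $\kgroup{0}{\topsp{Y}}\to\kgroup{0}{\topsp{X}}$ is uniquely determined by its composition with $\alpha_{\topsp{Y}}:\mathrm{Vect^{F}}(\topsp{Y})\to\kgroup{0}{\topsp{Y}}$. Consequently, it suffices to show that $f_{0}^{*}$ and $f_{1}^{*}$ agree as monoid morphisms $\mathrm{Vect^{F}}(\topsp{Y})\to\mathrm{Vect^{F}}(\topsp{X})$, which in turn reduces to showing that for every vector bundle $\mathrm{E}\to\topsp{Y}$, the pullback bundles $f_{0}^{*}\mathrm{E}$ and $f_{1}^{*}\mathrm{E}$ are isomorphic over $\topsp{X}$.

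Next, I would fix a homotopy $\mathrm{H}:\topsp{X}\times[0,1]\to\topsp{Y}$ with $\mathrm{H}(\,\cdot\,,i)=f_{i}$ for $i=0,1$, and consider the pullback vector bundle $\mathrm{H}^{*}\mathrm{E}\to\topsp{X}\times[0,1]$. Its restrictions to the slices $\topsp{X}\times\{0\}$ and $\topsp{X}\times\{1\}$ are canonically identified with $f_{0}^{*}\mathrm{E}$ and $f_{1}^{*}\mathrm{E}$ respectively. Thus the statement reduces to the key geometric lemma: for any vector bundle $\mathrm{F}\to\topsp{X}\times[0,1]$ over a compact base, the restrictions $\mathrm{F}|_{\topsp{X}\times\{0\}}$ and $\mathrm{F}|_{\topsp{X}\times\{1\}}$ are isomorphic.

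The strategy for the lemma is to prove the stronger statement that $\mathrm{F}\simeq p^{*}\mathrm{F}_{0}$, where $p:\topsp{X}\times[0,1]\to\topsp{X}$ is the projection and $\mathrm{F}_{0}:=\mathrm{F}|_{\topsp{X}\times\{0\}}$. One exploits compactness of $[0,1]$ to extract a finite partition $0=t_{0}<t_{1}<\cdots<t_{n}=1$ such that $\mathrm{F}$ is trivial over each $\topsp{X}\times[t_{k},t_{k+1}]$ after refining by a suitable open cover of $\topsp{X}$ (using compactness of $\topsp{X}$). One then iteratively constructs isomorphisms $\mathrm{F}|_{\topsp{X}\times[0,t_{k+1}]}\simeq p^{*}\mathrm{F}_{0}|_{\topsp{X}\times[0,t_{k+1}]}$, extending at each step using the fact that the projection $\topsp{X}\times[t_{k},t_{k+1}]\to\topsp{X}\times\{t_{k}\}$ is a homotopy equivalence of compacta, so a bundle isomorphism on the boundary extends over the cylinder. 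The main obstacle is precisely this extension/gluing step: it requires careful use of partitions of unity and a Tietze-type argument to glue local trivialization data into a global bundle isomorphism, and it is the place where compactness (Hausdorffness and the CW-structure of $\topsp{X}$) is essentially used. Once the lemma is established, restricting the isomorphism $\mathrm{H}^{*}\mathrm{E}\simeq p^{*}f_{0}^{*}\mathrm{E}$ to $\topsp{X}\times\{1\}$ yields $f_{1}^{*}\mathrm{E}\simeq f_{0}^{*}\mathrm{E}$, completing the proof.
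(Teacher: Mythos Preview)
Your proposal is correct and is precisely the argument the paper has in mind: the paper does not give a proof but merely states that the theorem follows ``by the homotopy theory of vector bundles,'' and what you have written is the standard unpacking of that phrase --- reduction via the Grothendieck functor to $f_0^*\mathrm{E}\simeq f_1^*\mathrm{E}$, followed by the lemma that bundles over $\topsp{X}\times[0,1]$ are pulled back from $\topsp{X}$. There is nothing to add.
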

Hence, the group $\kgroup{0}{\topsp{X}}$ is a topological invariant, in the sense that two isomorphic spaces have isomorphic $\rm K^{0}$ groups. More importantly, if the two spaces can be ``deformed'' into each other, they have isomorphic $\rm K^{0}$: this property is typical of a cohomology theory, and it is intuitively a first indication that the group $\kgroup{0}{\topsp{X}}$ is a building block for a cohomology theory. In this sense, it is important to define the \emph{reduced group} $\redk{0}{\topsp{X}}$.\\
First notice that $\kgroup{0}{\{pt\}}\simeq{\mathbb{Z}}$ , as a vector bundle on a point is uniquely characterized by the dimension of its typical fiber. Then, the inclusion $i:{x_{0}}\to\topsp{X}$ induces the homomorphism 
\begin{displaymath}
i^{*}:\kgroup{0}{\topsp{X}}\to\kgroup{0}{\{x_{0}\}}\simeq\mathbb{Z}
\end{displaymath}
The reduced group $\redk{0}{\topsp{X}}$ is defined as the kernel of $i^{*}$. Moreover, the following exact sequence
\begin{displaymath}
0\to\redk{0}{\topsp{X}}\to\kgroup{0}{\topsp{X}}\xrightarrow{i^{*}}\kgroup{0}{\{x_{0}\}}\to{0}
\end{displaymath}
canonically splits, i.e. $\kgroup{0}{\topsp{X}}\simeq\mathbb{Z}\oplus\redk{0}{\topsp{X}}$.\\
Given a vector bundle $\rm E\to{X}$ with ${\rm E}_{x}$ the fiber of $\rm E$ over $x$, we can define the \emph{rank function of $\rm E$} $\rm rnk(\rm E):\topsp{X}\to\mathbb{N}_{0}$. As E is locally trivial as a vector bundle over $\topsp{X}$, its rank function is a locally constant function over $\topsp{X}$ with values in $\mathbb{N}_{0}$, i.e. an element of the abelian monoid $\rm H^{0}(X;\mathbb{N}_{0})$. Hence, the rank map extends naturally as the homomorphism
\begin{equation}\label{virtual}
\begin{array}{cccc}
\rm rnk :& \kgroup{0}{\topsp{X}}&\to& \rm H^{0}(X;\mathbb{Z})\\
& \rm [E]-[F] & \to & \rm rnk(E)-rnk(F)
\end{array}
\end{equation}
In the case in which $\topsp{X}$ is a connected space, the integer (\ref{virtual}) is called the \emph{virtual dimension} of the class $\rm [(E,F)]$ in $\kgroup{0}{\topsp{X}}$. Moreover, in the same case, the group $\redk{0}{\topsp{X}}$ is isomorphic to the kernel of the rank homomorphism, hence it consists of the subgroup of elements whose virtual dimension is zero, classes $\rm [(E,F)]$ with $\rm E$ and $\rm F$ of equal rank.\\[2mm]
\indent We conclude this section with a proposition which states how the K-theory group behaves under disjoint unions.
\begin{proposition}
Let $\topsp{X}=\coprod_{i=1}^{n}\topsp{X}_{i}$. Then the inclusions of the $\topsp{X}_{i}$ in $\topsp{X}$ induces the decomposition $\kgroup{0}{\topsp{X}}\simeq{\kgroup{0}{\topsp{X}_{1}}\oplus\kgroup{0}{\topsp{X}_{1}}\oplus\dots\oplus\kgroup{0}{\topsp{X}_{n}}}$.
\end{proposition}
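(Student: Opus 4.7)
The plan is to establish the isomorphism at the level of the category of vector bundles first, and then transport it through the Grothendieck construction. Concretely, let $j_{i}:\topsp{X}_{i}\hookrightarrow\topsp{X}$ denote the canonical inclusion. I will show that the assignment ${\rm E}\mapsto(j_{1}^{*}{\rm E},\ldots,j_{n}^{*}{\rm E})$ gives an equivalence of additive categories
\begin{equation*}
\mathscr{V}^{\rm F}(\topsp{X})\ \simeq\ \prod_{i=1}^{n}\mathscr{V}^{\rm F}(\topsp{X}_{i}),
\end{equation*}
from which the conclusion will follow because the Grothendieck group functor $\rm K^{0}$, being left adjoint to the forgetful functor from abelian groups to abelian monoids, commutes with finite products (in the category of abelian groups, finite products agree with finite coproducts, which are direct sums).

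The key step is constructing the inverse functor. Given a tuple $({\rm E}_{1},\ldots,{\rm E}_{n})$ with ${\rm E}_{i}\to\topsp{X}_{i}$, set ${\rm E}:=\coprod_{i=1}^{n}{\rm E}_{i}$, endowed with the obvious projection onto $\topsp{X}=\coprod_{i}\topsp{X}_{i}$. Since each $\topsp{X}_{i}$ is clopen in $\topsp{X}$, the resulting total space is locally trivial, so ${\rm E}$ is a genuine F-vector bundle on $\topsp{X}$; the same observation handles morphisms, which glue unambiguously across the components. One checks that the two functors are mutually inverse: $j_{i}^{*}(\coprod_{k}{\rm E}_{k})\simeq{\rm E}_{i}$ is clear, and conversely any ${\rm E}\to\topsp{X}$ is canonically isomorphic to $\coprod_{i}j_{i}^{*}{\rm E}$ since $\topsp{X}$ itself is the disjoint union of the $\topsp{X}_{i}$. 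Both functors manifestly respect the direct sum operation, establishing the equivalence as additive categories.

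Passing to isomorphism classes and then to Grothendieck groups, the equivalence of additive categories yields an isomorphism of abelian monoids $\Phi(\mathscr{V}^{\rm F}(\topsp{X}))\simeq\prod_{i}\Phi(\mathscr{V}^{\rm F}(\topsp{X}_{i}))$, and hence, by the universal property of $\rm K^{0}$ applied to finite products, an isomorphism
\begin{equation*}
\kgroup{0}{\topsp{X}}\ \simeq\ \bigoplus_{i=1}^{n}\kgroup{0}{\topsp{X}_{i}},
\end{equation*}
with the map induced by the collection of pullbacks $(j_{1}^{*},\ldots,j_{n}^{*})$, as claimed. There is no genuine obstacle in the argument: the only mild subtlety is verifying that disjoint unions of bundles are legitimate topological vector bundles on $\topsp{X}$, which reduces to the fact that local triviality is preserved because each $\topsp{X}_{i}$ is both open and closed in $\topsp{X}$, so local trivializations of ${\rm E}_{i}$ serve as local trivializations of the union.
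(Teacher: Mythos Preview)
Your proof is correct and is essentially a careful elaboration of the paper's one-line argument, which just says to ``use the fact that any vector bundle on $\topsp{X}$ is characterized by its restrictions on the $\topsp{X}_{i}$.'' You have made explicit the equivalence of additive categories and the passage through the Grothendieck construction that the paper leaves implicit; one small remark is that your appeal to the left-adjoint property should note that finite products and finite coproducts coincide in \emph{both} commutative monoids and abelian groups (not just the latter), which is what makes the argument go through.
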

\begin{proof}
Use the fact that any vector bundle on $\topsp{X}$ is characterized by its restrictions on the $\topsp{X}_{i}$.
\end{proof}
\quad
\begin{remark}
Notice that this last proposition is \emph{not} true for the functor $\redk{0}{\topsp{X}}$. Indeed, if $\topsp{X}$ is the disjoint union of two points $\{x_{0}\}$ and $\{x_{1}\}$, then $\redk{0}{\topsp{X}}\simeq{\mathbb{Z}}$, but $\redk{0}{\{x_{i}\}}=0$, for $i=0,1$.
\end{remark}
\section{Relative K-theory and higher K-groups}
Starting with the group $\kgroup{0}{\topsp{X}}$ one can naturally define the \emph{relative} K-theory and the \emph{higher} K-theory groups.\\
Let $\mathscr{T}_{\rm P}$ denote the category of \emph{compact pairs} $(\topsp{X},\topsp{Y})$, where $\topsp{X},\topsp{Y}$ are topological spaces, and $\topsp{X}$ is such that it can be equipped with a structure of a CW-complex such that $\topsp{Y}$ is a CW-subcomplex. The morphisms $(\topsp{X},\topsp{Y})\to(\topsp{X}^{'},\topsp{Y}^{'})$ are given by \emph{relative} maps, i.e. continuous functions $f:\topsp{X}\to\topsp{X}^{'}$ such that $f(\topsp{Y})\subset{\topsp{Y}^{'}}$.\\
We define the \emph{relative K-theory group} as
\begin{equation}
\kgroup{0}{\topsp{X},\topsp{Y}}:=\redk{0}{\topsp{X}/\topsp{Y}}
\end{equation}
where $\topsp{X}/\topsp{Y}$ is obtained from $\topsp{X}$ by shrinking $\topsp{Y}$ to a point, with respect to which the reduced K-group is defined. In the case when $\topsp{Y}=\{\O\}$, we define $\topsp{X}/\topsp{Y}$ as $\topsp{X}^{+}:=\topsp{X}\coprod\{\pt\}$. Hence, for a base-pointed space $\topsp{X}$ one has
\begin{equation}
\kgroup{0}{\topsp{X},\O}:=\redk{0}{\topsp{X}^{+}}\simeq\kgroup{0}{\topsp{X}}
\end{equation}
Moreover, for $\{x_{0}\}\subset{\topsp{X}}$ we have $\kgroup{0}{\topsp{X},\{x_{0}\}}=\redk{0}{\topsp{X}}$.\\
Let us denote with $\pi:\topsp{X}\to\topsp{X}/\topsp{Y}$. The map $\pi$ induces the obvious relative map 
\begin{displaymath}
\pi:(\topsp{X},\topsp{Y})\to(\topsp{X}/\topsp{Y},\{pt\})
\end{displaymath}
hence the map
\begin{displaymath}
\pi^{*}:\kgroup{0}{\topsp{X}/\topsp{Y},\{pt\}}\to\kgroup{0}{\topsp{X},\topsp{Y}}
\end{displaymath}
We have the following \emph{excision theorem} 
\begin{theorem} The map $\pi^{*}$ is an isomorphism.
\end{theorem}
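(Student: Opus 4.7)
The proof is essentially a direct verification from the definitions, and my plan is to unpack both sides carefully and observe that $\pi^{*}$ is, up to canonical identifications, the identity map. First I would recall that by the definitions given, $\kgroup{0}{\topsp{X},\topsp{Y}} = \redk{0}{\topsp{X}/\topsp{Y}}$, where $\topsp{X}/\topsp{Y}$ is pointed at the image $[\topsp{Y}]$ of $\topsp{Y}$ under the collapse. For the pair $(\topsp{X}/\topsp{Y},\{pt\})$, the subspace $\{pt\}$ is precisely this basepoint, and the convention $\kgroup{0}{\topsp{X},\{x_{0}\}} = \redk{0}{\topsp{X}}$ already established in the text yields
\[
\kgroup{0}{\topsp{X}/\topsp{Y},\{pt\}} \;=\; \redk{0}{\topsp{X}/\topsp{Y}}.
\]
Thus both domain and codomain of $\pi^{*}$ are canonically identified with the single group $\redk{0}{\topsp{X}/\topsp{Y}}$.

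Next I would trace through the functorial construction of $\pi^{*}$. The relative map $\pi:(\topsp{X},\topsp{Y})\to(\topsp{X}/\topsp{Y},\{pt\})$ descends to a map of pointed quotients $\widetilde{\pi}:\topsp{X}/\topsp{Y}\to(\topsp{X}/\topsp{Y})/\{pt\}$. Since collapsing the single basepoint $\{pt\}$ of $\topsp{X}/\topsp{Y}$ to a point changes nothing at the level of pointed spaces, there is a canonical pointed homeomorphism $(\topsp{X}/\topsp{Y})/\{pt\}\cong\topsp{X}/\topsp{Y}$, and under this identification $\widetilde{\pi}$ becomes the identity. Contravariant functoriality of $\redk{0}{\cdot}$ on the category of pointed CW-complexes then gives $\pi^{*}=\mathrm{id}$, which is trivially an isomorphism.

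The only bookkeeping required is to verify that $\{pt\}\subset\topsp{X}/\topsp{Y}$ is a CW-subcomplex (automatic, since the basepoint is a $0$-cell in the natural CW-structure inherited from $(\topsp{X},\topsp{Y})$), and that the factorization of $\pi$ through $\widetilde{\pi}$ commutes with the canonical identifications (immediate from the universal property of quotient spaces). No step here constitutes a genuine obstacle; the theorem is a definitional tautology in the present framework, reflecting the fact that defining $\kgroup{0}{\topsp{X},\topsp{Y}}$ as $\redk{0}{\topsp{X}/\topsp{Y}}$ builds excision into the formalism from the outset. In more traditional developments (e.g.\ Atiyah), where relative K-theory is constructed via mapping cones or via pairs of complexes, the analogous isomorphism is the actual nontrivial content of the excision theorem.
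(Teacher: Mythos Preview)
Your argument is correct: with the paper's convention $\kgroup{0}{\topsp{X},\topsp{Y}}:=\redk{0}{\topsp{X}/\topsp{Y}}$, both source and target of $\pi^{*}$ are literally $\redk{0}{\topsp{X}/\topsp{Y}}$ and the induced quotient map $\topsp{X}/\topsp{Y}\to(\topsp{X}/\topsp{Y})/\{pt\}$ is the identity of the pointed space, so $\pi^{*}$ is the identity. The paper in fact states this theorem without proof, so there is no approach to compare against; your closing remark that the genuine excision content has been absorbed into the definition (and would be nontrivial in, say, Atiyah's setup where relative K-theory is built from complexes of bundles) is exactly the right perspective.
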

It is instructive, at this point, to explain a useful way to describe vector bundles on the space $\topsp{X}/\topsp{Y}$, given a vector bundle $\rm E$ on $\topsp{X}$ with typical fiber $\topsp{F}$. Suppose $E$ is trivializable on $\topsp{Y}$, i.e. there exists a homeomorphism $\alpha:{\rm E|_{Y}\xrightarrow{\simeq}{\rm Y \times{F}}}$ which is linear on the fibres. Consider the equivalence relation on $\rm E|_{\rm Y}$ given by
\begin{displaymath}
v\sim{v^{'}}\:\Leftrightarrow\:p\circ\alpha(v)=p\circ\alpha(v^{'})
\end{displaymath}
where $p:\topsp{Y}\times{\rm F}\to\topsp{Y}$ is the canonical projection. The relation is then extended to the whole $\rm E$. The corresponding set of equivalence classes can be shown to be the total space of a vector bundle over $\topsp{X}/\topsp{Y}$; moreover, every vector bundle on $\topsp{X}$ which is trivial when restricted to $\topsp{Y}$ is isomorphic to the pullback of a vector bundle over $\topsp{X}/\topsp{Y}$.\\
This construction, in particular, allows to prove that the following sequence
\begin{equation}\label{exactseq}
\kgroup{0}{\topsp{X},\topsp{Y}}\xrightarrow{\rho^{*}}\kgroup{0}{\topsp{X}}\xrightarrow{i^{*}}\kgroup{0}{\topsp{Y}}
\end{equation} 
is exact, where the homomorphism $\rho^{*}$ is induced by the map $\rho:\topsp{X}\to\topsp{X}/\topsp{Y}$, and $i:\topsp{Y}\hookrightarrow{\topsp{X}}$ realizes $\topsp{Y}$ as a subspace of $\topsp{X}$ \cite{KAROUBI,Husemoller}.\\[2mm]
\indent To define the higher K-groups, we need to introduce some well known operations on topological spaces. Let $\topsp{X}$ and $\topsp{Y}$ be compact spaces with base points $\{x_{0}\}$ and $\{y_{0}\}$, respectively. The \emph{wedge product} of $\topsp{X}$ and $\topsp{Y}$ is defined as
\begin{equation}
\topsp{X}\vee\topsp{Y}:=(\topsp{X}\amalg\topsp{Y})/\{x_{0}\sim{y_{0}}\}
\end{equation}
while the \emph{smash product} of $\topsp{X}$ and $\topsp{Y}$ is defined as
\begin{equation}
\topsp{X}\wedge\topsp{Y}:=\topsp{X}\times\topsp{Y}/\left\{\topsp{X}\times\{y_{0}\}\cup\{x_{0}\}\times\topsp{Y}\right\}
\end{equation}
The two operations above are related. Indeed, one has natural maps
\begin{displaymath}
\topsp{X}\vee\topsp{Y}\to\topsp{X}\times\topsp{Y}\to\topsp{X}\wedge\topsp{Y}
\end{displaymath}
which allow to write
\begin{displaymath}
\topsp{X}\wedge\topsp{Y}=\topsp{X}\times\topsp{Y}/\topsp{X}\vee\topsp{Y}
\end{displaymath}
Moreover, the operations $\vee$ and $\wedge$ are associative and commutative, and $\wedge$ is distributive over $\vee$. This means, for example, that there is a canonical homeomorphism between $\topsp{X}\wedge\topsp{Y}$ and $\topsp{Y}\wedge\topsp{X}$.\\
An important property of the smash product is that 
\begin{equation}\label{smash}
\topsp{S}^{n}\simeq{\topsp{S}^{1}\wedge\topsp{S}^{1}\wedge\dots\wedge\topsp{S}^{1}}\quad n\text{-times}
\end{equation}
where $\topsp{S}^{n}$ is the standard $n$-sphere with base point.\\
For a given space $\topsp{X}$ with base point, we define the \emph{n-th reduced suspension} $\Sigma^{n}(\topsp{X})$ as
\begin{equation}
\Sigma^{n}(\topsp{X}):=\topsp{S}^{n}\wedge\topsp{X}
\end{equation}
Because of property (\ref{smash}), we have that the $n$-th reduced suspension  $\Sigma^{n}(\topsp{X})$ is the $n$-th iterated reduced suspension of $\topsp{X}$.\\
We have then the following
\begin{definition} For $(\topsp{X},\topsp{Y})$ a compact pair, and for $n\geq{0}$ we define
\begin{displaymath}
\kgroup{-n}{\topsp{X},\topsp{Y}}:=\redk{0}{\Sigma^{n}(\topsp{X}/\topsp{Y})}
\end{displaymath}
For $\topsp{X}$ a compact space we put
\begin{displaymath}
\kgroup{-n}{\topsp{X}}:=\kgroup{-n}{\topsp{X},\O}:=\redk{0}{\Sigma^{n}(\topsp{X}^{+})}
\end{displaymath}
and for $\topsp{X}$ a compact space with base point $\{x_{0}\}$ we put
\begin{displaymath}
\redk{-n}{\topsp{X}}:=\kgroup{-n}{\topsp{X},\{x_{0}\}}:=\redk{0}{\Sigma^{n}(\topsp{X})}
\end{displaymath}
\end{definition}
The ${\rm K}^{-n}$ are contravariant functors, as the reduced suspension induces a covariant functor on $\mathscr{T}$. The relation between higher and reduced K-theory groups is given canonically by
\begin{displaymath}
\kgroup{-n}{\topsp{X}}\simeq\redk{-n}{\topsp{X}}\oplus\kred{-n}{\{x_{0}\}}
\end{displaymath}
with $\{x_{0}\}$ being the base point. By the definitions above $\kgroup{-n}{\{x_{0}\}}\simeq\redk{0}{\topsp{S}^{n}}$: we will compute these groups in section \ref{classifying}.\\
The various higher K-groups are linked together by the following semi-infinite long exact sequence
\begin{equation}\label{exact}
\dots\kgroup{-(n+1)}{\topsp{X}}\to\kgroup{-(n+1)}{\topsp{Y}}\xrightarrow{\delta}\kgroup{-n}{\topsp{X},\topsp{Y}}\to\kgroup{-n}{\topsp{X}}\to\kgroup{-n}{\topsp{Y}}\xrightarrow{\delta}\dots
\end{equation}
where $\delta$ is the boundary homomorphism. For a definition of $\delta$ see \cite{Atiyah1967,KAROUBI}. The sequence above is another typical feature of a cohomology theory.\\
\section{Multiplicative structures on K-theory}
As vector bundles on a space $\topsp{X}$ can be ``multiplied'' together via tensor product, it is natural to look for multiplicative structures on the K-groups.\\
The tensor product of vector bundles on $\topsp{X}$ induces a multiplication
\begin{displaymath}
\kgroup{0}{\topsp{X}}\otimes_{\mathbb{Z}}\kgroup{0}{\topsp{X}}\to\kgroup{0}{\topsp{X}}
\end{displaymath}
defined as
\begin{equation}
\rm [(E,F)]\otimes [(E^{'},F^{'})]:=[(E\otimes{E^{'}}\oplus F\otimes{F^{'}},E\otimes{F^{'}}\oplus F\otimes E^{'})]
\end{equation}
The expression above comes from writing $\rm [(E,F)]$ as $\rm [E]-[F]$, and formally imposing distributivity of the tensor product on virtual bundles. Hence, $\kgroup{0}{\topsp{X}}$ is actually a ring.\\
There is another product, usually called the external tensor product or \emph{cup product}, which is a homomorphism 
\begin{equation}\label{cupprod}
\cup:\kgroup{0}{\topsp{X}}\otimes_{\mathbb{Z}}\kgroup{0}{\topsp{Y}}\to\kgroup{0}{\topsp{X}\times\topsp{Y}} 
\end{equation}
defined as follows. Denote with $\pi_{\topsp{X}}:\topsp{X}\times\topsp{Y}\to\topsp{X}$ and $\pi_{\topsp{Y}}:\topsp{X}\times\topsp{Y}\to\topsp{X}$ the canonical projections.\\
Hence, $\cup(\rm [E],[F])\in\kgroup{0}{\topsp{X}}\otimes_{\mathbb{Z}}\kgroup{0}{\topsp{Y}}$ is the class in $\kgroup{0}{\topsp{X}\times\topsp{Y}}$ defined by
\begin{displaymath}
\cup(\rm [E],[F]):=\pi_{\topsp{X}}^{*}([E])\otimes\pi_{\topsp{Y}}^{*}([F])
\end{displaymath}
and extended by (bi)linearity. Notice that when $\topsp{Y}=\topsp{X}$, we have that 
\begin{displaymath}
\Delta^{*}\cup(\rm [(E,F)],[(E^{'},F^{'})])=[(E,F)]\otimes[(E^{'},F^{'})]
\end{displaymath}
where $\Delta:\topsp{X}\to\topsp{X}\times\topsp{X}$ is the diagonal map.\\
The above exterior product is crucial when dealing with higher K-groups. In particular, when restricted to reduced K-theory, the cup product induces a homomorphism \cite{Atiyah1967}
\begin{displaymath}
\redk{0}{\topsp{X}}\otimes_{\mathbb{Z}}\redk{0}{\topsp{Y}}\to\redk{0}{\topsp{X}\wedge\topsp{Y}}
\end{displaymath}
which immediately induces the homomorphism
\begin{equation}
\redk{0}{\Sigma^{n}(\topsp{X})}\otimes_{\mathbb{Z}}\redk{0}{\Sigma^{m}(\topsp{Y})}\to\redk{0}{\Sigma^{n+m}(\topsp{X}\wedge\topsp{Y})}
\end{equation}
By substituting in the above expression $\topsp{X}^{+}$ and $\topsp{Y}^{+}$ for $\topsp{X}$ and $\topsp{Y}$, respectively, one obtains the homomorphism
\begin{equation}\label{graded}
\kgroup{-n}{\topsp{X}}\otimes_{\mathbb{Z}}\kgroup{-m}{\topsp{Y}}\to\kgroup{-(n+m)}{\topsp{X}\times\topsp{Y}}
\end{equation}
If we denote with $\kgroup{-*}{\topsp{X}}:=\bigoplus_{n\geq{0}}\kgroup{-n}{\topsp{X}}$, the cup product induces, via (\ref{graded}), the structure of a graded ring on $\kgroup{-*}{\topsp{X}}$. Moreover, for any space $\topsp{X}$ with base point ${\{\rm pt\}}$, the cup product makes $\kgroup{-*}{\topsp{X}}$ into a graded module over $\kgroup{-*}{\topsp{pt}}$.\\[2mm]
\indent Notice, at this point, that a priori the ring $\kgroup{-*}{\topsp{X}}$ could be very complicated, even for the case $\topsp{X}={\{\rm pt\}}$. It's an extremely remarkable property of K-theory that this is not the case, as the following results show.\\
For the first time in this introduction, we have to make a difference between complex and real K-theory to introduce a fundamental and deep result.
\begin{theorem}{\rm \textbf{(Bott Periodicity)}}\label{complexpoint} The ring $\kgroup{-*}{\{\rm pt\}}$ is a polynomial algebra generated by an element $u\in\kgroup{-2}{\{\rm pt\}}\simeq\redk{0}{\topsp{S}^{2}}$, i.e. there is a ring isomorphism
\begin{equation}
\kgroup{-*}{\{\text{pt}\}}\simeq{\mathbb{Z}[u]}
\end{equation}
\end{theorem}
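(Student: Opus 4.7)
The plan is to identify the group $\kgroup{-n}{\{\pt\}}$ with $\redk{0}{\topsp{S}^{n}}$ for every $n\geq 0$ (directly from the definition $\kgroup{-n}{\{\pt\}}=\redk{0}{\Sigma^{n}(\{\pt\}^{+})}=\redk{0}{\topsp{S}^{n}}$), and then use the cup product to organize these groups as a graded ring. The strategy is to reduce everything to two computations, namely $\redk{0}{\topsp{S}^{1}}$ and $\redk{0}{\topsp{S}^{2}}$, and then to propagate the answer to higher spheres by iterating a multiplicative periodicity isomorphism.

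First I would compute $\redk{0}{\topsp{S}^{1}}=0$ and $\redk{0}{\topsp{S}^{2}}\simeq\mathbb{Z}$ via the standard clutching construction: complex vector bundles of rank $k$ on $\topsp{S}^{n}$ are classified by homotopy classes $[\topsp{S}^{n-1},\group{U}(k)]$, so $\pi_{0}\group{U}=0$ gives the vanishing on $\topsp{S}^{1}$, while $\pi_{1}\group{U}(k)\simeq\mathbb{Z}$ (via determinant) together with the splitting of bundles on $\topsp{S}^{2}$ as sums of line bundles gives the free rank-one group on $\topsp{S}^{2}$, with canonical generator $u:=[\topsp{H}]-1$, $\topsp{H}$ the hyperplane bundle on $\mathbb{C}P^{1}$. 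The heart of the proof is then the Bott periodicity theorem in its multiplicative form: for every compact $\topsp{X}$ the cup product with $u$ induces an isomorphism
\begin{equation}
\beta_{\topsp{X}}:\redk{0}{\topsp{X}}\xrightarrow{\ \simeq\ }\redk{0}{\Sigma^{2}\topsp{X}}\;=\;\redk{-2}{\topsp{X}}.
\end{equation}
I would prove this following the elementary Atiyah--Bott argument using clutching functions and a Fourier / Laurent-polynomial approximation on $\topsp{S}^{1}$, which reduces the statement to the observation that polynomial clutching functions are determined, up to stable homotopy, by their winding number. This is the technical heart of the theorem and is the step I expect to be the main obstacle; every other step is essentially bookkeeping once it is in place.

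Taking $\topsp{X}=\{\pt\}^{+}$ (with disjoint basepoint) and iterating $\beta$ then yields $\kgroup{-2k}{\{\pt\}}\simeq\mathbb{Z}$ generated by the $k$-fold cup product $u^{k}$, and $\kgroup{-(2k+1)}{\{\pt\}}=0$ for all $k\geq 0$, starting from $\kgroup{-1}{\{\pt\}}=\redk{0}{\topsp{S}^{1}}=0$. To conclude the ring statement I would verify that these products are the generators: by construction the periodicity isomorphism $\beta$ is multiplication by $u$, so $\beta^{k}(1)=u^{k}$ must generate $\kgroup{-2k}{\{\pt\}}$ freely. Consequently the graded ring homomorphism $\mathbb{Z}[u]\to\kgroup{-*}{\{\pt\}}$ sending $u$ to the Bott class is a bijection in every degree, which gives the claimed isomorphism $\kgroup{-*}{\{\pt\}}\simeq\mathbb{Z}[u]$.
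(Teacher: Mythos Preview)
Your outline is correct and follows the standard Atiyah--Bott elementary argument. However, you should be aware that the paper does not actually prove this theorem: immediately after the statement it simply records the generator $u=[\topsp{H}]-[\theta_1]$ and refers the reader to Atiyah's book \cite{Atiyah1967} for a proof. So there is no ``paper's own proof'' to compare against here --- your proposal supplies exactly the argument the paper chooses to cite rather than reproduce. The clutching-function computation you mention is in fact discussed later in the paper (in the subsection on K-theory of spheres and tori), but only to interpret Bott periodicity as a statement about $\pi_*(\U(\infty))$, not to give an independent proof.
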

The element $u$ can be represented as $u=[\rm H]-[\theta_{1}]$, where H denotes the \emph{tautological} complex line bundle over $\rm S^{2}\simeq\mathbb{CP}^{1}$. For a proof see \cite{Atiyah1967}.\\[2mm]
\indent The above theorem, in particular, says that the map $\mu_{u}:\kgroup{-n}{\{\rm pt\}}\to\kgroup{-n-2}{\{\rm pt\}}$ induced by multiplication by $u$, is an isomorphism for all $n$.\\
The previous theorem generalizes as
\begin{theorem}\label{bott}
Let $\topsp{X}$ be a compact space. Then the map
\begin{displaymath}
\mu_{u}:\kgroup{-n}{\topsp{X}}\xrightarrow{\simeq}\kgroup{-n-2}{\topsp{X}}
\end{displaymath}
given by module multiplication by $u$, is an isomorphism for all $n\geq{0}$
\end{theorem}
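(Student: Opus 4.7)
The plan is to bootstrap from Theorem 3.3.1 (the case $\topsp{X}=\{\pt\}$) by showing that the periodicity statement for a point, expressed as an external product isomorphism, extends functorially to all compact $\topsp{X}$. The multiplication map $\mu_{u}$ factors as
\begin{equation*}
\kgroup{-n}{\topsp{X}}\;\xrightarrow{\;\cdot\otimes u\;}\;\kgroup{-n}{\topsp{X}}\otimes_{\mathbb{Z}}\kgroup{-2}{\{\pt\}}\;\xrightarrow{\;\cup\;}\;\kgroup{-(n+2)}{\topsp{X}\times\{\pt\}}\;\simeq\;\kgroup{-(n+2)}{\topsp{X}},
\end{equation*}
so since $\kgroup{-2}{\{\pt\}}\simeq\mathbb{Z}\langle u\rangle$ by Theorem 3.3.1, $\mu_u$ is essentially the external cup product with the Bott generator. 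The first reduction is therefore to show that it suffices to treat $n=0$: using $\kgroup{-n}{\topsp{X}}=\redk{0}{\Sigma^{n}(\topsp{X}^{+})}$ and the naturality of $\cup$ with respect to the suspension isomorphism, $\mu_{u}$ on $\topsp{X}$ at degree $n$ is (up to canonical identifications) the $n=0$ statement applied to the compact based space $\Sigma^{n}(\topsp{X}^{+})$. Thus the problem becomes proving that
\begin{equation*}
\beta_{\topsp{Z}}\,:\,\redk{0}{\topsp{Z}}\,\xrightarrow{\;\cdot\cup u\;}\,\redk{0}{\topsp{S}^{2}\wedge\topsp{Z}}
\end{equation*}
is an isomorphism for every compact based space $\topsp{Z}$ (taking $\topsp{Z}=\topsp{X}^{+}$ recovers the unreduced statement).

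Next I would recast $\beta_{\topsp{Z}}$ as an assertion about vector bundles on $\topsp{Z}\times\topsp{S}^{2}$ via the clutching construction. Writing $\topsp{S}^{2}=\topsp{D}_{+}\cup\topsp{D}_{-}$ with $\topsp{D}_{+}\cap\topsp{D}_{-}=\topsp{S}^{1}$, any bundle on $\topsp{Z}\times\topsp{S}^{2}$ is (up to isomorphism) of the form $[\,\rm E,f\,]$, where $\rm E\to\topsp{Z}$ is a vector bundle and $f:\topsp{Z}\times\topsp{S}^{1}\to\mathrm{Aut}(\rm E)$ is a continuous clutching function, exploiting that $\topsp{Z}\times\topsp{D}_{\pm}$ deformation retracts to $\topsp{Z}$. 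Under this description the Bott generator $u=[\rm H]-[\theta_{1}]$ corresponds to the clutching function $z\in\topsp{S}^{1}\subset\mathbb{C}$ (viewed as scalar multiplication), and $\beta_{\topsp{Z}}$ becomes the map sending $[\rm E]\in\kgroup{0}{\topsp{Z}}$ to the class of the bundle with clutching function $z\cdot\mathrm{Id}_{\rm E}$.

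The heart of the proof is then the classical clutching-function analysis. First I would show every clutching function $f(x,z)$ is homotopic (through clutching functions) to a \emph{Laurent polynomial} $\sum_{k=-N}^{N}a_{k}(x)z^{k}$; this is a Fourier-series approximation, combined with a convexity argument to push through the invertibility condition. Multiplying by a large power of $z$ then reduces to \emph{polynomial} clutching functions, whose equivalence classes I would classify by the linearization trick: a polynomial clutching function of degree $n$ is homotopic to a \emph{linear} one $a(x)z+b(x)$ on an enlarged bundle, and linear clutching functions are classified by a splitting of $\rm E$ induced by the spectral projection onto the part of the spectrum of $-b a^{-1}$ (equivalently, $a^{-1}b$) inside the unit disk. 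From this one reads off both an explicit inverse to $\beta_{\topsp{Z}}$ and the vanishing of its kernel, establishing the isomorphism.

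The main obstacle, as always in Bott periodicity, is the polynomial-clutching classification step: one must show that the linearization procedure is well-defined (independent of the many auxiliary choices of degree, splitting, and homotopy) and that the resulting inverse to $\beta_{\topsp{Z}}$ actually lands in $\redk{0}{\topsp{Z}}$ rather than picking up spurious $\kgroup{0}{\topsp{Z}\times\topsp{S}^{2}}$ classes. Concretely, one must verify that the spectral-projection construction respects the equivalence relation $(\rm E,F)\sim(\rm E\oplus G,F\oplus G)$ defining K-theory and that it is natural in maps $\topsp{Z}'\to\topsp{Z}$, so that the whole argument assembles into a genuine natural transformation inverse to $\beta_{\topsp{Z}}$. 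The specialisation $\topsp{Z}=\topsp{S}^{0}$ recovers Theorem 3.3.1 and pins down the generator $u$, which closes the circle.
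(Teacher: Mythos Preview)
The paper does not actually prove this theorem: it states it without proof and, for the point case (Theorem~\ref{complexpoint}), simply refers the reader to Atiyah's book. Your outline is precisely the classical Atiyah clutching-function proof from that reference, so in that sense you are reconstructing exactly the argument the paper is citing. The steps you list --- Fourier approximation to Laurent clutching functions, reduction to polynomial clutching by multiplying by $z^N$, linearization on an enlarged bundle, and classification of linear clutching functions via the spectral projection --- are all correct in outline and assemble into a valid proof.

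One small framing issue: you open by saying you will ``bootstrap from Theorem~3.3.1'', but the argument you then describe does not actually use the point case as input; it proves the external-product isomorphism $\redk{0}{\topsp{Z}}\to\redk{0}{\topsp{S}^2\wedge\topsp{Z}}$ directly for all compact based $\topsp{Z}$, and the point case drops out as the specialisation $\topsp{Z}=\topsp{S}^0$ (as you yourself note at the end). This is not a gap, just a mismatch between your stated plan and what the proof actually does. The Atiyah proof is genuinely a proof of the general periodicity theorem from scratch, not a deduction from the point case.
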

The above theorem is usually referred to as the \emph{general Bott periodicity theorem}, and essentially states that there are only two ``independent'' K functors, namely $\rm K^{0}$ and $\rm K^{-1}$. Moreover, theorem (\ref{bott}) can be extended to relative and reduced K-theory. Finally, the original simplest formulation of the periodicity theorem states that for any compact space $\topsp{X}$, there is an isomorphism between $\kgroup{0}{\topsp{X}}\otimes\kgroup{0}{\topsp{S}^{2}}$ and $\kgroup{0}{\topsp{X}\times{\topsp{S}^{2}}}$.\\[2mm]
\indent In the real case things are a bit different, and slightly more complicated. Indeed, one has the following real version of the Bott periodicity theorems.
\begin{theorem}\label{realpoint} The ring $\kogroup{-*}{\{\rm pt\}}$ is generated by elements
\begin{displaymath}
\eta\in\kogroup{-1}{\{{\rm pt} \}},\quad y\in\kogroup{-4}{\{{\rm pt} \}},\quad x\in\kogroup{-8}{\{\rm pt \}}
\end{displaymath}
subject to the relations
\begin{displaymath}
2\eta=0,\quad\eta^{3}=0,\quad\eta{y}=0,\quad y^{2}=4x
\end{displaymath}
i.e. there is a ring isomorphism
\begin{equation}
\kogroup{-*}{\{{\rm pt}\}}\simeq\mathbb{Z}[\eta,y,x]/<2\eta,\eta^{3},\eta{y},y^{2}-4x>
\end{equation}
\end{theorem}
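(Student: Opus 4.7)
The plan is to establish the ring structure by first computing the graded groups $\kogroup{-n}{\{\pt\}}$ for $n=0,1,\ldots,7$, identifying explicit generators in the relevant degrees, and then verifying the four stated relations one by one. By the real analogue of the Bott periodicity theorem (period eight in the real case, in contrast to the complex version of Theorem~\ref{complexpoint}), one has $\kogroup{-n-8}{\{\pt\}} \simeq \kogroup{-n}{\{\pt\}}$, so the ring is determined by its values modulo $8$.

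For the group structure in each degree, I would identify $\kogroup{-n}{\{\pt\}} \simeq \redko{0}{\topsp{S}^{n}}$ and invoke either Bott's classical computation of the stable homotopy groups of $\group{O}$, or the Atiyah-Bott-Shapiro description of $\kogroup{-n}{\{\pt\}}$ in terms of $\zed_{2}$-graded Clifford modules (mentioned earlier in the chapter), exploiting the Morita-type periodicity $\Cl_{n+8} \simeq \Cl_{n} \otimes \mat_{16}(\real)$. Either route yields the sequence $\mathbb{Z},\,\mathbb{Z}/2,\,\mathbb{Z}/2,\,0,\,\mathbb{Z},\,0,\,0,\,0$ for $n=0,\ldots,7$. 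I would then take $\eta$ to be the stable class of the M\"obius line bundle over $\topsp{S}^{1}\simeq\mathbb{RP}^{1}$, $y$ the class of the quaternionic Hopf bundle on $\topsp{S}^{4}\simeq\mathbb{HP}^{1}$ viewed as a real rank-four bundle with trivial part subtracted, and $x$ a chosen generator of $\kogroup{-8}{\{\pt\}}\simeq\mathbb{Z}$ coming from the irreducible $\Cl_{8}$-module via the Atiyah-Bott-Shapiro construction.

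Three of the four relations follow immediately from the group computation: $2\eta=0$ since $\kogroup{-1}{\{\pt\}}=\mathbb{Z}/2$; $\eta^{3}=0$ since $\eta^{3}\in\kogroup{-3}{\{\pt\}}=0$; and $\eta y=0$ since $\eta y\in\kogroup{-5}{\{\pt\}}=0$. One also checks that $\eta^{2}$ is the nontrivial element of $\kogroup{-2}{\{\pt\}}=\mathbb{Z}/2$ (so no premature collapse occurs), which can be seen by relating multiplication by $\eta$ to the Hopf map $\topsp{S}^{2}\to\topsp{S}^{1}$ in the corresponding smash product.

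The one substantive obstacle is the relation $y^{2}=4x$ inside $\kogroup{-8}{\{\pt\}}\simeq\mathbb{Z}$, the only relation not forced by a vanishing degree. My approach is to compare with complex K-theory through the ring homomorphism $c:\kogroup{-*}{\{\pt\}}\to\kgroup{-*}{\{\pt\}}$ induced by complexification of real vector bundles. Since $\mathbb{H}\simeq\mathbb{C}^{2}$ as a complex vector space, the complexification of the quaternionic Hopf bundle on $\topsp{S}^{4}$ is twice the generator of $\redk{0}{\topsp{S}^{4}}$, so $c(y)=\pm 2\,u^{2}$, where $u$ is the complex Bott generator of Theorem~\ref{complexpoint}; tracking the ABS construction through the complexification map likewise gives $c(x)=\pm u^{4}$. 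Therefore $c(y^{2})=4u^{4}=\pm 4\,c(x)$, and since $\kgroup{-8}{\{\pt\}}\simeq\mathbb{Z}$ is torsion-free, this pins down $y^{2}=4x$ up to the sign convention chosen for the generator $x$. That no further relations are needed is then verified degree by degree: the natural surjection $\mathbb{Z}[\eta,y,x]/\langle 2\eta,\eta^{3},\eta y,y^{2}-4x\rangle\to\kogroup{-*}{\{\pt\}}$ already exhausts each graded piece according to the tabulation above.
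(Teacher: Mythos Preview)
The paper does not give a proof of this theorem; it is stated as a known result alongside the complex Bott periodicity theorem (for which the paper refers to \cite{Atiyah1967}) and is used as input for the rest of the chapter. So there is no paper proof to compare against.

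Your outline is the standard one and is essentially correct. A few points deserve slightly more care if you want a complete argument. First, you invoke the eight-fold periodicity at the outset to reduce to degrees $0$ through $7$, but the statement that multiplication by $x$ realizes this periodicity is exactly Theorem~\ref{realbott}; make sure you are not being circular, and separate the purely additive computation of $\redko{0}{\topsp{S}^n}$ (via Bott's homotopy calculation or the ABS description) from the later identification of the ring structure. Second, for $y^{2}=4x$ the complexification argument is fine, but you should say explicitly why $c:\kogroup{-8}{\{\pt\}}\to\kgroup{-8}{\{\pt\}}$ is injective (both groups are $\mathbb{Z}$ and $c(x)=u^{4}$, so $c$ is an isomorphism in that degree); otherwise the comparison in $\kgroup{-8}{\{\pt\}}$ does not by itself pin down the relation in $\kogroup{-8}{\{\pt\}}$. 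Third, your final surjectivity check ``degree by degree'' implicitly uses that $\eta^{2}$ is nonzero in degree $-2$ (which you do note) and that $x$ is chosen to be a generator in degree $-8$; with those in hand the quotient ring has the correct order in every degree modulo $8$, and periodicity via $x$ handles the rest.
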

\begin{theorem}\label{realbott}
Let $\topsp{X}$ be a compact space. Then the map
\begin{displaymath}
\mu_{x}:\kogroup{-n}{\topsp{X}}\xrightarrow{\simeq}\kogroup{-n-8}{\topsp{X}}
\end{displaymath}
given by module multiplication by $x$, is an isomorphism for all $n\geq{0}$.
\end{theorem}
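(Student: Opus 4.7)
The plan is to deduce this general periodicity from the point case, Theorem \ref{realpoint}, by a naturality plus five-lemma argument, completely parallel to the passage from Theorem \ref{complexpoint} to Theorem \ref{bott} in the complex setting. The structural input is the graded module structure on $\kogroup{-*}{\topsp{X}}$ over $\kogroup{-*}{\{\text{pt}\}}$ induced by the cup product (\ref{graded}) (in its real version). Fixing the second factor equal to $x \in \kogroup{-8}{\{\text{pt}\}}$, the homomorphism
\begin{equation*}
\mu_x : \kogroup{-n}{\topsp{X}} \longrightarrow \kogroup{-n-8}{\topsp{X} \times \{\text{pt}\}} \simeq \kogroup{-n-8}{\topsp{X}}
\end{equation*}
is a natural transformation of contravariant functors on the category of compact CW-pairs.

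First I would record two formal properties of $\mu_x$. Naturality: for any map $f:\topsp{X}\to\topsp{Y}$, $f^{*}\circ\mu_{x}=\mu_{x}\circ f^{*}$, which follows immediately from the functoriality of pullback of vector bundles and of the cup product. Compatibility with the long exact sequence (\ref{exact}): for any compact pair $(\topsp{X},\topsp{Y})$, multiplication by $x$ induces a morphism of the long exact sequences attached to $(\topsp{X},\topsp{Y})$, commuting in particular with the connecting homomorphism $\delta$, since $\delta$ is obtained via the reduced suspension functor and the external product is compatible with suspensions.

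Next I would run an induction on the number of cells in a finite CW-decomposition of $\topsp{X}$, which is legitimate since $\topsp{X}$ is compact. The base case is $\topsp{X}=\{\text{pt}\}$, where $\mu_{x}$ is an isomorphism by Theorem \ref{realpoint}. For the inductive step, assume the statement for the $(k-1)$-skeleton $\topsp{X}_{k-1}$. The $k$-skeleton $\topsp{X}_{k}$ is obtained by attaching $k$-cells, so $\topsp{X}_{k}/\topsp{X}_{k-1}$ is a wedge of $k$-spheres. For a wedge of spheres, $\kogroup{-n}{\topsp{S}^{k}}$ reduces, via $\redko{0}{\Sigma^{n}(\topsp{S}^{k})}$, to the K-theory of $\topsp{S}^{n+k}$, and further to $\kogroup{-(n+k)}{\{\text{pt}\}}$; hence the point case gives the isomorphism on $\kogroup{-*}{\topsp{X}_{k}/\topsp{X}_{k-1}}$. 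Applying the naturality of $\mu_x$ to the exact sequence of the pair $(\topsp{X}_{k},\topsp{X}_{k-1})$ and invoking the five-lemma then yields the isomorphism on $\topsp{X}_{k}$. Iterating gives the result for $\topsp{X}$.

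The only genuine obstacle here is the base case itself, Theorem \ref{realpoint}, which this theorem takes as given: constructing the generator $x\in\kogroup{-8}{\{\text{pt}\}}\simeq\redko{0}{\topsp{S}^{8}}$ (e.g.\ via the real Atiyah--Bott--Shapiro construction and the $\mathbb{Z}_{8}$-periodicity of the real Clifford algebras) and showing multiplication by it is an isomorphism at a point is the hard content of real Bott periodicity. Once that is in hand, the passage to an arbitrary compact $\topsp{X}$ is the formal dévissage outlined above; the only point needing slight care is that the five-lemma step is applied to the long exact sequence (\ref{exact}) in degrees $-n$ and $-n-8$ simultaneously, which is automatic from the compatibility of $\mu_{x}$ with $\delta$.
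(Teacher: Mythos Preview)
The paper states Theorem~\ref{realbott} without proof; it is quoted as one of the standard foundational results of real K-theory, just as Theorem~\ref{bott} is in the complex case. So there is no ``paper's own proof'' to compare against.

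That said, your argument is correct and is exactly the standard d\'evissage used to pass from the coefficient-ring computation to an arbitrary compact CW-complex. A couple of small points worth tightening: your induction is really on skeleta rather than on the number of cells, and in the inductive step you need $\mu_x$ to be an isomorphism on $\widetilde{\KO}^{-*}(\topsp{X}_k/\topsp{X}_{k-1})$ for \emph{all} degrees $*$ (not just a single $n$), since the five-lemma draws on two adjacent degrees on each side; this is exactly what the identification $\widetilde{\KO}^{-n}(\topsp{S}^k)\simeq\KO^{-n-k}(\{\rm pt\})$ together with Theorem~\ref{realpoint} provides. The compatibility of $\mu_x$ with $\delta$ that you invoke is the statement that the external product with a fixed class in $\KO^{-8}(\{\rm pt\})$ is a morphism of long exact sequences of pairs, which is standard but perhaps deserves a one-line reference in a polished write-up. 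Your closing remark is apt: the genuine content is entirely in Theorem~\ref{realpoint}, and the extension to general $\topsp{X}$ is formal.
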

The first consequence of theorem (\ref{realpoint}) is that the ring $\kogroup{-*}{\{\rm pt\}}$ is not freely generated, i.e. it contains torsion subgroups.\\[2mm]
\indent Apart from limiting the number of K-groups to compute, the Bott periodicity theorem allows to define K-theory for positive degrees, which is the final important step needed to construct a cohomology theory associated to the K-groups. Indeed, for $n>0$ one defines $\kgroup{n}{\topsp{X},\topsp{Y}}$ as 
\begin{displaymath}
\begin{array}{c}
\kgroup{n}{\topsp{X},\topsp{Y}}:=\kgroup{0}{\topsp{X},\topsp{Y}},\quad\text{for}\:n\:\text{even}\\
\kgroup{n}{\topsp{X},\topsp{Y}}:=\kgroup{-1}{\topsp{X},\topsp{Y}},\quad\text{for}\:n\:\text{odd}
\end{array}
\end{displaymath}
and analogous definitions are given for $\kogroup{n}{\topsp{X},\topsp{Y}}$. In particular, the Bott isomorphism is compatible with the long exact sequence (\ref{exact}), and hence it can be extended to positive degrees.\\
Because of Bott periodicity, it is convenient to define for any pair $(\topsp{X},\topsp{Y})$
\begin{displaymath}
\begin{array}{c}
\kgroup{*}{\topsp{X},\topsp{Y}}:=\kgroup{0}{\topsp{X},\topsp{Y}}\oplus\kgroup{-1}{\topsp{X},\topsp{Y}}\\
\kogroup{*}{\topsp{X},\topsp{Y}}:=\bigoplus_{i=0}^{7}\kogroup{-i}{\topsp{X},\topsp{Y}}
\end{array}
\end{displaymath}
The cohomology theory constructed in this way is referred to as K-theory, and satisfies all the usual axioms of a cohomology theory $\rm E$ on $\mathscr{T}_{\text{P}}$, but the \emph{dimension axiom}, which requires the cohomology $\text{E}$ to satisfy $\text{E}^{i}(\rm pt)=0$, for $i\neq{0}$, and $\text{E}^{0}(\rm pt)=\mathbb{Z}$.\\
Finally, because of Bott periodicity, the long exact sequence (\ref{exact}) for \emph{complex} K-theory can be truncated to the six-term exact sequence 
\begin{displaymath}
\xymatrix{\kgroup{-1}{\topsp{X}}\ar[r] & \kgroup{-1}{\topsp{Y}}\ar[r]^{\delta}&\kgroup{0}{\topsp{X},\topsp{Y}}\ar[d]\\
\kgroup{-1}{\topsp{X},\topsp{Y}} \ar[u]&\kgroup{0}{\topsp{Y}}\ar[l]^{\delta}&\kgroup{0}{\topsp{X}}\ar[l]
}
\end{displaymath}  
\section{K-theory and classifying spaces}\label{classifying} 
As for ordinary cohomology, there is a description of K-theory in terms of classifying spaces. As this approach will play a very important role in later chapters, we will review it in some detail.\\
\indent Let $\topsp{X}$ a compact CW-complex of finite type. It is a very important result in geometry that F-vector bundles of rank $k$ over $\topsp{X}$ can be classified up to isomorphism. Namely, there exists a connected topological space ${\rm BF}_{k}$, equipped with a rank $k$ F-vector bundle $\mathbb{E}_{k}$, called the \emph{universal F-vector bundle}, such that
\begin{displaymath}
{{\rm Vect}_{k}^{F}(\topsp{X})\simeq\left[\topsp{X},{\rm BF}_{k}\right]}
\end{displaymath}
where the isomorphism is induced by assigning to the homotopy class $[f]$ the isomorphism class $[f^{*}\mathbb{E}_{k}]$ \cite{Husemoller}.\\
A model for ${\rm BF}_{k}$ can be defined as follows. Denote with $\text{Gr}(k,m;\text{F})$ the \emph{Grassmannian manifold} of F-linear subspaces in $\text{F}^{m}$ of dimension $k$. For $\rm F=\mathbb{C}$ one has
\begin{displaymath}
\text{Gr}(k,m;\mathbb{C})\simeq\dfrac{\text{U}(m)}{\text{U}(k)\times\text{U}(m-k)}
\end{displaymath}
while for $\rm F=\mathbb{R}$ one has
\begin{displaymath}
\text{Gr}(k,m;\mathbb{R})\simeq\dfrac{\text{O}(m)}{\text{O}(k)\times\text{O}(m-k)}
\end{displaymath}
The inclusion $\text{F}^{m}\subset\text{F}^{m+1}$ induces the natural inclusion $\text{Gr}(k,m;\text{F})\subset\text{Gr}(k,m+1;\text{F})$. Taking the inductive limit over such inclusions allows to define the \emph{infinite} Grassmannian
\begin{displaymath}
\text{Gr}(k,\infty;\text{F}):=\lim_{m\to\infty}\text{Gr}(k,m;\text{F})
\end{displaymath}
which can be used as a model for our classifying space \cite{Husemoller}. We will use the notation $\text{BU}(k)$ for $\text{Gr}(k,\infty;\mathbb{C})$, and $\text{BO}(k)$ for $\text{Gr}(k,\infty;\mathbb{R})$.\\
The universal F-vector bundle $\mathbb{E}_{k}\to\text{Gr}(k,\infty;\text{F})$ is given by assigning to a point $x\in\text{Gr}(k,\infty;\text{F})$ the vector space represented by $x$ itself.\\
To represent K-theory we need a limit of the above construction over the vector bundle rank $k$. More precisely, denote with  $\kgroup{'0}{\topsp{X}}$ the kernel of the rank homomorphism (\ref{virtual}). Recall that when $\topsp{X}$ is connected, $\kgroup{'0}{\topsp{X}}$ is isomorphic to $\redk{0}{\topsp{X}}$. In general  one has the decomposition
\begin{equation}
\kgroup{0}{\topsp{X}}\simeq\kgroup{'0}{\topsp{X}}\oplus\text{H}^{0}(\topsp{X};\mathbb{Z})
\end{equation}
with $\text{H}^{0}(\topsp{X};\mathbb{Z})=[\topsp{X},\mathbb{Z}]$.\\
Notice, at this point, that for $[{\rm E}_{k}]\in{{\rm Vect}_{k}^{\rm F}(\topsp{X})}$, we have that $\rm[E_{k}]-[\theta_{k}]\in\kgroup{'0}{\topsp{X}}$. If we define
\begin{displaymath}
{{\rm Vect}_{\infty}^{\rm F}(\topsp{X}):=\lim_{k\to\infty}{\rm Vect}_{k}^{\rm F}(\topsp{X})}
\end{displaymath}
as the inductive limit over the inclusions ${\rm Vect}_{k}^{\rm F}(\topsp{X})\subset {\rm Vect}_{k+1}^{\rm F}(\topsp{X})$ induced by the maps ${\rm E}_{k}\to{{\rm E}_{k}}\oplus\theta_{1}$, one has that the map $[{\rm E}_{k}]\to[{\rm E}_{k}]-[\theta_{k}]$ is a monoid morphism ${{\rm Vect}_{\infty}^{F}(\topsp{X})\to\kgroup{'0}{\topsp{X}}}$, hence ${\rm Vect}_{\infty}^{F}(\topsp{X})$ can be given the structure of an abelian group \cite{KAROUBI}. Finally, as
\begin{displaymath}
\rm Vect_{\infty}^{F}(\topsp{X})=[\topsp{X},\text{BF}_{\infty}]
\end{displaymath}
one has
\begin{displaymath}
\kgroup{0}{\topsp{X}}\simeq{\rm Vect_{\infty}^{F}(\topsp{X})}\oplus[\topsp{X},\mathbb{Z}]=[\topsp{X},\mathbb{Z}\times\text{BF}_{\infty}]
\end{displaymath}
where $\text{BF}_{\infty}:=\bigcup_{k}\text{BF}_{k}$.\\
Moreover one can obtain reduced K-theory for a base pointed space $\topsp{X}$ as
\begin{displaymath}
\redk{0}{\topsp{X}}=[\topsp{X},\mathbb{Z}\times\text{BF}_{\infty}]_{*}
\end{displaymath}
where the base point in $\mathbb{Z}\times\text{BF}_{\infty}$ has been chosen to lie in $0\times\text{BF}_{\infty}$. Consequently, higher relative K-theory can be obtained as
\begin{displaymath}
\kgroup{-n}{\topsp{X},\topsp{Y}}=[\Sigma^{n}(\topsp{X}/\topsp{Y}),\mathbb{Z}\times\text{BF}_{\infty}]_{*}\simeq[\topsp{X}/\topsp{Y},\Omega^{n}(\mathbb{Z}\times\text{BF}_{\infty})]_{*}
\end{displaymath}
where $\Omega^{n}$ is the iterated loop space functor, and we have used the fact that for CW-complexes $\topsp{X},\topsp{Y}$ one has $[\Sigma(\topsp{X}),\topsp{Y}]_{*}=[\topsp{X},\Omega(\topsp{Y})]_{*}$.\\
In particular, if $\topsp{X}$ is a connected space, then
\begin{displaymath}
\redk{0}{\topsp{X}}=[\topsp{X},\text{BF}_{\infty}]_{*}
\end{displaymath}
In later sections we will see that another classifying space for K-theory is given by the space of Fredholm operators on a infinite dimensional Hilbert space.\\
\subsection{Examples: K-theory of spheres and tori}
The reduced K-theory of spheres can be deduced immediately by the Bott periodicity theorems (\ref{complexpoint}) and (\ref{realpoint}): in this section we want to give a geometric view on the Bott periodicity theorems, at least in the complex case. Indeed, this is actually how the Bott theorem originated.\\
Consider the standard $n$-dimensional sphere $\topsp{S}^{n}$. We will illustrate a procedure, called the \emph{clutching construction}, to construct vector bundles $\text{E}\to\topsp{S}^{n}$. Write the $n$-sphere $\topsp{S}^{n}$ as the union of the upper and lower hemispheres ${\rm D}_{+}^{n}$ and ${\rm D}_{-}^{n}$, such that ${\rm D}^{n}_{+}\cap{{\rm D}_{-}^{n}}\simeq{\text{S}^{n-1}}$. Given a continuous map $f:{\rm S}^{n-1}\to{{\rm GL}_{k}(\mathbb{C})}$, we can define the total space of a complex vector bundle $\text{E}_{f}$ of rank $k$ as
\begin{displaymath}
\text{E}_{f}:={{\rm D}^{n}_{+}\times\mathbb{C}^{k}\coprod {\rm D}^{n}_{-}\times\mathbb{C}^{k}}/\sim
\end{displaymath}
where the identification $\sim$ is between $(x,v)\in\partial{{\rm D}^{n}_{+}\times\mathbb{C}^{k}}$ and $(x,f(x)v)\in\partial{{\rm D}^{n}_{-}\times\mathbb{C}^{k}}$, and the projection $\text{E}_{f}\to\text{S}^{n}$ is the obvious one. Essentially, the clutching construction is a special case of the gluing construction for locally trivial vector bundles, which one can see by considering a ``small'' strip $\text{S}^{n}\times\{-\epsilon,\epsilon\}$, and using the function $f$ on each slice $\text{S}^{n}\times\{t\}$. Moreover, as the structure group for a complex vector bundle of rank $k$ can always be reduced to the unitary group $\text{U}(k)$, upon the introduction of a Hermitian metric, the function $f$ can always be chosen to be a continuous map $f:\text{S}^{n-1}\to\text{U}(k)$.\\
A basic property of the clutching construction is that $\text{E}_{f}\simeq{\text{E}_{g}}$ if $f$ and $g$ are homotopic maps. This allows to define a map
\begin{displaymath}
\Phi:[\text{S}^{n-1},\text{U}(k)]\to\text{Vect}_{k}(\text{S}^{n})
\end{displaymath}
A fundamental result is the following \cite{KAROUBI}
\begin{theorem}
The map $\Phi$ is a bijection.
\end{theorem}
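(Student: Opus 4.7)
The plan is to show bijectivity by treating surjectivity and injectivity separately, using in both cases the contractibility of the hemispheres ${\rm D}^{n}_{\pm}$. The well-definedness of $\Phi$ on homotopy classes has already been noted as a basic property of the clutching construction, so I may take it for granted.

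For surjectivity, I would start with an arbitrary rank $k$ complex vector bundle $\text{E}\to\text{S}^{n}$. Since each hemisphere ${\rm D}^{n}_{\pm}$ is contractible and any complex vector bundle over a contractible paracompact space is trivial, there exist trivializations $\tau_{\pm}\colon\text{E}|_{{\rm D}^{n}_{\pm}}\xrightarrow{\simeq}{\rm D}^{n}_{\pm}\times\mathbb{C}^{k}$. Restricting to the equator $\text{S}^{n-1}={\rm D}^{n}_{+}\cap{\rm D}^{n}_{-}$, the composition $\tau_{-}\circ\tau_{+}^{-1}$ defines a continuous map $\tilde{f}\colon\text{S}^{n-1}\to{\rm GL}_{k}(\mathbb{C})$. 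After fixing a Hermitian metric on $\text{E}$ and using the deformation retraction ${\rm GL}_{k}(\mathbb{C})\simeq{\rm U}(k)$, I may arrange for the clutching function to take values in ${\rm U}(k)$, producing $f\colon\text{S}^{n-1}\to{\rm U}(k)$. By construction $\text{E}\simeq\text{E}_{f}$, so $[f]$ maps to $[\text{E}]$ under $\Phi$.

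For injectivity, suppose $\Phi([f])=\Phi([g])$, so that there is a bundle isomorphism $\Psi\colon\text{E}_{f}\xrightarrow{\simeq}\text{E}_{g}$. Using the canonical trivializations of $\text{E}_{f}$ and $\text{E}_{g}$ on each hemisphere built into the clutching construction, $\Psi$ restricts to bundle automorphisms of the trivial bundles ${\rm D}^{n}_{\pm}\times\mathbb{C}^{k}$, hence to continuous maps $\psi_{\pm}\colon{\rm D}^{n}_{\pm}\to{\rm U}(k)$ (again using the unitary reduction). Comparing the two gluing relations on the equator, the compatibility of $\Psi$ with the identifications $(x,v)\sim(x,f(x)v)$ and $(x,v)\sim(x,g(x)v)$ yields
\eq
g(x)=\psi_{-}(x)\,f(x)\,\psi_{+}(x)^{-1}\qquad\text{for all }x\in\text{S}^{n-1}.
\eqend
Since each ${\rm D}^{n}_{\pm}$ is contractible, the restrictions $\psi_{\pm}|_{\text{S}^{n-1}}$ are null-homotopic in ${\rm U}(k)$; choose null-homotopies $H_{\pm}\colon\text{S}^{n-1}\times[0,1]\to{\rm U}(k)$ with $H_{\pm}(x,0)=\psi_{\pm}(x)$ and $H_{\pm}(x,1)=I$. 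Then
\eq
H(x,t):=H_{-}(x,t)\,f(x)\,H_{+}(x,t)^{-1}
\eqend
is a continuous homotopy with $H(\,\cdot\,,0)=g$ and $H(\,\cdot\,,1)=f$, so $[f]=[g]$ in $[\text{S}^{n-1},{\rm U}(k)]$.

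The only genuine subtlety is the consistent choice of unitary structure needed to land in ${\rm U}(k)$ rather than merely ${\rm GL}_{k}(\mathbb{C})$; this is harmless because the inclusion ${\rm U}(k)\hookrightarrow{\rm GL}_{k}(\mathbb{C})$ is a homotopy equivalence, so the bijection with $[\text{S}^{n-1},{\rm GL}_{k}(\mathbb{C})]$ is unaffected. Beyond that, the argument is essentially a direct application of the fact that bundles trivialize over contractible bases and that bundle isomorphisms over such bases are maps into the structure group.
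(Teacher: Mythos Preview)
Your argument is correct and is essentially the standard proof of this classical fact. The paper itself does not give a proof of this theorem; it simply states it as a fundamental result and refers to \cite{KAROUBI}. Your approach---trivializing over the contractible hemispheres for surjectivity, and using the null-homotopy of the $\psi_{\pm}$ (since they extend over disks) to build an explicit homotopy $g\simeq f$ for injectivity---is exactly the proof one finds in Karoubi or any comparable reference, so there is nothing to contrast.
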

In other words, \emph{all} complex vector bundles on spheres are obtained up to isomorphism by the clutching construction. By taking inductive limits, we finally obtain
\begin{displaymath}
\text{Vect}_{\infty}(\text{S}^{n})=[\text{S}^{n-1},\text{U}(\infty)]
\end{displaymath}
where $\text{U}(\infty)$ denotes the \emph{infinite} unitary group.\\ 
Hence we have
\begin{equation}
\redk{0}{\topsp{S}^{n}}=\pi_{n-1}\left(\text{U}(\infty)\right)
\end{equation}
Because $\kgroup{-n}{\rm pt}\simeq\redk{0}{\text{S}^{n}}$, we see that the Bott periodicity theorem is a statement about the homotopy of the infinite unitary group: indeed, this is how it was originally stated\footnote{To be precise, the statement was on the homotopy groups of the unitary groups $\text{U}(k)$ in the \emph{stable range}.} in \cite{Bott}.\\
The real case requires some modifications in the clutching construction, and, as it is expected from the structure of $\kogroup{*}{\rm pt}$, the relation with the orthogonal groups is more complicated.\\
Notice, also, that as a topological invariant for spheres, K-theory is a very coarse one, as it cannot detect the sphere dimensionality, in contrast to ordinary cohomology.\\
The case for tori, instead, is quite different. To compute the complex K-theory groups for tori, we will use the following result \cite{Atiyah1967}: given two base pointed spaces $\topsp{X}$ and $\topsp{Y}$ we have the following isomorphism
\begin{equation}\label{iso}
\redk{-n}{\topsp{X}\times\topsp{Y}}\simeq\redk{-n}{\topsp{X}\wedge\topsp{Y}}\oplus\redk{-n}{\topsp{X}}\oplus\redk{-n}{\topsp{Y}}
\end{equation}
We now specialize (\ref{iso}) to the case in which $\topsp{Y}=\topsp{S}^{1}$, with the usual base point. In this case, we have that
\begin{displaymath}
\begin{array}{rl}
\redk{-n}{\topsp{X}\times\topsp{S}^{1}}&\simeq\redk{-n}{\topsp{X}\wedge\topsp{S}^{1}}\oplus\redk{-n}{\topsp{X}}\oplus\redk{-n}{\topsp{S}^{1}}\\
&\simeq\redk{-(n+1)}{\topsp{X}}\oplus\redk{-n}{\topsp{X}}\oplus\redk{-(n+1)}{\{\topsp{pt}\}}
\end{array}
\end{displaymath}
where we have used Bott periodicity.\\
Recalling that for odd $n$ the reduced and unreduced K-theory groups coincide, we have that
\begin{equation}\label{relations}
\begin{array}{rl}
\kgroup{-2n}{\topsp{X}\times\topsp{S}^{1}}&\simeq{\kgroup{-1}{\topsp{X}}}\oplus{\kgroup{0}{\topsp{X}}}\\
\kgroup{-(2n+1)}{\topsp{X}\times\topsp{S}^{1}}&\simeq{\kgroup{0}{\topsp{X}}}\oplus{\kgroup{-1}{\topsp{X}}}
\end{array}
\end{equation}
Let $\topsp{T}^{n}\simeq\topsp{S}^{1}\times\dots\times\topsp{S}^{1}$ be the standard n-dimensional torus. The K-theory for $\topsp{T}^{n}$ can be computed using isomorphism (\ref{relations}) by induction with base $n=2$.\\
Indeed, we have that
\begin{displaymath}
\kgroup{0}{\topsp{T}^{2}}\simeq\kgroup{-1}{\topsp{T}^{2}}\simeq\kgroup{0}{\topsp{S}^{1}\times\topsp{S}^{1}}\simeq{\mathbb{Z}\oplus\mathbb{Z}}
\end{displaymath}
As $\topsp{T}^{n}\simeq\topsp{T}^{n-1}\times\topsp{S}^{1}$, we have
\begin{displaymath}
\kgroup{0}{\topsp{T}^{n}}\simeq\kgroup{-1}{\topsp{T}^{n}}\simeq\kgroup{-1}{\topsp{T}^{n-1}}\oplus\kgroup{0}{\topsp{T}^{n-1}}
\end{displaymath}
Then, by induction we find that there is the following non-canonical isomorphism for $n\geq{2}$
\begin{displaymath}
\kgroup{0}{\topsp{T}^{n}}\simeq\kgroup{-1}{\topsp{T}^{n}}\simeq{}\mathbb{Z}^{2^{(n-1)}}
\end{displaymath}
Hence, for tori the K-theory groups can indeed detect their dimension.
\section{The Atiyah-Bott-Shapiro isomorphism}
In this section we will introduce the \emph{Atiyah-Bott-Shapiro} (ABS) isomorphism, which will give explicit representatives for the generators of the rings $\kgroup{*}{\{\rm pt\}}$ and $\kogroup{*}{\{\rm pt\}}$ via the so called ``difference bundle construction''. More importantly, the ABS isomorphism relates complex and real Clifford algebras to K-theory: such a relation is somehow expected, given that the periodicity of K-theory is similar to the periodicity of Clifford algebras. The main reference is the seminal paper \cite{ABS}; moreover, we will refer to {\appclifford} for the basic notions of Clifford algebras.\\[2mm]
\indent To construct the ABS isomorphism, we need a reformulation of the relative K-theory group $\kgroup{0}{\topsp{X},\topsp{Y}}$, which will also prove useful in relation to elliptic operators, and in the description of D-branes, in particular in type IIA String theory. Again, this reformulation is due to \cite{ABS}, and there is no difference between the complex and real case.\\
\begin{definition}
Let $\topsp{X}$ and $\topsp{Y}$ be CW-complexes of finite type. For $n\geq{1}$, denote with $\mathcal{L}_{n}(\topsp{X},\topsp{Y})$ the set of elements $\rm \textbf{E}=(E_{0},E_{1},\cdots,E_{{\it n}};\sigma_{1},\sigma_{2},\cdots,\sigma_{\it n})$, where $\rm E_{\it i}$ is a vector bundle over $\topsp{X}$, $\sigma_{\it i}:{\rm E_{{\it i}-1}|_{\topsp{Y}}\to E_{\it i}|_{\topsp{Y}}}$ is a bundle morphism defined on $\topsp{Y}$, such that
\begin{displaymath}
{\rm 0\to{E_{0}|_{\topsp{Y}}}\xrightarrow{\sigma_{1}}{E_{1}|_{\topsp{Y}}}\xrightarrow{\sigma_{2}}\dots\xrightarrow{\sigma_{\it n}}{E_{\it n}}|_{\topsp{Y}}\to{0}}
\end{displaymath}
is an exact sequence of vector bundles.\\
We will say that two such elements $\rm \textbf{E}$ and $\rm \textbf{E}^{'}$ are \emph{isomorphic} if there are bundle isomorphisms $\varphi_{i}:\rm E_{\it i}\to{E_{\it i}^{'}}$ over $\topsp{X}$ such that the diagram
\begin{displaymath}
\xymatrix{\rm E_{{\it i}-1}|_{\topsp{Y}}\ar[r]^-{\sigma_{\it i}}\ar[d]_{\varphi_{{\it i}-1}} & \rm E_{\it i}|_{\topsp{Y}}\ar[d]^{\varphi_{\it i}}\\
{\rm E^{'}_{{\it i}-1}|_{\topsp{Y}}}\ar[r]^{\sigma^{'}_{\it i}} & \rm E^{'}_{\it i}|_{\topsp{Y}}
}
\end{displaymath}
commutes for every $i$.\\
Finally, an element $\rm \textbf{E}=(E_{0},E_{1},\cdots,E_{\it n};\sigma_{1},\sigma_{2},\cdots,\sigma_{\it n})$ is said to be \emph{elementary} if there is an $i$ such that
\begin{itemize}
\item[a)]$\text{E}_{i}=\text{E}_{i-1}$ and $\sigma_{i}=\text{id}$\\[-10mm]
\item[b)]$\text{E}_{j}=\{0\}$, for $j\neq{i}$ or $i-1$
\end{itemize}
\end{definition}
The Whitney sum $\oplus$ of vector bundles induces naturally an operation on $\mathcal{L}_{n}(\topsp{X},\topsp{Y})$. We define the equivalence relation $\sim$ on $\mathcal{L}_{n}(\topsp{X},\topsp{Y})$ generated by isomorphisms and and addition of elementary elements. Namely, we will say that two elements $\rm \textbf{E},\textbf{E}^{'}$ are equivalent if there are elementary elements $\textbf{P}_{1},\textbf{P}_{2},\ldots,\textbf{P}_{k},\textbf{Q}_{1},\textbf{Q}_{2},\ldots,\textbf{Q}_{l}$ in $\mathcal{L}_{n}(\topsp{X},\topsp{Y})$ such that
\begin{displaymath}
\textbf{E}\oplus\textbf{P}_{1}\oplus\dots\oplus\textbf{P}_{k}\simeq\textbf{E}^{'}\oplus\textbf{Q}_{1}\oplus\dots\oplus\textbf{Q}_{l}
\end{displaymath}
The set of all equivalences classes in $\mathcal{L}_{n}(\topsp{X},\topsp{Y})$ under $\sim$ is denoted by $\text{L}_{n}(\topsp{X},\topsp{Y})$, and is an abelian group under the operation $\oplus$. Moreover, if $\topsp{Y}=\O$, we will use the notation $\text{L}_{n}(\topsp{X})$.\\
Consider the natural map $\text{L}_{n}(\topsp{X},\topsp{Y})\to\text{L}_{n+1}(\topsp{X},\topsp{Y})$ which associates to the element $\rm (E_{0},E_{1},\cdots,E_{\it n};\sigma_{1},\sigma_{2},\cdots,\sigma_{\it n})$ the element $\rm(E_{0},E_{1},\cdots,E_{n},0;\sigma_{1},\sigma_{2},\cdots,\sigma_{n},0)$.\\
We refer to \cite{ABS} for the proof of the following fundamental result
\begin{proposition}
For each $n\geq{1}$, the induced map $\text{L}_{n}(\topsp{X},\topsp{Y})\to\text{L}_{n+1}(\topsp{X},\topsp{Y})$ is an isomorphism.
\end{proposition}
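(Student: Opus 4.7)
The strategy is to show that the stabilization map $\phi_n\colon\text{L}_n(\topsp{X},\topsp{Y})\to\text{L}_{n+1}(\topsp{X},\topsp{Y})$ sending $\mathbf{E}\mapsto(E_0,\ldots,E_n,0;\sigma_1,\ldots,\sigma_n,0)$ is both injective and surjective. Both directions hinge on a careful analysis of the two distinct kinds of elementary elements that can appear at the tail of a length-$(n+1)$ complex: those supported entirely in positions $0,\ldots,n$, which are pulled back from $\mathcal{L}_n$, and those supported in positions $n,n+1$, which are the only new phenomenon.

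For surjectivity, I would take a representative $\mathbf{E}=(E_0,\ldots,E_{n+1};\sigma_1,\ldots,\sigma_{n+1})\in\mathcal{L}_{n+1}(\topsp{X},\topsp{Y})$. Exactness on $\topsp{Y}$ forces $\sigma_{n+1}|_{\topsp{Y}}$ to be surjective, so the short exact sequence
\[
0 \longrightarrow \ker(\sigma_{n+1}|_{\topsp{Y}}) \longrightarrow E_n|_{\topsp{Y}} \xrightarrow{\sigma_{n+1}} E_{n+1}|_{\topsp{Y}} \longrightarrow 0
\]
admits a splitting $s\colon E_{n+1}|_{\topsp{Y}}\to E_n|_{\topsp{Y}}$. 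Invoking the extension property for vector bundles over the CW pair $(\topsp{X},\topsp{Y})$, I produce a bundle $F$ on $\topsp{X}$ together with an isomorphism $\alpha\colon F|_{\topsp{Y}}\xrightarrow{\simeq}E_{n+1}|_{\topsp{Y}}$. Adding to $\mathbf{E}$ the elementary element $\mathbf{P}$ supported at positions $n,n+1$ with bundle $F$ and identity map gives a representative whose last two bundles are $E_n\oplus F$ and $E_{n+1}\oplus F$ with map $\sigma_{n+1}\oplus\mathrm{id}$ on $\topsp{Y}$. The pair $(s,\alpha)$ then assembles into an automorphism of the relevant bundles on $\topsp{Y}$ that rewrites this as an internal direct sum of the elementary pair $(F,F;\mathrm{id})$ and a residual pair $(E_n,0;0)$. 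Discarding the elementary summand and truncating yields an element of $\mathcal{L}_n(\topsp{X},\topsp{Y})$ whose image under $\phi_n$ is equivalent to $\mathbf{E}$.

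For injectivity, suppose $\phi_n(\mathbf{E})\sim\phi_n(\mathbf{E}^{\prime})$ is witnessed by an isomorphism
\[
\phi_n(\mathbf{E})\oplus\mathbf{P}_1\oplus\cdots\oplus\mathbf{P}_r \;\simeq\; \phi_n(\mathbf{E}^{\prime})\oplus\mathbf{Q}_1\oplus\cdots\oplus\mathbf{Q}_s
\]
with all $\mathbf{P}_a,\mathbf{Q}_b$ elementary in $\mathcal{L}_{n+1}$. I would partition these by the position $i\in\{1,\ldots,n+1\}$ at which each is concentrated: those with $i\le n$ involve only positions $0,\ldots,n$ and are themselves $\phi_n$-images of elementaries in $\mathcal{L}_n$, while those with $i=n+1$ contribute only to positions $n,n+1$. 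Restricting the given isomorphism to position $n+1$---where $\phi_n(\mathbf{E})$ and $\phi_n(\mathbf{E}^{\prime})$ vanish---produces an isomorphism between the direct sums of the type-$(n+1)$ bundles on the two sides, and the identity-map condition at position $n+1$ propagates this identification to position $n$, allowing the matching type-$(n+1)$ pieces on each side to be cancelled. The residual equivalence involves only positions $0,\ldots,n$ and elementaries of type $\le n$; pulling back through $\phi_n$ gives an equivalence $\mathbf{E}\sim\mathbf{E}^{\prime}$ in $\text{L}_n(\topsp{X},\topsp{Y})$.

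The main technical hurdle is the surjectivity step, where the $\topsp{Y}$-local splitting $s$ must be leveraged into a single global modification of $\mathbf{E}$ by an elementary element on $\topsp{X}$. This relies on the extension of vector bundles from the subcomplex $\topsp{Y}$ to $\topsp{X}$ and on the contractibility of the space of splittings of a surjective bundle morphism over a paracompact base---both standard for compact CW pairs, but both needing precise deployment so that the bookkeeping of elementary summands cleanly isolates an elementary piece for discarding.
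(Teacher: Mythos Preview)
The paper does not prove this proposition at all; it simply refers to \cite{ABS}. So your attempt must be judged on its own, and there are two concrete gaps.

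\textbf{Surjectivity.} The decomposition you claim is impossible on rank grounds. After adding the elementary $\mathbf{P}$ with bundle $F$ at positions $(n,n+1)$, the bundle sitting at position $n+1$ is $E_{n+1}\oplus F$; your proposed splitting into $(E_n,0;0)\oplus(F,F;\mathrm{id})$ has only $F$ there, so a whole copy of $E_{n+1}$ has vanished. More fundamentally, you build an automorphism ``of the relevant bundles on $\topsp{Y}$'', but the definition of isomorphism in $\mathcal{L}_{n+1}(\topsp{X},\topsp{Y})$ demands bundle isomorphisms $\varphi_i$ over all of $\topsp{X}$ (only the commuting squares are required merely over $\topsp{Y}$). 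The object that needs extending from $\topsp{Y}$ to $\topsp{X}$ is not the bundle $E_{n+1}|_{\topsp{Y}}$ --- that is already the restriction of $E_{n+1}$ --- but the \emph{splitting map} $s$, viewed as a section of the vector bundle $\mathrm{Hom}(E_{n+1},E_n)$. Once you have a global $\tilde s\colon E_{n+1}\to E_n$ (just a bundle map, no longer a splitting away from $\topsp{Y}$), you can manufacture honest automorphisms over $\topsp{X}$; the standard argument then folds $E_{n+1}$ into degree $n-1$ rather than trying to kill it in place.

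\textbf{Injectivity.} Knowing that $\varphi_{n+1}$ identifies the type-$(n+1)$ summands on the two sides does not let you ``cancel'' them from $\varphi_n$. Write $\varphi_n=\begin{pmatrix}\alpha&\beta\\ \gamma&\delta\end{pmatrix}$ relative to the decomposition into type-$\le n$ and type-$(n+1)$ parts. The commuting square at step $n\to n+1$ only forces $\gamma|_{\topsp{Y}}=0$ and $\delta|_{\topsp{Y}}=\mathrm{id}$; it says nothing about $\gamma,\delta$ over $\topsp{X}\setminus\topsp{Y}$, and there is no reason for $\alpha$ to be invertible over $\topsp{X}$. To descend to an equivalence in $\mathrm{L}_n$ you must add further elementaries in $\mathcal{L}_n$ (e.g.\ at positions $(n-1,n)$ with bundle $P$) and produce explicit isomorphisms over $\topsp{X}$; this is precisely the step your sketch elides.
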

Hence we can focus on the group $\text{L}_{1}(\topsp{X},\topsp{Y})$, whose elements are given by triples $\rm \textbf{E}~=~[E_{0},E_{1};\sigma]$.\\
The \emph{difference bundle construction} allows to associate to any triple \textbf{E} an element $\chi(\textbf{E})$ in $\kgroup{0}{\topsp{X},\topsp{Y}}$ in the following way.\\
First, set $\topsp{X}_{k}=\topsp{X}\times\{k\}$, for $k=0,1$, and consider the space $\topsp{A}=\topsp{X}_{0}\cup_{\topsp{Y}}\topsp{X}_{1}$, obtained from the disjoint union $\topsp{X}_{0}\coprod\topsp{X}_{1}$ by identifying $y\times{0}$ with $y\times{1}$ for any $y\in\topsp{Y}$.\\
Notice that the map
\begin{displaymath}
\rho:\topsp{A}\to\topsp{X}_{1}
\end{displaymath}
is a retraction, i.e $i\circ\rho=id$, $i:\topsp{X}_{1}\hookrightarrow{\topsp{A}}$. In this case, the exact sequence (\ref{exactseq}) becomes the split short exact sequence \cite{Atiyah1967}
\begin{equation}\label{seq}
0\to\kgroup{0}{\topsp{A},\topsp{X}_{1}}\xrightarrow{}\kgroup{0}{\topsp{A}}\xrightarrow{i^{*}}\kgroup{0}{\topsp{X}_{1}}\to{0}
\end{equation}
Moreover, the relative map $(\topsp{X},\topsp{Y})\to(\topsp{A},\topsp{X}_{1})$ which identifies $\topsp{X}$ with $\topsp{X}_{0}$ induces an isomorphism $\varphi:\kgroup{0}{\topsp{A},\topsp{X}_{1}}\xrightarrow{\simeq}\kgroup{0}{\topsp{X},\topsp{Y}}$.\\
Now, from the element $\rm \textbf{E}=[E_{0},E_{1};\sigma]$ we construct (up to isomorphism) a vector bundle over $\topsp{A}$ by setting $\text{F}|_{\topsp{X}_{k}}:=\text{E}_{k}$, and using $\sigma$ to identify over $\topsp{Y}$. Recall, at this point, that the map $i^{*}$ is given essentially by restricting vector bundles from $\topsp{A}$ to $\topsp{X}_{1}$. Hence, setting $\rm F_{1}:=\rho^{*}E_{1}$, the class $\rm [F]-[F_{1}]$ is in $\rm ker({\it i}^{*})\subset{\kgroup{0}{\topsp{A}}}$. By (\ref{seq}), there exists a unique element $\chi(\textbf{E})\in\kgroup{0}{\topsp{X},\topsp{Y}}$ such that
\begin{displaymath}
\pi^{*}\varphi^{-1}\chi(\textbf{E})=\rm [F]-[F_{1}]
\end{displaymath}
In this way we have defined a homomorphism $\chi:\text{L}_{1}(\topsp{X},\topsp{Y})\to\kgroup{0}{\topsp{X},\topsp{Y}}$.\\
The following result allows the desired reformulation of relative K-theory \cite{ABS}
\begin{proposition}
The map $\chi$ is an isomorphism.
\end{proposition}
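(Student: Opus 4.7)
The plan is to construct an explicit two-sided inverse $\psi:\kgroup{0}{\topsp{X},\topsp{Y}}\to\text{L}_{1}(\topsp{X},\topsp{Y})$ to $\chi$ using the split short exact sequence (\ref{seq}), so that injectivity and surjectivity follow simultaneously.

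The first step is to verify that $\chi$ is well defined on equivalence classes. For isomorphic triples this is immediate from the functoriality of pullback and of the clutching construction used to assemble the bundle ${\rm F}$ on $\topsp{A}$. For an elementary triple ${\bf P}=({\rm E},{\rm E};{\rm id})$, the gluing of ${\rm E}$ on $\topsp{X}_{0}$ to ${\rm E}$ on $\topsp{X}_{1}$ via the identity over $\topsp{Y}$ is canonically isomorphic to $\rho^{*}{\rm E}$ on $\topsp{A}$, so ${\rm [F]}-[\rho^{*}{\rm E}]=0$ in $\kgroup{0}{\topsp{A}}$ and hence $\chi({\bf P})=0$. Additivity of $\chi$ under $\oplus$ then descends it to a homomorphism on $\text{L}_{1}(\topsp{X},\topsp{Y})$.

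To construct $\psi$, take $\xi\in\kgroup{0}{\topsp{X},\topsp{Y}}$ and, using the isomorphism $\varphi$ together with the splitting of (\ref{seq}), lift $\xi$ to a class $[{\rm G}]-[\rho^{*}{\rm H}]\in\ker(i^{*})\subset\kgroup{0}{\topsp{A}}$, so that ${\rm G}|_{\topsp{X}_{1}}$ is stably isomorphic to ${\rm H}$. After adding a sufficiently large trivial bundle $\theta_{n}$, I may arrange an honest isomorphism ${\rm G}|_{\topsp{X}_{1}}\cong{\rm H}\oplus\theta_{n}$. Setting ${\rm E}_{0}:={\rm G}|_{\topsp{X}_{0}}$, ${\rm E}_{1}:={\rm H}\oplus\theta_{n}$, and $\sigma:{\rm E}_{0}|_{\topsp{Y}}\to{\rm E}_{1}|_{\topsp{Y}}$ the identification inherited from the bundle structure of ${\rm G}$ across the gluing locus $\topsp{Y}$, I obtain a triple whose class I declare to be $\psi(\xi)$. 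That $\chi\circ\psi=\text{id}$ is then essentially tautological from the definition of the difference bundle, while $\psi\circ\chi$ sends a representative $({\rm E}_{0},{\rm E}_{1};\sigma)$ back to itself, since the assembled bundle on $\topsp{A}$ restricts to ${\rm E}_{k}$ on each copy of $\topsp{X}$ with clutching data $\sigma$ on $\topsp{Y}$.

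The main obstacle will be proving that $\psi$ is well defined modulo the equivalence relation on $\text{L}_{1}(\topsp{X},\topsp{Y})$: different lifts of $\xi$, different stabilizations, and different choices of the explicit isomorphism ${\rm G}|_{\topsp{X}_{1}}\cong{\rm H}\oplus\theta_{n}$ must all yield equivalent triples. The key technical point is that any stable isomorphism of two assembled bundles on $\topsp{A}$ which is compatible with the prescribed trivializations on $\topsp{X}_{1}$ can be translated into a homotopy of clutching data over $\topsp{Y}$; such a homotopy, combined with the homotopy invariance of the clutching construction and suitable stabilization, is precisely what the elementary triples exist to absorb. Once this verification is in hand, $\psi$ is a well defined two-sided inverse to $\chi$, and the proposition follows.
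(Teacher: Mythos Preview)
Your outline is the standard Atiyah--Bott--Shapiro approach and is sound in spirit. Note that the paper itself does not prove the general relative case: it only treats $\topsp{Y}=\emptyset$, where $\chi([\topsp{E}_0,\topsp{E}_1])=[\topsp{E}_0]-[\topsp{E}_1]$ makes surjectivity obvious and injectivity a one-line stabilization argument, and it defers the pair version to~\cite{ABS}. So you are attempting more than the paper does.

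Two points deserve tightening. First, the stabilization step is slightly misstated: from $[\topsp{G}|_{\topsp{X}_1}]=[\topsp{H}]$ you get $\topsp{G}|_{\topsp{X}_1}\oplus\theta_n\cong \topsp{H}\oplus\theta_n$, not $\topsp{G}|_{\topsp{X}_1}\cong \topsp{H}\oplus\theta_n$; so you must replace $\topsp{G}$ by $\topsp{G}\oplus\theta_n$ and $\topsp{H}$ by $\topsp{H}\oplus\theta_n$ simultaneously before defining $\topsp{E}_0,\topsp{E}_1$. Second, the claim that $\psi\circ\chi$ returns a triple ``to itself'' is too strong: it returns a triple built from a particular choice of isomorphism $\topsp{F}|_{\topsp{X}_1}\cong \topsp{E}_1$ (after stabilization), and showing this is equivalent to the original $(\topsp{E}_0,\topsp{E}_1;\sigma)$ is exactly the same well-definedness problem you flag for $\psi$. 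The real content of the proof is therefore concentrated in that one lemma --- that two triples whose associated clutched bundles on $\topsp{A}$ are stably isomorphic rel $\topsp{X}_1$ are equivalent in $\text{L}_1(\topsp{X},\topsp{Y})$ --- and you have correctly identified it, but you should be aware that this step needs an explicit homotopy-of-clutching-functions argument and is not automatic.
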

The above proposition is easy to prove in the case in which $\topsp{Y}=\O$. In this case, indeed, the map $\chi$ satisfies
\begin{displaymath}
\rm \chi([E_{0},E_{1}])=[E_{0}]-[E_{1}]
\end{displaymath}
The surjectivity of $\chi$ is obvious.\\
 Suppose that $\rm \chi([E_{0},E_{1}])=0$: then there exists a vector bundle $\rm G$ such that\\ $\rm E_{0}\oplus{G}\simeq{E_{1}\oplus{G}}$. Hence the element $\textbf{E}\oplus\textbf{G}$, where $\textbf{G}$ is the elementary sequence defined by G, is isomorphic to the elementary sequence defined by $\rm E_{1}\oplus{G}$, and hence it represents 0 in $\text{L}_{1}(\topsp{X})$.\\[2mm]
\indent With this reformulation of relative K-theory, we can describe the ABS isomorphism. Again, we refer the reader to {\appclifford} for the relevant notions of modules of Clifford algebras.\\
Let $\topsp{D}^{n}$ denote the unit disk in $\mathbb{R}^{n}$, and $\topsp{S}^{n-1}$ the boundary $\partial\topsp{D}^{n}$.\\
Let $\rm W=W^{0}\oplus{W^{1}}$ be a $\mathbb{Z}_{2}$-graded  module over the Clifford algebra $\Cl_{n}:=\Cl(\mathbb{R}^{n})$. To the module $\rm W$ we can associate the element
\begin{equation}\label{module}
\varphi(\rm W)=[E_{0},E_{1};\mu]\in\kgroup{0}{\topsp{D}^{\it n},\topsp{S}^{{\it n}-1}}
\end{equation}
where $\rm E_{\it k}:=D^{\it n}\times{W^{\it k}}$, and $\mu$ is the bundle isomorphism $\rm E_{0}\to{E_{1}}$ defined over $\topsp{S}^{n-1}$ by Clifford multiplication
\begin{displaymath}
\mu(x,v):=(x,x\cdot{v})
\end{displaymath}
As the element $\varphi(\rm W)$ depends only on the isomorphism class of W, and the map $\rm W\to\varphi(W)$ is an additive homomorphism, it follows that (\ref{module}) induces a homomorphism
\begin{equation}\label{quasiABS}
\varphi:\cliffmod{\mathbb{C}}_{n}\to\kgroup{0}{\topsp{D}^{n},\topsp{S}^{n-1}}
\end{equation}
where $\cliffmod{\mathbb{C}}_{n}$ is the abelian group freely generated by the irreducible complex graded $\Cl_{n}$-modules.\\
Consider now the homomorphism $i^{*}:\cliffmod{\mathbb{C}}_{n+1}\to\cliffmod{\mathbb{C}}_{n}$ induced by restricting the action from $\Cl_{n+1}$ to $\Cl_{n}$. Now, let W be a graded $\Cl_{n}$-module obtained from a $\Cl_{n+1}$-module by restriction. Then, the isomorphism $\mu$  defined on $\topsp{S}^{n-1}$ can be extended to all of $\topsp{D}^{n}$ by setting
\begin{displaymath}
\tilde{\mu}(x,v):=(x,(x+\sqrt{1-||x||^{2}}\:e_{n+1})\cdot{v})
\end{displaymath}
where $e_{n+1}\in\mathbb{R}^{n+1}$ is a unit vector orthogonal to $\mathbb{R}^{\it n}$. As $\rm E_{0}$ and $\rm E_{1}$ are isomorphic bundles over $\rm D^{\it n}$, the element $\rm [E_{0},E_{1};\mu]$ is 0 in $\kgroup{0}{\topsp{D}^{n},\topsp{S}^{n-1}}$.\\
Hence, the map (\ref{quasiABS}) descends to the homomorphism
\begin{equation}\label{ABSc}
\varphi_{n}:\cliffmod{\mathbb{C}}_{n}/i^{*}\cliffmod{\mathbb{C}}_{n+1}\to\kgroup{0}{\topsp{D}^{n},\topsp{S}^{n-1}}
\end{equation}
In complete analogy, in the real case we have the homomorphism
\begin{equation}\label{ABSr}
\varphi^{\mathbb{R}}_{n}:\cliffmod{}_{n}/i^{*}\cliffmod{}_{n+1}\to\kogroup{0}{\topsp{D}^{n},\topsp{S}^{n-1}}
\end{equation}
Recalling that
\begin{displaymath}
\kgroup{0}{\topsp{D}^{n},\topsp{S}^{n-1}}=\redk{0}{\topsp{D}^{n}/\topsp{S}^{n-1}}\simeq\redk{0}{\topsp{S}^{n}}
\end{displaymath}
Denote with
\begin{displaymath}
\cliffmod{\mathbb{C}}_{*}/i^{*}\cliffmod{\mathbb{C}}_{*+1}:=\bigoplus_{n\geq{0}}\cliffmod{\mathbb{C}}_{n}/i^{*}\cliffmod{\mathbb{C}}_{n+1}
\end{displaymath}
the graded ring induced by the graded tensor product of Clifford modules, and the same for the real case.\\
Finally, we can state the following fundamental result in K-theory
\begin{theorem}\textit{\textbf{Atiyah-Bott-Shapiro Isomorphisms.}} The maps $\varphi_{n}$ and $\varphi^{\mathbb{R}}_{n}$ induce graded ring isomorphisms
\begin{displaymath}
\begin{array}{l}
\varphi_{*}:\cliffmod{\mathbb{C}}_{*}/i^{*}\cliffmod{\mathbb{C}}_{*+1}\to\kgroup{-*}{\{\rm pt\}}\\
\varphi^{\mathbb{R}}_{*}:\cliffmod{}_{*}/i^{*}\cliffmod{}_{*+1}\to\kogroup{-*}{\{\rm pt\}}
\end{array}
\end{displaymath}
\end{theorem}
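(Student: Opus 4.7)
The plan is to check three things in turn: $\varphi_*$ is well-defined, it is a graded ring homomorphism, and it is a bijection. The well-definedness is already essentially contained in the discussion preceding the statement, where the extension of the Clifford clutching morphism $\mu$ to $\tilde\mu$ on all of $\topsp{D}^n$ (via the extra generator $e_{n+1}$) shows that modules coming from a $\Cl_{n+1}$-structure give trivial triples. Bijectivity will then be reduced, using the ring homomorphism property, to a check on a small set of ring generators, where one explicitly matches Clifford-algebraic generators with the Bott generators on the $\rm K$-theory side.

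For the ring homomorphism property, on the Clifford side the product is induced by the graded tensor product $W\hat\otimes V$ of modules together with the identification $\Cl_n\hat\otimes\Cl_m\cong\Cl_{n+m}$; on the $\rm K$-theory side, the product is the external cup product combined with the identification $\topsp{S}^n\wedge\topsp{S}^m\cong\topsp{S}^{n+m}$, which under relative $\rm K$-theory becomes $(\topsp{D}^n\times\topsp{D}^m,\,\partial(\topsp{D}^n\times\topsp{D}^m))\simeq(\topsp{D}^{n+m},\,\topsp{S}^{n+m-1})$. I would then write out the clutching triple $\varphi(W\hat\otimes V)$ over $\topsp{D}^{n+m}$ and compare it to the external product of the triples $\varphi(W)$ and $\varphi(V)$ over $\topsp{D}^n\times\topsp{D}^m$, reducing the verification to the identity $(x\oplus y)\cdot(w\hat\otimes v)=(x\cdot w)\hat\otimes v+(-1)^{|w|}w\hat\otimes(y\cdot v)$ built into the definition of the graded tensor product.

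For bijectivity, the Clifford-algebraic computation gives: in the complex case, $\cliffmod{\mathbb{C}}_*/i^*\cliffmod{\mathbb{C}}_{*+1}\cong\mathbb{Z}[v]$ with $v$ in degree $2$ (the class of an irreducible $\mathbb{Z}_2$-graded $\Cl_2$-module), reflecting the $2$-periodicity $\Cl_{n+2}\cong\Cl_n\otimes M_2(\mathbb{C})$; in the real case, the analogous quotient ring is generated in degrees $1$, $4$, $8$ with the relations listed in Theorem~\ref{realpoint}, reflecting the $8$-periodicity of real Clifford algebras. Since $\varphi_0$ visibly sends the identity generator to $1\in\kgroup{0}{\{{\rm pt}\}}\cong\mathbb{Z}$, it remains only to check that $\varphi_2$ sends $v$ to the Bott generator $u=[\rm H]-[\theta_{1}]\in\redk{0}{\topsp{S}^2}$ in the complex case, and analogously that the degree $1$, $4$, $8$ Clifford generators land on $\eta$, $y$, $x$. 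The degree-$2$ check is a concrete computation of clutching data: Clifford multiplication by a unit vector in $\mathbb{R}^2=\mathbb{C}$ on the two half-spinors of the irreducible graded $\Cl_2$-module yields precisely the Hopf clutching function $\topsp{S}^1\to\UU(1)$ defining $\mathrm{H}$ on $\mathbb{CP}^1\cong\topsp{S}^2$.

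The principal obstacle is the real-case computation in degrees $4$ and $8$, where one needs the quaternionic and octonionic Hopf bundles and a careful book-keeping of the grading involutions to exhibit $\varphi^{\mathbb{R}}_4$ and $\varphi^{\mathbb{R}}_8$ as sending the irreducible $\Cl_4$- and $\Cl_8$-modules to the generators $y\in\kogroup{-4}{\{{\rm pt}\}}$ and $x\in\kogroup{-8}{\{{\rm pt}\}}$. In essence, the theorem is another face of Bott periodicity itself: once real and complex periodicity are granted on the $\rm K$-theory side (Theorems~\ref{bott} and \ref{realbott}) and the matching of generators is verified in bounded degree, the result in all degrees follows from the ring homomorphism property and the fact that both rings are generated in the same degrees with the same relations.
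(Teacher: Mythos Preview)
The paper does not actually prove this theorem: immediately after the statement it remarks only that ``the proof of the above theorem in~\cite{ABS} actually invokes the Bott result,'' and moves on to use the isomorphism to produce explicit generators. So there is no proof in the paper for you to be compared against.

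That said, your sketch is essentially the strategy of the original Atiyah--Bott--Shapiro paper: establish that $\varphi_*$ is a graded ring homomorphism via compatibility of graded tensor product with the external cup product, compute the Clifford quotient ring explicitly from the periodicity table of Clifford algebras, and then invoke Bott periodicity (Theorems~\ref{complexpoint} and~\ref{realpoint}) to know the target ring, reducing the isomorphism to matching generators in low degree. Your identification of the ``principal obstacle'' is accurate: the nontrivial content is precisely verifying that the irreducible graded $\Cl_4$- and $\Cl_8$-modules hit \emph{generators} of $\kogroup{-4}{\{\rm pt\}}\simeq\mathbb{Z}$ and $\kogroup{-8}{\{\rm pt\}}\simeq\mathbb{Z}$ rather than proper multiples; this is where the real work in~\cite{ABS} lies, and your sketch correctly flags it rather than glossing over it. One small correction: the degree-$8$ generator is not produced by an ``octonionic Hopf bundle'' in the na\"ive sense (the octonions are non-associative), but rather by the clutching data coming directly from Clifford multiplication on the irreducible real $\Cl_8$-spinor module; the identification with a geometric Hopf-type construction over $\topsp{S}^8$ is more delicate than in degrees~$2$ and~$4$.
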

As the periodicity of the quotients $\cliffmod{\mathbb{C}}_{*}/i^{*}\cliffmod{\mathbb{C}}_{*+1}$ and $\cliffmod{}_{*}/i^{*}\cliffmod{}_{*+1}$ can be deduced by the algebraic properties of Clifford algebras, it may seem that the above theorem gives an algebraic proof of the Bott periodicity theorems. This is not the case, as the proof of the above theorem in \cite{ABS} actually invokes the Bott result.\\[2mm]
\indent The ABS isomorphisms can now be used to obtain explicit representatives for $\kgroup{-*}{\{\rm pt\}}$ and $\kogroup{-*}{\{\rm pt\}}$. Namely, consider $\rm S_{\mathbb{C}}=S_{\mathbb{C}}^{+}\oplus{S_{\mathbb{C}}^{-}}$, the fundamental $\mathbb{Z}_{2}$-graded complex representation of $\Cl_{2n}$. Then, by fundamental results in the theory of Clifford algebras, the group $\cliffmod{\mathbb{C}}_{2n}\simeq{\mathbb{Z}\oplus\mathbb{Z}}$ is generated by $\rm S_{\mathbb{C}}$ and $\rm \tilde{S}_{\mathbb{C}}$, the graded module obtained by interchanging the factors in $\rm S_{\mathbb{C}}$.\\
Moreover, the homomorphism $i^{*}$ maps the generator of the group $\cliffmod{\mathbb{C}}_{2n+1}\simeq\mathbb{Z}$ to $(\rm S_{\mathbb{C}},\tilde{S}_{\mathbb{C}})\in\mathbb{Z}\oplus\mathbb{Z}$. Hence, the generator for $\kgroup{-2n}{\{{\rm pt}\}}\simeq\kgroup{0}{\topsp{D}^{2n},\topsp{S}^{{2n}-1}}$ is given by the element
\begin{displaymath}
\sigma^{\mathbb{C}}_{2n}:=[\rm S^{+}_{\mathbb{C}},{S}^{-}_{\mathbb{C}};\mu]
\end{displaymath}
In the case $n=1$, $\sigma^{\mathbb{C}}_{2}$ is mapped by the isomorphism $\kgroup{0}{\topsp{D}^{2},\topsp{S}^{1}}\simeq\redk{0}{\rm S^{2}}$ to the class $[\rm H]-[\theta_{1}]$, where $\rm H$ is the tautological complex line bundle over $\rm S^{2}\simeq{\mathbb{CP}^{1}}$.\\[2mm]
\indent Due to the presence of torsion, the generators for the ring $\kogroup{-*}{\{\rm pt\}}$ are more complicated to describe. As an example, we will consider the case $n=1$, where we have $\kogroup{-1}{\{\rm pt\}}\simeq\mathbb{Z}_{2}$.\\
First, recall that to any graded module $\rm W=W^{0}\oplus{W^{1}}$ for $\Cl_{n}$ we can assign the \emph{ungraded} module $\rm W^{0}$ for the Clifford algebra $\Cl^{0}_{n}\simeq{\Cl_{n-1}}$, where $\Cl^{0}_{n}$ denotes the even part of $\Cl_{n}$. The converse is also true: given an ungraded module $\rm W^{0}$ for the Clifford algebra $\Cl_{n-1}$, the module
\begin{displaymath}
\rm W:=\Cl_{\it n}\otimes_{\Cl^{0}_{\it n}}W^{0}
\end{displaymath}
is naturally a graded module for the Clifford algebra $\Cl_{n}$.\\
This induces the isomorphism $\cliffmod{}_{n}\simeq{\mathfrak{M}_{n-1}}$, where $\mathfrak{M}_{n-1}$ denotes the ungraded version of $\cliffmod{}_{n-1}$, and consequently the isomorphism 
\begin{displaymath}
\cliffmod{}_{n}/i^{*}\cliffmod{}_{n+1}\simeq{\mathfrak{M}_{n-1}}/i^{*}\mathfrak{M}_{n}
\end{displaymath}
In the case $n=1$, $\Cl_{0}\simeq\mathbb{R}$ and $\Cl_{1}\simeq{\mathbb{C}}$: taking the real and complex dimension of the ungraded modules gives the isomorphisms $\mathfrak{M}_{0}\simeq{\mathbb{Z}}$ and $\mathfrak{M}_{1}\simeq{\mathbb{Z}}$, respectively. The map $i^{*}:\mathfrak{M}_{1}\to\mathfrak{M}_{0}$ is given by considering a complex vector space to be a real vector space under restriction of scalars. It induces a homomorphism $\mathbb{Z}\to\mathbb{Z}$ given by multiplication by 2, hence ${\mathfrak{M}_{0}}/i^{*}\mathfrak{M}_{1}\simeq{\mathbb{Z}_{2}}$. Then, the generator for $\kogroup{-1}{\{\rm pt\}}$ is given by the element in $\kogroup{0}{\topsp{D}^{1},\topsp{S}^{0}}$
\begin{displaymath}
\sigma_{1}:=[\topsp{D}^{1}\times\mathbb{R},\topsp{D}^{1}\times\mathbb{R};\mu]
\end{displaymath}
where
\begin{displaymath}
\begin{array}{c}
\mu(1,v)=v\\
\mu(-1,v)=-v
\end{array}
\end{displaymath}
Moreover, $\sigma_{1}=[\rm H_{1}]-[\theta_{1}]\in\redko{0}{\topsp{S}^{1}}$, where $\rm H_{1}$ is the \emph{infinite M\"{o}bius bundle} over the circle, i.e. the tautological real line bundle over $\mathbb{RP}^{1}\simeq{\topsp{S}^{1}}$.
\section{K-theory and $\text{Spin}^{\text{c}}$ manifolds}\label{Kmanifold}
In this section, we will specialize to the case in which the CW-complex $\topsp{X}$ is a smooth finite dimensional manifold. Being a topological invariant, K-theory does not depend on the presence of any smooth structure: neverthless, restricting to manifolds provides a strong connection between K-theory and the theory of elliptic operators, and, moreover, it is the right framework for the description of D-branes in String theory. Again, all the following results for which a smooth structure is not explicitly required, are to be considered valid in a more general context.\\[2mm]
\indent Before introducing the notions of K-orientation and the Thom isomorphism, we need some basic preliminary results.\\
In the following we will consider the K-theory of total spaces of vector bundles, which are locally compact spaces. For a locally compact space $\topsp{X}$ we define
\begin{displaymath}
\ckgroup{0}{\topsp{X}}:=\redk{0}{\topsp{X}^{+}}
\end{displaymath}
and
\begin{displaymath}
\ckgroup{-n}{\topsp{X}}:=\ckgroup{0}{\topsp{X}\times\mathbb{R}^{n}}
\end{displaymath}
The functors $\text{K}_{cpt}^{-n}$ are defined on the category of locally compact spaces and proper maps, and they constitute the \emph{K-theory with compact support}.\\
Moreover, we have that
\begin{displaymath}
\ckgroup{-n}{\topsp{X},\topsp{Y}}:=\ckgroup{0}{(\topsp{X}-\topsp{Y})\times\mathbb{R}^{n}}
\end{displaymath}
Analogous definitions can be given for $\text{KO}$-theory, and the Bott periodicity theorems assume the form
\begin{displaymath}
\begin{array}{c}
\ckgroup{0}{\topsp{X}}\simeq\ckgroup{0}{\topsp{X}\times\mathbb{C}}\\
\ckogroup{0}{\topsp{X}}\simeq\ckogroup{0}{\topsp{X}\times\mathbb{R}^{8}}
\end{array}
\end{displaymath}
Now, let $\topsp{X}$ be a compact space, and let $\rm E\to{\topsp{X}}$ be a complex vector bundle. The one point compactification $\topsp{E}^{+}$ is called the \emph{Thom space}, or \emph{Thom complex} of $\rm E$.\\
By definition, $\redk{0}{\topsp{E}^{+}}=\ckgroup{0}{\topsp{E}}$.\\
Consider the (unit) \emph{ball} and \emph{sphere} bundle of E, denoted as $\rm B(E)$ and $\rm S(E)$, respectively, and defined as
\begin{displaymath}
\begin{array}{c}
\rm B(E):=\left\{{\it e}\in{E}: \varphi({\it e},{\it e})\leq{1}\right\}\\
\rm S(E):=\left\{{\it e}\in{E}: \varphi({\it e},{\it e})={1}\right\}
\end{array}
\end{displaymath}
for $\varphi$ a Hermitian metric on $\rm E$, and the projection is the obvious one.\\
By noticing that $\rm B(E)-S(E)\simeq{E}$, we have
\begin{displaymath}
\rm B(E)/S(E)\simeq{E^{+}}
\end{displaymath}
and consequently
\begin{displaymath}
\ckgroup{0}{\rm E}:=\redk{0}{\rm E^{+}}\simeq{\redk{0}{\rm B(E)/S(E)}}:=\kgroup{0}{B(E),S(E)}
\end{displaymath}
The isomorphism above does not depend on the choice of the Hermitian metric $\varphi$, and a similar construction follows for $\topsp{KO}$-theory.\\
In the following, we will often interchange $\ckgroup{0}{\rm E}$ and $\rm \kgroup{0}{B(E),S(E)}$ freely using the isomorphism above.
\subsection{K-orientation and Thom isomorphism}\label{Thomsection} 
In the same way that an orientation of a vector bundle $\rm E\to{X}$ induces a class $\tau\in\text{H}_{cpt}^{*}(\text{E})$ and a ring isomorphism $\text{H}_{cpt}^{*}(E)\simeq{\text{H}^{*}(\topsp{X})}$ via cup product with $\tau$, the existence of certain structures on vector bundles will induce analogous results for $\rm{K}$- and $\rm KO$-theory.\\
Let $\pi:\text{E}\to{\topsp{X}}$ be a complex or real vector bundle, and denote with $\rm k^{*}$ the functor $\text{K}_{cpt}^{*}$ or $\text{KO}_{cpt}^{*}$, or, more generally, any multiplicative generalized cohomology theory.\\
Define the map
\begin{displaymath}
j^{*}:\text{k}^{*}(\topsp{X}\times\text{E})\to\text{k}^{*}(\text{E})
\end{displaymath}
induced by the map $j(e)=(\pi(e),e)$.\\
 Then, the composition
\begin{displaymath}
\text{k}^{*}(\topsp{X})\times\text{k}^{*}(\topsp{E})\xrightarrow{\cup}\text{k}^{*}(\topsp{X}\times\topsp{E})\xrightarrow{j^{*}}{\text{k}^{*}(\topsp{E})}
\end{displaymath}
canonically equips $\text{k}^{*}(\topsp{E})$ with a structure of left module over $\text{k}^{*}(\topsp{X})$.\\
In the following we will give a somewhat general definition of orientation for $\text{k}^{*}$, as it will become important later in the context of K-homology.
\begin{definition}
An $n$-dimensional vector bundle $\rm\pi:E\to{X}$ is said to be \emph{$\text{k}^{*}$-orientable} if there exists a class $\tau\in\text{k}^{n}(\topsp{E})$ such that $i_{x}^{*}\tau$ is a generator of $\text{k}^{n}(\topsp{E}_{x})$ for each fiber inclusion $i_{x}:\topsp{E}_{x}\hookrightarrow\topsp{E}$. A choice of such a class $\tau$ is called a \emph{$\text{k}^{*}$-orientation} for $\pi$, and $\pi$ is said to be \emph{$\text{k}^{*}$-oriented} by $\tau$.   
\end{definition}
For a $\text{k}^{*}$-oriented vector bundle one has the following fundamental general result \cite{dyer,switzer}
\begin{theorem}\label{Thomiso}
Let $\pi:\topsp{E}\to\topsp{X}$ be an n-dimensional vector bundle which is ${\rm k}^{*}$-oriented by the class $\tau\in{\rm k}^{n}(\topsp{E})$. Then the homomorphism
\begin{displaymath}
\mathfrak{T}_{\topsp{X},\topsp{E}}:{\rm k}^{i}(\topsp{X})\to{\rm k}^{i+n}(\topsp{E})
\end{displaymath}
defined by
\begin{displaymath}
\mathfrak{T}_{\topsp{X},\topsp{E}}(\xi):=\pi^{*}(\xi)\cup{\tau}
\end{displaymath}
is an isomorphism.
\end{theorem}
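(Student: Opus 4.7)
The plan is to reduce the general statement to the case of a trivial bundle over a point via the standard Mayer--Vietoris / five-lemma induction on the base $\topsp{X}$, exploiting naturality of the cup product and of the orientation class.

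First, I would verify the base case in which $\topsp{X}=\{\pt\}$, so $\topsp{E}\simeq\mathbb{F}^{n}$. By the $\text{k}^{*}$-orientability hypothesis, $\tau$ restricts to a generator of $\text{k}^{n}(\mathbb{F}^{n})\simeq \text{k}^{n-n}(\{\pt\})=\text{k}^{0}(\{\pt\})$, and cup product with this generator gives an isomorphism $\text{k}^{i}(\{\pt\})\xrightarrow{\simeq}\text{k}^{i+n}(\mathbb{F}^{n})$ by Bott periodicity (Theorems~\ref{bott} and \ref{realbott}), together with the explicit identifications of the generators of $\text{k}^{-n}(\{\pt\})\simeq \redk{0}{\topsp{S}^{n}}$ via the ABS construction.

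Second, I would extend the result to the case of a \emph{trivial} bundle $\topsp{E}=\topsp{X}\times\mathbb{F}^{n}$. Here $\text{k}^{*}(\topsp{X}\times\mathbb{F}^{n})$ can be computed from the cup-product pairing $\text{k}^{*}(\topsp{X})\otimes\text{k}^{*}(\mathbb{F}^{n})\to\text{k}^{*}(\topsp{X}\times\mathbb{F}^{n})$, which, combined with the fact that $\text{k}^{*}(\mathbb{F}^{n})$ is a free rank-one module over $\text{k}^{*}(\{\pt\})$ generated by the pullback of $\tau|_{\{\pt\}\times\mathbb{F}^{n}}$, yields that cup product with (a trivialisation of) $\tau$ is an isomorphism $\text{k}^{i}(\topsp{X})\to\text{k}^{i+n}(\topsp{X}\times\mathbb{F}^{n})$. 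Naturality of the orientation class ensures that any other Thom class agrees with this one up to a unit in $\text{k}^{0}(\topsp{X})$, so the conclusion is independent of the trivialising choice.

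Third, I would perform the Mayer--Vietoris induction on the number of trivialising opens in a good cover $\{\mathcal{U}_{\alpha}\}$ of $\topsp{X}$. Suppose $\topsp{X}=\topsp{U}\cup\topsp{V}$ with the Thom isomorphism already established over $\topsp{U}$, $\topsp{V}$ and $\topsp{U}\cap\topsp{V}$. Restriction of $\tau$ gives Thom classes on each of $\topsp{E}|_{\topsp{U}},\topsp{E}|_{\topsp{V}},\topsp{E}|_{\topsp{U}\cap\topsp{V}}$, and naturality of the cup product makes the Mayer--Vietoris long exact sequences for the pairs $(\topsp{X};\topsp{U},\topsp{V})$ and $(\topsp{E};\topsp{E}|_{\topsp{U}},\topsp{E}|_{\topsp{V}})$ into a ladder of exact sequences in which every third arrow is known to be an isomorphism; the five lemma then upgrades $\mathfrak{T}_{\topsp{X},\topsp{E}}$ to an isomorphism as well. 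Induction on the size of a finite good cover for a compact CW-complex finishes the argument, and a direct-limit step (using the continuity of $\text{k}^{*}$ on CW-complexes of finite type) extends it to the general case.

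The main obstacle will be the Mayer--Vietoris step: setting up the ladder requires that the Thom classes, the cup-product module structures, and the connecting homomorphisms all commute in the appropriate sense. The cleanest way to handle this is to phrase the Thom morphism as a natural transformation between the two cohomological functors $\topsp{U}\mapsto\text{k}^{i}(\topsp{U})$ and $\topsp{U}\mapsto\text{k}^{i+n}(\topsp{E}|_{\topsp{U}})$, both of which satisfy Mayer--Vietoris, and verify once and for all that multiplication by $\tau|_{\topsp{E}|_{\topsp{U}}}$ intertwines the connecting maps — essentially a consequence of the derivation property of the boundary with respect to the cup product.
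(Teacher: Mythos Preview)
The paper does not prove this theorem: it states it as a ``fundamental general result'' and cites \cite{dyer,switzer} for the proof. Your Mayer--Vietoris/five-lemma induction is precisely the standard argument one finds in those references, so in that sense your approach matches what the paper defers to.

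One small imprecision: the theorem is stated for an arbitrary multiplicative generalized cohomology theory $\mathrm{k}^{*}$, not just $\K^{*}$ or $\KO^{*}$. In the base case $\topsp{X}=\{\pt\}$, $\topsp{E}\simeq\mathbb{R}^{n}$, you invoke Bott periodicity (Theorems~\ref{bott} and~\ref{realbott}) and the ABS construction, but these are specific to K-theory. For general $\mathrm{k}^{*}$ the correct input is the \emph{suspension isomorphism} $\mathrm{k}^{i}(\pt)\xrightarrow{\simeq}\tilde{\mathrm{k}}^{i+n}(\topsp{S}^{n})=\mathrm{k}^{i+n}_{\rm cpt}(\mathbb{R}^{n})$, which holds axiomatically in any such theory; the hypothesis that $\tau$ restricts to a generator of $\mathrm{k}^{n}(\topsp{E}_{x})$ then says exactly that cup product with $\tau|_{\topsp{E}_{x}}$ realises this isomorphism. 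With that correction your sketch is sound.
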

The isomorphism $\mathfrak{T}_{\topsp{X},\topsp{E}}$ is called the \emph{Thom isomorphism}. As a corollary of theorem (\ref{Thomiso}), we have that $\text{k}^{*}(\topsp{E})$ is a free $\text{k}^{*}$-module with generator $\tau$, and hence is completely known once $\text{k}^{*}(\topsp{X})$ is.\\[2mm]
\indent Given a vector bundle, one is interested in finding \emph{sufficient}, if not necessary conditions for orientation classes to exist, and, more generally, to investigate which are the possible obstructions to such an existence.\\ 
For this, consider the trivial bundle 
\begin{displaymath}
\topsp{E}=\topsp{X}\times{\mathbb{C}^{m}\xrightarrow{\pi}}\topsp{X}
\end{displaymath}
and consider the class
\begin{displaymath}
\tau(\topsp{E}):=[\Lambda_{\mathbb{C}}^{even}\underline{\mathbb{C}}^{m},\Lambda_{\mathbb{C}}^{odd}\underline{\mathbb{C}}^{m};\sigma]\in\kgroup{0}{\topsp{B(E)},\topsp{S(E)}}
\end{displaymath}
where $\underline{\mathbb{C}}^{m}=\pi^{*}E$, and where
\begin{displaymath}
\sigma_{(x,v)}(\varphi):=v\wedge\varphi-i_{v^{*}}{\varphi}
\end{displaymath}
with $i_{v^{*}}$ given by contraction. See {\appclifford} for details.\\
Now, the restriction of $\tau(\topsp{E})$ to the fiber is given by the class
\begin{displaymath}
[\Lambda_{\mathbb{C}}^{even}\mathbb{C}^{m},\Lambda_{\mathbb{C}}^{odd}\mathbb{C}^{m};\sigma]\in\ckgroup{0}{\mathbb{C}^{m}}=\kgroup{0}{\topsp{D}^{2m},\topsp{S}^{2m-1}}
\end{displaymath}
where we have identified $\mathbb{C}^{m}\simeq{\mathbb{R}^{2m}}$.\\
Upon the identification $\Lambda^{*}\mathbb{C}^{m}\simeq\Cl_{2m}$, and noticing that $\sigma$ is given by Clifford multiplication by the element $v$, the class above can be rewritten as
\begin{displaymath}
[ \rm S^{+}_{\mathbb{C}}, S^{-}_{\mathbb{C}};\mu]
\end{displaymath}
which by the ABS isomorphism is a generator of $\kgroup{0}{\topsp{D}^{2m},\topsp{S}^{2m-1}}$.\\
As any complex vector bundle $\topsp{E}$ is locally trivializable, and there is no obstruction to construct $\Lambda_{\mathbb{C}}^{*}\topsp{E}$ globally, we have the following
\begin{theorem}
Let $\rm \pi:\topsp{E}\to\topsp{X}$ be a complex hermitian vector bundle over a compact space $\topsp{X}$. Then the class
\begin{displaymath}
\tau_{\text{K}}(\topsp{E}):=[\Lambda_{\mathbb{C}}^{even}\pi^{*}\topsp{E},\Lambda_{\mathbb{C}}^{odd}\pi^{*}\topsp{E};\sigma]\in\ckgroup{0}{\topsp{E}}
\end{displaymath}
where $\sigma_{e}(\varphi):=e\wedge\varphi-i_{e^{*}}{\varphi}$, is a $\rm{K}^{*}$-orientation class for $\pi$.
\end{theorem}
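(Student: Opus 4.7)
The plan is to verify directly the two requirements in the definition of a $\mathrm{K}^{*}$-orientation: that $\tau_{K}(\topsp{E})$ is a well-defined element of $\ckgroup{0}{\topsp{E}}$, and that its restriction to every fiber $\topsp{E}_{x}$ generates $\ckgroup{0}{\topsp{E}_{x}}$. Under the canonical identification $\ckgroup{0}{\topsp{E}}\simeq\kgroup{0}{B(\pi^{*}\topsp{E}),S(\pi^{*}\topsp{E})}$ recalled just above the statement, the triple $[\Lambda^{even}_{\mathbb{C}}\pi^{*}\topsp{E},\Lambda^{odd}_{\mathbb{C}}\pi^{*}\topsp{E};\sigma]$ will represent an honest relative K-theory class as soon as $\sigma$ restricts to a bundle isomorphism over $S(\pi^{*}\topsp{E})$.

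First I would verify this isomorphism property away from the zero section. A short computation using the standard identities $i_{e^{*}}(e\wedge\varphi)=\|e\|^{2}\varphi-e\wedge i_{e^{*}}\varphi$ and $e\wedge e\wedge\varphi=0$ yields $\sigma_{e}^{2}(\varphi)=-\|e\|^{2}\varphi$, so $\sigma_{e}$ is invertible whenever $e\neq 0$. This computation simultaneously shows that the assignment $v\mapsto\sigma_{v}$ turns $\Lambda^{*}_{\mathbb{C}}\pi^{*}\topsp{E}$ into a $\mathbb{Z}_{2}$-graded module bundle over the Clifford bundle $\Cl(\topsp{E})$ associated with $\topsp{E}$ viewed as a real hermitian bundle of rank $2m$, with the even and odd exterior powers being the graded summands.

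Next I would check the orientation condition fiberwise. Fix $x\in\topsp{X}$ and trivialize so that $\topsp{E}_{x}\simeq\mathbb{C}^{m}$ with the standard hermitian form; then the restriction $i_{x}^{*}\tau_{K}(\topsp{E})$ is represented by $[\Lambda^{even}_{\mathbb{C}}\mathbb{C}^{m},\Lambda^{odd}_{\mathbb{C}}\mathbb{C}^{m};\sigma]\in\kgroup{0}{D^{2m},S^{2m-1}}$. By the relation $\sigma_{v}^{2}=-\|v\|^{2}$ the exterior algebra $\Lambda^{*}_{\mathbb{C}}\mathbb{C}^{m}$ under the action $\sigma$ becomes a $\mathbb{Z}_{2}$-graded complex $\Cl_{2m}$-module of complex dimension $2^{m}$, i.e.\ the dimension of the fundamental irreducible graded spinor module $S_{\mathbb{C}}=S^{+}_{\mathbb{C}}\oplus S^{-}_{\mathbb{C}}$. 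Standard Clifford algebra theory then furnishes an isomorphism of graded $\Cl_{2m}$-modules identifying $\Lambda^{even}_{\mathbb{C}}\mathbb{C}^{m}$ with $S^{+}_{\mathbb{C}}$ and $\Lambda^{odd}_{\mathbb{C}}\mathbb{C}^{m}$ with $S^{-}_{\mathbb{C}}$, under which $\sigma$ matches the Clifford multiplication $\mu$. Consequently $i_{x}^{*}\tau_{K}(\topsp{E})=[S^{+}_{\mathbb{C}},S^{-}_{\mathbb{C}};\mu]=\sigma^{\mathbb{C}}_{2m}$, which by the ABS isomorphism $\varphi_{2m}$ is precisely the generator of $\kgroup{0}{D^{2m},S^{2m-1}}\simeq\redk{0}{S^{2m}}\simeq\mathbb{Z}$.

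The step requiring the most care is the invariant identification of $\Lambda^{*}_{\mathbb{C}}\mathbb{C}^{m}$ with $S_{\mathbb{C}}$ as $\mathbb{Z}_{2}$-graded $\Cl_{2m}$-modules; this is a purely algebraic statement about complex spinor representations but is what converts the explicit exterior-algebra construction into the ABS generator. Once this identification is in place, the fiber computation is uniform in $x\in\topsp{X}$ (it uses only the hermitian structure, which trivializes locally), so $\tau_{K}(\topsp{E})$ restricts to a generator on every fiber, establishing it as the desired $\mathrm{K}^{*}$-orientation class.
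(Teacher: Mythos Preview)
Your proposal is correct and follows essentially the same approach as the paper: the paper also restricts to a fiber, invokes the identification of $\Lambda^{*}_{\mathbb{C}}\mathbb{C}^{m}$ with the spinor module under which $\sigma$ becomes Clifford multiplication, and appeals to the ABS isomorphism to conclude that the restricted class is the generator $\sigma^{\mathbb{C}}_{2m}$; the global statement then follows since $\Lambda^{*}_{\mathbb{C}}\topsp{E}$ exists for any complex bundle without obstruction. Your explicit verification that $\sigma_{e}^{2}=-\|e\|^{2}$ (hence $\sigma$ is invertible off the zero section) is a detail the paper leaves implicit, but otherwise the arguments coincide.
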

Hence, any complex vector bundle is $\text{K}^{*}$-orientable, and there is no obstruction present.\\
The case for real vector bundles is instead different. Indeed, consider the 8$n$-dimensional real trivial vector bundle
\begin{displaymath}
\rm E=X\times{\mathbb{R}^{8{\it n}}}\xrightarrow{\pi}\topsp{X}
\end{displaymath}
and consider the class
\begin{displaymath}
\tau_{KO}(E):=\rm [\pi^{*}S^{+}(E),\pi^{*}S^{-}(E);\mu]\in{\ckogroup{0}{\topsp{E}}}
\end{displaymath}
where $\rm S^{+}(E)\oplus{S^{-}(E)}=S(E)=\topsp{X}\times{W}$, with $\rm W$ the irreducible real graded $\Cl_{8n}$-module, and $\mu$ is Clifford multiplication. Again, by Bott periodicity, the class $\tau_{KO}(E)$ gives the generator of $\ckogroup{0}{\mathbb{R}^{8n}}=\kogroup{0}{\topsp{D}^{8n},\topsp{S}^{8n-1}}$, when restricted to the fiber.\\
Given an arbitrary real vector bundle $\topsp{E}$, the bundle $\topsp{S(E)}$ will exist if and only if $\rm E$ admits a spin structure.\\
Hence, we have the following
\begin{theorem}
Let $\rm \pi:\topsp{E}\to\topsp{X}$ be a real 8n-dimensional vector bundle with a spin structure over a compact space $\topsp{X}$. Then the class
\begin{displaymath}
\tau_{KO}(\rm E):=\rm [\pi^{*}S^{+}(E),\pi^{*}S^{-}(E);\mu]\in{\ckogroup{0}{\topsp{E}}}
\end{displaymath}
where $\mu_{e}(\varphi)=e\cdot{\varphi}$ is Clifford multiplication, is a $\rm KO^{*}$-orientation for $\pi$.
\end{theorem}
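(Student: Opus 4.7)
The plan is to reduce the theorem to the trivial bundle case treated in the lines immediately preceding the statement, using the spin structure to perform the reduction globally. By the definition of $\text{KO}^{*}$-orientability, we must exhibit a class $\tau \in \ckogroup{0}{\topsp{E}}$ whose restriction to every fiber $\topsp{E}_x$ is a generator of $\ckogroup{0}{\topsp{E}_x} \simeq \kogroup{0}{\topsp{D}^{8n},\topsp{S}^{8n-1}}$. The candidate class $\tau_{KO}(E)$ is a triple of vector bundles with a morphism on the sphere bundle, so it only remains to check (a) that the construction makes global sense, and (b) that fiberwise it reproduces the ABS generator.

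First I would verify (a): existence of a spin structure on $\topsp{E}$ is precisely the assertion that the principal $\group{SO}(8n)$-frame bundle $\Fr(\topsp{E})$ lifts to a principal $\Spin(8n)$-bundle $\widetilde{\Fr}(\topsp{E})$. Since the irreducible real graded $\Cl_{8n}$-module $\rm W = W^{+} \oplus W^{-}$ carries a natural action of $\Spin(8n) \subset \Cl_{8n}^{0}$ respecting the $\mathbb{Z}_{2}$-grading, one forms the associated bundles $\rm S^{\pm}(E) := \widetilde{\Fr}(E) \times_{\Spin(8n)} W^{\pm}$ globally on $\topsp{X}$. Clifford multiplication $\rm W^{+} \to W^{-}$ by a unit vector $v \in \mathbb{R}^{8n}$ is $\Spin(8n)$-equivariant, so it descends to a well-defined bundle map $\mu : \pi^{*}\rm S^{+}(E) \to \pi^{*}S^{-}(E)$ over $\rm S(E)$, and is an isomorphism there since Clifford multiplication by a unit vector is invertible. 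Thus $\tau_{KO}(E) = [\pi^{*}\rm S^{+}(E), \pi^{*}S^{-}(E); \mu]$ is a bona fide element of $\kogroup{0}{\rm B(E), S(E)} \simeq \ckogroup{0}{\topsp{E}}$.

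Next I would check (b): pick a point $x \in \topsp{X}$ and a local trivialization of $\topsp{E}$ around $x$ compatible with the spin structure, so that over a neighborhood $\mathcal{U}_{\alpha}$ the bundles $\rm S^{\pm}(E)$ are identified with $\mathcal{U}_{\alpha} \times W^{\pm}$ and $\mu$ is identified with the constant-in-$x$ Clifford multiplication. Restriction to the fiber then recovers exactly the triple $[\topsp{D}^{8n} \times W^{+}, \topsp{D}^{8n} \times W^{-}; \mu]$ representing the ABS class $\sigma^{\mathbb{R}}_{8n}$ of the preceding trivial computation, which under the ABS isomorphism $\varphi^{\mathbb{R}}_{8n}$ and Bott periodicity is the generator of $\kogroup{-8n}{\{\pt\}} \simeq \mathbb{Z}$. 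Since the generator property is checked on each fiber, the class $\tau_{KO}(E)$ fulfils the definition of a $\text{KO}^{*}$-orientation.

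The main obstacle here is conceptual rather than computational: one has to convince oneself that the \emph{graded} spinor bundles $\rm S^{\pm}(E)$ individually (not merely their tensor product, as noted in the discussion of equation (\ref{worldspinor})) are well-defined, and that this is exactly the data of a spin structure together with the choice of the $8n$-periodicity slot ensuring the real irreducible module is graded. Once the associated-bundle construction is justified by the lift $\widetilde{\Fr}(\topsp{E})$, the remainder is a naturality argument reducing everything to the already-computed trivial model, with no Hermitian-metric dependence for exactly the same reason as in the complex case.
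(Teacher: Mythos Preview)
Your proposal is correct and follows essentially the same approach as the paper: the paper does not give a formal proof after the statement but instead presents the argument informally in the paragraphs preceding it, first computing the trivial-bundle case via the ABS isomorphism and then observing that the global existence of the graded spinor bundles $S^\pm(\topsp{E})$ is exactly the content of a spin structure on $\topsp{E}$. Your write-up simply makes explicit the associated-bundle construction $\widetilde{\Fr}(\topsp{E})\times_{\Spin(8n)}W^\pm$ and the $\Spin(8n)$-equivariance of Clifford multiplication, which the paper leaves implicit.
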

In the same way, the following theorem gives sufficient conditions for a real vector bundle to be $\topsp{K}^{*}$-orientable.
\begin{theorem}\label{orientation}
Let $\rm \pi:\topsp{E}\to\topsp{X}$ a real 2n-dimensional vector bundle with a $spin^{c}$ structure over a compact space $\topsp{X}$. Then the class
\begin{displaymath}
\tau_{K}(\rm E):=\rm [\pi^{*}S_{\mathbb{C}}^{+}(E),\pi^{*}S_{\mathbb{C}}^{-}(E);\mu]\in{\ckgroup{0}{\topsp{E}}}
\end{displaymath}
where $\rm S_{\mathbb{C}}=S^{+}_{\mathbb{C}}\oplus S^{-}_{\mathbb{C}}$ is the irreducible spinor bundle associated to the $spin^{c}$ structure on $\topsp{E}$, and $\mu_{e}(\varphi)=e\cdot{\varphi}$ is Clifford multiplication. Then $\tau_{K}(\rm E)$ is a $\topsp{K}^{*}$-orientation for $\pi$.
\end{theorem}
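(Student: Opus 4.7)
The plan is to verify the $\text{K}^{*}$-orientation condition pointwise, reducing the global statement to the identification of the fiber restriction with the Atiyah--Bott--Shapiro generator $\sigma^{\mathbb{C}}_{2n}$. The strategy parallels the proofs sketched for the complex case and the real spin case treated just above, with the $spin^{c}$-structure providing exactly the data needed to construct a globally defined complex spinor bundle.

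First I would establish that $\tau_{K}(\topsp{E})$ is a well-defined element of $\ckgroup{0}{\topsp{E}}\simeq\kgroup{0}{\topsp{B(E)},\topsp{S(E)}}$. The hypothesis that $\topsp{E}$ carries a $spin^{c}$-structure provides a lift of the oriented orthonormal frame bundle of $\topsp{E}$ to a principal $\Spin^{c}(2n)$-bundle $\topsp{P}\to\topsp{X}$. Composing this with the irreducible $\mathbb{Z}_{2}$-graded complex spinor representation $\Spin^{c}(2n)\to\GL(S_{\mathbb{C}}^{+}\oplus S_{\mathbb{C}}^{-})$ yields the globally defined graded spinor bundle $S_{\mathbb{C}}(\topsp{E})=S_{\mathbb{C}}^{+}(\topsp{E})\oplus S_{\mathbb{C}}^{-}(\topsp{E})$ over $\topsp{X}$. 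Pulling back to $\topsp{E}$ and defining $\mu_{e}(\varphi):=e\cdot\varphi$ gives a bundle morphism $\mu:\pi^{*}S_{\mathbb{C}}^{+}(\topsp{E})\to\pi^{*}S_{\mathbb{C}}^{-}(\topsp{E})$ which is fiberwise invertible whenever $e\neq 0$, since $e\cdot(e\cdot\varphi)=-|e|^{2}\varphi$ inside the Clifford algebra. In particular, $\mu$ restricts to an isomorphism on the sphere bundle $\topsp{S(E)}$, so the triple $[\pi^{*}S_{\mathbb{C}}^{+}(\topsp{E}),\pi^{*}S_{\mathbb{C}}^{-}(\topsp{E});\mu]$ defines a bona fide class in $\kgroup{0}{\topsp{B(E)},\topsp{S(E)}}$ via the reformulation through $\text{L}_{1}$ and the isomorphism $\chi$ established earlier.

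Next I would compute the restriction of $\tau_{K}(\topsp{E})$ along the fiber inclusion $i_{x}:\topsp{E}_{x}\hookrightarrow\topsp{E}$. Trivializing the $spin^{c}$-structure locally near a point $x\in\topsp{X}$, one identifies $(S_{\mathbb{C}}^{+}(\topsp{E})_{x},S_{\mathbb{C}}^{-}(\topsp{E})_{x})$ with the standard irreducible graded complex $\Cl_{2n}$-module $(S_{\mathbb{C}}^{+},S_{\mathbb{C}}^{-})$; under this identification the bundle map $\mu$ on $\topsp{E}_{x}\cong\mathbb{R}^{2n}$ becomes ordinary Clifford multiplication. The pullback $i_{x}^{*}\tau_{K}(\topsp{E})$ is therefore represented by $(\topsp{D}^{2n}\times S_{\mathbb{C}}^{+},\topsp{D}^{2n}\times S_{\mathbb{C}}^{-};\mu)$ in $\kgroup{0}{\topsp{D}^{2n},\topsp{S}^{2n-1}}$, which is precisely the ABS class $\varphi_{2n}(S_{\mathbb{C}})=\sigma^{\mathbb{C}}_{2n}$ already identified as a generator of $\kgroup{-2n}{\{\rm pt\}}\simeq\redk{0}{\topsp{S}^{2n}}$. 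This is exactly the $\text{K}^{*}$-orientation condition.

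The principal obstacle in the argument is ensuring that the graded complex bundle $S_{\mathbb{C}}(\topsp{E})$ is globally well defined, and this is precisely the role played by the $spin^{c}$ hypothesis: without such a lift of the structure group only the bundle of Clifford algebras $\Cl(\topsp{E})$ is globally defined, while the spinor representation produces a locally defined object with a cohomological obstruction to global existence. Once the global spinor bundle is in hand, the construction of $\mu$ and its fiberwise comparison with the ABS generator are formal consequences of the Clifford-module content encoded in the preceding section, and the remainder of the verification is routine.
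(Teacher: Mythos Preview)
Your proposal is correct and follows essentially the same approach as the paper: the paper does not give a separate proof of this theorem but simply writes ``In the same way, the following theorem gives sufficient conditions for a real vector bundle to be $\K^*$-orientable'' after the parallel discussion of the complex and real spin cases, relying on the fiberwise ABS computation already carried out and the observation that the $spin^c$ hypothesis is what makes the spinor bundle globally well defined. Your write-up is in fact more detailed than what appears in the text.
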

Thanks to the results above, we can construct an orientation class for a $\rm spin^{c}$ vector bundle of arbitrary rank in the following way, with the spin case requiring minor modifications.\\
Let $\pi:\topsp{E}\to\topsp{X}$ be a $\rm spin^{c}$ vector bundle of rank $n$, and consider the Whitney sum $\topsp{E}\oplus{\theta_{p}}$, with $p$ such that $n+p=2m$.\\
If we equip $\theta_{p}$ with its canonical  $spin^{c}$ structure, then the sum $\topsp{E}\oplus{\theta_{p}}$ is a $\rm spin^{c}$ vector bundle of rank $2m$.\\
Hence, by Theorem \ref{orientation} there exists a Thom class
\begin{displaymath}
\tau_{K}(\rm E\oplus\theta_{\it p})\in\kgroup{0}{B(\topsp{E}\oplus{\theta_{\it p}}),S(\topsp{E}\oplus{\theta_{\it p}})}
\end{displaymath}
By using the following isomorphisms \cite{dyer,switzer}
\begin{displaymath}
\begin{array}{rl}
\rm \kgroup{0}{B(\topsp{E}\oplus{\theta_{\it p}}),S(\topsp{E}\oplus{\theta_{\it p}})}:=&\rm \redk{0}{B(\topsp{E}\oplus{\theta_{\it p}})/S(\topsp{E}\oplus{\theta_{\it p}})}\\
\simeq&\rm \redk{0}{\Sigma^{\it p}(B(\topsp{E})/S(\topsp{E})}\\
\simeq&\rm \kgroup{-{\it p}}{B(\topsp{E}),S(\topsp{E})}\\
\simeq&\rm \kgroup{\it n}{B(\topsp{E}),S(\topsp{E})} 
\end{array}
\end{displaymath}
it follows that the class $\tau_{K}(\rm E\oplus\theta_{p})$ is an orientation class for the vector bundle $\pi$.\\
It is important to notice that, in contrast to the results in Theorem \ref{orientation}, the construction of the Thom class above is \emph{not} natural, as we have to make several choices in the process, starting from that of a $\rm spin^{c}$ structure on $\rm E\oplus\theta_{\it p}$. Neverthless, the induced Thom isomorphism will be an essential ingredient in the next chapter, where we will discuss a more natural description of D-branes in terms of K-homology.  
\subsection{Chern Character and Gysin homomorphism}\label{gysinhomo} 
In this section we will briefly introduce the homomorphisms induced by the Chern character and the Thom isomorphism. We refer to {\appcharac} for the basic notions on characteristic classes for vector bundles.\\[2mm]
\indent Let $\topsp{E}\to\topsp{X}$ be a complex vector bundle. The Chern character $\ch(\topsp{E})$ is an element in the rational cohomology $\topsp{H}^{ev}(\topsp{X};\mathbb{Q})$ constructed from the total Chern class of $\topsp{E}$. The importance of the Chern character lies in the fact that it respects the (semi)additive and multiplicative structure on $\text{Vect}(\topsp{X})$. Namely, for $\topsp{E}$ and $\topsp{E}^{'}$ vector bundles on $\topsp{X}$ we have
\begin{displaymath}
\begin{array}{c}
\ch(\topsp{E}\oplus\topsp{E}^{'})=\ch(\topsp{E})+\ch(\topsp{E}^{'})\\
\ch(\topsp{E}\otimes\topsp{E}^{'})=\ch(\topsp{E})\cup\ch(\topsp{E}^{'})
\end{array}
\end{displaymath}    
Hence, we can define the following homomorphism
\begin{equation}\label{Chernhomo}
\begin{array}{rl}
\ch:&\kgroup{0}{\topsp{X}}\to\topsp{H}^{ev}(\topsp{X};\mathbb{Q})\\
&[\topsp{E}]-[\topsp{F}]\to\ch(\topsp{E})-\ch(\topsp{F})
\end{array}
\end{equation}
It is easy to show that the Chern character homomorphism (\ref{Chernhomo}) for a class $x$ does not depend on its representation in terms of vector bundles, hence it is a well defined ring homomorphism.\\
The Chern character homomorphism can be extended to a group homomorphism
\begin{displaymath}
\ch:\kgroup{-n}{\topsp{X},\topsp{Y}}\to\topsp{H}^{*}(\topsp{X},\topsp{Y};\mathbb{Q})
\end{displaymath}
by defining
\begin{equation}\label{cherncar}
\ch(x):=\alpha((\sigma^{n})^{-1}{\rm ch}(x'))\quad{x\in}\kgroup{-n}{\topsp{X},\topsp{Y}}
\end{equation}
where $x'$ is the class corresponding to $x$ in $\redk{0}{\Sigma^{n}(\topsp{X}/\topsp{Y})}$, $\sigma^{n}$ is the suspension isomorphism in cohomology, and $\alpha$ is the canonical isomorphism    
\begin{displaymath}
\alpha:\widetilde{\topsp{H}}^{*}(\topsp{X}/\topsp{Y};\mathbb{Q})\xrightarrow{\simeq}\topsp{H}^{*}(\topsp{X},\topsp{Y};\mathbb{Q})
\end{displaymath}
Moreover, one can prove that the homomorphism (\ref{cherncar}) is compatible with Bott periodicity: this allows to define $\ch(x)$ for $x\in\kgroup{n}{\topsp{X},\topsp{Y}}$, where $n$ is any integer.\\
Finally, by using a spectral sequence argument, the following fundamental result was proven in \cite{Atiyah-Hirzebruch}
\begin{theorem}\label{chern}
Let $\topsp{X}$ be a finite CW-complex. Then the homomorphism
\begin{displaymath}
\ch\otimes{id}:\kgroup{*}{\topsp{X}}\otimes\mathbb{Q}\to\topsp{H}^{*}(\topsp{X};\mathbb{Q})
\end{displaymath}
is an isomorphism and maps $\kgroup{0}{\topsp{X}}\otimes\mathbb{Q}$ onto $\topsp{H}^{ev}(\topsp{X};\mathbb{Q})$ and $\kgroup{1}{\topsp{X}}\otimes\mathbb{Q}$ onto $\topsp{H}^{odd}(\topsp{X};\mathbb{Q})$.
\end{theorem}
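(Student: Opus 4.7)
The plan is to proceed by induction on the cellular filtration of $\topsp{X}$ via the five lemma, reducing the computation to a verification on spheres.

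First I would observe that both $\kgroup{*}{-}\otimes\mathbb{Q}$ and $\topsp{H}^{*}(-;\mathbb{Q})$ (regarded as $\mathbb{Z}/2$-graded by parity of degree) constitute generalized cohomology theories on the category of finite CW-pairs: tensoring the long exact sequence \rf{exact} with $\mathbb{Q}$ preserves exactness because $\mathbb{Q}$ is a flat $\mathbb{Z}$-module, while the corresponding long exact sequence for singular cohomology with rational coefficients is standard. The Chern character defined in \rf{cherncar} is a natural transformation between the two, commuting with both the suspension isomorphism and the connecting homomorphism $\delta$, as encoded in its very definition.

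Second, I would check the statement on a point and then on every sphere. For $\topsp{X}=\{\pt\}$ we have $\kgroup{0}{\pt}\otimes\mathbb{Q}\simeq\mathbb{Q}$ generated by $[\underline{\mathbb{C}}]$, which the Chern character sends to $1\in\topsp{H}^{0}(\pt;\mathbb{Q})$, while $\kgroup{1}{\pt}\otimes\mathbb{Q}=0=\topsp{H}^{odd}(\pt;\mathbb{Q})$. Using the explicit description of the Bott generator $u=[{\rm H}]-[\theta_{1}]\in\redk{0}{\topsp{S}^{2}}$ recorded after Theorem \ref{complexpoint}, one has
\begin{equation*}
\ch(u)=\ch({\rm H})-1=c_{1}({\rm H}),
\end{equation*}
which is a generator of $\widetilde{\topsp{H}}^{2}(\topsp{S}^{2};\mathbb{Q})$. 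Combining this with the multiplicativity of the Chern character, the Bott periodicity of Theorem \ref{bott}, and the suspension isomorphism then shows that $\ch\otimes{id}$ is an isomorphism for every sphere $\topsp{S}^{n}$, mapping the even-degree part onto even-degree cohomology and the odd-degree part onto odd-degree cohomology.

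Third, I would induct on the number of cells of $\topsp{X}$. Assuming the theorem holds for the $(n-1)$-skeleton $\topsp{X}^{(n-1)}$, the quotient $\topsp{X}^{(n)}/\topsp{X}^{(n-1)}$ is a wedge of $n$-spheres, for which both theories split additively over the wedge summands and the previous step applies. The long exact sequences of the pair $(\topsp{X}^{(n)},\topsp{X}^{(n-1)})$ for $\kgroup{*}{-}\otimes\mathbb{Q}$ and for $\topsp{H}^{*}(-;\mathbb{Q})$ are linked by the natural transformation $\ch\otimes{id}$, and the five lemma promotes the isomorphism from the skeleton and the wedge of spheres to the full $n$-skeleton. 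Since $\topsp{X}$ has finitely many cells, the induction terminates at $\topsp{X}$ itself. The main obstacle is the sphere calculation: one must pin down the normalization of the Chern character on the Bott class and propagate it consistently through the periodicity isomorphism, in order to ensure that $\ch(u^{n})$ actually generates $\widetilde{\topsp{H}}^{2n}(\topsp{S}^{2n};\mathbb{Q})$. Once this normalization is secured, the remaining argument is a purely formal five-lemma bookkeeping that automatically respects the even/odd splitting, since the connecting maps shift the $\mathbb{Z}/2$-degree compatibly on both sides of $\ch\otimes{id}$.
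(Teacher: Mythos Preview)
Your argument is correct. It is the standard comparison-theorem proof: once the Chern character is known to be a natural transformation of $\mathbb{Z}/2$-graded cohomology theories commuting with the connecting maps, and once it is checked to be an isomorphism on spheres (equivalently, on a point after unwinding suspension and Bott periodicity), induction over the skeleta via the five lemma finishes the job. Your handling of the normalization on the Bott class is the only genuinely non-formal step, and you identify it correctly.

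The paper, however, does not give a proof at all: it simply attributes the result to Atiyah--Hirzebruch~\cite{Atiyah-Hirzebruch} and remarks that it is proven ``by using a spectral sequence argument'', namely the Atiyah--Hirzebruch spectral sequence comparing $\K^{*}(\topsp{X})$ with $\H^{*}(\topsp{X};\K^{*}(\pt))$, which collapses rationally. Your cellular five-lemma argument is in fact a hands-on unpacking of that same spectral sequence comparison, so the two approaches are morally equivalent; yours has the advantage of being entirely elementary and self-contained, while the spectral sequence viewpoint generalizes more readily (e.g.\ to other coefficient rings or to detecting where the integral map fails to be an isomorphism).
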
  
As a corollary of Theorem \ref{chern}, when $\topsp{X}$ is a finite CW-complex, $\kgroup{*}{\topsp{X}}$ is a finitely generated abelian group. This technical simplification in a sense justifies the restriction to CW-complexes.\\[2mm]
\indent In the case in which $\topsp{X}$ is a smooth manifold, the Chern character homomorphism is closely related to the Gysin homomorphism, which we will illustrate in the following.\\
Consider two locally compact manifolds $\topsp{X}$ and $\topsp{Y}$, such that ${\rm dim}\topsp{Y}-{\rm dim}\topsp{X}=0\:\text{mod}\:2$, and let $f:\topsp{X}\to\topsp{Y}$ be a proper embedding\footnote{The embedding condition can be relaxed. See \cite{KAROUBI}.}. Denote with $\nu_{\topsp{X}}$ the normal bundle of $\topsp{X}$ in $\topsp{Y}$, and suppose it is equipped with a $\rm spin^{c}$ structure, which is a restriction on the map $f$. Moreover, identify this normal bundle with a tubular neighbourhood $\topsp{N}$ of $\topsp{X}$ in $\topsp{Y}$. Then, we can define the \emph{Gysin homomorphism} 
\begin{displaymath}
f_{*}:\ckgroup{0}{\topsp{X}}\to\ckgroup{0}{\topsp{Y}}
\end{displaymath}
as the composition
\begin{displaymath}
\ckgroup{0}{\topsp{X}}\xrightarrow{\mathfrak{T}_{\topsp{X},\nu_{\topsp{X}}}}\ckgroup{0}{\nu_{\topsp{X}}}\simeq\ckgroup{0}{\topsp{N}}\xrightarrow{j^{*}}{\ckgroup{0}{\topsp{Y}}}
\end{displaymath}
where the map $j^{*}$ is induced by the map $j:\topsp{Y}^{+}\to\topsp{N}^{+}$ defined as
\begin{displaymath}
\begin{array}{lr}
j(z)&=z,\:\forall\:z\in\topsp{N}\\
j(z)&=\infty,\forall\:z\notin\topsp{N}
\end{array}
\end{displaymath}
with $\infty$ denoting the point that compactifies $\topsp{N}$. Intuitively, $j^{*}$ extends a class in $\ckgroup{0}{\topsp{N}}$ via trivial bundles, and is usually referred to as ``extension by zero''.\\
The Gysin homomorphism is also called the ``wrong way'' morphism, as it ``goes'' in the direction opposite to the contravariance of the K-functor.\\
The Gysin homomorphism does not depend on the tubular neighbourhood $\topsp{N}$, and only depends on the homotopy class of $f$, as the vector bundle $\nu_{\topsp{X}}$ is defined as $f^{*}(\topsp{TY})/\topsp{TX}$, and the K-functor is homotopy invariant.\\
Moreover, the Gysin homomorphism enjoys the following functoriality property, which will be very useful in later chapters.\\
Let $f:\topsp{X}\to\topsp{Y}$ and $g:\topsp{Y}\to\topsp{Z}$ be proper embeddings satisying the hypothesis above. Then
\begin{equation}
(g\circ{f})_{*}=g_{*}f_{*}
\end{equation}
As the Thom isomorphism can be extended to higher K-groups, the Gysin homomorphism can be extended to a homomorphism 
\begin{displaymath}
f_{*}:\ckgroup{-1}{\topsp{X}}\to\ckgroup{-1}{\topsp{Y}}
\end{displaymath}
Moreover, by recalling the discussion on orientation classes in section \ref{Thomsection}, we can relax the condition that ${\rm dim}\topsp{Y}-{\rm dim}\topsp{X}=0\:\text{mod}\:2$. Indeed, in general, we will have a Gysin homomorphism 
\begin{displaymath}
f_{*}:\ckgroup{i}{\topsp{X}}\to\ckgroup{i+n}{\topsp{Y}}
\end{displaymath}
where $n$ is the rank of the normal bundle. Again, the price we pay is that the homomorphism $f_{*}$ so obtained does not induce a natural transformation.\\
Finally, along the same lines, one can define a Gysin homomorphism in ordinary cohomology
\begin{displaymath}
f^{\topsp{H}}_{*}:\topsp{H}^{*}(\topsp{X};\mathbb{Q})\to\topsp{H}^{*}(\topsp{Y};\mathbb{Q})
\end{displaymath}
\indent As mentioned before, the Chern character homomorphism and the Gysin homomorphism are closely related via the Atiyah-Hirzebruch version of the Riemann-Roch theorem \cite{RiemRoch}.\\
Consider two locally compact manifolds $\topsp{X}$ and $\topsp{Y}$ and a proper embedding $f:\topsp{X}\to\topsp{Y}$ as above, and denote with $d(\nu_{X})$ the cohomology class defining the $\rm spin^{c}$ structure on the normal bundle $\nu_{\topsp{X}}$. Then we have the following
\begin{theorem}
(\textbf{Riemann-Roch}) For each class $x\in\kgroup{0}{\topsp{X}}$ we have the relation
\begin{displaymath}
\ch(f_{*}(x))=f^{\topsp{H}}_{*}(\e^{d(\nu_{X})/2}\rgenus{\nu_{\topsp{X}}}\:\ch(x))
\end{displaymath}
\end{theorem}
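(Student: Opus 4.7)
The plan is to reduce the Riemann--Roch formula to a compatibility statement between the Chern character and the Thom isomorphism for a ${\rm spin}^c$ vector bundle, and then verify that compatibility via the splitting principle.

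First, recall from Section 3.6.2 that the Gysin map $f_*:\kgroup{0}{\topsp{X}}\to\ckgroup{0}{\topsp{Y}}$ is defined as the composition $j^* \circ \mathfrak{T}_{\topsp{X},\nu_{\topsp{X}}}$, where $\mathfrak{T}_{\topsp{X},\nu_{\topsp{X}}}$ is the K-theoretic Thom isomorphism and $j^*$ is extension by zero. An identical factorization holds for $f_*^{\topsp{H}}$, with $\mathfrak{T}^{\topsp{H}}_{\topsp{X},\nu_{\topsp{X}}}$ in place of its K-theoretic counterpart. Since $j^*$ is induced by an honest pointed map of spaces, it commutes with the Chern character by naturality of $\ch$. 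Consequently the theorem reduces to the identity
\begin{equation*}
\ch\bigl(\mathfrak{T}_{\topsp{X},\topsp{E}}(x)\bigr)=\mathfrak{T}^{\topsp{H}}_{\topsp{X},\topsp{E}}\bigl(\e^{d(\topsp{E})/2}\rgenus{\topsp{E}}\cup\ch(x)\bigr)
\end{equation*}
in $\topsp{H}^*_{cpt}(\topsp{E};\mathbb{Q})$, for every ${\rm spin}^c$ real vector bundle $\pi:\topsp{E}\to\topsp{X}$.

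Second, unpack the Thom isomorphisms as cup product with the respective Thom classes, $\mathfrak{T}_{\topsp{X},\topsp{E}}(x)=\pi^*(x)\cup\tau_{\topsp{K}}(\topsp{E})$ and $\mathfrak{T}^{\topsp{H}}_{\topsp{X},\topsp{E}}(a)=\pi^*(a)\cup\tau_{\topsp{H}}(\topsp{E})$. Using that $\ch$ is a ring homomorphism and that $\ch\circ\pi^*=\pi^*\circ\ch$, the displayed identity is equivalent to the single characteristic-class statement
\begin{equation*}
\ch\bigl(\tau_{\topsp{K}}(\topsp{E})\bigr)=\pi^*\bigl(\e^{d(\topsp{E})/2}\rgenus{\topsp{E}}\bigr)\cup\tau_{\topsp{H}}(\topsp{E}) \in \topsp{H}^*_{cpt}(\topsp{E};\mathbb{Q}),
\end{equation*}
modulo the sign/orientation conventions fixed by the choices in Section 3.6.1 (with the inverse of $\rgenus{\topsp{E}}$ possibly appearing in different conventions).

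Third, the above identity is verified by the splitting principle. Pulling back along a suitable flag-like space, one may assume $\topsp{E}$ splits as a direct sum of complex line bundles (with a separate line bundle implementing the ${\rm spin}^c$ determinant); then both $\tau_{\topsp{K}}$ and $\tau_{\topsp{H}}$ factor multiplicatively over the summands (using the description of $\tau_{\topsp{K}}$ from Theorem 3.6.4 as $[\Lambda^{even},\Lambda^{odd};\mu]$), and it suffices to check the identity for a single complex line bundle $\topsp{L}$. A direct computation restricting to the zero section gives $i^*\ch(\tau_{\topsp{K}}(\topsp{L}))=1-\e^{-c_1(\topsp{L})}$ and $i^*\tau_{\topsp{H}}(\topsp{L})=c_1(\topsp{L})$, so the ratio is the multiplicative sequence of the power series $(1-\e^{-x})/x$, which is the well-known expansion of the characteristic class $\e^{x/2}/\rgenus{{\cdot}}(x)$ via the elementary identity
\begin{equation*}
\frac{1-\e^{-x}}{x}=\e^{-x/2}\,\frac{\e^{x/2}-\e^{-x/2}}{x}=\frac{\e^{-x/2}}{\rgenus{{}}(x)}.
\end{equation*}
Multiplicativity of $\rgenus{{\cdot}}$ and $\e^{(\cdot)/2}$ over Whitney sums gives the claimed formula for general ${\rm spin}^c$ $\topsp{E}$, and the stabilization trick of Section 3.6.1 (adding a trivial summand $\theta_p$ with its canonical ${\rm spin}^c$ structure, for which $\rgenus{\theta_p}=1$ and $d(\theta_p)=0$) shows that the identity is independent of the auxiliary choice made in extending from even-rank to arbitrary-rank bundles.

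The main obstacle is the careful sign and orientation bookkeeping in Step 3: while the strategy is conceptually transparent, matching the precise form $\e^{d(\topsp{E})/2}\rgenus{\topsp{E}}$ (as opposed to its inverse) demands that the Thom class conventions for $\tau_{\topsp{K}}$ and $\tau_{\topsp{H}}$, the chosen integral lift of $w_2$ to $d(\topsp{E})$, and the K-orientation of $\topsp{E}$ all be consistently tracked; once this is done the computation is an elementary manipulation of multiplicative characteristic classes, and agrees with the classical Atiyah--Hirzebruch Riemann--Roch theorem.
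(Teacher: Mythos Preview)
The paper does not actually prove this theorem: it is stated as the classical Atiyah--Hirzebruch Riemann--Roch theorem with a citation to \cite{RiemRoch}, and then used as a black box in the subsequent D-brane charge computation. So there is no ``paper's own proof'' to compare against.

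Your sketch is the standard proof strategy (reduce to the compatibility $\ch(\tau_\K)=\pi^*(\cdots)\cup\tau_\H$, then verify this via the splitting principle and a line-bundle computation), and it is essentially correct. Two small points worth tightening. First, in Step~3 you invoke the description of $\tau_\K$ as $[\Lambda^{\rm even},\Lambda^{\rm odd};\mu]$ from Theorem~3.15, but that theorem is for \emph{complex} bundles; for a real ${\rm spin}^c$ bundle the Thom class is built from the spinor bundles $[\pi^*S^+_\complex,\pi^*S^-_\complex;\mu]$ (Theorem~3.17). After splitting, these agree up to a twist by the determinant line, which is exactly where the factor $\e^{d(\nu_\topsp{X})/2}$ enters---so this should be made explicit rather than folded into the phrase ``a separate line bundle implementing the ${\rm spin}^c$ determinant''. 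Second, your own line-bundle computation produces $\e^{-x/2}/\rgenus{}(x)$, i.e.\ the \emph{inverse} of the class appearing in the stated formula; you correctly flag this as a convention issue, and indeed the paper itself later writes the formula with $\Todd(\nu)^{-1}$ in equation~(\ref{ordGRRthm}), so the discrepancy is a matter of which orientation convention is adopted for the K-theoretic versus cohomological Thom classes, not an error in your argument.
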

This form of the Riemann-Roch theorem will be used in the next section, where we will see how the K-theoretic machinery developed so far is related to D-branes in String theory. 
\section{K-theory and type IIA/B D-branes}\label{typeIIbranes} 
Having developed the necessary notions in the previous sections, we will describe how K-theory is related to D-branes in superstring theory. In particular, in this section we will consider type II String theory on a 10-dimensional spin manifold $\topsp{M}$, while in the next section we will investigate type I String theory.\\
We will present two supporting arguments for the K-theoretic description of D-branes, based on the fact that D-branes are currents for Ramond-Ramond fields, and on Sen's conjectures, respectively.\\
We first need some comments on the anomalous coupling (\ref{anomalous}) between the total Ramond-Ramond potential and a D-brane. Recall that in section \ref{super} we made the assumption that the worldvolume $\topsp{Q}$ wrapped by the D-brane is a spin manifold, pointing out that this condition is \emph{not} necessary from a string theoretic point of view. In presence of a D-brane the partition function of the superstring develops an anomaly, called the \emph{Freed-Witten anomaly} \cite{Freed1999}: in a topologically trivial B-field setting, this anomaly cancels exactly when the normal bundle of $\topsp{Q}$ in $\topsp{M}$ admits a $\rm spin^{c}$ structure. Since $\topsp{M}$ is a spin manifold, the manifold $\topsp{Q}$ necessarily admits a $\rm spin^{c}$ structure.\\
Moreover, the anomalous coupling is modified as       
\begin{equation}\label{anom2}
\int_{\man{Q}}i^{*}\text{C}\wedge\e^{d(\nu_{Q})/2}\text{ch}(\text{E}){i^{*}\sqrt{\rgenus{\man{TM}}}}\dfrac{1}{\rgenus{\man{TQ}}}
\end{equation}
which coincides with the coupling (\ref{anomalous}) when $\topsp{Q}$  is spin manifold, as in this case the class $d(\nu_{Q})$ vanishes.\\[2mm]
Consider the cohomology class 
\begin{displaymath}
j_{\topsp{Q}}=\e^{d(\nu_{Q})/2}\text{ch}(\text{E}){i^{*}\sqrt{\rgenus{\man{TM}}}}\dfrac{1}{\rgenus{\man{TQ}}}
\end{displaymath}
As we have discussed in section \ref{generalized}, the class $j_{\topsp{Q}}$ represents the charge associated to the Ramond-Ramond current generated by the D-brane wrapping $\topsp{Q}$. More precisely, for such an interpretation to be valid, the class $j_{\topsp{Q}}$ needs to be ``pushed'' into the spacetime manifold $\topsp{M}$ as
\begin{equation}\label{push}
i^{\topsp{H}}_{*}(j_{\topsp{Q}})
\end{equation}
where we recall that $i:\topsp{Q}\to\topsp{M}$ is the embedding map.
By noticing that 
\begin{displaymath}
\nu_{\topsp{Q}}\simeq{i^{*}\topsp{TM}}/{\topsp{TQ}}
\end{displaymath}
and using the fact that the roof genus is a characteristic class, we have the identity
\begin{displaymath}
\rgenus{\nu_{\topsp{Q}}}=i^{*}\rgenus{\topsp{TM}}/\rgenus{\topsp{TQ}}
\end{displaymath}
Hence, we can rewrite (\ref{push}) as
\begin{displaymath}
i_{*}^{\topsp{H}}(\e^{d(\nu_{Q})/2}\rgenus{\nu_{\topsp{Q}}}\text{ch}(\text{E})\dfrac{1}{i^{*}\sqrt{\rgenus{\man{TM}}}})
\end{displaymath}
By using the following property of the cohomological Gysin homomorphism
\begin{displaymath}
i_{*}^{\topsp{H}}(\alpha\cup{i^{*}\beta})=i_{*}^{\topsp{H}}(\alpha)\cup\beta,\quad\forall{\alpha\in\topsp{H}^{*}(\topsp{Q};\mathbb{Q})},\forall{\beta\in\topsp{H}^{*}(\topsp{M};\mathbb{Q})}
\end{displaymath}
we have that
\begin{displaymath}
i_{*}^{\topsp{H}}(\e^{d(\nu_{Q})/2}\rgenus{\nu_{\topsp{Q}}}\text{ch}(\text{E})\dfrac{1}{i^{*}\sqrt{\rgenus{\man{TM}}}})=i_{*}^{\topsp{H}}(\e^{d(\nu_{Q})/2}\rgenus{\nu_{\topsp{Q}}}\text{ch}(\text{E}))\cup\dfrac{1}{\sqrt{\rgenus{\man{TM}}}}
\end{displaymath}
Finally, by using the Riemann-Roch theorem we have
\begin{equation}\label{charge}
i^{\topsp{H}}_{*}(j_{\topsp{Q}})=\ch{(i_{*}[\rm E])}\cup\dfrac{1}{\sqrt{\rgenus{\man{TM}}}}
\end{equation}
The identity (\ref{charge}) strongly suggests that the charge for a D-brane wrapping a worldvolume $\topsp{Q}$ equipped with a Chan-Paton bundle $\topsp{E}$ is naturally given by the element $i_{*}[\rm E]\in\kgroup{0}{\topsp{M}}$.\\
This argument was developed in \cite{Minasian1997}, and constitutes the first evidence for the importance of K-theory in String theory. However, the argument is strictly speaking only valid over $\mathbb{Q}$, where the map 
\begin{displaymath}
\ch\cup{\dfrac{1}{\sqrt{\rgenus{\man{TM}}}}}:\kgroup{*}{\topsp{M}}\otimes{\mathbb{Q}}\to\topsp{H}^{*}(\topsp{M};\mathbb{Q})  
\end{displaymath}
is a group isomorphism. In this case, then, the cohomological description of D-branes ``coincides'' with the K-theoretical one, with the latter having the advantage of being more natural.\\[2mm]
\indent An argument which relates the full K-theory groups to D-brane charges is based on Sen's conjectures, as discussed in section \ref{senconjecture}: of course, the price we pay consists in the fact that we have to invoke (open) string tachyon condensation, which is not yet fully understood, both from the physical and the mathematical point of view.\\
Recall from section \ref{senconjecture} that Witten made the observation that in type IIB String theory, a configuration of $n$ D9-branes and $n$ $\overline{\rm D9}$-branes $(\rm E,F)$ has to be considered equivalent to the configuration $(\rm E\oplus{H},F\oplus{H})$, because the brane-antibrane system $(\rm H,H)$ is able to decay in the string vacuum, as conjectured by Sen.\\
As we have seen in this chapter, the equivalence classes of configurations $(\rm E,F)$ under brane-antibrane annihilation are elements in $\kgroup{0}{\topsp{M}}$, or more precisely $\redk{0}{\topsp{M}}$, as the bundles $\rm E$ and $\rm F$ have the same rank.\\
Moreover, in \cite{Witten1998} Witten was able to K-theoretically interpret Sen's construction, which allows to obtain D$p$-branes with $p<9$ as the decay product of a system of brane-antibranes of higher dimension. According to this interpretation, the group of possible charges for a D$p$-brane wrapping a $\rm spin^{c}$ submanifold $\topsp{Q}\subset{\topsp{M}}$ is given by 
\begin{displaymath}
\ckgroup{0}{\topsp{N}}\simeq\kgroup{0}{\topsp{B(N)},\topsp{S(N)}}
\end{displaymath}
with $\rm N$ denoting the tubular neighbourhood identifying the normal bundle to $\topsp{Q}$ in $\topsp{M}$.\\
In particular, given a single D$p$-brane wrapping $\topsp{Q}$ with trivial Chan-Paton bundle, the system of D9-$\overline{\rm D9}$-brane decaying to $\topsp{Q}$ is given by
\begin{displaymath}\label{charge-decay}
i_{*}(1)\in\kgroup{0}{\topsp{M}}
\end{displaymath}
where $1$ denotes the class of the trivial line bundle over $\topsp{Q}$.\\
Applying the above machinery to a flat D$p$-brane wrapping $\mathbb{R}^{1,p}\subset\topsp{M}^{10}$, with $\topsp{M}^{10}$ the ten dimensional Minkowski spacetime, we obtain that the its charge group is given by
\begin{displaymath}
\ckgroup{0}{\topsp{B(N)},\topsp{S(N)}}\simeq\ckgroup{0}{\mathbb{R}^{1,p}\times\topsp{D}^{9-p},\mathbb{R}^{1,p}\times\topsp{S}^{9-p}}\simeq\redk{0}{\topsp{S}^{9-p}}
\end{displaymath}
where we have used the compact support relative K-theory, as $\mathbb{R}^{1,p}$ is not compact.\\
As $\redk{0}{\topsp{S}^{9-p}}=0$ for $p$ even, we find that the K-theoretical description agrees with, and in a certain sense justifies the statement  that odd-dimensional (flat) D-branes in type IIB String theory are necessarily unstable, as they cannot carry any D-brane charge.\\[2mm]
The discussion for type IIA is less straightforward. Indeed, the fact that Sen's construction works in even codimension let Witten propose to relate branes not to bundles on $\topsp{M}$, but to bundles on $\topsp{S}^{1}\times\topsp{M}$. Namely, given a D$p$-brane wrapping an odd-dimensional submanifold $\topsp{Q}\subset{\topsp{M}}$, one identifies $\topsp{Q}$ with ${\rm pt}\times{\topsp{Q}}$ in $\topsp{S}^{1}\times\topsp{M}$, where $\rm pt$ is any point in $\topsp{S}^{1}$. Applying Sen's construction, the D$p$-brane wrapping $\topsp{Q}$ determines an element in $\kgroup{0}{\topsp{S}^{1}\times\topsp{M}}$, which turns out to be trivial when restricted to $\rm pt\times\topsp{M}$.\\
By using the isomorphism  
\begin{displaymath}
\kgroup{-1}{\topsp{M}}\simeq\text{ker}\left[\kgroup{0}{\topsp{S}^{1}\times\topsp{M}}\to\kgroup{0}{\rm pt\times\topsp{M}}\right]
\end{displaymath} 
it follows that type IIA brane-antibrane configurations have to be considered equivalent if they determine the same element in $\kgroup{-1}{\topsp{M}}$. Of course, the construction above is not natural, as we have to make a choice of how to embed $\topsp{Q}$ in $\topsp{S}^{1}\times{\topsp{M}}$.\\
Similarly, the group of charges for a D$p$-brane wrapping $\topsp{Q}$ is given by
\begin{displaymath}
\ckgroup{-1}{\topsp{B(N)},\topsp{S(N)}}
\end{displaymath}
which agrees with the known results for flat D$p$-branes in Minkowski spacetime.\\
From our point of view, the appearence of the functor $\topsp{K}^{-1}$ can be motivated in the following way. As we have seen in section \ref{gysinhomo}, a Gysin homomorphism for a map $i:\topsp{X}\to\topsp{Y}$ such that the normal bundle admits a $\rm spin^{c}$ structure \emph{does} always exist. Hence, given a D-brane wrapping an odd codimension manifold $\topsp{Q}$ equipped with a trivial Chan-Paton bundle, its D-brane charge is given by the element
\begin{displaymath}
i_{*}(1)\in\ckgroup{-1}{\topsp{M}}
\end{displaymath}
The classes in $\ckgroup{-1}{\topsp{M}}$ do not represent a brane-antibrane system: instead, they can can be conveniently represented as a system of D9-branes equipped with the tachyon field causing the instability of D9-branes in Type IIA String theory. We refer the reader to \cite{horava-1999-2} for details on this construction.
\section{KO-theory and Type I D-branes: torsion effects}\label{typeIbranes}
As we mentioned in chapter 1, Type I String theory is a theory of open and closed strings, including oriented as well as unoriented worldsheets. The Fock space for a String theory of unoriented worldsheets is obtained by considering states in the Type II Fock space which are invariant under the \emph{worldsheet parity operator} $\Omega$, induced by reversing the orientation on the worldsheet \cite{Polchinski,quantumath}.\\
The operator $\Omega$ induces an action also on the Chan-Paton bundle to a D-brane, forcing it to be an O($n$)-bundle. Moreover, the Ramond-Ramond field content in Type I differs from that of Type II: indeed, the Ramond-Ramond gauge theory in Type I String theory consists of $p$-forms, for $p$=2,6.\\
Finally, Freed-Witten anomaly cancellation imposes that a D-brane can only wrap a spin manifold.\\
The right framework to describe D-brane charges is then $\rm KO$-theory. Namely, a configuration of Type I D9-branes $\rm (E,F)$ modulo brane-antibrane annihilation is represented by a class $x$ in $\kogroup{0}{\topsp{M}}$.\\
Moreover, as in Type II String theory, given a D$p$-brane wrapping a spin manifold $\topsp{Q}\subset\topsp{M}$, the group of its admissable charges is given by
\begin{displaymath}
\kogroup{0}{\topsp{B(N)},\topsp{S(N)}}
\end{displaymath}
Applying this classification to a D$p$-brane wrapping $\mathbb{R}^{1,p}\subset\topsp{M}^{10}$, we obtain
\begin{displaymath}
\ckogroup{0}{\topsp{B(N)},\topsp{S(N)}}\simeq\ckogroup{0}{\mathbb{R}^{1,p}\times\topsp{D}^{9-p},\mathbb{R}^{1,p}\times\topsp{S}^{9-p}}\simeq\redko{0}{\topsp{S}^{9-p}}
\end{displaymath}
where
\begin{displaymath}
\redko{0}{\topsp{S}^{n}}\simeq\left\{
\begin{array}{l}
\mathbb{Z},\:n=0,4\:\text{mod}\:8\\
\mathbb{Z}_{2},\:n=1,2\:\text{mod}\:8\\
0,\:\text{otherwise}
\end{array}\right .
\end{displaymath}
We have $\redko{0}{\topsp{S}^{9-p}}\simeq{\mathbb{Z}}$, for $p=1,5,9$: for $p=1,5$, this agrees with the fact that the corresponding supersymmetric D-branes couple to Ramond-Ramond fields, and hence are stable. Moreover, we would expect these to be the \emph{only} stable D-branes in Type I String theory. A new prediction of K-theory is that this is not the case. Indeed, for $p=7,8$, we have $\redko{0}{\topsp{S}^{9-p}}\simeq\mathbb{Z}_{2}$: hence, D7-branes and D8-branes can carry a charge that would protect them from decaying. Morever, since the group $\mathbb{Z}_{2}$ is a pure torsion group, this charge is not associated to any spacetime field coupling to the D-brane, and has to be considered as a purely topological effect.\\
The K-theoretic description also naturally describes stable objects for $p=-1$ and $p=0$, which are called the \emph{D-instanton} and the\emph{ D-particle}.\\
In contrast to Type II String theory, constructing a system of spacetime filling brane-antibranes which has a given D$p$-brane as its decay product is not very systematic, in the sense that there is no unified procedure, or homomorphism, realizing this construction. Indeed, in \cite{Witten1998} the different stable D$p$-branes are treated with different methods. This is due to the fact that the Thom isomorphism in KO-theory implies that
\begin{equation}\label{thomKO}
\ckogroup{0}{\topsp{B(N)},\topsp{S(N)}}\simeq{\ckogroup{-p}{\topsp{Q}}}
\end{equation}
where $\topsp{Q}$ is the wrapped manifold, and $p$ is the rank of the normal bundle of $\topsp{Q}$ in $\topsp{M}$. As we have seen, elements of $\ckogroup{-p}{\topsp{Q}}$ are not represented by ``differences'' of vector bundles on $\topsp{Q}$, hence there is no natural way to identify them as a system of D-branes. Indeed, the higher KO-groups only play a role through the isomorphism (\ref{thomKO}), which is one of the limitations of the K-theorical description of D-branes. We will address a possible solution to this problem in the next chapter, where we will develop another description of D-branes, based on KO-homology, the dual theory to KO-theory. We refer to \cite{olsenszabodescent} for an alternative interpretation of the higher KO-groups, and to \cite{Olsen1999} for extensive review on the application of K-theory to String theory.

\chapter{KO-homology and Type I D-branes}
As it was emphasized in the last sections of chapter 3, the K-theorical description of D-branes in String theory is based on the Sen-Witten mechanism of brane-antribrane annhilation. However, we have seen that in an ordinary Maxwell theory of p-forms the sources are extended objects, and their charges are related to the homology cycles they represent. We have argued that for the Ramond-Ramond gauge theory this is not the case: anyway, intuitively we still expect the right mathematical framework to be given by some sort of homology theory that would take into account the relation between K-theory and D-branes as exposed in the previous chapter.\\
In Type II String theory, this simple observation suggests that a much more natural description of D-branes can be given in terms of K-homology, the homological theory associated to K-theory, as thouroghly emphasized in \cite{periwal-2000-0007,harvey-2001-42,matsuo-2001-499,szabo-2002-17}.\\
More precisely, K-homology has two equivalent representations: an analytic representation, in terms of $\rm C^{*}$-algebras and Fredholm modules, and a geometric one, constructed by Baum and Douglas in \cite{Baumdouglas,Baumdouglas2}. In particular, the Baum-Douglas construction was extensively used in \cite{Reis2006} to provide a rigorous geometric description of D-branes in Type II String theory in various topologically non trivial backgrounds.\\
In this chapter, we will present new results concerning KO-homology, the homology theory associated to KO-theory.\\
 From the mathematical perspective, we construct a geometric realization of KO-homology, and we argue that it is indeed isomorphic to the homology theory defined via the loop spectrum of KO-theory, which we refer to as \emph{spectral KO-homology}. We will also develop the analytic description of KO-homology, using Kasparov's formalism for real $C^{*}$-algebras, which represents a unified description of both K-theory and K-homology.\\
We then construct an homomorphism between geometric and analytic KO-homology, and give a detailed and explicit proof that such homomorphism induces a natural equivalence between the two representations. We want to comment, at this point, that the equivalence between geometric and analytic K-homology has been a sort of a ``folklore theorem'' until recently \cite{baum-2007-3}, as the original work of Baum and Douglas did not contain any accurate proof of such equivalence. In particular, the work \cite{baum-2007-3} also contains a proof of the equivalence between geometric and analytic KO-homology: neverthless, the proof presented in this chapter is fundamentally different, and the whole construction is more apt for physical applications, as we will see later on. Indeed, the approach followed here has strongly been inspired by the point of view in \cite{Baumdouglas} that index theory is based on the equivalence between geometric and analytic K-homology: this point of view is reinforced by the introduction of certain geometric invariants that will help us to derive some cohomological index formulas in the real case.\\[2mm]
\indent From the physical perspective, we introduce the notion of \emph{wrapped D-brane} and of \emph{wrapping charge} of a D-brane. In particular, we will illustate how the higher K-homology groups can naturally be interpreted in terms of wrapped D-branes, and we will argue that the wrapping charge of a D-brane is a genuinely different concept in Type I String theory..\\
We have mentioned in chapter 2 that the interpretation of a D-brane as a submanifold of the spacetime is not very accurate, and that a distinction should be made somehow between the D-brane itself and the worldvolume it wraps. This point of view emerges here, as we will construct stable torsion D-branes wrapping a single point in the spacetime.                          
\section{Dual theories and spectral KO-homology}
In this section we will explain in which sense KO-homology is ``associated'' to KO-theory, and we will define spectral KO-homology.\\
First we recall some properties of cohomology theories in the category of CW-complexes, referring to \cite{switzer} for a detailed exposition.\\
An \emph{$\Omega$-spectrum}, or \emph{loop spectrum} for a generalized cohomology theory $\rm k^{*}$ is given by a sequence of CW-complexes $\{\topsp{K}_{n}\}_{n\in\mathbb{Z}}$ together with homotopy equivalences 
\begin{equation}\label{loop}
\topsp{K}_{n}\to\Omega\topsp{K}_{n+1}
\end{equation}
where $\Omega$ denotes the loop space functor, such that the functor ${\rm k}^{n}$ can be represented
\begin{displaymath}
{{\rm k}^{n}}(\topsp{X})=[\topsp{X},\topsp{K}_{n}]\quad{\forall\:n\in\mathbb{Z}}
\end{displaymath}
for any CW-complex $\topsp{X}$.\\
By considering the maps $\sigma_{n}:\Sigma(\topsp{K}_{n})\to{\topsp{K}_{n+1}}$ adjoint to the maps in (\ref{loop}), we can define the unreduced generalized homology theory associated to $\rm k_{*}$ by setting
\begin{displaymath}
{\rm k_{i}(\topsp{X})}:={\rm \tilde{k}_{i}(\topsp{X}^{+})}:=\text{lim}_{n}\pi_{n+i}(\topsp{X}^{+}\wedge{\topsp{K}_{n}})
\end{displaymath}
where the inductive limit is taken using the maps $\sigma_{i}$
\begin{displaymath}
\begin{array}{rl}
\pi_{n+i}(\topsp{X}^{+}\wedge{\topsp{K}_{n}})=&[\topsp{S}^{n+i},\topsp{X}^{+}\wedge{\topsp{K}_{n}}]_{*}\xrightarrow{\text{susp}}[\Sigma(\topsp{S}^{n+i}),\Sigma(\topsp{X}^{+}\wedge{\topsp{K}_{n}})]_{*}\\
&\simeq[\topsp{S}^{n+i+1},\topsp{X}^{+}\wedge{\Sigma(\topsp{K}_{n}})]_{*}
\xrightarrow{\sigma_{i}}[\topsp{S}^{n+i+1},\topsp{X}^{+}\wedge{\topsp{K}_{n+1}}]_{*}\\
&=\pi_{n+i+1}(\topsp{X}^{+}\wedge{\topsp{K}_{n+1}})
\end{array}
\end{displaymath}
The relative homology theory can be defined as usual, and we will refer to the homology theory $\rm k_{*}$ as the \emph{dual} theory to $\text{k}^{*}$.\\
In \cite{skewadjoint} it was shown that a suitable spectrum for KO-theory can be defined in the following way. For $n\geq1$, let $\mathcal{H}_\real$ be a real
$\zed_2$-graded separable Hilbert space which is a
$*$-module for the real Clifford algebra $\Cl_{n-1}=\Cl(\real^{n-1})$. Let $\Fred_{n}$ be the space of all
Fredholm operators on $\mathcal{H}_\real$ which are odd, $\Cl_{n-1}$-linear and
self-adjoint. Then $\Fred_{n}$ is the classifying space for
$\KO^{-n}$, and there are homotopy equivalences $\Fred_{n}\to\Omega\Fred_{n-1}$.  For $n\leq0$, we choose $k\in\nat$ such that
$8k+n\geq1$ and define $\Fred_n:=\Fred_{8k+n}$.\\
Then, \emph{spectral KO-homology} can be defined by setting
\begin{equation}
\KO^{s}_{i}(\topsp{X},\topsp{Y}):=\text{lim}_{n}\pi_{n+i}((\topsp{X}/\topsp{Y})\wedge{\Fred_{n}})
\end{equation}
From a general result for spectrally defined generalized homology theories, we have that
\begin{displaymath}
\KO^{s}_{i}(\rm pt)=\redko{0}{\topsp{S}^{i}}
\end{displaymath}
The importance of the functors $\KO^{s}_{n}$ in our context consists in the fact that any set of groups which are isomorphic to spectral KO-homology, in a suitable sense, for any space X, defines necessarily an homology theory. Indeed, this will be the case for geometric KO-homology defined later on, whose homological properties will be deduced by ``comparison'' with spectral KO-homology. 
\section{KKO-theory and analytic KO-homology}
We will give now a detailed overview of the
definition of KO-homology in terms of Kasparov's KK-theory for real
$C^*$-algebras \cite{kasparov}, and describe various properties that we
will need later on. Incidentally, we will eventually give an interpretation of the KK-groups for complex $C^{*}$-algebras in terms of Type II D-branes.
\subsection{Real $\rm C^{*}$-algebras}
A \textit{real algebra} is a ring $A$ which is also an
$\mathbb{R}$-vector space such that $\lambda\,(x\,y)=(\lambda\,x)\,y=x
\,(\lambda\,y)$ for all $\lambda \in \real$ and all $x,y \in A$.
A \textit{real} $*$-\textit{algebra} is a real algebra $A$ equipped
with a linear involution $*: A \rightarrow A$ such that
$(x\,y)^*=y^*\,x^*$ for all $x,y \in A$.
A \textit{real Banach algebra} is a real algebra $A$ equipped with
a norm $\lVert - \rVert:A\to\real$ such that $\normxy \leq
\normx\,\normy$ and such that $A$ is complete in the norm topology.
If $A$ is a unital algebra then we assume $\norm=1$.
A \textit{real Banach} $*$-\textit{algebra} is a real Banach algebra
which is also a real $*$-algebra. A \textit{real} $C^*$-\textit{algebra}
is a real Banach $*$-algebra such that
\begin{romanlist}
\item $\|x^*x\|=\|x\|^{2}$ for all
  $x\in A$; and
\item $1+x^*\,x$ is invertible in $\tilde{A}$ for all $x \in A$.
\end{romanlist}
where $\tilde{A}$ denotes the unitalization of the algebra $A$.
\begin{remark2}
Although in the complex case invertibility of $1+x^*\,x$ for all $x
\in A$ would follow immediately from the $C^*$-algebra structure, in
the real case this is no longer true. For example, consider the real
Banach $*$-algebra $\mathbb{C}$ with involution given by the identity
map. Then $1+\ii^*\,\ii$ is not invertible, where
$\ii:=\sqrt{-1}$. This invertibility condition is fundamental to
obtaining the usual representation theorem below for $C^*$-algebras in
terms of bounded self-adjoint operators on a real
Hilbert space. However, $\mathbb{C}$ with involution given by complex
conjugation is a real $C^*$-algebra. Since the only $\mathbb{R}$-linear
involutions of $\mathbb{C}$ are the identity and complex
conjugation, when we consider $\mathbb{C}$ as a real $C^*$-algebra the
involution will always be implicitly assumed to be complex
conjugation. More generally any complex $C^*$-algebra, regarded as a
real vector space and with the same operations, is a real
$C^*$-algebra.
\end{remark2}
Let us now give a number of examples of real $C^*$-algebras, some of
which we will use later on in representation theorems.

\begin{example}\label{r3}
Let $\HR$ be a real Hilbert space. Then the set of bounded linear
operators $\mathcal{B}(\HR)$ with the usual operations is a real
$C^*$-algebra. Any closed self-adjoint subalgebra of $\mathcal{B}(\HR)$ is also
a real $C^*$-algebra. More generally, any closed self-adjoint subalgebra of
a real $C^*$-algebra is always a real $C^*$-algebra.
\end{example}

\begin{example}\label{r6}
Let $\topsp{X}$ be a locally compact Hausdorff space and $\C_0(\topsp{X},\mathbb{R})$
the space of real-valued continuous functions vanishing at
infinity. Then $\C_0(\topsp{X},\mathbb{R})$ with pointwise operations, the
supremum norm and involution given by the identity map is a real
$C^*$-algebra. As in the complex case, $\C_0(\topsp{X},\mathbb{R})$ is unital
if and only if $\topsp{X}$ is compact.
\label{ex:r6}\end{example}

\begin{example}\label{r7}
With $\topsp{X}$ as in Example~\ref{ex:r6} above, let $\topsp{Y}$ be a closed subspace
of $\topsp{X}$ and $\C_0(\topsp{X},\topsp{Y};\mathbb{R})$ the subspace of $\C_0(\topsp{X},\mathbb{C})$
consisting of maps $f : \topsp{X} \rightarrow \mathbb{C}$ such that $f(\topsp{Y})
\subset\real$. Then with the operations inherited from
$\C_0(\topsp{X},\mathbb{C})$, the subspace $\C_0(\topsp{X},\topsp{Y};\mathbb{R})$ is a real
$C^*$-algebra.
\end{example}

\begin{example}\label{r8}
Let $\topsp{X}$ be a locally compact Hausdorff space with involution $\tau
:\topsp{X} \rightarrow \topsp{X}$, i.e. a homeomorphism such that $\tau \circ \tau =
\textrm{id}_\topsp{X}$, and consider the subset $\C_0(\topsp{X},\tau)$ of
$\C_0(\topsp{X},\mathbb{C})$ consisting of maps $f$ such that $f \circ \tau
=f^*=\overline{f}$. Then $\C_0(\topsp{X},\tau)$, with the operations inherited
from $\C_0(\topsp{X},\mathbb{C})$, is a real $C^*$-algebra. If $\tau
=\textrm{id}_\topsp{X}$ then $\C_0(\topsp{X},\tau)=\C_0(\topsp{X},\mathbb{R})$.
If $\topsp{X}$ is compact and $\topsp{Y}$ is a closed subspace of $\topsp{X}$, then there is
a compact Hausdorff space $Z$ with an involution $\tau$ such that
$\C(\topsp{X},\topsp{Y};\mathbb{R}) \simeq \C(Z,\tau)$. However, the converse does not
hold in general.
\end{example}

\begin{example}\label{r10}
Let $\mathcal{V}$ be a real vector space equipped with a quadratic form
\textit{q}, and consider the associated real Clifford algebra
$\Cl(\mathcal{V},\textit{q})$. Assume, without loss of generality,
that $\textit{q}(v)=\langle v,\phi(v)\rangle$ for all $v \in \mathcal{V}$ with
respect to an inner product on $\mathcal{V}$, where the linear
operator $\phi \in \mathcal{L}(\mathcal{V})$ is symmetric and orthogonal.
We can then define an involution on $\Cl({\mathcal{V}},\textit{q})$ by
$(v_1 \cdots v_k)^*=\phi(v_k) \cdots \phi(v_1)$, i.e. if $v \in {\mathcal{V}}$
then $v^*=\phi(v)$. The isomorphism $\Phi : \Cl({\mathcal{V}} \oplus
{\mathcal{V}},\textit{q} \oplus -\textit{q}) \rightarrow \mathcal{L}(\Lambda
^{*}{\mathcal{V}})$ induces a norm on $ \Cl({\mathcal{V}} \oplus
{\mathcal{V}},\textit{q} \oplus -\textit{q}) $ by
pullback of the operator norm on $\mathcal{L}(\Lambda
^{*}{\mathcal{V}})$, and the inclusion
$\Cl({\mathcal{V}},\textit{q})\hookrightarrow\Cl({\mathcal{V}},
\textit{q})\hat{\otimes}\Cl({\mathcal{V}},-\textit{q})
\simeq\Cl({\mathcal{V}} \oplus {\mathcal{V}},\textit{q} \oplus
-\textit{q})$ given by $x \mapsto x\hat{\otimes}1$ thereby induces a norm
on $\Cl({\mathcal{V}},\textit{q})$. Then
$\Cl({\mathcal{V}},\textit{q})$ with its algebra structure, this
involution and norm is a real $C^*$-algebra.
\end{example}
If $A$, $B$ are real $*$-algebras then a \textit{real} $*$-\textit{algebra
homomorphism} is a real algebra map $\phi : A \rightarrow B$, i.e. an
$\real$-linear ring homomorphism, such that $\phi (x^*)=\phi (x)^*$ for all $x
\in A$. The homomorphism is assumed to be unital if both algebras are
unital.\\
If $A$ is an algebra, we denote by $\mat_n(A)$ the
algebra of $n\times n$ matrices with entries in $A$. Then, we have the following general representation theorem \cite{goodearl}
\begin{theorem}
Let $A$ be a finite-dimensional real $C^*$-algebra. Then there exist\\
$n_1, \ldots, n_k \in \mathbb{N}$ such that $A \simeq
\mat_{n_1}(A_1) \times \cdots \times \mat_{n_k}(A_k)$ as real
$C^*$-algebras with $A_1, \ldots, A_k \in
\{\mathbb{R},\mathbb{C},\mathbb{H}\}$.
\end{theorem}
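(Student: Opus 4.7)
The approach I would take combines the classical structure theory of finite-dimensional semisimple algebras with compatibility arguments for the $C^*$-structure. The plan is to first reduce to the purely algebraic Artin--Wedderburn theorem by showing that $A$ is semisimple, then identify the division algebras via Frobenius's theorem, and finally check that the resulting isomorphism can be taken to respect the involution and the (automatically unique) $C^*$-norm.

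First I would show that the Jacobson radical $J(A)$ vanishes. Since $A$ is finite-dimensional, $J(A)$ is a nilpotent two-sided ideal. In a $*$-algebra one has $J(A)^{*}=J(A)$, so $x\in J(A)$ implies $x^{*}x\in J(A)$, hence $(x^{*}x)^{n}=0$ for some $n\geq 1$. Iterating the $C^{*}$-identity $\lVert y^{*}y\rVert=\lVert y\rVert^{2}$ on the self-adjoint element $y=x^{*}x$ gives $\lVert (x^{*}x)^{2^{k}}\rVert=\lVert x\rVert^{2^{k+1}}$, which forces $\lVert x\rVert=0$ and thus $x=0$. Therefore $J(A)=0$ and $A$ is semisimple.

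Next, I would invoke the Artin--Wedderburn theorem: any finite-dimensional semisimple $\mathbb{R}$-algebra is isomorphic as an $\mathbb{R}$-algebra to a finite product $\mat_{n_{1}}(D_{1})\times\cdots\times\mat_{n_{k}}(D_{k})$, where each $D_{i}$ is a finite-dimensional associative division algebra over $\mathbb{R}$. Frobenius's theorem then identifies each $D_{i}$ with one of $\mathbb{R}$, $\mathbb{C}$, or $\mathbb{H}$. This already yields the stated decomposition at the level of real algebras.

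The hard part is to promote this to an isomorphism of real $C^{*}$-algebras. For the norm there is nothing to check: in any real $C^{*}$-algebra one has $\lVert x\rVert^{2}=\lVert x^{*}x\rVert=r(x^{*}x)$, where $r$ denotes the spectral radius, which is determined algebraically; hence any $*$-isomorphism between real $C^{*}$-algebras is automatically isometric, and each matrix factor $\mat_{n_{i}}(A_{i})$ admits at most one $C^{*}$-norm. For the involution, I would argue as follows. By a real analogue of the GNS construction, $A$ admits a faithful $*$-representation $\pi:A\hookrightarrow\mathcal{B}(\HR)$ on a real Hilbert space (Example~\ref{r3}). The minimal central projections of $A$ decompose $\HR$ orthogonally, so that each simple factor $\mat_{n_{i}}(D_{i})$ is realized as a self-adjoint subalgebra of $\mathcal{B}(\HR_{i})$. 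A Schur-lemma argument then identifies the commutant of the simple factor as a $*$-closed division subalgebra of $\mathcal{B}(\HR_{i})$, which by Frobenius is one of $\mathbb{R}$, $\mathbb{C}$ (with conjugation $*$), or $\mathbb{H}$ (with quaternionic conjugation). Choosing a matrix-unit system adapted to an orthonormal basis of the appropriate $D_{i}$-module then shows that the involution on $\mat_{n_{i}}(D_{i})$ is conjugate-transposition with respect to the canonical $C^{*}$-involution on $D_{i}$, completing the identification as real $C^{*}$-algebras.
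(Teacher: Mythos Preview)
The paper does not give its own proof of this theorem; it merely states the result and cites \cite{goodearl}. Your argument is the standard route and is correct: kill the Jacobson radical using the $C^{*}$-identity on self-adjoint nilpotents, apply Artin--Wedderburn plus Frobenius for the purely algebraic decomposition, then use uniqueness of the $C^{*}$-norm and a faithful Hilbert-space representation to align the involution with conjugate-transpose on each matrix block. This is essentially how the result is established in the reference the paper cites, so there is nothing to compare against in the paper itself.
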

Analogously to the complex case, real $C^{*}$-algebras are always algebras of operators on some Hilbert space.
\begin{theorem}(\textbf{Ingelstam})
Let $A$ be any real $C^*$-algebra. Then there exists a real Hilbert
space $\mathcal{H}_\mathbb{R}$ such that $A$ is isomorphic as a
real $C^*$-algebra to a closed self-adjoint subalgebra of
$\mathcal{B}(\HR)$.
\end{theorem}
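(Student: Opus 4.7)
The plan is to reduce to the complex Gelfand--Naimark theorem via complexification. First I would define $A_{\mathbb{C}} := A \otimes_{\mathbb{R}} \mathbb{C}$ as a complex $*$-algebra with involution $(x + \ii y)^{*} = x^{*} - \ii y^{*}$ and product extended complex-bilinearly. The delicate point is to equip $A_{\mathbb{C}}$ with a norm making it into a genuine complex $C^{*}$-algebra; one shows that the supremum of operator norms over all $*$-representations of $A_{\mathbb{C}}$ on complex Hilbert spaces is finite and satisfies the $C^{*}$ identity. Axiom~(ii) in the definition of a real $C^{*}$-algebra, namely invertibility of $1 + x^{*}x$ in $\tilde{A}$, is precisely what guarantees that self-adjoint elements of $\tilde{A}_{\mathbb{C}}$ have real spectrum, ruling out the pathology exhibited for instance by $\mathbb{C}$ with the identity involution.

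Next I would invoke the complex Gelfand--Naimark theorem to obtain a faithful isometric $*$-representation $\pi : A_{\mathbb{C}} \hookrightarrow \mathcal{B}(\mathcal{H})$ on some complex Hilbert space $\mathcal{H}$. Restricting $\pi$ to the real $*$-subalgebra $A \subset A_{\mathbb{C}}$ yields an $\mathbb{R}$-linear faithful $*$-homomorphism $A \to \mathcal{B}(\mathcal{H})$.

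To finish, I would view $\mathcal{H}$ as a real Hilbert space $\mathcal{H}_{\mathbb{R}}$ by forgetting the complex structure and setting $\langle v, w \rangle_{\mathbb{R}} := \operatorname{Re}\langle v, w \rangle_{\mathcal{H}}$. Then $\mathcal{B}(\mathcal{H})$ sits isometrically inside $\mathcal{B}(\mathcal{H}_{\mathbb{R}})$ as the commutant of scalar multiplication by $\ii$, and this inclusion preserves both the norms and the involutions. Composing the two embeddings gives a faithful isometric real $*$-homomorphism $A \hookrightarrow \mathcal{B}(\mathcal{H}_{\mathbb{R}})$, and its image is automatically closed since $A$ is complete and the map is an isometry.

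The main obstacle will be the first step: producing the $C^{*}$-norm on $A_{\mathbb{C}}$ and verifying that its restriction to $A$ coincides with the original norm. This requires enough real functional calculus to know that self-adjoint elements of $A$ have real spectrum after complexification, so that the identity $\|a\|^{2} = r(a^{*}a)$ with $r$ the spectral radius makes sense and is compatible with the norm already present on $A$. A variant of the plan would bypass complexification and run the GNS construction directly on $A$, replacing complex states by positive $\mathbb{R}$-linear functionals of norm one and working with the real inner product $\phi(y^{*}x)$ on $A/N_{\phi}$; existence of enough such real states to separate points of $A$ is then the crux, and rests once more on axiom~(ii) via a Hahn--Banach argument against the positive cone.
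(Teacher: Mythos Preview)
The paper does not prove this theorem: it is stated as Ingelstam's theorem and quoted without proof, then immediately \emph{used} to deduce that the complexification $A_{\mathbb{C}}$ carries a unique $C^*$-norm making the embedding $A\hookrightarrow A_{\mathbb{C}}$ an isometry. So there is no paper proof to compare against.

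Your outline is a standard and correct route to the result, but note a logical subtlety relative to how the paper organizes things: the paper invokes Ingelstam's theorem to obtain the $C^*$-norm on $A_{\mathbb{C}}$, whereas your plan assumes that norm already exists in order to prove Ingelstam. This is not a genuine circularity---one can construct the $C^*$-norm on $A_{\mathbb{C}}$ directly from axiom~(ii) without first representing $A$, exactly as you sketch via spectral radius and the identity $\|a\|^2=r(a^*a)$---but you should be aware that you are filling in what the paper treats as a consequence rather than a hypothesis. Your alternative plan, running GNS directly over $\mathbb{R}$ with real states, is closer in spirit to Ingelstam's original argument and avoids this bookkeeping entirely; the existence of sufficiently many real states does follow from axiom~(ii) via Hahn--Banach applied to the real cone of positive elements, as you indicate.
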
\label{Ingelstam}

Consider now a real $C^*$-algebra $A$. We denote by
$A_{\mathbb{C}}:=A\otimes\complex$ the complexification of $A$, which
is a complex algebra containing $A$ as a real algebra. Theorem \ref{Ingelstam} assures that $A_{\mathbb{C}}$ can be given a unique $C^{*}$-algebra norm such that the natural embedding $\theta:A\to A_{\mathbb{C}}$ of $A$ onto its complexification is an isometry.\\
 We can define a map $J_A : A_{\mathbb{C}} \rightarrow A_{\mathbb{C}}$ by $J_A(x+\ii
y)=x-\ii y$ for all $x,y \in A$. The map $J_A$ is a conjugate linear
$*$-isomorphism of the complex $C^*$-algebra $A_{\mathbb{C}}$. If
$\phi : A \rightarrow A$ is a continuous $*$-homomorphism, then the
map $J_A(\phi) : A_{\mathbb{C}} \rightarrow A_{\mathbb{C}}$ defined
by $J_A(\phi)(x+\ii y)=\phi(x)+\ii\phi(y)$ is a continuous
$*$-homomorphism such that $J_A \circ J_A(\phi) = J_A(\phi)\circ
J_A$. Conversely, if $J$ is a conjugate linear $*$-isomorphism of a
complex $C^*$-algebra $B$, then $A= \{ x \in B \, | \, J(x)=x \}$ is
a real $C^*$-algebra. This implies the following result.

\begin{proposition}
Let $\mathscr{C}^*_{\mathbb{R}}$ be the category of real $C^*$-algebras
and continuous $*$-algebra homomorphisms. Let
$\mathscr{C}^*_{\mathbb{C},{\rm cl}}$ be the category of pairs $(A,J)$,
where $A$ is a complex $C^*$-algebra and $J$ is a conjugate linear
$*$-isomorphism of $A$, and continuous $*$-homomorphisms commuting with
$J$. Then the assignments 
\begin{displaymath}
\begin{array}{c}
A \mapsto (A_{\mathbb{C}},J_A)\\
\phi \mapsto J_A(\phi) 
\end{array}
\end{displaymath}
define a functor
$$\mathcal{J} \,:\, \mathscr{C}^*_{\mathbb{R}} ~\longrightarrow~
\mathscr
{C}^*_{\mathbb{C},{\rm cl}}$$
which is an equivalence of categories.
\end{proposition}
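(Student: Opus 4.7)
The plan is to verify the two conditions defining an equivalence of categories: essential surjectivity and fully faithfulness of $\mathcal{J}$. Functoriality is immediate once one checks that $J_A(\phi\circ\psi)=J_A(\phi)\circ J_{A'}(\psi)$ and $J_A(\mathrm{id}_A)=\mathrm{id}_{A_{\complex}}$, both obvious from the pointwise formula. I would then concentrate on producing an explicit quasi-inverse, which is conceptually the cleanest route.

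For essential surjectivity, given $(B,J)$ in $\mathscr{C}^*_{\complex,\mathrm{cl}}$, I would define the real $C^*$-algebra $A_J:=\{b\in B\,:\,J(b)=b\}$, already noted in the text to be a real $C^*$-algebra. The key observation is the real/imaginary decomposition: since $J$ is conjugate-linear and $J\circ J=\mathrm{id}$, for every $b\in B$ the elements
\[
b_1:=\tfrac{1}{2}\bigl(b+J(b)\bigr),\qquad b_2:=\tfrac{1}{2\ii}\bigl(b-J(b)\bigr)
\]
lie in $A_J$, and $b=b_1+\ii\,b_2$ uniquely. This produces a natural map $(A_J)_{\complex}\to B$, $x+\ii y\mapsto x+\ii y$, which is a $\complex$-linear $*$-isomorphism of complex algebras intertwining $J_{A_J}$ with $J$; it is evidently a functorial candidate for the inverse equivalence on objects.

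For fully faithfulness, I would exhibit the inverse to $\phi\mapsto J_A(\phi)$ by restriction. If $\Psi:(A_{\complex},J_A)\to(A'_{\complex},J_{A'})$ satisfies $\Psi\circ J_A=J_{A'}\circ\Psi$, then $\Psi$ maps $J_A$-fixed points to $J_{A'}$-fixed points, hence restricts to a continuous $*$-homomorphism $\phi:=\Psi|_A:A\to A'$. A direct computation on $x+\ii y$ (with $x,y\in A$) shows $J_A(\phi)=\Psi$, and conversely $J_A(\psi)|_A=\psi$ for every $\psi:A\to A'$. Together with the essential surjectivity step, this will complete the equivalence of categories.

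The main obstacle, and the only place where one has to be more than merely formal, is ensuring that the algebraic isomorphism $(A_J)_{\complex}\xrightarrow{\simeq}B$ constructed above is an \emph{isometric} $*$-isomorphism of complex $C^*$-algebras, rather than just a bijective $*$-homomorphism. This is where one must invoke the uniqueness of the $C^*$-norm on a complexification (discussed via Ingelstam's theorem in the paragraph preceding the proposition): the $C^*$-norm pulled back from $B$ on $(A_J)_{\complex}$ restricts to the given norm on $A_J$, and by that uniqueness statement it must coincide with the canonical complexified norm. Once this norm-matching is nailed down, both the object-level and the morphism-level inverses are automatically continuous, and $\mathcal{J}$ is verified to be an equivalence.
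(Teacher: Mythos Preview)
Your proposal is correct and follows exactly the approach the paper itself indicates: the paper does not give a separate proof of this proposition but rather derives it from the two observations in the preceding paragraph --- that complexification produces the pair $(A_\complex,J_A)$, and that conversely the fixed-point set $A=\{x\in B\mid J(x)=x\}$ of a conjugate-linear $*$-isomorphism is a real $C^*$-algebra --- which is precisely the quasi-inverse you construct. Your additional care about the norm issue (invoking the uniqueness of the $C^*$-norm on the complexification via Ingelstam's theorem) fills in a point the paper leaves implicit.
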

The complexification of a real $C^{*}$-algebra $A$ is crucial in generalizing the notion of spectrum of an element. Indeed, we define the \emph{complexified spectrum} ${\rm Sp}^{\mathbb{C}}(x)$ of an element $x\in{A}$ as the spectrum of the element $\theta(x)$ in $A_{\mathbb{C}}$, i.e. the set of $\lambda\in\mathbb{C}$ such that $\lambda-\theta(x)$ is not invertible in $A_{\mathbb{C}}$. This definition of the spectrum assures that the functional calculus in $A$ is well behaved. In the following, we will use the notion of positive element. An element $x$ in a real $C^{*}$-algebra $A$ is said to be \emph{positive} if $x=x^{*}$, and ${\rm Sp}^{\mathbb{C}}(x)\subseteq\mathbb{R}^{+}$.  

As we are interested in the applications of real $C^{*}$-algebras to algebraic topology of CW-complexes,
we will now specialize to the case of commutative algebras.\\
As with complex Banach algebras, a maximal two-sided ideal in a real
Banach algebra $A$ is closed in $A$. If $M$ is a maximal two-sided
ideal of a real Banach algebra $A$, then $A/M$ is isomorphic to one of
$\mathbb{R}$ or $\mathbb{C}$ as real algebras.
A \textit{character} on a real algebra $A$ is a non-zero real
algebra map $\chi : A \rightarrow \mathbb{C}$, assumed unital if $A$
is unital. Let $\Omega_A$ be the space of characters of
$A$. This can be given, as in the complex case, a locally compact
Hausdorff space topology such that $\Omega_A$ is homeomorphic to
$\Omega_{A_\mathbb{C}}$. Furthermore, $A$ is unital if and only if
$\Omega_A$ is compact.

Given $x \in A$, evaluation at $x$ gives a continuous map $
\Gamma(x) : \Omega_A \rightarrow \mathbb{C}$ called the
\textit{Gel'fand transform of $x$}. From this we obtain the
\textit{Gel'fand transform of $A$}, $\Gamma : A \rightarrow
\C_0(\Omega_A,\mathbb{C})$, which is a continuous real algebra
homomorphism of unit norm. If $A$ is a real $*$-algebra, then $\Gamma$
is a $*$-algebra homomorphism.\\ 
The following important results on the
representation of commutative real $C^*$-algebras allow to unify the treatment of real commutative $C^{*}$-algebras and topological spaces with involution.
\begin{theorem}
Let $A$ be a commutative real $C^*$-algebra. Then:
\begin{romanlist}
\item The map $\tau : \Omega_A
\rightarrow \Omega_A$ defined by $\tau(\chi)=\overline{\chi}$ is an
involution; and
\item The Gel'fand transform $\Gamma : A \rightarrow
  \C_0(\Omega_A,\tau)$ is a real $C^*$-algebra isomorphism.
\end{romanlist}
\end{theorem}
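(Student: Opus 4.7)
The plan is to prove (i) by direct verification and then reduce (ii) to the complex Gel'fand--Naimark theorem via the complexification functor $\mathcal{J}$ introduced just above the statement.

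First I would establish (i). Given a character $\chi: A \to \mathbb{C}$, since $A$ is a \emph{real} algebra the pointwise complex conjugate $\bar{\chi}$ is still $\mathbb{R}$-linear and multiplicative; it is nonzero if $\chi$ is nonzero, hence $\bar{\chi}\in\Omega_A$. The identity $\overline{\overline{\chi}}=\chi$ shows $\tau^{2}=\mathrm{id}_{\Omega_A}$. Continuity of $\tau$ is immediate from the definition of the Gel'fand topology: if $\chi_\alpha\to\chi$ weak-$*$, then $\bar{\chi}_\alpha(x)=\overline{\chi_\alpha(x)}\to\overline{\chi(x)}=\bar{\chi}(x)$ for every $x\in A$, since complex conjugation on $\mathbb{C}$ is continuous.

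For (ii), I would first use the homeomorphism $\Omega_A\simeq\Omega_{A_{\mathbb{C}}}$ recalled in the text. Under this identification, the involution $\tau$ on $\Omega_A$ corresponds exactly to the involution $\tau_J$ on $\Omega_{A_{\mathbb{C}}}$ induced by pullback along the conjugate linear $*$-isomorphism $J_A$, i.e.\ $\tau_J(\tilde{\chi})(z):=\overline{\tilde{\chi}(J_A(z))}$ for $\tilde{\chi}\in\Omega_{A_{\mathbb{C}}}$ and $z\in A_{\mathbb{C}}$. By the complex Gel'fand--Naimark theorem, the complex Gel'fand transform $\Gamma_{\mathbb{C}}:A_{\mathbb{C}}\to\C_0(\Omega_{A_{\mathbb{C}}},\mathbb{C})$ is a $*$-isomorphism of complex $C^*$-algebras; moreover it intertwines $J_A$ with the involution $f\mapsto\overline{f\circ\tau_J}$ on $\C_0(\Omega_{A_{\mathbb{C}}},\mathbb{C})$. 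Restricting to the $J_A$-fixed real subalgebra $\theta(A)\subset A_{\mathbb{C}}$ (identified with $A$ via the isometric embedding $\theta$) gives a real $C^*$-algebra injection whose image is the fixed subalgebra of the target involution, which is precisely $\C_0(\Omega_A,\tau)$ in the notation of Example~\ref{r8}. Finally, I would check that this restriction coincides with the real Gel'fand transform $\Gamma$ of $A$ by unwinding the definitions: for $x\in A$ and $\chi\in\Omega_A$ one has $\Gamma_{\mathbb{C}}(\theta(x))(\chi)=\chi(x)=\Gamma(x)(\chi)$.

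Putting these pieces together, $\Gamma$ is injective (because $\Gamma_{\mathbb{C}}$ is) and surjective onto $\C_0(\Omega_A,\tau)$ (because every $J_A$-fixed element of $\C_0(\Omega_{A_{\mathbb{C}}},\mathbb{C})$ lies in the image of $\theta(A)$ under $\Gamma_{\mathbb{C}}$), and it preserves algebraic, $*$- and norm structure by the corresponding properties of $\Gamma_{\mathbb{C}}$ together with the fact that $\theta$ is an isometric $*$-homomorphism.

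The main technical obstacle I anticipate is the careful verification that the image of $\theta(A)$ under $\Gamma_{\mathbb{C}}$ is exactly $\C_0(\Omega_A,\tau)$ and not merely a subset of it; concretely, I must show that every continuous function $f:\Omega_{A_{\mathbb{C}}}\to\mathbb{C}$ vanishing at infinity and satisfying $f\circ\tau=\bar{f}$ is the Gel'fand transform of some $J_A$-fixed element of $A_{\mathbb{C}}$. This amounts to showing that the fixed-point set of a conjugate linear involution on a commutative complex $C^*$-algebra corresponds, under the complex Gel'fand--Naimark equivalence, to the fixed-point set of the induced involution on the function algebra. This can be done by writing any $z\in A_{\mathbb{C}}$ as $z=\tfrac{1}{2}(z+J_A(z))+\tfrac{\ii}{2}\bigl(-\ii(z-J_A(z))\bigr)$ with both summands in $\theta(A)$, and transferring this decomposition through $\Gamma_{\mathbb{C}}$.
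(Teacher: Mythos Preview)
Your proposal is correct and follows essentially the same route as the paper: both reduce (ii) to the complex Gel'fand--Naimark theorem via the complexification $\theta:A\to A_{\mathbb{C}}$ and the homeomorphism $\Omega_A\simeq\Omega_{A_{\mathbb{C}}}$. The paper states the commutative square relating the real and complex Gel'fand transforms and then defers the surjectivity $\Gamma(A)=\C_0(\Omega_A,\tau)$ to a reference, whereas you spell it out via the $J_A$-fixed-point description; your anticipated ``technical obstacle'' is in fact resolved more directly than by the decomposition you sketch --- since $\Gamma_{\mathbb{C}}$ is a bijection intertwining $J_A$ with $f\mapsto\overline{f\circ\tau_J}$, any $f$ fixed by the target involution equals $\Gamma_{\mathbb{C}}(z)$ for some $z$, and injectivity forces $J_A(z)=z$.
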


\begin{proof}
(i) The map $\tau$ is a bijection. The collection of sets
$$U_{x,V}= \bigl\{\chi\in \Omega_A ~ | ~ \chi(x) \in V  \bigr\}$$
for every $x \in A$ and $V$ open in $\mathbb{C}$ is a sub-basis for
the topology of $\Omega_A$. The complex conjugate $\overline{V}$ of
$V$ is an open set and $\tau^{-1}(U_{x,V})=U_{x,\overline{V}}$. Thus
$\tau$ is continuous.

\noindent
(ii) The map $\Gamma$ is a real $*$-algebra map with $\lVert \Gamma(x)
\rVert =\normx$. One also has \bea\G(x) \circ
\ta(\chi)&=&\G(x)(\,\overline{\chi}\,)\nonumber\\ &=&
\overline{\chi(x)}~=~\G(x)^*(\chi) \ , \nonumber\eea and so
$\G(x) \circ \ta=\G(x)^*$ and $\G(A) \subset \C_0(\Omega_A,\ta)$.
Let $\theta : A \rightarrow A_\mathbb{C}$ be the $C^*$-algebra
embedding of $A$ into its complexification. The map $\vartheta :
\Omega_{A_\mathbb{C}} \rightarrow\Omega_A$ given by $\vartheta(f)=f \circ
\theta$ is a homeomorphism and there is a commutative diagram
$$\xymatrix{ A \ar[r]^{\!\!\!\!\!\!\!\!\!\!\!\!\G}
\ar[d]_{\theta} & \C_0(\Omega_A,\mathbb{C}) \ar[d]^{\vartheta^*}\\
             A_\mathbb{C} \ar[r]_{\!\!\!\!\!\!\!\!\!\!\!\!\G}
&  \C_0(\Omega_{A_\mathbb{C}},\mathbb{C}) \ .}$$
Using this one then shows that $\G(A)=\C_0(\Omega_A,\ta)$; see \cite{goodearl}.
\end{proof}

\begin{cor}
Let $A$ be a commutative real $C^*$-algebra with trivial
involution. Then $A$ is $*$-isomorphic to $\C_0(\Omega_A,\mathbb{R})$.
\end{cor}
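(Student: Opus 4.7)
The plan is to reduce the corollary to the preceding theorem by establishing that, when the involution on $A$ is trivial, the induced involution $\tau$ on $\Omega_A$ is the identity map. Example \ref{r8} then immediately identifies $\C_0(\Omega_A,\tau)$ with $\C_0(\Omega_A,\mathbb{R})$, and the theorem supplies the desired real $C^*$-algebra isomorphism $\Gamma:A\to\C_0(\Omega_A,\mathbb{R})$.

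The core task is therefore to show that every character $\chi\in\Omega_A$ takes only real values. I would do this by a complexification argument: extend $\chi$ to the complex algebra homomorphism $\chi_{\mathbb{C}}:A_{\mathbb{C}}\to\mathbb{C}$ given by $\chi_{\mathbb{C}}(x+\ii y)=\chi(x)+\ii\chi(y)$. Fix $x\in A$; since the involution on $A$ is trivial, $x=x^{*}$, and under the canonical embedding $\theta:A\hookrightarrow A_{\mathbb{C}}$ the element $\theta(x)$ is self-adjoint with respect to the induced $C^{*}$-involution on the complexification. The standard spectral theorem for complex $C^{*}$-algebras then gives ${\rm Sp}(\theta(x))\subseteq\mathbb{R}$, so by the definition of the complexified spectrum, $\chi(x)=\chi_{\mathbb{C}}(\theta(x))\in{\rm Sp}^{\mathbb{C}}(x)\subseteq\mathbb{R}$. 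Since $x\in A$ was arbitrary, $\tau(\chi)(x)=\overline{\chi(x)}=\chi(x)$ for all $x$, whence $\tau(\chi)=\chi$.

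An equivalent, more elementary route bypasses the complex spectral theorem and uses only the real $C^{*}$-axioms directly: suppose $\chi(x)=\alpha+\ii\beta$ with $\beta\neq 0$, and set $y=(x-\alpha)/\beta$. Since $x$ is self-adjoint one has $y^{*}=y$, and a direct computation gives $\chi(1+y^{*}y)=1+\bigl((\chi(x)-\alpha)/\beta\bigr)^{2}=0$. Thus $1+y^{*}y$ lies in the maximal ideal $\ker\chi$ of $\tilde{A}$, contradicting the defining axiom that $1+y^{*}y$ must be invertible in $\tilde{A}$. The main subtlety, a mild one, is making sure that $\chi$ extends properly to a unital character of $\tilde{A}$ in the non-unital case and that the complexification $A_{\mathbb{C}}$ inherits the appropriate involution from the trivial involution on $A$; once this bookkeeping is in place, either route yields $\tau=\mathrm{id}_{\Omega_{A}}$, and the corollary follows at once by specialising the preceding theorem.
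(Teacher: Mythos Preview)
Your proposal is correct and follows precisely the route the paper intends: the corollary is stated without proof because it is meant to follow immediately from the preceding theorem together with Example~\ref{r8}, once one observes that a trivial involution on $A$ forces $\tau=\mathrm{id}_{\Omega_A}$. You supply exactly this missing observation, and both of your arguments for it are valid; the second one, using the invertibility of $1+y^*y$, is particularly clean since it invokes directly the axiom that distinguishes real $C^*$-algebras from mere real Banach $*$-algebras.
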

\subsection{Hilbert Modules}
In this section we will present a generalization of the notion of Hilbert space, in which the scalar product takes value in a general real $C^{*}$-algebra.\\
Let $A$ be a (not necessarily commutative) real $C^*$-algebra. A
$\textit{pre-Hilbert module}$ \textit{over} $A$  is a (right)
$A$-module $\mathcal{E}$
equipped with an $A$-$\textit{valued inner product}$, i.e. a
bilinear map $(-, -) : \mathcal{E} \times \mathcal{E}
\rightarrow A$ such that
\begin{romanlist}
\item $(x,x) \geq 0$ for all $x \in \mathcal{E}$ and $(x,x)=0$ if and
  only if $x=0$;
\item $(x,y)=(y,x)^*$ for all $x,y \in \mathcal{E}$; and
\item $(x,y\,a)=(x,y)\,a$ for all $x,y \in \mathcal{E}, \, a \in A$.
\end{romanlist}
For $x \in \mathcal{E}$ we define $\normx_{\mathcal{E}}:= \lVert (x,x)
\rVert ^{1/2}$. This
defines a norm on ${\mathcal{E}}$ satisfying the Cauchy-Schwartz
inequality. If ${\mathcal{E}}$ is complete under this norm, then it is
called a \textit{Hilbert module over}~$A$.\\
As a straight generalization from the ordinary Hilbert space case, we have the following examples of Hilbert modules.
\begin{example}
A real $C^{*}$-algebra $A$ can be given the structure of a Hilbert module over itself by defining $(a,b)=a^{*}b$, for any $a,b\in\:{A}$. More generally, any closed right ideal of $A$ is a Hilbert module over $A$.
\end{example}
\begin{example}
Let $\mathcal{E}$ consists of all sequences $(a_{n})_{n\in\mathbb{N}}$, $a_{n}\in{A}$, such that 
\begin{displaymath}
\sum_{n} \lVert a_{n} \rVert^{2}<\infty 
\end{displaymath}
with inner product $((a_{n}),(b_{n}))=\sum_{n}a_{n}^{*}b_{n}$.\\
$\mathcal{E}$ is called the \emph{Hilbert space over A}, and is often denoted with $A^{\infty}$.
\end{example}

Let ${\mathcal{E}},{\mathcal{F}}$ be Hilbert $A$-modules and $\topsp{T} :
{\mathcal{E}} \rightarrow {\mathcal{F}}$ an
$A$-linear map. We call a map $\topsp{T}^* : {\mathcal{F}} \rightarrow
{\mathcal{E}}$ such that
$(\topsp{T}x,y)_{\mathcal{F}}=(x,\topsp{T}^*y)_{\mathcal{E}}$ for all $x \in
{\mathcal{E}}, \, y \in {\mathcal{F}}$ the
\textit{adjoint} of $\topsp{T}$. If it exists the adjoint is unique.\\
In contrast to ordinary bounded operators on separable Hilbert spaces, not every $A$-linear map between Hilbert
$A$-modules has an adjoint. We denote the set
of all $A$-linear maps $\topsp{T} : {\mathcal{E}} \rightarrow {\mathcal{F}}$
admitting an adjoint by
$\mathcal{L}({\mathcal{E}},{\mathcal{F}})$.
Elements of $\mathcal{L}({\mathcal{E}},{\mathcal{F}})$ are bounded
$A$-linear maps and
$\mathcal{L}({\mathcal{E}}):=\mathcal{L}({\mathcal{E}},{\mathcal{E}})$
is a $C^*$-algebra with the operator
norm and involution given by the adjoint.\\[2mm]
Notice that a submodule of a Hilbert $A$-module $\mathcal{E}$ in general need not be complemented, i.e. there is generally no projection in $\mathcal{L}({\mathcal{E}})$ onto the given submodule.\\ 
However, one can define some special ``rank 1'' operators as follows. Given $x \in {\mathcal{F}}, \, y \in
{\mathcal{E}}$ we define an operator $\theta_{x,y} \in
\mathcal{L}({\mathcal{E}},{\mathcal{F}})$ by
$\theta_{x,y}(z)=x\,(y,z)_{\mathcal{E}}$. These operators generate an
$\mathcal{L}({\mathcal{E}})-\mathcal{L}({\mathcal{F}})$-bimodule whose
norm closure in $\mathcal{L}({\mathcal{E}},{\mathcal{F}})$ is denoted
$\mathcal{K}({\mathcal{E}},{\mathcal{F}})$. Elements of
$\mathcal{K}({\mathcal{E}},{\mathcal{F}})$ are called
$\textit{generalized compact operators}$.  We will use the notation $\mathcal{K}({\mathcal{E}})$ for $\mathcal{K}({\mathcal{E}},{\mathcal{E}})$.
\begin{example}$\mathcal{K}(A)=A$ for any real $C^{*}$-algebra $A$. If $A$ is unital, we have also that $\mathcal{L}(A)=A$.
\end{example}
\begin{example}
If $\mathcal{E}$ is a Hilbert $A$-module, then $\mathcal{L}(\mathcal{E}^{n})\simeq\mathbb{M}_{n}(\mathbb{R})\otimes\mathcal{L}(\mathcal{E})$, and $\mathcal{K}(\mathcal{E}^{n})\simeq\mathbb{M}_{n}(\mathbb{R})\otimes\mathcal{K}(\mathcal{E})$
\end{example}
\begin{example}
 $\mathcal{K}({A}^{\infty})\simeq{A}\otimes\mathcal{K}_{\mathbb{R}}$, where
$\mathcal{K}_{\mathbb{R}}:=\mathcal{K}(\mathcal{H}_{\mathbb{R}})$.
\end{example}
For a real $C^*$-algebra $A$, the \textit{multiplier algebra} of
$A$, $\textsf{M}(A)$, is the maximal $C^*$-algebra
containing $A$ as an essential ideal. Equivalently, by representing $A
\subset \mathcal{L}(\mathcal{H}_\real)$ one has
$$\textsf{M}(A)= \{ \, \topsp{T} \in \mathcal{L}(\mathcal{H}_\real) ~|~ \topsp{T}\,\topsp{S},
\topsp{S}\,\topsp{T} \in A \quad \textrm{for all} ~ \topsp{S} \in A \} \ . $$
The multiplier algebra $\textsf{M}(A)$ is a $C^*$-algebra which is
$*$-isomorphic to the $C^*$-algebra of double centralizers, i.e. pairs
$(\topsp{T}_1,\topsp{T}_2) \in \mathcal{L}(A) \times \mathcal{L}(A)$ such
that $a\,\topsp{T}_1(b)=\topsp{T}_2(a)\,b$, $\topsp{T}_1(a\,b)=\topsp{T}_1(a)\,b$ and
$\topsp{T}_2(a\,b)=a\,\topsp{T}_2(b)$ for all $a,b \in A$.\\
If $A$ is unital, then
$\textsf{M}(A)=A$. Furthermore, $\textsf{M}(\mathcal{K}_{\mathbb{R}})=
\mathcal{L}(\mathcal{H}_{\mathbb{R}})$, and
$\textsf{M}(\C_0(\topsp{X},\mathbb{R}))=\C_{\rm b}(\topsp{X},\mathbb{R})$ is the
$C^*$-algebra of real-valued bounded continuous functions on a locally
compact Hausdorff space $\topsp{X}$.\\
We have then the following proposition, whose proof follows from the analogous result for complex Hilbert modules \cite{blackadar}
\begin{proposition}
Let ${\mathcal{E}}$ be a Hilbert $A$-module. Then there is an
isomorphism $$\mathcal{L}\bigl({\mathcal{E}}\bigr)\simeq
{\sf M}\bigl(\mathcal{K}({\mathcal{E}})\bigr) \ . $$
\end{proposition}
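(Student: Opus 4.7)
The plan is to identify $\mathcal{L}(\mathcal{E})$ with $\mathsf{M}(\mathcal{K}(\mathcal{E}))$ via the canonical action of adjointable operators on the generalized compact operators. First I would verify the two basic identities governing the interaction of $\mathcal{L}(\mathcal{E})$ with the rank-one generators $\theta_{x,y}$, namely
\begin{equation}
T \, \theta_{x,y} \= \theta_{Tx, y} \qquad \text{and} \qquad \theta_{x,y}\, T \= \theta_{x, T^* y} \qquad \forall \, T \in \mathcal{L}(\mathcal{E})\,,
\end{equation}
which extend by continuity to the whole of $\mathcal{K}(\mathcal{E})$. These identities immediately imply that $\mathcal{K}(\mathcal{E})$ is a two-sided (closed, self-adjoint) ideal inside $\mathcal{L}(\mathcal{E})$, so there is a natural $*$-homomorphism
\begin{equation}
\Phi \, : \, \mathcal{L}(\mathcal{E}) ~\longrightarrow~ \mathsf{M}\bigl(\mathcal{K}(\mathcal{E})\bigr)
\end{equation}
sending $T$ to the pair of left/right multiplication operators $(L_T, R_T)$ in the double centralizer description of $\mathsf{M}(\mathcal{K}(\mathcal{E}))$ recalled in the paper.

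Next I would prove that $\mathcal{K}(\mathcal{E})$ sits in $\mathcal{L}(\mathcal{E})$ as an \emph{essential} ideal, so that $\Phi$ is injective. Concretely, if $T\in\mathcal{L}(\mathcal{E})$ satisfies $T\,\theta_{x,x} = 0$ for all $x$, then $T(x) \cdot (x,x) = 0$, and choosing an approximate unit for $A$ built from the positive elements $(x,x)$ (or, alternatively, replacing $(x,x)$ by $(x,x)^{1/n}$ via functional calculus) forces $Tx = 0$ for every $x \in \mathcal{E}$. Thus $\Phi$ is injective by the universal/maximality characterization of the multiplier algebra.

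The main obstacle is surjectivity of $\Phi$: given a double centralizer $(T_1,T_2)$ on $\mathcal{K}(\mathcal{E})$, one must manufacture an adjointable operator $T \in \mathcal{L}(\mathcal{E})$ inducing it. I would proceed as follows. For each $x\in\mathcal{E}$ and each $y\in\mathcal{E}$, the rank-one operator $\theta_{x,y}$ lies in $\mathcal{K}(\mathcal{E})$, and one checks that $T_1(\theta_{x,y}) = \theta_{\tilde T x, y}$ and $T_2(\theta_{x,y}) = \theta_{x, \tilde T^* y}$ for uniquely determined $\tilde T x$ and $\tilde T^* y$; this requires picking a suitable $y$ (or a net in $A$) to cancel the ``$y$-slot'' and extract a well-defined value from $T_1$, using the centralizer identity
\begin{equation}
a \cdot T_1(b) \= T_2(a) \cdot b \qquad \forall\, a,b \in \mathcal{K}(\mathcal{E})\,.
\end{equation}
From here one verifies that $\tilde T : \mathcal{E}\to\mathcal{E}$ is $A$-linear, bounded (bound inherited from $\|(T_1,T_2)\|$), and admits $\tilde T^*$ as an adjoint — precisely because of the above compatibility — so that $\tilde T \in \mathcal{L}(\mathcal{E})$ and $\Phi(\tilde T) = (T_1,T_2)$. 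Finally, since $\Phi$ is a $*$-homomorphism between $C^*$-algebras that is bijective, it is automatically isometric, which upgrades the algebraic isomorphism to a $*$-isomorphism of real $C^*$-algebras, completing the proof.
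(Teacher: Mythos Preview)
The paper does not actually prove this proposition; it simply states that the proof ``follows from the analogous result for complex Hilbert modules'' and cites Blackadar. Your approach is exactly the standard one found there, so in that sense you are aligned with what the paper intends.

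That said, your surjectivity sketch has a genuine gap. You assert that $T_1(\theta_{x,y}) = \theta_{\tilde T x,y}$ for some well-defined $\tilde T x$, but you never establish why $T_1(\theta_{x,y})$ must be rank-one of this specific form, and the phrase ``picking a suitable $y$ (or a net in $A$) to cancel the $y$-slot'' does not constitute an argument. The clean way to produce $\tilde T$ is to use that $\mathcal{K}(\mathcal{E})$ acts \emph{non-degenerately} on $\mathcal{E}$: by Cohen--Hewitt factorization every $x\in\mathcal{E}$ can be written as $K\xi$ with $K\in\mathcal{K}(\mathcal{E})$ and $\xi\in\mathcal{E}$, and one then sets $\tilde T x := T_1(K)\xi$. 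Well-definedness follows from the centralizer identity $L\,T_1(K) = T_2(L)\,K$ together with the same positivity trick you used for essentiality (if $L\eta=0$ for all $L\in\mathcal{K}(\mathcal{E})$ then take $L=\theta_{\eta,\eta}$ to force $\eta=0$). Your essentiality argument can also be sharpened in this spirit: rather than invoking approximate units, simply note that $T\theta_{x,y}=\theta_{Tx,y}=0$ for all $y$, take $y=Tx$, and conclude $Tx=0$ from $\lVert (Tx)(Tx,Tx)\rVert^2 = \lVert (Tx,Tx)\rVert^3$.
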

As with ordinary Hilbert spaces, one can define tensor products of Hilbert modules in
the following way \cite{blackadar}.\\
Let $\mathcal{E}_{i}$ be a Hilbert $B_{i}$-module, for i=1,2, and let $\phi:B_{1}\to{\mathcal{L}(\mathcal{E}_{2})}$ be a *-homomorphism. If we regard $\mathcal{E}_{2}$ as a left $B_{1}$-module via $\phi$, we can form the algebraic tensor product $\mathcal{E}_{1}\otimes_{B_{1}}\mathcal{E}_{2}$, which is a right $B_{2}$-module. Finally, we define the $B_{2}$-valued pre-inner product on the algebraic tensor product by $$(x_{1}\otimes{x_{2}},y_{1}\otimes{y_{2}}):=(x_{2},\phi((x_{1},y_{1})_{\mathcal{E}_{1}})y_{2})_{\mathcal{E}_{2}}$$
The completion of the algebraic tensor product with respect to this inner product, with vectors of length 0 divided out, is called the \emph{tensor product} of $\mathcal{E}_{1}$ and $\mathcal{E}_{2}$, and is denoted with $\mathcal{E}_{1}\otimes_{\phi}\mathcal{E}_{2}$.\\
As for ordinary Hilbert spaces, there is a natural homomorphism from $\mathcal{L}(\mathcal{E}_{1})$ to $\mathcal{L}(\mathcal{E}_{1}\otimes_{\phi}\mathcal{E}_{2})$, and we will denote the image of $\topsp{T}\in\mathcal{L}{(\mathcal{E}_{1})}$ as $\topsp{T}\otimes{1}$, or $\phi_{*}(\topsp{T})$ when we want to emphatize the homomorphism $\phi$. However, there is no homomorphism from $\mathcal{L}(\mathcal{E}_{2})$ to $\mathcal{L}(\mathcal{E}_{1}\otimes_{\phi}\mathcal{E}_{2})$ in general.\\
Finally, if ${\mathcal{E}}$ is a pre-Hilbert module over the real $C^*$-algebra $A$, we
assume that the complexification ${\mathcal{E}} \otimes \mathbb{C}$ is a
pre-Hilbert module over $A_{\mathbb{C}}$. This means that the
$A$-valued inner product extends to a sesquilinear map. We assume
that sesquilinear maps are linear in the second variable.\\
\subsection{Kasparov's formalism for KKO-theory}
We are now ready to define KKO-theory by using Kasparov's approach, developed in \cite{kasparov}. In the following we will assume that a real $C^{*}$-algebra $A$ is separable and a real $C^{*}$-algebra $B$ is $\sigma$-unital, i.e. $B$ contains an element $h$ such that $\phi(h)>0$, for every character $\phi$ of $B$. This technical requirements will be useful in the following.
\begin{definition}
A \textit{(Kasparov)} $(A,B)$\textit{-module} is a triple
$({\mathcal{E}},\rho,\topsp{T})$, where ${\mathcal{E}}$ is a countably
generated $\mathbb{Z}_{2}$-graded Hilbert $B$-module, $\rho : A \rightarrow
\mathcal{L}({\mathcal{E}})$ is an even $*$-homomorphism and $\topsp{T} \in
\mathcal{L}({\mathcal{E}})$ such that
\beq
\left(\topsp{T}-\topsp{T}^*\right)\rho(a)~, \, \left(\topsp{T}^2-1
\right)\rho(a)~, \, \bigl[\topsp{T}\,,\,\rho(a)\bigr] \in
\mathcal{K}({\mathcal{E}})
\label{Kmoduledef}\eeq
for all $a \in A$, and T is odd with respect to the grading on $\mathcal{E}$. A Kasparov module $({\mathcal{E}},\rho,\topsp{T})$ is called {\it
  degenerate} if all operators in (\ref{Kmoduledef}) are zero.
Two Kasparov modules $({\mathcal{E}}_i,\rho_i,\topsp{T}_i)$, $i=1,2$ are said to be
\textit{orthogonally equivalent} if there is
an isometric isomorphism $\topsp{U} \in
\mathcal{L}({\mathcal{E}}_1,{\mathcal{E}}_2)$ such that
$\topsp{T}_1=\topsp{U}^*\,\topsp{T}_2\,\topsp{U}$ and $\rho_1(a)=\topsp{U}^*\,\rho_2(a)\,\topsp{U}$ for all $a \in
A$.
\end{definition}
Orthogonal equivalence is an equivalence relation on the set of
Kasparov modules. We denote the set of Kasparov modules by
${\sf E}(A,B)$. The subset containing degenerate modules is denoted
${\sf D}(A,B)$. Direct sum makes ${\sf E}(A,B)$ and
${\sf D}(A,B)$ into monoids.
\begin{definition}
Let $({\mathcal{E}}_i,\rho_i,\topsp{T}_i) \in {\sf E}(A,B)$ for
$i=0,1$, $({\mathcal{E}},\rho,\topsp{T}) \!\in\!
{\sf E}(A,\!B\otimes \C([0,1],\real))$, and let $f_t :
B\otimes \C([0,1],\real) \rightarrow B$ be the evaluation map
$f_t(g)=g(t)$. Then $({\mathcal{E}}_0,\rho_0,\topsp{T}_0)$ and
$({\mathcal{E}}_1,\rho_1,\topsp{T}_1)$ are
said to be \textit{homotopic} and $({\mathcal{E}},\rho,\topsp{T})$ is called a
\textit{homotopy} if $({\mathcal{E}} \otimes_{f_i}B,f_i \circ \rho,f_{i*}(\topsp{T}))$
is orthogonally equivalent to $({\mathcal{E}}_i,\rho_i,\topsp{T}_i)$ for $i=0,1$.
\label{homotopydef}\end{definition}
Homotopy is an equivalence relation on ${\sf E}(A,B)$ and we denote
the equivalence classes by $[{\mathcal{E}},\rho,\topsp{T}]$.
It is useful to consider special kinds of homotopy. If
${\mathcal{E}}=\C([0,1],{\mathcal{E}}_0)$,
${\mathcal{E}}_0={\mathcal{E}}_1$ and the induced maps $t\mapsto \topsp{T}_t,
\, t\mapsto \rho_t(a)$ for all $a \in A$ are strongly $*$-continuous,
then we call $({\mathcal{E}},\rho,\topsp{T})$ a \textit{standard homotopy}. If
in addition $\rho_t=\rho$ is constant and $\topsp{T}_t$ is norm continuous, then
$({\mathcal{E}},\rho,\topsp{T})$ is called an \textit{operator homotopy}. The following result holds, whose proof can be obtained by the analogous result in complex case \cite{blackadar}.
\begin{proposition}
Let $(\mathcal{E},\phi,\topsp{T})$ be an element in ${\sf D}(A,B)$. Then $(\mathcal{E},\phi,\topsp{T})$ is homotopic to the zero module.
\end{proposition}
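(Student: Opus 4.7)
The strategy is to build an explicit Kasparov $(A, B\otimes \C([0,1],\real))$-module that interpolates between the zero module at $t=0$ and $(\mathcal{E},\phi,\topsp{T})$ at $t=1$, exploiting the fact that the vanishing of the expressions in~(\ref{Kmoduledef}) makes the pointwise extension of $\phi$ and $\topsp{T}$ automatically satisfy the compactness conditions.

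First I would set $C_B := B\otimes \C([0,1],\real)\simeq \C([0,1],B)$ and let
\[
\tilde{\mathcal{E}} \;:=\; \bigl\{\, f\in \C([0,1],\mathcal{E}) \;\big|\; f(0)=0\,\bigr\},
\]
viewed as a Hilbert $C_B$-module with pointwise right action of $C_B$ and inner product $(f,g)(t):=(f(t),g(t))_{\mathcal{E}}$; the $\zed_2$-grading is inherited pointwise from the one on $\mathcal{E}$. Next, define $\tilde\phi:A\to\mathcal{L}(\tilde{\mathcal{E}})$ and $\tilde{\topsp{T}}\in\mathcal{L}(\tilde{\mathcal{E}})$ by $(\tilde\phi(a)f)(t)=\phi(a)\,f(t)$ and $(\tilde{\topsp{T}}f)(t)=\topsp{T}\,f(t)$. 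Both are well defined because $\phi(a)$ and $\topsp{T}$ lie in $\mathcal{L}(\mathcal{E})$ and the condition $f(0)=0$ is preserved; their adjoints are given by the same formulas with $\phi(a)^*$ and $\topsp{T}^*$, so $\tilde\phi$ is an even $*$-homomorphism and $\tilde{\topsp{T}}$ is an odd element of $\mathcal{L}(\tilde{\mathcal{E}})$.

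The key simplification is that, since $(\mathcal{E},\phi,\topsp{T})\in {\sf D}(A,B)$, the three operators $(\topsp{T}-\topsp{T}^*)\phi(a)$, $(\topsp{T}^2-1)\phi(a)$ and $[\topsp{T},\phi(a)]$ vanish in $\mathcal{L}(\mathcal{E})$ for every $a\in A$. Their pointwise extensions to $\tilde{\mathcal{E}}$ are therefore identically zero, hence trivially lie in $\mathcal{K}(\tilde{\mathcal{E}})$. This shows $(\tilde{\mathcal{E}},\tilde\phi,\tilde{\topsp{T}})\in {\sf E}(A,C_B)$ without any further analytic work.

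Finally I would check that $(\tilde{\mathcal{E}},\tilde\phi,\tilde{\topsp{T}})$ is a homotopy in the sense of Definition~\ref{homotopydef} between the zero module and $(\mathcal{E},\phi,\topsp{T})$. For the evaluation maps $f_t:C_B\to B$, the interior tensor product $\tilde{\mathcal{E}}\otimes_{f_t} B$ is naturally identified with the closed submodule $\{\,f(t):f\in\tilde{\mathcal{E}}\,\}\subset\mathcal{E}$. At $t=0$ the condition $f(0)=0$ forces this to be the zero module, and the induced maps $f_0\circ\tilde\phi$ and $f_{0*}(\tilde{\topsp{T}})$ are zero, giving orthogonal equivalence with the trivial triple $(0,0,0)$. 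At $t=1$, surjectivity of $\mathrm{ev}_1:\tilde{\mathcal{E}}\to\mathcal{E}$ (since any $x\in\mathcal{E}$ equals $f(1)$ for $f(t)=tx$) identifies $\tilde{\mathcal{E}}\otimes_{f_1}B$ isometrically with $\mathcal{E}$, and under this identification $f_1\circ\tilde\phi=\phi$ and $f_{1*}(\tilde{\topsp{T}})=\topsp{T}$, yielding orthogonal equivalence with $(\mathcal{E},\phi,\topsp{T})$. The only delicate points are the two identifications of the interior tensor products, which I expect to be the main technical step and which follow from the standard fact that for an evaluation $*$-homomorphism $f_t:C([0,1],B)\to B$ the interior tensor product $\tilde{\mathcal{E}}\otimes_{f_t}B$ is the Hausdorff completion of the algebraic tensor product by elements of zero $B$-valued inner product, i.e.\ precisely the closure of $\mathrm{ev}_t(\tilde{\mathcal{E}})$ in $\mathcal{E}$.
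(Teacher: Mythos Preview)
Your proposal is correct and follows exactly the standard argument (as in Blackadar's book, Proposition~17.2.3) that the paper is tacitly invoking when it says ``the proof can be obtained by the analogous result in the complex case.'' The construction $\tilde{\mathcal{E}}=\{f\in\C([0,1],\mathcal{E}):f(0)=0\}$ with pointwise $\tilde\phi$ and $\tilde{\topsp{T}}$, together with the observation that degeneracy makes the compactness conditions trivially hold, is precisely the standard homotopy, and your identification of the endpoint fibres is correct.
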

We can now give the definition of the Kasparov's KKO-groups.
\begin{definition}
The set of equivalence classes in ${\sf E}(A,B)$ with respect to
homotopy of $(A,B)$-modules is denoted $\KKO(A,B)$ or
$\KKO_0(A,B)$. For $p,q \geq 0$ we define $$\KKO_{p,q}(A,B):=\KKO(A,B
\otimes \Cl_{p,q}) \ , $$ where $\Cl_{p,q}:=\Cl(\real^{p,q})$ is the
real Clifford algebra of the vector space $\real^{p+q}$ with quadratic
form of signature $(p,q)$.
\label{KKOgendef}\end{definition}
The equivalence relation allows us to simplify the $(A,B)$-modules
required to define $\KKO(A,B)$ \cite{blackadar}. Indeed, we need only consider modules
of the form $(B^\infty,\rho,\topsp{T})$ with $\topsp{T}=\topsp{T}^*$. If $A$ is unital,
we can further assume that $\lVert \topsp{T} \rVert \leq 1$ and
$\topsp{T}^2-1 \in \mathcal{K}(B^\infty)$.\\

There is another equivalence relation that we can define on
${\sf E}(A,B)$. We say that two $(A,B)$-modules $({\mathcal{E}}_i,\rho_i,\topsp{T}_i)$,
$i=0,1$ are \textit{stably operator homotopic},
$({\mathcal{E}}_0,\rho_0,\topsp{T}_0)\simeq_{\rm
  oh}({\mathcal{E}}_1,\rho_1,\topsp{T}_1)$, if there exist
$({\mathcal{E}}_i^{\prime},\rho_i^{\prime},\topsp{T}_i^{\prime})\in{\sf D}(A,B)$ such that
$({\mathcal{E}}^{\phantom{\prime}}_0 \oplus
{\mathcal{E}}_0^{\prime},\rho^{\phantom{\prime}}_0 \oplus
\rho_0^{\prime},\topsp{T}^{\phantom{\prime}}_0 \oplus \topsp{T}_0^{\prime})$ and
$({\mathcal{E}}^{\phantom{\prime}}_1 \oplus
{\mathcal{E}}_1^{\prime},\rho^{\phantom{\prime}}_1
\oplus \rho_1^{\prime},\topsp{T}^{\phantom{\prime}}_1 \oplus \topsp{T}_1^{\prime})$
are operator homotopic up to orthogonal equivalence.One can proove  that the set of
equivalence classes with respect to $\simeq _{\rm oh}$ coincides with
the set $\KKO(A,B)$ defined above. For this result to hold the hypothesis that $B$ is $\sigma$-unital is of particular importance.\\
The set $\KKO(A,B)$ is an abelian group, for any separable real $C^{*}$-algebra $A$ and any real $\sigma$-unital $C^{*}$-algebra $B$. Moreover, $\KKO(-,-)$ is a covariant bifunctor from the category of separable $C^{*}$-algebras into the category of abelian groups which is additive, i.e.
\bea
\KKO(A_1 \oplus A_2,B)&=&\KKO(A_1,B) \oplus \KKO(A_2,B) \ ,
\nonumber\\ \KKO(A,B_1 \oplus B_2)&=&\KKO(A,B_1) \oplus \KKO(A,B_2) \
. \nonumber
\eea
Namely,  any two $*$-homomorphisms $f:A_2 \rightarrow A_1$ and $g:B_1 \rightarrow
B_2$ induce group homomorphisms
\bea
f^*\,:\,\KKO(A_1,B) &\longrightarrow& \KKO(A_2,B) \ , \nonumber\\
g_*\,:\, \KKO(A,B_1) &\longrightarrow& \KKO(A,B_2) \nonumber
\eea
defined by
\bea
f^*[{\mathcal{E}},\rho,T]&=&[{\mathcal{E}},\rho \circ f,T] \ , \nonumber\\
g_*[{\mathcal{E}},\rho,T]&=&[{\mathcal{E}} \otimes _g B_2,\rho \otimes
1,T \otimes 1]  . \nonumber
\eea
Finally, any two homotopies $f_t: A_2 \rightarrow A_1$ and $g_t: B_1 \rightarrow
B_2$ induce the same homomorphism for all $t\in[0,1]$,
i.e. $f_t^*=f_0^*$ and $g_{t*}=g_{0*}$.
\subsection{Analytic KO-homology}
As mentioned in the introduction to this chapter, Kasparov's formalism represents a unified description of both KO-theory and KO-homology. Indeed, if we denote with $\KO_{p}(B)$ the algebraic KO-theory groups of a unital real $C^{*}$-algebra $B$, we have the following result \cite{blackadar,kasparov}.
\begin{theorem}
Let $B$ be a unital real $C^{*}$-algebra. Then, $\KKO(\mathbb{R},B)\simeq{\KO_{0}(B)}$ and $\KKO_{p,q}(\mathbb{R},{B})\simeq\KO_{p-q}(B)$
\end{theorem}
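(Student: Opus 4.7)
The plan is to construct mutually inverse group homomorphisms $\Phi: \KO_0(B) \to \KKO(\mathbb{R},B)$ and $\Psi: \KKO(\mathbb{R},B) \to \KO_0(B)$, and then to deduce the second assertion from the first using the Morita periodicity of real Clifford algebras. First I would define $\Phi$ on generators. An element of $\KO_0(B)$ is represented by a formal difference $[P]-[Q]$ of finitely generated projective right $B$-modules, which can always be realised as images of self-adjoint projections in some $\mat_n(B)=\mathcal{L}(B^n)$. Given such a pair, I send it to the Kasparov triple $(P\oplus Q,\rho,0)$, where the $\mathbb{Z}_2$-grading places $P$ in even degree and $Q$ in odd degree, and $\rho:\mathbb{R}\to\mathcal{L}(P\oplus Q)$ is scalar multiplication. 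The only non-trivial Kasparov condition to check is that $(0^2-1)\rho(1)=-\Id_{P\oplus Q}\in\mathcal{K}(P\oplus Q)$, which holds precisely because a finitely generated projective Hilbert module is a compact object in the Hilbert-module sense (being a direct summand of $B^n$). Additivity under direct sum and the standard stabilisation arguments show that $\Phi$ descends to a well-defined homomorphism on $\KO_0(B)$.

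Next I would construct $\Psi$. Given a representative $(\mathcal{E},\rho,T)$ with $T=T^*$, $\|T\|\leq 1$ and $T^2-1\in\mathcal{K}(\mathcal{E})$ — which by the reductions noted earlier (paragraph after Definition \ref{KKOgendef}) is always achievable — I would apply the real continuous functional calculus to replace $T$ by an operator homotopy to a self-adjoint contraction $S$ whose spectrum is concentrated near $\{\pm 1\}$. The spectral projections $P_\pm=\chi_{\pm}(S)$ are well defined since $S^2-1$ is compact and $\pm 1$ are isolated points of the essential spectrum; these projections differ from $\frac12(1\pm S)$ by compact operators in $\mathcal{L}(\mathcal{E})$, and the compact summands contribute finitely generated projective $B$-submodules $\ker(S\mp 1)$ of $\mathcal{E}^\pm$. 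Setting
\[
\Psi[\mathcal{E},\rho,T]:=\bigl[\ker(S-1)|_{\mathcal{E}^+}\bigr]-\bigl[\ker(S+1)|_{\mathcal{E}^+}\bigr]\in\KO_0(B),
\]
well-definedness requires checking invariance under: (i) orthogonal equivalence (immediate); (ii) addition of degenerate modules (on which $S$ is a true symmetry with complementary projective eigenspaces contributing to both terms equally); and (iii) operator homotopy (a continuous path of compact perturbations moves the finitely generated projective eigenspaces through an isomorphism of $\KO_0(B)$-classes). Standard homotopy reduces to operator homotopy modulo degenerate modules, so this gives a homomorphism on $\KKO(\mathbb{R},B)$.

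The identifications $\Psi\circ\Phi=\Id$ and $\Phi\circ\Psi=\Id$ are then direct: $\Phi([P]-[Q])$ has $S=0$ after the obvious operator homotopy through $(1-t)\cdot 0 + t\cdot 0$ (trivially), and its $\pm 1$-eigenspaces on the even part are $P$ and $Q$ respectively after the spectral analysis above — a short calculation using the grading identifies the output with $[P]-[Q]$. The main obstacle in this first half is (ii) above, namely showing that the kernel projections produced by functional calculus genuinely yield finitely generated projective $B$-modules; this requires the Hilbert-module version of the fact that a compact self-adjoint operator has finitely generated projective eigenspaces away from $0$, which rests on the characterisation $\mathcal{K}(\mathcal{E})\simeq\mathcal{K}_\mathbb{R}\otimes B$ for $\mathcal{E}=B^\infty$ and a limiting argument.

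Finally, for the graded statement, the definition $\KKO_{p,q}(\mathbb{R},B):=\KKO(\mathbb{R},B\otimes\Cl_{p,q})$ combined with the isomorphism just established gives $\KKO_{p,q}(\mathbb{R},B)\simeq\KO_0(B\otimes\Cl_{p,q})$. I would then invoke the real Clifford algebra periodicity $\Cl_{p+1,q+1}\simeq\Cl_{p,q}\otimes\Cl_{1,1}\simeq\Cl_{p,q}\otimes\mat_2(\mathbb{R})$, which by Morita invariance of $\KO_0$ implies that $\KO_0(B\otimes\Cl_{p,q})$ depends only on $p-q$. Combining with the standard identification $\KO_n(B)=\KO_0(B\otimes\Cl_{n,0})$ for $n\geq 0$ (and the analogous formula with $\Cl_{0,-n}$ for $n<0$) — which is exactly the Atiyah-Bott-Shapiro style definition of higher $\KO$-groups used in the previous chapter — we obtain $\KKO_{p,q}(\mathbb{R},B)\simeq\KO_{p-q}(B)$.
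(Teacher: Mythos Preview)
The paper does not supply its own proof of this theorem; it simply cites Blackadar and Kasparov. Your overall strategy---building mutually inverse maps between $\KO_0(B)$ and $\KKO(\mathbb{R},B)$, then deducing the graded statement from Clifford periodicity and Morita invariance---is the standard one and is sound in outline.

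There is, however, a genuine error in your construction of $\Psi$. In the Kasparov module $(\mathcal{E},\rho,T)$ the operator $T$, and hence your $S$, is \emph{odd} with respect to the $\mathbb{Z}_2$-grading. Thus $S$ does not preserve $\mathcal{E}^+$, and the expression $\ker(S\pm 1)|_{\mathcal{E}^+}$ is at best ambiguous. More seriously, for an odd self-adjoint $S$ with $S^2$ close to the identity, the $+1$- and $-1$-eigenspaces are exchanged by the grading involution, so any invariant built from restricting them to a single graded piece will not detect the index. The correct construction uses the off-diagonal component $T_+:\mathcal{E}^+\to\mathcal{E}^-$: the condition $T^2-1\in\mathcal{K}(\mathcal{E})$ makes $T_+$ a generalised Fredholm operator between Hilbert $B$-modules, and after a compact perturbation one sets $\Psi[\mathcal{E},\rho,T]=[\ker T_+]-[\coker T_+]$. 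With this correction your verification of $\Psi\circ\Phi=\Id$ becomes transparent: for $(P\oplus Q,\rho,0)$ the off-diagonal block is the zero map $P\to Q$, with kernel $P$ and cokernel $Q$. Your verification as written (``$S=0$ has $\pm 1$-eigenspaces $P$ and $Q$'') is incoherent, since $0$ has no nonzero eigenvalues.

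A smaller point: the paper defines $\KO_p(A):=\KO_0(C_0(\mathbb{R}^p)\otimes A)$ via suspension, not via tensoring with Clifford algebras as you assert. Your final step therefore needs one additional input, namely the identification $\KO_0(B\otimes\Cl_{n,0})\simeq\KO_0(C_0(\mathbb{R}^n)\otimes B)$, which is a form of Bott periodicity (or, in the KK framework, follows from the Kasparov product with the Dirac and dual-Dirac elements).
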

Recall that for a real unital $C^{*}$-algebra $A$, the algebraic K-theory group ${\rm KO}_{0}(A)$ is defined as the Grothendieck group of the monoid of unitarily equivalent projectors in $M_{\infty}(A)$, which is defined as the direct limit of $A$-valued matrix algebras $M_{n}(A)$ under the embedding $a\to{{\rm diag}(a,0)}$. The higher algebraic K-theory group can be defined by ${\rm KO}_{p}(A):={\rm KO}_{0}(C_{0}(\mathbb{R}^{p})\otimes{A})$.\\
If $\topsp{X}$ is a compact Hausdorff space, $\KKO_{p}(\mathbb{R},\C(\topsp{X},\mathbb{R}))\simeq{}\KO_{p}(\C(\topsp{X},\mathbb{R}))\simeq{\KO^{p}(\topsp{X})}$. On the other hand, using the Gel'fand transform the contravariant functor
$(\topsp{X},\ta) \mapsto \C(\topsp{X},\ta)$ induces an equivalence of categories
between the category of compact Hausdorff spaces with involution and
the category of commutative real $C^*$-algebras. Since
$\KKO_\sharp(-,\mathbb{R})$ is also a contravariant functor, it
follows that their composition $(\topsp{X},\ta)
\mapsto\KKO_\sharp(\C(\topsp{X},\ta),\mathbb{R})$ is a covariant functor. This motivates the following
\begin{definition}
Let $(\topsp{X},\ta)$ be a compact Hausdorff space with involution. The
\textit{analytic KO-homology groups of $(\topsp{X},\ta)$} are defined by
$$\KO_n^{\rm
  a}\bigl(\topsp{X}\,,\,\ta\bigr):=\KKO_{n,0}\bigl(\C(\topsp{X},\ta)\,,\,\mathbb{R}
\bigr)=\KKO\bigl(\C(\topsp{X},\tau)\,,\,\Cl_n\bigr). $$
\end{definition}
In the following, we will illustrate an alternative description of Kasparov's KO-homology groups for a real $C^{*}$-algebra $A$, referring to \cite{Higson} for more details.\\
This description is based on triples $({\mathcal{H}_\real},\rho,\topsp{T})$ which are defined
by the data:
\begin{romanlist}
\item ${\mathcal{H}_\real}$ is a separable real Hilbert space;
\item $\rho:A\to\mathcal{L}({\mathcal{H}_\real})$ is a *-representation of $A$; and
\item $\topsp{T}$ is a bounded linear operator on ${\mathcal{H}_\real}$.
\end{romanlist}
These are assumed to satisfy the following conditions:
\begin{romanlist}
\item ${\mathcal{H}_\real}$ is equipped with a
  $\mathbb{Z}_{2}$-grading such that $\rho(a)$ is even for all $a\in
  A$ and $\topsp{T}$ is odd;
\item For all $a\in A$ one has
\beq
\left(\topsp{T}^{2}-1\right)\,\rho(a)~,~
  \left(\topsp{T}-\topsp{T}^{*}\right)\,\rho(a)~, ~
\topsp{T}\,\rho(a)-\rho(a)\,\topsp{T} \in \mathcal{K}_\real \ ;
\label{TrhoKOarels}\eeq
and
\item There are odd $\real$-linear operators
  $\varepsilon_{1},\dots,\varepsilon_{n}$ on ${\mathcal{H}_\real}$
  with the $\Cl_n$ algebra relations
\beq
\varepsilon^{\phantom{*}}_{i}=-\varepsilon_{i}^{*} \ ,
\quad \varepsilon^{2}_{i}=-1 \ , \quad
\varepsilon^{\phantom{*}}_{i}\,\varepsilon^{\phantom{*}}_{j}+
\varepsilon^{\phantom{*}}_{j}\,\varepsilon^{\phantom{*}}_{i}=0
\label{Clnrels}\eeq
for $i\neq j$ such that $\topsp{T}$ and $\rho(a)$ commute with each
$\varepsilon_{i}$.
\end{romanlist} 
We shall refer to the triple $(\mathcal{H}_\real,\rho,\topsp{T})$ as
an \textit{$n$-graded Fredholm module}.\\
Let us denote by $\Gamma{\rm O}^{n}(A)$ the set of all $n$-graded
Fredholm modules over $A$. Consider the equivalence relation $\sim$ on
$\Gamma{\rm O}^{n}(A)$ generated by the relations:
\begin{itemize}
\item[]\emph{Orthogonal equivalence}:
  $({\mathcal{H}_\real^{\phantom{\prime}}},\rho,\topsp{T})\sim({\mathcal{H}_\real'},
\rho',\topsp{T}'\,)$ if and only if there exists an isometric
degree-preserving linear operator
$U:{\mathcal{H}^{\phantom{\prime}}_\real}\to{{\mathcal{H}_\real'}}$
such that $U\,\rho(a)=\rho'(a)\,U$ for all $a\in A$, $U\,\topsp{T}=\topsp{T}'\,U$, and
$U\,\varepsilon^{\phantom{\prime}}_{i}=\varepsilon'_{i}\,U$; and
\item[]\emph{Homotopy equivalence}:
  $({\mathcal{H}_\real},\rho,\topsp{T})\sim({\mathcal{H}_\real},\rho,\topsp{T}'\,)$ if
  and only if there exists a norm continuous function
$t\mapsto{\topsp{T}_{t}}$ such that $({\mathcal{H}_\real},\rho,\topsp{T}_{t})$ is a
Fredholm module for all $t\in[0,1]$ with $\topsp{T}_{0}=\topsp{T}$, $\topsp{T}_{1}=\topsp{T}'$.
\end{itemize}
We define the {\it direct sum} of two Fredholm modules
$({\mathcal{H}^{\phantom{\prime}}_\real},\rho,\topsp{T})$ and
$({\mathcal{H}_\real'},\rho',\topsp{T}'\,)$ to be the Fredholm module
$({\mathcal{H}^{\phantom{\prime}}_\real}\oplus{{\mathcal{H}_\real'}},
\rho\oplus\rho',\topsp{T}\oplus{\topsp{T}'}\,)$.

We may now define $\KO^{n}(A)$ as the free abelian group generated
by elements in $\Gamma{\rm O}^{n}(A)/{\sim}$ and quotiented by the ideal
generated by the set $\{[x_{0}\oplus{x_{1}}]-[x_{0}]-[x_{1}] \, | \,
[x_{0}],[x_{1}]\in\Gamma{\rm O}^{n}(A)/{\sim} \}$. In $\KO^{n}(A)$ the
\emph{inverse} of a class represented by the module
$({\mathcal{H}^{\phantom{\rm o}}_\real},\rho,\topsp{T})$ is given by
$({\mathcal{H}^{\rm o}_\real},\rho,\topsp{T})$, where ${\mathcal{H}^{\rm o}_\real}$
is the Hilbert space ${\mathcal{H}^{\phantom{\rm o}}_\real}$ with the opposite
$\zed_{2}$-grading and where the operators $\varepsilon_{i}$ reverse
their signs. Moreover, the operator $\topsp{\topsp{T}}$ for a triple $({\mathcal{H}_\real},\rho,\topsp{T})$ representing an element in $\KO^{n}(A)$ can be taken to be a Fredholm operator without loss of generality \cite{blackadar}.\\
For a compact Hausdorff space $\topsp{X}$ we define
\begin{displaymath}
\KO_{n}^{\rm a}\bigl(\topsp{X}\bigr):={\KO^{n}\bigl(\C(\topsp{X},\real)\bigr)}=
\KKO\bigl(\C(\topsp{X},\real)\,,\,\Cl_n\bigr) \ .
\end{displaymath}
Moreover, for a compact pair $(\topsp{X},\topsp{Y})$, we can define the higher relative KO-homology groups as
\begin{displaymath}
\KO_{n}^{\rm a}\bigl(\topsp{X},\topsp{Y}\bigr):={\KO^{n}\bigl(\C(\topsp{X/Y},\real)\bigr)}=
\KKO\bigl(\C(\topsp{X/Y},\real)\,,\,\Cl_n\bigr) \ .
\end{displaymath}
The groups $\KO_{n}^{\rm a}\bigl(\topsp{X},\topsp{Y}\bigr)$ enjoy Bott periodicity
\begin{displaymath}
\KO_{n}^{\rm a}\bigl(\topsp{X},\topsp{Y}\bigr)\simeq\KO_{n+8}^{\rm a}\bigl(\topsp{X},\topsp{Y}\bigr), \forall\:n\geq{0}
\end{displaymath}
which can be proven by using the periodicity of real Clifford algebras, and the isomorphisms
\begin{displaymath}
\KKO(A \otimes \Cl_{p,q},B\otimes \Cl_{r,s}) \simeq \KKO(A \otimes
\Cl_{p,q} \otimes \Cl_{r,s},B) \simeq \KKO(A \otimes
\Cl_{p-q+s-r,0},B)
\end{displaymath}
induced by the \emph{intersection product}, which we will not attempt to define here.\\
The term ``KO-homology'' for the groups $\KO_{n}^{\rm a}\bigl(\topsp{X},\topsp{Y}\bigr)$ is justified by the following theorem \cite{baum-2007-3,Higson}.
\begin{theorem}\label{kasparov}\textbf{(Kasparov)} There are connecting homomorphisms
\begin{displaymath}
\del :{\KO_{n}^{\rm a}\bigl(\topsp{X},\topsp{Y}\bigr)}\to\KO_{n-1}^{\rm a}\bigl(\topsp{X},\topsp{Y}\bigr)
\end{displaymath} 
which are compatible with Bott periodicity, and give Kasparov KO-homology the structure of a $\mathbb{Z}_{8}$-graded homology theory on the category of CW-complex pairs $(\topsp{X},\topsp{Y})$. On the subcategory of finite CW-complex pairs Kasparov's KO-homology is isomorphic to spectral KO-homology.
\end{theorem}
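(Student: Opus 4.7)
The plan is to break the theorem into two logically distinct parts: (i) showing that the functors $\KO_n^{\rm a}(-,-)$ assemble into a $\mathbb{Z}_8$-graded generalized homology theory on CW-pairs, and (ii) identifying this theory with spectral KO-homology on finite CW-complexes via a uniqueness-of-homology-theories argument.

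For part (i), I would first construct the connecting homomorphism $\partial$ from the machinery of KK-theory. Given a CW-pair $(\topsp{X},\topsp{Y})$, the short exact sequence of real $C^*$-algebras
\begin{equation}
0 \longrightarrow \C_{0}(\topsp{X}\setminus\topsp{Y},\real) \longrightarrow \C(\topsp{X},\real) \longrightarrow \C(\topsp{Y},\real) \longrightarrow 0
\nonumber
\end{equation}
is semisplit (since $\topsp{Y}$ is a CW-subcomplex of $\topsp{X}$, so a continuous section exists up to the appropriate CP-approximation), and Kasparov's general machinery then yields a long exact sequence obtained by contravariantly applying $\KKO(-,\Cl_n)$. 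Identifying $\C(\topsp{X}/\topsp{Y},\real)$ with the unitalization of $\C_{0}(\topsp{X}\setminus\topsp{Y},\real)$ gives the desired identification $\KO_n^{\rm a}(\topsp{X},\topsp{Y}) \simeq \KKO(\C_{0}(\topsp{X}\setminus\topsp{Y},\real),\Cl_n)$ and furnishes $\partial:\KO_n^{\rm a}(\topsp{X},\topsp{Y})\to\KO_{n-1}^{\rm a}(\topsp{X},\topsp{Y})$. Compatibility with Bott periodicity follows because the Bott isomorphism in KKO-theory arises from intersection product with a class in $\KKO(\Cl_{0,8},\real)$, which commutes with the connecting map.

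Next I would verify the remaining Eilenberg--Steenrod axioms. Functoriality is immediate from the bifunctorial nature of $\KKO$; homotopy invariance follows from Definition~\ref{homotopydef} applied at the level of Gelfand duals (a homotopy of maps $f_t:(\topsp{X},\topsp{Y})\to(\topsp{X}',\topsp{Y}')$ produces an $(\C(\topsp{X}'/\topsp{Y}',\real),\C([0,1],\real)\otimes\Cl_n)$-module which is a Kasparov homotopy). Exactness of the six-term (in our case eight-term, owing to the $\mathbb{Z}_8$-periodicity) sequence is built into the KK-construction. The key axiom that requires genuine work is excision: for $U\subset\topsp{Y}$ with $\overline{U}\subset\mathrm{int}(\topsp{Y})$, the homeomorphism $\topsp{X}/\topsp{Y}\cong(\topsp{X}\setminus U)/(\topsp{Y}\setminus U)$ induces a $*$-isomorphism of the corresponding real $C^*$-algebras, so excision reduces to naturality of $\KKO$.

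For part (ii), I would evaluate both theories on a point: by definition $\KO_n^{\rm a}(\mathrm{pt})=\KKO(\real,\Cl_n)\simeq\KO^{-n}(\mathrm{pt})$, and the spectral theory gives $\KO_n^s(\mathrm{pt})=\redko{0}{\topsp{S}^n}\simeq\KO^{-n}(\mathrm{pt})$ by the representability discussed in the previous section. To upgrade this to a natural isomorphism of functors, I would construct a natural transformation $\Phi:\KO_n^s\to\KO_n^{\rm a}$, for instance by sending a representative map $\topsp{X}^{+}\wedge\topsp{S}^{n+k}\to\Fred_{k}$ to the Fredholm module on $\C(\topsp{X},\real)$ obtained by pointwise pullback, and then checking that $\Phi$ is an isomorphism on $\mathrm{pt}$. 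Since both $\KO^s_*$ and $\KO^{\rm a}_*$ are homology theories on the category of finite CW-pairs and $\Phi$ is a natural transformation inducing an isomorphism on the point, the comparison theorem for generalized homology theories (i.e.\ the Milnor--type uniqueness argument, proven by induction on cell structure using the Mayer--Vietoris sequences coming from the exactness and excision axioms in both theories) implies that $\Phi$ is an isomorphism on every finite CW-pair.

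The main obstacle I expect is the verification of excision together with the naturality of the comparison map $\Phi$ --- in particular checking that the Kasparov connecting homomorphism $\partial$ matches the boundary of the spectral long exact sequence up to sign, so that the five-lemma induction in the uniqueness argument actually closes. Everything else, while technically involved, is routine manipulation of KK-theoretic constructions.
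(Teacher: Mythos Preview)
The paper does not supply its own proof of this statement: it is attributed to Kasparov and quoted as a known result with a citation to \cite{baum-2007-3,Higson}, so there is no in-paper argument to compare against. Your outline is the standard one and is essentially what the cited references do.

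Two comments on your sketch. First, your construction of $\partial$ via the long exact sequence associated to the ideal inclusion $\C_0(\topsp{X}\setminus\topsp{Y},\real)\hookrightarrow\C(\topsp{X},\real)$ produces a map landing in $\KO_{n-1}^{\rm a}(\topsp{Y})$, not in $\KO_{n-1}^{\rm a}(\topsp{X},\topsp{Y})$ as you wrote (the displayed target in the theorem statement is itself likely a typographical slip in the paper; the connecting map in any homology theory goes to the absolute group of the subspace). Your construction is correct; just be consistent about where it lands.

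Second, the step you flag as the obstacle --- constructing the natural transformation $\Phi:\KO^s_*\to\KO^{\rm a}_*$ and matching boundary maps --- is indeed where the real content lies. Your description of $\Phi$ as ``pointwise pullback'' of a map to $\Fred_k$ is too vague to be a construction: one must actually build, from a map $\topsp{S}^{n+k}\wedge\topsp{X}^+\to\Fred_k$, a Kasparov $(\C(\topsp{X},\real),\Cl_n)$-module, and check well-definedness under the suspension maps in the colimit defining $\KO^s_n$. This is precisely what Kasparov's work (and the treatment in \cite{Higson}) provides, and it is not a routine exercise. Everything else in your outline --- homotopy invariance, excision via the homeomorphism of quotient spaces, and the comparison theorem on finite CW-pairs --- is indeed standard once $\Phi$ exists.
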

In particular, Theorem \ref{kasparov} implies that
\begin{displaymath}
\KO_{n}^{\rm a}({\rm pt})\simeq\KO^{-n}({\rm pt})
\end{displaymath}
This can indeed be proven by directly computing the groups $\KKO(\mathbb{R},\Cl_{n})$; see \cite{blackadar} for such a computation in the complex case.
\section{Geometric KO-homology}
As we have seen in the previous section, analytic KO-homology gives a representation of spectral KO-homology based on $C^{*}$-algebras and linear operators on Hilbert spaces. In this section we will develop a geometric version of KO-homology, which is analogous to the Baum-Douglas construction of K-homology \cite{Baumdouglas,Baumdouglas2,Reis2006}. Indeed, we will prove directly its homological properties by comparing it with other formulations of KO-homology.
\begin{definition}
Let $\topsp{X}$ be a finite CW-complex. A {\it KO-cycle} on $\topsp{X}$ is a triple $(\topsp{M},\topsp{E},\phi)$ where
\begin{romanlist}
\item $\topsp{M}$ is a compact spin manifold without boundary;
\item $\topsp{E}$ is a real vector bundle over $\topsp{M}$; and
\item $\phi : \topsp{M} \rightarrow \topsp{X}$ is a continuous map.
\end{romanlist}
\label{kcycle}\end{definition}
\noindent
There are no connectedness requirements made upon $\topsp{M}$, and hence the
bundle $\topsp{E}$ can have different fibre dimensions on the different
connected components of $\topsp{M}$. It follows that disjoint union
$$(\topsp{M}_1,\topsp{E}_1, \phi_1) \amalg (\topsp{M}_2,\topsp{E}_2,\phi_2):=(\topsp{M}_1 \amalg \topsp{M}_2,\topsp{E}_1 \amalg \topsp{E}_2,
\phi_1 \amalg \phi_2)$$ is a well-defined operation on the set of
KO-cycles on $\topsp{X}$, which we will denote with $\Gamma{\rm O}(\topsp{X})$.\\
In the following we will consider some equivalence relations on the set $\Gamma {\rm O}(\topsp{X})$.
\begin{definition}
Two KO-cycles $(\topsp{M}_1,\topsp{E}_1,\phi_1)$ and $(\topsp{M}_2,\topsp{E}_2,\phi_2)$ on $\topsp{X}$ are
  {\it isomorphic} if there exists a diffeomorphism $h : \topsp{M}_1 \rightarrow \topsp{M}_2$
  such that
\begin{romanlist}
\item $h$ preserves the spin structures;
\item $h^{*}(\topsp{E}_2) \simeq \topsp{E}_1$ as real vector bundles; and
\item The diagram
  $$\xymatrix{ \topsp{M}_1 \ar[r]^h  \ar[rd]_{\phi_1} & \topsp{M}_2 \ar[d]^{\phi_2}\\
     & \topsp{X} }$$
commutes.
\end{romanlist}
\label{isoK-cycles} \end{definition}
\begin{definition}
Two KO-cycles $(\topsp{M}_1,\topsp{E}_1,\phi_1)$ and $(\topsp{M}_2,\topsp{E}_2,\phi_2)$ on $\topsp{X}$ are
{\it spin bordant} if there exists a compact spin manifold $\topsp{W}$ with
boundary, a real vector bundle $\topsp{E} \rightarrow \topsp{W}$, and a continuous
map $\phi : \topsp{W} \rightarrow \topsp{X}$ such that the two KO-cycles
$$\bigl(\partial \topsp{W}\,,\, \topsp{E}|_{\partial \topsp{W}}\,,\,\phi|_{\partial \topsp{W}}\bigr)
\quad , \quad \bigl(\topsp{M}_1\amalg (-\topsp{M}_2)\,,\, \topsp{E}_1 \amalg \topsp{E}_2\,,
\, \phi_1 \amalg \phi_2\bigr)$$ are
isomorphic, where $-\topsp{M}_2$ denotes $\topsp{M}_2$ with the opposite class in $\text{\topsp{H}}^{1}(\topsp{\topsp{M}};\mathbb{Z}_{2})$ representing the spin structure on
its tangent bundle $\topsp{TM}_2$. The triple $(\topsp{W},\topsp{E},\phi)$ is called
a {\it spin bordism} of KO-cycles. \label{bord}
\end{definition}
We finally introduce the last equivalence relation we will need to define geometric KO-homology. Let $\topsp{M}$ be a spin manifold and $\topsp{F}\to \topsp{M}$ a $\C^\infty$ real
spin vector bundle with fibres of dimension $n:=\textrm{dim}_\real \, \topsp{F}_p
\equiv 0~\textrm{mod}~8$ for $p\in \topsp{M}$. Let
${{1\!\!1}}_{\topsp{M}}^\real:=\topsp{M} \times \mathbb{R}$ denote the trivial real line
bundle over $\topsp{M}$. Then $\topsp{F} \oplus {{1\!\!1}}_{\topsp{M}}^\real$ is a real vector
bundle over $\topsp{M}$ with fibres of dimension $n+1$ and projection map
$\lambda$. By choosing a $\C^{\infty}$ metric on it, we may define
 \beq 
 \widehat{\topsp{M}}=\S\bigl(\topsp{F} \oplus \id_{\topsp{M}}^\real\bigr)
 \label{hatMdef}
 \eeq  
where ${\S}\bigl(\topsp{F} \oplus {{1\!\!1}}_{\topsp{M}}^\real\bigr)$ denotes the spere bundle of $\topsp{F} \oplus {{1\!\!1}}_{\topsp{M}}^\real$. The tangent bundle
of $\topsp{F} \oplus {{1\!\!1}}_{\topsp{M}}^\real$ fits into an exact sequence of bundles given
by
$$
0~\longrightarrow~\lambda^*\bigl(\topsp{F} \oplus {{1\!\!1}}_{\topsp{M}}^\real\bigr)~
\longrightarrow~\topsp{T}\bigl(\topsp{F} \oplus {{1\!\!1}}_{\topsp{M}}^\real\bigr)~
\longrightarrow~\lambda^*\bigl(\topsp{T}\topsp{M}\bigr)~\longrightarrow~0 \ ,
$$
as for a vector bundle $\topsp{E}\xrightarrow{\pi}\topsp{M}$ the vertical tangent bundle to the fibration $\pi$ is isomorphic to the vector bundle $\pi^{*}\topsp{E}\to\topsp{E}$. Upon choosing a splitting, it follows that
\begin{displaymath}
\topsp{T}\bigl(\topsp{F}\oplus{{1\!\!1}}_{\topsp{M}}^\real\bigr)\simeq\lambda^{*}\bigl(\topsp{TM}\bigr)\oplus\lambda^{*}\bigl(\topsp{F} \oplus {{1\!\!1}}_{\topsp{M}}^\real\bigr)
\end{displaymath}
and hence the spin structures on $\topsp{TM}$ and $\topsp{F}\oplus{{1\!\!1}}_{\topsp{M}}^\real$ determine a spin structure on $\topsp{T}\bigl(\topsp{F}\oplus{{1\!\!1}}_{\topsp{M}}^\real\bigr)$. Since the sphere bundle $\S\bigl(\topsp{F} \oplus {{1\!\!1}}_{\topsp{M}}^\real\bigr)$ is the boundary of the disk bundle $\B\bigl(\topsp{F} \oplus {{1\!\!1}}_{\topsp{M}}^\real\bigr)$, and using the fact that we can equip $\B\bigl(\topsp{F} \oplus {{1\!\!1}}_{\topsp{M}}^\real\bigr)$ with the spin structure induced by that on the total space of $\topsp{T}\bigl(\topsp{F}\oplus{{1\!\!1}}_{\topsp{M}}^\real\bigr)$, it follows that $\widehat{\topsp{M}}$ is a compact spin manifold. By construction, $\widehat{\topsp{M}}$ is a sphere
bundle over $\topsp{M}$ with $n$-dimensional spheres $\topsp{S}^n$ as fibres. We
denote the bundle projection by \beq \pi \,:\, \widehat{\topsp{M}}
~\longrightarrow~ \topsp{M} \ . \label{pidef}\eeq We may
regard the total space $\widehat{\topsp{M}}$ as consisting of two
  copies $\B^\pm(\topsp{F})$, with opposite spin structures, of the unit ball
  bundle $\B(\topsp{F})$ of $\topsp{F}$ glued together by the identity map
  $\Id_{\S(\topsp{F})}$ on its boundary so that
\beq
\widehat{\topsp{M}} = \B^+(\topsp{F}) \cup_{\S(\topsp{F})} \B^-(\topsp{F}) \ .
\label{hatMBF}\eeq

Since $n \equiv 0~\textrm{mod}~8$, the group Spin$(n)$ has two
irreducible real half-spin representations. The spin structure on $\topsp{F}$
associates to these representations real vector bundles $\topsp{S}_0(\topsp{F})$ and
$\topsp{S}_1(\topsp{F})$ of equal rank $2^{n/2}$ over $\topsp{M}$. Their Whitney sum $\topsp{S}(\topsp{F})=\topsp{S}_0(\topsp{F})\oplus
\topsp{S}_1(\topsp{F})$ is a bundle of real Clifford modules over $\topsp{TM}$ such that
$\Cl(\topsp{F})\simeq{\rm End}\,\topsp{S}(\topsp{F})$, where $\Cl(\topsp{F})$ is the real Clifford
algebra bundle of $\topsp{F}$. Let $\nslash{\topsp{S}}^+(\topsp{F})$ and $\nslash{\topsp{S}}^-(\topsp{F})$ be
the real spinor bundles over $\topsp{F}$ obtained from pullbacks to $\topsp{F}$ by the
  bundle projection $\topsp{F} \rightarrow \topsp{M}$ of $\topsp{S}_0(\topsp{F})$ and $\topsp{S}_1(\topsp{F})$,
  respectively. Clifford multiplication induces a bundle map $\topsp{F}
  \otimes \topsp{S}_0(\topsp{F})\rightarrow \topsp{S}_1(\topsp{F})$ that defines a vector bundle map
  $\sigma :\nslash{\topsp{S}}^+(\topsp{F})\rightarrow\nslash{\topsp{S}}^-(\topsp{F})$ covering $\Id_\topsp{F}$
  which is an isomorphism outside the zero section of $\topsp{F}$. Since the
  ball bundle $\B(\topsp{F})$ is a sub-bundle of $\topsp{F}$, we may form real spinor
  bundles over $\B^\pm(\topsp{F})$ as the restriction bundles $\Delta^\pm(\topsp{F})
=\nslash{\topsp{S}}^\pm(\topsp{F})|_{\B^\pm(\topsp{F})}$. We can then glue $\Delta^{+}(\topsp{F})$ and
$\Delta^{-}(\topsp{F})$ along $\S(\topsp{F})=\partial\B(\topsp{F})$ by the Clifford
multiplication map $\sigma$ giving a real vector bundle over
$\widehat{\topsp{M}}$ defined by
\beq
\topsp{H}(\topsp{F}) = \Delta^{+}(\topsp{F})\cup_{\sigma}\Delta^{-}(\topsp{F}) \ .
\label{HFdef}\eeq
For each $p\in \topsp{M}$, the bundle $\topsp{H}(\topsp{F})|_{ \pi^{-1}(p)}$ is the real
  Bott generator vector bundle over the $n$-dimensional sphere
  $\pi^{-1}(p)$.\\
In particular, the class $[\topsp{H(F)}]\in\KO^{0}(\widehat{\topsp{M}})
$ is the image of the Thom class of $\topsp{F}$ $\tau_{\topsp{F}}\in\KO^{0}(\B^{+}(\topsp{F}),\S(\topsp{F}))$ under the composition of the homomorphisms
\begin{displaymath}
\KO^{0}(\B^{+}(\topsp{F}),\S(\topsp{F}))\to\KO^{0}(\widehat{\topsp{M}},\B^{-}(\topsp{F}))\to\KO^{0}(\widehat{\topsp{M}})
\end{displaymath}
where the first map is given by excision, and the second map is given by restriction.
\begin{definition}
Let $(\topsp{M},\topsp{E},\phi)$ be a KO-cycle on $\topsp{X}$ and $\topsp{F}$ a $\C^\infty$ real
spin vector bundle over $\topsp{M}$ with fibres of dimension $\textrm{dim}_\real
\, \topsp{F}_p \equiv 0~\textrm{mod}~8$ for $p\in \topsp{M}$. Then the process of obtaining
the KO-cycle $(\,\widehat{\topsp{M}},\topsp{H}(\topsp{F}) \otimes \pi^{*}(\topsp{E}),\phi \circ
\pi)$ from $(\topsp{M},\topsp{E},\phi)$ is called \it{real vector bundle modification}.
    \label{VBM} \end{definition}
We are now ready to define the geometric KO-homology groups of
the space $\topsp{X}$.

\begin{definition}
  The \textit{geometric KO-homology group of
  $\topsp{X}$} is the abelian group obtained from quotienting $\Gamma
{\rm O}(\topsp{X})$ by the equivalence relation $\sim$ generated by the
relations of
\begin{romanlist}
\item isomorphism;
\item spin bordism;
\item direct sum: if $\topsp{E}= \topsp{E}_1 \oplus \topsp{E}_2$, then $(\topsp{M},\topsp{E},\phi) \sim (\topsp{M},\topsp{E}_1,\phi)
  \amalg (\topsp{M},\topsp{E}_2,\phi)$; and
\item real vector bundle modification.
\end{romanlist}
  The group operation is induced by disjoint union of KO-cycles. We denote this
  group by $\KO_\sharp^{\rm t}(\topsp{X}):=\Gamma {\rm O}(\topsp{X}) \, / \sim$,
  and the homology class of the KO-cycle $(\topsp{M},\topsp{E},\phi)$ by
  $[\topsp{M},\topsp{E},\phi]\in\KO^{\rm t}_\sharp(\topsp{X})$. \label{Tgroup}
\label{KOtdef}\end{definition}
\indent Since the equivalence relation on $\Gamma {\rm O}(\topsp{X})$
preserves the dimension of $\topsp{M}$ $\textrm{mod}~8$ in KO-cycles
$(\topsp{M},\topsp{E},\phi)$, one can define the subgroups $\KO_n^{\rm t}(\topsp{X})$
 consisting of classes of KO-cycles
$(\topsp{M},\topsp{E},\phi)$ for which all connected components $\topsp{M}_i$ of $\topsp{M}$ are
of dimension $\textrm{dim} \, \topsp{M}_i \equiv n~\textrm{mod}~8$.
Then
\beq
\KO_\sharp^{\rm t}(\topsp{X}) = \bigoplus _{n=0}^7 \,\KO_{n}^{\rm t}(\topsp{X})
\label{KOtgrade}\eeq
has a natural $\zed_8$-grading.\\
The geometric construction of KO-homology is functorial. If $f:\topsp{X}
\rightarrow \topsp{Y}$ is a continuous map, then the induced
  homomorphism $$f_{*} \,:\, \KO_\sharp^{\rm t}(\topsp{X}) ~\longrightarrow~
  \KO_\sharp^{\rm t}(\topsp{Y})$$
of $\zed_8$-graded abelian groups is given on classes of KO-cycles
$[\topsp{M},\topsp{E},\phi]\in\KO^{\rm
  t}_\sharp(\topsp{X})$ by $$f_{*}[\topsp{M},\topsp{E},\phi] := [\topsp{M},\topsp{E},f\circ\phi] \ . $$
One has $(\Id_\topsp{X})_*=\Id_{\KO^{\rm t}_\sharp(X)}$ and $(f\circ
g)_*=f_*\circ g_*$. Since real vector bundles over $\topsp{M}$ extend to real
vector bundles over $\topsp{M}\times[0,1]$, it follows by spin bordism that
induced homomorphisms depend only on their homotopy classes.
If $\pt$ denotes a one-point topological space, then the
collapsing map $\zeta:\topsp{X} \rightarrow \pt$ induces an
epimorphism \beq \zeta_{*} \,:\, \KO_\sharp^{\rm t}(\topsp{X})
~\longrightarrow ~ \KO_\sharp^{\rm t}(\pt)  \ .
\label{collapseepi}\eeq The {\it reduced} geometric KO-homology
group of $\topsp{X}$ is \beq \widetilde{\KO}{}_\sharp^{\,\rm t}(\topsp{X}):=
\ker\zeta_{*} \ . \label{redKhom}\eeq Since the map
(\ref{collapseepi}) is an epimorphism with left inverse induced by
the inclusion of a point $\iota:\pt\hookrightarrow \topsp{X}$, one has
$\KO_\sharp^{\rm t}(\topsp{X})\simeq\KO_\sharp^{\rm
t}(\pt)\oplus\widetilde{\KO}{}_\sharp^{\,\rm t}(\topsp{X})$ for any
space $\topsp{X}$.\\

As for the complex case \cite{Reis2006}, the abelian group $\KO_\sharp^{\rm t}(\topsp{X})$ is generated by classes $[\topsp{M},\topsp{E},\phi]$ where $\topsp{M}$ is connected, and if $\{\topsp{X}_{j}\}_{j \in J}$ is the set of connected components of $\topsp{X}$
then $$\KO_\sharp^{\rm t}(\topsp{X})= \bigoplus _{j \in J}\,\KO_\sharp^{\rm
t}(\topsp{X}_{j}) \ . $$
Moreover, the homology class of a cycle $(\topsp{M},\topsp{E},\phi)$ on $\topsp{X}$ depends only
  on the KO-theory class of $\topsp{E}$ in $\KO^{0}(\topsp{M})$, and on the homotopy class of $\phi$ in $[\topsp{M},\topsp{X}]$.
\subsection{Homological properties of $\KO_\sharp^{\rm t}$}
In the previous section we have constructed a covariant functor $\KO_\sharp^{\rm t}$ from the category of finite CW-complexes to the category of abelian groups, which is homotopy invariant. We will now establish that this construction actually yields a (generalized) homology theory, and in particular is the dual homology to KO-theory. The main strategy consists in ``comparing'' the functors $\KO_{n}^{\rm t}$ with a realization of spectral KO-homology developed in \cite{jakob}, which we will denote with $\KO_{\sharp}^{'}$. Namely, for each pair $(\topsp{X},\topsp{Y})$ we will construct a map $\mu^{\rm s}
:\KO_n^{\rm t}(\topsp{X},\topsp{Y}) \rightarrow\KO^{\prime}_n(\topsp{X},\topsp{Y})$ for each $n\in\mathbb{Z}$ which defines a
natural equivalence between functors on the category of topological
spaces having the homotopy type of finite CW-pairs.\\
The set of cycles for $\KO_{\sharp}^{'}(\topsp{X})$ is given by triples $(\topsp{M},x,\phi)$ as in Definition \ref{kcycle}, but with $x\in\KO^{i}(\topsp{M})$ being a KO-theory class over $\topsp{M}$ such that dim \topsp{M}$\equiv i+n$ mod 8. The equivalence relations are as in the previous section, apart from real vector bundle modification, which is modified from Definition \ref{VBM} as follows. The nowhere zero section
\begin{equation}\label{sigmadef}
\Sigma^\topsp{F}\,:\,\topsp{M}~\longrightarrow~ \topsp{F}\oplus\id_{\topsp{M}}^\real
\end{equation}
defined by
$$
\Sigma^\topsp{F}(p)=0_p\oplus1
$$
for $p\in \topsp{M}$ induces an embedding
\beq
\Sigma^\topsp{F}\,:\,\topsp{M}~\hookrightarrow~\widehat{\topsp{M}} \ .
\label{secembind}\eeq
Then real vector bundle modification is replaced by the relation
$$\bigl(\topsp{M}\,,\,x\,,\,\phi\bigr)\sim\bigl(\,\widehat{\topsp{M}}\,,\,
\Sigma_!^\topsp{F}(x)\,\,,\phi\circ\pi\bigr) \ ,
$$ where the functorial homomorphism
$\Sigma_!^\topsp{F}:\KO^i(\topsp{M})\to\KO^{i+r}(\,\widehat{\topsp{M}}\,)$ is the Gysin map induced
by the embedding (\ref{secembind}), with $r={\rm rank(F)}$. By construction, the normal bundle to $\topsp{M}$ in $\widehat{\topsp{M}}$ can be identified with $\topsp{F}$. Since $\topsp{H}(\topsp{F})$ is the image of the Thom class of $\topsp{F}$, by the definition of Gysin morphism in section \ref{Kmanifold} it follows that on stable isomorphism classes of
real vector bundles $[\topsp{E}]\in\KO^0(\topsp{M})$ one has
\beq
\Sigma_!^\topsp{F}\bigl[\topsp{E}\bigr]=\bigl[\topsp{H}(\topsp{F})\otimes\pi^*(\topsp{E})\bigr] \ .
\label{secstable}\eeq  
To compare $\KO_{\sharp}^{\rm t}$ with $\KO_{\sharp}^{'}$, we define $\KO_{n+8k}^{\rm t}(\topsp{X}) :=  \KO_{n}^{\rm t}(\topsp{X})$ for
all $k \in \mathbb{Z}$, $ 0\leq n \leq 7$. Moreover, we give a spin bordism description of the relative geometric KO-homology groups $\KO_n^{\rm t}(\topsp{X},\topsp{Y})$ as follows. We consider the set $\Gamma{\rm O}(\topsp{X},\topsp{Y})$ of isomorphism
classes of triples $(\topsp{M},\topsp{E},\phi)$ where
\begin{romanlist}
\item $\topsp{M}$ is a compact spin manifold with (possibly empty) boundary;
\item $\topsp{E}$ is a real vector bundle over $\topsp{M}$; and
\item $\phi : \topsp{M} \rightarrow \topsp{X}$ is a continuous map with
  $\phi(\partial \topsp{M})\subset \topsp{Y}$.
\end{romanlist}
The set $\Gamma{\rm O}(\topsp{X},\topsp{Y})$ is then quotiented by relations of
relative spin bordism, which is modified from Definition~\ref{bord} by
the requirement that $\topsp{M}_1\amalg(-\topsp{M}_2)\subset\partial \topsp{W}$ is a regularly
embedded submanifold of codimension~$0$ with $\phi(\partial \topsp{W}\setminus
\topsp{M}_1\amalg(-\topsp{M}_2))\subset \topsp{Y}$, direct sum, and real vector bundle
modification, which is applicable in this case since $\S(\topsp{F}
\oplus \id_{\topsp{M}}^\real)$ is a compact spin manifold with boundary
$\S(\topsp{F} \oplus \id_{\topsp{M}}^\real) | _{\partial \topsp{M}}$. The
collection of equivalence classes is a $\zed_8$-graded abelian
group with operation induced by disjoint union of relative
KO-cycles. One has $\KO_i^{\rm t}(\topsp{X},\emptyset) = \KO_i^{\rm t}(\topsp{X}).$\\
We can finally prove the following
\begin{theorem}\label{equiv}
The map
$$\mu^{\rm s} \,:\, \KO_n^{\rm t}(\topsp{X},\topsp{Y})
~\longrightarrow~\KO^{\prime}_n(\topsp{X},\topsp{Y})$$
defined on classes of KO-cycles by
$$\mu^{\rm s}\bigl[\topsp{M}\,,\,\topsp{E}\,,\,\phi\bigr]_{\rm t}=
\bigl[\topsp{M}\,,\,[\topsp{E}]\,,\,\phi\bigr]_{'}$$
is an isomorphism of abelian groups which is natural with
respect to continuous maps of pairs.
\label{specequivthm}\end{theorem}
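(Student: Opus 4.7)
The plan is to verify directly that $\mu^{\rm s}$ respects each of the four generating equivalence relations on $\Gamma{\rm O}(\topsp{X},\topsp{Y})$, then establish bijectivity by using real vector bundle modification to pass between vector bundles on one side and virtual KO-theory classes on the other. Naturality with respect to continuous maps of pairs is immediate, since in both theories the induced homomorphism on classes of cycles is given by post-composition with $\phi$, and $\mu^{\rm s}$ manifestly commutes with this operation.

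For well-definedness I would check each relation. Isomorphism of cycles, spin bordism, and direct sum clearly pass through $\mu^{\rm s}$ because in each case the underlying manifold and map are unchanged while the assignment $\topsp{E}\mapsto[\topsp{E}]$ is additive and pulls back functorially. The only nontrivial point is real vector bundle modification, and for this the identity \eqref{secstable} gives
\[
\Sigma_!^\topsp{F}[\topsp{E}] \,=\, \bigl[\topsp{H}(\topsp{F})\otimes\pi^{*}(\topsp{E})\bigr]\in\KO^{r}\bigl(\,\widehat{\topsp{M}}\,\bigr),
\]
so the geometric modification $(\topsp{M},\topsp{E},\phi)\mapsto(\,\widehat{\topsp{M}},\topsp{H}(\topsp{F})\otimes\pi^{*}(\topsp{E}),\phi\circ\pi)$ matches Jakob's modification $(\topsp{M},[\topsp{E}],\phi)\mapsto(\,\widehat{\topsp{M}},\Sigma_!^\topsp{F}[\topsp{E}],\phi\circ\pi)$ on the nose.

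Surjectivity would proceed in two stages. Given a Jakob cycle $(\topsp{M},x,\phi)$ with $x\in\KO^{i}(\topsp{M})$, choose a $\C^{\infty}$ real spin vector bundle $\topsp{F}\to\topsp{M}$ of rank $r$ with $r+i\equiv 0\pmod 8$; then $\Sigma_!^{\topsp{F}}(x)$ lies in $\KO^{0}(\,\widehat{\topsp{M}}\,)$, and by Jakob's vector bundle modification relation the original class agrees with $[\,\widehat{\topsp{M}},\Sigma_!^{\topsp{F}}(x),\phi\circ\pi]_{'}$. Write $\Sigma_!^{\topsp{F}}(x)=[\topsp{E}_{1}]-[\topsp{E}_{2}]$ for actual real vector bundles on $\widehat{\topsp{M}}$. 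Since the cylinder $\widehat{\topsp{M}}\times[0,1]$ with pulled-back bundle provides a spin bordism showing $[(-\widehat{\topsp{M}}),\topsp{E}_{2},\phi\circ\pi]_{\rm t}=-[\,\widehat{\topsp{M}},\topsp{E}_{2},\phi\circ\pi]_{\rm t}$, the geometric cycle
\[
\bigl[\,\widehat{\topsp{M}}\,,\,\topsp{E}_{1}\,,\,\phi\circ\pi\bigr]_{\rm t}+\bigl[(-\widehat{\topsp{M}})\,,\,\topsp{E}_{2}\,,\,\phi\circ\pi\bigr]_{\rm t}
\]
maps under $\mu^{\rm s}$ to the required class.

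Injectivity is the main obstacle. Suppose $\mu^{\rm s}[\topsp{M},\topsp{E},\phi]_{\rm t}=0$, so there is a chain of Jakob equivalences from $(\topsp{M},[\topsp{E}],\phi)$ to an empty cycle; after absorbing the modification moves using \eqref{secstable}, the residual content is a Jakob spin bordism $(\topsp{W},y,\Phi)$ with $y\in\KO^{0}(\topsp{W})$ restricting on $\partial\topsp{W}$ to a cycle isomorphic to $(\topsp{M},[\topsp{E}],\phi)$. Writing $y=[\topsp{V}_{1}]-[\topsp{V}_{2}]$, I would realize the geometric equivalence via the disjoint union $(\topsp{W},\topsp{V}_{1},\Phi)\amalg(-\topsp{W},\topsp{V}_{2},\Phi)$, which furnishes a geometric spin bordism for $[\topsp{M},\topsp{V}_{1}|_{\topsp{M}},\phi]_{\rm t}+[(-\topsp{M}),\topsp{V}_{2}|_{\topsp{M}},\phi]_{\rm t}$. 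The delicate point is that on the boundary one only has a stable isomorphism $\topsp{V}_{1}|_{\topsp{M}}\oplus\theta^{k}\simeq\topsp{E}\oplus\topsp{V}_{2}|_{\topsp{M}}\oplus\theta^{k}$ rather than an honest bundle equality; this is absorbed by attaching cylinder bordisms $\topsp{M}\times[0,1]$ carrying pulled-back bundles together with trivial summands, and then invoking the direct sum relation $[\topsp{M},\topsp{E}\oplus\topsp{E}',\phi]_{\rm t}=[\topsp{M},\topsp{E},\phi]_{\rm t}+[\topsp{M},\topsp{E}',\phi]_{\rm t}$ to cancel the auxiliary $\theta^{k}$ summands. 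As a safety net, since Jakob's realization is known to be a homology theory on finite CW-pairs isomorphic to spectral KO-homology, once one verifies that both sides agree on $\pt$ (via the Atiyah-Bott-Shapiro description of $\KO^{-n}(\pt)$ in terms of graded Clifford modules) and that $\mu^{\rm s}$ is a natural transformation commuting with the boundary maps of the long exact sequence of $(\topsp{X},\topsp{Y})$, the comparison theorem for generalized homology theories would conclude the argument; this fallback will be made explicit in the applications of later sections, where the index-theoretic interpretation of $\mu^{\rm s}$ gives an independent verification of the isomorphism.
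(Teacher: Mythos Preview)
Your well-definedness and surjectivity arguments match the paper's essentially verbatim; the only cosmetic difference is that the paper takes the trivial bundle $\topsp{F}=\topsp{M}\times\real^{n+7m}$ explicitly, whereas you just pick any spin bundle of rank $r$ with $r+i\equiv 0\pmod 8$.

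The injectivity argument is where you go astray. Your first approach asserts that an arbitrary chain of Jakob equivalences from $(\topsp{M},[\topsp{E}],\phi)$ to the empty cycle can be ``absorbed'' down to a single Jakob bordism $(\topsp{W},y,\Phi)$ with $y\in\KO^{0}(\topsp{W})$. This is not justified: the chain may pass through cycles carrying classes in $\KO^{i}$ with $i\neq 0$, and bordisms between such cycles carry a class $y\in\KO^{i}(\topsp{W})$ of the corresponding nonzero degree; there is no mechanism offered for collapsing such a chain to a single degree-zero bordism. Your fallback---invoke the comparison theorem for generalized homology theories after checking $\mu^{\rm s}$ on a point and compatibility with boundary maps---is circular here: the comparison theorem requires knowing in advance that $\KO_\sharp^{\rm t}$ \emph{is} a homology theory, but establishing that fact is precisely the purpose of this theorem (the paper deduces the homological properties of $\KO_\sharp^{\rm t}$ from Theorem~\ref{specequivthm} and the known homological properties of $\KO_\sharp'$). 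The index-theoretic machinery you allude to is used later for the \emph{analytic} comparison $\mu^{\rm a}$, not for $\mu^{\rm s}$.

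The paper's injectivity argument is much more direct and avoids these difficulties. The key observation is that among the four generating relations defining $\KO_\sharp'$, isomorphism, spin bordism, and direct sum are \emph{formally identical} to those defining $\KO_\sharp^{\rm t}$ when restricted to cycles $(\topsp{M},[\topsp{E}],\phi)$ with $[\topsp{E}]\in\KO^{0}(\topsp{M})$. Hence the only relation that could possibly identify two cycles in the image of $\mu^{\rm s}$ without already identifying their $\KO^{\rm t}$-preimages is Jakob's vector bundle modification. But if a single modification step relates two degree-zero cycles, the modifying bundle $\topsp{F}$ has rank $\equiv 0\pmod 8$, and then \eqref{secstable} gives $\Sigma_!^{\topsp{F}}[\topsp{E}_1]=[\topsp{H}(\topsp{F})\otimes\pi^*(\topsp{E}_1)]$; together with the fact (stated just before the theorem) that the geometric class depends only on the $\KO$-class of the bundle, this shows the preimages are identified by \emph{geometric} vector bundle modification. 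So the generating relations coincide on the image, and injectivity follows.
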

\begin{proof}
Taking into account the equivalence relations on $\Gamma
{\rm O}(\topsp{X},\topsp{Y})$ used to define both KO-homology groups, the map
$\mu^{\rm s}$ is well-defined and a group homomorphism.
Let $[\topsp{M},x,\phi]_{\prime} \in \KO^{\prime}_n(\topsp{X},\topsp{Y})$ with $m:= \dim \topsp{M}$. We may
assume that  $\topsp{M}$ is connected and $x$ is non-zero in $\KO^i(\topsp{M})$. Then
$m-i \equiv n~\rm{mod}~8$. Consider the trivial spin vector bundle
$\topsp{F}=\topsp{M} \times \mathbb{R}^{n+7m}$ over $\topsp{M}$. In this case the
sphere bundle (\ref{hatMdef}) is $\widehat{\topsp{M}}=\topsp{M}\times\S^{n+7m}$ and
the associated Gysin homomorphism in KO-theory is a map
$$ \Sigma_{!}^{\topsp{F}} \,:\, \KO^i\big(\topsp{M}\big)~
  \longrightarrow~\KO^{i+7m+n}\big(\,\widehat{\topsp{M}}\,\big) \ . $$
  Since $i+7m+n\equiv(i+7m+m-i)~{\rm mod}~ 8\equiv 0~{\rm mod}~
  8$, one has $\KO^{i+7m+n}(\,\widehat{\topsp{M}}\,) \simeq
  \KO^{0}(\,\widehat{\topsp{M}}\,)$. It follows that there are real vector bundles
  $\topsp{E},\topsp{H} \rightarrow \widehat{\topsp{M}}$ such that
  $\Sigma_{!}^{\topsp{F}}(x)=[\topsp{E}]-[\topsp{H}]$, and so by real vector bundle modification
  one has $$[\topsp{M},x,\phi]_{\prime}=[\,\widehat{\topsp{M}},[\topsp{E}],\phi \circ
  \pi]_{\prime}-[\,\widehat{\topsp{M}},[\topsp{H}],\phi \circ \pi]_{\prime}$$ in
  $\KO^{\prime}_n(\topsp{X},\topsp{Y})$. Therefore $\mu^{\prime}(\,[\,\widehat{\topsp{M}},\topsp{E},\phi \circ
  \pi]_{\rm t}-[\,\widehat{\topsp{M}},\topsp{H},\phi \circ\pi]_{\rm
    t}\,)=[\topsp{M},x,\phi]_{\prime}$, and we conclude
  that $\mu^{\rm s}$ is an epimorphism.

Now suppose that $\mu^{\rm s}[\topsp{M}_1,\topsp{E}_1,\phi_1]_{\rm t}=\mu^{\rm
  s}[\topsp{M}_2,\topsp{E}_2,\phi_2]_{\rm t}$ are identified in $\KO'_n(\topsp{X},\topsp{Y})$ through
real vector bundle modification. Then, for instance, there is a real spin
vector bundle $\topsp{F} \rightarrow \topsp{M}_1$ such that $\topsp{M}_2 = \widehat{\topsp{M}_1}$ and
$[\topsp{E}_2]=\Sigma_{!}^{\topsp{F}}[\topsp{E}_1]$. Since $$\Sigma_!^\topsp{F}\bigl[\topsp{E}_{1}\bigr]=\bigl[\topsp{H}(\topsp{F})\otimes\pi^*(\topsp{E}_{1})\bigr]$$ and since the class $[\topsp{M}_2,\topsp{E}_2,\phi_2]_{\rm t}$ depends only on the KO-theory class $[\topsp{E}_{2}]$, it follows that the homology
  classes $[\topsp{M}_1,\topsp{E}_1,\phi_1]_{\rm t}$ and $[\topsp{M}_2,\topsp{E}_2,\phi_2]_{\rm t}$ are also identified in $\KO_n^{\rm t}(\topsp{X},\topsp{Y})$ through real
  vector bundle modification. As this is the only relation in
  $\KO^{\prime}_n(\topsp{X},\topsp{Y})$ that might identify these classes without
  identifying them as KO-cycles, we conclude that
  $\mu^{\rm s}$ is a monomorphism and therefore an isomorphism.
\end{proof}
Since $\KO_{\sharp}^{'}$ is a homological realization of the homology theory associated with KO-theory, we have thus established that geometric KO-homology is a generalized homology theory which is equivalent to spectral KO-homology. In particular, it enjoys the standard homological properties, such as the existence of a long exact sequence for any pair $(\topsp{X},\topsp{Y})$. Moreover, the connecting homomorphism  
$$
\partial\,:\,\KO^{\rm t}_n(\topsp{X},\topsp{Y})~\longrightarrow~\KO^{\rm t}_{n-1}(\topsp{Y})
$$
is given by the boundary map
\beq
\partial[\topsp{M},\topsp{E},\phi]:=[\partial \topsp{M},\topsp{E}|_{\partial \topsp{M}},\phi|_{\partial \topsp{M}}]
\label{bdrymap}\eeq
on classes of KO-cycles and extended by linearity.\\
Other homological properties are direct translations of those of the
complex case provided by \cite{Reis2006}, to which we refer for details.
\subsection{Products and Poincar\'{e} Duality}
Since we have established that geometric KO-homology is a representation of spectral KO-homology, we can define products and dualities with KO-theory.\\
The \emph{cap product} is defined as 
the $\zed_8$-degree
preserving bilinear pairing $$\,\cap\,\,:\,\KO^{0}(\topsp{X})\otimes \KO_\sharp^{\rm
t}(\topsp{X})~\longrightarrow~\KO_\sharp^{\rm t}(\topsp{X})$$ given for any real
vector bundle $\topsp{F}\to \topsp{X}$ and KO-cycle class $[\topsp{M},\topsp{E},\phi]\in\KO^{\rm
  t}_\sharp(\topsp{X})$ by 
\begin{equation}\label{cap}
[\topsp{F}] \,\cap\, [\topsp{M},\topsp{E},\phi]:=[\topsp{M},\phi^{*}\topsp{F}\otimes \topsp{E} ,\phi]
\end{equation}
and extended linearly. In particular, it makes $\KO_\sharp^{\rm t}(\topsp{X})$ into a
module over the ring $\KO^0(\topsp{X})$.\\
As in the complex case, this product can be
extended to a bilinear form $$ \,\cap\,\,:\,\KO^i(\topsp{X})\otimes\KO^{\rm
  t}_j(\topsp{X},\topsp{A})~\longrightarrow~ \KO^{\rm t}_{j-i}(\topsp{X},\topsp{A}) \ . $$
defined as
\begin{equation}
x\cap{[\topsp{M},\topsp{E},\phi]}:=(\mu^{s})^{-1}([\topsp{M},x\cup\phi^{*}[\topsp{E}],\phi]_{'})
\end{equation}
which coincides with (\ref{cap}) when $x=[\topsp{F}]$.\\

If $\topsp{X}$ and $\topsp{Y}$ are spaces, then the {\it exterior product}
$$ \times \,:\, \KO^{\rm t}_{i}(\topsp{X}) \otimes \KO^{\rm t}_{j}(\topsp{Y}) ~\longrightarrow~
\KO^{\rm t}_{i+j}(\topsp{X} \times \topsp{Y})$$ is given for classes of
KO-cycles $[\topsp{M},\topsp{E},\phi]\in\KO^{\rm t}_i(\topsp{X})$ and
$[\topsp{N},\topsp{F},\psi]\in\KO^{\rm t}_j(\topsp{Y})$ by $$\big[\topsp{M},\topsp{E},\phi\big]\times
\big[\topsp{N},\topsp{F},\psi\big]:=\big[\topsp{M} \times \topsp{N},\topsp{E} \boxtimes \topsp{F},
(\phi,\psi)\big] \ , $$ where $\topsp{M} \times \topsp{N}$ has the product spin
structure uniquely induced by the spin structures on $\topsp{M}$ and $\topsp{N}$,
and $\topsp{E} \boxtimes \topsp{F}$ is the real vector bundle over $\topsp{M} \times \topsp{N}$ with
fibres $(\topsp{E} \boxtimes \topsp{F})_{(p,q)}=\topsp{E}_p \otimes \topsp{F}_q$ for $(p,q)\in
\topsp{M}\times \topsp{N}$. This product is natural with respect to continuous
maps. Unfortunately, in contrast to the complex case, we don't have 
a version of the K\"unneth theorem for KO-homology. Indeed, should such
a formula exist, one could use it to show that
$\KO_{\sharp}(\pt)\otimes{\KO_{\sharp}}(\pt)$ has to be a tensor
product as modules over the ring $\KO^{\sharp}(\pt)$. But this does not
work correctly as pointed out by Atiyah \cite{AtiyahKunneth}.\\
Let $\topsp{M}$ be a spin manifold of dimension $n$. The class 
\begin{displaymath}
[\topsp{M}]:=[\topsp{M},\id_{\topsp{M}}^{\real},\Id_\topsp{M}]\in\KO^{\rm t}_{n}(\topsp{M},\partial \topsp{M}) 
\end{displaymath}
is called the \emph{fundamental class}, and it induces the following \emph{Poincar\'e Duality} isomorphism \cite{jakob}
\begin{displaymath}
\begin{array}{cccc}
 \Phi_\topsp{M}\,:&\,\KO^i(\topsp{M})&{\xrightarrow{\:\:\simeq\:\:}}&
 \KO^{\rm t}_{n-i}(\topsp{M},\partial \topsp{M})\\
 &\xi&{\longrightarrow}&\xi\cap[\topsp{M}]
\end{array}
\end{displaymath}
The fundamental class is uniquely determined.\\
Finally, let $\topsp{V}\xrightarrow{\pi}{\topsp{X}}$ be a KO-oriented vector bundle of rank $r$ over a space $\topsp{X}$ with Thom class $\tau_{\topsp{V}}$. In analogy to K-theory, we have the homological version of Thom isomorphism
\begin{equation}
\mathfrak{T}_{\topsp{X},\topsp{V}}^{*}\,:\,\KO^{\rm t}_{i+r}\big(\B(\topsp{V})\,,\,\S(\topsp{V})\big) ~
\stackrel{\approx}{\longrightarrow}~\KO^{\rm
  t}_{i}\big(\topsp{X}\big) \ .
\label{ThomhomdefV}\end{equation}
defined as
\begin{equation}\label{Thomdef}
\mathfrak{T}_{\topsp{X},\topsp{V}}^{*}([\topsp{M},\topsp{E},\phi]):=\tilde{\pi}_{*}(\tau_{\topsp{V}}\cap[\topsp{M},\topsp{E},\phi])
\end{equation}
where $\tilde{\pi}:\B(\topsp{V})\to{\topsp{X}}$ is the bundle projection induced by $\pi$.\\
We conclude this section by noticing that all the above constructions have an equivalent description in analytic KO-homology.
\section{K-homology and Index Theorems}
We have seen in the previous sections that both analytic KO-homology and geometric KO-homology are representations of spectral KO-homology. It follows straightforwardly from compositions of natural equivalences that analytic KO-homology and geometric KO-homology are naturally equivalent. It is interesting, at this point, to ask if such natural isomorphism can be induced by a map defined \emph{at the level of the cycles}. That this is indeed the case can be viewed as the primordial formulation of the index theorem, and it is the philosophy proposed in \cite{Baumdouglas,Baumdouglas2}. Since this point of view will be crucial in both the construction and the proof of the equivalence carried in the next section, we will briefly make the above statement more precise by illustrating the case of even complex K-homology, directing the reader to \cite{Baumdouglas,Baumdouglas2,Higson} for more information.\\  
Let $\topsp{M}$ be an even dimensional compact $\rm spin^{c}$ manifold, and let $\topsp{E}$ be a complex vector bundle on $\topsp{M}$. The canonical Dirac operator on $\topsp{M}$ induces the elliptic differential operator
\begin{displaymath}
\nslash{\topsp{D}}_{\topsp{M}}\otimes{\topsp{I}_{\topsp{E}}}:\Gamma(\nslash{\topsp{S}}\otimes{\topsp{E}})\to\Gamma(\nslash{\topsp{S}}\otimes{\topsp{E}})
\end{displaymath}
where $\nslash{S}$ denotes the spinor bundle associated to $\topsp{TM}$. After a choice of a smooth metric $g$ on $\topsp{M}$, we can construct the Hilbert space
\begin{displaymath}
\mathcal{H}^{\topsp{M}}_{\topsp{E}}:=\topsp{L}^{2}(\Gamma(\nslash{\topsp{S}}\otimes{\topsp{E}});dg^{\topsp{M}})
\end{displaymath}
and define the bounded Fredholm operator $\topsp{T}^{\topsp{E}}_{\topsp{M}}$ as the partial isometry in the polar decomposition for $\nslash{\topsp{D}}_{\topsp{M}}\otimes{\topsp{I}_{\topsp{E}}}$. Since $\topsp{M}$ is
 even-dimensional, $\mathcal{H}^{\topsp{M}}_{\topsp{E}}$ is $\mathbb{Z}_{2}$-graded, and there is a *-homomorphism 
\begin{displaymath}
\rho^{\topsp{M}}_{\topsp{E}}:\topsp{C}(\topsp{M};\mathbb{C})\to\mathcal{L}(\mathcal{H}^{\topsp{M}}_{\topsp{E}})
\end{displaymath}
as the space of section of a vector bundle on a manifold is equipped with a module structure over the algebra of functions of the manifold itself. Hence we have a correspondence
\begin{displaymath}
(\topsp{M},\topsp{E},\phi)\to(\mathcal{H}^{\topsp{M}}_{\topsp{E}},\rho^{\topsp{E}}_{\topsp{M}}\circ\phi^{*},\topsp{T}^{\topsp{E}}_{\topsp{M}})
\end{displaymath}
between cycles for geometric K-homology and cycles for analytic K-homology.
\begin{theorem}(\cite{Baumdouglas})
Let $\topsp{X}$ be a finite CW-complex. Then the correspondence
\begin{displaymath}
(\topsp{M},\topsp{E},\phi)\to(\mathcal{H}^{\topsp{M}}_{\topsp{E}},\rho^{\topsp{E}}_{\topsp{M}}\circ\phi^{*},\topsp{T}^{\topsp{E}}_{\topsp{M}})
\end{displaymath}
induces a natural isomorphism
\begin{displaymath}
\mu^{a}:\K_{0}^{t}(\topsp{X})\to\K^{a}_{0}(\topsp{X})
\end{displaymath}
commuting with the cap product between K-theory and K-homology.
\end{theorem}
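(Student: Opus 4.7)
The plan is to verify that $\mu^{a}$ descends from cycles to equivalence classes, establish naturality and cap-product compatibility at the level of cycles, and then conclude the isomorphism property by comparing both theories with spectral K-homology and reducing to a point.

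First I would check that the correspondence $(\topsp{M},\topsp{E},\phi)\mapsto(\mathcal{H}^{\topsp{M}}_{\topsp{E}},\rho^{\topsp{E}}_{\topsp{M}}\circ\phi^{*},\topsp{T}^{\topsp{E}}_{\topsp{M}})$ respects the four generating relations on K-cycles. Isomorphism invariance is immediate because a spin$^{c}$ diffeomorphism $h:\topsp{M}_{1}\to\topsp{M}_{2}$ pulls back the canonical Dirac operator and its spinor bundle to give an orthogonal equivalence of Fredholm modules. Direct sum invariance follows from the orthogonal decomposition $\mathcal{H}^{\topsp{M}}_{\topsp{E}_{1}\oplus\topsp{E}_{2}}\simeq\mathcal{H}^{\topsp{M}}_{\topsp{E}_{1}}\oplus\mathcal{H}^{\topsp{M}}_{\topsp{E}_{2}}$ with the representation and operator splitting accordingly. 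Spin$^{c}$-bordism invariance is the classical cobordism invariance of the analytic index: given a bordism $(\topsp{W},\topsp{E},\phi)$, one installs a product collar metric near $\partial\topsp{W}$ and uses the Atiyah--Patodi--Singer boundary value problem for the twisted Dirac operator on $\topsp{W}$ to construct an operator homotopy of the boundary Fredholm module to a degenerate module, which is zero in $\K^{a}_{0}(\topsp{X})$.

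The main obstacle is the vector bundle modification relation: one must show
\begin{equation*}
\mu^{a}[\topsp{M},\topsp{E},\phi]=\mu^{a}[\,\widehat{\topsp{M}}\,,\topsp{H}(\topsp{F})\otimes\pi^{*}\topsp{E},\phi\circ\pi]
\end{equation*}
for every spin$^{c}$ vector bundle $\topsp{F}\to\topsp{M}$ of the appropriate rank. The plan here is to use the decomposition $\widehat{\topsp{M}}=\B^{+}(\topsp{F})\cup_{\S(\topsp{F})}\B^{-}(\topsp{F})$ together with the identification $\topsp{H}(\topsp{F})|_{\B^{\pm}(\topsp{F})}=\Delta^{\pm}(\topsp{F})$, so that the twisted Dirac operator on $\widehat{\topsp{M}}$ with coefficients in $\topsp{H}(\topsp{F})\otimes\pi^{*}\topsp{E}$ acts fibrewise by Clifford multiplication by the Bott element. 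Equivalently, one realizes the analytic Gysin homomorphism along $\Sigma^{\topsp{F}}:\topsp{M}\hookrightarrow\widehat{\topsp{M}}$ at the level of Fredholm modules: by the analytic Thom isomorphism (or, concretely, by a family index argument for the fibrewise Dirac operator on the sphere bundle $\widehat{\topsp{M}}\to\topsp{M}$ coupled to $\topsp{H}(\topsp{F})$), the resulting class collapses to that of $\nslash{\topsp{D}}_{\topsp{M}}\otimes\topsp{I}_{\topsp{E}}$. This is the step where a genuine index-theoretic input is required, and I expect it to be the hardest part of the argument.

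Once well-definedness is in place, naturality is immediate: a continuous $f:\topsp{X}\to\topsp{Y}$ induces $f^{*}:\C(\topsp{Y};\mathbb{C})\to\C(\topsp{X};\mathbb{C})$, and $\rho^{\topsp{E}}_{\topsp{M}}\circ(f\circ\phi)^{*}=\rho^{\topsp{E}}_{\topsp{M}}\circ\phi^{*}\circ f^{*}$ matches $f_{*}$ on the analytic side. Compatibility with the cap product $[\topsp{V}]\cap[\topsp{M},\topsp{E},\phi]=[\topsp{M},\phi^{*}\topsp{V}\otimes\topsp{E},\phi]$ reduces to the fact that twisting the Dirac operator by $\phi^{*}\topsp{V}$ realizes the $\K^{0}(\topsp{X})$-module action on analytic K-homology through multiplication of the representation by sections of $\phi^{*}\topsp{V}$.

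To conclude that $\mu^{a}$ is an isomorphism, I would invoke the complex analogue of Theorem \ref{equiv}, which shows that geometric K-homology is naturally equivalent to spectral K-homology, together with the parallel fact that Kasparov's analytic K-homology also represents the spectral theory on finite CW-pairs. The previous steps show that $\mu^{a}$ is a natural transformation of generalized homology theories (the boundary map $\partial$ on the geometric side being given by (\ref{bdrymap}), which corresponds to the Kasparov connecting homomorphism). A Mayer--Vietoris/five-lemma reduction then shows it suffices to verify the isomorphism on a point: the generator $[\mathrm{pt},\mathbb{C},\mathrm{id}]$ of $\K^{t}_{0}(\mathrm{pt})\simeq\mathbb{Z}$ maps to the trivially graded one-dimensional Fredholm module, which is the generator of $\K^{a}_{0}(\mathrm{pt})\simeq\mathbb{Z}$, while both groups vanish in odd degree. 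Assembling these steps yields the natural isomorphism, with cap-product compatibility already established at the level of cycles.
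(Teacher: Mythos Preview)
The paper does not prove this theorem; it is stated with attribution to \cite{Baumdouglas} and used to motivate the KO-analogue (Theorem~\ref{isomorphism}), which the paper \emph{does} prove in detail. So there is no ``paper's own proof'' to compare against directly. Your outline is a reasonable sketch of the Baum--Douglas argument, and the overall architecture (well-definedness, naturality, reduction to a point) is sound.

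That said, it is worth comparing your sketch to the paper's proof of the KO-analogue, since the strategies differ in instructive ways. For bordism invariance you invoke APS boundary conditions to build an operator homotopy to a degenerate module; the paper instead uses the purely homological fact (Theorem~\ref{connective}) that $\partial[\,\nslash{\mathfrak{D}}^{\topsp{W}-\partial\topsp{W}}_{\topsp{E}}]=[\,\nslash{\mathfrak{D}}^{\partial\topsp{W}}_{\topsp{E}}]$ together with the long exact sequence of the pair $(\topsp{W},\partial\topsp{W})$, which avoids the analytic machinery of APS. For vector bundle modification you appeal to an analytic Thom isomorphism/family index computation; the paper argues entirely at the level of K-homology, using that $[\topsp{H}(\topsp{F})]$ is the image of the Thom class and that the resulting Thom isomorphism $\K_{n+r}(\widehat{\topsp{M}})\to\K_n(\topsp{M})$ sends the Dirac fundamental class of $\widehat{\topsp{M}}$ to that of $\topsp{M}$. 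Finally, for the isomorphism on a point you check the generator directly, which works cleanly in the complex case because $\K_*(\pt)$ is torsion-free; the paper's KO proof cannot do this (the torsion groups $\zed_2$ obstruct a direct generator check) and instead constructs explicit index homomorphisms $\ind_n^{\rm t},\ind_n^{\rm a}$ and invokes the $\Cl_n$-index theorem to force commutativity of the diagram (\ref{muadiagmain}). Your route is lighter for complex K-homology; the paper's route is what is needed once torsion enters.
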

The isomorphism $\mu$ can be now used to give an elegant formulation of the Atiyah-Singer index theorem. Namely, consider a closed even-dimensional smooth manifold $\topsp{M}$, and let $\topsp{T}^{*}\topsp{M}$ denote its cotangent bundle. By the results in \cite{AtiyahSinger}, any elliptic pseudo-differential operator $\topsp{D}$ between the sections of vector bundles on $\topsp{M}$ can be assigned to a class in $\K_{cpt}^{0}(\topsp{T}^{*}\topsp{M})$. More precisely, let 
\begin{displaymath}
\topsp{D}:\Gamma(\topsp{E})\to\Gamma(\topsp{F})
\end{displaymath}
be an elliptic differential operator of order $m$, where the vector bundles $\topsp{E}$ and $\topsp{F}$ have rank $p$ and $q$ respectively. The differential operator $\topsp{D}$ can be expressed in local coordinates $(x_{1},\ldots,x_{n})$ on $\topsp{M}$ as
\begin{displaymath}
\topsp{D}=\sum_{|\alpha|\leq m} A^{\alpha}(x)\dfrac{\partial^{|\alpha|}}{\partial x^{\alpha}}
\end{displaymath}
where $|\alpha|:=\sum_{k}\alpha_{k}$ for a $n$-tuple on nonnegative integers $\alpha=(\alpha_{1},\ldots,\alpha_{n})$, and where for each $\alpha$ $A^{\alpha}(x)$ is a $q\times{p}$ matrix of smooth complex-valued functions on $\topsp{M}$ with $A^{\alpha}(x)\neq{0}$ for some $\alpha$ such that $|\alpha|=m$. The \emph{principal symbol} of $\topsp{D}$ is defined to be the section $\sigma(\topsp{D})$ of the bundle $(\odot^{m}\topsp{T}\topsp{M})\otimes\topsp{Hom}(\topsp{E},\topsp{F})$ represented by the coefficients $\{A^{\alpha}\}_{|\alpha|=m}$, and where $\odot$ denotes the symmetric tensor product. Since  $\odot^{m}V$ is canonically isomorphic to the space of homogeneous polynomial functions of degree $m$ on $V^{*}$, for any vector space $V$, it follows that for each cotangent vector $\xi\in \topsp{T}_{x}^{*}(\topsp{M})$, the principal symbol gives a homomorphism
\begin{displaymath}
\sigma_{\xi}(\topsp{D}):\topsp{E}_{x}\to{\topsp{F}_{x}}
\end{displaymath}
For elliptic operators, by definition, such a homomorphism is invertible for each non-zero cotangent vector $\xi$ and any point $x\in \topsp{M}$. The principal symbol $\sigma(\topsp{D})$ can be considered to live on the cotangent bundle, i.e. it defines a bundle map
\begin{displaymath}
\sigma(\topsp{D}):\pi^{*}\topsp{E}\to\pi^{*}\topsp{F}
\end{displaymath}
where $\pi:\topsp{T}^{*}\topsp{M}\to{\topsp{M}}$. In particular, if $\topsp{D}$ is elliptic, $\sigma(\topsp{D})$ is an isomorphism away from the zero section, hence we can assign to $\topsp{D}$ the class
\begin{displaymath}
i(\topsp{D}):=[\pi^{*}\topsp{E},\pi^{*}\topsp{F};\sigma(\topsp{D})]\in\K^{0}(\B(\topsp{T}^{*}\topsp{M}),\S(\topsp{T}^{*}\topsp{M}))
\end{displaymath}
Conversely, to any class $[\topsp{E},\topsp{F};\mu]\in\K^{0}(\B(\topsp{T}^{*}\topsp{M}),\S(\topsp{T}^{*}\topsp{M}))$ one can assign a pseudo-differential operator on $\topsp{M}$ with total symbol $\mu$. See \cite{LM} for details on this construction.\\
By using the pseudo-differential operator associated to any class in $\K_{cpt}^{0}(\topsp{T}^{*}\topsp{M})$, one can define isomorphisms
\begin{displaymath}
\begin{array}{c}
\text{ind}_{t}:\K_{cpt}^{0}(\topsp{T}^{*}\topsp{M})\to{\K^{t}_{0}(\topsp{M})}\\
\text{ind}_{a}:\K_{cpt}^{0}(\topsp{T}^{*}\topsp{M})\to{\K^{a}_{0}(\topsp{M})}
\end{array}
\end{displaymath}
which in the case that $\topsp{M}$ is a $\rm spin^{c}$ manifold are given by the composition of Thom isomorphism and Poincar\'{e} duality. The above isomorphisms can be used to state the following elegant version of the index theorem.
\begin{theorem}\label{equivalence}
\textbf{(Atiyah-Singer)} Let $\topsp{M}$ be a closed smooth manifold. Then the following diagram commutes
\begin{displaymath}
\xymatrix{ &\K_{cpt}^{0}(\topsp{T}^{*}\topsp{M}) \ar[dr]^{\text{ind}_{t}}\ar[dl]_{\text{ind}_{a}}& \\
\K^{t}_{0}(\topsp{M})\ar[rr]_{\mu^{a}}& &\K^{a}_{0}(\topsp{M})
}
\end{displaymath}
\end{theorem}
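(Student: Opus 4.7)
The strategy is to verify the commutativity by evaluating both compositions on a convenient set of generators of $\K^{0}_{cpt}(\topsp{T}^{*}\topsp{M})$, namely symbols of Dirac-type operators twisted by vector bundles, and reducing the general case to this one. First I would treat the case where $\topsp{M}$ is a $\mathrm{spin}^{c}$ manifold, since then $\topsp{T}^{*}\topsp{M}$ is canonically $\K^{*}$-oriented and one has the Thom isomorphism $\mathfrak{T}_{\topsp{M},\topsp{T}^{*}\topsp{M}}:\kgroup{0}{\topsp{M}}\xrightarrow{\simeq}\ckgroup{0}{\topsp{T}^{*}\topsp{M}}$. Composing its inverse with the Poincar\'e duality isomorphism $\Phi_{\topsp{M}}:\kgroup{0}{\topsp{M}}\to\K^{\rm t}_{n}(\topsp{M})$ from the previous section produces the topological index, which by construction sends the symbol class $[\sigma(\slashed{\topsp{D}}_{\topsp{M}}\otimes\topsp{I}_{\topsp{E}})]=\mathfrak{T}_{\topsp{M},\topsp{T}^{*}\topsp{M}}([\topsp{E}])$ to the geometric cycle $[\topsp{M},\topsp{E},\mathrm{Id}_{\topsp{M}}]\in\K^{\rm t}_{0}(\topsp{M})$.

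Next I would compute $\mu^{a}[\topsp{M},\topsp{E},\mathrm{Id}_{\topsp{M}}]$ using the explicit correspondence recalled just before the theorem statement. By definition this is the Fredholm module $(\mathcal{H}^{\topsp{M}}_{\topsp{E}},\rho^{\topsp{M}}_{\topsp{E}},\topsp{T}^{\topsp{E}}_{\topsp{M}})$ built from the twisted Dirac operator, which is precisely the analytic K-homology class associated to the elliptic operator $\slashed{\topsp{D}}_{\topsp{M}}\otimes\topsp{I}_{\topsp{E}}$. Thus $\mu^{a}\circ\mathrm{ind}_{t}$ and $\mathrm{ind}_{a}$ agree on the classes of twisted Dirac symbols. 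Because every class in $\ckgroup{0}{\topsp{T}^{*}\topsp{M}}$ is, via Thom isomorphism, a combination of such $[\topsp{E}]$'s, this settles the $\mathrm{spin}^{c}$ case after checking $\mathbb{Z}$-linearity of both compositions (which follows from additivity of Fredholm indices and of disjoint unions of KO-cycles, respectively).

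For a general closed $\topsp{M}$ the plan is to choose a smooth embedding $i:\topsp{M}\hookrightarrow\real^{N}$ with normal bundle $\nu$, so that $\nu\oplus\topsp{TM}$ is stably trivial and hence $\mathrm{spin}^{c}$. One then has induced Gysin morphisms $i^{\K}_{*}:\ckgroup{0}{\topsp{T}^{*}\topsp{M}}\to\ckgroup{0}{\topsp{T}^{*}\real^{N}}$ and analogous pushforwards $i^{\rm t}_{*},i^{\rm a}_{*}$ on geometric and analytic K-homology induced by continuous extension and by the $\mathrm{KK}$-product with a fundamental class for the embedding, respectively. The next step is to verify two facts: (i) the Euclidean case of the commutativity, where $\mathrm{ind}_{t}$ and $\mathrm{ind}_{a}$ both factor through Bott periodicity $\ckgroup{0}{\topsp{T}^{*}\real^{N}}\simeq\mathbb{Z}$ and the collapse to a point, so commutativity follows from the trivial equality on $\{\pt\}$ together with naturality of $\mu^{a}$; and (ii) multiplicativity/naturality of each of $\mathrm{ind}_{t},\mathrm{ind}_{a},\mu^{a}$ under the embedding $i$, so that the square relating the indices for $\topsp{M}$ and for $\real^{N}$ commutes. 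These two facts together reduce the general statement to the $\mathrm{spin}^{c}$ computation above.

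The main obstacle will be step (ii): establishing functoriality of the analytic index under embeddings at the level of Kasparov modules. Concretely, one needs to identify $i^{\rm a}_{*}\circ\mathrm{ind}_{a}$ with $\mathrm{ind}_{a}\circ i^{\K}_{*}$, which requires comparing the Fredholm module associated with the twisted Dirac operator on $\topsp{M}$ to the one associated with a Dirac operator on a tubular neighbourhood in $\real^{N}$ equipped with the Thom class of $\nu$. This is the technical heart of Atiyah-Singer and amounts to checking that the $\mathrm{KK}$-theoretic wrong-way map realizes analytic pushforward; the standard tool is an excision/deformation argument on the symbol together with an asymptotic computation of the Dirac spectrum along $\nu$. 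Once this functoriality is in place, a diagram chase combining paragraphs one and three yields the theorem.
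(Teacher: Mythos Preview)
The paper does not give a proof of this theorem. It is stated with attribution \textbf{(Atiyah--Singer)} and treated as a known result, essentially the Baum--Douglas reformulation of the index theorem cited from \cite{Baumdouglas}. The surrounding text only explains how the classical Atiyah--Singer statement is recovered by appending the collapse map $\zeta_{*}$ to both legs of the triangle. There is therefore no ``paper's own proof'' to compare your proposal against.

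That said, your outline is a faithful sketch of the standard proof strategy in the cited literature: reduce to the $\mathrm{spin}^{c}$ case where the Thom isomorphism identifies every symbol class with a twisted Dirac symbol, check directly that $\mu^{a}$ sends the geometric cycle $[\topsp{M},\topsp{E},\Id_{\topsp{M}}]$ to the analytic class of $\Dirac_{\topsp{E}}^{\topsp{M}}$, and then handle the general closed manifold by embedding into $\real^{N}$ and invoking functoriality of both indices under Gysin pushforward. You are also right that the genuine analytic content sits in the compatibility of $\mathrm{ind}_{a}$ with the wrong-way map, i.e.\ the identification of the KK-theoretic pushforward with the analytic one; this is precisely the step that carries the weight of the Atiyah--Singer theorem and is not something one can expect to reprove from scratch here.

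One small remark: note that the diagram as printed in the theorem has the labels $\mathrm{ind}_{t}$ and $\mathrm{ind}_{a}$ swapped relative to the diagram~(\ref{diagram}) that immediately follows it in the paper; your proposal implicitly uses the convention of~(\ref{diagram}), which is the natural one.
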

To recover the usual form of the Atiyah-Singer index theorem, we consider the following composition of commutative diagrams
\begin{equation}\label{diagram}
\begin{array}{c}
\xymatrix{ &\K_{cpt}^{0}(\topsp{T}^{*}\topsp{M}) \ar[dr]^{\text{ind}_{a}}\ar[dl]_{\text{ind}_{t}}& \\
\K^{t}_{0}(\topsp{M})\ar[dr]_{\zeta_{*}} \ar[rr]_{\mu^{a}}& &\K^{a}_{0}(\topsp{M})\ar[dl]^{\zeta_{*}}\\
&\K_{0}({\rm pt})\simeq{\mathbb{Z}}&\\
}
\end{array}
\end{equation}
where $\zeta$ is the collapsing map, and the commutativity of the bottom diagram is granted by the fact that the isomorphism $\mu$ is natural. The homomorphisms\\ $\zeta_{*}\circ\text{ind}_{t}$, $\zeta_{*}\circ\text{ind}_{a}$ coincide with the topological and analytical index homomorphism, respectively, defined by Atiyah and Singer, and the commutativity of the two triangles implies that 
\begin{displaymath}
\zeta_{*}\circ\text{ind}_{t}=\zeta_{*}\circ\text{ind}_{a}
\end{displaymath} 
which is the original formulation of the index theorem.\\
Theorem \ref{equivalence} motivated the authors in \cite{Baumdouglas} to state that for any flavour $k$ of  K-theory, the equivalence between geometric and analytic $k$ -homology is related to an index theorem for the given theory $k$. In the next section we will reinforce the above statement: we will construct a natural morphism $\mu^{a}$ between $\KO^{\rm t}_{\sharp}$ and $\KO^{\rm a}_{\sharp}$ defined at the level of the cycles, and use a suitable index theorem to prove that $\mu^{a}$ is indeed a natural equivalence. 
\section{The equivalence between $\KO^{\rm t}_{\sharp}$ and $\KO^{\rm a}_{\sharp}$}
As illustrated in the previous section, our
primary goal is to prove the following result.
\begin{theorem}\label{isomorphism}
There is a natural equivalence
\begin{displaymath}
\mu^{\rm a}\,:\,\KO^{\rm t}~\stackrel{\approx}
{\longrightarrow}~\KO^{\rm a}
\end{displaymath}
between the topological and analytic KO-homology functors.
\end{theorem}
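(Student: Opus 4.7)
The plan is to define $\mu^{\rm a}$ at the level of cycles using the Atiyah-Singer $\Cl_n$-linear Dirac operator, verify that it descends to a well-defined natural transformation, and then deduce bijectivity by an index-theoretic argument which reduces to the classical Atiyah-Singer theorem for real Dirac operators. Concretely, given a KO-cycle $(\topsp{M},\topsp{E},\phi)\in\KO^{\rm t}_n(\topsp{X})$ with $\dim\topsp{M}\equiv n~\mathrm{mod}~8$, fix a Riemannian metric on $\topsp{M}$, a Euclidean metric and compatible connection on $\topsp{E}$, and let $\nslash{\topsp{S}}(\topsp{M})$ denote the real spinor bundle associated to the spin structure on $\topsp{T}\topsp{M}$, which carries a canonical right $\Cl_n$-module structure commuting with left Clifford multiplication. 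The twisted Hilbert space $\mathcal{H}^\topsp{M}_\topsp{E}:=\topsp{L}^2(\nslash{\topsp{S}}(\topsp{M})\otimes\topsp{E})$ is then a $\zed_2$-graded right $\Cl_n$-module; $\C(\topsp{X},\real)$ acts through $\rho^\topsp{M}_\topsp{E}\circ\phi^\ast$ by bounded multiplication operators, and the bounded transform $\topsp{T}^\topsp{E}_\topsp{M}:=\nslash{\topsp{D}}^\topsp{E}_\topsp{M}\,(1+(\nslash{\topsp{D}}^\topsp{E}_\topsp{M})^2)^{-1/2}$ of the twisted $\Cl_n$-linear Dirac operator yields an odd, essentially self-adjoint Fredholm operator commuting with the $\Cl_n$-action modulo compacts. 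I set
$$
\mu^{\rm a}[\topsp{M},\topsp{E},\phi]:=\bigl[\mathcal{H}^\topsp{M}_\topsp{E}\,,\,\rho^\topsp{M}_\topsp{E}\circ\phi^\ast\,,\,\topsp{T}^\topsp{E}_\topsp{M}\bigr]\in\KO^{\rm a}_n(\topsp{X}).
$$

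Next I would verify that $\mu^{\rm a}$ respects the four equivalence relations generating $\KO^{\rm t}_\sharp$. Isomorphism of KO-cycles produces a unitary intertwiner of Fredholm modules, and direct-sum decomposition of $\topsp{E}$ gives an orthogonal decomposition of the associated Hilbert module with block-diagonal Dirac operator. Spin bordism is handled by the standard relative construction: a bordism $(\topsp{W},\topsp{E},\phi)$ allows one to build a Kasparov module over $\C(\topsp{X},\real)\otimes\C([0,1],\real)$ from the Dirac operator on $\topsp{W}$ with APS-type boundary conditions, providing a homotopy in the sense of Definition~\ref{homotopydef}; the orthogonal equivalences at the endpoints follow from the product structure of the Dirac operator near $\partial\topsp{W}$. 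Naturality with respect to continuous maps $f:\topsp{X}\to\topsp{Y}$ is immediate once one checks it on cycles, since $(\C(f,\real))^\ast=f^\ast$ satisfies $\rho^\topsp{M}_\topsp{E}\circ(f\circ\phi)^\ast=f^\ast\circ(\rho^\topsp{M}_\topsp{E}\circ\phi^\ast)$ in the sense of Kasparov's contravariant functoriality.

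The main obstacle, as expected, is real vector bundle modification: I must show that for a rank-$8k$ real spin vector bundle $\topsp{F}\to\topsp{M}$, the cycle $(\widehat{\topsp{M}},\topsp{H}(\topsp{F})\otimes\pi^\ast\topsp{E},\phi\circ\pi)$ maps under $\mu^{\rm a}$ to the same class as $(\topsp{M},\topsp{E},\phi)$. The strategy is to interpret $\topsp{H}(\topsp{F})$ as representing the $\KO$-theoretic Thom class and use the multiplicative property of the $\Cl_n$-linear index: the Dirac operator on $\widehat{\topsp{M}}$ twisted by $\topsp{H}(\topsp{F})$ splits, after a deformation through a one-parameter family of Kasparov modules, into the vertical Dirac operator along the fibres $\topsp{S}^{8k}$ twisted by the Bott generator, which realises the identity in $\KKO(\Cl_{8k},\Cl_{8k})$ under Bott periodicity, tensored with the Dirac operator on $\topsp{M}$ twisted by $\topsp{E}$. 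A Mayer-Vietoris argument based on the decomposition (\ref{hatMBF}) together with the excision property of Kasparov theory then produces the required operator homotopy to $\mu^{\rm a}[\topsp{M},\topsp{E},\phi]$.

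Finally I would establish bijectivity by the strategy outlined after Theorem~\ref{equivalence}. Let $\mathrm{ind}^{\rm t},\mathrm{ind}^{\rm a}:\KO^0_{\rm cpt}(\topsp{T}^\ast\topsp{M})\to\KO^{\rm t}_\sharp(\topsp{M}),\KO^{\rm a}_\sharp(\topsp{M})$ denote the topological and analytical index maps, the former being built from the Gysin/Thom homomorphism and geometric Poincar\'e duality, the latter from the symbol class and standard pseudodifferential constructions. Naturality of $\mu^{\rm a}$ together with compatibility with Poincar\'e duality for spin manifolds reduces the claim $\mu^{\rm a}\circ\mathrm{ind}^{\rm t}=\mathrm{ind}^{\rm a}$ to the $\Cl_n$-linear Atiyah-Singer theorem for the canonical Dirac operator, which is known. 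Combined with Theorem~\ref{equiv} (geometric $\simeq$ spectral) and Theorem~\ref{kasparov} (analytic $\simeq$ spectral), both natural equivalences, a diagram chase identifies $\mu^{\rm a}$ with the composite natural equivalence $\KO^{\rm t}\to\KO^{\rm s}\to\KO^{\rm a}$, whence $\mu^{\rm a}$ is a natural isomorphism on finite CW-pairs, and by standard limiting arguments on all pairs with the homotopy type of finite CW-pairs.
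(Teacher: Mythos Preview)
Your construction of $\mu^{\rm a}$ via the $\Cl_n$-linear Dirac operator and bounded transform is exactly what the paper does. The well-definedness checks differ in execution but not in spirit: for bordism the paper invokes a boundary-homomorphism result (that $\partial[\,\nslash{\mathfrak D}^{W\setminus\partial W}_{E}]=[\,\nslash{\mathfrak D}^{\partial W}_{E|_{\partial W}}]$) together with the long exact sequence of the pair $(W,\partial W)$, rather than your APS construction; for vector bundle modification it argues with fundamental classes and the Thom isomorphism in analytic KO-homology, showing $\pi_*\bigl([H(F)]\cap[\,\nslash{\mathfrak D}^{\widehat M}]\bigr)=[\,\nslash{\mathfrak D}^M]$, rather than your fibrewise splitting and Mayer--Vietoris. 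Both of your alternatives are standard and would work.

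The substantive divergence is in the bijectivity step, and here your argument has a gap. The paper does \emph{not} follow the complex-case template via $\KO^0_{\rm cpt}(T^*M)$; it explicitly takes the opposite route. Knowing from Theorems~\ref{equiv} and~\ref{kasparov} that both $\KO^{\rm t}$ and $\KO^{\rm a}$ are generalized homology theories, it invokes the comparison theorem and reduces to showing $\mu^{\rm a}$ is an isomorphism on $\pt$ in each degree $n$. For this it constructs \emph{surjective} index maps $\ind_n^{\rm t},\ind_n^{\rm a}:\KO_n^{\rm t,a}(\pt)\to\KO^{-n}(\pt)$ directly---the first is the Atiyah--Milnor--Singer invariant $\widehat{\mathcal A}_E(M)=\beta\circ f_!([E])$, the second the Clifford index $[\ker T]\in\widehat{\mathfrak M}_n/\imath^*\widehat{\mathfrak M}_{n+1}$---and the $\Cl_n$-index theorem gives $\ind_n^{\rm a}\circ\mu^{\rm a}=\ind_n^{\rm t}$. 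Since $\KO^{-n}(\pt)\in\{0,\mathbb Z,\mathbb Z_2\}$ and surjective endomorphisms of these groups are automatically bijective, $\mu^{\rm a}$ is an isomorphism on points.

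Your final step, that ``a diagram chase identifies $\mu^{\rm a}$ with the composite natural equivalence $\KO^{\rm t}\to\KO^{\rm s}\to\KO^{\rm a}$,'' is not justified: two natural transformations between the same homology theories need not coincide even if both happen to be equivalences, and nothing you have written pins $\mu^{\rm a}$ to that particular composite. What you actually need is simpler and is what the paper does: once $\mu^{\rm a}$ is a natural transformation between homology theories and an isomorphism on $\pt$, the comparison theorem finishes the proof---no identification with an abstract composite is required. Your $T^*M$-based index maps could in principle feed into this (specialising to $M=\pt$ or spheres), but the paper's direct degree-by-degree Clifford-index approach is both cleaner and better adapted to the torsion cases $n\equiv1,2\pmod 8$, which are precisely where the KO story departs from the complex one.
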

As we have seen, geometric and analytic KO-homology are generalized cohomology theories defined on the category of finite CW-pairs $(X,Y)$. For such theories, the following general result holds \cite{daviskirk}.
\begin{theorem}
Let $h_{\sharp}$ and $k_{\sharp}$ be generalized homology theories defined on the category of finite CW-pairs, and let
\begin{displaymath}
\phi:h_{\sharp}\to{k}_{\sharp}
\end{displaymath}
a natural transformation such that
\begin{displaymath}
\phi:h_{n}({\rm pt})\to{k_{n}({\rm pt})}
\end{displaymath}
is an isomorphism for any $n\in{\mathbb{Z}}$. Then $\phi$ is a natural equivalence.
\end{theorem}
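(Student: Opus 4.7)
The plan is a standard induction argument on the cellular structure of $X$, leveraging the fact that both $h_\sharp$ and $k_\sharp$ satisfy the Eilenberg--Steenrod axioms (minus the dimension axiom) and that $\phi$ is a natural transformation compatible with the connecting homomorphisms of the long exact sequences of pairs. The five lemma will be the workhorse at each inductive step, and the finite-cell hypothesis ensures that the induction terminates.

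First I would verify the result on spheres. For each $m \geq 1$, the long exact sequence of the pair $(D^m, S^{m-1})$, combined with the contractibility of $D^m$ and the excision identification $h_n(D^m, S^{m-1}) \simeq \tilde{h}_n(S^m)$, yields the suspension isomorphism $\tilde{h}_n(S^m) \simeq \tilde{h}_{n-1}(S^{m-1})$, which iterates to $\tilde{h}_n(S^m) \simeq \tilde{h}_{n-m}(S^0) \simeq h_{n-m}(\mathrm{pt})$. The naturality of $\phi$ under these constructions, together with the hypothesis at a point, yields via the five lemma applied to the resulting ladder of long exact sequences that $\phi\colon h_\sharp(S^m) \to k_\sharp(S^m)$ is an isomorphism for all $m \geq 0$. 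The additivity axiom for finite disjoint unions then extends this conclusion to any finite wedge of spheres.

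Next, for a general finite CW-pair $(X, Y)$, I would filter $X$ by $Y = X_0 \subset X_1 \subset \cdots \subset X_N = X$, where each $X_k$ is obtained from $X_{k-1}$ by attaching a single cell of dimension $n_k$, and perform induction on $k$. The inductive hypothesis is that $\phi$ is an isomorphism on $h_\sharp(X_{k-1}, Y) \to k_\sharp(X_{k-1}, Y)$. The long exact sequence of the triple $(X_k, X_{k-1}, Y)$, in which the relative group $h_\sharp(X_k, X_{k-1})$ is identified via excision with $h_\sharp(D^{n_k}, S^{n_k-1}) \simeq \tilde{h}_\sharp(S^{n_k})$ (where $\phi$ is known to be an isomorphism by the previous step), yields a ladder diagram to which the five lemma applies, upgrading the isomorphism from $(X_{k-1}, Y)$ to $(X_k, Y)$. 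Since $(X, Y)$ has finitely many cells, the induction terminates after $N$ steps, producing the required isomorphism $\phi\colon h_\sharp(X, Y) \to k_\sharp(X, Y)$ for every finite CW-pair.

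The main obstacle is conceptual rather than computational: one must ensure that $\phi$, as a morphism of homology \emph{theories} and not merely of the underlying graded-group-valued functors, genuinely commutes with the connecting homomorphisms $\partial$ of the long exact sequence of every pair. This compatibility is part of the standard definition of a natural transformation of generalized homology theories and should be regarded as part of the data of $\phi$; without it, the ladder diagrams required for the five-lemma arguments above would fail to commute and the induction would collapse. Given this compatibility, the remaining work is the routine inductive bookkeeping described above, and no finer structural properties of $h_\sharp$ or $k_\sharp$ are needed beyond the axioms.
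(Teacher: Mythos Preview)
The paper does not actually prove this theorem; it is quoted as a standard result with a reference to \cite{daviskirk}. Your sketch is the standard argument one finds in such references: reduce to spheres via suspension and the five lemma, then induct cell-by-cell using excision and the long exact sequence of a triple, with the five lemma again carrying the isomorphism forward. The argument is correct as outlined, including your remark that compatibility of $\phi$ with the connecting homomorphisms is part of the data of a natural transformation of homology theories and is essential for the ladder diagrams to commute.
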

We have explained in the previous section how the fact that the natural isomorphism $\mu^{a}$ between geometric and analytic K-homology induces the commutativity of the bottom triangle in (\ref{diagram}), and consequently the Atiyah-Singer index theorem. Our strategy to prove Theorem \ref{isomorphism} will be instead opposite: namely, we will construct surjective
``index'' homomorphisms $\ind_{n}^{\rm t}$ and $\ind_{n}^{\rm a}$ such
that the diagram
\beq
\xymatrix{\KO^{\rm t}_{n}(\pt)\ar[r]^{\mu^{\rm a}}\ar[rd]_{\ind_{n}^{\rm t}} &
\KO^{\rm a}_{n}(\pt)\ar[d]^{\ind_{n}^{\rm a}}\\ &\KO^{-n}(\pt)}
\label{muadiagmain}\eeq
commutes for every $n$, thanks to a suitable index theorem. Recall that $$\KO^{t,a}_{n}({\rm pt})\simeq{\KO^{-n}({\rm pt})}$$ and since the
groups $\KO^{-n}({\pt})$ are equal to
either $0$, $\mathbb{Z}$ or $\mathbb{Z}_2$ depending on the particular
value of $n$, the commutativity of the diagram (\ref{muadiagmain})
along with surjectivity of the index maps are sufficient to prove that
$\mu^{\rm a}$ is an isomorphism\footnote{Recall that surjective group homomorphisms of $\mathbb{Z}$ or any finite group are isomorphisms}. Moreover, the index homomorphism will play a crucial role in the applications exploited in later sections, in particular in the construction of cycles representing the generators for $\KO_{\sharp}^{t}({\rm pt})$.  
\subsection{The natural transformation $\mu^{a}$} 
Let $(\topsp{M},\topsp{E},\phi)$ be a topological KO-cycle on $\topsp{X}$ with $\dim \topsp{M}=n$. Recall that $\topsp{M}$ is a compact spin manifold. We
construct a corresponding class in $\KO^{\rm a}_{n}(\topsp{X})$ as
follows. Consider the associated vector bundle
\begin{displaymath}
\,\nslash{\mathfrak{S}}(\topsp{M}):={P_{\,\Spin}}(\topsp{M})\times_{\lambda_n}\Cl_{n}
\end{displaymath}
where $\Cl_{n}={\Cl(\mathbb{R}^n)}$, $\lambda_n:\Spin(n)\to{\rm
  End}(\Cl_{n})$ is given by left multiplication with
$\Spin(n)\subset{\Cl_{n}^{0}\subset{\Cl_{n}^{\phantom{0}}}}$, and $P_{\,\Spin}(\topsp{M})$ is
the principal $\Spin(n)$-bundle over $\topsp{M}$ associated to the spin structure on
the tangent bundle $\topsp{T}\topsp{M}$. Since
$\Cl^{\phantom{0}}_{n}=\Cl^{0}_{n}\oplus{\Cl^{1}_{n}}$ is a
$\zed_2$-graded algebra, it follows that
\beq
\,\nslash{\mathfrak{S}}(\topsp{M})=\,\nslash{\mathfrak{S}}^{0}(\topsp{M})\oplus\,
\nslash{\mathfrak{S}}^{1}(\topsp{M})
\label{Cliffbungrad}\eeq
is a $\mathbb{Z}_{2}$-graded real vector bundle over $\topsp{M}$ with respect to the
$\Cl(\topsp{T}\topsp{M})$-action. The Clifford algebra $\Cl_{n}$ acts by right
multiplication on the fibres whilst preserving the bundle grading
(\ref{Cliffbungrad}). Since $\:\nslash{\mathfrak{S}}(\topsp{M})$ is a vector bundle associated to $P_{\,\Spin}(\topsp{M})$, it carries the canonical Riemannian connection associated to the spin connection, and hence it is a Dirac bundle, i.e. it admits a canonical Dirac operator which is selfadjoint on the space of $\rm L^{2}$-sections of ${\:\nslash{\mathfrak{S}}(\topsp{M})}$ with respect to the given metric, and which has finite dimensional kernel. Indeed, as a vector bundle, $\:\nslash{\mathfrak{S}}(\topsp{M})$ is isomorphic to the direct sum of irreducible real spinor bundles on $\topsp{M}$. Hence, after choosing a $\rm C^{\infty}$ Riemannian metric $g^{\topsp{M}}$ on $\topsp{T}\topsp{M}$, we consider the canonical Dirac operator
\begin{displaymath}
\,\nslash{\mathfrak{D}}^\topsp{M}:\C^\infty(\topsp{M},\,\nslash{\mathfrak{S}}(\topsp{M}))\to
\C^\infty(\topsp{M},\,\nslash{\mathfrak{S}}(\topsp{M}))
\end{displaymath}
which we will refer to as the \emph{Atiyah-Singer} operator \cite{skewadjoint} defined locally by
\beq
\,\nslash{\mathfrak{D}}^\topsp{M}=\sum_{i=1}^n\,e_{i}\cdot\nabla^\topsp{M}_{e_{i}}
\ ,
\label{ASop}\eeq
where $\{e_{i}\}_{1\leq i\leq n}$ is a local orthonormal basis of sections of the
tangent bundle $\topsp{T}\topsp{M}$, $\nabla^\topsp{M}_{e_i}$ are the corresponding
components of the spin connection $\nabla^\topsp{M}$, and the dot denotes
Clifford multiplication. The operator $\:\,\nslash{\mathfrak{D}}^\topsp{M}$ is a
$\Cl_{n}$-operator \cite{LM}, i.e. one has
\begin{displaymath}
\,\nslash{\mathfrak{D}}^\topsp{M}(\Psi\cdot\varphi)=\,\nslash{\mathfrak{D}}^\topsp{M}(\Psi)
\cdot\varphi
\end{displaymath}
for all $\Psi\in\C^\infty(\topsp{M},\,\nslash{\mathfrak{S}}(\topsp{M}))$ and all
$\varphi\in{}\Cl_{n}$, where $\cdot\,\varphi$ denotes right
multiplication by $\varphi$. In particular, with respect to decomposition (\ref{Cliffbungrad}), the operator $\nslash{\mathfrak{D}}^\topsp{M}$ is of the form
\begin{displaymath}
\nslash{\mathfrak{D}}^\topsp{M}=\left(\begin{array}{cc}
0 & \nslash{\mathfrak{D}}_{1}^\topsp{M}\\
\nslash{\mathfrak{D}}_{0}^\topsp{M} & 0
\end{array}\right)
\end{displaymath}
where 
\begin{displaymath}
\nslash{\mathfrak{D}}_{0}^\topsp{M}:\Gamma(\nslash{\mathfrak{S}}^{0}(\topsp{M}))\to\Gamma(\nslash{\mathfrak{S}}^{1}(\topsp{M}))
\end{displaymath}
is a real, elliptic first-order differential operator which commutes with the action of $\Cl_{n}^{0}\simeq\Cl_{n-1}$ on $\:\nslash{\mathfrak{S}}(\topsp{M})$. Since $\,\nslash{\mathfrak{D}}^\topsp{M}$ commutes
with the $\Cl_{n}$-action, the vector space $\ker~\nslash{\mathfrak{D}}^\topsp{M}$ is a finite dimensional graded $\Cl_{n}$-module.\\
We can now construct a triple $(\mathcal{H}^\topsp{M}_{\topsp{E}},\rho^{\topsp{M}}_{\topsp{E}},
\topsp{T}^{\topsp{M}}_{\topsp{E}})$ comprising the following data:
\begin{itemize}
\item[(i)] The separable real Hilbert space $\mathcal{H}^{\topsp{M}}_{\topsp{E}}:={\rm
    L}^{2}_\real(\topsp{M},\,\nslash{\mathfrak{S}}(\topsp{M})\otimes{\topsp{E}};\dd g^\topsp{M})$;
\item[(ii)] The $*$-homomorphism
  $\rho^{\topsp{M}}_{\topsp{E}}:\C(\topsp{M},\real)\to{\mathcal{L}(\mathcal{H}^{\topsp{M}}_{\topsp{E}})}$
  defined by
\begin{displaymath}
\bigl(\rho^{\topsp{M}}_{\topsp{E}}(f)(\Psi)\bigr)(p)=f(p)\,\Psi(p)
\end{displaymath}
for $f\in\C(\topsp{M},\real)$,
$\Psi\in\C^\infty(\topsp{M},\,\nslash{\mathfrak{S}}(\topsp{M})\otimes{\topsp{E}})$ and
$p\in{\topsp{M}}$; and
\item[(iii)] The bounded Fredholm operator
\beq
  \topsp{T}^{\topsp{M}}_{\topsp{E}}:=\frac{\,\nslash{\mathfrak{D}}^\topsp{M}_{\topsp{E}}}
  {\sqrt{1+\big(\,\nslash{\mathfrak{D}}^\topsp{M}_{\topsp{E}}\big)^{2}}}
\label{TEMdef}\eeq
acting on $\mathcal{H}_\topsp{E}^\topsp{M}$, where $\,\nslash{\mathfrak{D}}^\topsp{M}_{\topsp{E}}$ is
the Atiyah-Singer operator (\ref{ASop}) twisted by the real vector
bundle $\topsp{E}\to \topsp{M}$.
\end{itemize}
This triple satisfies the following properties \cite{Higson}:
\begin{itemize}
\item[(i)] $\mathcal{H}^{\topsp{M}}_{\topsp{E}}$ is $\mathbb{Z}_{2}$-graded according
  to the splitting (\ref{Cliffbungrad}) of the Clifford bundle;
\item[(ii)] $\rho^{\topsp{M}}_{\topsp{E}}(f)$ is an even operator on
  $\mathcal{H}^{\topsp{M}}_{\topsp{E}}$ for all $f\in\C(\topsp{M},\real)$;
\item[(iii)] Since $\topsp{M}$ is compact, $\topsp{T}^{\topsp{M}}_{\topsp{E}}$ is an odd Fredholm
  operator which obeys the compactness conditions (\ref{TrhoKOarels})
  with $\rho_\topsp{E}^\topsp{M}(f)$; and
\item[(iv)] There are odd operators $\varepsilon_{i}$, $i=1,\dots,n$
  commuting with both $\rho^{\topsp{M}}_{\topsp{E}}(f)$ and $\topsp{T}_\topsp{E}^\topsp{M}$ which generate a
  $\Cl_n$-action on $\mathcal{H}_\topsp{E}^\topsp{M}$ as in (\ref{Clnrels}), and which
  are given explicitly as right multiplication by elements
  $e_{i}$ of a basis of the vector space $\mathbb{R}^{n}$.
\end{itemize}
It follows that $(\mathcal{H}^{\topsp{M}}_{\topsp{E}},\rho^{\topsp{M}}_{\topsp{E}},\topsp{T}^{\topsp{M}}_{\topsp{E}})$ is a
well-defined $n$-graded Fredholm module over the real $C^*$-algebra
$\C(\topsp{M},\mathbb{R})$.\\
We now define the map $\mu^{\rm a}$ in (\ref{muadiagmain}) by
\beq
\mu^{\rm a}\bigl(\topsp{M}\,,\,\topsp{E}\,,\,\phi\bigr):=\phi_{*}\bigl(
\mathcal{H}^{\topsp{M}}_{\topsp{E}}\,,\,\rho^{\topsp{M}}_{\topsp{E}}\,,\,
\topsp{T}^{\topsp{M}}_{\topsp{E}}\bigr)=\bigl(\mathcal{H}^{\topsp{M}}_{\topsp{E}}\,,\,\rho^{\topsp{M}}_{\topsp{E}}\circ\phi^{*}
\,,\,\topsp{T}^{\topsp{M}}_{\topsp{E}}\bigr) \ ,
\label{muacycledef}\eeq
where $\phi^{*}:\C(\topsp{X},\real)\to\C(\topsp{M},\real)$ is the real $C^*$-algebra
homomorphism induced by the map $\phi$. At this stage the map
$\mu^{\rm a}$ is only defined on KO-cycles. More precisely, we can consider the map $\mu^{a}:\Gamma{\rm O}_{n}(\topsp{X})\to\KO_{n}^{a}(\topsp{X})$ induced by the equivalence relations on the set of Fredholm modules. We need to prove, at this point, that the map $\mu^{a}$ gives  a well defined homomorphism
\begin{displaymath}
\mu^{a}:\KO_{n}^{t}(\topsp{X})\to\KO_{n}^{a}(\topsp{X})
\end{displaymath}
 We will first recall some useful results concerning the above construction. Let us  denote with $[\:\nslash{\mathfrak{D}}^\topsp{M}_{\topsp{E}}]$ the class corresponding to the element $\mu^{a}(\topsp{M},\topsp{E},{\rm id}_{\topsp{M}})\in\KO_{n}^{a}(\topsp{M})$. We can then state the following result \cite{Higson}
\begin{theorem}\label{connective}
Let $\topsp{M}-\partial{\topsp{M}}$ be the interior of a spin manifold $\topsp{M}$ of dimension $n$ with boundary $\partial \topsp{M}$, and let $\topsp{E}$ be a real vector bundle on $\topsp{M}$. Equip the boundary $\partial{\topsp{M}}$ with the spin structure induced by that on $\topsp{M}$. Then
\begin{displaymath}
\partial[\:\nslash{\mathfrak{D}}^{\topsp{M}-\partial{\topsp{M}}}_{\topsp{E}|_{\topsp{M}-\partial{}\topsp{M}}}]=[\:\nslash{\mathfrak{D}}^{\partial{\topsp{M}}}_{\topsp{E}|_{\partial{}\topsp{M}}}]
\end{displaymath}
where $\partial:\KO_{n}^{a}(\topsp{M}-\partial{\topsp{M}})\to\KO^{a}_{n-1}(\partial{\topsp{M}})$ is the boundary homomorphism.
\end{theorem}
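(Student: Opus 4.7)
The plan is to identify the abstract boundary homomorphism in analytic KO-homology with a concrete geometric operation on Dirac operator classes, using the product structure of $\:\nslash{\mathfrak{D}}^{\topsp{M}}$ near $\partial\topsp{M}$. The starting point is the short exact sequence of real $\text{C}^{*}$-algebras
\begin{displaymath}
0 \,\longrightarrow\, \C_{0}(\topsp{M}-\partial\topsp{M},\real) \,\longrightarrow\, \C(\topsp{M},\real) \,\longrightarrow\, \C(\partial\topsp{M},\real) \,\longrightarrow\, 0
\end{displaymath}
obtained by restriction. The induced six-term exact sequence in KKO-theory, together with Kasparov's naturality, defines the connecting homomorphism $\partial$; after identifying $\KO_{n}^{\rm a}(\topsp{M},\partial\topsp{M})$ with $\KO_{n}^{\rm a}(\topsp{M}-\partial\topsp{M})$ via excision, the map in the statement is the one induced by the above extension.

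Next, I would exploit the collar neighbourhood theorem to write a tubular neighbourhood of $\partial\topsp{M}$ in $\topsp{M}$ as $\partial\topsp{M}\times[0,\epsilon)$ and equip it with a product metric compatible with the spin structure and with a trivialised extension of $\topsp{E}$. On this collar the Atiyah-Singer operator admits the standard product decomposition
\begin{displaymath}
\,\nslash{\mathfrak{D}}^{\topsp{M}}_{\topsp{E}} \;=\; c(\partial_{t})\,\bigl(\partial_{t} + \,\nslash{\mathfrak{D}}^{\partial\topsp{M}}_{\topsp{E}|_{\partial\topsp{M}}}\bigr) \ ,
\end{displaymath}
where $c(\partial_{t})$ is Clifford multiplication by the inward unit normal; this is an odd involution commuting with $\,\nslash{\mathfrak{D}}^{\partial\topsp{M}}_{\topsp{E}|_{\partial\topsp{M}}}$ and providing the degree-shift required by $\partial:\KO_{n}^{\rm a}\to\KO_{n-1}^{\rm a}$.

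The computation of $\partial$ on the cycle $\mu^{\rm a}[\topsp{M}-\partial\topsp{M},\topsp{E}|_{\topsp{M}-\partial\topsp{M}},\Id]$ then proceeds along classical lines: choose the natural extension of the representation $\rho_{\topsp{E}|_{\topsp{M}-\partial\topsp{M}}}^{\topsp{M}-\partial\topsp{M}}$ to $\C(\topsp{M},\real)$ by using the full Hilbert space ${\rm L}^{2}_\real(\topsp{M},\,\nslash{\mathfrak{S}}(\topsp{M})\otimes\topsp{E})$ on which $\topsp{T}^{\topsp{M}}_{\topsp{E}}$ acts as a self-adjoint Fredholm operator; by a standard KK-theoretic argument, the connecting homomorphism is then represented by the commutator data $(h\,\topsp{T}^{\topsp{M}}_{\topsp{E}} - \topsp{T}^{\topsp{M}}_{\topsp{E}}\,h)$ where $h$ is a smooth collar coordinate supported near $\partial\topsp{M}$. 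Because of the collar decomposition above, this commutator localises to the boundary and, up to norm-continuous deformation and the $\Cl_{n-1}\subset\Cl_{n}$ grading shift implemented by $c(\partial_{t})$, produces precisely the bounded transform of $\,\nslash{\mathfrak{D}}^{\partial\topsp{M}}_{\topsp{E}|_{\partial\topsp{M}}}$, thereby identifying the resulting class with $[\,\nslash{\mathfrak{D}}^{\partial\topsp{M}}_{\topsp{E}|_{\partial\topsp{M}}}]$.

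The hard part will be making the third step genuinely rigorous in the real, $\Cl_{n}$-linear setting: one must show that the generalised Fredholm module obtained by the commutator construction is operator-homotopic to the Fredholm module defined directly by $\,\nslash{\mathfrak{D}}^{\partial\topsp{M}}_{\topsp{E}|_{\partial\topsp{M}}}$, and that this homotopy respects the full Clifford action (not merely the complex grading). Concretely, the Mayer-Vietoris-type argument that normally lets one reduce to a cylindrical model $\partial\topsp{M}\times\real$, take the Fourier transform in the normal direction, and apply Bott periodicity must be carried out with $\Cl_{n}$-linearity preserved at every stage, which is a more delicate verification than in the complex case. Once this is done the conclusion $\partial[\,\nslash{\mathfrak{D}}^{\topsp{M}-\partial\topsp{M}}_{\topsp{E}|_{\topsp{M}-\partial\topsp{M}}}]=[\,\nslash{\mathfrak{D}}^{\partial\topsp{M}}_{\topsp{E}|_{\partial\topsp{M}}}]$ follows.
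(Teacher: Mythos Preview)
Your outline follows the standard route to this result, essentially the one given in Higson--Roe's \emph{Analytic K-homology} (which is the reference \cite{Higson} in the paper). However, you should be aware that the paper does \emph{not} supply its own proof of this theorem: it is stated with the citation ``We can then state the following result \cite{Higson}'' and used as a black box in the proof of the subsequent Theorem~\ref{welldefined}. So there is no ``paper's own proof'' to compare against; the thesis treats this as an imported fact.

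That said, your sketch is sound in structure. The short exact sequence of real $C^{*}$-algebras, the collar decomposition of the $\Cl_{n}$-linear Atiyah--Singer operator, and the commutator description of the connecting map are exactly the ingredients. Your honest flagging of the ``hard part'' --- carrying the operator homotopy through while preserving the full $\Cl_{n}$-action rather than just a $\zed_{2}$-grading --- is well placed: this is genuinely the technical crux in the real case, and the reduction to the half-line/cylinder model via Bott periodicity must indeed be done $\Cl_{n}$-equivariantly. If you were to write this up in full you would want to invoke (or reprove) the real analogue of the relevant boundary computation in Higson--Roe, Chapter~11, keeping track of the Clifford grading throughout.
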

Notice that in the theorem above we have used that $\KO_{n}^{a}(\topsp{M}-\partial{\topsp{M}})\simeq\KO_{n}(\topsp{M},\partial{\topsp{M}})$ via excision. Moreover, one can prove that the class $[\:\nslash{\mathfrak{D}}^{\topsp{M}}]:=[\:\nslash{\mathfrak{D}}^{\topsp{M}}_{\id}]$ represents the fundamental class of $\topsp{M}$ in $\KO_{n}^{a}(\topsp{M})$, and that 
\begin{displaymath}
[\:\nslash{\mathfrak{D}}^{\topsp{M}}_{\topsp{E}}]=[\topsp{E}]\cap[\:\nslash{\mathfrak{D}}^{\topsp{M}}]
\end{displaymath}
See \cite{Bunke1995} for details.\\
We are now ready to prove the following
\begin{theorem}\label{welldefined}
The map $\mu^{a}:\Gamma{\rm O}_{n}(\topsp{X})\to\KO_{n}^{a}(\topsp{X})$ induces a well defined homomorphism
\begin{displaymath}
\mu^{a}:\KO_{n}^{t}(\topsp{X})\to\KO_{n}^{a}(\topsp{X})
\end{displaymath}
for any CW-complex $\topsp{X}$, and any $n\in\mathbb{Z}$.
\end{theorem}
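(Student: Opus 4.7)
The plan is to verify that $\mu^{\rm a}$, defined at the level of cycles by (\ref{muacycledef}), is invariant under each of the four equivalence relations in Definition~\ref{KOtdef}. The first three are essentially structural; the fourth, real vector bundle modification, will be the main obstacle.

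First, I would handle isomorphism and direct sum together. A spin diffeomorphism $h:\topsp{M}_1\to\topsp{M}_2$ with $h^*\topsp{E}_2\simeq\topsp{E}_1$ and $\phi_2\circ h=\phi_1$ pulls back a Riemannian metric $g^{\topsp{M}_2}$ to a Riemannian metric on $\topsp{M}_1$, and the induced map ${\rm L}^2_\real(\topsp{M}_2,\,\nslash{\mathfrak{S}}(\topsp{M}_2)\otimes\topsp{E}_2)\to{\rm L}^2_\real(\topsp{M}_1,\,\nslash{\mathfrak{S}}(\topsp{M}_1)\otimes\topsp{E}_1)$ is an isometric degree-preserving intertwiner of the $\Cl_n$-actions carrying $\topsp{T}^{\topsp{M}_2}_{\topsp{E}_2}$ to $\topsp{T}^{\topsp{M}_1}_{\topsp{E}_1}$ and $\rho^{\topsp{M}_2}_{\topsp{E}_2}\circ\phi_2^*$ to $\rho^{\topsp{M}_1}_{\topsp{E}_1}\circ\phi_1^*$, so the Fredholm modules are orthogonally equivalent. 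Since the Atiyah-Singer operator twisted by a Whitney sum splits as $\,\nslash{\mathfrak{D}}^\topsp{M}_{\topsp{E}_1\oplus\topsp{E}_2}=\,\nslash{\mathfrak{D}}^\topsp{M}_{\topsp{E}_1}\oplus\,\nslash{\mathfrak{D}}^\topsp{M}_{\topsp{E}_2}$ on a direct sum of Hilbert spaces, the direct sum and disjoint union relations are respected.

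Second, for spin bordism, let $(\topsp{W},\topsp{E},\phi)$ be a spin bordism between $(\topsp{M}_1,\topsp{E}_1,\phi_1)$ and $(\topsp{M}_2,\topsp{E}_2,\phi_2)$, so $\partial\topsp{W}=\topsp{M}_1\amalg(-\topsp{M}_2)$. The Atiyah-Singer operator on the interior $\topsp{W}-\partial\topsp{W}$, twisted by $\topsp{E}|_{\topsp{W}-\partial\topsp{W}}$, gives a class $[\,\nslash{\mathfrak{D}}^{\topsp{W}-\partial\topsp{W}}_{\topsp{E}|_{\topsp{W}-\partial\topsp{W}}}]\in\KO^{\rm a}_{n+1}(\topsp{W},\partial\topsp{W})$. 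By Theorem~\ref{connective}, the connecting homomorphism sends this to $[\,\nslash{\mathfrak{D}}^{\partial\topsp{W}}_{\topsp{E}|_{\partial\topsp{W}}}]\in\KO^{\rm a}_n(\partial\topsp{W})$. Exactness of the long sequence of the pair $(\topsp{W},\partial\topsp{W})$ forces the image of this class in $\KO^{\rm a}_n(\topsp{W})$ to be zero. Pushing forward by $\phi$ and using naturality of the long exact sequence yields $(\phi|_{\partial\topsp{W}})_*[\,\nslash{\mathfrak{D}}^{\partial\topsp{W}}_{\topsp{E}|_{\partial\topsp{W}}}]=0$ in $\KO^{\rm a}_n(\topsp{X})$. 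Finally, reversal of the spin structure on $\topsp{M}_2$ negates the Dirac operator and hence the corresponding KO-homology class, so $\mu^{\rm a}[\topsp{M}_1,\topsp{E}_1,\phi_1]-\mu^{\rm a}[\topsp{M}_2,\topsp{E}_2,\phi_2]=0$.

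The hard part will be real vector bundle modification. Given $\topsp{F}\to\topsp{M}$ a real spin bundle of rank $8k$, I must prove
\[
\mu^{\rm a}\bigl[\,\widehat{\topsp{M}}\,,\,\topsp{H}(\topsp{F})\otimes\pi^*\topsp{E}\,,\,\phi\circ\pi\bigr]
=\mu^{\rm a}\bigl[\topsp{M}\,,\,\topsp{E}\,,\,\phi\bigr]
\]
in $\KO^{\rm a}_n(\topsp{X})$. The strategy is to interpret the left-hand side via the family/Gysin construction along the sphere bundle $\pi:\widehat{\topsp{M}}\to\topsp{M}$. The relation $[\,\nslash{\mathfrak{D}}^{\topsp{M}}_{\topsp{E}}]=[\topsp{E}]\cap[\,\nslash{\mathfrak{D}}^{\topsp{M}}]$ reduces the problem to showing $\pi_*\bigl([\topsp{H}(\topsp{F})]\cap[\,\nslash{\mathfrak{D}}^{\widehat{\topsp{M}}}]\bigr)=[\,\nslash{\mathfrak{D}}^{\topsp{M}}]$ in $\KO^{\rm a}_n(\topsp{M})$, after the Morita/Bott identification of $(n+8k)$-graded and $n$-graded Fredholm modules induced by the $\Cl_{8k}\simeq M_{2^{4k}}(\real)$ periodicity. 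Decomposing $\widehat{\topsp{M}}=\B^+(\topsp{F})\cup_{\S(\topsp{F})}\B^-(\topsp{F})$ as in (\ref{hatMBF}) and using the excision/Mayer-Vietoris presentation of the spectral sequence of the fibration, the calculation localizes along each fiber $\topsp{S}^{8k}$; here $\topsp{H}(\topsp{F})|_{\rm fiber}$ is the real Bott generator, and the $\Cl_{8k}$-linear Atiyah-Singer operator on $\topsp{S}^{8k}$ twisted by this generator has Clifford index equal to the unit of $\KO^{-8k}(\pt)\simeq\zed$ under the ABS isomorphism. This multiplicativity, together with the projection formula $(\phi\circ\pi)_*=\phi_*\circ\pi_*$, collapses the class on $\widehat{\topsp{M}}$ to the class on $\topsp{M}$. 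Combining the four steps proves that $\mu^{\rm a}$ descends to a well-defined homomorphism $\KO^{\rm t}_n(\topsp{X})\to\KO^{\rm a}_n(\topsp{X})$, and naturality in $\topsp{X}$ is immediate from functoriality of $\phi_*$.
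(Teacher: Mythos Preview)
Your overall structure matches the paper exactly: isomorphism and direct sum are structural, bordism comes from Theorem~\ref{connective} plus exactness of the long exact sequence for $(\topsp{W},\partial\topsp{W})$, and vector bundle modification reduces via the cap-product identity $[\,\nslash{\mathfrak{D}}^{\widehat{\topsp{M}}}_{\topsp{H}(\topsp{F})\otimes\pi^*\topsp{E}}]=\pi^*[\topsp{E}]\cap([\topsp{H}(\topsp{F})]\cap[\,\nslash{\mathfrak{D}}^{\widehat{\topsp{M}}}])$ and the projection formula to the single identity $\pi_*([\topsp{H}(\topsp{F})]\cap[\,\nslash{\mathfrak{D}}^{\widehat{\topsp{M}}}])=[\,\nslash{\mathfrak{D}}^{\topsp{M}}]$.

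Where you diverge from the paper is in how you justify this last identity. You propose to decompose $\widehat{\topsp{M}}$ via (\ref{hatMBF}), invoke a Mayer--Vietoris or spectral-sequence argument to localize to the fibres, and then compute the $\Cl_{8k}$-index of the fibre Dirac operator twisted by the Bott class. This is a valid strategy in principle, but it is considerably more work than necessary and the phrases ``the calculation localizes along each fiber'' and ``this multiplicativity'' are doing a lot of heavy lifting that would require a family index or Kasparov-product argument to make precise. The paper instead observes that $[\topsp{H}(\topsp{F})]$ is by construction the image of the Thom class of $\topsp{F}$, so by the definition (\ref{Thomdef}) of the homological Thom map, $\pi_*([\topsp{H}(\topsp{F})]\cap-)$ \emph{is} the Thom isomorphism $\KO^{\rm a}_{n+8r}(\widehat{\topsp{M}})\to\KO^{\rm a}_n(\topsp{M})$ for the spherical fibration. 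Since $[\,\nslash{\mathfrak{D}}^{\widehat{\topsp{M}}}]$ is the analytic fundamental class of $\widehat{\topsp{M}}$ (established just before the theorem, citing~\cite{Bunke1995}), and the Thom isomorphism sends fundamental classes to fundamental classes, the identity follows immediately. This route avoids any explicit fibre computation and uses only machinery already set up in the chapter.
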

\begin{proof}
That the map $\mu^{a}$ respects the algebraic sum and independence of the direct sum relation follows straightforwardly from the fact that
\begin{displaymath}
\Gamma(\:\nslash{\mathfrak{S}}(\topsp{M}\sqcup{\topsp{N}})\otimes(\topsp{E}\sqcup{\topsp{F}}))=\Gamma(\:\nslash{\mathfrak{S}}(\topsp{M})\otimes{\topsp{E}})\oplus\Gamma(\:\nslash{\mathfrak{S}}({\topsp{N}})\otimes{\topsp{F}})
\end{displaymath}
for any compact spin manifolds $\topsp{M}$, $\topsp{N}$, with vector bundles $\topsp{E}$, $\topsp{F}$ over $\topsp{M}$, $\topsp{N}$ respectively, and by the definition of direct sum of Fredholm modules.\\
To prove that the map homomorphism $\mu^{a}$ does not depend on the bordism relation is equivalent to proving that $\mu^{a}(\topsp{M},\topsp{E},\phi)=0$ for every bord $n$-cycle $(\topsp{M},\topsp{E},\phi)$, i.e. a cycle for which there exists an $n+1$-cycle $(\topsp{W},\topsp{F},\psi)$ in the appropriate relative group such that 
$$(\topsp{M},\topsp{E},\phi)=(\partial {\topsp{W}},\topsp{F}|_{\partial {\topsp{W}}}, \psi|_{\partial {\topsp{W}}})$$
First, we notice that the map $\psi_{\partial \topsp{W}}$ factors as $\psi|_{\partial {\topsp{W}}}=\psi\circ{i}$, where
\begin{displaymath}
i:\partial{\topsp{W}}\hookrightarrow{\topsp{W}}
\end{displaymath}
denotes the inclusion of the boundary. We have that 
\begin{displaymath}
\begin{array}{rl}
\mu^{a}(\topsp{M},\topsp{E},\phi)&=(\psi|_{\partial \topsp{W}})_{*}([\:\nslash{\mathfrak{D}}^{\partial{\topsp{W}}}_{\topsp{F}|_{\partial{\topsp{W}}}}])\\
&=(\psi)_{*}\circ{i_{*}}([\:\nslash{\mathfrak{D}}^{\partial{\topsp{W}}}_{\topsp{F}|_{\partial{\topsp{W}}}}])\\
&=0
\end{array}
\end{displaymath}
which follows by Theorem \ref{connective}, and the long exact sequence for the pair $(\topsp{W},\partial \topsp{W})$
\begin{displaymath}
\cdots\to\KO_{n+1}^{a}(\topsp{W}-\partial {\topsp{W}})\xrightarrow{\partial}\KO_{n}^{a}(\partial \topsp{W})\xrightarrow{i_{*}}\KO_{n}^{a}(\topsp{W})\to\cdots
\end{displaymath}
To conclude the proof, we need to show that for any KO-cycle $(\topsp{M},\topsp{E},\phi)$ and any real spin vector bundle $\topsp{F}$ of rank $8r$ over $\topsp{M}$ we have
\begin{displaymath}
\mu^{a}(\topsp{M},\topsp{E},\phi)=\mu^{a}(\widehat{\topsp{M}},\topsp{H}(\topsp{F})\otimes\pi^{*}\topsp{E},\phi \circ \pi)
\end{displaymath}
By definition, we have
\begin{displaymath}
\mu^{a}(\widehat{\topsp{M}},\topsp{H}(\topsp{F})\otimes\pi^{*}\topsp{E},\phi \circ \pi):=\phi_{*}\pi_{*}\bigl([\:\nslash{\mathfrak{D}}^{\widehat{\topsp{M}}}_{\topsp{H}(\topsp{F})\otimes\pi^{*}\topsp{E}}]\bigr)
\end{displaymath}
By using that
\begin{displaymath}
\begin{array}{rl}
[\:\nslash{\mathfrak{D}}^{\widehat{\topsp{M}}}_{\topsp{H}(\topsp{F})\otimes\pi^{*}\topsp{E}}]&=[\topsp{H}(\topsp{F})\otimes\pi^{*}\topsp{E}]\cap[\:\nslash{\mathfrak{D}}^{\widehat{\topsp{M}}}]\\
&=([\topsp{H}(\topsp{F})]\cup[\pi^{*}\topsp{E}])\cap[\:\nslash{\mathfrak{D}}^{\widehat{\topsp{M}}}]\\
&=\pi^{*}[\topsp{E}]\cap\bigl( [\topsp{H}(\topsp{F})]\cap[\:\nslash{\mathfrak{D}}^{\widehat{\topsp{M}}}]\bigr)
\end{array}
\end{displaymath}
we have
\begin{displaymath}
\begin{array}{rl}
\mu^{a}(\widehat{\topsp{M}},\topsp{H}(\topsp{F})\otimes\pi^{*}\topsp{E},\phi \circ \pi)&=\phi_{*}\pi_{*}\bigl(\pi^{*}[\topsp{E}]\cap\bigl( [\topsp{H}(\topsp{F})]\cap[\:\nslash{\mathfrak{D}}^{\widehat{\topsp{M}}}]\bigr)\bigr)\\
&=\phi_{*}\bigl([\topsp{E}]\cap\pi_{*}\bigl( [\topsp{H}(\topsp{F})]\cap[\:\nslash{\mathfrak{D}}^{\widehat{\topsp{M}}}]\bigr)\bigr)
\end{array}
\end{displaymath}
By recalling that $[\topsp{H}(\topsp{F})]$ is the image of the Thom class of $\topsp{F}$, and by the equation (\ref{Thomdef}), the class $\pi_{*}\bigl( [\topsp{H}(\topsp{F})]\cap[\:\nslash{\mathfrak{D}}^{\widehat{\topsp{M}}}]\bigr)\bigr)$ is the image of $[\:\nslash{\mathfrak{D}}^{\widehat{\topsp{M}}}]$ under the isomorphism
\begin{displaymath}
\KO_{n+8r}(\widehat{\topsp{M}})\to\KO_{n}(\topsp{M})
\end{displaymath}
for the spherical fibration $\widehat{\topsp{M}}\xrightarrow{\pi}\topsp{M}$ induced by the Thom isomorphism. Since the above isomorphism maps the fundamental class of $\widehat{\topsp{M}}$ to that of $\topsp{M}$, we have
\begin{displaymath}
\pi_{*}\bigl( [\topsp{H}(\topsp{F})]\cap[\:\nslash{\mathfrak{D}}^{\widehat{\topsp{M}}}]\bigr)=[\:\nslash{\mathfrak{D}}^{\topsp{M}}]
\end{displaymath}
The following equalities conclude the proof
\begin{displaymath}
\begin{array}{rl}
\mu^{a}(\widehat{\topsp{M}},\topsp{H}(\topsp{F})\otimes\pi^{*}\topsp{E},\phi \circ \pi)&=\phi_{*}\bigl([\topsp{E}]\cap[\:\nslash{\mathfrak{D}}^{\topsp{M}}]\bigr)\\
&=\phi_{*}\bigl([\:\nslash{\mathfrak{D}}_{\topsp{E}}^{\topsp{M}}]\bigr)\\
&=\mu^{a}(\topsp{M},\topsp{E},\phi)
\end{array}
\end{displaymath}
\end{proof}
We refer the reader to \cite{baum-2007-3} for an alternative proof of Theorem \ref{welldefined}.
\subsection{The analytic index map ${\rm ind}_{n}^{\rm a}$\label{indna}}
Let $(\hilR,\rho,\topsp{T})$ be an $n$-graded Fredholm module over the real
$C^*$-algebra $\C(\topsp{X},\real)$ such that $\topsp{T}$ is a Fredholm operator. Since $\topsp{T}$
commutes with $\varepsilon_{i}$ for $i=1,\dots,n$, the kernel $\ker
\topsp{T}\subset\hilR$ is a real $\Cl_{n}$-module with $\zed_2$-grading
induced by the grading of $\hilR$. Thus we can define \beq
\ind_{n}^{\rm a}(\topsp{T}):=\:[\ker \topsp{T}]\in\: 
\widehat{\mathfrak{M}}_{n}/ \imath^{*}\widehat{\mathfrak{M}}_{n+1}
\label{indnaT}\eeq
where $\widehat{\mathfrak{M}}_{n}$ is the Grothendieck group of irreducible Clifford modules defined in chapter 3.\\
By the Atiyah-Bott-Shapiro isomorphism, we have
\begin{displaymath}
\widehat{\mathfrak{M}}_{n}/ \imath^{*}\widehat{\mathfrak{M}}_{n+1}\simeq\KO^{-n}({\rm pt})
\end{displaymath}
We will call (\ref{indnaT}) the \emph{analytic} or \emph{Clifford
index} of the Fredholm operator $\topsp{T}$. An important property of this
definition is the following result \cite{LM}.
\begin{theorem}
The analytic index
\begin{displaymath}
\ind^{\rm a}_{n}\,:\,\Fred_{n}~\longrightarrow~\KO^{-n}(\pt)
\end{displaymath}
is surjective and constant on the connected components of $\Fred_{n}$.
\label{anindthm}\end{theorem}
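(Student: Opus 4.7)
The plan is to prove the two assertions essentially independently: first that $\ind^{\rm a}_n$ is locally constant on $\Fred_n$ (hence constant on path-components), and then that it surjects onto $\KO^{-n}(\pt)\simeq\widehat{\mathfrak{M}}_n/i^*\widehat{\mathfrak{M}}_{n+1}$ via an explicit construction of representatives. Throughout, one uses that for $\topsp{T}\in\Fred_n$ the kernel $\ker\topsp{T}$ is a finite-dimensional graded real $\Cl_n$-module: finite-dimensionality follows from Fredholmness, the grading from oddness of $\topsp{T}$, and the $\Cl_n$-action from the commutation relations with the generators $\varepsilon_1,\dots,\varepsilon_n$.

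For constancy, I would carry out the standard finite-dimensional reduction. Given a norm-continuous path $t\mapsto\topsp{T}_t$ in $\Fred_n$ and $t_0\in[0,1]$, choose $\delta>0$ smaller than every nonzero singular value of $\topsp{T}_{t_0}$ and let $P_{t_0}$ be the spectral projection of $\topsp{T}_{t_0}$ onto $[-\delta,\delta]$, whose range $V$ is finite-dimensional by Fredholmness. Because $P_{t_0}$ is a Borel function of the odd, self-adjoint, $\Cl_n$-linear operator $\topsp{T}_{t_0}$, it is even and $\Cl_n$-linear, so $V$ is a graded $\Cl_n$-submodule of $\hilR$. Norm-continuity makes the spectral gap persist in an open interval about $t_0$, giving a norm-continuous family $P_t$ of projections of constant rank; the standard Kato formula produces a continuous family of even $\Cl_n$-equivariant unitaries $U_t$ with $U_tP_tU_t^{-1}=P_{t_0}$, reducing the local analysis to a continuous family $A_t:=P_{t_0}U_t\topsp{T}_tU_t^{-1}P_{t_0}$ of odd, self-adjoint, $\Cl_n$-linear endomorphisms of the fixed graded $\Cl_n$-module $V$, with $\ker\topsp{T}_t\cong\ker A_t$.

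The crucial algebraic lemma is: if $W$ is a finite-dimensional graded $\Cl_n$-module admitting an odd, self-adjoint, invertible, $\Cl_n$-linear operator $B$, then $[W]\in i^*\widehat{\mathfrak{M}}_{n+1}$. Indeed, the polar part $J:=B\,|B|^{-1}$ satisfies $J^2=\mathrm{id}_W$ and is odd, self-adjoint, and $\Cl_n$-linear; writing $W=W^0\oplus W^1$ so that $J$ has block form with off-diagonal entries $J_0:W^0\to W^1$ and $J_0^*:W^1\to W^0$, the operator
\[
\varepsilon_{n+1}:=\begin{pmatrix}0 & -J_0^*\\ J_0 & 0\end{pmatrix}
\]
is odd, skew-adjoint, squares to $-\mathrm{id}_W$, and anticommutes with each $\varepsilon_i$ — the last fact follows by a direct block computation using exactly that the $\varepsilon_i$'s commute with $J$. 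Thus the $\Cl_n$-action on $W$ extends to a $\Cl_{n+1}$-action, so $[W]\in i^*\widehat{\mathfrak{M}}_{n+1}$. Applying this with $W=(\ker A_t)^\perp\subset V$ and $B=A_t|_W$ (invertible by construction) yields $[\ker A_t]=[V]-[W]\equiv[V]\pmod{i^*\widehat{\mathfrak{M}}_{n+1}}$. Hence $\ind^{\rm a}_n(\topsp{T}_t)$ is locally constant, and compactness of $[0,1]$ promotes this to constancy on every path-component.

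For surjectivity, I would give an explicit construction realizing every class. Fix any finite-dimensional ungraded real $\Cl_n$-module $W_0$ and form the graded $\Cl_n$-module $H:=W_0\oplus W_0$ with grading given by the swap and Clifford action $\varepsilon_i(u,v):=(e_iv,e_iu)$; the swap operator $J_0(u,v):=(v,u)$ is then an odd, self-adjoint, $\Cl_n$-linear involution on $H$. Given a graded $\Cl_n$-module $W$ representing a class in $\widehat{\mathfrak{M}}_n$, set $\hilR:=W\oplus\bigoplus_{k\in\mathbb{N}}H$ with the induced grading and $\Cl_n$-action, and define $\topsp{T}_W$ to vanish on $W$ and act as $J_0$ on each summand $H$. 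Then $\topsp{T}_W\in\Fred_n$ and $\ker\topsp{T}_W=W$, so $\ind^{\rm a}_n(\topsp{T}_W)=[W]$. Since $[W]$ ranges over all of $\widehat{\mathfrak{M}}_n$, the map descends to a surjection onto the ABS quotient $\widehat{\mathfrak{M}}_n/i^*\widehat{\mathfrak{M}}_{n+1}\simeq\KO^{-n}(\pt)$.

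The main obstacle is the Clifford extension lemma underlying the constancy step: in the real category one cannot simply form $\varepsilon_{n+1}=iJ$ as in the complex case, so a genuine skew-adjoint square root of $-\mathrm{id}_W$ must be manufactured from the self-adjoint involution $J$, and one must verify that the resulting $\varepsilon_{n+1}$ anticommutes with the pre-existing generators $\varepsilon_1,\dots,\varepsilon_n$. The sign pattern in the off-diagonal form of $\varepsilon_{n+1}$ is essentially forced by the signature $\varepsilon_{n+1}^2=-\mathrm{id}$, and the anticommutation relations rely crucially on the $\Cl_n$-linearity of $J$ — which is precisely why the natural target of $\ind^{\rm a}_n$ is the ABS quotient $\widehat{\mathfrak{M}}_n/i^*\widehat{\mathfrak{M}}_{n+1}$ rather than $\widehat{\mathfrak{M}}_n$ itself.
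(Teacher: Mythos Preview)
The paper does not prove this theorem; it simply cites it from Lawson--Michelsohn~\cite{LM}. Your sketch is essentially the standard argument found there: local constancy via a spectral-projection reduction to a finite-dimensional graded $\Cl_n$-module, combined with the key algebraic observation that a graded $\Cl_n$-module carrying an odd self-adjoint $\Cl_n$-linear involution acquires an additional Clifford generator and hence lies in $i^*\widehat{\mathfrak{M}}_{n+1}$; surjectivity via an explicit ``zero on $W$, involution on the complement'' operator. So there is no divergence of approach to report --- you have supplied exactly the proof the paper chose to outsource.

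One expository quibble: in your surjectivity paragraph, the phrase ``grading given by the swap'' is misleading. If the grading involution on $H=W_0\oplus W_0$ were literally the swap $(u,v)\mapsto(v,u)$, then both your Clifford generators $\varepsilon_i(u,v)=(e_iv,e_iu)$ and your operator $J_0$ would be \emph{even}, not odd. What you evidently intend is the natural direct-sum grading (first summand even, second summand odd); with that convention all your claimed parity and commutation properties check out. It would be worth rewording that sentence before this is read by anyone unforgiving.
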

Given two Fredholm modules $(\hil_\real,\rho,\topsp{T})$ and $(\hilR,\rho,\topsp{T}'\,)$
over a real $C^*$-algebra $A$, we will say that $\topsp{T}$ is a {\it compact
perturbation} of $\topsp{T}'$ if $(\topsp{T}-\topsp{T}'\,)\,\rho(a)\in\mathcal{K}_\real$ for
all $a\in{A}$. We then have the following elementary result.
\begin{lemma}
If $\topsp{T}$ is a compact perturbation of $\topsp{T}'$, then the Fredholm modules
$(\hil_\real,\rho,\topsp{T})$ and $(\hilR,\rho,\topsp{T}'\,)$ are operator homotopic
over $A$.
\label{lemmacompact}\end{lemma}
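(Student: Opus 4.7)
The plan is to exhibit an explicit straight-line operator homotopy between $T$ and $T'$. Define
\[
T_{s}\;:=\;(1-s)\,T'\;+\;s\,T \qquad (s\in[0,1]),
\]
so that $T_{0}=T'$ and $T_{1}=T$. The map $s\mapsto T_{s}$ is affine and hence norm continuous, and the representation $\rho_{s}:=\rho$ is taken constant, so once the compactness conditions are verified this will visibly be a standard homotopy of the special form required to be an operator homotopy. I should also note at the outset that $T_{s}$ is odd with respect to the $\mathbb{Z}_{2}$-grading and commutes with each Clifford generator $\varepsilon_{i}$ since $T$ and $T'$ do and these properties pass through $\mathbb{R}$-linear combinations.

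Next, for each fixed $s$, I would verify the three compactness conditions of (\ref{TrhoKOarels}) with $\rho(a)$ for every $a\in A$. The commutator condition is immediate, since
\[
\bigl[T_{s}\,,\,\rho(a)\bigr] \;=\; (1-s)\,\bigl[T'\,,\,\rho(a)\bigr] + s\,\bigl[T\,,\,\rho(a)\bigr] \in \mathcal{K}_{\mathbb{R}},
\]
and the self-adjointness defect condition reduces the same way to an $\mathbb{R}$-linear combination of the corresponding compact operators for $T$ and $T'$.

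The main obstacle, and the only step that really uses the compact perturbation hypothesis nontrivially, is the condition $(T_{s}^{2}-1)\rho(a)\in\mathcal{K}_{\mathbb{R}}$. Here I would write
\[
T_{s}^{2}-T^{2} \;=\; (T_{s}-T)\,T_{s}\;+\;T\,(T_{s}-T),
\]
and then argue term by term. Since $T_{s}-T=(1-s)(T'-T)$, the hypothesis that $T$ is a compact perturbation of $T'$ gives $(T_{s}-T)\,\rho(a)\in\mathcal{K}_{\mathbb{R}}$. Multiplying this compact operator by the bounded operator $T$ on the left shows $T\,(T_{s}-T)\,\rho(a)\in\mathcal{K}_{\mathbb{R}}$. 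For the first term I would commute $\rho(a)$ past $T_{s}$:
\[
(T_{s}-T)\,T_{s}\,\rho(a) \;=\; (T_{s}-T)\,\rho(a)\,T_{s}\;+\;(T_{s}-T)\,[T_{s}\,,\,\rho(a)],
\]
both summands of which are compact, the first because $(T_{s}-T)\,\rho(a)$ is compact and $T_{s}$ is bounded, the second because $[T_{s},\rho(a)]$ is compact by the previous paragraph and $T_{s}-T$ is bounded. Combining, $(T_{s}^{2}-T^{2})\,\rho(a)\in\mathcal{K}_{\mathbb{R}}$, so adding $(T^{2}-1)\rho(a)\in\mathcal{K}_{\mathbb{R}}$ yields the desired conclusion. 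Thus each $(\mathcal{H}_{\mathbb{R}},\rho,T_{s})$ is a Fredholm module, the family $\{T_{s}\}$ is norm continuous with constant representation, and this realises the operator homotopy between $(\mathcal{H}_{\mathbb{R}},\rho,T)$ and $(\mathcal{H}_{\mathbb{R}},\rho,T')$ over $A$.
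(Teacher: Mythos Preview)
Your proof is correct and follows essentially the same approach as the paper: both use the straight-line path $T_s=(1-s)T'+sT$, note norm continuity, and verify the three compactness conditions. The only cosmetic difference is in how the quadratic condition is handled---the paper expands $T_t^2-1$ directly as an $\mathbb{R}$-linear combination of $(T^2-1)$, $(T'^{\,2}-1)$ and $(T-T')^2$, whereas you telescope $T_s^2-T^2=(T_s-T)T_s+T(T_s-T)$ and commute $\rho(a)$ past $T_s$; both arguments are valid and of comparable length.
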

\begin{proof}
Consider the path $\topsp{T}_{t}=(1-t)\,\topsp{T}+t\,\topsp{T}'$ for $t\in[0,1]$. Then the map
$t\mapsto{\topsp{T}_{t}}$ is norm continuous. We will show that for any $t\in[0,1]$,
the triple $(\hilR,\rho,\topsp{T}_{t})$ is a Fredholm module over $A$, i.e.
that the operator $\topsp{T}_{t}$ satisfies
\beq
\left(\topsp{T}_{t}^{2}-1\right)\,\rho(a)~,~
\left(\topsp{T}^{\phantom{*}}_{t}-\topsp{T}_{t}^{*}\right)\,\rho(a)~,~
\topsp{T}^{\phantom{*}}_{t}\,\rho(a)-\rho(a)\,\topsp{T}^{\phantom{*}}_{t}\in\mathcal{K}_\real
\label{TtFred}\eeq
for all $a\in A$. The last two inclusions in (\ref{TtFred}) are
easily proven because the path $\topsp{T}_{t}$ is ``linear'' in the operators
$\topsp{T}$ and $\topsp{T}'$. To establish the first one, for any $t\in[0,1]$ and
$a\in A$ we compute
\begin{equation}
\begin{array}{rl}
\left(\topsp{T}_{t}^{2}-1\right)\,\rho(a)=&\Bigl[\left(\topsp{T}^{2}-1\right)+t^{2}\,
\left(\topsp{T}-\topsp{T}'\,\right)^{2}-t\,\left(\topsp{T}^{2}-1\right)\\
&-t\,\left(\topsp{T}-\topsp{T}'\,
\right)^{2}+t\,\left(\topsp{T}^{\prime\,2}-1\right)\Bigr]\,\rho(a) \
. \\
\end{array}
\label{compcompute}\end{equation}
By using the fact that $(\hilR,\rho,\topsp{T})$ and $(\hilR,\rho,\topsp{T}'\,)$ are
Fredholm modules, that $\topsp{T}$ is a compact perturbation of $\topsp{T}'$, and that
$\mathcal{K}_\real$ is an ideal in $\mathcal{L}(\hilR)$, one easily
verifies that the right-hand side of (\ref{compcompute}) is a compact
operator. This implies that $(\hilR,\rho,\topsp{T}_{t})$ is a well-defined
family of Fredholm modules over $A$.
\end{proof}
By Lemma \ref{lemmacompact}, we can choose the operator $\topsp{T}$ in the class $[\hilR,\rho,\topsp{T}]$ to be selfadjoint without loss of generality. Indeed, given a Fredholm module we simply replace the operator $\topsp{T}$ with  $\tilde{\topsp{T}}:=\frac{1}{2}(\topsp{T}+\topsp{T}^{*})$. Moreover, such a choice of $\tilde{\topsp{T}}$ is compatible with operator homotopy. Hence in the following we will always assume that the operator $\topsp{T}$ is selfadjoint and Fredholm. 
\begin{proposition}
The induced map
\begin{displaymath}
\ind^{\rm a}_{n}\,:\,\KO^{\rm a}_{n}(\topsp{X})~\longrightarrow~{\KO^{-n}(\pt)}
\end{displaymath}
given on classes of $n$-graded Fredholm modules by
$$
\ind^{\rm a}_{n}[\hilR,\rho,\topsp{T}]=[\ker \topsp{T}]
$$
is a well-defined surjective homomorphism for any $n\in\nat$.
\end{proposition}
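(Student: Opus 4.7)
The plan is to verify well-definedness with respect to each generating relation of stable operator homotopy, establish additivity, and finally exhibit explicit Fredholm modules realizing every class in $\KO^{-n}(\pt)$. By Lemma \ref{lemmacompact} and the remark immediately preceding the statement, in each homotopy class one may choose a representative $(\hilR,\rho,\topsp{T})$ with $\topsp{T}$ self-adjoint and Fredholm; replacing $\topsp{T}$ by $\tilde\topsp{T}=\tfrac{1}{2}(\topsp{T}+\topsp{T}^{*})$ if necessary is a compact perturbation, so it preserves the class, commutes with the Clifford generators $\varepsilon_{i}$, and is odd for the grading. Consequently $\ker\topsp{T}$ inherits the structure of a finite-dimensional $\mathbb{Z}_{2}$-graded $\Cl_{n}$-module, and the right-hand side of (\ref{indnaT}) makes sense. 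Invariance under orthogonal equivalence is then immediate, since an intertwining isometry $U$ restricts to a grading- and $\Cl_{n}$-preserving isomorphism $U|_{\ker\topsp{T}}\colon\ker\topsp{T}\xrightarrow{\simeq}\ker\topsp{T}'$; additivity under direct sum follows from $\ker(\topsp{T}\oplus\topsp{T}')=\ker\topsp{T}\oplus\ker\topsp{T}'$ as graded $\Cl_{n}$-modules.

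The conceptual heart of the proof is operator homotopy invariance, which I would deduce directly from Theorem \ref{anindthm}: a norm-continuous path $t\mapsto\topsp{T}_{t}$ inside $\Fred_{n}$ lies in a single connected component, on which the Clifford index is constant, so the class $[\ker\topsp{T}_{t}]$ in $\widehat{\mathfrak{M}}_{n}/\imath^{*}\widehat{\mathfrak{M}}_{n+1}$ does not depend on $t$. Since the representation $\rho$ is fixed along an operator homotopy and plays no role in the definition of $\ind^{\rm a}_{n}$, the map therefore descends to the set of operator homotopy classes. To upgrade operator homotopy to stable operator homotopy, I would then check that every degenerate module has index zero: for $(\hilR,\rho,\topsp{T})\in{\sf D}(\C(\topsp{X},\real),\Cl_{n})$ the earlier result (recalled just after Definition \ref{homotopydef}) provides a homotopy to the zero module, and combining this with the operator homotopy invariance just established gives $\ind^{\rm a}_{n}[\hilR,\rho,\topsp{T}]=0$. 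Together with additivity this proves that $\ind^{\rm a}_{n}$ is a well-defined group homomorphism on $\KO^{\rm a}_{n}(\topsp{X})$.

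Surjectivity is then essentially free. Theorem \ref{anindthm} already states that $\ind^{\rm a}_{n}\colon\Fred_{n}\to\KO^{-n}(\pt)$ is surjective; concretely, by the Atiyah--Bott--Shapiro isomorphism every class in $\KO^{-n}(\pt)\simeq\widehat{\mathfrak{M}}_{n}/\imath^{*}\widehat{\mathfrak{M}}_{n+1}$ is represented by a graded $\Cl_{n}$-module $W$, and one constructs an $n$-graded Fredholm module realising it by embedding $W\hookrightarrow\hilR$, taking $\topsp{T}$ to be any odd self-adjoint $\Cl_{n}$-linear Fredholm operator on $\hilR$ with $\ker\topsp{T}=W$, and letting $\rho\colon\C(\topsp{X},\real)\to\mathcal{L}(\hilR)$ be the evaluation-at-a-point representation. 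The resulting Fredholm module clearly satisfies $\ind^{\rm a}_{n}[\hilR,\rho,\topsp{T}]=[W]$.

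The main obstacle in this argument is the operator homotopy step: it is essential that the quotient by $\imath^{*}\widehat{\mathfrak{M}}_{n+1}$ is exactly what absorbs the jumps of $\ker\topsp{T}_{t}$ that occur at spectral crossings along a norm-continuous path in $\Fred_{n}$, since a small spectral subspace at the crossing acquires the additional $\Cl_{n+1}$-generator supplied by $\topsp{T}_{t}/|\topsp{T}_{t}|$. Once this is packaged into Theorem \ref{anindthm}, every remaining verification is essentially formal, so the bulk of the argument is really a careful translation of the classifying-space statement for $\Fred_{n}$ into the language of Kasparov's $(A,\Cl_{n})$-modules.
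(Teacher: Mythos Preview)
Your proof is correct and follows essentially the same route as the paper: additivity from $\ker(\topsp{T}\oplus\topsp{T}')=\ker\topsp{T}\oplus\ker\topsp{T}'$, orthogonal-equivalence invariance via restriction of the intertwining isometry, and operator-homotopy invariance by appealing directly to Theorem~\ref{anindthm}. The paper's proof is almost verbatim your first two paragraphs, and it simply omits any explicit argument for surjectivity (which, as you note, is immediate from Theorem~\ref{anindthm}).

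One remark: your extra step ``upgrading to stable operator homotopy'' by checking that degenerate modules have index zero is unnecessary in the paper's setup, and your argument for it has a small gap. In the paper's description of $\KO^{\rm a}_{n}(\topsp{X})$ via $n$-graded Fredholm modules, the equivalence relation $\sim$ is generated only by orthogonal equivalence and operator homotopy (the two bullet points preceding the definition of $\KO^{n}(A)$), together with the direct-sum relation imposed afterward; degenerate modules and stable operator homotopy enter only in the general Kasparov picture, whose equivalence with the Fredholm-module description is quoted from~\cite{blackadar,Higson}. Moreover, the homotopy of a degenerate module to zero recalled after Definition~\ref{homotopydef} is a Kasparov homotopy, not an operator homotopy, so you cannot feed it into the operator-homotopy invariance you just proved. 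Simply drop that paragraph and your argument matches the paper's.
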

\begin{proof}
We first show that to the direct sum of two Fredholm modules
$(\hilR^{\phantom{\prime}},\rho,\topsp{T})$ and $(\mathcal{H}_\real',\rho',\topsp{T}'\,)$ over
$A=\C(\topsp{X},\real)$, the map $\ind^{\rm a}_n$ associates the class $[\ker
\topsp{T}]+[\ker \topsp{T}'\,]\in\widehat{\mathfrak{M}}_{n}/\imath^{*}
\widehat{\mathfrak{M}}_{n+1}\simeq\KO^{-n}(\pt)$. The kernel
\begin{displaymath}
\ker(\topsp{T}\oplus{\topsp{T}'}\,)=\ker(\topsp{T})\oplus{\ker(\topsp{T}'\,)}
\end{displaymath}
is a real graded $\Cl_{n}$-module. By the definition of the group
$\widehat{\mathfrak{M}}_{n}$ and of its quotient by
$\imath^{*}\widehat{\mathfrak{M}}_{n+1}$, one thus has $\ind_n^{\rm
  a}(\topsp{T}\oplus \topsp{T}'\,)=[\ker \topsp{T}]+[\ker \topsp{T}'\,]$ and so the map $\ind^{\rm
  a}_{n}$ respects the algebraic structure on $\Gamma{\rm O}_{n}(A)$.

Consider now two Fredholm modules $(\hilR^{\phantom{\prime}},\rho,\topsp{T})$ and
$(\mathcal{H}_\real',\rho',\topsp{T}'\,)$ which are orthogonally
equivalent. Then there exists an even isometry
$U:\hilR^{\phantom{\prime}}\to\mathcal{H}_\real'$ such that
$$
 \topsp{T}'=U\,\topsp{T}\,U^* \ , \quad
\varepsilon'_{i}=U\,\varepsilon^{\phantom{\prime}}_{i}\,U^* \ .
$$
This implies that $\ker \topsp{T}'=U(\ker \topsp{T})$, and that the graded $\Cl_{n}$
representations given respectively by $\varepsilon'_{i}$ and
$\varepsilon^{\phantom{\prime}}_{i}$ are equivalent. In particular,
they represent the same class in
$\widehat{\mathfrak{M}}_{n}/\imath^{*}\widehat{\mathfrak{M}}_{n+1}$.

Finally, consider two homotopic $n$-graded
Fredholm modules $(\hilR,\rho,\topsp{T})$, $(\hilR,\rho,\topsp{T}'\,)$ over $A$. There exists by definition a continuous path $t\to{\topsp{T}_{t}}$ connecting $\topsp{T}$ and $\topsp{T}'$ in ${\rm Fred}_{n}$. Hence, by Theorem \ref{anindthm} ${\rm ind}^{a}_{n}(\topsp{T})={\rm ind}^{a}_{n}(\topsp{T}')$.
\end{proof}
\subsection{The topological index map ${\rm ind}_{n}^{\rm t}$\label{indnt}}
Given a KO-cycle $(\topsp{M},\topsp{E},\phi)$ on $\topsp{X}$ with $\topsp{M}$ an $n$-dimensional
compact spin manifold, we can assign to it the
\textit{associated Atiyah-Milnor-Singer (AMS) invariant}~\cite{LM} defined by
\beq
\widehat{\mathcal{A}}_{\topsp{E}}(\topsp{M})=\beta\circ{f_{!}}([\topsp{E}])\in{\KO^{-n}(\pt)}
\label{AMSinvdef}\eeq
where $f_{!}$ is the Gysin homomorphism for the smooth embedding
\begin{equation}\label{embeddmap}
f:\topsp{M}\hookrightarrow{\mathbb{R}^{n+8k}}
\end{equation}
for some $k\in\mathbb{N}$ sufficiently large, and $\beta$ is induced by the Bott periodicity isomorphism
\begin{displaymath}
\beta:\KO_{\rm cpt}(\mathbb{R}^{n+8k})\to\KO^{-n}(\rm pt)
\end{displaymath}
This definition does not depend on the embedding (\ref{embeddmap}) nor on the integer $k$.\\
We define
\beq
\ind^{\rm t}_{n}(\topsp{M},\topsp{E},\phi):=\widehat{\mathcal{A}}_{\topsp{E}}(\topsp{M}) \ .
\label{indntdef}\eeq

\begin{proposition}
The map
\begin{displaymath}
\ind^{\rm t}_{n}\,:\,\KO^{\rm t}_{n}(\topsp{X})~\longrightarrow~{\KO^{-n}}(\pt)
\end{displaymath}
induced by (\ref{indntdef}) is a well-defined surjective homomorphism
for any $n\in\nat$.
\end{proposition}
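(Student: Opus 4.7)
The plan is to verify that the assignment $\ind^{\rm t}_{n}(\topsp{M},\topsp{E},\phi) := \widehat{\mathcal{A}}_{\topsp{E}}(\topsp{M})$ descends from $\Gamma {\rm O}_n(\topsp{X})$ through each of the four equivalence relations defining $\KO^{\rm t}_n(\topsp{X})$, and then establish surjectivity by exhibiting explicit cycles over a point whose AMS invariants generate $\KO^{-n}(\pt)$. Note first that the AMS invariant is independent of both the map $\phi$ and of the point $\topsp{X}$, so it really defines a map from $\KO^{\rm t}_n(\pt)$ which we compose with $\zeta_\ast$ to get a map out of $\KO^{\rm t}_n(\topsp{X})$; this is compatible with naturality and reduces the problem to $\topsp{X} = \pt$. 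Invariance under isomorphism of KO-cycles is immediate since both the spin diffeomorphism type of $\topsp{M}$ and the KO-theory class $[\topsp{E}]\in\KO^0(\topsp{M})$ are preserved. The direct sum relation is respected because $f_!$ is a group homomorphism and $[\topsp{E}_1 \oplus \topsp{E}_2] = [\topsp{E}_1] + [\topsp{E}_2]$; together with the fact that a disjoint union embedding decomposes the Gysin map additively, this simultaneously yields the homomorphism property with respect to the group operation induced by disjoint union.

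The crucial step is spin bordism invariance. Given a bord cycle $(\topsp{W},\topsp{F},\psi)$ with $\partial \topsp{W} = \topsp{M}_1 \amalg (-\topsp{M}_2)$, choose a smooth proper embedding $g : \topsp{W} \hookrightarrow \real^{n+8k}\times [0,1]$ with $g(\partial\topsp{W})\subset \real^{n+8k}\times\{0,1\}$, in such a way that $g$ restricts on the boundary to the collar embedding $f_i : \topsp{M}_i \hookrightarrow \real^{n+8k}$. Then functoriality of the Gysin homomorphism together with compatibility with the boundary homomorphism $\partial : \KO^{-n-1}_{\rm cpt}(\real^{n+8k}\times(0,1)) \to \KO^{-n}_{\rm cpt}(\real^{n+8k}\times\{0,1\})$ in the long exact sequence for the pair $(\real^{n+8k}\times[0,1],\real^{n+8k}\times\{0,1\})$ implies that $(f_1)_![\topsp{F}|_{\topsp{M}_1}] = (f_2)_![\topsp{F}|_{\topsp{M}_2}]$ in $\KO^{-n}_{\rm cpt}(\real^{n+8k})$; applying Bott periodicity gives $\widehat{\mathcal{A}}_{\topsp{F}|_{\topsp{M}_1}}(\topsp{M}_1) = \widehat{\mathcal{A}}_{\topsp{F}|_{\topsp{M}_2}}(\topsp{M}_2)$. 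For real vector bundle modification $(\topsp{M},\topsp{E}) \leadsto (\widehat{\topsp{M}}, \topsp{H}(\topsp{F})\otimes\pi^*\topsp{E})$, the identity (\ref{secstable}) gives $\Sigma^{\topsp{F}}_![\topsp{E}] = [\topsp{H}(\topsp{F})\otimes\pi^*\topsp{E}]$, so functoriality of Gysin maps applied to the composition of the zero-section embedding $\Sigma^{\topsp{F}} : \topsp{M} \hookrightarrow \widehat{\topsp{M}}$ with any embedding $\widehat{\topsp{M}} \hookrightarrow \real^{n+8r+8k}$ shows that the two AMS invariants agree modulo a Bott periodicity identification, which is absorbed by the definition of $\beta$.

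For surjectivity we produce generators fibre-by-fibre. Taking $\topsp{M} = \pt$ with the trivial real line bundle gives the generator $1 \in \KO^0(\pt) \simeq \zed$; the image in $\KO^{-8}(\pt)\simeq\zed$ of the fundamental class of any closed spin manifold of dimension $8$ with $\widehat{A}$-genus equal to one (e.g.\ a Bott manifold) gives the Bott generator. For the torsion classes $\eta \in \KO^{-1}(\pt)\simeq\zed_2$ and $\eta^2\in\KO^{-2}(\pt)\simeq\zed_2$ one uses the circle $\topsp{S}^1$ and the torus $\topsp{S}^1\times\topsp{S}^1$, each equipped with the non-bounding (``periodic'') spin structure and the trivial line bundle; these represent the generators precisely because the non-bounding spin structure realises $\eta$ under the Atiyah--Bott--Shapiro isomorphism. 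Together with the multiplicative structure of $\KO^{-\ast}(\pt)$ this covers all degrees $n = 0,\ldots,7 \bmod 8$.

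The main obstacle is the spin-bordism step: one must verify that the Gysin construction for an embedding of a manifold with boundary is compatible, in the appropriate ${\rm spin}$-oriented sense, with the boundary homomorphism in the six-term exact sequence of $\KO$-theory with compact supports. Once this compatibility is established the remaining verifications are essentially formal consequences of the functoriality of $f_!$ and the Thom isomorphism in $\KO$-theory.
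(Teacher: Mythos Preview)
Your verification of well-definedness follows the paper closely: additivity under disjoint union via splitting of the Thom class of the normal bundle, vector bundle modification via $\Sigma_!^{\topsp{F}}[\topsp{E}]=[\topsp{H}(\topsp{F})\otimes\pi^*\topsp{E}]$ and functoriality of the Gysin map (the paper's argument is essentially word-for-word yours), and spin bordism invariance. For the last point the paper simply invokes the known fact that the AMS invariant is a spin cobordism invariant, citing Lawson--Michelsohn, whereas you sketch the underlying mechanism via an embedded bordism and the boundary homomorphism; this is correct in spirit and the acknowledged compatibility does hold.

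There is, however, a gap in your surjectivity argument at $n\equiv 4\bmod 8$. You produce generators in degrees $0,1,2,8$ and appeal to the ring structure of $\KO^{-*}(\pt)$ for the rest, but $\KO^{-4}(\pt)\simeq\zed$ is generated by $y$, which is \emph{not} expressible in terms of $\eta$ and $x$: the relations $y^2=4x$ and $\eta y=0$ go the wrong way. You must supply a closed spin $4$-manifold with AMS invariant $\pm 1$; since in this degree $\ind_4^{\rm t}(\topsp{M},\topsp{E},\phi)=\tfrac12\langle\ch(\topsp{E}_\complex)\cup\widehat{A}(\topsp{TM}),[\topsp{M}]\rangle$, a K3 surface with trivial line bundle (for which $\widehat{A}[\mathrm{K3}]=2$) does the job.
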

\begin{proof}
We first prove that the map $\ind^{\rm t}_{n}$ respects the algebraic
structure on the abelian group $\KO^{\rm t}_{n}(\topsp{X})$. Given two
$n$-dimensional compact spin manifolds $\topsp{M}_{1}$ and $\topsp{M}_{2}$, let
$\topsp{M}=\topsp{M}_{1}\amalg{\topsp{M}_{2}}$. Embed $\topsp{M}$ in the Euclidean space $\mathbb{R}^{n+8k}$ for some
$k$ sufficiently large as in (\ref{embeddmap}). Recall now that the Gysin homomorphism of $f$ is the composition of Thom isomorphism with respect to the normal bundle of $\topsp{M}$ in $\mathbb{R}^{n+8k}$ with the map ``extending by zero'', and that the Thom isomorphism is the isomorphism induced by the Thom class. Moreover, notice that the normal bundle $\nu$ to the embedding of $\topsp{M}$ is given by $\nu_{1}\amalg\nu_{2}$, where $\nu_{1}$ and $\nu_{2}$ are respectively the normal bundles to the embeddings of $\topsp{M}_{1}$ and $\topsp{M}_{2}$ induced by the embedding of $\topsp{M}$. The Thom class of $\nu$ is given by
\begin{eqnarray*}
\tau_{\nu}&:=&\left[\varpi^{*}\,\nslash{S}^{+}(\nu)\,,\,
\varpi^{*}\,\nslash{S}^{-}(\nu)\,;\,\sigma\right]\\
&=&\left[\varpi_{1}^{*}\,\nslash{S}^{+}(\nu_{1})\amalg\varpi_{2}^{*}\,
\nslash{S}^{+}(\nu_{2})\,,\,\varpi_{1}^{*}\,\nslash{S}^{-}(\nu_{1})
\amalg\varpi_{2}^{*}\,\nslash{S}^-(\nu_{2})\,;\,\sigma_{1}
\amalg\sigma_{2}\right]\\
&=&\tau_{\nu_{1}}+\tau_{\nu_{2}}\in\:\KO^0(\topsp{B(\nu)},\topsp{S(\nu)})\simeq
\KO^0(\topsp{B(\nu_{1})},\topsp{S(\nu_{1})})\oplus{}\KO^0(\topsp{B(\nu_{1})},\topsp{S(\nu_{2})}) \ .
\end{eqnarray*}
where $\varpi=\varpi_{1}\amalg\varpi_{2}:\nu_{1}\amalg{\nu_{2}}\to
\topsp{M}_{1}\amalg{\topsp{M}_{2}}$ is the normal bundle projection.
Let $\topsp{E}_{1}$ and $\topsp{E}_{2}$ be real vector bundles over $\topsp{M}_{1}$ and
$\topsp{M}_{2}$, respectively, and let $\topsp{E}=\topsp{E}_{1}\amalg{\topsp{E}_{2}}$. Then in
$\KO^0(\topsp{B(\nu)},\topsp{S(\nu)})$ one has
\begin{eqnarray*}
\tau_{\nu}(\topsp{E})&=&\tau_{\nu}\,\smile\,[\varpi^{*}\topsp{E}]\\
&=&\tau_{\nu}\,\smile\,[\varpi_{1}^{*}\topsp{E}_1\amalg\varpi_{2}^{*}\topsp{E}_2]\\
&=&\tau_{\nu_{1}}\,\smile\,[\varpi_{1}^{*}\topsp{E}_{1}]+\tau_{\nu_{2}}\,\smile\,
[\varpi_{2}^{*}\topsp{E}_{2}]~=~\tau_{\nu_{1}}(\topsp{E}_{1})+\tau_{\nu_{2}}(\topsp{E}_{2}) \ .
\end{eqnarray*}
Since the map extending by zero is a homomorphism, one then finds
\begin{eqnarray*}
\ind^{\rm t}_{n}\bigl((\topsp{M}_{1},\topsp{E}_{1},\phi_{1})\amalg(\topsp{M}_{2},\topsp{E}_{2},\phi_{2})\bigr)
&:=&\widehat{\mathcal{A}}_{\topsp{E}}(\topsp{M})\\
&=&\widehat{\mathcal{A}}_{\topsp{E}_{1}}(\topsp{M}_{1})+\widehat{\mathcal{A}}_{\topsp{E}_{2}}(\topsp{M}_{2})\\
&=&\ind^{\rm t}_{n}(\topsp{M}_{1},\topsp{E}_{1},\phi_{1})+\ind^{\rm
  t}_{n}(\topsp{M}_{2},\topsp{E}_{2},\phi_{2})\in{\KO^{-n}(\pt)} \ ,
\end{eqnarray*}
showing that ${\rm ind^{t}}_{n}$ respects the algebraic sum of cycles.\\
Next we have to check that the map $\ind^{\rm t}_{n}$ is independent
of the choice of representative of a homology class in $\KO^{\rm
  t}_{n}(\topsp{X})$. The independence of the direct sum relation follows
from the discussion above, while spin bordism independence is
guaranteed by the property that the AMS invariant
$\widehat{\mathcal{A}}_{\topsp{E}}(\topsp{M})$ is a spin cobordism
invariant \cite{LM}. Finally, we have to verify that the map
$\ind^{\rm t}_{n}$ does not depend on real vector bundle
modification. Let $\topsp{M}$ be a smooth $n$-dimensional compact spin manifold and let $\topsp{E}\to
\topsp{M}$ be a smooth real vector bundle. Let $\topsp{F}$ be a real spin
vector bundle over $\topsp{M}$ with fibres of real dimension
$8l$ for some $l\in\nat$. Consider the corresponding sphere bundle
(\ref{hatMdef}) with projection (\ref{pidef}). Real
vector bundle modification of a KO-cycle $(\topsp{M},\topsp{E},\phi)$ on $\topsp{X}$ induced
by $\topsp{F}$ produces the KO-cycle
$(\,\widehat{\topsp{M}},\widehat{\topsp{E}},\phi\circ{\pi})$, where
$\widehat{\topsp{E}}=\topsp{H}(\topsp{F})\otimes\pi^*(\topsp{E})$ is the real vector bundle over
$\widehat{\topsp{M}}$ such that
\begin{displaymath}
\big[\,\widehat{\topsp{E}}\,\big]=\Sigma_{!}^\topsp{F}\big[\topsp{E}\big]
\end{displaymath}
with $[\topsp{E}]\in{}\KO^0(\topsp{M})$, $[\,\widehat{\topsp{E}}\,]\in{\KO^0(\,\widehat{\topsp{M}}\,)}$, and $\Sigma^{\rm F}$ defined as in (\ref{secembind}). We may compute the AMS
invariant for the pair $(\,\widehat{\topsp{M}}\,,\,\widehat{\topsp{E}}\,)$ by choosing
an embedding
\begin{displaymath}
\widehat{f}\,:\,\widehat{\topsp{M}}~\hookrightarrow~\mathbb{R}^{n+8k+8l}
\end{displaymath}
so that
\begin{eqnarray*}
\widehat{\mathcal{A}}_{\widehat{\topsp{E}}}\big(\,\widehat{\topsp{M}}\,\big)&=&
\beta\circ\widehat{f}_{!}\bigl([\,\widehat{\topsp{E}}\,]\bigr)\\
&=&\beta\circ\widehat{f}_{!}\circ\Sigma^\topsp{F}_{!}[\topsp{E}]~=~
\beta\circ\big(\,\widehat{f}\circ\Sigma^\topsp{F}\,\big)_{!}[\topsp{E}] \ ,
\end{eqnarray*}
where in the last equality we have used functoriality of the Gysin
map. Notice that
\begin{displaymath}
\widehat{f}\circ{\Sigma^\topsp{F}}\,:\,\topsp{M}~\hookrightarrow~
\mathbb{R}^{n+8k+8l}=:\mathbb{R}^{n+8m}
\end{displaymath}
is an embedding of $\topsp{M}$ into a ``large enough'' Euclidean space. Since
$\widehat{\mathcal{A}}_{\topsp{E}}(\topsp{M})$ is independent of the embedding and
the integer $m$, we have
\begin{displaymath}
\widehat{\mathcal{A}}_{\widehat{\topsp{E}}}\bigl(\,\widehat{\topsp{M}}\,\bigr)=
\widehat{\mathcal{A}}_{\topsp{E}}\bigl(\topsp{M}\bigr)
\end{displaymath}
as required.
\end{proof}
\subsection{The Isomorphism Theorem}
We can now assemble the constructions of the previous sections to finally establish our main
result. Notice first of all that since
$\ker~\nslash{\mathfrak{D}}^{{\topsp{M}}}_\topsp{E}\simeq\ker T^{\topsp{M}}_{\topsp{E}}$, one
has
\beq
\ind^{\rm a}_{n}\circ\mu^{\rm a}
\big(\topsp{M}\,,\,\topsp{E}\,,\,\phi\big)=\ind^{\rm a}_{n}
\big(\,\nslash{\mathfrak{D}}^\topsp{M}_\topsp{E}\big)
\label{indmuarel}\eeq
for any KO-cycle $(\topsp{M},\topsp{E},\phi)$ on $\topsp{X}$ with $\dim \topsp{M}=n$.
At this point we can use an important result from spin
geometry called the \textit{$\Cl_{n}$-index theorem}~\cite{LM}.
\begin{theorem}
Let $\topsp{M}$ be a compact spin manifold of dimension $n$ and let $\topsp{E}$ be a
real vector bundle over $\topsp{M}$. Let
\begin{displaymath}
\,\nslash{\mathfrak{D}}^\topsp{M}_{\topsp{E}}\,:\,\C^\infty\bigl(\topsp{M}\,,\,\,
\nslash{\mathfrak{S}}(\topsp{M})\otimes{\topsp{E}}\bigr)~\longrightarrow~
\C^\infty\bigl(\topsp{M}\,,\,\,\nslash{\mathfrak{S}}(\topsp{M})\otimes{\topsp{E}}\bigr)
\end{displaymath}
be the $\Cl_{n}$-linear Atiyah-Singer operator with coefficients in
$\topsp{E}$. Then
\begin{displaymath}
\ind^{\rm
  a}_{n}\big(\,\nslash{\mathfrak{D}}^\topsp{M}_{\topsp{E}}\big)=
\widehat{\mathcal{A}}_{\topsp{E}}\big(\topsp{M}\big) \ .
\end{displaymath}
\label{spinthm}\end{theorem}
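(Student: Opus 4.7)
The plan is to follow the classical embedding strategy of Atiyah-Singer, adapted to the $\Cl_n$-linear setting. The key analytic tool is the multiplicativity of the Clifford index under proper embeddings of spin manifolds whose normal bundle is spin, which reduces the computation to the flat Euclidean case where the index coincides with the Bott periodicity isomorphism. Crucially, this proof must be independent of the geometric/analytic equivalence $\mu^{\rm a}$ proved via diagram (\ref{muadiagmain}), since the commutativity of that diagram is precisely what we are trying to establish.

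First I would embed $\topsp{M}$ as a submanifold $f:\topsp{M}\hookrightarrow\mathbb{R}^{n+8k}$ for $k$ sufficiently large, with normal bundle $\nu$. Since both $\topsp{M}$ and $\mathbb{R}^{n+8k}$ are spin, $\nu$ inherits a canonical spin structure, so the Gysin map $f_!:\KO^{0}(\topsp{M})\to\KO^{0}_{\rm cpt}(\mathbb{R}^{n+8k})$ is defined via the $\KO$-theoretic Thom isomorphism associated with $\nu$. The AMS invariant is then by definition $\widehat{\mathcal{A}}_\topsp{E}(\topsp{M})=\beta\circ f_!([\topsp{E}])$. The heart of the proof is the \emph{direct image formula} for the analytic Clifford index: for such an embedding one has
\begin{equation}
\ind^{\rm a}_{n}\bigl(\nslash{\mathfrak{D}}^\topsp{M}_{\topsp{E}}\bigr)
= \ind^{\rm a}_{n+8k}\bigl(\nslash{\mathfrak{D}}^{\mathbb{R}^{n+8k}}_{f_!\topsp{E}}\bigr),
\label{directim}
\end{equation}
where on the right we interpret $f_!\topsp{E}\in\KO^{0}_{\rm cpt}(\mathbb{R}^{n+8k})$ as a formal difference of vector bundles supported in a tubular neighbourhood of $\topsp{M}$, and twist the flat $\Cl_{n+8k}$-linear Dirac operator accordingly.

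The identity (\ref{directim}) is the main obstacle and is the analytic core of the theorem. I would establish it as follows: using a tubular neighbourhood $\topsp{N}\simeq\B(\nu)$ of $\topsp{M}$, construct on sections of $\pi^{*}\bigl(\,\nslash{\mathfrak{S}}(\topsp{M})\otimes\topsp{E}\bigr)\otimes\nslash{\mathfrak{S}}(\nu)$ a family of operators that interpolates between a harmonic-oscillator-type deformation of $\nslash{\mathfrak{D}}^\topsp{M}_\topsp{E}$ along the normal directions (using Clifford multiplication by the normal coordinate, in the spirit of the ABS construction) and the restriction of $\nslash{\mathfrak{D}}^{\mathbb{R}^{n+8k}}_{f_!\topsp{E}}$ to $\topsp{N}$. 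A homotopy invariance argument for the analytic Clifford index under such compactly supported perturbations, combined with the fact that outside $\topsp{N}$ the operator has trivial kernel (after appropriate Witten-type deformation), gives (\ref{directim}). Equivalently, one may phrase this as the multiplicativity of the index under the Kasparov intersection product with the Thom class of $\nu$.

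Once (\ref{directim}) is available, the proof is concluded by the flat case: on $\mathbb{R}^{n+8k}$ the $\Cl_{n+8k}$-linear Dirac operator twisted by a compactly supported virtual bundle $[V]\in\KO^{0}_{\rm cpt}(\mathbb{R}^{n+8k})$ realises the Bott periodicity isomorphism at the level of kernels, i.e.
\begin{equation}
\ind^{\rm a}_{n+8k}\bigl(\nslash{\mathfrak{D}}^{\mathbb{R}^{n+8k}}_{V}\bigr)=\beta([V])\in\KO^{-n}(\pt).
\end{equation}
This is a direct computation: the kernel of the deformed flat Dirac operator is one-dimensional per ABS generator, and the induced $\Cl_n$-module structure on the total kernel is exactly the one representing $\beta([V])$ under the isomorphism $\widehat{\mathfrak{M}}_n/\imath^{*}\widehat{\mathfrak{M}}_{n+1}\simeq\KO^{-n}(\pt)$. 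Applying this with $V=f_!\topsp{E}$ and combining with (\ref{directim}) yields
\begin{equation}
\ind^{\rm a}_{n}\bigl(\nslash{\mathfrak{D}}^\topsp{M}_{\topsp{E}}\bigr)=\beta\bigl(f_!([\topsp{E}])\bigr)=\widehat{\mathcal{A}}_{\topsp{E}}(\topsp{M}),
\end{equation}
which is the desired identity. The independence of the AMS invariant from the choice of embedding and from $k$, already recorded in Section \ref{indnt}, ensures that both sides are well-defined invariants of the KO-cycle $(\topsp{M},\topsp{E},\phi)$.
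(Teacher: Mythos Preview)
The paper does not actually prove this theorem: it is stated as the ``$\Cl_n$-index theorem'' and cited from~\cite{LM} (Lawson--Michelsohn, \emph{Spin Geometry}), then used as a black box to establish the commutativity of diagram~(\ref{muadiagmain}). So there is no proof in the paper to compare against.

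Your sketch is essentially the standard proof as given in~\cite{LM}, Chapter~III, \S16 (see also the original Atiyah--Singer papers): embed, use multiplicativity/excision of the Clifford index to reduce to the normal bundle, and then compute the flat case where the twisted $\Cl_{n+8k}$-linear Dirac operator on $\mathbb{R}^{n+8k}$ realises Bott periodicity at the level of kernels. You correctly flag that the argument must not presuppose the equivalence $\mu^{\rm a}$, since the thesis uses this theorem precisely to establish that equivalence. The one place where your outline is thin is the direct image formula~(\ref{directim}): the harmonic-oscillator deformation argument you allude to is correct in spirit, but making it precise in the $\Cl_n$-linear setting (where the index lands in a torsion group for $n\equiv1,2~{\rm mod}~8$ and no local formula is available) requires some care---this is handled in~\cite{LM} via the sharp product of Clifford-linear operators and the multiplicativity axiom for the topological index, rather than by a direct analytic deformation. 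If you wanted to flesh this out, that is where the work lies.
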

\noindent
The proof of Theorem~\ref{isomorphism} is now completed once we establish the
following result.
\begin{proposition}
The map
\begin{displaymath}
\mu^{\rm a}\,:\,\KO^{\rm t}_{n}(\pt)~\longrightarrow~{\KO^{\rm a}_{n}(\pt)}
\end{displaymath}
is an isomorphism for any $n\in\nat$.
\end{proposition}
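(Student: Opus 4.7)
The plan is to verify the commutativity of diagram (\ref{muadiagmain}) and then exploit the arithmetic of the groups $\KO^{-n}(\pt)$ to conclude that $\mu^{\rm a}$ is an isomorphism at a point.

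First I would establish the commutativity of (\ref{muadiagmain}). Given a KO-cycle $(\topsp{M},\topsp{E},\phi)$ representing a class in $\KO^{\rm t}_n(\pt)$ (so $\phi$ is the constant map), we have from (\ref{indmuarel}) that
\begin{displaymath}
\ind^{\rm a}_n\circ\mu^{\rm a}\bigl[\topsp{M},\topsp{E},\phi\bigr]=\ind^{\rm a}_n\bigl(\,\nslash{\mathfrak{D}}^\topsp{M}_\topsp{E}\bigr).
\end{displaymath}
By the $\Cl_n$-index theorem (Theorem~\ref{spinthm}) the right-hand side equals the Atiyah--Milnor--Singer invariant $\widehat{\mathcal{A}}_\topsp{E}(\topsp{M})$, and this in turn is by definition $\ind^{\rm t}_n[\topsp{M},\topsp{E},\phi]$ in (\ref{indntdef}). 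Thus $\ind^{\rm a}_n\circ\mu^{\rm a}=\ind^{\rm t}_n$, as required.

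Next I would identify the source and target abstractly. By Theorem~\ref{specequivthm}, geometric KO-homology agrees with spectral KO-homology, so $\KO^{\rm t}_n(\pt)\simeq\widetilde{\KO}{}^{0}(\topsp{S}^n)\simeq\KO^{-n}(\pt)$, and by Kasparov's theorem (Theorem~\ref{kasparov}) one likewise has $\KO^{\rm a}_n(\pt)\simeq\KO^{-n}(\pt)$. Hence both the domain and codomain of $\mu^{\rm a}$ are (abstractly) copies of $\KO^{-n}(\pt)$, which by the Bott periodicity theorem (Theorem~\ref{realpoint}) is one of $0$, $\mathbb{Z}$, or $\mathbb{Z}_2$.

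Having already proved that $\ind^{\rm t}_n$ and $\ind^{\rm a}_n$ are surjective (see Sections~\ref{indna} and~\ref{indnt}), each is a surjective endomorphism of $\KO^{-n}(\pt)$ once the above identifications are made. In all three cases such a map is automatically an isomorphism: for $\KO^{-n}(\pt)=0$ there is nothing to check, for $\KO^{-n}(\pt)=\mathbb{Z}_2$ surjectivity onto a finite group forces injectivity, and for $\KO^{-n}(\pt)=\mathbb{Z}$ the only surjective group endomorphisms are $\pm\id$. Consequently $\ind^{\rm a}_n$ is invertible, and the identity
\begin{displaymath}
\mu^{\rm a}=(\ind^{\rm a}_n)^{-1}\circ\ind^{\rm t}_n
\end{displaymath}
exhibits $\mu^{\rm a}:\KO^{\rm t}_n(\pt)\to\KO^{\rm a}_n(\pt)$ as a composition of isomorphisms, and hence as an isomorphism itself. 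The only nontrivial input is the $\Cl_n$-index theorem, which I invoke as a black box; the rest is a diagram-chase combined with the elementary fact about surjective endomorphisms of $\mathbb{Z}$ and finite groups.
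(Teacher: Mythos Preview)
Your proof is correct and follows essentially the same strategy as the paper: you establish commutativity of diagram~(\ref{muadiagmain}) via the $\Cl_n$-index theorem (Theorem~\ref{spinthm}) and equation~(\ref{indmuarel}), then use that $\KO^{-n}(\pt)\in\{0,\mathbb{Z},\mathbb{Z}_2\}$ together with surjectivity of the index maps to conclude. The paper's proof is terser, deferring the arithmetic observation about surjective endomorphisms to the discussion preceding the proposition, but the logical content is the same.
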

\begin{proof}
As noticed at the beginning of this section, it suffices to establish
the commutativity of the diagram (\ref{muadiagmain}), i.e. that
\begin{displaymath}
\ind^{\rm t}_{n}=\ind^{\rm a}_{n}\circ\mu^{\rm a} \ .
\end{displaymath}
Let $[\topsp{M},\topsp{E},\phi]$ be the class of a KO-cycle over ${\pt}$ with $\dim
{\topsp{M}}=n$. Using Theorem~\ref{spinthm} and (\ref{indmuarel}) we have
\begin{eqnarray*}
\ind^{\rm t}_{n}\big[\topsp{M}\,,\,\topsp{E}\,,\,\phi\big]&:=&
\widehat{\mathcal{A}}_{\topsp{E}}\big(\topsp{M}\big)\\
&=&\ind^{\rm a}_{n}\big(\,\nslash{\mathfrak{D}}^\topsp{M}_{\topsp{E}}\big)
~=~\ind^{\rm a}_{n}\circ\mu^{\rm a}\big[\topsp{M}\,,\,\topsp{E}\,,\,\phi\big]
\end{eqnarray*}
as required.
\end{proof}
\section{The Real Chern Character}
In this section we will describe the natural complexification map from
geometric KO-homology to geometric K-homology and use it to define the
Chern character homomorphism in topological KO-homology. We describe
various properties of this homomorphism, most notably its intimate
connection with the AMS invariant which was the crux of the
isomorphism of the previous section.\\
Let $\topsp{X}$ be a compact topological space. Consider the topological,
generalized homology groups $\K_{\sharp}^{\rm t}(\topsp{X})$ and
$\KO_{\sharp}^{\rm t}(\topsp{X})$, along with the corresponding K-theory and
KO-theory groups. The complexification of a real vector bundle over
$\topsp{X}$ is a complex vector bundle over $\topsp{X}$ which is isomorphic to its own
conjugate vector bundle. The complexification map is compatible with
stable isomorphism of real and complex vector bundles, and thus
defines a homomorphism from stable equivalence classes of real vector
bundles to stable equivalence classes of complex vector bundles. It
thereby induces a natural transformation of cohomology theories
$$(\otimes\,\complex)^* \,:\, \KO^{*}(\topsp{X}) ~\longrightarrow~
\K^{*}(\topsp{X})$$
induced by
$$ [\topsp{E}]-[\topsp{F}] ~\longmapsto~ [\topsp{E}_{\mathbb{C}}]-[\topsp{F}_{\mathbb{C}}]$$
where $\topsp{E}_{\mathbb{C}}:=\topsp{E}\otimes\complex$ is the complexification of
the real vector bundle $\topsp{E}\to \topsp{X}$.

We can also define a complexification morphism relating the
homology theories
\beq
(\otimes\,\complex)_* \,:\, \KO_{\sharp}^{\rm t}(\topsp{X}) ~\longrightarrow~
\K_{\sharp}^{\rm t}(\topsp{X})
\label{complexmaphom}\eeq
by $$ [\topsp{M},\topsp{E},\phi]\otimes\complex:=[\topsp{M},\topsp{E}_{\mathbb{C}},\phi]$$
and extended by linearity, where on the right-hand side we regard $\topsp{M}$
endowed with the spin$^c$ structure induced by its spin structure as
a KO-cycle. One can easily see that
\beq
[\topsp{M},\topsp{E},\phi]\otimes\complex
=\phi_*\bigl([\topsp{E}_\complex]\,\frown\, [\topsp{M}]_\K\bigr)
\label{complexexpl}\eeq
where $[\topsp{M}]_\K\in\K_{\sharp}(\topsp{M})$ denotes the K-theory fundamental class of
$\topsp{M}$. Thus in the case when $\topsp{X}$ is a compact spin manifold, the homomorphism
$(\otimes\,\complex)_*$ is just the Poincar\'e dual of
$(\otimes\,\complex)^*$. This is clearly a natural transformation of
homology theories.

A related natural transformation between cohomology theories is the
realification morphism
$$(\,\underline{\phantom{\topsp{E}}}
\,\real)^* \,:\, \K^{\sharp}(\topsp{X}) ~\longrightarrow~ \KO^{\sharp}(\topsp{X})$$
induced by assigning to a complex vector bundle over $\topsp{X}$ the
underlying real vector bundle over $\topsp{X}$. Because a spin$^c$ manifold is not
necessarily spin, we cannot implement this transformation in the
homological setting in general. Rather, we must assume that $\topsp{X}$ is a
compact spin manifold. In this case the K-homology group $\K^{\rm
  t}_\sharp(\topsp{X})$ has generators $[\topsp{X} \times \topsp{S}^n,\topsp{E}_i,\pr_1]-[\topsp{X}
\times\topsp{S}^n,\topsp{F}_i,\pr_1]$, $0 \leq n \leq 7$, where $\pr_1:\topsp{X}\times\topsp{S}^n\to
\topsp{X}$ is the projection onto the first factor \cite{Reis2006}. We can then define the
morphism
$$(\,\underline{\phantom{\topsp{E}}}\,\real)_* \,:\, \K_{\sharp}^{\rm t}(\topsp{X})
~\longrightarrow~ \KO_{\sharp}^{\rm t}(\topsp{X})$$
by
$$\bigl([\topsp{X} \times \topsp{S}^n,\topsp{E}_i,\pr_1]-[\topsp{X} \times \topsp{S}^n,\topsp{F}_i,\pr_1]\bigr)
\,\underline{\phantom{\topsp{E}}}\,\real:=
[\topsp{X} \times \topsp{S}^n,\topsp{E}_i\,\underline{\phantom{\topsp{E}}}\,\real,\pr_1]-
[\topsp{X} \times \topsp{S}^n,\topsp{F}_i\,\underline{\phantom{\topsp{E}}}\,\real,\pr_1]$$ and
extending by linearity. Since this definition depends on a choice of
generators for $ \K_{\sharp}^{\rm t}(\topsp{X})$, the transformation is not
natural. As for the complexification morphism, the morphism
$(\,\underline{\phantom{\topsp{E}}}\,\real)_*$ thus defined is Poincar\'e dual
to $(\,\underline{\phantom{\topsp{E}}}\,\real)^*$. It follows that the
composition $(\,\underline{\phantom{\topsp{E}}}\,\real)_* \circ
(\otimes\,\complex)_*$ is multiplication by 2.

We can use the natural transformation provided by the complexification
homomorphism (\ref{complexmaphom}) to define a real homological Chern
character
\beq
\ch_{\bullet}^{\mathbb{R}} \,:\, \KO_{\sharp}^{\rm t}(\topsp{X}) ~\longrightarrow~
\H^{\phantom{1}}_{\sharp}(\topsp{X},\mathbb{Q})
\label{realCherndef}\eeq
by
$$\ch_{\bullet}^{\mathbb{R}}(\xi)=\ch^{\phantom{1}}_{\bullet}(\xi\otimes\complex) $$
for $\xi\in\KO_\sharp^{\rm t}(\topsp{X})$, where on the right-hand side we use
the K-homology Chern character $\ch^{\phantom{1}}_\bullet:\K^{\rm
  t}_\sharp(\topsp{X})\to\H^{\phantom{1}}_\sharp(\topsp{X},\rat)$, which can be defined as
\begin{displaymath}
\ch^{\phantom{1}}_\bullet(\topsp{M},\topsp{E},\phi)=\phi_{*}\bigl((\ch(\topsp{E})\cup{e^{d(\topsp{TM})/2}\hat{\topsp{A}}(\topsp{TM})})\cap[\topsp{M}]\bigr)
\end{displaymath}
where the cohomology class $d(\topsp{TM})$ is defined in {\appcharac}.\\
The real Chern character (\ref{realCherndef}) is a
natural transformation of homology theories, and it preserves the cap product, i.e. the diagram
\begin{displaymath}
\xymatrix{{\KO^{*}_{\rm t}(\topsp{X})\otimes\KO_\sharp^{\rm t}(\topsp{X})}\ar[r]^{\:\:\qquad\cap} \ar[d]_{\ch^{\bullet}\otimes\ch^{\phantom{1}}_\bullet }& \KO_\sharp^{\rm t}(\topsp{X})\ar[d]^{\ch^{\phantom{1}}_\bullet}\\
{\rm \topsp{H}}^{*}_{\rm t}(\topsp{X};\mathbb{Q})\otimes{\rm \topsp{H}}_\sharp^{\rm t}(\topsp{X};\mathbb{Q})\ar[r]^{\:\:\qquad\cap}& \H_\sharp^{\rm t}(\topsp{X};\mathbb{Q})
}
\end{displaymath}
commutes for any space $\topsp{X}$. This property is indeed guaranteed by the fact that $\ch^{\mathbb{R}}_\bullet$ is the natural transformation in KO-homology induced by the real Chern character in KO-theory. See \cite{jakob} for details. \topsp{F}inally, tensoring with
$\mathbb{Q}$ gives a map
$$\ch_{\bullet}^{\mathbb{R}} \otimes \Id_{\mathbb{Q}}=
\big(\ch^{\phantom{1}}_{\bullet} \otimes
\Id_{\mathbb{Q}}\big) \circ \big((\otimes\,\complex)_* \otimes
\Id_{\mathbb{Q}}\big)\,:\, \KO_{\sharp}^{\rm t}(\topsp{X})\otimes_{\mathbb{Z}}
\mathbb{Q}~\longrightarrow~
\H^{\phantom{1}}_{\sharp}(\topsp{X},\mathbb{Q})$$  
In constrast with the complex case, this map is \emph{not} an isomorphism.
\section{Cohomological Index formulas}
We will now explore the relation between the homological real
Chern character and the topological index defined in
(\ref{indntdef}). In particular, we will be able to give cohomological $\Cl_{n}$-index formulas.  We first show that up to Poincar\'e duality the
topological index is the homological morphism induced by the
collapsing map. Recall that up to isomorphism, the AMS invariant is
given by
\begin{displaymath}
\widehat{\mathcal{A}}_{\topsp{E}}(\topsp{M})=\tilde\zeta^{\,\KO}_{!}[\topsp{E}]
\end{displaymath}
where $\topsp{M}$ is a compact spin manifold of dimension $n$, $\topsp{E}$ is a real
vector bundle over $\topsp{M}$, $\tilde\zeta:\topsp{M}\to\pt$ is the collapsing map on
$\topsp{M}$, and $\tilde\zeta_!^{\,\KO}$ is the corresponding Gysin
homomorphism\footnote{We are using the Gysin homomorphism definition  for continuous and proper maps, which are not necessarily embeddings. See \cite{KAROUBI} for details.} on KO-theory. In this case we have
\begin{displaymath}
\tilde\zeta_{*}^{\,\KO}=\Phi^{\phantom{1}}_{\pt}\circ\tilde
\zeta^{\,\KO}_{!}\circ\Phi^{-1}_{\topsp{M}}
\end{displaymath}
where $\tilde\zeta^{\,\KO}_{*}$ is the induced morphism on $\KO^{\rm
  t}_{\sharp}(\topsp{M})$, and $\Phi_{\pt}$ and $\Phi_{\topsp{M}}$ are the Poincar\'e
duality isomorphisms on $\pt$ and $\topsp{M}$, respectively. It then follows
that
\begin{eqnarray}
\Phi_{\pt}^{\phantom{1}}\circ{}\ind_n^{\rm t}(\topsp{M},\topsp{E},\phi)&=&
\Phi^{\phantom{1}}_{\pt}\circ\tilde\zeta^{\,\KO}_{!}[\topsp{E}]\nonumber\\
&=&\Phi^{\phantom{1}}_{\pt}\circ\tilde\zeta^{\,\KO}_{!}\circ
\Phi^{-1}_{\topsp{M}}(\topsp{M},\topsp{E},\Id_\topsp{M})\nonumber\\
&=&\tilde\zeta_{*}^{\,\KO}[\topsp{M},\topsp{E},\Id_\topsp{M}]\nonumber\\
&=&[\topsp{M},\topsp{E},\tilde\zeta\,]~=~\zeta_{*}^\KO[\topsp{M},\topsp{E},\phi]
\label{topindhommorph}\end{eqnarray}
where $\zeta:\topsp{X}\to{\pt}$ is the collapsing map on $\topsp{X}$ with
$\tilde\zeta=\zeta\circ\phi$.\\
We will next describe how the real Chern character can be used to give
a characteristic class description of the map $\ind_n^{\rm t}$ in the
torsion-free cases. Consider first the case $n\equiv4~{\rm mod}\,
8$. We begin by showing that there is a commutative diagram
\beq
\xymatrix{\KO^{\rm t}_{4}(\topsp{X})\ar[r]^{\zeta^{\KO}_{*}}
\ar[rd]_{\zeta^{\H}_{*}\circ{}\ch^{\mathbb{R}}_{\bullet}~} &
\KO^{\rm t}_{4}(\pt)\ar[d]^{\ch^{\mathbb{R}}_{\bullet}}\\ &
\H_{0}(\pt,\mathbb{Q})}
\eeq
where $\zeta_*^\H$ is the induced morphism on homology. Recall that
$\ch^{\mathbb R}_{\bullet}=\ch^{\phantom{1}}_{\bullet}\circ(\otimes\,\complex)_*$,
where $(\otimes\,\complex)_{*}$ is the complexification map
(\ref{complexmaphom}), hence it can be expressed as
\begin{displaymath}
\ch^{\mathbb{R}}_\bullet(\topsp{M},\topsp{E},\phi)=\phi_{*}\bigl((\ch(\topsp{E}_{\mathbb{C}})\cup\hat{\topsp{A}}(\topsp{TM}))\cap[\topsp{M}]\bigr)
\end{displaymath}
 Then one has
\begin{eqnarray*}
\zeta^{\H}_{*}\circ{}\ch^{\mathbb{R}}_{\bullet}(\topsp{M},\topsp{E},\phi)&=&
\zeta^{\H}_{*}\circ{}\phi_{*}\bigl((\ch^{\bullet}(\topsp{E}_{\mathbb{C}})\,\cup\,{}
\widehat{{\topsp{A}}}(\topsp{TM}))\,\cap\,[\topsp{M}]\bigr)\\
&=&(\zeta\circ{\phi})_{*}\bigl((\ch^{\bullet}(\topsp{E}_{\mathbb{C}})\,\cup\,
\widehat{{\topsp{A}}}(\topsp{TM}))\,\cap\,[\topsp{M}]\bigr)\\
&=&\ch^{\mathbb{R}}_{\bullet}(\topsp{M},\topsp{E},\tilde\zeta\,)~=~
\ch^{\mathbb{R}}_{\bullet}\circ\zeta^{\KO}_{*}[\topsp{M},\topsp{E},\phi] \ .
\end{eqnarray*}

Now we use the fact that the map
$\ch^{\mathbb{R}}_{\bullet}:\KO^{\rm t}_{4}(\pt)\to{\H_{0}(\pt,\mathbb{Q})}$
sends $\mathbb{Z}\to{2\mathbb{Z}}\subset{\mathbb{Q}}$ \cite{KAROUBI}. On its image,
the homomorphism $\ch^{\mathbb{R}}_{\bullet}$ is thus invertible and
its inverse is given as division by 2. This can be proven as follows. We have
\begin{eqnarray}
\zeta^{\H}_{*}\circ{}\ch^{\mathbb{R}}_{\bullet}(\topsp{M},\topsp{E},\phi)&=&
\zeta^{\H}_{*}\circ{\Phi_{\topsp{M}}}\bigl(\ch^{\bullet}(\topsp{E}_{\mathbb{C}})\,\cup\,
{\widehat{{A}}(\topsp{TM})}\bigr)\nonumber\\
&=&\Phi_{\pt}\circ{}\zeta^{\H}_{!}\bigl(\ch^{\bullet}(\topsp{E}_{\mathbb{C}})\,
\cup\,\widehat{{A}}(\topsp{TM})\bigr)\nonumber\\
&=&\bigl\langle\ch^{\bullet}(\topsp{E}_{\mathbb{C}})\,\cup\,{\widehat{{A}}(\topsp{TM})}\,,\,
[\topsp{M}]\bigr\rangle \ ,
\label{div2expl}\end{eqnarray}
where $\langle-,-\rangle:\H^\sharp(\topsp{M},\rat)\times\H_\sharp(\topsp{M},\rat)\to\rat$ is
the canonical dual pairing between cohomology and homology. In
(\ref{div2expl}) we have used the fact that $\Phi_{\pt}$ is the
identity on $\H_{0}(\pt,\mathbb{Q})\simeq\mathbb{Q}\,$, and the proof of
the last equality uses the Atiyah-Hirzebruch version of the
Grothendieck-Riemann-Roch theorem \cite{KAROUBI}. Recall that for a spin manifold $\topsp{M}$ of dimension $4k+8$,
one has
$$\bigl\langle\ch^{\bullet}(\topsp{E}_{\mathbb{C}})\,\cup\,{\widehat{{A}}(\topsp{TM})}\,,\,
[\topsp{M}]\bigr\rangle\in2\zed$$ After using the isomorphism
$\KO_{4}(\pt)\simeq\mathbb{Z}$, we thus deduce that
$$\zeta_*^\KO[\topsp{M},\topsp{E},\phi]=
\frac12\,\bigl\langle\ch^{\bullet}(\topsp{E}_{\mathbb{C}})\,\cup\,
{\widehat{{A}}(\topsp{TM})}\,,\, [\topsp{M}]\bigr\rangle$$ and from
(\ref{topindhommorph}) we arrive finally at
\begin{displaymath}
\ind_n^{\rm t}(\topsp{M},\topsp{E},\phi)=\mbox{$\frac12$}
\,\bigl\langle\ch^{\bullet}(\topsp{E}_{\mathbb{C}})\,\cup\,
{\widehat{{A}}(\topsp{TM})}\,,\, [\topsp{M}]\bigr\rangle \ 
\end{displaymath}
When $n\equiv0\:{\rm mod}\:8$, one obtains a similar result but now
without the factor $\frac{1}{2}$, since in that case
$\ch^{\mathbb{R}}_{\bullet}:\KO^{\rm t}_0(\pt)\to\H_0(\pt,\rat)$ is the
inclusion $\mathbb{Z}\hookrightarrow\mathbb{Q}$ \cite{KAROUBI}.
In the remaining non-trivial cases $n\equiv1,2~{\rm mod}~8$
the homological Chern character is of no use, as $\KO^{-n}(\pt)$ is
the pure torsion group $\mathbb{Z}_{2}$, and there is no cohomological
formula for the AMS invariant in these instances. However, by using
Theorem~\ref{spinthm} one still has an interesting mod~2 index formula
for the topological index in these cases as well~\cite{LM}. We can
summarize our homological derivations of these index formulas as
follows.

\begin{theorem}
Let $[\topsp{M},\topsp{E},\phi]\in\KO^{\rm t}_n(\topsp{X})$, and let
$~\nslash{\mathfrak{D}}^\topsp{M}_{\topsp{E}}$ be the Atiyah-Singer operator on $\topsp{M}$ with
coefficients in $\topsp{E}$. Let
$~\nslash{\mathcal{H}}^\topsp{M}_\topsp{E}:=\ker~\nslash{{D}}^\topsp{M}_{\topsp{E}}$ denote the
vector space of real harmonic $\topsp{E}$-valued spinors on $\topsp{M}$, where $\:\nslash{{D}}^\topsp{M}_{\topsp{E}}$ is the Atiyah-Singer operator for the irreducible real spinor bundle on $\topsp{M}$. Then one has
the $\Cl_n$-index formulas
$$
\ind_n^{\rm t}(\topsp{M},\topsp{E},\phi)~=~\left\{\begin{array}{cc}
\bigl\langle\ch^{\bullet}(\topsp{E}_{\mathbb{C}})\,\cup\,
{\widehat{{A}}(\topsp{TM})}\,,\, [\topsp{M}]\bigr\rangle \ , & \quad n\equiv0\:{\rm
  mod}\:8 \ , \\ \dim_\complex~\nslash{\mathcal{H}}^\topsp{M}_\topsp{E}~{\rm
  mod}\:2 \ , & \quad n\equiv1\:{\rm mod}\:8 \ , \\
\dim_{\mathbb{H}}~\nslash{\mathcal{H}}^\topsp{M}_\topsp{E}~
{\rm mod}\:2 \ , & \quad n\equiv2\:{\rm mod}\:8 \ , \\
\mbox{$\frac12$}\,\bigl\langle\ch^{\bullet}(\topsp{E}_{\mathbb{C}})\,\cup\,
{\widehat{{A}}(\topsp{TM})}\,,\, [\topsp{M}]\bigr\rangle \ , & \quad n\equiv4\:{\rm
  mod}\:8 \ , \\ 0 \ , & {\rm otherwise} \ . \end{array} \right.
$$
\label{Clnindexthm}\end{theorem}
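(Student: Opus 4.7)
The plan is to unify all cases via the chain of equalities $\ind_n^{\rm t}[\topsp{M},\topsp{E},\phi] = \ind_n^{\rm a}\circ\mu^{\rm a}[\topsp{M},\topsp{E},\phi] = \ind_n^{\rm a}(\,\nslash{\mathfrak{D}}^\topsp{M}_\topsp{E})=\widehat{\mathcal{A}}_\topsp{E}(\topsp{M})$ supplied by Theorem \ref{isomorphism} together with the $\Cl_n$-index theorem (Theorem \ref{spinthm}), and then to extract a characteristic-class formula from $\widehat{\mathcal{A}}_\topsp{E}(\topsp{M})$ according to the Bott periodicity pattern of $\KO^{-n}(\pt)$. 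The cases $n\equiv 3,5,6,7~{\rm mod}~8$ are immediate, as $\KO^{-n}(\pt)=0$ and there is nothing to prove.

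For $n\equiv 0~{\rm mod}~8$ and $n\equiv 4~{\rm mod}~8$, I would exploit the real homological Chern character. By (\ref{topindhommorph}) one has $\Phi_{\pt}\circ\ind_n^{\rm t}[\topsp{M},\topsp{E},\phi]=\zeta_{*}^{\KO}[\topsp{M},\topsp{E},\phi]$, so everything reduces to computing $\zeta_{*}^{\KO}$. Naturality of $\ch_\bullet^{\mathbb{R}}$ under the collapsing map $\zeta:\topsp{X}\to\pt$ gives the commutative square
\begin{equation*}
\ch_\bullet^{\mathbb{R}}\circ\zeta_{*}^{\KO}[\topsp{M},\topsp{E},\phi]=\zeta_{*}^{\H}\circ\ch_\bullet^{\mathbb{R}}[\topsp{M},\topsp{E},\phi].
\end{equation*}
The right-hand side is evaluated as in (\ref{div2expl}) using the definition of $\ch_\bullet^{\mathbb{R}}$ through complexification, functoriality of pushforward, Poincar\'e duality, and the Atiyah--Hirzebruch version of Grothendieck--Riemann--Roch, yielding $\bigl\langle\ch^{\bullet}(\topsp{E}_{\mathbb{C}})\cup\widehat{A}(\topsp{TM})\,,\,[\topsp{M}]\bigr\rangle$. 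The final step is to invert the vertical Chern character: for $n\equiv 0~{\rm mod}~8$ the map $\ch_\bullet^{\mathbb{R}}:\KO_0^{\rm t}(\pt)\simeq\mathbb{Z}\hookrightarrow\mathbb{Q}$ is the natural inclusion, so the formula is read off directly; for $n\equiv 4~{\rm mod}~8$ the map lands in $2\mathbb{Z}\subset\mathbb{Q}$, which forces the prefactor $\tfrac{1}{2}$.

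The main obstacle is the torsion cases $n\equiv 1,2~{\rm mod}~8$, where $\KO^{-n}(\pt)\simeq\mathbb{Z}_2$ is pure torsion and is invisible to the rational Chern character. Here I would abandon the cohomological route and work directly with the analytic incarnation $\widehat{\mathcal{A}}_\topsp{E}(\topsp{M})=[\ker\,\nslash{\mathfrak{D}}^\topsp{M}_\topsp{E}]\in\widehat{\mathfrak{M}}_n/\imath^{*}\widehat{\mathfrak{M}}_{n+1}$. The translation of this Clifford-module class into a dimension count mod 2 follows from the explicit Atiyah--Bott--Shapiro description of $\widehat{\mathfrak{M}}_n/\imath^{*}\widehat{\mathfrak{M}}_{n+1}$: for $n\equiv 1~{\rm mod}~8$ one has $\Cl_n^0\simeq\Cl_{n-1}$ Morita equivalent to $\mathbb{C}$, so $\ker\,\nslash{\mathfrak{D}}^\topsp{M}_\topsp{E}$ is naturally a complex vector space and its generator class is detected by complex dimension modulo $2$, whereas for $n\equiv 2~{\rm mod}~8$ the relevant even Clifford algebra is Morita equivalent to $\mathbb{H}$, giving quaternionic dimension modulo $2$.

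Finally, one must reconcile $\ker\,\nslash{\mathfrak{D}}^\topsp{M}_\topsp{E}$ with $\,\nslash{\mathcal{H}}^\topsp{M}_\topsp{E}=\ker\,\nslash{D}^\topsp{M}_\topsp{E}$: the bundle $\,\nslash{\mathfrak{S}}(\topsp{M})$ decomposes, as a $\Cl(\topsp{TM})$-module with right $\Cl_n$-action, into a finite number of copies of the irreducible real spinor bundle, and passing to the quotient $\widehat{\mathfrak{M}}_n/\imath^{*}\widehat{\mathfrak{M}}_{n+1}$ kills exactly the contribution from free $\Cl_n$-submodules, so the class $[\ker\,\nslash{\mathfrak{D}}^\topsp{M}_\topsp{E}]$ equals the parity of $\dim_{\mathbb{C}}\,\nslash{\mathcal{H}}^\topsp{M}_\topsp{E}$ or $\dim_{\mathbb{H}}\,\nslash{\mathcal{H}}^\topsp{M}_\topsp{E}$ respectively, completing the remaining entries of the table.
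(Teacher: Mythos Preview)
Your proposal is correct and follows essentially the same route as the paper: the torsion-free cases $n\equiv 0,4~{\rm mod}~8$ are handled exactly as in the text via (\ref{topindhommorph}), naturality of $\ch_\bullet^{\mathbb{R}}$, and the computation (\ref{div2expl}), while the vanishing cases are trivial. For the torsion cases $n\equiv 1,2~{\rm mod}~8$ the paper simply invokes Theorem~\ref{spinthm} and cites~\cite{LM} for the mod~2 index formula without further elaboration; your account of how the ABS description of $\widehat{\mathfrak{M}}_n/\imath^*\widehat{\mathfrak{M}}_{n+1}$ and the passage from $\,\nslash{\mathfrak{S}}(\topsp{M})$ to the irreducible spinor bundle produce the dimension parities is in fact more explicit than what the paper provides.
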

\section{D-branes and K-homology}\label{wrappedbranes}
In this section we will show how geometric K-homology can be used to describe D-branes in type II and type I String theory in a topological nontrivial spacetime. We will introduce the notion of \emph{wrapped D-brane} on a given submanifold of spacetime, we will define the group of charges of wrapped D-branes, and construct explicit examples of wrapped D-branes which have torsion charge.\\   
Recall by section \ref{typeIIbranes} that the group of topological charges of a D$p$-brane realized as a $\topsp{spin}^{c}$ submanifold $\topsp{W}\subset{\topsp{X}}$ is given by
\begin{displaymath}
\K_{\rm cpt}^{0}({\nu_{\topsp{W}}})\simeq\K^{0}(\topsp{B(N(W))},\topsp{S(N(W))})
\end{displaymath}
as proposed by Witten. According to the Sen-Witten construction, the classes in $\K_{\rm cpt}^{0}({\nu_{W}})$ are interpreted as systems of $\rm D9-D\bar{9}$ branes which are unstable, and will decay onto the worldvolume $\topsp{W}$, which correspond to the zero loci of the appropriate tachyon field. In particular, this process happens in spacetime, and it depends on how the worldvolume is embedded in it. On the other hand, the role played by the Chan-Paton vector bundle on the D$p$-brane is not manifest in this classification. However, there is a natural way of classifying the D$p$-branes on $\topsp{W}$ by means which manifestly takes into account the Chan-Paton bundle contribution. Indeed, from the D$p$-brane data, we can naturally construct the Baum-Douglas cycle $(\topsp{W},\topsp{E},{\rm id})$, where $\topsp{E}$ denotes the Chan-Paton bundle, and declare that its charge is given by the class $[\topsp{W},\topsp{E},{\rm id}]\in\K_{p+1}(\topsp{W})$. As the group $\K_{p+1}(\topsp{W})$ contains no information about the embedding of the worldvolume $\topsp{W}$ in $\topsp{X}$, we can intuitively think the charge $[\topsp{W},\topsp{E},{\rm id}]$ takes into account how the D-brane wraps the submanifold $\topsp{W}$. Notice that this analogous to the charge classification of an extended object in an abelian gauge theory via the homology cycle of its worldvolume. Finally, the equivalence relation that defines the group $\K_{p+1}(\topsp{W})$ are very natural from the physical point of view. See \cite{Reis2006} for details. At this point, we notice that by definition the elements of $\K_{p+1}^{\rm t}(\topsp{W})$ are given by (differences of) classes $[\topsp{M},\topsp{E},\phi]$ where $\topsp{M}$ is a $p+1$-dimensional manifold. However, it is \emph{not} always possible to choose the map $\phi$ in $[\topsp{M},\topsp{W}]$ in such a way that $\phi$ is a diffeomorphism. This motivates the following
\begin{definition} Let $\topsp{X}$ be a type II String theory spacetime, described by a 10-dimensional spin manifold, and let $\topsp{W}\subset{\topsp{X}}$ be a $\topsp{spin}^{c}$ submanifold. A D$p$-brane \emph{wrapping} the worldvolume $\topsp{W}$ is defined as the K-cycle $(\topsp{M},\topsp{E},\phi)$, where dim $\topsp{M}$ =$p+1$, and $\phi(\topsp{M})\subset{\topsp{W}}$. We will call $\topsp{E}$ the Chan-Paton bundle associated to the wrapped D$p$-brane, and we will say that the D$p$-branes \emph{fills} $\topsp{W}$ if $\phi(\topsp{M})=\topsp{W}$. The charge of the wrapped D$p$-brane is given by the class $[\topsp{M},\topsp{E},\phi]$ in the group $\K_{p+1}^{\rm t}(\topsp{W})$.
\end{definition} 
Notice that in the above definition we have relaxed the condition that ${\rm dim} \topsp{W}=p+1$, as we are not requiring that the wrapping preserves the dimension of the D-brane. This is an attempt to take into account, at least at the topological level, the well-known fact that D-branes are not always representable as submanifolds equipped with vector bundles, since they are boundary conditions for a superconformal field theory, and that a distinction should be made between the wrapping D-brane, in this case identified with a K-cycle representing a particular type of boundary conditions, and the worldvolume it wraps. Notice also that the group of charges of wrapped D$p$-branes does \emph{not} depend on how the manifold $\topsp{W}$ is embedded into the spacetime, and hence it seems to represent a genuine worldvolume concept. In particular, as mentioned above, the wrapped D-brane definition is very natural  in the ordinary case of a D-brane realized as a submanifold $\topsp{W}$ of spacetime equipped with a Chan-Paton bundle $\topsp{E}$, as it only depends on how the vector bundle is defined on the submanifold, and not on the procedure used to ``extend'' it to the spacetime. Finally, in the case of ordinary D-branes wrapping  $\topsp{W}$ with ${\rm dim} \topsp{W}=p+1$, the group $\K_{p+1}(\topsp{W})$ coincides with the group of charges of type IIB D$p$-branes that can be obtained via the Sen-Witten construction, i.e. via brane-antibrane decay. This can be shown as follows.  Since the normal bundle $\nu_{\topsp{W}}\to{\topsp{W}}$ is a $\topsp{spin}^{c}$ vector bundle, we can use the Thom isomorphism in K-theory to establish that
\begin{displaymath}
\K_{\rm cpt}^{0}({\nu_{\topsp{W}}})\simeq{\K^{0}(\topsp{W})}
\end{displaymath}
As $\topsp{W}$ is a $\topsp{spin}^{c}$ submanifold of the spacetime, we can use Poincar\'e Duality to get
\begin{displaymath}
\K^{0}(\topsp{W})\simeq{\K^{t}_{p+1}(\topsp{W})}
\end{displaymath}
where $p+1={\rm dim}\:\topsp{W}$. This suggests that for ordinary type II D$p$-branes the wrapping charge is completely determined by the decay of the tachyon field. It is natural at this point to extend the notion of wrapped D-brane and of wrapping charge to type I String theory. In this case, though, the two notions of charge do not coincide, as we will show in the following.\\
Recall that in type I String theory the group of topological charges of a D$p$-brane realized as a $\topsp{spin}$ submanifold $\topsp{W}\subset{\topsp{X}}$ is given by
\begin{equation}\label{braneanti}
\KO_{\rm cpt}^{0}({\nu_{\topsp{W}}})\simeq\KO^{0}(\topsp{B(N(W))},\topsp{S(N(W))})
\end{equation} 
By using the Thom isomorphism in KO-theory, we have that
\begin{displaymath}
\KO_{\rm cpt}^{0}({\nu_{\topsp{W}}})\simeq{\KO^{p-9}(\topsp{W})}
\end{displaymath}
Finally, by Poincar\'e Duality, we get
\begin{displaymath}
\KO^{p-9}(\topsp{W})\simeq{\KO_{10}(\topsp{W})}
\end{displaymath}
The group $\KO_{10}(\topsp{W})$ is in general not isomorphic to the group $\KO_{p+1}(\topsp{W})$, and explicitly depends on the dimension of the spacetime. We can physically interpret the elements of $\KO_{10}(\topsp{W})$ as equally charged systems of wrapping $\rm D9-D\bar{9}$-branes decaying on the submanifold $\topsp{W}$, and via the inclusion $i:\topsp{W}\hookrightarrow{\topsp{X}}$ they can be related to the D9-branes used in the Sen-Witten construction. This is not surprising, as the decay mechanism is somehow at the heart of the spacetime D-brane charge classification, and it reinforces the statement that the group (\ref{braneanti}) encodes spacetime properties of the D$p$-brane. After introducing the K-theoretical description of Ramond-Ramond fields in the next chapter, we will argue that wrapped D-branes can in principle couple to Ramond-Ramond fields.\\

We comment at this point that the definition of wrapped branes and of wrapped charge as presented in this section is very natural from the mathematical point of view, but is still heuristic in nature. Indeed, stronger evidences for wrapped D-branes and their coupling to Ramond-Ramond fields should come from the boundary conformal field theory describing type I String theory in topologically nontrivial settings, which has not yet been investigated in full generality.   
\section{The group $\KO_{\sharp}^{\rm t}(\rm pt)$ and torsion branes}
In this section we will find explicit generators for the group $\KO_{\sharp}^{\rm t}(\rm pt)$, which, as explained in the previous section, we can interpret as stable D-branes wrapping a point in spacetime. The stability of such D-branes is related to the fact that their charge is the ``lightest'' in the group of possible charges, hence the conservation of charge does not allow these D-branes to decay. To this end, we first give sufficient combinatorial criteria on the rational homology of $\topsp{X}$ which ensure that torsion-free D-branes can wrap non-trivial spin cycles of the spacetime $\topsp{X}$, which is an adaptation of an analogous result in \cite{Reis2006}
\begin{theorem}
 Let $\topsp{X}$ be a compact connected finite CW-complex of dimension
$n$ whose rational homology can be presented as
$$\H_\sharp(\topsp{X},\mathbb{Q})= \bigoplus_{p=0}^{n}~ \bigoplus_{i=1}^{m_p}
\,\big[\topsp{M}^{p}_{i}\big] {}~\mathbb{Q} \ , $$ where $\topsp{M}_i^{p}$ is a
$p$-dimensional compact connected spin submanifold of $\topsp{X}$ without
boundary and with orientation cycle $[\topsp{M}_i^{p}]$ given by the spin
structure. Then the KO-homology lattice $\Lambda_{\KO_\sharp^{\rm t}(\topsp{X})}:=\KO_\sharp^{\rm
t}(\topsp{X})\,/\,\tor_{\KO_\sharp^{\rm
    t}(\topsp{X})}$ contains a set of linearly independent elements given by
the classes of KO-cycles
$$\big[\topsp{M}_{i}^p,\id^\real_{\topsp{M}_i^p},\iota_i^p\big] \ , \quad 0 \leq p\leq n
\ , ~~ 1 \leq i\leq m_p \ . $$
\label{HomKOcycles}\end{theorem}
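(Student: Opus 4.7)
The plan is to detect linear independence in $\Lambda_{\KO_\sharp^{\rm t}(\topsp{X})}$ by pushing everything over to rational homology via the real Chern character $\ch^{\mathbb{R}}_\bullet:\KO_\sharp^{\rm t}(\topsp{X})\to\H_\sharp(\topsp{X},\mathbb{Q})$. Since $\H_\sharp(\topsp{X},\mathbb{Q})$ is a $\mathbb{Q}$-vector space, every torsion element of $\KO_\sharp^{\rm t}(\topsp{X})$ lies in the kernel of $\ch^{\mathbb{R}}_\bullet$, so the Chern character descends to a homomorphism on the torsion-free quotient $\Lambda_{\KO_\sharp^{\rm t}(\topsp{X})}$. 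It will therefore be enough to show that the images $\ch^{\mathbb{R}}_\bullet[\topsp{M}_i^p,\id_{\topsp{M}_i^p}^{\real},\iota_i^p]$ are $\mathbb{Q}$-linearly independent in $\H_\sharp(\topsp{X},\mathbb{Q})$.

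First I would evaluate the Chern character on the specified cycles. For the trivial real line bundle one has $(\id_{\topsp{M}_i^p}^{\real})_{\mathbb{C}}=\id_{\topsp{M}_i^p}^{\complex}$, whence $\ch(\,(\id_{\topsp{M}_i^p}^{\real})_{\mathbb{C}})=1$, and by the explicit formula recalled for $\ch_\bullet$ this gives
\begin{equation*}
\ch^{\mathbb{R}}_\bullet\bigl[\topsp{M}_i^p,\id_{\topsp{M}_i^p}^{\real},\iota_i^p\bigr]
=(\iota_i^p)_{*}\bigl(\widehat{A}(\topsp{T}\topsp{M}_i^p)\cap[\topsp{M}_i^p]\bigr) \ .
\end{equation*}
Now $\widehat{A}(\topsp{T}\topsp{M}_i^p)=1+\widehat{A}_4(\topsp{T}\topsp{M}_i^p)+\widehat{A}_8(\topsp{T}\topsp{M}_i^p)+\cdots$, where each $\widehat{A}_{4k}$ has cohomological degree $4k$. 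Capping with $[\topsp{M}_i^p]\in\H_p(\topsp{M}_i^p,\mathbb{Q})$ and pushing forward by $\iota_i^p$ produces $(\iota_i^p)_{*}[\topsp{M}_i^p]\in\H_p(\topsp{X},\mathbb{Q})$ in top degree, plus a sum of classes in $\H_{p-4k}(\topsp{X},\mathbb{Q})$ for $k\geq 1$.

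The second step is a degree filtration argument. Suppose $\sum_{p,i}c_i^{p}\,[\topsp{M}_i^p,\id_{\topsp{M}_i^p}^{\real},\iota_i^p]=0$ in $\Lambda_{\KO_\sharp^{\rm t}(\topsp{X})}$ with $c_i^{p}\in\mathbb{Z}$, and let $P$ be the largest $p$ for which some $c_i^{p}\neq 0$. Applying $\ch^{\mathbb{R}}_\bullet$ and projecting onto $\H_P(\topsp{X},\mathbb{Q})$, only the top-degree contribution in each $\ch^{\mathbb{R}}_\bullet[\topsp{M}_i^P,\id,\iota_i^P]$ survives (cycles of dimension $p<P$ contribute in degree $<P$, and cycles of dimension $p=P$ contribute $(\iota_i^P)_{*}[\topsp{M}_i^P]$ plus strictly lower degree). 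This yields
\begin{equation*}
\sum_{i=1}^{m_P}c_i^{P}\,(\iota_i^P)_{*}[\topsp{M}_i^P]=0\quad\text{in }\H_P(\topsp{X},\mathbb{Q}) \ .
\end{equation*}
By hypothesis the classes $(\iota_i^P)_{*}[\topsp{M}_i^P]$ form part of a basis of $\H_\sharp(\topsp{X},\mathbb{Q})$, so $c_i^{P}=0$ for all $i$, contradicting the choice of $P$. Hence all $c_i^{p}$ vanish.

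The only subtle point, and the main technical obstacle, is making sure the Chern character really descends to $\Lambda_{\KO_\sharp^{\rm t}(\topsp{X})}$ and that the degree filtration argument is legitimate. The first is automatic because $\H_\sharp(\topsp{X},\mathbb{Q})$ is divisible and torsion-free, so every torsion class in $\KO_\sharp^{\rm t}(\topsp{X})$ maps to $0$. The second requires only that $\widehat{A}_0=1$, so that the leading term of $\widehat{A}(\topsp{T}\topsp{M}_i^P)\cap[\topsp{M}_i^P]$ in degree $P$ is exactly $[\topsp{M}_i^P]$ and not some rational multiple of it; all lower $\widehat{A}_{4k}$ contributions are killed when we project to degree $P$. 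This closes the argument.
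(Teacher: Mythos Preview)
Your proof is correct and follows essentially the same approach as the paper: both detect linear independence by pushing the KO-cycles forward to $\H_\sharp(\topsp{X},\mathbb{Q})$ via the real Chern character $\ch^{\mathbb{R}}_\bullet$. The paper is terser, citing \cite{Reis2006} for the fact that the complexified cycles $[\topsp{M}_i^p,\id^\complex_{\topsp{M}_i^p},\iota_i^p]$ form a rational basis of $\Lambda_{\K_\sharp^{\rm t}(\topsp{X})}$ and then noting that complexification carries your KO-cycles to these; your explicit degree-filtration argument (using $\widehat{A}_0=1$ so that the leading term of $\ch^{\mathbb{R}}_\bullet[\topsp{M}_i^p,\id^\real,\iota_i^p]$ in degree $p$ is exactly $(\iota_i^p)_*[\topsp{M}_i^p]$) is precisely the triangularity argument that underlies the cited result, so you have made self-contained what the paper delegates to a reference.
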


\begin{proof} By~\cite{Reis2006} the cycles
  $\big[\topsp{M}_{i}^p,\id^\complex_{\topsp{M}_i^p},\iota_i^p\big]$ form a rational basis for
  the lattice
$\Lambda_{\K_\sharp^{\rm t}(\topsp{X})}:=\K_\sharp^{\rm
t}(\topsp{X})\,/\,\tor_{\K_\sharp^{\rm
    t}(\topsp{X})}$ in K-homology. The conclusion follows from the fact that
    $$\big[\topsp{M}_{i}^p,\id^\real_{\topsp{M}_i^p},\iota_i^p\big]\otimes\complex
=\big[\topsp{M}_{i}^p,\id^\complex_{\topsp{M}_i^p},\iota_i^p\big] \ , $$
     i.e. that the elements $\ch_{\bullet}^{\mathbb{R}}
     \big(\topsp{M}_{i}^p,\id^\reals_{\topsp{M}_i^p},\iota_i^p\big)$
     form a set of generators of $\H_\sharp(\topsp{X},\mathbb{Q})$.
\end{proof}
Recall now, that we have $\KO_{n}^{\rm t}(\rm pt)\simeq{\mathbb{Z}}$ for $n=0,4$, $\KO_{n}^{\rm t}(\rm pt)\simeq{\mathbb{Z}_{2}}$ for $n=1,2$, and 0 otherwise. For $n=0,4$ Theorem \ref{HomKOcycles} and the Chern character assure that the classes $[\rm{pt},\id_{\rm{pt}}^{\real},\Id^{\phantom{1}}_\pt]$ and
$[\topsp{S}^{4},\id_{\topsp{S}^4}^{\real},\zeta]$ are generators of the groups
$\KO^{\rm t}_{0}(\rm{pt})\simeq\zed$ and $\KO^{\rm
  t}_{4}(\rm{pt})\simeq\zed$, respectively. 
Let us now consider the group $\KO_1^{\rm t}(\pt)$. Consider the
circle $\topsp{S}^{1}$ and assign to it a Riemannian metric. Since there is
only one unit tangent vector at any point of $\topsp{S}^{1}$, one has
$P_{{\,\topsp{S}O}}(\topsp{S}^{1})\simeq{\topsp{S}^{1}}$, where $P_{{\,\topsp{S}O}}(\topsp{S}^{1})$ is the orthonormal frame bundle of $\topsp{S}^{1}$. A spin structure on $\topsp{S}^{1}$ is thus
given by a double covering
\begin{displaymath}
P_{{\,\Spin}}\big(\topsp{S}^{1}\big)~\longrightarrow~{\topsp{S}^{1}}
\end{displaymath}
and by the fibration
\begin{displaymath}
\xymatrix{
{\mathbb{Z}_{2}}~\ar[r]~&~P_{{\,\Spin}}\big(\topsp{S}^{1}\big)\ar[d] \ . \\ &{\topsp{S}^{1}}}
\end{displaymath}
There are only two double coverings of the circle, one disconnected
and the other connected, given respectively by
\begin{displaymath}
\topsp{S}^{1}\times{\mathbb{Z}_{2}}~\longrightarrow~{\topsp{S}^{1}} \ , \qquad
\topsp{S}^{1}_{\rm \topsp{M}}~\longrightarrow~{\topsp{S}^{1}}
\end{displaymath}
where $\topsp{S}^{1}_{\rm \topsp{M}}$ is the total space of the principal
$\mathbb{Z}_{2}$-bundle associated to the M\"obius strip. 
We will call these two spin structures the ``interesting'' and
the ``uninteresting'' spin structures, respectively.

Corresponding to these two spin structures (labelled `i' and `u',
respectively), we construct classes in $\KO^{\rm t}_{1}(\rm{pt})$
given by $[\topsp{S}_{\rm i}^{1},\id_{\topsp{S}^{1}}^{\real},\zeta]$ and $[\topsp{S}_{\rm
  u}^{1},\id_{\topsp{S}^{1}}^{\real},\zeta]$ where $\zeta:\topsp{S}^{1}\to{\rm{pt}}$
is as usual the collapsing map. We will now compute the topological
indices in detail, finding the AMS invariants~\cite{LM}
\begin{displaymath}
\widehat{\mathcal{A}}_{\id_{{\topsp{S}}^{1}}^{\real}}\big(\topsp{S}_{\rm i}^{1}
\big)~=~1 \ , \qquad
\widehat{\mathcal{A}}_{\id_{{\topsp{S}}^{1}}^{\real}}\big(\topsp{S}_{\rm u}^{1}\big)~=~0
\end{displaymath}
in $\KO^{-1}(\rm{pt})\simeq{\mathbb{Z}_{2}}$. Hence the two classes
above represent the elements of $\KO^{\rm
  t}_{1}(\rm{pt})\simeq{\mathbb{Z}_{2}}$. In particular,
$[\topsp{S}_{\rm i}^{1},\id_{{\topsp{S}}^{1}}^{\real},\zeta] $ is a generator, analogous to the
non-BPS Type~I D-particle that arises from tachyon condensation.

Let us first consider the circle with the interesting spin structure.
Since $\Cl_{1}\simeq{\mathbb{C}}$, one has
$~\nslash{\mathfrak{S}}(\topsp{S}^1):=P_{{\,\Spin}}(\topsp{S}^{1})\times_{\zed_2}\Cl_{1}\simeq
{\topsp{S}^{1}\times{\mathbb{C}}}$. By decomposing
$\mathbb{C}=\mathbb{R}\oplus{\ii\mathbb{R}}$, one has the
identifications
$~\nslash{\mathfrak{S}}^{0}(\topsp{S}^1)=\topsp{S}^{1}\times{\mathbb{R}}$ and
$~\nslash{\mathfrak{S}}^{1}(\topsp{S}^1)=\topsp{S}^{1}\times{\ii\mathbb{R}}$. As the
Clifford bundle is trivial, its space of sections is given by
$\C^\infty(\topsp{S}^1,~\nslash{\mathfrak{S}}(\topsp{S}^1))=
\C^{\infty}(\topsp{S}^{1},\mathbb{C})$. 
By coordinatizing the circle $\topsp{S}^{1}$ with arc length $s$, the
Atiyah-Singer operator  can be expressed as
\beq
\,\nslash{\mathfrak{D}}^{\topsp{S}^1}=\ii\,\mbox{$\frac{\dd}{\dd s}$}
\label{ASopS1}\eeq
where $e_{1}=\ii$ is a generator of the Clifford algebra $\Cl_{1}$. To
compute the topological index
$\widehat{\mathcal{A}}_{\id_{{\topsp{S}}^{1}}^{\real}}(\topsp{S}_{\rm i}^{1})$, we
use the $\Cl_1$-index Theorem~\ref{spinthm} and hence determine the vector
space $\ker~\nslash{\mathfrak{D}}^{\topsp{S}^1}$, or equivalently the chiral
subspace $\ker(\,\nslash{\mathfrak{D}}^{\topsp{S}^1})^{0}$. Since
$\C^\infty(\topsp{S}^1,~\nslash{\mathfrak{S}}^{0})=
\C^{\infty}(\topsp{S}^{1},\mathbb{R})$, the kernel of the chiral
Atiyah-Singer operator $(\,\nslash{\mathfrak{D}}^{\topsp{S}^1})^{0}:
\C^\infty(\topsp{S}^1,~\nslash{\mathfrak{S}}^{0}) \rightarrow{}
\C^\infty(\topsp{S}^1,~\nslash{\mathfrak{S}}^{1})$ is given by the space of
real-valued constant functions on $\topsp{S}^{1}$. The dimension of this
vector space, as a module over $\Cl_{1}^{0}\simeq{\mathbb{R}}$, is 1
and hence
\begin{displaymath}
\ind_1^{\rm t}\big(\topsp{S}_{\rm
  i}^{1}\,,\,\id_{\topsp{S}^{1}}^{\real}\,,\,\zeta\big)~=~
\big[\ker(\,\nslash{\mathfrak{D}}^{\topsp{S}^1})^{0}\big]~=~1
\end{displaymath}
in ${\mathfrak{M}}_{0}/\imath^{*}{\mathfrak{M}}_{1}\simeq
{\KO^{-1}(\rm{pt})} \simeq{\mathbb{Z}_{2}}$. (Note that here we are
using \emph{ungraded} Clifford modules.)

We now turn to the uninteresting spin structure on $\topsp{S}^{1}$.
This time the bundle $~\nslash{\mathfrak{S}}(\topsp{S}^{1})$ is the
(infinite complex) M\"obius bundle. It can be described by a
trivialization made of three charts $U_{1}$, $U_{2}$ and $U_{3}$ with
$\mathbb{Z}_{2}$-valued transition functions $g_{12}=1$, $g_{23}=1$
and $g_{31}=-1$. In this case, the vector space
$\ker(\,\nslash{\mathfrak{D}}^{\topsp{S}^1})^{0}$ consists of locally constant
real-valued functions $\psi_{i}$ defined on $U_{i}$ which satisfy
$\psi_{j}=g_{ji}\,\psi_{i}$ on the intersections
$U_{i}\cap{U_{j}}\neq{\emptyset}$. Because of the non-trivial
transition function $g_{31}$, there are no non-zero solutions $\psi$
to the equation $(\,\nslash{\mathfrak{D}}^{\topsp{S}^1})^{0}\psi=0$. The kernel
$\ker(\,\nslash{\mathfrak{D}}^{\topsp{S}^1})^{0}$ is thus trivial, and so
\begin{displaymath}
\ind_1^{\rm t}\big(\topsp{S}_{\rm
  u}^{1}\,,\,\id_{\topsp{S}^{1}}^{\real}\,,\,\zeta\big)=0 \ .
\end{displaymath}
Let us now consider the structure of the group $\KO_2^{\rm
  t}(\pt)$. Analogously to the construction above, one can equip the
torus $\topsp{T}^2=\topsp{S}^{1}\times{\topsp{S}^{1}}$ with an ``interesting'' spin
structure and show that
\begin{displaymath}
\widehat{\mathcal{A}}_{\id_{\topsp{T}^2}^{\real}}\big(\topsp{T}^2\big)=1 \ ,
\end{displaymath}
and also that
\begin{displaymath}
\widehat{\mathcal{A}}_{\id_{\topsp{S}^2}^{\real}}\big(\topsp{S}^{2}\big)={0}
\end{displaymath}
in $\KO^{-2}(\rm{pt})\simeq\zed_2$. It follows that the classes
$[\topsp{T}^2,\id_{\topsp{T}^2}^{\real},\zeta]$ and
$[\topsp{S}^{2},\id_{\topsp{S}^2}^{\real},\zeta]$ represent the elements of the
group $\KO^{\rm t}_{2}(\rm{pt})\simeq\zed_2$. In particular,
$[\topsp{T}^2,\id_{\topsp{T}^2}^{\real},\zeta]$ is a generator, and
it is analogous to the Type~I non-BPS D-instanton which is usually
constructed as the $\Omega$-projection of the Type~IIB D$(-1)$
brane-antibrane system. We
will now give some details of these results.

Equip $\topsp{T}^2$ with the flat metric
$\dd\theta_{1}\otimes{\dd\theta_{1}}+\dd\theta_{2}\otimes{\dd\theta_{2}}$,
where $(\theta_{1},\theta_{2})$ are angular coordinates on
$\topsp{S}^{1}\times\topsp{S}^1$. Since $\topsp{T}^2$ is a Lie group, its tangent bundle is
trivializable, and hence the oriented orthonormal frame bundle is
canonically given by $P_{{\,\SO}}(\topsp{T}^2)=\topsp{T}^2\times{\topsp{S}^{1}}$. Consider the
spin structure on $\topsp{T}^2$ given by
\begin{displaymath}
P_{{\,\Spin}}\big(\topsp{T}^2\big)=\topsp{T}^2\times{\topsp{S}^{1}}~\xrightarrow{\Id_{\topsp{T}^2}
\times{z^{2}}}~\topsp{T}^2\times{\topsp{S}^{1}} \ .
\end{displaymath}
Since $\Cl^{\phantom{0}}_{2}\simeq{\mathbb{H}}$ and
$\Cl^{0}_{2}\simeq{\mathbb{C}}$, the corresponding Clifford bundles are
$~\nslash{\mathfrak{S}}(\topsp{T}^2)=\topsp{T}^2\times\mathbb{H}$ and
$\nslash{\mathfrak{S}}^{0}(\topsp{T}^2)=\topsp{T}^2\times\mathbb{C}$. In the
riemannian coordinates $(\theta_{1},\theta_{2})$, the Atiyah-Singer
operator can be expressed as
\begin{displaymath}
\,\nslash{\mathfrak{D}}^{\topsp{T}^2}=\sigma_{1}\,\mbox{$\frac{\partial}
{\partial{\theta_{1}}}$}+\sigma_{2}\,\mbox{$
\frac{\partial}{\partial{\theta_{2}}}$}
\end{displaymath}
where the Pauli spin matrices
\begin{displaymath}
\sigma_{1}=\left(\begin{array}{cc}
0&{1}\\
1&{0}
\end{array}\right) \ , \qquad \sigma_{2}=\left(\begin{array}{cc}
0&{-\ii}\\
\ii&{0}
\end{array}\right)
\end{displaymath}
represent the generators $e_{1},\:e_{2}$ of $\Cl_{2}$, acting by left
multiplication. The chiral operator $(\,\nslash{\mathfrak{D}}^{\topsp{T}^2})^{0}$ is
locally the Cauchy-Riemann operator, and hence its kernel consists of
holomorphic sections of the chiral Clifford bundle
$~\nslash{\mathfrak{S}}^{0}(\topsp{T}^2)$. These are simply the
complex-valued constant functions on $\topsp{T}^2$, as the torus is a compact
complex manifold. As a module over $\Cl^{0}_{2}$, this vector space is
one-dimensional and so
\begin{displaymath}
\ind^{\rm t}_{2}\big(\topsp{T}^2\,,\,\id_{\topsp{T}^2}^{\real}\,,\,\zeta
\big)~=~\big[\ker(\,\nslash{\mathfrak{D}}^{\topsp{T}^2})^{0}
\big]~=~1
\end{displaymath}
in ${\mathfrak{M}}_{1}/\imath^{*}{\mathfrak{M}}_2
\simeq{\KO^{-2}(\rm{pt})} \simeq{\mathbb{Z}_{2}}$.

Consider now the two-sphere $\topsp{S}^{2}$ as a riemannian manifold.
It is not difficult to see that
\begin{displaymath}
P_{{\,\SO}}\big(\topsp{S}^{2}\big)=\SO(3)~\longrightarrow~\SO(3)/{\SO(2)}
\simeq{\topsp{S}^{2}}
\end{displaymath}
is the oriented orthonormal frame bundle over $\topsp{S}^{2}$.
The (unique) spin structure on $\topsp{S}^{2}$ is thus given by
\begin{displaymath}
\xymatrix{P_{{\,\Spin}}\big(\topsp{S}^{2}\big)\simeq\SU(2)~\ar[r]^{h}~\ar[rd]_{\UU(1)} &
~P_{{\,\SO}}\big(\topsp{S}^{2}\big)\simeq\SO(3)\ar[d]^{\SO(2)}\\ &\topsp{S}^{2}}
\end{displaymath}
with $h:\SU(2)\rightarrow{\SO(3)}$ the usual double covering, and by
\begin{displaymath}
\xymatrix{\UU(1)~\ar[r]~&{~P_{{\,\Spin}}\big(\topsp{S}^{2}\big)}\ar[d]\\
& {\topsp{S}^{2}} }
\end{displaymath}
which is the Hopf fibration of $\topsp{S}^2$. Recall that the group
$\Spin(2)\simeq{\U(1)}\simeq{\SO(2)}$ acts on $\Cl_{2}\simeq{\mathbb{H}}$
as multiplication by
\begin{displaymath}
\left(\begin{array}{cc}
\e^{\ii\theta}&0\\
0&\e^{-\ii\theta}
      \end{array}\right)\quad , \quad  \theta\in[0,2\pi) \ .
\end{displaymath}
If one gives the sphere $\topsp{S}^{2}$ the structure of the complex
projective line $\complex\P^1$, then there are isomorphims
$~\nslash{\mathfrak{S}}^{0}(\topsp{S}^{2})=P_{{\,\Spin}}(\topsp{S}^{2})\times_{\UU(1)}\mathbb{C}
\simeq{T^{1,0}\complex\P^1}$ since the bundle
$~\nslash{\mathfrak{S}}^{0}(\topsp{S}^{2})$ has the same transition functions
as the Hopf fibration. In other words,
$~\nslash{\mathfrak{S}}^{0}(\topsp{S}^{2})$ is isomorphic to the canonical
line bundle $L_\complex$ over $\complex\P^1$. The vector space
$\ker(\,\nslash{\mathfrak{D}}^{\topsp{S}^2})^{0}$ thus consists of the holomorphic
sections of $L_\complex$. The only such section on $\complex\P^1$ is the zero
section \cite{miranda}, and we finally find
 \begin{displaymath}
 \ind^{\rm t}_{2}\big(\topsp{S}^{2}\,,\,\id_{\topsp{S}^2}^{\real}\,,\,\zeta
\big)~=~\big[\ker(\,\nslash{\mathfrak{D}}^{\topsp{S}^2})^{0}\big]~=~0
 \end{displaymath}
in ${\mathfrak{M}}_{1}/\imath^{*}{\mathfrak{M}}_2
\simeq{\mathbb{Z}_{2}}$.
\begin{remark}
As we have seen above, the
problem of finding generators of the geometric KO-homology groups of a
space $X$ becomes
increasingly involved at a very rapid rate. Even in the case of
spherical D-branes, we have not been able to find a nice explicit
solution. Nevertheless, at least in these cases we can find a
formal solution as follows, which also illustrates the generic
problems at hand.

Suppose that we want to construct generating branes for the group $\KO^{\rm
  t}_{k}(\topsp{S}^{n})$ for some $n>0$. Poincar\'e duality gives the map
\beq
\KO^{n-k}\big(\topsp{S}^{n}\big)~\longrightarrow~{\KO^{\rm t}_{k}\big(
\topsp{S}^{n}\big)} \ , \qquad
\xi~\longmapsto~\xi\,\cap\,\big[\topsp{S}^n\,,\,\id_{\topsp{S}^n}^\real\,,\,
\Id_{\topsp{S}^n}^{\phantom{1}}\big] \ .
\label{PDKOSn}\eeq
As Poincar\'e duality is a group isomorphism, picking a generator in
$\KO^{n-k}(\topsp{S}^{n})$ will give a generator in $\KO^{\rm t}_{k}(\topsp{S}^{n})$. But
the problem is that the class $\xi$ is not a (virtual or stable)
vector bundle over $\topsp{S}^{n}$ in the cases of interest $k<n$. To this
end, we rewrite the cap product in (\ref{PDKOSn}) by using the
suspension isomorphism $\Sigma$ and the desuspension $\Sigma^{-1}$ to
get
\begin{displaymath}
\xi\,\cap\,\big[\topsp{S}^n\,,\,\id_{\topsp{S}^n}^\real\,,\,
\Id_{\topsp{S}^n}^{\phantom{1}}\big]=
\Sigma^{-1}\Big(\Sigma\big(\xi\big)\,\cap\,\Sigma\big[\topsp{S}^n\,,\,
\id_{\topsp{S}^n}^\real\,,\,\Id_{\topsp{S}^n}^{\phantom{1}}\big]\Big) \ .
\end{displaymath}
As we are interested only in generators, we can substitute
$\Sigma(\xi)$ with the generators of the KO-theory group
$\KO^0(\Sigma^{n-k}\topsp{S}^{n})=\widetilde{\KO}{}^0(\topsp{S}^{2n-k})$. The
generators of the latter groups are given by~\cite{KAROUBI} the canonical
line bundle $L_{\mathbb{F}}$ over the projective line $\mathbb{F}\P^1$, with
$\mathbb{F}$ the reals $\real$ for $k=2n-1$, the complex numbers
$\complex$ for $k=2n-2$, the quaternions $\mathbb{H}$ for $k=2n-4$ and
the octonions $\mathbb{O}$ for $k=2n-8$.
\end{remark}

\chapter{Abelian Gauge Theories and Differential Cohomology}
We have seen in the previous chapters that D-brane charges require the use of K-theory and K-homology to be properly described. Morever, since D-branes are electric and magnetic sources  for Ramond-Ramond fields, we expect that some form of K-theory will play a relevant role also in the description of these generalized gauge fields. In this chapter we introduce some basic notions of \emph{generalized differential cohomology theories}, which we will see constitute a powerful mathematical machinery to describe abelian gauge theories of differential forms, and in particular the theory of Ramond-Ramond fields. We first motivate the use of this formalism in the case of ordinary electromagnetism, following \cite{Freed2000,Freed2006}.
\section{An example: the electromagnetic case}\label{generalizedmaxwell}
In ordinary electromagnetism formulated on the four dimensional Minkowski spacetime $\topsp{M}^{4}=\mathbb{R}_{t}\times{\mathbb{R}^{3}}$, the Maxwell equations are given by
\begin{equation}\label{maxwell}
\begin{array}{c}
{\rm dF=0}\\
{\rm d\star{F}}=j_{e}
\end{array}
\end{equation} 
where ${\rm F\in\Omega^{2}(M;\mathbb{R})}$, and where $j_{e}\in{\rm \Omega^{3}(M;\mathbb{R})}$ is the electric current distribution. Since $\rm M^{4}$ is a contractible space\footnote{More precisely, we only use that every 2-sphere is the boundary of a 3-ball.}, equations (\ref{maxwell}) and Poincar\'e's Lemma imply that there exists a form $\rm A\in\Omega^{1}(M;\mathbb{R})$, called the \emph{vector potential}, such that
\begin{equation}\label{vecpot}
\rm F=dA
\end{equation}
As we have seen in section \ref{generalized}, the total electric charge of the distribution can be identified with the class $[j_{e}|_{\mathbb{R}^{3}}]$ in $\rm H^{3}_{cpt}(\mathbb{R}^{3};\mathbb{R})\simeq{\mathbb{R}}$.\\
We can modify the equations (\ref{maxwell}) by introducing a magnetic current $j_{m}\in{\rm\Omega^{3}(M;\mathbb{R})}$, and allowing the equations
\begin{equation}\label{maxwell2}
\begin{array}{c}
{\rm dF}=j_{m}\\
{\rm d\star{F}}=j_{e}
\end{array}
\end{equation}
The magnetic charge of the system $Q_{m}$ is given by the class $[j_{m}|_{\mathbb{R}^{3}}]$ in $\rm H^{3}_{cpt}(\mathbb{R}^{3};\mathbb{R})$.\\
The equations (\ref{maxwell2}) have changed the ``global properties'' of $\rm F$: indeed, it is no longer a closed form, and equation (\ref{vecpot}) no longer holds. If we denote with $\topsp{W}_{e}$ and $\topsp{W}_{m}$ the support of the forms $j_{e}$ and $j_{m}$ respectively, and supposing $\topsp{W}_{e}\cap\topsp{W}_{m}=\O$, we can consider the following equations defined on $\topsp{M}^{4}-\topsp{W}_{m}$
\begin{equation}\label{maxwell3}
\begin{array}{c}
{\rm dF}=0\\
{\rm d\star{F}}=j_{e}
\end{array}
\end{equation}
Notice at this point that equations (\ref{maxwell3}) do not imply equation (\ref{vecpot}), since the space $\topsp{M}^{4}-\topsp{W}_{m}$ is in general not contractible. However, for any contractible open set $\mathcal{U}_{\alpha}\subset{\topsp{M}^{4}-\topsp{W}_{m}}$ we have
\begin{displaymath}
\rm F|_{\mathcal{U}_{\alpha}}=dA_{\alpha}
\end{displaymath}
and on overlaps $\mathcal{U}_{\alpha}\cap\mathcal{U}_{\beta}$ we have
\begin{displaymath}
{\rm A_{\alpha}-A_{\beta}=dg_{\alpha\beta}}\quad{g_{\alpha\beta}\in{C^{\infty}(\topsp{M};\mathbb{R})}}
\end{displaymath}
Finally, on triple overlaps $\mathcal{U}_{\alpha}\cap\mathcal{U}_{\beta}\cap\mathcal{U}_{\gamma}$
\begin{displaymath}
g_{\alpha\beta}+g_{\beta\gamma}+g_{\gamma\alpha}=c_{\alpha\beta\gamma}
\end{displaymath} 
where $c_{\alpha\beta\gamma}$ is a real constant over the entire triple overlap. The fact that in the presence of magnetic charges the vector potential is not globally defined requires that the coupling term 
\begin{displaymath}
\int_{\topsp{M}^{4}}{\rm A}\wedge{j_{e}}=\int_{\topsp{W}_{e}}{\rm A}
\end{displaymath}
be carefully defined, as $\topsp{W}_{e}$ could intersect more patches $\mathcal{U}_{\alpha}$. However, one can show that the classical action evaluated over a given worldline $\topsp{W}_{e}$ is ambiguous up to a constant $c_{\alpha\beta\gamma}$ \cite{Alvarez}, and this does not affect the classical equations of motion. Indeed, equations (\ref{maxwell3}) only depend on the fieldstrength $\rm F$, which is a globally defined two form. At the quantum level, instead, this classical ambiguity leads to inconsistencies, unless some restrictions are imposed on the collection of all $\{c_{\alpha\beta\gamma}\}$. For example, in the path integral quantization, an ambiguous phase factor of $\exp\{i\:c_{\alpha\beta\gamma}\}$ is potentially present at each non-empty triple intersection of patches: the only way to avoid this ambiguity is to require that $c_{\alpha\beta\gamma}=2\pi\:n_{\alpha\beta\gamma}$, where $n_{\alpha\beta\gamma}$ are \emph{integer} numbers. Moreover, the constants $c_{\alpha\beta\gamma}$ can be directly related to the total magnetic flux, hence the total magnetic charge, implying that the two form $\rm F$ restricted to $\topsp{M}^{4}-\topsp{W}_{m}$ has \emph{integral periods}. This is the so called \emph{Dirac quantization condition}, and was proposed by Dirac in \cite{dirac}, albeit in a different way. Mathematically, the Dirac quantization condition states that the class $[\rm F]$ in ${\rm H}^{2}(\topsp{M}^{4}-\topsp{W}_{m};\mathbb{R})$ lies in the image of the map
\begin{displaymath}
{\rm H}^{2}(\topsp{M}^{4}-\topsp{W}_{m};\mathbb{Z})\to {\rm H}^{2}(\topsp{M}^{4}-\topsp{W}_{m};\mathbb{R})
\end{displaymath}
induced in cohomology by the inclusion $\mathbb{Z}\hookrightarrow\mathbb{R}$.\\
Notice at this point that the above argument applies even to the case of Maxwell equations on a spacetime $\topsp{M}=\mathbb{R}_{t}\times{\topsp{N}}$ in then absence of any magnetic current, provided that ${\rm H}^{2}(\topsp{N};\mathbb{R})\neq{0}$. Also in this case, the Dirac quantization condition requires that the fieldstrength $\rm F$ has integer periods. Differential geometry provides a beautiful solution to the necessity of having a local vector potential $\rm A$ in the quantum theory, without forgetting the obstructions that prevent its global existence, and including the Dirac quantization condition. Indeed, one regards the fieldstrength $\rm F$ as the curvature of a connection $\rm A$ defined on a principal U(1)-bundle $\pi:\mathcal{L}\to\topsp{M}$ with first Chern class $c_{1}(\mathcal{L})=[c_{\alpha\beta\gamma}]\in\topsp{H}^{2}(\topsp{M};\mathbb{Z})$. The relevant space\footnote{It is actually a \emph{groupoid}, where the morphisms are given by connection preserving gauge transformations.} of fields for the quantum theory is then given by the space of all principal U(1)-bundles with connection over $\topsp{M}$. Notice that this space extends the space of classical solutions of Maxwell equations by the flat connections, which may contribute nontrivially to the quantum theory.\\ 

As we have seen through the argument above, the Dirac quantization of charges is a required condition in order to have a well defined quantum theory. We have also seen that the theory of principal bundles can be used to geometrically encode this condition. However, this framework cannot be applied to the case of generalized electromagnetic theories introduced in section \ref{generalized}: indeed, since the fieldstrength $\rm F$ is given by a $p$-form, it cannot be realized as the curvature of some connection. One can then resort to a local description of these gauge fields, as done for the B-field in section \ref{Bfield}. This approach has several disadvantages, though: it is usually difficult in this framework to determine the space of gauge equivalent field configurations, over which the path integral should be performed, and in particular it is difficult to introduce a coupling of these fields with their sources, as one needs a proper notion of pullback, integration, etc. In the following section, we will introduce the proper mathematical formalism to treat these fields, whose recent development has been greatly motivated by the very problems mentioned above.    
\section{Differential Cohomology}
In the past 30 years, Differential Cohomology has appeared in the mathematical literature as the theory of differential characters, Deligne cocycles, sparks, and more recently differential functions \cite{cheeger,Brylinski,sparks,Hopkins2005}. From the mathematical point of view, it has provided a refinement of the theory of characteristic classes and characteristic forms, which, in the appropriate contexts, gives rise to obstruction to conformal immersion of Riemannian manifolds in euclidean spaces; more recently, in combination with differential K-theory, it has been used to construct an index theorem for flat bundles. As mentioned above, from the physical point of view it constitutes a powerful formalism to describe gauge theories of $p$-forms whose Dirac quantization condition is dictated by integer cohomology. In the following we will focus on two particular descriptions of Differential Cohomology, which use Cheeger-Simons characters and Deligne cocycles, respectively. We will illustrate the relation between differential cohomology and electromagnetism with Dirac quantization of charges, and we will give a definition of \emph{generalized abelian gauge theories} in terms of these objects.
\subsection{Cheeger-Simons characters}\label{cheeger}
Let $\topsp{M}$ denote a smooth manifold, and let $\mathcal{C}_{k}\supset\mathcal{Z}_{k}\supset\mathcal{B}_{k}$ denote the groups of smooth singular chains, cycles, and boundaries with $\partial:\mathcal{C}_{k}\to\mathcal{C}_{k-1}$ and $\delta:\mathcal{C}^{k}\to\mathcal{C}^{k+1}$ the usual boundary and coboundary operators, respectively. Let $\Omega^{k}_{\mathbb{Z}}(\topsp{M})\subset\Omega^{k}(\topsp{M})$ denote the lattice of closed $k$-forms with integral periods. Notice that if we denote with $r$ the map induced on cohomology by the inclusion $\mathbb{Z}\hookrightarrow{\mathbb{R}}$, then a $k$-form $\omega$ has integral periods if and only if $[\omega]=r(u)$, for some $u\in{\rm H}^{k}(\topsp{M};\mathbb{Z})$. Given $\omega\in\Omega^{k}(\topsp{M})$, we have the map
\begin{equation}\label{formmap}
\Omega^{k}(\topsp{M})\to\mathcal{C}^{k}(\topsp{M};\mathbb{R}/\mathbb{Z})
\end{equation}
which assigns to $\omega$ the $\mathbb{R}/\mathbb{Z}$-valued cochain defined as
\begin{displaymath}
\tilde{\omega}(\sigma):=\int_{\sigma}\omega\:\text{mod}\:\mathbb{Z},\quad \forall{\sigma\in\mathcal{C}_{k}(\topsp{M})}
\end{displaymath}
As the integral of a generic differential form over the set of all cycles never takes values only in $\mathbb{Z}$, the map (\ref{formmap}) is an injection, and we denote the image of $\Omega^{k}(\topsp{M})$ in $\mathcal{C}^{k}(\topsp{M};\mathbb{R}/\mathbb{Z})$ with $\widetilde{\Omega}^{k}(\topsp{M})$. We have then the following
\begin{definition}
The \emph{n-th Cheeger-Simons group} of a smooth manifold $\topsp{M}$ is the group
\begin{displaymath}
\check{\rm H}^{n}(\topsp{M}):=\left\{f\in{\rm Hom}\left(\mathcal{Z}_{n-1},\mathbb{R}/\mathbb{Z}\right):\:f\circ\partial\:\in\:\widetilde{\Omega}^{n}(\topsp{M})\right\}
\end{displaymath}
We set $\check{\rm H}^{0}(\topsp{M})=\mathbb{Z}$. The elements of $\check{\rm H}^{*}(\topsp{M}):=\oplus_{n}\check{\rm H}^{n}(\topsp{M})$ are called \emph{differential characters}.
\end{definition}
A smooth map $\phi:\topsp{M}\to\topsp{N}$ naturally induces a homomorphism 
\begin{displaymath}
\phi^{*}:\check{\rm H}^{n}(\topsp{N})\to\check{\rm H}^{n}(\topsp{M})
\end{displaymath}
Hence, $\check{\rm H}^{n}$ is a contravariant functor from the category of smooth manifolds to the category of abelian groups.\\
A main result is the following \cite{cheeger}
\begin{proposition}
There exist surjective maps 
\begin{displaymath}
\begin{array}{c}
\check{\rm H}^{n}(\topsp{M})\xrightarrow{\rm F}\Omega^{n}_{\mathbb{Z}}(\topsp{M})\\
\check{\rm H}^{n}(\topsp{M})\xrightarrow{\rm c}{\rm H}^{n}(\topsp{M};\mathbb{Z})
\end{array}
\end{displaymath}
for any smooth manifold $\topsp{M}$ and any $n\in\mathbb{N}$.\\
The map $\rm F$ is called the \emph{fieldstrength map}, and the map ${\rm c}$ is called the \emph{characteristic class map}.
\end{proposition}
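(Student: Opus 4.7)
The plan is to exhibit both maps explicitly from the defining condition $f\circ\partial\in\widetilde{\Omega}^{n}(\topsp{M})$, verify that they land in the asserted target groups, and then establish surjectivity by invoking the de Rham theorem together with a short cochain-level construction. First I would define the fieldstrength map. Given $f\in\check{\rm H}^{n}(\topsp{M})$, the defining property provides a form $\omega\in\Omega^{n}(\topsp{M})$ with $\tilde\omega=f\circ\partial$; uniqueness of $\omega$ is immediate because the inclusion $\Omega^{n}(\topsp{M})\hookrightarrow\mathcal{C}^{n}(\topsp{M};\mathbb{R}/\mathbb{Z})$ from equation (\ref{formmap}) is injective (a smooth form is determined by its integrals over small singular simplices). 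Set ${\rm F}(f):=\omega$. For any smooth singular $(n+1)$-chain $\sigma$, Stokes gives $\int_{\sigma}d\omega=\int_{\partial\sigma}\omega\equiv f(\partial\partial\sigma)=0\pmod{\mathbb{Z}}$, so $\int_{\sigma}d\omega\in\mathbb{Z}$; continuity in $\sigma$ forces $d\omega=0$. For any cycle $z\in\mathcal{Z}_{n}$ one has $\int_{z}\omega\equiv f(\partial z)=0\pmod{\mathbb{Z}}$, so $\omega\in\Omega^{n}_{\mathbb{Z}}(\topsp{M})$.

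Next I would construct the characteristic class map. Given $f\in\check{\rm H}^{n}(\topsp{M})$, lift $f\in{\rm Hom}(\mathcal{Z}_{n-1},\mathbb{R}/\mathbb{Z})$ to a real cochain $\hat f\in\mathcal{C}^{n-1}(\topsp{M};\mathbb{R})$ (such a lift exists because $\mathcal{Z}_{n-1}$ is a subgroup of the free abelian group $\mathcal{C}_{n-1}$ and $\mathbb{R}\twoheadrightarrow\mathbb{R}/\mathbb{Z}$). Then the real cochain
\begin{equation}
\alpha_{f}:=\delta\hat f-{\rm F}(f)\in\mathcal{C}^{n}(\topsp{M};\mathbb{R})
\label{eq:alphafdef}
\end{equation}
takes integer values, since for any chain $c\in\mathcal{C}_{n}$ one has $\delta\hat f(c)=\hat f(\partial c)\equiv f(\partial c)={\rm F}(f)(c)\pmod{\mathbb{Z}}$. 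Moreover $\alpha_{f}$ is a cocycle because ${\rm F}(f)$ is closed and $\delta^{2}\hat f=0$. I would then define ${\rm c}(f):=[\alpha_{f}]\in{\rm H}^{n}(\topsp{M};\mathbb{Z})$ and check independence of the lift: a different choice $\hat f'=\hat f+\gamma$ with $\gamma$ vanishing on $\mathcal{Z}_{n-1}$ (hence integer-valued up to redefinition) changes $\alpha_{f}$ by an integer coboundary.

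For surjectivity, both statements are reduced to the de Rham theorem. Given $\omega\in\Omega^{n}_{\mathbb{Z}}(\topsp{M})$, write $[\omega]=r(u)$ for some $u\in{\rm H}^{n}(\topsp{M};\mathbb{Z})$ represented by an integer cocycle $\alpha$; by naturality of the de Rham isomorphism there is a real cochain $\beta\in\mathcal{C}^{n-1}(\topsp{M};\mathbb{R})$ with $\delta\beta=r(\alpha)-\omega$ as real cochains. Setting $f:=(-\beta)|_{\mathcal{Z}_{n-1}}\bmod\mathbb{Z}$ gives
\begin{equation}
f(\partial c)\equiv-\delta\beta(c)\equiv\omega(c)\pmod{\mathbb{Z}}
\quad\forall\,c\in\mathcal{C}_{n}(\topsp{M}),
\end{equation}
so $f\in\check{\rm H}^{n}(\topsp{M})$ and ${\rm F}(f)=\omega$. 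The very same $f$ has ${\rm c}(f)=[\alpha]=u$, as one sees by taking the lift $\hat f=-\beta$ in (\ref{eq:alphafdef}) and getting $\alpha_{f}=-\delta\beta-\omega=-r(\alpha)$ (the sign is absorbed into the choice of representative of $u$); starting from a prescribed $u$ and running the same construction proves surjectivity of ${\rm c}$.

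The main obstacle is the bookkeeping of signs and lifts in the last paragraph: one must ensure that the cochain $\beta$ witnessing $[r(\alpha)]=[\omega]$ in real cohomology, the lift $\hat f$ used in defining ${\rm c}$, and the passage between $\mathbb{R}$-valued and $\mathbb{R}/\mathbb{Z}$-valued cochains are all chosen compatibly so that the resulting differential character genuinely maps to the prescribed fieldstrength (respectively, integer class). Everything else — well-definedness of ${\rm F}$, closedness and integrality of ${\rm F}(f)$, and integrality and cocycle condition for $\alpha_{f}$ — is a direct diagram chase using Stokes' theorem and the definition of $\check{\rm H}^{n}(\topsp{M})$.
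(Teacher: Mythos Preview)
Your approach is essentially the paper's: lift $f$ to a real cochain $\hat f$ (the paper's $T$), write $\delta\hat f-\omega$ as an integer cocycle, and read off $\mathrm{F}(f)=\omega$ and $\mathrm{c}(f)$; surjectivity is handled identically via the de~Rham theorem. Two small imprecisions are worth tightening: in the independence-of-lift step the difference $\gamma=\hat f'-\hat f$ is only \emph{integer-valued} on $\mathcal{Z}_{n-1}$, not vanishing there, so one should decompose $\gamma=\delta\alpha+\beta$ with $\beta$ an integer cochain (as the paper does) to see that $\alpha_f$ changes by the integer coboundary $\delta\beta$; and in the surjectivity paragraph the sign is not ``absorbed into the choice of representative'' (one genuinely gets $-u$), but this is harmless since negation is a bijection on $\mathrm{H}^n(\mathrm{M};\mathbb{Z})$.
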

\begin{proof}
Let $f\in\check{\rm H}^{n}(\topsp{M})$. Consider an element $\rm T^{'}\in{\rm Hom}\left(\mathcal{Z}_{n-1},\mathbb{R}\right)$ such that 
\begin{displaymath}
\tilde{\rm T}^{'}=f
\end{displaymath}
where the tilde means mod $\mathbb{Z}$. Consider now the following (split) exact sequence
\begin{displaymath}
0\to\mathcal{Z}_{n-1}\xrightarrow{i}\mathcal{C}_{n-1}\xrightarrow{\partial}\mathcal{B}_{n-2}\to{0}
\end{displaymath}
Since ${\rm Hom}(\:\cdot\:,\mathbb{R})$ is an exact functor, the following exact sequence holds
\begin{equation}\label{sequence}
0\to{\rm Hom}(\mathcal{B}_{n-2},\mathbb{R})\xrightarrow{\delta}{\rm Hom}(\mathcal{C}_{n-1},\mathbb{R})\xrightarrow{i^{*}}{\rm Hom}(\mathcal{Z}_{n-1},\mathbb{R})\to{0}
\end{equation}
where the map $i^{*}$ is given by restriction. Hence, there exist $\rm T\in{\rm Hom}(\mathcal{C}_{n-1},\mathbb{R})$ such that ${\rm T|_{\mathcal{Z}_{n-1}}=T^{'}}$, and consequently 
\begin{displaymath}
\tilde{\rm T}|_{\mathcal{Z}_{n-1}}=f
\end{displaymath}
Since by construction $\delta{\rm T}:={\rm T}\circ\partial$, and $\widetilde{\delta{\rm T}}=\delta{\tilde{\rm T}}$, we have that 
\begin{displaymath}
\delta{\tilde{\rm T}}=f\circ\partial
\end{displaymath}
By assumption $\delta{\tilde{\rm T}}$ is an element of $\widetilde{\Omega}^{n}(\topsp{M})$. Any such element can be written as $\omega-c$, where $\omega$ is a $n$-form regarded as a real cochain by integration, and $c\:\in\mathcal{C}^{n}(\topsp{M},\mathbb{Z})$. Since $\delta{\rm T}=\omega-c$, we have   
\begin{equation}\label{classes}
0=\delta^{2}{\rm T}={\delta}\omega-\delta{c} ={\rm d}\omega-\delta{c}
\end{equation}
where $\rm d$ denotes the deRham differential. Since the map (\ref{formmap}) is an injection, equation (\ref{classes}) implies that ${\rm d}\omega=0$ and $\delta{c}=0$.\: Finally, since $\delta{\rm T}=\omega-c$ we have that $[\omega]=r([c])$, with $[c]\in{\rm H}^{n}(\topsp{M};\mathbb{Z})$, which implies that $\omega\in\Omega^{n}_{\mathbb{Z}}(\topsp{M})$. The elements $\omega$ and $[c]$ do not depend on the \emph{lift} $\rm T$. Indeed, let $\rm T^{'}$ be another real cochain such that $\tilde{\rm T}^{'}|_{\mathcal{Z}_{n-1}}=f$. Then $\widetilde{\rm T-T^{'}}|_{\mathcal{Z}_{n-1}}=0$, and the sequence (\ref{sequence}) implies that 
\begin{equation}\label{lift}
{\rm T^{'}=T+\delta{\alpha}+\beta}
\end{equation}
with $\alpha\in\mathcal{C}^{n-2}(\topsp{M};\mathbb{R})$ and $\beta\in\mathcal{C}^{n-1}(\topsp{M};\mathbb{Z})$. By the argument above there exist $\omega^{'}$ and $c^{'}$ such that $\delta{\rm T^{'}}=\omega^{'}-c^{'}$. Hence we have
\begin{displaymath}
\omega^{'}-c^{'}=\delta{\rm T^{'}}=\delta{\rm T}+\delta{\beta}=\omega-c+\delta{\beta}
\end{displaymath}
which implies
\begin{displaymath}
\omega^{'}-\omega=c^{'}-c+\delta{\beta}\in\Cl({\rm V};q)
\end{displaymath}
By the same reason as before, we have $\omega^{'}=\omega$ and $[c^{'}]=[c]$.\\
We define $\rm F(f)=\omega$ and ${\rm c}(f)=[c]$, where $\omega$ and $c$ are obtained as above. Moreover, we will refer to the form $\omega$ as the $\emph{fieldstrength}$ and to the class $[\rm c]$ as \emph{the characteristic class}; the use of this terminology will be clear later.\\

To prove that the maps $\rm F$ and $\rm c$ are surjective, notice that any $\omega\in\Omega^{n}_{\mathbb{Z}}(\topsp{M})$ determines a $u\in{\rm H}^{n}(\topsp{M};\mathbb{Z})$ such that $[\omega]=r(u)$, and conversely for any $u$ we can find such a $\omega$. Let $[c]=u$. Then $\omega-c$ is exact as a real cochain, and there exists $\rm T$ such that $\delta{\rm T}=\omega-c$. Then $\tilde{\rm T}|_{\mathcal{Z}_{n-1}}$ defines a homomorphism $f:\mathcal{Z}_{n-1}\to\mathbb{R}/\mathbb{Z}$ with the property that $f\circ\partial\:\in \widetilde{\Omega}^{n}(\topsp{M})$.
\end{proof}
We will now investigate the kernels of the fieldstrength and characteristic class maps.\\
Let $f\in\check{\rm H}^{n}(\topsp{M})$ satisfy ${\rm F}(f)=0$. Then the real cochain $\rm T$ satisfies $\delta{\rm T}=-c$. Hence $\delta{\tilde{\rm T}}=0$, and thus $\tilde{\rm T}$ is an $\mathbb{R}/\mathbb{Z}$-valued cocycle. If we consider another lift $\rm T^{'}$, equation (\ref{lift}) implies that $\tilde{\rm T}^{'}=\tilde{\rm T}+\delta\tilde{\alpha}$, and hence $[\rm T^{'}]=[\rm T] \in{{\rm H}^{n-1}(\topsp{M};\mathbb{R}/\mathbb{Z}})$. So an element $f$ with ${\rm F}(f)=0$ determines an $\mathbb{R}/\mathbb{Z}$ cohomology class. Conversely, given an $\mathbb{R}/\mathbb{Z}$ class represented by a cocyle $s$, we have that $s|_{\mathcal{Z}_{n-1}}$ defines a differential character $f$, and such a definition is independent of the choice of the representing cocycle $s$.\\
Finally, let $f\in\check{\rm H}^{n}(\topsp{M})$ satisfy ${\rm c}(f)=0$. Then $\delta{\rm T}=\omega-c$, with $c=\delta{e}$, for some $e\in\mathcal{C}^{n-1}(\topsp{M};\mathbb{Z})$. Hence, $\delta({\rm T}-e)=\omega$. Since $[\omega]=r([c])$, by the deRham theorem we have that $\omega={\rm d}\theta$, for some $\theta\in\Omega^{n-1}(\topsp{M})$. Since $\delta({\rm T}-e-\theta)=0$, we have that ${\rm T}-e-\theta=z$, for some cocycle $z\in\mathcal{Z}^{n-1}(\topsp{M};\mathbb{R})$. Again by the deRham theorem, there exists a closed form $\phi\in\Omega^{n-1}(\topsp{M})$ such that $\phi|_{\mathcal{Z}_{n-1}}=z|_{\mathcal{Z}_{n-1}}$, which implies that 
\begin{displaymath}
f=\tilde{{\rm T}}|_{\mathcal{Z}_{n-1}}=\widetilde{\theta+\phi+e}=\widetilde{\theta+\phi}
\end{displaymath}
Hence $f$ is in the image of the map $\omega\to\tilde{\omega}|_{\mathcal{Z}_{n-1}}$, with $\omega\in\Omega^{n-1}(\topsp{M})$, whose kernel is given by $\Omega^{n-1}_{\mathbb{Z}}(\topsp{M})$.\\
We have then proved the following \cite{cheeger}
\begin{proposition}
There are natural exact sequences
\begin{displaymath}
\begin{array}{c}
0\to{\rm H}^{n-1}(\topsp{M};\mathbb{R}/\mathbb{Z})\to\check{\rm H}^{n}(\topsp{M})\xrightarrow{{\rm F}}\Omega_{\mathbb{Z}}^{n}(\topsp{M})\to{0}\\
 0\to{\Omega^{k-1}(\topsp{M})/\Omega^{k-1}_{\mathbb{Z}}(\topsp{M})}\to\check{\rm H}^{n}(\topsp{M})\xrightarrow{{\rm c}}{\rm H}^{n}(\topsp{M};\mathbb{Z})\to{0}
\end{array}
\end{displaymath}
\end{proposition}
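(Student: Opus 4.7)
The plan is to assemble the two exact sequences by combining the surjectivity of $\rm F$ and $\rm c$ (already established in the previous proposition) with an identification of their kernels. The material in the paragraphs immediately preceding the statement already does most of the bookkeeping, so the proof is essentially a matter of packaging those observations into maps between abelian groups and verifying the exactness claims at each spot.

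For the first sequence I would first define the map $j:{\rm H}^{n-1}(\topsp{M};\mathbb{R}/\mathbb{Z})\to\check{\rm H}^n(\topsp{M})$ by sending the class of an $\mathbb{R}/\mathbb{Z}$-cocycle $s$ to the differential character $f:=s|_{\mathcal{Z}_{n-1}}$. This is well defined because if $s'=s+\delta u$ for an $\mathbb{R}/\mathbb{Z}$-cochain $u$, then $\delta u$ vanishes on cycles, and the cocycle condition $\delta s=0$ forces $f\circ\partial=0\in\widetilde{\Omega}^n(\topsp{M})$, so $f\in\check{\rm H}^n(\topsp{M})$ with $\rm F(f)=0$. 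To establish ${\rm im}\,j=\ker\rm F$, I would take $f\in\ker \rm F$ and choose a real lift $\rm T$ with $\tilde{\rm T}|_{\mathcal{Z}_{n-1}}=f$; the hypothesis $\rm F(f)=0$ forces $\delta\tilde{\rm T}=0$, so $\tilde{\rm T}$ is an $\mathbb{R}/\mathbb{Z}$-cocycle whose cohomology class is independent of the lift by (\ref{lift}), and its restriction to cycles recovers $f$. Injectivity of $j$ is the observation that if $s|_{\mathcal{Z}_{n-1}}=0$ then $s=\delta u$ for some $\mathbb{R}/\mathbb{Z}$-cochain $u$ by exactness of the analogue of sequence (\ref{sequence}) with coefficients in $\mathbb{R}/\mathbb{Z}$, so $[s]=0$. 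Combined with the surjectivity of $\rm F$ from the previous proposition, this gives the first exact sequence.

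For the second sequence I would define $k:\Omega^{n-1}(\topsp{M})/\Omega^{n-1}_{\mathbb{Z}}(\topsp{M})\to\check{\rm H}^n(\topsp{M})$ by $\theta+\Omega^{n-1}_{\mathbb{Z}}(\topsp{M})\mapsto\tilde{\theta}|_{\mathcal{Z}_{n-1}}$. Well-definedness is the statement that a closed $(n-1)$-form of integral periods produces the zero character on cycles, which follows from Stokes' theorem and the defining integrality condition. The character $k(\theta)$ has $\rm F(k(\theta))={\rm d}\theta$ and integer characteristic class $[\delta e]=0$, so $k$ lands in $\ker\rm c$. Conversely, the argument sketched in the excerpt shows that any $f\in\ker\rm c$ is of the form $\widetilde{\theta+\phi}|_{\mathcal{Z}_{n-1}}$ with $\theta,\phi\in\Omega^{n-1}(\topsp{M})$, and hence lies in the image of $k$. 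Finally I would check injectivity: if $\tilde{\theta}|_{\mathcal{Z}_{n-1}}=0$ then $\int_\sigma\theta\in\mathbb{Z}$ for every cycle $\sigma\in\mathcal{Z}_{n-1}$, and by the hypothesis $f\circ\partial\in\widetilde{\Omega}^n(\topsp{M})$ together with Stokes we conclude that $\theta$ is closed with integral periods, i.e.\ $\theta\in\Omega^{n-1}_{\mathbb{Z}}(\topsp{M})$.

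There is no hard obstacle here; the routine difficulty is simply keeping track of which cochain groups carry which coefficients and verifying that all constructions are independent of the chosen lift $\rm T$. The key technical fact used throughout is the splitting of (\ref{sequence}), which is what allows lifts of cochains from cycles to all chains and which propagates to the $\mathbb{Z}$- and $\mathbb{R}/\mathbb{Z}$-coefficient versions needed above. Once those lifts are in hand, exactness at each of the three positions of each sequence is a direct diagram chase.
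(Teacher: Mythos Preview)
Your proposal is correct and follows essentially the same approach as the paper: the paper likewise identifies $\ker\rm F$ with $\mathbb{R}/\mathbb{Z}$-cohomology classes via the lift $\tilde{\rm T}$, and $\ker\rm c$ with the image of $\theta\mapsto\tilde{\theta}|_{\mathcal{Z}_{n-1}}$ using the de~Rham theorem, then simply states that the kernel of the latter map is $\Omega^{n-1}_{\mathbb{Z}}(\topsp{M})$. If anything you are slightly more explicit than the paper on the injectivity of the two inclusion maps, which the paper leaves implicit.
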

Another useful exact sequence can be obtained as follows. Consider the group
\begin{displaymath}
{\rm A}^{n}(\topsp{M}):=\left\{(\omega,u)\in\Omega_{\mathbb{Z}}^{n}(\topsp{M})\times{\rm H}^{n}(\topsp{M};\mathbb{Z}):[\omega]=r(u)\right\}
\end{displaymath}
We have then a surjective map
\begin{displaymath}
\check{\rm H}^{n}(\topsp{M})\xrightarrow{({\rm F,c})}{\rm A}^{n}(\topsp{M})
\end{displaymath}
The kernel of this map is given by elements $f\in\check{\rm H}^{n}(\topsp{M})$ such that ${\rm F}(f)=0$ and ${\rm c}(f)=0$. By the same argument as before, we have that $\delta({\rm T}-e)=0$, hence the element $f$ determines a class in ${\rm H}^{n-1}(\topsp{M};\mathbb{R})$ represented by the closed form $\phi$ that satisfies ${\rm T}|_{\mathcal{Z}_{n-1}}=\phi+e$. As $\phi$ is a closed form, $\tilde{\rm T}|_{\mathcal{Z}_{n-1}}$ is determined only by the class $[\phi]\in{\rm H}^{n-1}(\topsp{M};\mathbb{R})$. Conversely, any class $[\phi]$ determines a differential character $f=\tilde{\phi}|_{\mathcal{Z}_{n-1}}$, and the kernel of such an assignment is given precisely by $r({\rm H}^{n-1}(\topsp{M};\mathbb{Z}))$. We have then showed that the following exact sequence holds
\begin{equation}
0\to{\rm H}^{n-1}(\topsp{M};\mathbb{R})/r({\rm H}^{n-1}(\topsp{M};\mathbb{Z}))\to\check{\rm H}^{n}(\topsp{M})\xrightarrow{({\rm F,c})}{\rm A}^{n}(\topsp{M})\to{0}
\end{equation}
The above sequences are very important, as they are usually the only computational technique available since the groups $\check{\rm H}^{n}(\topsp{M})$ are usually infinite dimensional.\\

Notice that the group ${\rm A}^{*}(\topsp{M}):=\oplus_{n}{\rm A}^{n}(\topsp{M})$ carries an obvious ring structure given by
\begin{displaymath}
(\omega,u)\cdot(\phi,v):=(\omega\wedge\phi,u\cup{v})
\end{displaymath}
We expect then that the group $\check{\rm H}^{*}(\topsp{M})$ carries an analogous product. This is indeed the case. However, the ring product of differential characters is more subtle to define: this is due to the fact that given two forms $\omega_{1}$ and $\omega_{2}$ of degrees $l$ and $p$, respectively, in general we have 
\begin{displaymath}
\omega_{1}\wedge\omega_{2}\neq\omega_{1}\cup\omega_{2}
\end{displaymath}
where on the right hand side $\omega_{1}$ and $\omega_{2}$ are realized as real cochains. However, one has that
\begin{displaymath}
\delta{E(\omega_{1},\omega_{2})}=\omega_{1}\wedge\omega_{2}-\omega_{1}\cup\omega_{2}
\end{displaymath}
for some $E(\omega_{1},\omega_{2})\in\mathcal{C}^{l+p-1}(\topsp{M};\mathbb{R})$. The technical difficulty consists exactly in constructing the cochain $E(\omega_{1},\omega_{2})$ in a canonical way, and we refer the reader to \cite{cheeger} for details. A definition of the ring product can then be given in the following way. Let $f\in\check{\rm H}^{l}(\topsp{M})$ and $g\in\check{\rm H}^{p}(\topsp{M})$, and choose the lifts ${\rm T}_{f}\in\mathcal{C}^{l-1}(\topsp{M};\mathbb{R})$ and ${\rm T}_{g}\in\mathcal{C}^{p-1}(\topsp{M};\mathbb{R})$ for $f$ and $g$ respectively. Then we can define \cite{cheeger}
\begin{displaymath}
f\star{g}:=\widetilde{{\rm T}_{f}\cup\omega_{f}}-(-1)^{l-1}\widetilde{\omega_{f}\cup{\rm T}_{g}}-\widetilde{{\rm T}_{f}\cup\delta{\rm T}_{g}}+E(\omega_{f},\omega_{g})|_{\mathcal{Z}_{l+p-1}}
\end{displaymath}
and it can be shown that $f\star{g}$ is independent of the choice of ${\rm T}_{f}$ and ${\rm T}_{g}$. Moreover, the product $\star$ is associative, graded commutative, and it is such that the fieldstrength and characteristic class maps are ring homomorphisms. Even if the product $\star$ is fairly complicated in general, there are special cases in which it greatly simplifies. Indeed, if $g\in{\rm ker\:F}\subset\check{\rm H}^{p}(\topsp{M})$, then $f\star{g}$ is the image of $(-1)^{l}{\rm c}(f)\cup{g}$, for $f\in\check{\rm H}^{l}(\topsp{M})$, while if $g\in{\rm ker\:c}\subset\check{\rm H}^{p}(\topsp{M})$, $f\star{g}$ is the image of $(-1)^{l}{\rm F}(f)\wedge{g}$.\\
 
In the following we will give some examples.
\begin{example}
Let $\topsp{M}={\rm S}^{1}$ and consider the group $\check{\rm H}^{1}({\rm S}^{1})$. Since $\mathcal{Z}_{0}(\topsp{S}^{1})$ is freely generated by the points of ${\topsp{S}^{1}}$, we have that $\check{\rm H}^{1}({\rm S}^{1})\simeq{C^{\infty}(\topsp{S}^{1},\topsp{S}^{1})}$, where we have realized $\mathbb{R}/\mathbb{Z}$ as $\topsp{S}^{1}$. Consider then a smooth function
\begin{displaymath}
f:\topsp{S}^{1}\to\topsp{S}^{1}
\end{displaymath}
Any such function $f$ can be expressed as
\begin{displaymath}
f(p)={\rm e}^{2\pi{i}\Theta(f(p))}
\end{displaymath}
 where $\Theta=\theta\circ{f}$, with $\theta$ the usual local angular coordinate on the circle. Then a lift $\rm T$ of $f$ is given by
\begin{displaymath}
{\rm T}=\dfrac{1}{2\pi i}{\rm log}\:f=\Theta(p)+\eta(p)
\end{displaymath}
where $\eta$ is a locally constant $\mathbb{Z}$-valued function. By assumption, we know that on any curve $\gamma$ on $S^{1}$ 
\begin{displaymath}
\delta\tilde{\rm T}(\gamma)={\rm e}^{2\pi{i}(\Theta(b)-\Theta(a))}
\end{displaymath}
where $a$ and $b$ are the endpoints of the curve. The right hand side of the above equation can be written as
\begin{displaymath}
{\rm exp}\left\{2\pi{i}\int_{\gamma}{\rm d}\Theta\right\}
\end{displaymath}
Notice that ${\rm d}\Theta$ is not an exact 1-form, but it is integral, as ${\rm d}\Theta=f^{*}{\rm d}\theta$. Hence we have shown that 
\begin{displaymath}
\dd\left(\dfrac{1}{2\pi{i}}{\rm log}f\right)
\end{displaymath}
 is the fieldstrength of $f$. Since the group ${\rm H}^{1}(\topsp{S}^{1};\mathbb{Z})$ contains no torsion, the fieldstrength determines the characteristic class: hence, we have 
\begin{displaymath}
[c]=f^{*}[\rm d{\theta}]
\end{displaymath}
The same argument can be applied to the group $\check{\rm H}^{1}({\rm M})$ for any smooth manifold $\topsp{M}$.
\end{example}
\begin{example}
For any smooth manifold $\topsp{M}$ with $n={\rm dim}\:\topsp{M}$ we have
\begin{displaymath}
\begin{array}{c}
\check{\rm H}^{n+1}({\rm M})\simeq{\rm H}^{n}(\topsp{M};\mathbb{R}/\mathbb{Z})\\
\check{\rm H}^{k}({\rm M})=0,\:k>n+1
\end{array}
\end{displaymath}
This can be shown by using the exact sequences introduced above.
\end{example} 
\begin{example}
Let $\topsp{M}={\rm pt}$. Then
\begin{displaymath}
\check{\rm H}^{n}({\rm M})=\left\{\begin{array}{l}
\mathbb{Z},\:n=0\\
\mathbb{R}/\mathbb{Z},\:n=1\\
0,\:{\rm otherwise}
\end{array}\right .
\end{displaymath}
\end{example}
\begin{example}
Finally, we present an example coming from differential geometry, which is in a certain sense the ``canonical'' one.\\
Consider a complex line bundle $\mathcal{L}\to\topsp{M}$ with connection $\nabla$, and denote with ${\rm F}_{\nabla}$ its curvature form. Since $\dfrac{1}{2\pi}{\rm F}_{\nabla}$ represents the real first Chern class of $\mathcal{L}$, we have that ${\rm F}_{\nabla}\in\Omega^{2}_{\mathbb{Z}}(\topsp{M})$. Let $\gamma$ be a closed curve, and define 
\begin{displaymath}
f(\gamma):={\rm hol}_{\nabla}(\gamma)\in\mathbb{R}/\mathbb{Z}
\end{displaymath}
where ${\rm hol}_{\nabla}$ denotes the holonomy of the connection $\nabla$. We can extend the homomorphism $f$ to the whole of $\mathcal{Z}_{1}(\topsp{M})$ as follows. Let $z\in\mathcal{Z}_{1}(\topsp{M})$.  Represent $z$ as $z=\sum_{i}n_{i}\gamma_{i}+\partial{y}$, where $\gamma_{i}$ is a closed curve, and $y\in\mathcal{C}_{2}(\topsp{M})$. Then we can define
\begin{displaymath}
f(z):=\prod_{i}f(\gamma_{i})^{n_{i}}+\dfrac{1}{2\pi}\tilde{\rm F}_{\nabla}(y)
\end{displaymath}
The homomorphism $f$ above is well defined, and independent of the presentation of $z$, since ${\rm F}_{\nabla}$ has integral periods. Since $f\circ\partial=\dfrac{1}{2\pi}{\rm F}_{\nabla}$, we have that $f\in\check{\rm H}^{2}({\rm M})$. Moreover, one can check that 
\begin{displaymath}
\begin{array}{c}
{\rm F}(f)=\dfrac{1}{2\pi}\tilde{\rm F}_{\nabla}\\
{\rm c}(f)=c_{1}(\mathcal{L})
\end{array}
\end{displaymath}
Notice that the character $f$ contains more information than ${\rm F}_{\nabla}$ and $c_{1}(\mathcal{L})$ together, since both may vanish when $f$ does not, e.g. when $\topsp{M}=\topsp{S}^{1}$.
\end{example}
It is immediate to realize in the above example that if we perform a connection preserving gauge transformation on $\mathcal{L}$, the homomorphism $f$ does not change, hence it is a gauge invariant quantity. Conversely, given an element $f\in\check{\rm H}^{2}({\rm M})$ we can construct up to isomorphism a line bundle $\mathcal{L}$ classified by ${\rm c}(f)$, equipped with a connection $\nabla$ defined by requiring that
\begin{displaymath}
{\rm hol}_{\nabla}(\gamma):=f(\gamma)
\end{displaymath}
for any closed curve $\gamma$. Hence, for any smooth manifold $\topsp{M}$ the group $\check{\rm H}^{2}({\rm M})$ is equivalent to the space of all gauge equivalent complex line bundles with connection over $\topsp{M}$: the group structure on the latter space is induced by tensor product of line bundles. We have then realized that the group $\check{\rm H}^{2}({\rm M})$ is the set of orbits of the group of gauge transformations acting on the space of connections on \emph{all} line bundles over $\rm M$: noncanonically, this space can be expressed as
\begin{displaymath}
\bigcup_{c_{1}\in{\rm H}^{2}(\topsp{M};\mathbb{Z})}\mathcal{A}(\mathcal{L}_{c_{1}})/\mathcal{G}
\end{displaymath} 
where $\mathcal{A}(\mathcal{L}_{c_{1}})$ is the affine space of connections on the line bundle $\mathcal{L}_{c_{1}}$ classified by the class $c_{1}$, and $\mathcal{G}$ is the group of gauge transformations. As seen in section \ref{generalizedmaxwell}, this is precisely the space of all possible gauge inequivalent solutions to Maxwell equations on $\topsp{M}$, taking into account the Dirac quantization condition. Realizing this space as the group $\check{\rm H}^{2}(\topsp{M})$ naturally suggests the following 
\begin{definition}\label{gen}
A \emph{generalized abelian gauge theory} on a smooth manifold $\topsp{M}$ is a field theory whose space of gauge inequivalent configurations is given by $\check{\rm H}^{n}(\topsp{M})$, for some $n\in\mathbb{N}$.
\end{definition}
The generalized electromagnetism discussed in section \ref{genmax}, once provided with a suitable Dirac quantization condition, is a generalized abelian gauge theory. Because of the relation between the group $\check{\rm H}^{n}(\topsp{M})$ and gauge theories of $n$-forms, one usually refers to ${\rm H}^{n-1}(\topsp{M};\mathbb{R}/\mathbb{Z})$ as the group of \emph{flat fields}, while $\Omega^{n-1}(\topsp{M})/\Omega^{n-1}_{\mathbb{Z}}(\topsp{M})$ is called the group of \emph{topologically trivial fields}. Consequently, we can see that the group ${\rm H}^{n-1}(\topsp{M};\mathbb{R})/r({\rm H}^{n-1}(\topsp{M};\mathbb{Z}))$ classifies the (gauge equivalence classes of) \emph{flat and topologically trivial fields}. Notice that if we classify inequivalent field configurations only by curvature and characteristic class, we are not taking into account the effect of flat and topologically trivial fields, which may be nonvanishing, according to the topology of $\topsp{M}$.\\

Since the electromagnetic field can be described with complex line bundles and connections thereon, this suggets that elements of $\check{\rm H}^{n}(\topsp{M})$ should represent isomorphism classes of ``higher'' line bundles with ``higher'' connections. This is indeed the case: for example, the group $\check{\rm H}^{3}(\topsp{M})$ is given by equivalence classes of bundle gerbes with connections. However, to be able to talk about the gauge fields and higher line bundles whose equivalent classes are the elements in $\check{\rm H}^{n}(\topsp{M})$, we need a \emph{model} representing such a group. Recall indeed, that the group $\check{\rm H}^{2}(\topsp{M})$ can be obtained as the set of isomorphism classes for the groupoid of connections on $\topsp{M}$. To give such a model is the aim of the next section.  
\subsection{Deligne cohomology}
In this section we will describe a cochain model for the Cheeger-Simons groups. This is done by introducing \emph{Deligne cohomology}, which is defined via the cohomology of a certain complex. The Deligne cohomology groups are isomorphic to the Cheeger-Simons groups: however, we will not prove this result, and will focus instead on a detailed construction of the cochain model, showing how it can naturally describe the gauge fields in a generalized abelian gauge theory in topologically nontrivial backgrounds. In the following we will describe the \emph{smooth} Deligne cohomology\footnote{Deligne cohomology originated in the context of algebraic geometry.}.\\
In a nutshell, the differential complex used to define Deligne cohomology is a ``modification'' of the \v{C}ech-de Rham complex: we will recall some basics about the \v{C}ech-de Rham complex in order to clearly show the nature of such a modification.\\
Let $\topsp{M}$ be a paracompact smooth manifold with $n={\text{dim}\:\topsp{M}}$, and consider a good cover $\mathfrak{U}=\left\{U_{\alpha}\right\}_{\alpha\in{}\rm I}$, where the index set $\rm I$ may be infinite. Denote
\begin{displaymath}
U_{\alpha_{0}\alpha_{1}\cdots\alpha_{p}}:=U_{\alpha_{0}}\cap{U_{\alpha_{1}}}\cap\dots\cap{U_{\alpha_{p}}}
\end{displaymath}
Let $\Omega^{p}(\topsp{M})$ be the set of real valued $p$-forms, and denote with $\Omega^{0}(\topsp{M})$ the space of smooth real functions $C^{\infty}\left(\topsp{M};\mathbb{R}\right)$. Since each $U_{\alpha_{0}\alpha_{1}\cdots\alpha_{p}}$ is contractible, Poincar\'e's Lemma implies that for any sequence $\alpha_{0},\alpha_{1},\cdots,\alpha_{p}$ we have an exact sequence 
\begin{displaymath}
0\to{C^{\infty}_{lc}(U_{\alpha_{0}\alpha_{1}\cdots\alpha_{p}};\mathbb{R})}\to\Omega^{0}(U_{\alpha_{0}\alpha_{1}\cdots\alpha_{p}})\xrightarrow{d}\Omega^{1}(U_{\alpha_{0}\alpha_{1}\cdots\alpha_{p}})\xrightarrow{d}\dots\xrightarrow{d}\Omega^{n}(U_{\alpha_{0}\alpha_{1}\cdots\alpha_{p}})\to{0}
\end{displaymath}
where $C^{\infty}_{lc}(\topsp{M};\mathbb{R})$ denotes the space of smooth locally constant real valued functions.\\
Define the group of $p$-cochains of the cover $\mathfrak{U}$ with values in the $q$-forms as 
\begin{displaymath}
\check{C}^{p}(\mathfrak{U};\Omega^{q}):=\prod_{\alpha_{0}<\alpha_{1}<\dots<\alpha_{p}}\Omega^{q}(U_{\alpha_{0}\alpha_{1}\cdots\alpha_{p}})
\end{displaymath} 
Notice that the inclusion 
\begin{displaymath}
U_{\alpha_{0}\cdots\alpha_{p}}\hookrightarrow{U_{\alpha_{0}\cdots\hat{\alpha_{i}}\cdots\alpha_{p}}}
\end{displaymath}
induces a map
\begin{displaymath}
\phi_{\alpha_{i}}:\Omega^{q}(U_{\alpha_{0}\cdots\hat{\alpha_{i}}\cdots\alpha_{p}})\to\Omega^{q}(U_{\alpha_{0}\cdots\alpha_{p}})
\end{displaymath}
which is given by restriction. We can then define the homomorphism 
\begin{displaymath}
\delta:\check{C}^{p}(\mathfrak{U};\Omega^{q})\to\check{C}^{p+1}(\mathfrak{U};\Omega^{q})
\end{displaymath}
which on $\omega\in\check{C}^{p}(\mathfrak{U};\Omega^{q})$ is defined as
\begin{displaymath}
\delta\omega:=\sum_{i=0}^{p+1}(-1)^{i}\phi_{\alpha_{i}}(\omega)
\end{displaymath}
The homomorphism $\delta$ satisfies $\delta^{2}=0$, hence it can be used to the define a cohomology theory. However, we have the following Generalized Mayer-Vietoris exact sequence \cite{BottTu}
\begin{displaymath}
0\to\Omega^{q}(\topsp{M})\to\prod\Omega^{q}(U_{\alpha_{0}\alpha_{1}})\xrightarrow{\delta}\prod\Omega^{q}(U_{\alpha_{0}\alpha_{1}\alpha_{3}})\xrightarrow{\delta}\cdots
\end{displaymath}
for any $q\geq{0}$, which tells us that the $\delta$-cohomology of the complex $\check{C}^{*}(\mathfrak{U};\Omega^{q})$ vanishes identically. Notice that for $q=0$ this is related to the fact that ${\rm H}^{p}(\topsp{M};\underline{\mathbb{R}})$, the $p$-th cohomology group of $\topsp{M}$ valued in the sheaf of smooth real valued functions, is 0 for all $p\geq{0}$, since $\underline{\mathbb{R}}$ is a fine sheaf\footnote{A fine sheaf is loosely speaking a sheaf with a ``partition of unity''.}.\\ 
Since the de Rham differential
\begin{displaymath}
{\rm d}:\check{C}^{p}(\topsp{U};\Omega^{q})\to\check{C}^{p}(\mathfrak{U};\Omega^{q+1})
\end{displaymath}
commutes with the coboundary homomorphism $\delta$, we can form the following double complex, called the \emph{\v{C}ech-de Rham} complex
\begin{displaymath}
  \check{C}^{*}(\topsp{M};\Omega^{*}):=\bigoplus_{p,q\geq{0}}\check{C}^{p}(\mathfrak{U};\Omega^{q})
\end{displaymath}
with coboundary operator
\begin{displaymath}
D=\delta+(-1)^{deg(\cdot)}{\rm d}
\end{displaymath}
where for $\omega\in\check{C}^{p}(\mathfrak{U};\Omega^{q})$, $deg(\omega):=p$.\\
As usual for differential double complexes, we can consider the diagonal subcomplex 
\begin{displaymath}
K^{n}:=\bigoplus_{p+q=n}\check{C}^{p}(\mathfrak{U};\Omega^{q})
\end{displaymath}
By ``tic-tac-toeing'', i.e by chasing diagrams, one can prove that
\begin{displaymath}
{\rm H}_{\rm dR}^{*}(\topsp{M})\simeq{\rm H}^{*}(K^{*};D)\simeq\check{\rm H}_{\rm Ch}^{*}(\topsp{M};\mathbb{R})
\end{displaymath}
where $\check{\rm H}_{\rm Ch}^{*}(\topsp{M};\mathbb{R})$ denote \v{C}ech cohomology. The above isomorphism can be seen as a consequence of the fact that \emph{all} the ``rows'' and ``columns'' of the \v{C}ech-de Rham complex are exact\footnote{Indeed, ${\rm H}_{\rm dR}^{*}(\topsp{M})$ and $\check{\rm H}_{\rm Ch}^{*}(\topsp{M};\mathbb{R})$ are the cohomology groups of the augmented column and row, respectively.}.\\
The basic idea behind Deligne cohomology consists in constructing a complex analog to the \v{C}ech-de Rham one, but based on the following exact sequence
\begin{displaymath}
\Omega_{\rm U}^{0}(U_{\alpha_{0}\alpha_{1}\cdots\alpha_{p}})\xrightarrow{\rm d\:log}\Omega^{1}(U_{\alpha_{0}\alpha_{1}\cdots\alpha_{p}})\xrightarrow{\rm d}\dots\xrightarrow{\rm d}\Omega^{l}(U_{\alpha_{0}\alpha_{1}\cdots\alpha_{p}})
\end{displaymath} 
where $\Omega_{\rm U}^{0}(V):=C^{\infty}(V;{\rm U(1)})$ for any open set $V$, and $0\leq{l}\leq{n}$. In other words, we truncate the complex of forms on the right for some $l$, and we substitute in the 0 degree real valued functions with circle valued ones. We can then define a modified \v{C}ech-de Rham complex as
\begin{displaymath}
 \check{C}^{*}[l](\topsp{M};\Omega^{*}):=\bigoplus_{p\geq{0},0\leq{q}\leq{l}}\check{C}^{p}(\mathfrak{U};\Omega^{q})
\end{displaymath} 
and the associated single complex
\begin{displaymath}
K[l]^{n}:=\bigoplus_{p+q=n}\check{C}^{p}(\mathfrak{U};\Omega^{q})
\end{displaymath}
The \v{C}ech-de Rham coboundary $D$ is unchanged, and we will refer to an element $\omega\in K[l]^{*}$ with $D\omega=0$ as a \emph{Deligne cochain}. For a given $l$, the Deligne cohomology is given by
\begin{displaymath}
\mathbb{H}^{*}(\topsp{M};\mathcal{D}^{l}):={\rm H}^{*}(K[l]^{*};D)
\end{displaymath}
As for \v{C}ech cohomology, the result does not depend on the particular choice of good cover.\\ 
Notice that it is no longer true that all the rows of the double complex $\check{C}^{*}[l](\topsp{M};\Omega^{*})$ are exact. Indeed, for $q=0$ the sequence
\begin{displaymath}
\prod\Omega^{0}_{U}(U_{\alpha_{0}})\xrightarrow{\delta}\prod\Omega^{0}_{U}(U_{\alpha_{0}\alpha_{1}})\xrightarrow{\delta}\prod\Omega^{0}_{U}(U_{\alpha_{0}\alpha_{1}\alpha_{2}})\xrightarrow{\delta}\cdots
\end{displaymath}
is in general not exact, and the $\delta$-cohomology is given by $\check{\rm H}_{\rm Ch}^{*}(\topsp{M};\underline{\rm U(1)})$, the sheaf cohomology with value in the circle functions. Moerover, one can prove that
\begin{displaymath}
\check{\rm H}_{\rm Ch}^{*}(\topsp{M};\underline{\rm U(1)})\simeq\check{\rm H}_{\rm Ch}^{*+1}(\mathfrak{U};\mathbb{Z})
\end{displaymath}
for any good cover $\mathfrak{U}$. This innocuous modification implies interesting properties for the groups $\mathbb{H}^{n}(\topsp{M};\mathcal{D}^{l})$. Indeed, consider the case $n<l$. Then a Deligne class is determined by an $n$-tuple
\begin{displaymath}
(\omega_{0}^{n},\omega^{n-1}_{1},\cdots,\omega_{n}^{0}),\quad\omega^{i}_{j}\in\prod\Omega^{i}(U_{\alpha_{0}\alpha_{1}\dots\alpha_{j}})
\end{displaymath}
satisfying the equations
\begin{displaymath}
\begin{array}{c}
{\rm d}\omega^{n}_{0}=0\\
\delta{\omega^{n}_{0}}-{\rm d}\omega^{n-1}_{1}=0\\
\vdots\\
\delta{\omega^{1}_{n-1}}+(-1)^{n}{\rm d}\log\omega^{0}_{n}=0\\
\delta{\omega^{0}_{n}}=0
\end{array}
\end{displaymath}     
Since the first equation implies that $\omega^{n}_{0}={\rm d}{\alpha^{n-1}_{0}}$, for some $\alpha_{0}^{n-1}$, one can ``descend'' through the equations, showing that $\omega^{0}_{n}=f+\delta{\alpha^{0}}$, where $f\in\prod C_{lc}^{\infty}(U_{\alpha_{0}\alpha_{1}\dots\alpha_{n}};{\rm U}(1))$. Moreover, by the last equation $\delta{f}=0$, hence $f$ represents a class in $\check{\rm H}_{Ch}^{n}(\topsp{M};\mathbb{R}/\mathbb{Z})$. Conversely, any class $[f]$ in $\check{\rm H}_{\rm Ch}^{n}(\topsp{M};\mathbb{R}/\mathbb{Z})$ determines a Deligne class up to $D$-exact terms. We have then
\begin{displaymath}
\mathbb{H}^{n}(\topsp{M};\mathcal{D}^{l})\simeq{\check{\rm H}_{\rm Ch}^{n}(\topsp{M};\mathbb{R}/\mathbb{Z})},\:n<l
\end{displaymath}
Consider now the case $n>l$. In this case a Deligne class is represented by an $l$-tuple
\begin{displaymath}
(\omega_{n-l}^{l},\omega^{l-1}_{n-(l-1)},\cdots,\omega_{n}^{0})
\end{displaymath}
satisfying the equations
\begin{displaymath}
\begin{array}{c}
\delta{\omega^{l}_{n-l}}+(-1)^{n-(l-1)}{\rm d}\omega^{l-1}_{n-(l-1}=0\\
\delta\omega^{l-1}_{n-{l-1}}+(-1)^{n-(l-2)}{\rm d}\omega^{l-2}_{n-(l-2)}=0\\
\vdots\\
\delta\omega^{1}_{n-1}+(-1)^{n}{\rm d}\log\omega_{n}^{0}=0\\
\delta{\omega^{0}_{n}}=0
\end{array}
\end{displaymath}     
Notice that in this case the element $\omega^{l}_{n-l}$ is not in general ${\rm d}$-closed. Then $\omega^{0}_{n}$ represents a class in $\check{\rm H}_{\rm Ch}^{n}(\topsp{M};\underline{\rm U(1)})\simeq{\check{\rm H}_{\rm Ch}^{n+1}(\topsp{M};\mathbb{Z})}$. Conversely, given a class $[f]$ in $\check{\rm H}_{\rm Ch}^{n}(\topsp{M};\underline{\rm U(1)})$ one can reconstruct the Deligne class up to $D$-exact terms. This is possible thanks to the fact that any element $\omega^{i}_{j}$ for $j\geq{1}$ which is $\delta$-closed, is also $\delta$-exact. Hence we have that
\begin{displaymath}
\mathbb{H}^{n}(\topsp{M};\mathcal{D}^{l})\simeq{\check{\rm H}_{\rm Ch}^{n+1}(\topsp{M};\mathbb{Z})},\:n>l
\end{displaymath}
The case $n=l$ is the most interesting one. Indeed, in this case the arguments above cannot be applied. Instead, we have the following \cite{Brylinski}
\begin{theorem}\label{delignecs}
Let $\topsp{M}$ be a smooth manifold. Then
\begin{displaymath}
\mathbb{H}^{n}(\topsp{M};\mathcal{D}^{n})\simeq{\check{\rm H}^{n+1}(\topsp{M})}
\end{displaymath}
for any $n\geq{0}$.
\end{theorem}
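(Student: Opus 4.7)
The plan is to construct an explicit isomorphism
$$\Psi\,:\,\mathbb{H}^{n}(\topsp{M};\mathcal{D}^{n})~\longrightarrow~\check{\mathrm{H}}^{n+1}(\topsp{M})$$
by integrating a Deligne cocycle against smooth singular cycles. First, I would unpack the structure of a cocycle of total degree $n$ in the truncated complex $K[n]^{*}$. A representative is an $(n+1)$-tuple $\omega=(\omega^{n}_{0},\omega^{n-1}_{1},\ldots,\omega^{1}_{n-1},\omega^{0}_{n})$ with $\omega^{n-p}_{p}\in\check{C}^{p}(\mathfrak{U};\Omega^{n-p})$, and the condition $D\omega=0$ gives the descent equations listed in the excerpt. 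Since the complex is truncated at form degree $n$, the top piece $\omega^{n}_{0}$ is not required to be $d$-closed, but the $n$-forms $\{\omega^{n}_{\alpha}\}$ do satisfy $d\omega^{n}_{\alpha}=d\omega^{n}_{\beta}$ on overlaps and hence glue to a globally defined closed $(n+1)$-form $F\in\Omega^{n+1}(\topsp{M})$, while the bottom component $\omega^{0}_{n}$ is a $U(1)$-valued \v{C}ech $n$-cocycle, and the Bockstein of $[\omega^{0}_{n}]\in\check{\mathrm{H}}^{n}(\topsp{M};\underline{U(1)})$ produces an integral class $c\in\mathrm{H}^{n+1}(\topsp{M};\mathbb{Z})$ with $[F]=r(c)$. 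These two outputs are the candidates for the fieldstrength $\mathrm{F}(\Psi[\omega])$ and the characteristic class $\mathrm{c}(\Psi[\omega])$.

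Next I would define the character $\Psi[\omega]$ on smooth cycles $z\in\mathcal{Z}_{n}(\topsp{M})$. Using paracompactness of $\topsp{M}$ and the good cover $\mathfrak{U}$, I would barycentrically subdivide $z$ until each simplex $\sigma_{i}$ has image contained in some $U_{\alpha_{i}}$, and set
$$\Psi[\omega](z):=\sum_{i}\int_{\sigma_{i}}\omega^{n}_{\alpha_{i}}+\sum_{i<j}\int_{\sigma_{i}\cap\sigma_{j}}\omega^{n-1}_{\alpha_{i}\alpha_{j}}+\cdots+\sum\omega^{0}_{\alpha_{0}\cdots\alpha_{n}}(\text{vertex})\pmod{\mathbb{Z}},$$
i.e.\ the pairing of $\omega$ with the subdivided cycle in the \v{C}ech--de Rham bicomplex. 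The cocycle equations $D\omega=0$ together with Stokes' theorem on each simplex yield the telescoping identity $\Psi[\omega](\partial c)=\int_{c}F\pmod{\mathbb{Z}}$ for any smooth $(n+1)$-chain $c$, which is precisely the defining property $\Psi[\omega]\circ\partial\in\widetilde{\Omega}^{n+1}(\topsp{M})$, so $\Psi[\omega]\in\check{\mathrm{H}}^{n+1}(\topsp{M})$. Linearity of integration makes $\Psi$ a group homomorphism, and a $D$-exact $\omega=D\eta$ integrates, by the same telescoping, to an integer-valued functional, so $\Psi$ descends to Deligne cohomology classes.

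For bijectivity I would build an inverse. Given $f\in\check{\mathrm{H}}^{n+1}(\topsp{M})$ with fieldstrength $F\in\Omega^{n+1}_{\mathbb{Z}}(\topsp{M})$ and characteristic class $c\in\mathrm{H}^{n+1}(\topsp{M};\mathbb{Z})$, I choose local primitives $\omega^{n}_{\alpha}\in\Omega^{n}(U_{\alpha})$ with $d\omega^{n}_{\alpha}=F|_{U_{\alpha}}$, possible because each $U_{\alpha}$ is contractible. On double overlaps, $\omega^{n}_{\beta}-\omega^{n}_{\alpha}$ is closed, hence exact: $\omega^{n}_{\beta}-\omega^{n}_{\alpha}=d\omega^{n-1}_{\alpha\beta}$. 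Iterating descent through the acyclic columns of the \v{C}ech--de Rham complex produces $\omega^{n-p}_{p}$ for all $p$ down to a $\mathbb{R}$-valued $\omega^{0}_{n}$ which is a \v{C}ech $n$-cocycle whose class in $\check{\mathrm{H}}^{n}(\topsp{M};\mathbb{R})$ maps to $F$ via de Rham, and whose integral reduction recovers $c$. Projecting $\omega^{0}_{n}$ modulo $\mathbb{Z}$ yields a $U(1)$-valued component, assembling into a Deligne cocycle $\omega_{f}$ with $\Psi[\omega_{f}]=f$ by the defining pairing, giving surjectivity. For injectivity: if $\Psi[\omega]=0$, then $F=0$ and $c=0$, and in addition the residual flat $\mathbb{R}/\mathbb{Z}$-class must vanish; tracing through the descent equations produces an explicit $\eta\in K[n]^{n-1}$ with $D\eta=\omega$.

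The main obstacle is the well-definedness of $\Psi[\omega](z)$. One must verify that changing the subdivision, the labeling $\alpha_{i}$ of patches, or the cocycle representative within $[\omega]$ modifies the sum only by integers. This reduces to a careful accounting of boundary contributions: the descent equations $\delta\omega^{n-p}_{p}=(-1)^{?}d\omega^{n-p-1}_{p+1}$ ensure that integrals along shared faces cancel pairwise, while the $U(1)$-valuedness of $\omega^{0}_{n}$ ensures the final cocycle ambiguity contributes only $\mathbb{Z}$. This bookkeeping is essentially the assertion that the cap product between the Cheeger--Simons character realized by $\omega$ and the singular chain complex is well-defined, and it is the technical heart of the isomorphism; once it is in place, the functoriality in $\topsp{M}$ and compatibility with the fieldstrength and characteristic class maps come for free.
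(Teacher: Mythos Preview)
The paper does not prove this theorem; it simply states the result and cites Brylinski \cite{Brylinski}. A few lines later the paper remarks that ``the holonomy of a Deligne class $\xi$ is exactly the differential character assigned to $\xi$ by the isomorphism in Theorem~\ref{delignecs}'', which is precisely the content of your map $\Psi$. So your proposal is not being compared against an argument in the paper but against the literature, and your holonomy/integration approach is the standard one (Gaw\c{e}dzki, Alvarez, Brylinski).

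Your outline is essentially correct. A few comments. The integration formula you write schematically is the right object, but in a complete proof the signs and the combinatorics of the dual \v{C}ech--singular decomposition need to be pinned down carefully; this is where most treatments spend their effort, and you correctly flag it as the main obstacle. Your injectivity argument is the weakest part of the sketch: saying that $F=0$, $c=0$, and the flat residue vanishes, and then ``tracing through the descent equations produces an explicit $\eta$'' is really an appeal to the fact that both groups sit in identical short exact sequences (the fieldstrength and characteristic class sequences), so one can alternatively prove the isomorphism by the five lemma once $\Psi$ is shown to intertwine those sequences. That route avoids having to explicitly construct $\eta$ and is how many references organize the argument. Either way, your plan would work.
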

The complex above defined to compute Deligne cohomology constitutes then a local description of differential characters, which instead are defined in an intrinsic and global way. One can easily define the fieldstrength and the characteristic class for the groups $\mathbb{H}^{n}(\topsp{M};\mathcal{D}^{n})$ in the following way. A Deligne class $\xi\in\mathbb{H}^{n}(\topsp{M};\mathcal{D}^{n})$ is represented as   
\begin{displaymath}
\xi=[(\omega_{0}^{n},\omega^{n-1}_{1},\cdots,\omega_{n}^{0})]
\end{displaymath}
We can then define the fieldstrength of $\xi$ as the $n+1$-form
\begin{displaymath}
{\rm F}(\xi):={\rm d}\omega^{n}_{0}
\end{displaymath}
This is a globally defined $n\!+\!1$-form, since 
\begin{displaymath}
\delta{d\omega^{n}_{0}}={\rm d}\delta\omega^{n}_{0}={\rm d}^{2}\omega^{n-1}_{1}=0
\end{displaymath}
The characteristic class of $\xi$ is instead defined as 
\begin{displaymath}
{\rm c}(\xi):=[\omega^{0}_{n}]\in{\rm H}^{n}(\topsp{M};\mathbb{Z})
\end{displaymath}
By chasing diagrams one can prove that
\begin{displaymath}
[{\rm F}(\xi)]=r({\rm c}(\xi))
\end{displaymath}
The groups $\mathbb{H}^{n}(\topsp{M};\mathcal{D}^{n})$ satisfy the same exact sequences as the Cheeger-Simons groups. Moreover, the relation between the Cheeger-Simons groups and Deligne cohomology clarifies the very use of the term ``cohomology''. In other words, the groups $\check{\rm H}^{n+1}(\topsp{M})$ can be realized as the cohomology of a differential complex, but this does \emph{not} mean that the collection of functors $\check{\rm H}^{*}(\cdot)$ constitute a cohomology theory on the category of manifolds in the sense of the Eilenberg-Steenrod axioms. First, notice that the various groups $\check{\rm H}^{n+1}(\topsp{M})$  are the cohomologies of \emph{different} complexes, for different $n$. Morever, they do not satisfy important cohomological properties like homotopy invariance. This is clear, since the groups $\check{\rm H}^{n+1}(\topsp{M})$ are extensions of the groups of closed forms, which are not homotopy invariant.\\   

As we have seen in section \ref{cheeger}, the first few Cheeger-Simons groups have a geometric interpretation. Indeed, for $n=0$ we have shown that 
\begin{displaymath}
\check{\rm H}^{1}(\topsp{M})\simeq{C^{\infty}(\topsp{M},\topsp{S}^{1}})
\end{displaymath}
A Deligne class for $\mathbb{H}^{0}(\topsp{M};\mathcal{D}^{0})$ is determined by an element
\begin{displaymath}
f\in\prod{C^{\infty}(U_{\alpha_{0}};{\rm U(1)})}
\end{displaymath}
satisfying $\delta{f}=0$, which implies that $f$ is a globally defined function from $\topsp{M}$ to $\rm U(1)$.\\
For $n=1$, the group $\mathbb{H}^{1}(\topsp{M};\mathcal{D}^{1})$  is given by isomorphism classes of line bundles with connection. An element in $\mathbb{H}^{1}(\topsp{M};\mathcal{D}^{1})$ can be represented as a pair
\begin{displaymath}
(\omega^{1}_{0},\omega^{0}_{1})
\end{displaymath}
satisying the equations
\begin{displaymath}
\begin{array}{c}
\delta{\omega^{1}_{0}}-{\rm d}\log\omega^{0}_{1}=0\\
\delta{\omega^{0}_{1}}=0
\end{array}
\end{displaymath}
We can then see that $\omega^{1}_{0}$ gives local representatives $\left\{{\rm A}_{\alpha}\right\}$ of a connection on the line bundle with transition functions $g_{\alpha\beta}$ given by $\omega^{0}_{1}$. The representatives in the same Deligne class differ by $D$-exact terms
\begin{displaymath}
(\tilde{\omega}^{1}_{0},\tilde{\omega}^{0}_{1})=(\omega^{1}_{0},\omega^{0}_{1})+D\xi^{0}_{0}=(\omega^{1}_{0}+{\rm d\:log}\xi^{0}_{0},\omega^{0}_{1}+\delta{\xi_{0}^{0}})
\end{displaymath}
which correspond to the gauge transformations
\begin{displaymath}
\begin{array}{c}
\tilde{\rm A}_{\alpha}={\rm A}_{\alpha}+{\rm d}\log{f_{\alpha}}\\
\tilde{g}_{\alpha\beta}=f^{-1}_{\beta}g_{\alpha\beta}f_{\alpha}
\end{array}
\end{displaymath}
for some circle valued function $f_{\alpha}$. Notice that the gauge transformations above correctly include a ``change'' in the transition functions of the line bundle.\\
The case $n=2$ is in a sense the first interesting case for the application to abelian gauge theories of $p$-forms. Indeed, a Deligne class in $\mathbb{H}^{2}(\topsp{M};\mathcal{D}^{2})$ can be represented by a triple 
\begin{displaymath}
(\omega^{2}_{0},\omega^{1}_{1},\omega^{0}_{2})
\end{displaymath}
satisfying the equations
\begin{displaymath}
\begin{array}{c}
\delta{\omega^{2}_{0}}-{\rm d}\omega^{1}_{1}=0\\
\delta{\omega^{1}_{1}}+{\rm d}\log\omega^{0}_{2}=0\\
\delta{\omega^{0}_{2}}=0
\end{array}
\end{displaymath}
The triple $(\omega^{2}_{0},\omega^{1}_{1},\omega^{0}_{2})$ satisfies the equations (\ref{eq1})-(\ref{eq4}) we used in section \ref{Bfield} to define the B-field, apart from the fact that the functions $\omega^{0}_{2}$ are circle valued, rather than real valued. This is due to the fact that Deligne classes in $\mathbb{H}^{2}(\topsp{M};\mathcal{D}^{2})$ represent gauge equivalence classes of B-fields with a Dirac quantization condition as dictated by the group ${\rm H}^{3}(\topsp{M};\mathbb{Z})$. Having realized the B-field as a Deligne cochain, we also have determined the correct notion of gauge transformation, which consists simply in adding terms of the form $D\alpha$, with $\alpha=(\alpha^{1}_{0},\alpha^{0}_{1})$.\\
Using the Deligne complex as a model for the groups $\check{\rm H}^{n+1}(\topsp{M})$ clarifies the sense of Definition \ref{gen}, since, as we have shown above, the groups $\mathbb{H}^{p}(\topsp{M};\mathcal{D}^{p})$ have a natural description in terms of gauge equivalence classes of $p$-form gauge fields. From the mathematical point of view, it is very natural to interpret the elements of $\mathbb{H}^{p}(\topsp{M};\mathcal{D}^{p})$ as \emph{higher circle bundles with connections} \cite{freedlocality}. In contrast to ordinary circle bundles, these objects are \emph{not} fiber bundles over a base, since there is no total space which makes sense as a manifold. Neverthless, usual operations like pullback can be performed. More importantly, a theory of integration can be constructed for such objects, i.e one can define the notion of the holonomy of a Deligne class around a suitable cycle. This is of major importance for physical applications, as the holonomy usually describes the coupling term of the gauge field to the relative current. In String theory, in particular, this is needed to make sense of the term      
\begin{displaymath}
\int_{\Sigma}f^{*}{\rm B}
\end{displaymath}
describing the coupling of a fundamental string with the B-field. In this sense, the holonomy of a Deligne class $\xi$ is exactly the differential character assigned to $\xi$ by the isomorphism in Theorem \ref{delignecs}.\\
For more details on Deligne cohomology we refer the reader to \cite{Brylinski,holonomy,freedlocality}. 
\section{Ramond-Ramond fields and charge quantization}\label{moorewitten}
As we have seen in the previous sections, differential cohomology, in the Cheeger-Simons or the Deligne approach, represents a suitable formalism to describe fields in generalized abelian gauge theories, and we have also seen how the formalism automatically incorporates the Dirac quantization condition. It is natural then to try to describe within this formalism the gauge theory of Ramond-Ramond fields. However, because of the fact that D-branes, which are sources for Ramond-Ramond fields, have charges taking values in K-theory, it turns out that Ramond-Ramond fields themselves, in the absence of branes, are classified by K-theory. In the following we will present the main arguments developed in \cite{Moore2000} which support this statement, and the fact that the Dirac quantization condition for Ramond-Ramond fields should be expressed in a K-theoretic language. Some of the arguments we will present here are of a heuristic  nature, and should be understood as an educated guess for the use of differential K-theory, which will be introduced in the next section.\\    
Let us first consider the case of ordinary generalized electromagnetism on a $d$-dimensional spacetime $\topsp{M}=\mathbb{R}\times\topsp{Y}$, where $\rm Y$ is noncompact. As discussed in section \ref{genmax}, in presence of a magnetic source $j_{m}$ we have the equation 
\begin{equation}\label{chargeeq}
{\rm dG}=j_{m}
\end{equation}
where the fieldstrength $\rm G$ is an $n$-form.\\
Recall that the (total) magnetic charge is given by the class $[i^{*}_{t}j_{m}]\in{\rm H}_{\rm cpt}^{n+1}(\topsp{Y};\mathbb{R})$. However, to enforce equation (\ref{chargeeq}), the class $[i^{*}_{t}j_{m}]$ must be in the kernel of the natural map
\begin{displaymath}
i:{\rm H}_{\rm cpt}^{n+1}(\topsp{Y};\mathbb{R})\to{\rm H}^{n+1}(\topsp{Y};\mathbb{R})
\end{displaymath}
which ``forgets'' the compact support condition. Let us define $\topsp{N}$ as the boundary of $\topsp{Y}$: the term boundary is used in a loose way, e.g. when $\topsp{Y}=\mathbb{R}^{d-1}$, the boundary is considered to be the sphere $\topsp{S}^{d-2}$ ``at infinity''. We have then that 
\begin{displaymath}
{\rm H}_{\rm cpt}^{n+1}(\topsp{Y};\mathbb{R})\simeq{\rm H}^{n+1}(\topsp{Y},\topsp{N};\mathbb{R})
\end{displaymath}
By using the long exact cohomology sequence for the pair $(\topsp{Y},\topsp{N})$ we have
\begin{displaymath}
\cdots\to{\rm H}^{n}(\topsp{Y};\mathbb{R})\xrightarrow{j}{\rm H}^{n}(\topsp{N};\mathbb{R})\to{\rm H}^{n+1}(\topsp{Y},\topsp{N};\mathbb{R})\xrightarrow{i}{\rm H}^{n+1}(\topsp{Y};\mathbb{R})\to\cdots
\end{displaymath}
where $j$ is given by restriction. The sequence above can be ``broken'', and we have
\begin{displaymath}
{\ker\:i}\simeq{\rm H}^{n}(\topsp{N};\mathbb{R})/j({\rm H}^{n}(\topsp{Y};\mathbb{R}))
\end{displaymath}
The above isomorphism leads to the following interpretation: the total charge of the source $j_{m}$ can be detected by classes of fields on the boundary $\topsp{N}$, i.e. at infinity, that are not restrictions of fields defined on $\topsp{Y}$. Moreover, the group ${\rm H}^{n}(\topsp{Y};\mathbb{R})$ classifies field configurations $\rm G$ which do not contribute to the total charge. In other words, it classifies gauge fields in the absence of sources.\\
The same reasoning can now be applied to type II String theory, where the sources are classified by the integral K-theory group ${\rm K}(\topsp{Y})$. Indeed, in type IIB String theory, for D-branes with compact support in space, the brane charge can be realized as an element in ${\rm K}_{\rm cpt}^{0}(\topsp{Y})$. To ensure that the equations of motion for the Ramond-Ramond field have a solution, the brane charge should take values in the kernel of the map
\begin{displaymath}
i:{\rm K}_{\rm cpt}^{0}(\topsp{Y})\to{\rm K}^{0}(\topsp{Y})
\end{displaymath}
As before, the long exact sequence
\begin{displaymath}
\cdots\to{\rm K}^{-1}(\topsp{Y})\xrightarrow{j}{\rm K}^{-1}(\topsp{N})\to{\rm K}^{0}(\topsp{Y},\topsp{N})\xrightarrow{i}{\rm K}^{0}(\topsp{Y})\to\cdots
\end{displaymath}
implies that
\begin{displaymath}
{\ker\:i}\simeq{\rm K}^{-1}(\topsp{N})/j({\rm K}^{-1}(\topsp{Y}))
\end{displaymath}
In analogy with the cohomological case, we can interpret the above isomorphism as the fact that in type IIB, gauge equivalence classes of Ramond-Ramond fields in the absence of D-branes are classified topologically by the group $\rm K^{-1}(\topsp{Y})$. By the same argument, we find that in type IIA, Ramond-Ramond fields are topologically classified by the group ${\rm K}^{0}(\topsp{Y})$.\\
Having extablished a relation between Ramond-Ramond fields and K-theory, we still must determine the relation between Ramond-Ramond fields and cohomology. Indeed, to be able to write equations of motion for such fields, we need to specify the de Rham cohomology class associated to an element $x\in{\rm K}(\topsp{Y})$ that determines (the class of) a Ramond-Ramond fieldstrength $\rm G$. Let us then first consider the type IIA case. Recall that in type IIA the total Ramond-Ramond potential is of the form
\begin{displaymath}
\rm C=C_{1}+C_{3}+\cdots
\end{displaymath}
where ${\rm C}_{i}$ is a locally defined $i$-form on $\topsp{M}=\mathbb{R}\times\topsp{Y}$. Let us introduce a collection of 8-branes and $\overline{8}$-branes with worldwolume ${\rm p}\times\topsp{Y}$, with $\rm p$ a point in $\mathbb{R}$, and with arbitrary Chan-Paton bundles $(\rm E,\rm F)$ over $\topsp{Y}$. These D-branes are in a sense instantonic configurations. The (electric) coupling term is given by
\begin{equation}\label{coup}
\int_{{\rm p}\times{\topsp{Y}}}{\rm C}\wedge\dfrac{1}{\sqrt{\hat{\rm A}(\topsp{TY})}}\left({\rm ch}(\rm E)-{\rm ch}(\rm F)\right)
\end{equation}
where we have used $\sqrt{\hat{\rm A}(\topsp{TM})}=\sqrt{\hat{\rm A}(\topsp{TY})}$. The Ramond-Ramond equations of motion can be formally written as
\begin{displaymath}
{\rm d\star{G}}=\delta({\rm p})\dfrac{1}{\sqrt{\hat{\rm A}(\topsp{TY})}}\left({\rm ch}(\rm E)-{\rm ch}(\rm F)\right)
\end{displaymath}
where $\delta(\rm p)$ is a Dirac delta distribution supported on $\rm p\times\topsp{Y}$. By integrating both sides, we have the relation
\begin{equation}\label{jump}
\star{\rm G}_{R}-\star{\rm G}_{L}=\dfrac{1}{\sqrt{\hat{\rm A}(\topsp{TY})}}\left({\rm ch}(\rm E)-{\rm ch}(\rm F)\right)
\end{equation}
where ${\rm G}_{R}$ and ${\rm G}_{L}$ denote the value of the field $\rm G$ on spatial slices on the ``right'' and on the ``left'' of the D-brane, where the notion of left and right is given by the orientation of $\topsp{M}$ and $\topsp{Y}$. Now, by imposing the selfduality constraint on the total Ramond-Ramond fieldstrength $\rm G$, we have the following relation
\begin{equation}\label{jump2}
{\rm G}_{R}-{\rm G}_{L}=\dfrac{1}{\sqrt{\hat{\rm A}(\topsp{TY})}}\left({\rm ch}(\rm E)-{\rm ch}(\rm F)\right)
\end{equation}
Notice that this at best a heuristic conclusion, since requiring the Ramond-Ramond field to be selfdual implies that the coupling (\ref{coup}) is \emph{not} defined. Moreover, notice that the right hand side of (\ref{jump}) and (\ref{jump2}) are not selfdual, while the left hand side is supposed to be. A way of circumnavigating the problem is to use selfduality to eliminate half the fields in $\rm G$. For $\rm G_{0},G_{2},G_{4}$ one would introduce the magnetic coupling as arising from the equation  
\begin{equation}\label{motion}
{\rm d{G}}=\delta({\rm p})\dfrac{1}{\sqrt{\hat{\rm A}(\topsp{TY})}}\left({\rm ch}(\rm E)-{\rm ch}(\rm F)\right)
\end{equation}
and introduce the magnetic coupling for $\rm G_{6},G_{8},G_{10}$ via the electric coupling (\ref{coup}) \cite{Moore2000}.\\
Apart from these difficulties, equation (\ref{jump2}) gives a good educated guess for the cohomology class of a Ramond-Ramond field. Indeed, let $b,a\in{\rm K}^{0}(\topsp{Y})$ classify the Ramond-Ramond field on the right and on the left of the collection of D-branes, respectively. Then we have the following equation in de Rham cohomology
\begin{equation}\label{diraccond}
[{\rm G}_{R}(b)]-[{\rm G}_{L}(a)]=\big[{\sqrt{\hat{\rm A}(\topsp{TY})}}\big]^{-1}\cup[{\rm ch}(x)]
\end{equation}
where $x=[\rm E]-[\rm F]$.
Now, by considering the limit ${\rm p}\to{-\infty}$, and requiring that for $a=0$, the associated de Rham cohomology class vanishes,  equation (\ref{diraccond}) determines the class of $\rm G$ on all of the spacetime $\topsp{M}$ as
\begin{equation}\label{condition}
[{\rm G}(x)]=\big[{\sqrt{\hat{\rm A}(\topsp{TY})}}\big]^{-1}\cup[{\rm ch}(x)]
\end{equation}
where we have used that the fact that the Chern character is an isomorphism over the reals. In other words, in this particular configuration, the element $x\in{\rm K}^{0}(\topsp{Y})$ classifies the class of Ramond-Ramond fields whose fieldstrength is a solution of the equation (\ref{motion}). Morover, the fieldstrength form $\rm G$ associated to $x$ must satisfy the condition (\ref{condition}). An analogous argument can be formulated for type IIB String theory, by using the Chern character definition as in section \ref{gysinhomo}. The condition (\ref{condition}) is the Dirac quantization condition for Ramond-Ramond fieldstrengths in type II String theory. It is a ``quantization'' condition since the Chern character \\  
\begin{displaymath}
{\rm ch}:{\rm K^{0,-1}}(\topsp{Y})\to{\rm H}^{\rm ev,odd}(\topsp{Y};\mathbb{R})
\end{displaymath}
maps the K-theory group to a lattice in real cohomology, since the group ${\rm K^{0,-1}}(\topsp{Y})$ is a $\mathbb{Z}$-module. \\
At this point, one assumes that the K-theoretical classification of Ramond-Ramond fields in the absence of branes, and the condition (\ref{condition}), are valid for general spacetimes $\topsp{M}$ not of the form $\mathbb{R}\times{\topsp{Y}}$, and not only in the situation used to motivate it.\\
We are then left with the following mathematical problem: assign to any manifold $\topsp{M}$ abelian groups that can naturally be interpreted as classes of gauge inequivalent fields carrying a topological charge with values in the group ${\rm K^{0,-1}}(\topsp{M})$, and such that the associated fieldstrength satisfies the condition $(\ref{condition})$. We see that the differential cohomology defined in the previous section is not the suitable formalism to solve this problem, since objects in $\check{\rm H}^{p}(\topsp{M})$ or $\mathbb{H}^{p}(\topsp{M};\mathcal{D}^{p})$ carry a topological charge in integer cohomology. We then intuitively need a ``generalized'' version of differential cohomology, which is the subject of the next section. 
\section{Generalized differential cohomology}\label{diffKHS}
As we have seen in the previous section, the K-theoretical description of Ramond-Ramond fields requires a new framework which generalizes the differential cohomology formalism. We need indeed a theory of some sort which is able to describe objects with local degrees of freedom, and at same time takes into account global properties of such objects. A hint on how to define such a theory is given by a closer inspection of the Dirac quantization condition, as appeared in the examples before, along the lines of the arguments proposed in \cite{Freed2000,Freed2002}. Essentially, given a gauge theory of fields $\rm \check{A}$ with fieldstrength $\omega\in\Omega^{*}(\topsp{M})$, to impose a Dirac quantization condition is tantamount to requiring that   
\begin{displaymath}
[\omega]\in\Lambda\subset{\rm H}_{\rm dR}^{*}(\topsp{M};\mathbb{R})
\end{displaymath}
where $\Lambda$ is a lattice in ${\rm H}_{\rm dR}^{*}(\topsp{M};\mathbb{R})$. For instance, in the case of ordinary electromagnetism, the lattice $\Lambda_{\rm H}$ is given by the image of the map
\begin{displaymath}
i:{\rm H}^{*}(\topsp{M};\mathbb{Z})\hookrightarrow{\rm H}_{\rm dR}^{*}(\topsp{M};\mathbb{R})
\end{displaymath}
while for Ramond-Ramond fields in type II String theory, the lattice $\Lambda_{\rm K}$ is given by the image of the map
\begin{displaymath}
{\rm ch}:{\rm K}^{0,1}(\topsp{M})\to{\rm H}_{\rm dR}^{\rm ev,odd}(\topsp{M};\mathbb{R})
\end{displaymath}
up to a ``scale'' factor.\\
Notice that both the above maps induce an \emph{isomorphism} when tensored over the real field. It is clear that the lattice $\Lambda$ is greatly affected by the chosen (generalized) cohomology theory which topologically classifies the fields in the given gauge theory, and by the map realizing the free part of the relevant cohomology groups in real de Rham cohomology. In general, as in the above cases, the choice of the cohomology theory for the given gauge field is suggested by physical properties of the system. In principle, though, there is no argument to exclude a given generalized cohomology theory $\Gamma^{*}$. The above arguments suggest the following mathematical idea, which constitute the starting point in \cite{Hopkins2005} for the construction of generalized differential cohomology theories. Let $\Gamma^{*}$ be an arbitrary multiplicative\footnote{This condition can be relaxed.} generalized cohomology theory on the category of smooth manifolds. Denote with
\begin{displaymath}
\pi_{-*}\Gamma:=\Gamma^{*}(\rm pt)
\end{displaymath}  
the coefficient ring of the point. For example, for $\Gamma^{*}=\rm H^{*}$, we have
\begin{displaymath}
\pi_{-*}{\rm H}=\mathbb{Z}
\end{displaymath}
while for $\Gamma^{*}=\rm K^{*}$ we have
\begin{displaymath}
\pi_{-*}{\rm K}=\mathbb{Z}[[u^{-1},u]]
\end{displaymath}
where $u^{-1}$ is an element of degree -2, and corresponds to the Bott generator. For any generalized cohomology theory $\Gamma^{*}$ there exists a \emph{canonical} map 
\begin{displaymath}
\varphi:\Gamma^{*}(\topsp{X})\to{\rm H}(\topsp{X};\mathbb{R}\otimes\pi_{-*}\Gamma)^{*}
\end{displaymath}
which induces an isomorphism when tensored over the reals for any topological space $\topsp{X}$ \cite{Hopkins2005}. In the above expression, the grading on the cohomology groups is such that
\begin{displaymath}
{\rm H}(\topsp{X};\mathbb{R}\otimes\pi_{-*}\Gamma)^{n}:=\bigoplus_{p+q=n}{\rm H}^{p}(\topsp{X};\mathbb{R}\otimes\pi_{-q}\Gamma)
\end{displaymath}
In the case ${\Gamma^{*}=\rm H^{*}}$, the map $\varphi$ coincides with $i$, while for $\Gamma^{*}=\rm K^{*}$, the map $\varphi$ is given by the Chern character. Notice indeed that
\begin{displaymath}
{\rm H}(\topsp{X};\mathbb{R}\otimes\pi_{-*}{\rm K})^{0}\simeq{\rm H}^{ev}(\topsp{X};\mathbb{R}),\quad {\rm H}(\topsp{X};\mathbb{R}\otimes\pi_{-*}{\rm K})^{1}\simeq{\rm H}^{odd}(\topsp{X};\mathbb{R})
\end{displaymath}
We are then interested in assigning to any smooth manifold $\topsp{M}$ an abelian group which ``completes the square''
\begin{equation}\label{square}
\xymatrix{?\ar[r]\ar[d]&\Omega_{\rm cl}(\topsp{M};\mathbb{R}\otimes\pi_{-*}\Gamma)^{*}\ar[d]\\
\Gamma^{*}(\topsp{M})\ar[r]^{\hspace{-8mm}\varphi}&{\rm H}(\topsp{M};\mathbb{R}\otimes\pi_{-*}\Gamma)^{*}}
\end{equation}
where the map
\begin{displaymath}
\Omega_{\rm cl}(\topsp{M};\mathbb{R}\otimes\pi_{-*}\Gamma)^{*}\to{\rm H}(\topsp{M};\mathbb{R}\otimes\pi_{-*}\Gamma)^{*}
\end{displaymath}
is given by assigning to an element $\omega\in\Omega_{\rm cl}(\topsp{M};\mathbb{R}\otimes\pi_{-*}\Gamma)^{*}$ its de Rham class in\\ ${\rm H}(\topsp{M};\mathbb{R}\otimes\pi_{-*}\Gamma)^{*}$. A first guess to complete the square would be to consider the fibered product of $\Gamma^{*}(\topsp{M})$ and $\Omega_{\rm cl}(\topsp{M};\mathbb{R}\otimes\pi_{-*}\Gamma)^{*}$ over ${\rm H}(\topsp{M};\mathbb{R}\otimes\pi_{-*}\Gamma)^{*}$, i.e. the group  
\begin{displaymath}
{\rm A^{*}_{\Gamma}}:=\left\{(\omega,u)\in\Omega_{\rm cl}(\topsp{M};\mathbb{R}\otimes\pi_{-*}\Gamma)^{*}\times\Gamma^{*}(\topsp{M}):[\omega]=\varphi(u)\right\}
\end{displaymath}
However, we know that these groups, for each degree, are only a first approximation to our desired generalized differential cohomology theory: indeed, we know that for $\Gamma^{*}={\rm H}^{*}$, the Cheeger-Simons groups are an \emph{extension} of ${\rm A^{*}_{\Gamma}}$. In other words, by considering only the fibered product we are losing information about the group of flat and topologically trivial fields, which in general may be nonvanishing. Mathematically this can be understood as follows: for the two cohomology classes $[\omega]$ and $[v]$ to be equal, the cocyle representatives must satisfy the equation
\begin{displaymath}
\omega-\delta{h}=v
\end{displaymath}
for some cycle $h$. The cycle $h$ realizes the homotopy between the two representatives $\omega$ and $v$, and the information about it is lost if we only consider cohomology classes. In the case in which $\Gamma^{*}$ is obtained as the cohomology of a differential complex, the square (\ref{square}) could be completed by a \emph{homotopy refinement}. The difficulty in doing this is in the fact that the functor $\Omega_{\rm cl}(\cdot;\mathbb{R}\otimes\pi_{-*}\Gamma)^{*}$ is \emph{not} a cohomology functor on the category of manifolds. In the case in which $\Gamma^{*}=\rm H^{*}$  the problem can be solved by regarding for each $q\geq{0}$ the space $\Omega_{\rm cl}^{q}(\topsp{M};\mathbb{R})$ as the 0-th cohomology of the complex 
\begin{displaymath}
\Omega^{q}(\topsp{M};\mathbb{R})\xrightarrow{\rm d}\Omega^{q+1}(\topsp{M};\mathbb{R})\xrightarrow{\rm d}\cdots\xrightarrow{\rm d}\Omega^{n}(\topsp{M};\mathbb{R})\to{0}
\end{displaymath}
where $n={\rm dim}\:\topsp{M}$. By using standard results, then, one can define for a given $q$ the complex \cite{Hopkins2005}
\begin{equation}\label{hopkinsinger}
\check{\rm C}^{p}(q):=\left\{\begin{array}{c}
{\rm C}^{p}(\topsp{M};\mathbb{Z})\times{\rm C}^{p-1}(\topsp{M};\mathbb{R})\times\Omega^{p}(\topsp{M};\mathbb{R})\quad{n\geq{q}}\\
{\rm C}^{p}(\topsp{M};\mathbb{Z})\times{\rm C}^{p-1}(\topsp{M};\mathbb{R})\quad{n<q}
\end{array}\right .
\end{equation}
with differential 
\begin{displaymath}
{\rm d}(c,h,\omega):=(\delta{c},\omega-c-\delta{h},d\omega)
\end{displaymath}
and
\begin{displaymath}
{\rm d}(c,h):=\left\{\begin{array}{c}
(\delta{c},-c-\delta{h},0)\quad{n=q-1}\\
(\delta{c},-c-\delta{h})\quad{\text{otherwise}}
\end{array}\right .
\end{displaymath}
Notice that a triple $(c,h,\omega)$ is a cocycle if it satisfies the equations
\begin{equation}\label{cocyclecond}
\begin{array}{c}
\delta{c}=0\\
{\rm d}\omega=0\\
\omega-c-\delta{h}=0
\end{array}
\end{equation}
The last of the above equations implies that $[\omega]=i([c])$: this means that the cohomology groups 
\begin{displaymath}
\check{\rm H}(q)^{*}(\topsp{M}):={\rm H}^{*}(\check{\rm C}^{*}(q);{\rm d})
\end{displaymath}
 fit the square (\ref{square}). The groups $\check{\rm H}(q)^{*}(\topsp{M})$ are called the \emph{Cheeger-Simons cohomology groups} and as one can expect we have
\begin{displaymath}
\check{\rm H}(q)^{p}(\topsp{M})\simeq\mathbb{H}^{p}(\topsp{M};\mathcal{D}^{q})
\end{displaymath}
The fieldstrength map assigns to the class $[(c,h,\omega)]\in\check{\rm H}(q)^{q}(\topsp{M})$ the closed form $\omega$, while the characteristic class map assigns the class $[c]$. Morever, the groups $\check{\rm H}(q)^{q}(\topsp{M})$ satisfies the same exact sequences as the Cheeger-Simons groups.\\

Unfortunately, the approach followed in the above paragraph to construct the theory completing the square (\ref{square}) for ordinary cohomology cannot be used in general for generalized cohomology theories as K-theory, since these are not obtained as the cohomology of a certain differential complex. Neverthless, a homotopy refinement for a given theory $\Gamma$ can be obtained by substituting cocycles with maps to the classifying space $\rm B\Gamma$, and defining an analog of conditions (\ref{cocyclecond}). By using this strategy, Hopkins and Singer showed in \cite{Hopkins2005} that to \emph{any} arbitrary generalized cohomology theory one can assign a theory, that they call a \emph{generalized differential cohomology theory}, which fits the square (\ref{square}) and satisfies analogous properties to the Cheeger-Simons or Deligne cohomology groups. We will focus in the following on a particular definition of differential K-theory, which will be the basis in the following chapter for a suitable generalization to the equivariant setting; we refer the reader to \cite{Hopkins2005} for an extensive and detailed introduction to this beautiful and exciting part of modern mathematics.\\
Let $\rm Fred$ denote the space of Fredholm operators on an infinite dimensional Hilbert space: recall that $\Fred$ is a classifying space for the group ${\rm K}^{0}(\topsp{X})$ for a given topological space, i.e.
\begin{displaymath}
{\rm K^{0}}(\topsp{X})\simeq[\topsp{X},{\rm Fred}]
\end{displaymath}
where the isomorphism above is given by considering the index bundle 
\begin{equation}\label{indexbundle}
[{\rm Ker}\:{\rm F}_{f}]-[{\rm Coker}\:{\rm F}_{f}]
\end{equation}
where 
\begin{displaymath}
\begin{array}{c}
{\rm Ker}\:{\rm F}_{f}:=\bigcup_{x}{\rm Ker}\:f(x)\\
{\rm Coker}\:{\rm F}_{f}:=\bigcup_{x}{\rm Coker}\:f(x)
\end{array}
\end{displaymath}
for a given map $f:\topsp{X}\to{\rm Fred}$. Of course, the expression (\ref{indexbundle}) is naive, since the dimension of ${\rm Ker}\:f(x)$ and ${\rm Coker}\:f(x)$ can change while $x$ varies, and needs to be stabilized, as shown in \cite{Atiyah1967}, for instance.\\
Let $u$ denote a cocycle in 
\begin{displaymath}
\mathcal{Z}({\Fred};\mathbb{R}\otimes\pi_{-*}{\rm K})^{0}=\bigoplus_{n}\mathcal{Z}^{2n}({\Fred};\mathbb{R})
\end{displaymath}
representing the universal Chern character, i.e. such that if $f:\topsp{X}\to{\rm Fred}$ classifies a vector bundle $\rm E$, $f^{*}u$ represents ${\rm ch}({\rm E})$. For a manifold $\topsp{M}$, an element in the \emph{differential K-theory} group $\check{\rm K}^{0}(\topsp{M})$ is represented by a triple
\begin{displaymath}
(c,h,\omega)
\end{displaymath}
where $c:\topsp{M}\to{\rm Fred}$, $h\in\mathcal{C}^{ev-1}(\topsp{M};\mathbb{R})$, and $\omega\in\Omega^{ev-1}(\topsp{M};\mathbb{R})$ such that
\begin{equation}\label{tripledef}
\delta{h}=\omega-c^{*}u
\end{equation}
Moreover, the triples above defined must satisfy the following equivalence relation. Two triples $(c^{0},h^{0},\omega^{0})$ and $(c^{1},h^{1},\omega^{1})$ are said to be equivalent if there exists a triple $(c,h,\omega)$ on $\topsp{M}\times[0,1]$, with $\omega$ constant along $[0,1]$, such that
\begin{equation}\label{equivrel}
\begin{array}{c}
(c,h,\omega)|_{0}=(c^{0},h^{0},\omega^{0})\\
(c,h,\omega)|_{1}=(c^{1},h^{1},\omega^{1})
\end{array}
\end{equation}
Notice that equation (\ref{tripledef}) enforces the condition  
\begin{displaymath}
[\omega]={\rm ch}([c])
\end{displaymath}
The relations (\ref{equivrel}) can be rephrased \cite{Freed2000} in a way that will be useful later on. Indeed, two triple $(c^{0},h^{0},\omega^{0})$ and $(c^{1},h^{1},\omega^{1})$ are equivalent if there exists a map
\begin{displaymath}
f:\topsp{M}\times[0,1]\to{\rm Fred}
\end{displaymath}
and an element $\sigma\in\Omega^{ev-2}(\topsp{M};\mathbb{R})$ such that
\begin{displaymath}
\begin{array}{c}
f|_{0}=c^{0}\\
f|_{1}=c^{1}\\
\omega^{1}=\omega^{0}\\
h^{1}=h^{0}+\pi_{*}f^{*}u+{\rm d}\sigma
\end{array}
\end{displaymath}  
where $\pi:\topsp{M}\times[0,1]\to\topsp{M}$ is the projection onto the first factor, and $\pi_{*}$ denotes integration along the fibre, i.e. pairing with the fundamental class of $[0,1]$. The last of the above equations is obtained by imposing the condition (\ref{tripledef}).\\
To define the higher differential K-groups, recall that the iterated loop spaces $\Omega^{i}\topsp{Fred}$ classify the functors ${\rm K}^{-i}$. Let $u^{-i}$ be the cocycle 
\begin{displaymath}
u^{-i}\in\mathcal{Z}^{2n-i}(\Omega^{i}\topsp{Fred};\mathbb{R})
\end{displaymath}
defined as
\begin{displaymath}
u^{-i}:=\Pi_{*}ev^{*}u
\end{displaymath}
where    
\begin{displaymath}
ev:\topsp{S}^{i}\times\Omega^{i}{}\topsp{Fred}\to\topsp{Fred}
\end{displaymath}
denotes the evaluation map, and $\Pi_{*}$ denotes the integration over the fiber of the projection map $\Pi:\topsp{S}^{i}\times\Omega^{i}\topsp{Fred}\to\Omega^{i}\topsp{Fred}$. Hence, elements of the higher differential groups $\check{\rm K}^{-i}(\topsp{M})$ are represented by triples
\begin{displaymath}
(c,h,\omega)
\end{displaymath}
where $c:\topsp{M}\to{\Omega^{i}{\rm Fred}}$, $h\in\mathcal{C}^{ev-i}(\topsp{M};\mathbb{R})$, and $\omega\in\Omega^{ev-i}(\topsp{M};\mathbb{R})$ such that
\begin{displaymath}
\delta{h}=\omega-c^{*}u^{-i}
\end{displaymath}
and satisfying the equivalence relations (\ref{equivrel}).\\
The differential K-groups satisfy the following exact sequences
\begin{equation}\label{exactdiffK}
\begin{array}{c}
0\to{\rm K}^{-i-1}(\topsp{M};\mathbb{R}/\mathbb{Z})\to\check{\rm K}^{-i}(\topsp{M})\to\Omega_{\rm K}^{ev-i}(\topsp{M})\to{0}\\
0\to{\Omega}^{ev-i-1}(\topsp{\topsp{M}})/{\Omega}_{\rm K}^{ev-i-1}(\topsp{\topsp{M}})\to\check{\rm K}^{-i}(\topsp{M})\to{\rm K}^{-i}(\topsp{M})\to{0}\\
0\to{\rm H}^{ev-i-1}(\topsp{M})/{\rm ch}({\rm K}^{-i-1}(\topsp{M}))\to\check{\rm K}^{-i}(\topsp{M})\to{\rm A}_{\rm K}^{-i}(\topsp{M})\to{0}
\end{array}
\end{equation}
where $\Omega_{\rm K}^{ev-i}(\topsp{M})$ denotes the elements in $\Omega^{ev-i}(\topsp{M})$ whose cohomology class is in the image of the Chern character. We will give a proof of the existence of the above exact sequences in the context of equivariant K-theory in the next chapter. Finally, the homotopy equivalence $\Omega^{n}{\topsp{Fred}}\simeq\Omega^{n+2}{\topsp{Fred}}$ induces Bott periodity on the differential K-groups, and in particular it allows to define them for positive degrees as  
\begin{displaymath}
\check{\rm K}^{n}(\topsp{M}):=\check{\rm K}^{n-2N}(\topsp{M})\quad{n-2N<0}
\end{displaymath}
This definition of differential K-theory will be at the core of a generaliztion we will propose in the next chapter. However, we should point out that there are different models for differential K-theory, such as the one introduced in \cite{smooth}, where axioms for generalized differential cohomology theories are given. We have seen that ordinary differential cohomology admits different models, which lead to isomorphic theories, characterized by the fact that they satisfy the same exact sequences. As shown in \cite{axiomsdiff}, these exact sequences in a sense completely characterize ordinary differential cohomology. As far as the author knows, a similar result does not exist for generalized differential cohomology theories.
\chapter{Ramond-Ramond fields and Orbifold differential K-theory}
As we have seen in the previous chapter, Ramond-Ramond fields in type II String theory require the use of differential K-theory to be properly described. We have also explained that this is due to the fact that the Dirac quantization condition for Ramond-Ramond fieldstrength is dictated by K-theory, rather than integral cohomology. Moreover, the lattice in which charges take their values has been suggested by the form of the coupling between D-branes and Ramond-Ramond field.\\ 
In this chapter we will generalize the above arguments to the case of superstring theory defined on an orbifold. To be precise, we will only consider the case of presentable orbifolds of the form $[\topsp{M}/\topsp{G}]$, where $\topsp{M}$ is a spin manifold, and $\topsp{G}$ is a finite group acting by spin structure preserving diffeomorphisms. In this case, as proposed by Witten in \cite{Witten1998}, D-branes are classified by the equivariant K-theory of the ``covering'' space $\topsp{M}$. It is natural, then, to ask that the Dirac quantization condition for the total fieldstrength of Ramond-Ramond fields on the orbifold $[\topsp{M}/\topsp{G}]$ is given by a Chern character homomorphism on the equivariant K-theory ${\rm K}_{\rm G}^{*}(\topsp{M})$, which induces an isomorphism when tensored over $\mathbb{R}$. This is supported by the fact that we expect to obtain type II String theory features when ${\rm G}=\{e\}$. A Chern character with the above properties has been constructed in \cite{Luck1998}, and makes use of Bredon cohomology, an equivariant cohomology theory defined on the category of G-CW complexes. The important feature of this equivariant cohomology theory is that it naturally takes into account some important features of String theory on orbifolds, namely the presence of twisted sectors. We then pass to propose a definition of differential K-theory suitable for good orbifolds. Indeed, the existence theorem for generalized differential cohomology theories developed in \cite{Hopkins2005} cannot be applied to equivariant cohomology functors on the category of $\rm G$-manifolds. More precisely, we define abelian groups that behave as a natural generalization of the ordinary differential K-theory groups, in the sense that they agree in the case of a trivial group and they satisfy analogous exact sequences. We will ``test'' our definition in the case of linear abelian orbifolds, and give a proposal for the group of (equivalence classes of) flat fields. We will also make a brief digression to describe D-branes using geometric equivariant K-homology, showing that the use of K-cycle is well-suited to the description of \emph{fractional} D-branes and their topological charges computed using equivariant Dirac operator theory. By using the Chern character, we will define electric Ramond-Ramond couplings to D-branes on good orbifolds, and compare with previous examples in the literature. Finally, we will support our definition of orbifold differential K-theory, or to be precise a complex version of it, by generalizing Moore and Witten argument as introduced in the previous chapter to the equivariant setting, and by expressing the Ramond-Ramond equations of motion by writing an equivariant version of the Ramond-Ramond current in term of the equivariant Chern character and an equivariant version of the Riemann-Roch theorem.
\section{G-CW complexes and equivariant cohomology theories}
In this section we will recall some basic notions about (generalized)
equivariant cohomology theories. In the following, $\topsp{X}$ denotes a topological space and $\topsp{G}$ a
finite group, unless otherwise stated. In the following, a (left) action $\topsp{G}\times
\topsp{X}\to \topsp{X}$ of $\topsp{G}$ on $\topsp{X}$ will be denoted $(g,x)\mapsto g\cdot x$, and we will call $\topsp{X}$ a \emph{G-space}. The
stabilizer or isotropy group of a point $x\in \topsp{X}$ is denoted
$\topsp{G}_x=\{g\in \topsp{G}~|~g\cdot x=x\}$. Recall that a continuous map $f:\topsp{X}\to \topsp{Y}$
of $\topsp{G}$-spaces is a \emph{$\topsp{G}$-map} if $f(g\cdot x)=g\cdot f(x)$ for all
$g\in \topsp{G}$ and $x\in \topsp{X}$.
\begin{definition}\label{Gcomplex}
A \emph{$\topsp{G}$-equivariant CW-decomposition} of a $\topsp{G}$-space $\topsp{X}$ consists
of a filtration $\topsp{X}_{n}$, $n\in{\mathbb{N}_0}$ such that
\begin{displaymath}
\topsp{X}=\bigcup_{n\in\nat_0}\,{\topsp{X}_{n}}
\end{displaymath}
and $\topsp{X}_{n}$ is obtained from $\topsp{X}_{n-1}$ by ``attaching'' equivariant
cells via the following procedure. Define
\begin{displaymath}
\topsp{X}_{0}=\coprod_{j\in{J_0}}\,\topsp{G}/\topsp{K}_{j} \ ,
\end{displaymath}
with $\topsp{K}_{j}$ a collection of subgroups of ${\topsp{G}}$ and the standard
(left) $\topsp{G}$-action on any coset space $\topsp{G}/\topsp{K}_j$. For $n\geq1$ set
\beq
\topsp{X}_{n}=\Big(\topsp{X}_{n-1}\amalg\coprod_{j\in J_n}\,\big({\rm D}^{n}_{j}
\times{\topsp{G}/{\topsp{K}_{j}}}\big)\Big)\,\Big/\,\sim
\label{Xnattach}\eeq
where the equivalence relation $\sim$ is generated by $\topsp{G}$-equivariant
``attaching maps''
\beq
\phi^{n}_{j}\,:\,{\rm S}^{n-1}_{j}\times{\topsp{G}/\topsp{K}_{j}}
~\longrightarrow~ \topsp{X}_{n-1} \ .
\label{Gattach}\eeq
One requires that $\topsp{X}$ carries the colimit topology with
respect to ($\topsp{X}_{n}$), {i.e.}, $\topsp{B}\subset{\topsp{X}}$ is closed if and only
if $\topsp{B}\cap{\topsp{X}_{n}}$ is closed in $\topsp{X}_{n}$ for all $n\in\nat_0$. We call
the image of ${\rm D}^{n}_{j}\times{\topsp{G}/{\topsp{K}_{j}}}$
(resp.~$\mathring{\rm D}^{n}_{j}\times{\topsp{G}/{\topsp{K}_{j}}}$) a
\emph{closed} (resp.~\emph{open}) $n$-cell of orbit type $\topsp{G}/\topsp{K}_{j}$. As
usual, we call the subspace $\topsp{X}_{n}$ the {$n$-skeleton} of
$\topsp{X}$. If $\topsp{X}=\topsp{X}_{n}$ and $\topsp{X}\neq{\topsp{X}_{n-1}}$, then $n$ is called
the (\emph{cellular}) \emph{dimension} of $\topsp{X}$ and $\topsp{X}$ is said to be of
\emph{finite type}. A $\topsp{G}$-space with a $\topsp{G}$-equivariant
CW-decomposition is called a \emph{$\topsp{G}$-complex}.
\end{definition}
When $\topsp{G}=e$ is the trivial group, a $\topsp{G}$-complex is just an ordinary
CW-complex. In general, if $\topsp{X}$ is a $\topsp{G}$-complex then the orbit space
$\topsp{X}/\topsp{G}$ is an ordinary CW-complex. Conversely, there is an intimate
relation between $\topsp{G}$-complexes and ordinary CW-complexes whenever $\topsp{G}$
is a discrete group. Let $\topsp{X}$ be a $\topsp{G}$-space which is an ordinary
CW-complex. We say that \emph{$\topsp{G}$ acts cellularly} on $\topsp{X}$ if 
\begin{itemize}
\item[1)] For each $g\in{\topsp{G}}$ and each open cell $\topsp{E}$ of $\topsp{X}$, the left
  translation $g\cdot \topsp{E}$ is again an open cell of $\topsp{X}$; and
\item[2)] If $g\cdot \topsp{E}=\topsp{E}$, then the induced map $\topsp{E} \to \topsp{E}$, $x \mapsto
  g\cdot x$ is the identity.
\end{itemize}
Then we have the following \cite{Dieck1987}
\begin{proposition} Let $\topsp{X}$ be a CW-complex with a cellular action of
  a discrete group $\topsp{G}$. Then $\topsp{X}$ is a $\topsp{G}$-complex with $n$-skeleton
  $\topsp{X}_{n}$.
\end{proposition}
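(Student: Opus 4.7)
The plan is to use the CW-filtration of $\topsp{X}$ by its skeleta $\topsp{X}_n$ as the candidate $\topsp{G}$-equivariant filtration, and then show that each stage is obtained by attaching orbits of cells in the form prescribed by Definition~\ref{Gcomplex}. First I would observe that cellularity condition~(1) implies that the group $\topsp{G}$ permutes the open cells of $\topsp{X}$ and preserves dimension, so each skeleton $\topsp{X}_n$ is a $\topsp{G}$-invariant subspace of $\topsp{X}$. Hence the filtration $(\topsp{X}_n)_{n\geq 0}$ is a filtration by $\topsp{G}$-subspaces, and $\topsp{X}=\bigcup_n \topsp{X}_n$ carries the colimit topology in $\mathscr{T}$, which is automatically compatible with the $\topsp{G}$-action since $\topsp{G}$ is discrete.

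Next I would analyze the passage from $\topsp{X}_{n-1}$ to $\topsp{X}_n$. Let $\mathcal{E}_n$ denote the set of open $n$-cells of $\topsp{X}$; by condition~(1) the group $\topsp{G}$ acts on $\mathcal{E}_n$, and I pick a set $\{E_j^n\}_{j\in J_n}$ of representatives for the $\topsp{G}$-orbits. For each representative let $\topsp{K}_j:=\{g\in\topsp{G}~|~g\cdot E_j^n=E_j^n\}$ be the setwise stabilizer; by cellularity condition~(2), every $g\in\topsp{K}_j$ acts as the identity on the closed cell. This is the key point: the characteristic map $\chi_j:\mathrm{D}^n_j\to\overline{E_j^n}\subset\topsp{X}_n$ is $\topsp{K}_j$-invariant, so it extends uniquely to a $\topsp{G}$-equivariant map
\begin{equation*}
\widetilde{\chi}_j\,:\,\mathrm{D}^n_j\times\topsp{G}/\topsp{K}_j~\longrightarrow~\topsp{X}_n \ , \qquad (x,g\topsp{K}_j)~\longmapsto~g\cdot\chi_j(x)
\end{equation*}
whose restriction to $\mathrm{S}^{n-1}_j\times\topsp{G}/\topsp{K}_j$ factors through $\topsp{X}_{n-1}$, giving the required equivariant attaching map $\phi_j^n$ of~(\ref{Gattach}).

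Then I would verify that these attaching maps produce $\topsp{X}_n$ exactly as the pushout~(\ref{Xnattach}). Set-theoretically this is clear: every open $n$-cell of $\topsp{X}_n\setminus\topsp{X}_{n-1}$ is of the form $g\cdot E_j^n$ for a unique $j\in J_n$ and a unique coset $g\topsp{K}_j$, and $\widetilde{\chi}_j$ restricts to a homeomorphism from $\mathring{\mathrm{D}}^n_j\times\{g\topsp{K}_j\}$ onto $g\cdot E_j^n$. Topologically, one checks that the quotient topology on the right-hand side of~(\ref{Xnattach}) coincides with the subspace topology inherited from the CW-topology of $\topsp{X}$, using that the $\topsp{G}$-translates of the characteristic maps $\chi_j$ together exhaust the characteristic maps of $n$-cells of $\topsp{X}$ and generate the same quotient relation. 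Finally, by induction on $n$ and the colimit property, $\topsp{X}$ is a $\topsp{G}$-complex whose $n$-skeleton in the sense of Definition~\ref{Gcomplex} is $\topsp{X}_n$.

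The main obstacle I anticipate is the rigorous verification of this last point, namely that the cellular $\topsp{G}$-action endows $\topsp{X}$ with the colimit topology with respect to the filtration built from equivariant cells. Settheoretically and orbit-wise the identification is transparent, but one must carefully check that no extra identifications are introduced by the $\topsp{G}$-action on the characteristic maps, and this is exactly where cellularity condition~(2) is essential: without it, the map $\widetilde{\chi}_j$ would not be well-defined on $\mathrm{D}^n_j\times\topsp{G}/\topsp{K}_j$, and the would-be $\topsp{G}$-cell structure would collapse. With condition~(2) in hand, the verification reduces to the standard CW-gluing lemma applied in a $\topsp{G}$-equivariant manner.
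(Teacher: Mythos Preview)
The paper does not actually prove this proposition; it merely states the result and cites \cite{Dieck1987} for the proof. Your proposal is the standard argument found there and is correct: the key observation that condition~(2) forces the setwise stabilizer $\topsp{K}_j$ to act trivially on the closure of each cell (by continuity, since it fixes the dense open cell pointwise) is exactly what makes the equivariant extension $\widetilde{\chi}_j$ well-defined on $\mathrm{D}^n_j\times\topsp{G}/\topsp{K}_j$, and the rest is a routine verification of the pushout description.
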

In the case that $\topsp{X}$ is a smooth manifold, we require the $\topsp{G}$-action
on $\topsp{X}$ to be smooth and there is an analogous result. Recall that the
applicability of algebraic topology to manifolds relies on the fact
that any manifold comes equiped with a canonical CW-decomposition. In
the case in which a group acts on the manifold one has the following
result due to Illman~\cite{illmann1,illmann1}.
\begin{theorem}
If $\topsp{G}$ is a compact Lie group or a finite group acting on a smooth
compact manifold $\topsp{X}$, then $\topsp{X}$ is triangulable as a finite
$\topsp{G}$-complex.
\end{theorem}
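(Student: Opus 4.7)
The plan is to follow Illman's original strategy, which combines the equivariant slice theorem with an inductive gluing of triangulations along the orbit-type stratification. First, I would equip $\topsp{X}$ with a $\topsp{G}$-invariant Riemannian metric by averaging an arbitrary smooth metric against the Haar measure on $\topsp{G}$ (or the counting measure when $\topsp{G}$ is finite). This ensures that $\topsp{G}$ acts by isometries and gives access to geodesically convex $\topsp{G}$-invariant neighborhoods via the exponential map, which will be used to produce small-diameter, star-shaped simplices later.

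Next, I would invoke the equivariant slice theorem: for each point $x\in\topsp{X}$ with stabilizer $\topsp{H}=\topsp{G}_{x}$, a $\topsp{G}$-invariant tubular neighborhood of the orbit $\topsp{G}\cdot x$ is $\topsp{G}$-equivariantly diffeomorphic to the twisted product $\topsp{G}\times_{\topsp{H}}\topsp{V}$, where $\topsp{V}$ is the slice representation of $\topsp{H}$ on the normal space to the orbit. Since $\topsp{X}$ is compact, only finitely many conjugacy classes of isotropy subgroups occur, so the orbit-type stratification $\topsp{X}=\bigsqcup_{(\topsp{H})}\topsp{X}_{(\topsp{H})}$ is finite, and the strata can be ordered by the partial order on conjugacy classes induced by subconjugation, with the closure of each stratum a $\topsp{G}$-invariant closed subspace.

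Then the core step is to construct the triangulation inductively on this finite stratification, starting from the principal (minimal isotropy) stratum. On each closed stratum the orbit space $\topsp{X}_{(\topsp{H})}/\topsp{G}$ is a smooth manifold, which admits an ordinary Whitehead triangulation; this lifts to a $\topsp{G}$-equivariant triangulation of $\topsp{X}_{(\topsp{H})}$ whose cells are of the form $\Delta^{n}\times\topsp{G}/\topsp{H}$, precisely matching the model cells in Definition~\ref{Gcomplex}. The key technical point is the extension step: given an equivariant triangulation on a $\topsp{G}$-invariant closed subspace $\topsp{A}\subset\topsp{X}$, one must extend it across the tubular neighborhood $\topsp{G}\times_{\topsp{H}}\topsp{V}$ of any orbit meeting the next stratum. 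Here one uses the fact that a simplicial triangulation of the link of the fixed set in $\topsp{V}$ can be chosen $\topsp{H}$-equivariantly (by Illman's equivariant simplicial approximation for linear actions of compact Lie groups on Euclidean space), then coned off and induced up from $\topsp{H}$ to $\topsp{G}$ via the twisted product.

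The hard part, and the genuine content of Illman's theorem, is precisely this inductive extension: one must ensure that the triangulations produced on different strata are simplicially compatible along their common boundaries, which requires subdividing both the old and new triangulations using an equivariant version of the simplicial approximation theorem while keeping the $\topsp{G}$-action cellular in the sense of the paragraph preceding the theorem. For compact Lie groups this forces a careful choice of invariant metrics on each slice representation so that the equivariant subdivisions can be taken arbitrarily fine. Once compatibility is established, the colimit of the triangulations of the finitely many strata produces a finite $\topsp{G}$-CW structure on $\topsp{X}$, with finiteness following from compactness of $\topsp{X}$ and the finiteness of the orbit-type stratification.
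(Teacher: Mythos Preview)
The paper does not prove this theorem. It is stated as a result due to Illman and attributed by citation to \cite{illmann1,illmann1}; no argument is given in the paper itself. So there is no ``paper's own proof'' to compare your proposal against.

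That said, your sketch is a reasonable outline of Illman's actual strategy: invariant metric, slice theorem, finite orbit-type stratification, inductive triangulation of strata lifted from the orbit space, and the delicate extension/compatibility step along links. You correctly identify the genuine difficulty as the equivariant simplicial-compatibility problem at the inductive step. One caution: the claim that each closed stratum quotient $\topsp{X}_{(\topsp{H})}/\topsp{G}$ is a smooth manifold is not quite right in general --- the closures of orbit-type strata are typically manifolds with corners or more singular spaces, and it is precisely this that makes the induction nontrivial. Illman handles this by working with equivariant simplicial complexes and an equivariant analogue of Whitehead's triangulation theorem rather than by triangulating quotients directly. If you intend this as more than a heuristic summary, that is the place where your argument would need to be sharpened.
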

The collection of $\topsp{G}$-complexes with $\topsp{G}$-maps as morphisms form a
category. We are interested in equivariant cohomology theories defined
on this category (or on subcategories thereof).\\

We will now briefly spell out the main ingredients involved in
building an equivariant cohomology theory on the category of finite
$\topsp{G}$-complexes, leaving the details to the comprehensive treatments
of \cite{Dieck1987} and~\cite{Luck2006}, and focusing instead on some
explicit examples. Fix a group $\topsp{G}$ and a commutative
ring $\topsp{R}$. A \emph{$\topsp{G}$-cohomology theory $\E^*_{\topsp{G}}$ with values in
  $\topsp{R}$-modules} is a collection of contravariant functors $\E^{n}_{\topsp{G}}$
from the category of $\topsp{G}$-CW~pairs to the category of $\topsp{R}$-modules
indexed by $n\in\mathbb{Z}$ together with natural transformations
\begin{displaymath}
\delta^{n}_{\topsp{G}}(\topsp{X},\topsp{A})\,:\,\E^{n}_{\topsp{G}}(\topsp{X},\topsp{A}) ~\longrightarrow~
\E^{n+1}_{\topsp{G}}(\topsp{X}):=\E^{n+1}_{\topsp{G}}(\topsp{X},\emptyset)
\end{displaymath}
for all $n\in{\mathbb{Z}}$ satisfying the axioms of $\topsp{G}$-homotopy
invariance, long exact sequence of a pair, excision, and disjoint
union. The theory is called \emph{ordinary} if for any orbit $\topsp{G}/\topsp{H}$ one
has $\E_{\topsp{G}}^{q}(\topsp{G}/\topsp{H})=0$ for all $q\neq{0}$. These axioms are
formulated in an analogous way to that of ordinary cohomology. The new
ingredients in an equivariant cohomology theory (which we have not
yet defined) are the \emph{induction structures}, which we shall now
describe.

Let $\alpha:\topsp{H}\to{\topsp{G}}$ be a group homomorphism, and let $\topsp{X}$ be an
$\topsp{H}$-space. Define the \emph{induction of $\topsp{X}$ with respect to $\alpha$}
to be the $\topsp{G}$-space $\ind_{\alpha}\topsp{X}$ given by
\begin{displaymath}
\ind_{\alpha}\topsp{X}:=\topsp{G}\times_{\alpha}\topsp{X} \ .
\end{displaymath}
This is the quotient of the product $\topsp{G}\times \topsp{X}$ by the $\topsp{H}$-action
$h\cdot(g,x):=(g\,\alpha(h^{-1}),h\cdot x)$, with the $\topsp{G}$-action on
$\ind_{\alpha}\topsp{X}$ given by $g'\cdot[g,x]=[g'\,g,x]$. If $\topsp{H}<\topsp{G}$ is a subgroup, and
$\alpha$ is the subgroup inclusion, the induced $\topsp{G}$-space is denoted
$\topsp{G}\times_\topsp{H}\topsp{X}$.

An \emph{equivariant cohomology theory $\E_{(-)}^{*}$ with values in
  $\topsp{R}$-modules} consists of a collection of $\topsp{G}$-cohomology theories
$\E_{\topsp{G}}^{*}$ with values in $\topsp{R}$-modules for each group $\topsp{G}$ such
that for any group homomorphism $\alpha:\topsp{H} \to \topsp{G}$ and any $\topsp{H}$-CW~pair
$(\topsp{X},\topsp{A})$ with $\ker(\alpha)$ acting freely on $\topsp{X}$, there are for each
$n\in\mathbb{Z}$ natural isomorphisms
\beq
\ind_{\alpha}\,:\,\E^{n}_{\topsp{G}}\big(\ind_{\alpha}(\topsp{X},\topsp{A})\big)
~\xrightarrow{\approx}~\E^{n}_{\topsp{H}}(\topsp{X},\topsp{A})
\label{indgen}\eeq
satisfying
\begin{itemize}
\item[(a)] Compatibility with the coboundary homomorphisms:
\begin{displaymath}
\delta^{n}_{\topsp{H}}\circ{\ind_{\alpha}}=\ind_{\alpha}\circ{\delta^{n}_{\topsp{G}}}
\ ;
\end{displaymath}
\item[(b)] Functoriality: If $\beta:\topsp{G}\to{\topsp{K}}$ is another group
  homomorphism such that $\ker(\beta\circ\alpha)$ acts freely on $\topsp{X}$,
  then for every $n\in\mathbb{Z}$ one has
\begin{displaymath}
\ind_{\beta\circ\alpha}=\ind_{\alpha}\circ{\ind_{\beta}}\circ{}\E^{n}_{\topsp{K}}(f_{1})
\end{displaymath}
where
\begin{eqnarray*}
f_{1}\,:\,\ind_{\beta}\big(\ind_{\alpha}(\topsp{X},\topsp{A})\big)&\xrightarrow{\approx}&
\ind_{\beta\circ{\alpha}}(\topsp{X},\topsp{A}) \\
(k,g,x)&\longmapsto&\big(k\,\beta(g)\,,\,x\big)
\end{eqnarray*}
is a $\topsp{K}$-homeomorphism and $\E^{n}_{\topsp{K}}(f_{1})$ is the morphism on
$\topsp{K}$-cohomology induced by $f_1$; and
\item[(c)] Compatibility with conjugation: For $g,g'\in{\topsp{G}}$ define
  ${\rm Ad}_g(g'\,)=g\,g'\,g^{-1}$. Then the homomorphism $\ind_{{\rm Ad}_g}$
  coincides with $\E^{n}_{\topsp{G}}(f_{2})$, where
\begin{eqnarray*}
f_{2}\,:\,(\topsp{X},\topsp{A})&\xrightarrow{\approx}& \ind_{{\rm Ad}_g}(\topsp{X},\topsp{A})\\
x &\longmapsto& \big(e\,,\,g^{-1}\cdot x\big)
\end{eqnarray*}
is a $\topsp{G}$-homeomorphism, where throughout $e$ denotes the identity
element in the group~$\topsp{G}$.
\end{itemize}
Thus the induction structures connect the various $\topsp{G}$-cohomologies and
keep track of the equivariance. They will be very important in the
construction of the equivariant Chern character for equivariant
K-theory in a later section, even if we are only interested in a
fixed group $\topsp{G}$.
\begin{example}[\emph{Borel cohomology}]\label{Borelex}
Let $\H^{*}$ be a cohomology theory for CW-pairs (for example,
singular cohomology). Define
\begin{displaymath}
 \H^{n}_{\topsp{G}}(\topsp{X},A):=\H^{n}\big(\topsp{EG}\times_{\topsp{G}}(\topsp{X},A)\big)
\end{displaymath}
where $\topsp{EG}$ is the total space of the classifying principal
$\topsp{G}$-bundle $\topsp{EG}\to \topsp{BG}$ which is contractible and carries a free
$\topsp{G}$-action. This is called (\emph{equivariant}) \emph{Borel
  cohomology}, and is the most commonly used form of equivariant
cohomology in the physics literature. Note that $\H_\topsp{G}^*$ is
well-defined because the quotient $\topsp{EG}\times_\topsp{G}\topsp{X}$ is unique up to the
homotopy type of $\topsp{X}/\topsp{G}$. The ordinary $\topsp{G}$-cohomology structures on
$\H^{*}_{\topsp{G}}$ are inherited from the cohomology structures on
$\H^{*}$. The induction structures for $\H^{*}_{\topsp{G}}$ are
constructed as follows. Let $\alpha:\topsp{H} \to \topsp{G}$ be a group
homomorphism and $\topsp{X}$ an $\topsp{H}$-space. Define
\begin{eqnarray*}
b\,:\topsp{EH}\times_{\topsp{H}}\topsp{X} &\longrightarrow& \topsp{EG}\times_{\topsp{G}}\topsp{G}\times_{\alpha}\topsp{X}\\
(\varepsilon,x) &\longmapsto& \big(E\alpha(\varepsilon)\,,\,e\,,\,x
\big)
\end{eqnarray*}
where $\varepsilon\in{\topsp{EH}}$, $x\in{\topsp{X}}$ and $E\alpha:\topsp{EH} \to \topsp{EG}$ is the
$\alpha$-equivariant map induced by $\alpha$. The induction map
$\ind_{\alpha}$ is then given by pullback
\begin{displaymath}
\ind_{\alpha}:=b^{*}\,:\,\H^{n}_{\topsp{G}}(\ind_{\alpha}\topsp{X})=\H^{n}(\topsp{EG}\times_{\topsp{G}}\topsp{G}
\times_{\alpha}\topsp{X})
~\longrightarrow~\H^{n}(\topsp{EH}\times_{\topsp{H}}\topsp{X})={\H^{n}_{\topsp{H}}(\topsp{X})} \ .
\end{displaymath}
If $\ker(\alpha)$ acts freely on $\topsp{X}$, then the map $b$ is a homotopy
equivalence and hence the map $\ind_{\alpha}$ is an isomorphism.
\end{example}
\begin{example}[\emph{Equivariant K-theory}]\label{EqKex}
In \cite{Segal1968}, equivariant topological K-theory is defined for
any $\topsp{G}$-complex $\topsp{X}$ as the abelian group completion of the semigroup
$\Vect_\topsp{G}^\complex(\topsp{X})$ of complex $\topsp{G}$-vector bundles over $\topsp{X}$, with $\topsp{G}$ a compact Lie group. Recall that for a $\topsp{G}$-space $\topsp{X}$, a complex $\topsp{G}$-vector bundle is given by a $\topsp{G}$-space $\topsp{E}$, and a $\topsp{G}$-map $\pi:\topsp{E}\to{\topsp{X}}$ such that $\topsp{E}$ is the total space of a complex vector bundle, and such that for any $g\in{\topsp{G}}$ and any $x\in{\topsp{X}}$, the map $g:\topsp{E}_{x}\to{\topsp{E}_{gx}}$ is an homomorphism. The compactness property of $\topsp{G}$ assures that the Grothendieck functor $\rm{K}_{\topsp{G}}^{*}$ satisfies the $\topsp{G}$-homotopy invariance\footnote{This is due to the fact that pullbacks of a $\topsp{G}$-bundle via $\topsp{G}$-homotopic maps are isomorphic only if $\topsp{G}$ is compact.}. The higher groups are defined via iterated
suspension, similarly to ordinary K-theory, and Bott periodicity holds.\\
To define the induction structures, recall that if $\topsp{X}$ is
an $\topsp{H}$-space and $\alpha:\topsp{H} \to \topsp{G}$ is a group homomorphism, then the
map
\begin{eqnarray*}
\varphi\,:\,\topsp{X} &\longrightarrow& \topsp{G}\times_{\alpha}\topsp{X}\\
x &\longmapsto& (e,x)
\end{eqnarray*}
is an $\alpha$-equivariant map which embeds $\topsp{X}$ as the subspace
$\topsp{H}\times_{\alpha}\topsp{X}$ of $\topsp{G}\times_{\alpha}\topsp{X}$, and which induces via
pullback of vector bundles the homomorphism
\begin{displaymath}
\varphi^{*}\,:\,\K^{*}_{\topsp{G}}(\topsp{G}\times_{\alpha}\topsp{X}) ~\longrightarrow~
\K^{*}_{\topsp{H}}(\topsp{X}) \ .
\end{displaymath}
This map defines the induction structure. It is invertible when
$\ker(\alpha)$ acts freely on $\topsp{X}$, with inverse the ``extension'' map
$\topsp{E}\mapsto \topsp{G}\times_{\topsp{H}}\topsp{E}$ for any $\topsp{H}$-vector bundle $\topsp{E}$ over $\topsp{X}$. This can be proven by using the following \cite{Segal1968}
\begin{theorem}\label{free}
Let $\topsp{G}$ be a compact group, and let $\topsp{N}$ be a normal subgroup acting freely on $\topsp{X}$. Then
\beq
{\rm pr}^{*}:\K^{*}_{\topsp{G}/\topsp{N}}(\topsp{X}/\topsp{N})\xrightarrow{\simeq}\K^{*}_{\topsp{G}}(\topsp{X})
\label{eqexcision}\eeq
where ${\rm pr}:\topsp{X}\to{\topsp{X}/\topsp{N}}$ is the usual projection.
\end{theorem}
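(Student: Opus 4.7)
The plan is to establish the isomorphism at the level of monoids of equivariant vector bundles, then pass to Grothendieck groups, and finally extend to higher $\K^*$-groups using suspension and Bott periodicity. The main ingredient is an explicit inverse to ${\rm pr}^*$ given by quotienting by the free $\topsp{N}$-action.

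First, I would construct a map ${\rm q}_*:\Vect_\topsp{G}^\complex(\topsp{X})\to\Vect_{\topsp{G}/\topsp{N}}^\complex(\topsp{X}/\topsp{N})$ as follows. Given a $\topsp{G}$-equivariant complex vector bundle $\pi:\topsp{E}\to\topsp{X}$, the lifted $\topsp{N}$-action on $\topsp{E}$ is free because it covers the free $\topsp{N}$-action on $\topsp{X}$. Since $\topsp{N}\lhd\topsp{G}$, the residual group $\topsp{G}/\topsp{N}$ acts on the orbit space $\topsp{E}/\topsp{N}$, and the natural projection $\topsp{E}/\topsp{N}\to\topsp{X}/\topsp{N}$ is $(\topsp{G}/\topsp{N})$-equivariant. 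To verify that $\topsp{E}/\topsp{N}$ is actually a locally trivial vector bundle, I would use the fact that $\topsp{X}\to\topsp{X}/\topsp{N}$ is a principal $\topsp{N}$-bundle (since $\topsp{N}$ is compact and acts freely), so there exist local sections $s:\topsp{U}\to\topsp{X}$ of ${\rm pr}$ over small open sets $\topsp{U}\subset\topsp{X}/\topsp{N}$, and pulling back any equivariant local trivialization of $\topsp{E}$ near $s(\topsp{U})$ gives a local trivialization of $\topsp{E}/\topsp{N}$.

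Next, I would verify that ${\rm q}_*$ and ${\rm pr}^*$ are mutually inverse on equivariant isomorphism classes. The composition ${\rm q}_*\circ{\rm pr}^*$ is essentially tautological: for $\topsp{F}\to\topsp{X}/\topsp{N}$, the total space of ${\rm pr}^*\topsp{F}$ is $\{(x,v)\in\topsp{X}\times\topsp{F}:{\rm pr}(x)=\pi_\topsp{F}(v)\}$ with $\topsp{N}$ acting only on the first factor, so quotienting by $\topsp{N}$ returns $\topsp{F}$ canonically. For ${\rm pr}^*\circ{\rm q}_*$, given $\topsp{E}\to\topsp{X}$, I would exhibit a natural $\topsp{G}$-equivariant isomorphism $\topsp{E}\to{\rm pr}^*(\topsp{E}/\topsp{N})$ sending $e\in\topsp{E}_x$ to $(x,[e])$, which is well-defined, bijective on fibers, continuous, and equivariant because $\topsp{N}$ acts freely. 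This establishes an equivalence of symmetric monoidal categories of equivariant vector bundles, inducing the isomorphism on $\K^0$ after taking Grothendieck groups.

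The extension to higher degrees follows by replacing $\topsp{X}$ with $\topsp{X}\times\reals^n$ (equipped with the trivial $\topsp{N}$-action on the second factor, so $\topsp{N}$ still acts freely and $(\topsp{X}\times\reals^n)/\topsp{N}=(\topsp{X}/\topsp{N})\times\reals^n$), using the definition of $\K^{-n}$ via compactly supported K-theory, together with Bott periodicity to handle positive degrees. The main obstacle I expect is the careful verification that $\topsp{E}/\topsp{N}$ is genuinely a locally trivial vector bundle with a continuous $(\topsp{G}/\topsp{N})$-action: this hinges on producing local equivariant slices for the principal $\topsp{N}$-bundle $\topsp{X}\to\topsp{X}/\topsp{N}$, which requires the compactness of $\topsp{N}$ and the existence of tubular neighbourhoods (or equivalently, local sections of the principal fibration). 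Once this technical point is cleared, the rest of the argument is formal naturality.
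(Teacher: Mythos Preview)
Your argument is correct and is essentially the standard proof of this result. Note, however, that the paper does not supply its own proof of this theorem: it is quoted as a known result from Segal~\cite{Segal1968} and used as a black box to verify that the induction structure on equivariant K-theory is invertible when $\ker(\alpha)$ acts freely. Your construction of the inverse via $\topsp{E}\mapsto\topsp{E}/\topsp{N}$, together with the verification of local triviality using local sections of the principal $\topsp{N}$-bundle ${\rm pr}:\topsp{X}\to\topsp{X}/\topsp{N}$, is precisely Segal's approach, so there is nothing to compare against within the paper itself.
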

 By noticing that
\begin{displaymath}
\topsp{X}/\topsp{N}\simeq(\topsp{G}/\topsp{N})\times_{\topsp{G}}\topsp{X}
\end{displaymath}
and if we define denote with $\topsp{N}$ the kernel of $\alpha:\topsp{H}\to{\topsp{G}}$, then
\begin{displaymath}
\K^{*}_{\topsp{G}}\big(\topsp{G}\times_{\alpha}\topsp{X}\big)\simeq\K^{*}_{\topsp{H}/\topsp{N}}\big((\topsp{H}/\topsp{N})\times_{\alpha}\topsp{X}\big)\simeq{\K^{*}_{\topsp{H}/\topsp{N}}(\topsp{X}/\topsp{N})}\simeq{\K^{*}_{\topsp{H}}(\topsp{X})}
\end{displaymath}
since $\topsp{N}$ acts freely on $\topsp{X}$ by hypothesis.\\
In the case in which $\topsp{X}=\rm pt$, a $\topsp{G}$-vector bundle is just a $\topsp{G}$-module, hence we have
\begin{equation}\label{point}
{\rm K}^{0}_{\topsp{G}}(\rm pt)\simeq{\rm R(\topsp{G})},\quad {\rm K}_{\topsp{G}}^{-1}(pt)\simeq{0}
\end{equation}
where $\rm R(\topsp{G})$ is the \emph{respresentation ring} of $\topsp{G}$, i.e. the ring generated over $\mathbb{Z}$ by the irreducible representations of $\topsp{G}$.\\
We can use the above results to show that 
\begin{displaymath}
{\rm K}^{0}_{\topsp{G}}(\topsp{G}/\topsp{H})\simeq{\rm K}^{0}_{\topsp{G}}(\topsp{G}\times_{\topsp{H}}{\rm pt})\simeq{\rm K}^{0}_{\topsp{H}}(\rm pt)\simeq{\rm R(\topsp{H})},\quad{\rm K}^{-1}_{\topsp{G}}(\topsp{G}/\topsp{H})\simeq{0}
\end{displaymath}
for $\topsp{H}<\topsp{G}$.\\
In the case in which the group $\topsp{G}$ acts freely on the space $\topsp{X}$, theorem \ref{free} implies        
\begin{displaymath}
\K^{*}(\topsp{X}/\topsp{G}){\simeq}\K^{*}_{\topsp{G}}(\topsp{X})
\end{displaymath}
On the other extreme, when $\topsp{G}$ acts trivially on $\topsp{X}$, the following isomorphism holds \cite{Segal1968}
\begin{equation}
\K^{*}_{\topsp{G}}(\topsp{X})\simeq\K^{*}(\topsp{X})\otimes{{\rm R}(\topsp{G})}
\end{equation}
\end{example}
\section{The equivariant Chern character}\label{equivChernsec}
As we mentioned in the beginning of this chapter, we expect that the Dirac quantization condition for the total Ramond-Ramond fieldstrength on a good orbifold is dictated by some homomorphism on equivariant K-theory which is a generalization of the ordinary Chern character. One might naively think that the correct target
theory for the equivariant Chern character would naturally be
Borel cohomology, as defined in example \ref{Borelex}, with real coefficients. This is not the case, as emphasised in particular by a \emph{completion theorem} of Atiyah and Segal, which we briefly summarize. Any conjugacy classes of an element $\gamma\in{\topsp{G}}$ induces a homomorphism  
\begin{displaymath}
\nu_{\gamma}:{\rm R}(\topsp{G})\to\mathbb{C}
\end{displaymath}
given by $\nu_{\gamma}(\rho):=\chi_{\rho}(\gamma)$, where $\chi_{\rho}$ is the character associated to the representation $\rho$, which is constant on conjugacy classes. The kernel of such a homomorphism is a prime ideal\footnote{A \emph{prime ideal} $\rm P$ in a commutative ring $\rm R$ is an ideal such that whenever the product $ab$ of two elements in $\rm R$ lies in $\rm P$, then $a$ or $b$ lies in $\rm P$. } in the ring ${\rm R}(\topsp{G})$. Let us denote with $\rm I_{G}$ the prime ideal associated to element $e\in{\topsp{G}}$. Then we have \cite{Atiyah1969}
\begin{theorem}\label{completion}
For a G-space $\topsp{X}$, with $\topsp{G}$ a compact Lie group, the Borel cohomology ${\rm H}_{\topsp{G}}^{*}(\topsp{X};\mathbb{Q})$ is isomorphic to the completion of the ${\rm R}(\topsp{G})$-module ${\rm K}^{*}_{\topsp{G}}(\topsp{X})\otimes{\mathbb{Q}}$ with respect to the ideal $\rm I_{G}$.
\end{theorem}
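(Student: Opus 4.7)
The plan is to deduce this statement from the full Atiyah--Segal completion theorem combined with the rational Chern character isomorphism applied to the Borel construction $\topsp{EG}\times_{\topsp{G}}\topsp{X}$. First I would recall the Atiyah--Segal completion theorem in its classical form: for a compact Lie group $\topsp{G}$ and a finite $\topsp{G}$-complex $\topsp{X}$, there is a natural isomorphism
\begin{equation*}
\K^{*}(\topsp{EG}\times_{\topsp{G}}\topsp{X})\;\simeq\;\K^{*}_{\topsp{G}}(\topsp{X})^{\wedge}_{\rm I_{\topsp{G}}}
\end{equation*}
where the right-hand side denotes the $\rm I_{\topsp{G}}$-adic completion of $\K^{*}_{\topsp{G}}(\topsp{X})$ viewed as a module over the Noetherian ring ${\rm R}(\topsp{G})$. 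The proof of this classical result proceeds by induction on the cells of $\topsp{X}$, reducing to the case $\topsp{X}=\topsp{G}/\topsp{H}$ in which it becomes a statement about completion of representation rings, and using a Milnor-type exact sequence to pass from the finite skeleta $\topsp{EG}^{(n)}\times_{\topsp{G}}\topsp{X}$ to the colimit $\topsp{EG}\times_{\topsp{G}}\topsp{X}$.

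Next I would rationalize both sides and apply the ordinary Chern character to the left-hand side. Since $\topsp{EG}\times_{\topsp{G}}\topsp{X}$ is a CW-complex (of possibly infinite dimension), the Chern character provides a natural isomorphism
\begin{equation*}
\ch\otimes\Id_{\rat}\,:\,\K^{*}(\topsp{EG}\times_{\topsp{G}}\topsp{X})\otimes\rat\;\xrightarrow{\simeq}\;{\rm H}^{*}(\topsp{EG}\times_{\topsp{G}}\topsp{X};\rat)\;=\;{\rm H}^{*}_{\topsp{G}}(\topsp{X};\rat)
\end{equation*}
by Theorem~\ref{chern}, extended to the infinite CW-complex via the skeletal filtration and the fact that rational cohomology satisfies Milnor's $\lim^1$ vanishing for towers of finite-dimensional $\rat$-vector spaces built from the finite $\topsp{G}$-CW structure on $\topsp{X}$.

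The remaining step is to interchange the rationalization with the $\rm I_{\topsp{G}}$-adic completion, which is the technical heart of the argument. Because ${\rm R}(\topsp{G})$ is Noetherian and $\K^{*}_{\topsp{G}}(\topsp{X})$ is finitely generated as an ${\rm R}(\topsp{G})$-module when $\topsp{X}$ is a finite $\topsp{G}$-complex, flatness of $\rat$ over $\zed$ together with standard commutative algebra gives a natural isomorphism
\begin{equation*}
\K^{*}_{\topsp{G}}(\topsp{X})^{\wedge}_{\rm I_{\topsp{G}}}\otimes\rat\;\simeq\;\bigl(\K^{*}_{\topsp{G}}(\topsp{X})\otimes\rat\bigr)^{\wedge}_{\rm I_{\topsp{G}}} \ .
\end{equation*}
Stringing these three isomorphisms together yields the claimed identification of ${\rm H}_{\topsp{G}}^{*}(\topsp{X};\rat)$ with the $\rm I_{\topsp{G}}$-adic completion of $\K^{*}_{\topsp{G}}(\topsp{X})\otimes\rat$.

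The main obstacle I expect is the first step, namely the inductive proof of the Atiyah--Segal completion theorem itself: verifying the base case $\topsp{X}=\topsp{G}/\topsp{H}$ requires knowing that the completion of ${\rm R}(\topsp{H})$ at the pulled-back ideal coincides with $\K^{*}(\topsp{BH})$, which is a nontrivial computation relying on character theory and the structure of classifying spaces. A secondary subtlety is the $\lim^{1}$ analysis on the Borel construction, which must be handled carefully to ensure that rationalization and the Chern character genuinely commute with passage to the colimit over finite skeleta; the finite generation of $\K^{*}_{\topsp{G}}(\topsp{X})$ over the Noetherian ring ${\rm R}(\topsp{G})$ is exactly what makes these limits well-behaved.
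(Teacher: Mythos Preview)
Your proposal outlines a correct and standard derivation, but you should be aware that the paper does not actually prove this theorem. It is stated with a citation to Atiyah and Segal's original paper \cite{Atiyah1969} and treated as a known result; the sentence immediately following the theorem merely unpacks the definition of completion rather than supplying an argument. So there is no ``paper's own proof'' to compare against.

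That said, your route is the natural one and matches how the result is usually justified: invoke the classical Atiyah--Segal completion theorem $\K^{*}(\topsp{EG}\times_{\topsp{G}}\topsp{X})\simeq\K^{*}_{\topsp{G}}(\topsp{X})^{\wedge}_{\rm I_{\topsp{G}}}$, rationalize, and then apply the ordinary Chern character on the Borel construction. Your identification of the two delicate points---the base case $\topsp{X}=\topsp{G}/\topsp{H}$ in the Atiyah--Segal induction, and the interchange of rationalization with completion via Noetherianity of ${\rm R}(\topsp{G})$ and finite generation of $\K^{*}_{\topsp{G}}(\topsp{X})$---is accurate. One small caveat: Theorem~\ref{chern} in the paper is stated for finite CW-complexes, so your extension of the Chern character isomorphism to the infinite-dimensional space $\topsp{EG}\times_{\topsp{G}}\topsp{X}$ does require the $\lim^{1}$ argument you mention, and you are right to flag it.
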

The completion of ${\rm K}^{*}_{\topsp{G}}(\topsp{X};\mathbb{Q}):={\rm K}^{*}_{\topsp{G}}(\topsp{X})\otimes{\mathbb{Q}}$ is defined as the tensor product ${\rm K}^{*}_{\topsp{G}}(\topsp{X};\mathbb{Q})\otimes_{{\rm R}(\topsp{G})}\widehat{\rm R}(\topsp{G})$, where $\widehat{\rm R}(\topsp{G})$ is given by the limit of the quotients ${\rm R}(\topsp{G})/{\rm I}^{n}_{\rm G}\cdot{\rm R}(\topsp{G})$ for $n$ going to infinity.\\
The above theorem suggests then that Borel cohomology is not the correct target theory for a Chern character inducing an isomorphism over $\mathbb{R}$. If we think of $\rm R(\topsp{G})$ as the ring of functions over $\topsp{G}$, the prime ideal $\rm I_{G}$ corresponds to the unit element in $\topsp{G}$. Theorem \ref{completion} then states that Borel cohomology does not take into account ``contributions'' of the non-trivial elements in $\topsp{G}$, and hence, in a sense, it is \emph{localised} around the unit element.\\[2mm]
There are several approaches to the equivariant Chern character (see
refs.~\cite{Atiyah1989,Slominska1976,Block1994,Freed2002a,Adem2003},
for example) which strongly depend on the types of groups involved
(discrete, continuous, \emph{etc.}) and on the ring one tensors with
($\real$, $\complex$, \emph{etc.}). As we are interested in finite
groups and real coefficients,
we will use the Chern character constructed in \cite{Luck2006}
and~\cite{Luck1998}.\\
In the following section we will briefly recall the basic constructions in Bredon cohomology \cite{Bredon,Luck2006,mislin}, which will turn
out to be the best suited equivariant cohomology theory for all of our
purposes. We will refer to {\appCat} for some pertinent aspects of functor categories.
\subsection{Bredon cohomology}
In the following, $\topsp{G}$ will denote a discrete group. The \emph{orbit category}
$\ocat{\topsp{G}}$ of $\topsp{G}$ is defined as the category whose objects are
homogeneous spaces $\topsp{G}/\topsp{H}$, with $\topsp{H}<{\topsp{G}}$, and whose morphisms are
$\topsp{G}$-maps between them. From general considerations~\cite{Dieck1987} it
follows that a $\topsp{G}$-map between two homogeneous spaces $\topsp{G}/\topsp{H}$ and $\topsp{G}/\topsp{K}$
exists if and only if $\topsp{H}$ is conjugate to a subgroup of $\topsp{K}$, and hence
any such map is of the form
\beq
\big(g\,\topsp{H} ~\longmapsto~ g\,a\,\topsp{K}\big)
\label{MapGH}\eeq
for some $a\in{\topsp{G}}$ such that $a^{-1}\,\topsp{H}\,a<{\topsp{K}}$. If
$\mathfrak{F}$ is any family of subgroups of $\topsp{G}$ then there is a
subcategory $\ocat{\topsp{G},\mathfrak{F}}$ with objects $\topsp{G}/\topsp{H}$ for
$\topsp{H}\in\mathfrak{F}$. A simple example is provided by the cyclic groups
$\topsp{G}=\mathbb{Z}_{p}$ with $p$ prime, for which the orbit category has
just two objects, $\topsp{G}/e=\topsp{G}$ and $\topsp{G}/\topsp{G}=\pt$.

If $\cat{Ab}$ denotes the category of abelian groups, then a
\emph{coefficient system} is a functor
\begin{displaymath}
\underline{\topsp{F}}\,:\,\ocat{\topsp{G}}^{\text{op}}~\longrightarrow~\cat{Ab}
\end{displaymath}
where $\ocat{\topsp{G}}^{\rm op}$ denotes the opposite category to
$\ocat{\topsp{G}}$. With such a functor and any $\topsp{G}$-complex $\topsp{X}$,\footnote{When
  $\topsp{G}$ is an infinite discrete group, one should restrict to
  \emph{proper} $\topsp{G}$-complexes, {i.e.}, with finite stabilizer for
  any point of $\topsp{X}$. Some further minor assumptions are needed when $\topsp{G}$
  is a Lie group.} one can define for each $n\in\mathbb{Z}$ the
group
\beq
C^{n}_{\topsp{G}}(\topsp{X},\,\underline{\topsp{F}}\,):=\text{Hom}_{\ocat{\topsp{G}}}\big(\,
\underline{C}\,_{n}(\topsp{X})\,,\,\underline{\topsp{F}}\,\big)
\label{CGnXF}\eeq
where $\underline{C}\,_{n}(\topsp{X}):\ocat{\topsp{G}}^{\rm op}\to\cat{Ab}$ is the
projective functor defined by
$$\underline{C}\,_{n}(\topsp{X})(\topsp{G}/\topsp{H}):=C_{n}\big(\topsp{X}^{\topsp{H}}\big) \ , $$ the cellular
homology of the fixed point complex 
\beq
\topsp{X}^{\topsp{H}}:=\big\{x\in{\topsp{X}}~\big|~h\cdot{x}=x \quad 
\forall{h\in{\topsp{H}}}\big\} \ .
\label{XHdef}\eeq
In equation~(\ref{CGnXF}), $\Hom_{\ocat{\topsp{G}}}(-,-)$ denotes the group of
natural transformations between two contravariant functors, with the
group structure inherited by the images of the functors in
$\cat{Ab}$. The functoriality property of
$\underline{C}\,_{n}(\topsp{X})$ is the natural one induced by the
identification $\topsp{X}^{\topsp{H}}\simeq\text{Map}_{\topsp{G}}(\topsp{G}/\topsp{H},\topsp{X})$. Indeed, the two maps
\begin{eqnarray*}
\topsp{X}^{\topsp{H}}~\longrightarrow~\text{Map}_{\topsp{G}}(\topsp{G}/\topsp{H},\topsp{X}) & , & \qquad
x~\longmapsto~f_{x}\big([g\,\topsp{H}]\big)=g\cdot{x} \ , \\[4pt]
\text{Map}_{\topsp{G}}(\topsp{G}/\topsp{H},\topsp{X})~\longrightarrow~\topsp{X}^{\topsp{H}} & , & \qquad
f~\longmapsto~{f(\topsp{H})}
\end{eqnarray*}
are easily seen to be inverse to each other, and the desired
homeomorphism is obtained by giving the space
$\text{Map}_{\topsp{G}}(\topsp{G}/\topsp{H},\topsp{X})$ the \emph{compact-open} topology. In
particular, a $\topsp{G}$-map (\ref{MapGH}) induces a cellular map $\topsp{X}^\topsp{K}\to
\topsp{X}^\topsp{H}$, $x\mapsto a\cdot x$.

These groups can be expressed in terms of the $\topsp{G}$-complex structure of
$\topsp{X}$. If the $n$-skeleton $\topsp{X}_{n}$ is obtained by attaching equivariant
cells as in equation~(\ref{Xnattach}) with $\topsp{K}_j$ the stabilizer of an
$n$-cell of $\topsp{X}$, then the cellular chain complex $C_*(\topsp{X})$
consists of $\topsp{G}$-modules $C_n(\topsp{X})=\bigoplus_{j\in J_n}\,\zed[\topsp{G}/\topsp{K}_j]$ and
hence
\begin{displaymath}
\underline{C}^{}\,_{n}(\topsp{X})(\topsp{G}/\topsp{H})\simeq\bigoplus_{j\in J_n}\,\mathbb{Z}\big[
\text{Mor}_{\ocat{\topsp{G}}}(\topsp{G}/\topsp{H},\topsp{G}/\topsp{K}_j)\big] \ .
\end{displaymath}
For each $n\geq0$, the group $C_\topsp{G}^n(\topsp{X}\,,\,\underline{\topsp{F}}\,)$ is the
direct limit functor over all $n$-cells of orbit type $\topsp{G}/\topsp{K}_j$ in $\topsp{X}$
of the groups $\underline{\topsp{F}}\,(\topsp{G}/\topsp{K}_j)$. This follows by restricting
equation~(\ref{CGnXF}) to the full subcategory $\ocat{\topsp{G},\mathfrak{F}(\topsp{X})}$,
with $\mathfrak{F}(\topsp{X})$ the family of subgroups of $\topsp{G}$ which occur as
stabilizers of the $\topsp{G}$-action on $\topsp{X}$~\cite{mislin}.

The $\zed$-graded group
$C^{*}_{\topsp{G}}(\topsp{X},\,\underline{\topsp{F}}\,)=\bigoplus_{n\in\mathbb{Z}}\,
C^{n}_{\topsp{G}}(\topsp{X},\,\underline{\topsp{F}}\,)$ inherits a coboundary operator
$\delta$, and hence the structure of a cochain complex,
from the boundary operator on cellular chains. To a natural
transformation $f:\underline{C}\,_{n}(\topsp{X})\to{\underline{\topsp{F}}\,}$, one
associates the natural transformation $\delta{f}$ defined by
\begin{eqnarray*}
\delta{f}(\topsp{G}/\topsp{H})\,:\,C_{n}\big(\topsp{X}^{\topsp{H}}\big)&\longrightarrow&
{\underline{\topsp{F}}\,(\topsp{G}/\topsp{H})}\\
\sigma&\longmapsto&f(\topsp{G}/\topsp{H})(\partial\sigma)
\end{eqnarray*}
for $\sigma\in{C_{n-1}(\topsp{X}^{\topsp{H}})}$, with naturality induced from that of
the cellular boundary operator $\partial$. Then the \emph{Bredon
  cohomology} of $\topsp{X}$ with coefficient system $\underline{\topsp{F}}$ is
defined as
\begin{displaymath}
\H^{*}_{\topsp{G}}(\topsp{X};\,\underline{\topsp{F}}\,):=\text{H}\big(C^{*}_{\topsp{G}}(\topsp{X},\,
\underline{\topsp{F}}\,)\,,\,\delta\big) \ .
\end{displaymath}
This defines a $\topsp{G}$-cohomology theory. See \cite{Luck2002} for the
proof that $\H^{*}_{\topsp{G}}(\topsp{X};\,\underline{\topsp{F}}\,)$ is an equivariant
cohomology theory, {i.e.}, for the definition of the induction
structure. One can also define cohomology groups by restricting the
functors in equation~(\ref{CGnXF}) to a subcategory
$\ocat{\topsp{G},\mathfrak{F}}$. The definition of Bredon cohomology is
independent of $\mathfrak{F}$ as long as $\mathfrak{F}$ contains the
family $\mathfrak{F}(\topsp{X})$ of stabilizers~\cite{mislin}. This fact
is useful in explicit calculations. In particular, by taking
$\mathfrak{F}=\topsp{H}$ to consist of a single subgroup, one shows that the
Bredon cohomology of $\topsp{G}$-homogeneous spaces is given by
\beq
\H^*_{\topsp{G}}(\topsp{G}/\topsp{H};\,\underline{\topsp{F}}\,)\=
\H^0_{\topsp{G}}(\topsp{G}/\topsp{H};\,\underline{\topsp{F}}\,)\=\underline{\topsp{F}}\,(\topsp{G}/\topsp{H}) \ .
\label{BredonGH}\eeq
\begin{example}[\emph{Trivial group}] 
When $\topsp{G}={e}$ is the trivial group, {i.e.}, in the non-equivariant
case, the functors $\underline{C}\,_{n}(\topsp{X})$ and $\underline{\topsp{F}}$ can be
identified with the abelian groups
$C_{n}(\topsp{X})=\underline{C}\,_{n}(\topsp{X})(e)$ and $\topsp{F}=\underline{\topsp{F}}\,(e)$. Then
$$C^{n}_{e}(\topsp{X},\topsp{F})=C^{n}(\topsp{X},\topsp{F})$$ and one has
$\H^{n}_{e}(\topsp{X};\,\underline{\topsp{F}}\,)=\text{H}\left(C^{n}(\topsp{X},\topsp{F}),\delta\right)$,
{i.e.}, the ordinary $n$-th cohomology group of $\topsp{X}$ with
coefficients in $\topsp{F}$.
\label{Bredontrivgpex}\end{example}

\begin{example}[\emph{Free action}]
If the $\topsp{G}$-action on $\topsp{X}$ is \emph{free}, then all
stabilizers $\topsp{K}_j$ are trivial and $\topsp{X}^\topsp{H}=\emptyset$ for every $\topsp{H}\leq \topsp{G}$,
$\topsp{H}\neq e$. In this case one may take $\mathfrak{F}=e$ to compute the
cochain complex
$$
C_\topsp{G}^*(\topsp{X},\,\underline{\topsp{F}}\,)\simeq\Hom_\topsp{G}\big(C_*(\topsp{X})\,,\,
\underline{\topsp{F}}\,(\topsp{G}/e)\big)
$$
and so the Bredon cohomology $\H^{*}_{\topsp{G}}(\topsp{X};\,\underline{\topsp{F}}\,)$
coincides with the equivariant cohomology
$$\H_\topsp{G}^*\big(\topsp{X}\,;\,\underline{\topsp{F}}\,(\topsp{G}/e)\big)$$ of $\topsp{X}$ with
coefficients in the $\topsp{G}$-module
$\underline{\topsp{F}}\,(\topsp{G}/e)=\underline{\topsp{F}}\,(\topsp{G})$. In the case of the constant
functor $\underline{\topsp{F}}=\underline{\zed}$, with
$\underline{\zed}\,(\topsp{G}/\topsp{H})=\zed$ for every $\topsp{H}\leq \topsp{G}$ and the value on
morphisms in $\ocat{\topsp{G}}^{\rm op}$ given by the identity homomorphism of
$\zed$, this group reduces to the ordinary cohomology
$\H^*(\topsp{X}/\topsp{G};\zed)$.
\label{Bredonfreeactionex}\end{example}

\begin{example}[\emph{Trivial action}]
If the $\topsp{G}$-action on $\topsp{X}$ is \emph{trivial}, then the collection of
isotropy groups $\topsp{K}_j$ for the $\topsp{G}$-action is the set of all subgroups
of $\topsp{G}$ and $\topsp{X}^\topsp{H}=\topsp{X}$ for all $\topsp{H}\leq \topsp{G}$. In this case the functor
$\underline{C}\,_n(\topsp{X})$ can be decomposed into a sum over $n$-cells of
projective functors $\underline{P}\,_{\topsp{K}_j}$ with
$\topsp{K}_{j}=\topsp{G}$~\cite{mislin}, and so one has
$$
\Hom_{\ocat{\topsp{G}}}\big(\,\underline{C}\,_n(\topsp{X})\,,\,\underline{\topsp{F}}\,\big)\simeq
\Hom\big(C_n(\topsp{X})\,,\,\lim_{\longleftarrow}{}_{\ocat{\topsp{G}}^{\rm op}}\,
\underline{\topsp{F}}\,(\topsp{G}/\topsp{H})\big)
$$
where the inverse limit functor is taken over the opposite category
$\ocat{\topsp{G}}^{\rm op}$. It follows that the Bredon cohomology
$$
\H^*_\topsp{G}(\topsp{X};\,\underline{\topsp{F}}\,)=\H^*\big(\topsp{X}\,;\,
\underline{\topsp{F}}\,(\topsp{G}/\topsp{G})\big)
$$
is the ordinary cohomology of $\topsp{X}$ with coefficients in the abelian
group $\underline{\topsp{F}}\,(\topsp{G}/\topsp{G})=\underline{\topsp{F}}\,(\pt)$.
\label{Bredontrivactionex}\end{example}
We will now specialize the coefficient system for Bredon
cohomology to the
\emph{representation ring functor} $\underline{{\rm R}}(-)$
defined on the orbit category $\ocat{\topsp{G}}$ by sending the left coset
$\topsp{G}/\topsp{H}$ to ${\rm R}(\topsp{H})$, the representation ring of the group
$\topsp{H}$. A morphism (\ref{MapGH}) is sent to the
homomorphism ${\rm R}(\topsp{K}) \to {\rm R}(\topsp{H})$ given by first restricting the
representation from $\topsp{K}$ to the subgroup conjugate to $\topsp{H}$, and then
conjugating by $a$. Since $\underline{{\rm R}}(-)$ is a functor to rings,
the Bredon cohomology $\topsp{H}^{*}_{\topsp{G}}(\topsp{X};\underline{{\rm R}}(-))$ naturally
has a ring structure.\\
By equation (\ref{BredonGH}), we have
\begin{displaymath}
\H^*_{\topsp{G}}(\topsp{G}/\topsp{H};\,\underline{\topsp{R}}(-)\,)\={\underline{\rm R}(\topsp{G}/\topsp{H})}\={\rm R}(\topsp{H})\={\rm K}_{\topsp{G}}^{*}(\topsp{G}/\topsp{H})
\end{displaymath}
which is already an indication that Bredon cohomology is a better
relative of equivariant K-theory than Borel cohomology. Indeed, using
the induction structure of Example~\ref{Borelex} one shows that the
Borel cohomology
$$
\H_\topsp{G}^*(\topsp{G}/\topsp{H})=\H^*(\topsp{BH})
$$
coincides with the cohomology of the classifying space $\topsp{BH}=\topsp{EH}/\topsp{H}$,
which computes the group cohomology of $\topsp{H}$ and is typically
infinite-dimensional (even for finite groups $\topsp{H}$). 

In the construction of the equivariant Chern character in the next section, it will be important to represent
the rational Bredon cohomology
$\H^{*}_{\topsp{G}}(\topsp{X};\mathbb{Q}\otimes{\underline{{\rm R}}(-)})$ as a certain
group of homomorphisms of functors, similarly to the cochain groups
(\ref{CGnXF}). For this, we introduce another
category $\subc{\topsp{G}}$. The objects of $\subc{\topsp{G}}$ are the
subgroups of $\topsp{G}$,\footnote{If $\topsp{G}$ is infinite then one should restrict
  to finite subgroups of $\topsp{G}$.} and the morphisms are given by
\begin{displaymath}
\text{Mor}_{\subc{\topsp{H},\topsp{K}}}:=\left\{f:\topsp{H}\to{\topsp{K}}~\big|~\exists\:{g\in{\topsp{G}} \ ,
    \ g\,\topsp{H}\,g^{-1}\leq{\topsp{K}}}\ , \ f={\rm
    Ad}_g\right\}\,\big/\,\text{Inn}(\topsp{K}) \ .
\end{displaymath}
In particular, there is a functor $\ocat{\topsp{G}}\to{\subc{\topsp{G}}}$ which sends
the object $\topsp{G}/\topsp{H}$ to $\topsp{H}$ and the morphism (\ref{MapGH}) in $\ocat{\topsp{G}}$
to the homomorphism $(g\mapsto{a^{-1}\,g\,a})$ in $\subc{\topsp{G}}$. If
$a$ lies in the centralizer
\beq
{Z_{\topsp{G}}(\topsp{H})}:=\big\{g\in \topsp{G}~\big|~g^{-1}h\,g=h, \forall\:h\in{\topsp{H}}\big\}
\label{ZGH}\eeq
 of $\topsp{H}$ in $\topsp{G}$, then the morphism (\ref{MapGH}) is sent to the
identity map. Any functor
$\underline{\topsp{F}}:\subc{\topsp{G}}^{\text{op}}\to\cat{Ab}$ can be naturally
regarded as a functor on $\ocat{\topsp{G}}^{\text{op}}$.

Define the quotient functors
$\underline{C}\,_{*}^{\text{qt}}(\topsp{X})\,,\,
\underline{\H}\,_{*}^{\text{qt}}(\topsp{X}):\subc{\topsp{G}}^{\rm
  op}\to\cat{Ab}$ by
\begin{displaymath}
\underline{C}\,_{*}^{\text{qt}}(\topsp{X})(\topsp{H})~:=~
C_{*}\big(\topsp{X}^{\topsp{H}}/Z_{\topsp{G}}(\topsp{H})\big) \qquad\text{and}\qquad
\underline{\H}\,_{*}^{\text{qt}}(\topsp{X})(\topsp{H})~:=~\H_{*}
\big(\topsp{X}^{\topsp{H}}/Z_{\topsp{G}}(\topsp{H})\big) \ .
\end{displaymath}
For any functor $\underline{\topsp{F}}:\subc{\topsp{G}}^{\text{op}}\to\cat{Ab}$
one has
\begin{displaymath}
\text{Hom}\big(C_{*}(\topsp{X}^{\topsp{H}}/Z_{\topsp{G}}(\topsp{H}))\,,\,\underline{\topsp{F}}\,(\topsp{H})\big)
\simeq\text{Hom}_{Z_{\topsp{G}}(\topsp{H})}\big(C_{*}(\topsp{X}^{\topsp{H}})\,,\,\underline{\topsp{F}}\,(\topsp{H})
\big) \ .
\end{displaymath}
By observing that the centralizer (\ref{ZGH}) is precisely the group of
automorphisms of $\topsp{G}/\topsp{H}$ in the orbit category $\ocat{\topsp{G}}$ sent to the
identity map in the subgroup category $\subc{\topsp{G}}$, we finally
have
\beq
C_\topsp{G}^*(\topsp{X},\,\underline{\topsp{F}}\,)\=
\text{Hom}_{\ocat{\topsp{G}}}\big(\,\underline{C}\,_{*}(\topsp{X})\,,\,
\underline{\topsp{F}}\,\big)~\simeq~\text{Hom}_{\subc{\topsp{G}}}\big(\,
\underline{C}\,_{*}^{\text{qt}}(\topsp{X})\,,\,\underline{\topsp{F}}\,\big) \ .
\label{HomOrSubG}\eeq
At this point one can apply equation~(\ref{HomOrSubG}) to the rational
representation ring functor
$\underline{\topsp{F}}=\rat\otimes\underline{{\rm R}}(-)$, which by construction can
be regarded as an injective functor $\subc{\topsp{G}}^{\rm op}\to\cat{Ab}$, to
prove the
\begin{lemma}[\cite{Luck1998}]\label{subcat}
For any finite group $\topsp{G}$ and any $\topsp{G}$-complex $\topsp{X}$, there exists an
isomorphism of rings
\begin{displaymath}
\Phi_{\topsp{X}}\,:\,\H^{*}_{\topsp{G}}\big(\topsp{X}\,;\,\mathbb{Q}
\otimes{\underline{{\rm R}}(-)}\big)~\xrightarrow{\approx}~{\Hom}_{{\sf
    Sub}(\topsp{G})}\big(\,\underline{\H}\,_{*}^{\rm qt}(\topsp{X})\,,\,\mathbb{Q}
\otimes\underline{{\rm R}}(-)\big) \ .
\end{displaymath}
\end{lemma}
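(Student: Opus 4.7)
The strategy is to realize $\Phi_{\topsp{X}}$ as an instance of a universal-coefficient isomorphism in the abelian category of contravariant functors on $\subc{\topsp{G}}$. First I would use the identification in equation (\ref{HomOrSubG}) to rewrite the Bredon cochain complex as
\begin{displaymath}
C^{*}_{\topsp{G}}\big(\topsp{X}\,,\,\mathbb{Q}\otimes\underline{\topsp{R}}(-)\big)~\simeq~\Hom_{\subc{\topsp{G}}}\big(\,\underline{C}\,_{*}^{\rm qt}(\topsp{X})\,,\,\mathbb{Q}\otimes\underline{\topsp{R}}(-)\big)
\end{displaymath}
so that rational Bredon cohomology becomes the cohomology of a $\Hom$-complex of functors $\subc{\topsp{G}}^{\rm op}\to\cat{Ab}$. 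The coboundary on the left corresponds under this bijection to the map induced by the cellular boundary on $\underline{C}\,_{*}^{\rm qt}(\topsp{X})$, so cohomology is computed in the functor category.

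The key technical step is then to prove that the coefficient functor $\mathbb{Q}\otimes\underline{\topsp{R}}(-)$ is an \emph{injective} object of this functor category. Since the centralizer quotients $Z_{\topsp{G}}(\topsp{H})/\topsp{H}$ are finite and $\mathbb{Q}\otimes{\rm R}(\topsp{H})$ is semisimple as a $\mathbb{Q}$-algebra, one can decompose $\mathbb{Q}\otimes\underline{\topsp{R}}(-)$ as a finite direct sum of elementary pieces indexed by conjugacy classes of cyclic subgroups of $\topsp{G}$ (this is essentially Artin's induction theorem rationalized and phrased functorially on $\subc{\topsp{G}}^{\rm op}$). Each such summand is then identified with a functor of the form $\Hom_{\subc{\topsp{G}}}(-\,,\,J_{C})$, where $J_{C}$ is an injective $\mathbb{Q}[Z_{\topsp{G}}(C)/C]$-module; since this is by construction an injective object in the presheaf category, and injectivity is preserved under finite direct sums, $\mathbb{Q}\otimes\underline{\topsp{R}}(-)$ itself is injective.

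Once injectivity is in hand, the functor $\Hom_{\subc{\topsp{G}}}(-\,,\,\mathbb{Q}\otimes\underline{\topsp{R}}(-))$ is exact on the category of contravariant functors $\subc{\topsp{G}}^{\rm op}\to\cat{Ab}$, so it commutes with taking (co)homology on any chain complex of such functors. Applied to $\underline{C}\,_{*}^{\rm qt}(\topsp{X})$ this yields
\begin{displaymath}
\H^{*}\Big(\Hom_{\subc{\topsp{G}}}\big(\,\underline{C}\,_{*}^{\rm qt}(\topsp{X})\,,\,\mathbb{Q}\otimes\underline{\topsp{R}}(-)\big)\Big)~\simeq~\Hom_{\subc{\topsp{G}}}\big(\,\underline{\H}\,_{*}^{\rm qt}(\topsp{X})\,,\,\mathbb{Q}\otimes\underline{\topsp{R}}(-)\big)
\end{displaymath}
which is exactly the claimed isomorphism $\Phi_{\topsp{X}}$. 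Compatibility with the ring structure is then automatic: both sides inherit their product from the pointwise ring structure on $\underline{\topsp{R}}(-)$, and the comparison map is defined by evaluating a cocycle representative on a homology class, which is natural in the coefficient system and multiplicative. The main obstacle in carrying this plan through in full is the injectivity statement, and in particular making the functorial decomposition of $\mathbb{Q}\otimes\underline{\topsp{R}}(-)$ on $\subc{\topsp{G}}$ sufficiently explicit; everything else reduces to formal homological algebra together with the cochain-level bijection (\ref{HomOrSubG}) that is already recorded in the excerpt.
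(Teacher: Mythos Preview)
Your proposal is correct and follows essentially the same route as the paper: the text immediately preceding the lemma already records that $\mathbb{Q}\otimes\underline{\topsp{R}}(-)$ is an injective object in the functor category on $\subc{\topsp{G}}^{\rm op}$, and combines this with the cochain identification (\ref{HomOrSubG}) exactly as you do. Your extra detail on establishing injectivity via the cyclic-subgroup decomposition is in fact the argument the paper spells out later in the delocalization discussion (the splitting of $\underline{\real}(-)$ into the summands $\underline{\real}\,_{\topsp{C}}(-)$), so you have anticipated the mechanism correctly.
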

\subsection{Chern character in equivariant K-theory}\label{equivChern}
Before spelling out the definition of the equivariant Chern character given in \cite{Luck1998}, we recall some basic properties of the equivariant K-theory of a
$\topsp{G}$-complex $\topsp{X}$. Let $\topsp{H}$ be a subgroup of $\topsp{G}$, and consider the fixed
point subspace of $\topsp{X}$ defined in (\ref{XHdef}). The action of $\topsp{G}$
does not preserve $\topsp{X}^{\topsp{H}}$, but the action of the normalizer $\topsp{N}_\topsp{G}(\topsp{H})$ of
$\topsp{H}$ in $\topsp{G}$ does. If we denote with $i:\topsp{X}^{\topsp{H}}\hookrightarrow{\topsp{X}}$ the
inclusion of $\topsp{X}^{\topsp{H}}$ as a subspace of $\topsp{X}$, and with
$\alpha:\topsp{N}_\topsp{G}(\topsp{H})\hookrightarrow{\topsp{G}}$ the inclusion of $\topsp{N}_\topsp{G}(\topsp{H})$ as a
subgroup of $\topsp{G}$, then we naturally have the equality
\begin{displaymath}
i(n\cdot{x})=\alpha(n)\cdot{i(x)}
\end{displaymath}
for all ${n\in{\topsp{N}_\topsp{G}(\topsp{H})}}$ and $x\in \topsp{X}^\topsp{H}$. It follows that the induced
homomorphism on equivariant K-theory is a map~\cite{Segal1968}
\begin{displaymath}
i^{*}\,:\,\K^*_{\topsp{G}}(\topsp{X})~\longrightarrow~{\K^*_{\topsp{N}_\topsp{G}(\topsp{H})}\big(\topsp{X}^{\topsp{H}}
\big)}
\end{displaymath}
which is called a \emph{restriction morphism}.

We also  need a somewhat less known property~\cite{Luck1998}. Let
$\topsp{N}<\,{\topsp{G}}$ be a finite normal subgroup, and let $\text{Rep}(\topsp{N})$ be the
category of (isomorphism classes of) irreducible complex representations of
$\topsp{N}$. Let $\topsp{X}$ be a (proper) $\topsp{G}/\topsp{N}$-complex, and let $\topsp{G}$ act on $\topsp{X}$ via
the projection map $\topsp{G}\to{\topsp{G}/\topsp{N}}$. Then for any complex $\topsp{G}$-vector bundle
$\topsp{E}\to{\topsp{X}}$ and any representation $\topsp{V}\in{\text{Rep}(\topsp{N})}$, define
$\text{Hom}_{\topsp{N}}(\topsp{V},\topsp{E})$ as the vector bundle over $\topsp{X}$ with total space
\begin{displaymath}
\text{Hom}_{\topsp{N}}(\topsp{V},\topsp{E}):=\bigcup_{x\in{\topsp{X}}}\,\text{Hom}_{\topsp{N}}(\topsp{V},\topsp{E}_{x})
\end{displaymath}
where $\topsp{N}$ acts on the fibres of $\topsp{E}$ because of the action of $\topsp{G}$ via
the projection map. Now if $\topsp{H}\leq{\topsp{G}}$ is a subgroup which commutes
with $\topsp{N}$, $[\topsp{H},\topsp{N}]=e$, then one can induce an $\topsp{H}$-vector bundle from
$\text{Hom}_{\topsp{N}}(\topsp{V},\topsp{E})$ by defining $(h\cdot f)(v)=h\cdot{f(v)}$, $v\in
\topsp{V}$ for any $h\in{\topsp{H}}$ and any $f\in\text{Hom}_{\topsp{N}}(\topsp{V},\topsp{E})$ (remembering
that $\topsp{G}$ acts on $\topsp{E}$). Hence there is a homomorphism of rings
\begin{displaymath}
\Psi\,:\,\K^*_{\topsp{G}}(\topsp{X})~\longrightarrow~{\K^*_{\topsp{H}}(\topsp{X})\otimes{\topsp{R}(\topsp{N})}}
\end{displaymath}
defined on $\topsp{G}$-vector bundles by
\beq
\Psi\big([\topsp{E}]\big):=\sum_{\topsp{V}\in{{\rm Rep}(\topsp{N})}}\,\big[\text{Hom}_{\topsp{N}}(\topsp{V},\topsp{E})
\big]\otimes[\topsp{V}] \ .
\label{PsiEdef}\eeq
This homomorphism satisfies some naturality properties; see \cite{Luck1998}. Note that the sum
(\ref{PsiEdef}) is \emph{finite}, since $\topsp{N}$ is a finite subgroup.

We are now ready to construct the equivariant Chern character as a
homomorphism
\begin{displaymath}
\ch_{\topsp{X}}\,:\,\K^{0,1}_{\topsp{G}}(\topsp{X})~\longrightarrow~{\topsp{H}^{\rm even,odd}_{\topsp{G}}
\big(\topsp{X}\,;\,\mathbb{Q}\otimes\underline{\topsp{R}}(-)\big)}
\end{displaymath}
for any finite proper $\topsp{G}$-complex $\topsp{X}$. The strategy used
in~{}~\cite{Luck1998} is to construct $\zed_2$-graded
homomorphisms
\beq
\ch^{\topsp{H}}_{\topsp{X}}\,:\,\K^{*}_{\topsp{G}}(\topsp{X})~\longrightarrow~
{\text{Hom}\big(\H_{*}(\topsp{X}^{\topsp{H}}/Z_{\topsp{G}}(\topsp{H}))\,,\,\mathbb{Q}\otimes
{\topsp{R}}(\topsp{H})\big)}
\label{chXH}\eeq
for any finite subgroup $\topsp{H}$, and then \emph{glue} them together as $\topsp{H}$
varies through the finite subgroups of $\topsp{G}$. To define the homomorphism
(\ref{chXH}), we first compose the ring homomorphisms
\begin{displaymath}
\K^{*}_{\topsp{G}}(\topsp{X})~\xrightarrow{i^{*}}~\K_{\topsp{N}_\topsp{G}(\topsp{H})}^*\big(\topsp{X}^{\topsp{H}}
\big)~\xrightarrow{\Psi}~\K^{*}_{Z_{\topsp{G}}(\topsp{H})}\big(\topsp{X}^{\topsp{H}}\big)
\otimes{\topsp{R}(\topsp{H})}~\xrightarrow{\pi^{*}_{2}\otimes\Id}~\K^{*}_{Z_{\topsp{G}}(\topsp{H})}
\big(\topsp{EG}\times{\topsp{X}^{\topsp{H}}}\big)\otimes{\topsp{R}(\topsp{H})}
\end{displaymath}
where $\pi_2:\topsp{EG}\times \topsp{X}^\topsp{H}\to \topsp{X}^\topsp{H}$ is the projection onto the second
factor. By using the induction structure of Example~\ref{EqKex}, one
then has
\bea
\K^{*}_{Z_{\topsp{G}}(\topsp{H})}\big(\topsp{EG}\times{\topsp{X}^{\topsp{H}}}\big)\otimes{\topsp{R}(\topsp{H})}&
\xrightarrow{\approx}&\K^{*}\big(\topsp{EG}\times_{Z_{\topsp{G}}(\topsp{H})}\topsp{X}^{\topsp{H}}
\big)\otimes \topsp{R}(\topsp{H}) \nonumber\\ && \qquad ~
\xrightarrow{\ch\otimes\text{id}}~
\H^*\big(\topsp{EG}\times_{Z_{\topsp{G}}(\topsp{H})}\topsp{X}^{\topsp{H}}\,;\,\mathbb{Q}\big)
\otimes{\topsp{R}(\topsp{H})} \nonumber
\eea
where $\ch$ is the ordinary Chern character. One finally has
\bea
\H^{*}\big(\topsp{EG}\times_{Z_{\topsp{G}}(\topsp{H})}\topsp{X}^{\topsp{H}}\,;\,\mathbb{Q}\big)
\otimes{\topsp{R}(\topsp{H})}&\xrightarrow{\approx}&\H^{*}\big(\topsp{X}^{\topsp{H}}/Z_{\topsp{G}}(\topsp{H})\,;\,
\mathbb{Q}\big)\otimes{\topsp{R}(\topsp{H})} \nonumber\\ && \qquad ~\simeq~
\text{Hom}\big(\H_{*}(\topsp{X}^{\topsp{H}}/Z_{\topsp{G}}(\topsp{H}))\,,\,
\mathbb{Q}\otimes{\topsp{R}(\topsp{H})}\big) \ , \nonumber
\eea
where the first isomorphism follows from the Leray spectral sequence
by observing that the fibres of the projection
\begin{displaymath}
\topsp{EG}\times_{Z_{\topsp{G}}(\topsp{H})}\topsp{X}^{\topsp{H}}~\longrightarrow~{\topsp{X}^{\topsp{H}}\,\big/\,Z_{\topsp{G}}(\topsp{H})}
\end{displaymath}
are all classifying spaces of finite groups, having trival reduced
cohomology with $\rat$-coefficients and are therefore $\rat$-acyclic.

The equivariant Chern character is now defined as\footnote{If $\topsp{G}$ is
  infinite then the direct sum in equation~(\ref{chXprod}) is understood as
  the inverse limit functor over the dual subgroup category
  $\subc{\topsp{G}}^{\rm op}$.}
\beq
\ch_{\topsp{X}}=\bigoplus_{\topsp{H}\leq \topsp{G}}\,\ch_\topsp{X}^{\topsp{H}} \ .
\label{chXprod}\eeq
By using the various naturality properties of the
homomorphism~(\ref{PsiEdef})~\cite{Luck1998}, one sees that $\ch_{\topsp{X}}$
takes values in
$\text{Hom}_{\subc{\topsp{G}}}\big(\underline{\H}\,_{*}^{\text{qt}}(\topsp{X})\,,\,
\mathbb{Q}\otimes\underline{\topsp{R}}(-)\big)$, and by Lemma~\ref{subcat} it
is thus a $\zed_2$-graded map
\begin{displaymath}
\ch_{\topsp{X}}\,:\,\K^{*}_{\topsp{G}}(\topsp{X})~\longrightarrow~\textrm{Hom}_{\subc{\topsp{G}}}
\big(\,\underline{\H}\,_{*}^{\textrm{qt}}(\topsp{X})\,,\,\mathbb{Q}\otimes
\underline{\topsp{R}}(-)\big)\simeq{\H^{*}_{\topsp{G}}\big(\topsp{X}\,;\,
\mathbb{Q}\otimes{\underline{\topsp{R}}(-)}\big)} \ .
\end{displaymath}
This map is well-defined as a ring homomorphism because all maps
involved above are homomorphisms of rings. As with the definition of
Bredon cohomology, the sum (\ref{chXprod}) may be restricted to any
family of subgroups of $\topsp{G}$ containing the set of stabilizers
$\mathfrak{F}(\topsp{X})$.

To conclude, we have to prove that this map becomes an isomorphism
upon tensoring over $\rat$. For this, one proves that the morphism
$\ch_\topsp{X}$ in equation~(\ref{chXprod}) is an isomorphism on homogeneous spaces
$\topsp{G}/\topsp{H}$, with $\topsp{H}$ a finite subgroup of $\topsp{G}$, and then uses induction on
the number of orbit types of cells in $\topsp{X}$ along with the
Mayer-Vietoris sequences for the pushout squares induced by the
attaching $\topsp{G}$-maps (\ref{Gattach}). The isomorphism on $\topsp{G}/\topsp{H}$ is a
consequence of the isomorphisms (\ref{BredonGH}) and
(\ref{point}). The details may be found
in~{}~\cite{Luck1998}. Let $\underline{\pi_{-*}\K}\,_\topsp{G}(-)$ be
the functor on $\ocat{\topsp{G}}$ defined by
$\topsp{G}/\topsp{H}\mapsto\K^*_\topsp{G}(\topsp{G}/\topsp{H})$. Then one has the following
\begin{theorem}{\rm (\cite{Luck1998})}
For any finite proper $\topsp{G}$-complex $\topsp{X}$, the Chern character $\ch_\topsp{X}$
extends to a natural $\zed$-graded isomorphism of rings
\begin{displaymath}
{\ch_{\topsp{X}}}\otimes\mathbb{Q}\,:\,{\K^{*}_{\topsp{G}}(\topsp{X})}\otimes
\mathbb{Q}~\xrightarrow{\approx}~\H_{\topsp{G}}\big(\topsp{X}\,;\,\mathbb{Q}
\otimes{\underline{\pi_{-*}\K}\,_\topsp{G}(-)}\big)^{*} \ .
\end{displaymath}
\label{eqChernthm}\end{theorem}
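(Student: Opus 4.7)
The plan is to argue by induction on the equivariant cell structure of $\topsp{X}$, reducing to the case of homogeneous spaces $\topsp{G}/\topsp{H}$ for $\topsp{H}$ a finite subgroup of $\topsp{G}$, and then to invoke a Five-Lemma argument on the Mayer--Vietoris sequences associated to the attaching $\topsp{G}$-maps (\ref{Gattach}).

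First I would settle the base case $\topsp{X}=\topsp{G}/\topsp{H}$. On the K-theory side, the induction isomorphism of Example \ref{EqKex} gives $\K^{*}_{\topsp{G}}(\topsp{G}/\topsp{H})\simeq \K^{*}_{\topsp{H}}(\pt)$, which by (\ref{point}) equals $\topsp{R}(\topsp{H})$ in even degree and $0$ in odd degree. On the Bredon side, equation (\ref{BredonGH}) gives $\H^{*}_{\topsp{G}}(\topsp{G}/\topsp{H};\mathbb{Q}\otimes\underline{\pi_{-*}\K}\,_{\topsp{G}}(-))\simeq \mathbb{Q}\otimes\topsp{R}(\topsp{H})$ concentrated in degree zero. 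Tracing through the definition (\ref{chXprod}), since $\topsp{X}^{\topsp{K}}=\emptyset$ whenever $\topsp{K}$ is not subconjugate to $\topsp{H}$, only the summand with $\topsp{K}$ conjugate to $\topsp{H}$ contributes, and on that summand all maps involved (the restriction $i^{*}$, the morphism $\Psi$, and the ordinary rational Chern character on the contractible space $\topsp{EG}\times_{Z_{\topsp{G}}(\topsp{H})}(\topsp{G}/\topsp{H})^{\topsp{H}}$) become identities after tensoring with $\mathbb{Q}$. Hence $\ch_{\topsp{G}/\topsp{H}}\otimes\mathbb{Q}$ is the identity $\mathbb{Q}\otimes\topsp{R}(\topsp{H})\to \mathbb{Q}\otimes\topsp{R}(\topsp{H})$.

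Next I would run the induction on the equivariant cells. Let $\topsp{X}_{n}$ be obtained from $\topsp{X}_{n-1}$ by attaching an equivariant $n$-cell $\topsp{D}^{n}\times \topsp{G}/\topsp{K}$ along the $\topsp{G}$-map $\phi^{n}:\topsp{S}^{n-1}\times \topsp{G}/\topsp{K}\to \topsp{X}_{n-1}$ as in (\ref{Xnattach}). The associated $\topsp{G}$-pushout produces a Mayer--Vietoris long exact sequence in equivariant K-theory, and, since Bredon cohomology is an equivariant cohomology theory on finite $\topsp{G}$-complexes, an analogous long exact sequence in $\H^{*}_{\topsp{G}}(-;\mathbb{Q}\otimes\underline{\pi_{-*}\K}\,_{\topsp{G}}(-))$. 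By naturality of $\ch_{(-)}$ in the space argument, the equivariant Chern character yields a ladder of these two sequences commuting with connecting homomorphisms (this uses axiom (a) of the induction structure, which is shared by both theories). The inductive hypothesis applies to $\topsp{X}_{n-1}$, while $\topsp{D}^{n}\times \topsp{G}/\topsp{K}$ is $\topsp{G}$-homotopy equivalent to $\topsp{G}/\topsp{K}$ and $\topsp{S}^{n-1}\times \topsp{G}/\topsp{K}$ is handled either by the base case together with the suspension isomorphism (using Bott periodicity to dispose of higher suspensions) or, equivalently, by a further induction on $n$ using the standard $\topsp{G}$-CW decomposition of $\topsp{S}^{n-1}\times \topsp{G}/\topsp{K}$. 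In either case rational $\ch$ is an isomorphism on three out of every four consecutive terms of the ladder, and the Five Lemma gives the isomorphism for $\topsp{X}_{n}$; iterating through the finitely many skeleta of $\topsp{X}$ yields the result.

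The main obstacle will be the verification of naturality of $\ch_{\topsp{X}}$ with respect to the Mayer--Vietoris connecting homomorphisms: the map $\ch_{\topsp{X}}$ is assembled via (\ref{chXprod}) from pieces built using the restriction morphism $i^{*}$, the representation-theoretic splitting $\Psi$, the projection to a Borel construction, and the ordinary Chern character, and one must check that each of these steps commutes with the boundary maps $\delta$ of both long exact sequences. The restriction and $\Psi$ ingredients are natural for $\topsp{G}$-maps, and the ordinary rational Chern character is a natural transformation of (non-equivariant) cohomology theories, so these commutations follow, but combining them at the level of the full targets in $\Hom_{\subc{\topsp{G}}}(\underline{\H}\,_{*}^{\rm qt}(-),\mathbb{Q}\otimes\underline{\topsp{R}}(-))$ via Lemma \ref{subcat} is the one delicate point that requires the careful functorial bookkeeping developed in \cite{Luck1998}.
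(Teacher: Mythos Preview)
Your proposal is correct and follows essentially the same approach as the paper: establish the isomorphism on homogeneous spaces $\topsp{G}/\topsp{H}$ via (\ref{BredonGH}) and (\ref{point}), then induct using the Mayer--Vietoris sequences for the pushout squares coming from the attaching $\topsp{G}$-maps (\ref{Gattach}) together with the Five Lemma. The only cosmetic difference is that the paper phrases the induction as being on the number of orbit types of cells rather than on the skeleta, and defers the naturality bookkeeping you flag to \cite{Luck1998}.
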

\section{String theory on orbifolds}
The techniques of equivariant cohomology and K-theory illustrated in the previous section play an important role in understanding the behaviour of String theory defined on orbifolds. As mentioned in the introduction to this chapter, we will be interested in good orbifolds, which are obtained as the orbit space of the action of a finite group $\topsp{G}$ on a smooth manifold $\topsp{X}$. In particular, the action of $\topsp{G}$ will be isometric, proper, and cocompact\footnote{An action of a group $\topsp{G}$ on a space $\topsp{X}$ is said to be \emph{cocompact} if the orbit space $\topsp{X}/\topsp{G}$ is compact}. It is know that when $\topsp{G}$ acts on $\topsp{X}$ with nontrivial stabilizers, the orbit space $\topsp{X}/\topsp{G}$ cannot be given a differential structure such that the usual projection $\pi:\topsp{X}\to{\topsp{X}/\topsp{G}}$ is a smooth map. In the case in which all the stabilizers are trivial, i.e. the group $\topsp{G}$ acts freely, the orbit space naturally carries a manifold structure. We will not attempt to give a definition of nonglobal orbifolds, since we will work in the equivariant ``regime''. We direct instead the reader to the seminal paper \cite{satake} for a local description of orbifolds, and to \cite{lupercio} for a modern description in terms of groupoids.\\
The quantum behaviour of String theory on an orbifold $[\topsp{X}/\topsp{G}]$ is different from that of a quantum particle, as first realized in \cite{Dixon1985,Dixon1986}. Indeed, suppose that $\topsp{G}$ acts freely on $\topsp{X}$. To describe the quantum mechanics of a point particle propagating on the smooth manifold $\topsp{X}/\topsp{G}$, one could think of first contructing the Hilbert space of states for a particle on the manifold $\topsp{X}$, and then restrict to the Hilbert subspace of $\topsp{G}$ invariant states\footnote{A choice of a ``lift'' of the action of $\topsp{G}$ on the internal degrees of freedom, should be made, if possible. In other words, the vector bundle whose the wave function is a section of must be $\topsp{G}$-equivariant.}. Following the same logic, the first step to the quantum string propagation on $\topsp{X}/\topsp{G}$ consists in constructing the Hilbert space $\mathcal{H}_{0}$ for a string propagating on $\topsp{X}$, and restrict to the $\topsp{G}$-invariant states. In contrast to the particle case, this is not yet a complete Hilbert space of states. Indeed, $\mathcal{H}_{0}$ does not contain states of strings which are closed on the quotient manifold $\topsp{X}/\topsp{G}$, but are only modulo a $\topsp{G}$ transformation. More precisely, consider an embedding $f:[0,1]\times\mathbb{R}\to{\topsp{X}}$ of the worldsheet strip, with local coordinates $(\sigma,\tau)\in[0,1]\times\mathbb{R}$. The open strings obeying
\begin{equation}\label{twisted}
f(\sigma+2\pi,\tau)=h\cdot{f(\sigma,\tau)}
\end{equation}
for some $h\in{\topsp{G}}$ are closed on the quotient $\topsp{X}/\topsp{G}$, since the point $x$ and $h\cdot{x}$ are identified\footnote{In superstring theory $\topsp{X}$ is a $\rm \topsp{G}-spin^{c}$ manifold, and an analogous condition to (\ref{twisted}) should be imposed on the worldsheet fermion fields.}. Hence, the Hilbert space $\mathcal{H}$ for a quantum closed string propagating on $\topsp{X}/\topsp{G}$ is given by 
\begin{equation}\label{Hilbertsp}
\mathcal{H}:=\left(\bigoplus_{h\in{\topsp{G}}}\mathcal{H}_{h}\right)^{\topsp{G}}
\end{equation}
where the sector $\mathcal{H}_{h}$ is given by the space of states of an open string satisfying condition (\ref{twisted}).\\
At this point, notice that the action of $\topsp{G}$ permutes the sectors in the conjugacy classes of the associated element $h$. Indeed, for any $f$ satisfying condition (\ref{twisted}), for any $g\in{\topsp{G}}$ we have
\begin{equation}\label{twisting}
g\cdot{f(\sigma+2\pi,\tau)}=gh\cdot{f(\sigma,\tau)}=(ghg^{-1})g\cdot{f(\sigma,\tau)}
\end{equation}
We can then define
\begin{displaymath}
\mathcal{H}_{[h]}:=\bigoplus_{l\in[h]}\mathcal{H}_{l}=\bigoplus_{i=1}^{n_{h}}\mathcal{H}_{p_{i}hp_{i}^{-1}}
\end{displaymath}
where $n_{h}$ is the number of element in the conjugacy class $[h]$, and $\{p_{i}\}$ is an appropriate set of elements of $\topsp{G}$. Any element $\xi_{[h]}$ can be expressed as
\begin{displaymath}
\xi_{[h]}=(\xi_{h},\xi_{p_{1}hp_{1}^{-1}},\cdots)
\end{displaymath}
and clearly the action of $\topsp{G}$ preserves the vector space $\mathcal{H}_{[h]}$. Since we are interested in $\topsp{G}$-invariant states, we have only to consider the action of elements $g\in{Z_{\topsp{G}}(h)}$, since these are the only elements of $\topsp{G}$ which do not permute the sectors. More precisely, if we define the action of $g\in{Z_{\topsp{G}}(h)}$ on $\mathcal{H}_{[h]}$ as
\begin{displaymath}
g\cdot\xi_{[h]}:=(g\xi_{h},p_{1}gp_{1}^{-1}\xi_{p_{1}hp_{1}^{-1}},\cdots)
\end{displaymath}
we have 
\begin{displaymath}
\mathcal{H}\simeq{\bigoplus_{[h]}\mathcal{H}_{[h]}^{Z_{\topsp{G}}(h)}}
\end{displaymath}
The subspaces $\mathcal{H}_{[h]}$ associated to a nontrivial conjugacy class are called \emph{twisted sectors}.
The above construction of the Hilbert space of closed strings also ensures that the theory is modular invariant \cite{Dixon1985,Dixon1986}. One expects that these twistor sectors will appear also in the case of a G-action with fixed points, and that the Hilbert space of states can be constructed as above. Moreover, in contrast with ordinary quantum field theory, String theory is usually well defined on the singular orbifold points.\\ 
As it is expected, the twisted sectors will play a role in the behaviour of the low-energy limit of type II orbifold String theory and of D-branes. This will be illustrated in the following sections.\\
We conclude this section with a basic result that will be constantly used later on.
\begin{theorem}
Let $\topsp{G}$ be a finite group acting via isometries on a smooth Riemannian manifold $\topsp{X}$. Then the set
\begin{displaymath}
\topsp{X}^{g}:=\{x\in{\topsp{X}}:g\cdot{x}=x\}
\end{displaymath}
is naturally a (possibly disconnected) submanifold of $\topsp{X}$, for any $g\in{\topsp{G}}$.
\end{theorem}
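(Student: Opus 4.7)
The plan is to prove this by a local model argument based on the exponential map of the $G$-invariant Riemannian metric. The key observation is that an isometry fixing a point is completely determined on a neighborhood by its differential at that point, and the exponential map will let me transport the problem to a purely linear one on the tangent space.

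First, I would fix a point $x \in X^{g}$ and consider the differential $dg_{x}: T_{x}X \to T_{x}X$. Since $g$ acts by isometries and $g(x)=x$, this is an orthogonal linear transformation. Let $V_{g,x} := \ker(dg_{x} - \mathrm{id}_{T_{x}X})$ be its fixed subspace, a linear subspace of $T_{x}X$ of some dimension $d(x)$. Next I would invoke the standard fact that an isometry intertwines the exponential maps at a point and its image, so in our case $g \circ \exp_{x} = \exp_{x} \circ \, dg_{x}$ on a sufficiently small neighborhood $U$ of $0$ in $T_{x}X$ (shrinking $U$ so that both $\exp_{x}$ and $\exp_{x}\circ dg_{x}$ are defined, and using that $dg_{x}$ is orthogonal hence preserves any metric ball around $0$).

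From the intertwining relation it follows immediately that for $v \in U$, the point $\exp_{x}(v)$ lies in $X^{g}$ if and only if $dg_{x}(v) = v$, i.e. $v \in V_{g,x}$. Since $\exp_{x}$ restricts to a diffeomorphism from $U$ onto an open neighbourhood $W$ of $x$ in $X$, this gives
\begin{equation*}
X^{g} \cap W \;=\; \exp_{x}\bigl(V_{g,x} \cap U\bigr),
\end{equation*}
which exhibits $X^{g} \cap W$ as the image under a local diffeomorphism of an open subset of a linear subspace of $T_{x}X$. Hence $X^{g}$ is a smooth embedded submanifold near $x$, of dimension $d(x)$. Doing this at every fixed point shows that $X^{g}$ is globally an embedded submanifold of $X$.

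The only mild subtlety is that the dimension $d(x) = \dim V_{g,x}$ can vary between different connected components of $X^{g}$, since $dg_{x}$ at different fixed points need not be conjugate orthogonal transformations; this is exactly why the statement allows $X^{g}$ to be disconnected, with each component smooth of its own dimension. The main technical step that must be cited rather than derived from scratch is the naturality of the exponential map under isometries, $g\circ\exp_{x}=\exp_{g(x)}\circ\, dg_{x}$, which is a standard consequence of isometries sending geodesics to geodesics with matching initial velocities; everything else reduces to elementary linear algebra on $T_{x}X$.
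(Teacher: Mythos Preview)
Your proof is correct and follows essentially the same approach as the paper: linearize the action at a fixed point via the Riemannian exponential map and identify $X^{g}$ locally with the fixed subspace of $dg_{x}$ in $T_{x}X$. Your version is in fact more explicit than the paper's, since you spell out the key intertwining relation $g\circ\exp_{x}=\exp_{x}\circ\,dg_{x}$ that the paper only gestures at with the phrase ``since $g$ acts by isometries.''
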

\begin{proof}
Let $\bar{x}$ be a fixed point for the action of $g$. The pushforward $g_{*}$ acts linearly on the vector space ${\rm T}_{\bar{x}}\topsp{X}$: denote with $K\subset{\rm T}_{\bar{x}}\topsp{X}$ the space which is left fixed by $g_{*}$. Then the exponential map $\exp:{\rm T}_{\bar{x}}\topsp{X}\to{\topsp{X}}$ maps diffeomorphically the subspace $K$ on the fixed points of $g$, since $g$ acts by isometries, and hence we can use this coordinate system to define local charts for $\topsp{X}^{g}$. 
\end{proof}
\subsection{D-branes and equivariant K-cycles}\label{equivkhom}
In this section we will make some remarks concerning the topological
classification of D-branes and their charges on global orbifolds of
Type~II superstring theory with vanishing $H$-flux. As proposed by Witten in \cite{Witten1998} and emphasised in \cite{Olsen1999,garciacompean-1999-557}, Ramond-Ramond charges in type II String theory on a global orbifold $[\topsp{X}/\topsp{G}]$ are
classified by the equivariant K-theory $\K_\topsp{G}^*(\topsp{X})$ of spacetime. The arguments are essentially the same as those presented in chapter 3, hence we will avoid their restatement. Instead, we will show how equivariant K-homology $\K_*^\topsp{G}(\topsp{X})$, dual to equivariant K-theory, leads to a description of \emph{fractional D-branes} in terms of equivariant K-cycles. In the following we will refer to {\appequiv} for the definition of equivariant K-homology, both geometric and equivariant.\\
Similarly to K-homology, the cycles for equivariant K-homology, called $\topsp{G}$-equivariant K-cycles, live in an additive category
$\cat{D}^\topsp{G}(\topsp{X})$ whose objects are triples $(\topsp{W},\topsp{E},f)$ where $\topsp{W}$ is a
$\topsp{G}$-spin$^c$ manifold without boundary, $\topsp{E}$ is a $\topsp{G}$-vector bundle
over $\topsp{W}$, and
\beq\label{DGmap}
f\,:\,\topsp{W}~\longrightarrow~ \topsp{X}
\eeq
is a $\topsp{G}$-map. The group $\K_*^\topsp{G}(\topsp{X})$ is the quotient of this
category by the equivalence relation generated by bordism, direct sum,
and vector bundle modification, as detailed in Appendix~B. Note that
$\topsp{W}$ need not be a submanifold of spacetime. However, since $\topsp{X}$ is a
manifold, we can restrict the bordism equivalence relation to
\emph{differential bordism} and assume that the map
(\ref{DGmap}) is a differentiable $\topsp{G}$-map in equivariant K-cycles
$(\topsp{W},\topsp{E},f)\in\cat{D}^\topsp{G}(\topsp{X})$. In this way the category $\cat{D}^\topsp{G}(\topsp{X})$
extends the standard K-theory classification to include branes
supported on non-representable cycles in spacetime. This definition of
equivariant K-homology thus gives a concrete geometric model for the
topological classification of D-branes $(\topsp{W},\topsp{E},f)$ in a global orbifold
$[\topsp{X}/\topsp{G}]$ which captures the physical constructions of orbifold D-branes
as $\topsp{G}$-invariant states of branes on the covering space $\topsp{X}$. In the
subsequent sections we will study the pairing of Ramond-Ramond fields
with these D-branes.

Consider a D-brane localized on the submanifold $\topsp{X}^g$ of the covering
space $\topsp{X}$. Since the Chan-Paton bundle $\topsp{E}$ is $\topsp{G}$-equivariant, the fiber of the restriction $\topsp{E}$ to $\topsp{X}^{g}$ at each point carries a representation of the cyclic group $<g>$. In this case the D-brane is said to be
\emph{fractional}. Fractional D-branes are stuck at the fixed points and they couple to Ramond-Ramond fields coming for the twisted sector labelled by $[g]$. The term fractional is used since fractional D-branes, in the simple examples known, carry a \emph{fraction} of the corresponding Ramond-Ramond charge. 

We can use equivariant K-homology to geometrically describe a particular class of fractional D-branes. Indeed, let $\topsp{G}^\vee$ denote the set of conjugacy classes
$[g]$ of elements $g\in \topsp{G}$. There is a natural
subcategory $\cat{D}_{\rm frac}^\topsp{G}(\topsp{X})$ of $\cat{D}^\topsp{G}(\topsp{X})$ consisting of
triples $(\topsp{W},\topsp{E},f)$ for which $\topsp{W}$ is a $\topsp{G}$-fixed space, {i.e.}, for
which
\beq
\topsp{W}^{g}= \topsp{W}
\label{WgW}\eeq
for all $g\in \topsp{G}$. By $\topsp{G}$-equivariance this implies $f(\topsp{W})^g=f(\topsp{W})$ for
all $g\in \topsp{G}$, and so the image of the brane worldvolume lies in the
subspace
$$
f(\topsp{W})~\subset~\bigcap_{g\in \topsp{G}}\,\topsp{X}^g \ .
$$
This is the set of $\topsp{G}$-fixed points of $\topsp{X}$, and so the
objects $(\topsp{W},\topsp{E},f)$ of the category $\cat{D}_{\rm frac}^\topsp{G}(\topsp{X})$ can naturally be intepreted in terms of fractional branes. More precisely, we call $\cat{D}_{\rm
  frac}^\topsp{G}(\topsp{X})$ the category of ``maximally fractional D-branes''.

In this case, an application of Schur's lemma shows that the
Chan-Paton bundle admits an isotopical decomposition and there is a
canonical isomorphism of $\topsp{G}$-bundles
\beq
\topsp{E}~\simeq~\bigoplus_{[g]\in \topsp{G}^\vee}\,\topsp{E}_{[g]}\otimes\id_{[g]} \qquad
\mbox{with} \quad \topsp{E}_{[g]}\=\Hom_\topsp{G}\big(\id_{[g]}\,,\,\topsp{E}\big) \ ,
\label{Eisodecomp}\eeq
where $\topsp{E}_{[g]}$ is a complex vector bundle with trivial $\topsp{G}$-action and
$\id_{[g]}$ is the $\topsp{G}$-bundle $\topsp{W}\times \topsp{V}_{[g]}$ with
$\gamma:\topsp{G}\to\End(\topsp{V}_{[g]})$ the irreducible representation
corresponding to the conjugacy class $[g]\in \topsp{G}^\vee$.\\
From the
direct sum relation in equivariant K-homology it follows that a fractional D-brane, represented by a K-cycle $(\topsp{W},\topsp{E},f)$ in $\cat{D}_{\rm frac}^\topsp{G}(\topsp{X})$, splits  into a sum over irreducible fractional branes represented by the
K-cycles $(\topsp{W},\topsp{E}_{[g]}\otimes\id_{[g]},f)$, $[g]\in \topsp{G}^\vee$, which can then be considered stable.\\
We then propose that in the framework of equivariant K-homology, the topological charge of a fractional D-brane, in a given closed
string twisted sector of the orbifold String theory on a
$\topsp{G}$-spin$^c$ manifold $\topsp{X}$, can be
computed by using the equivariant Dirac operator theory introduced in {\appequiv}. The equivariant index of the $\topsp{G}$-invariant \spinc Dirac
operator $\Dirac_\topsp{E}^\topsp{X}$ coupled to a $\topsp{G}$-vector bundle $\topsp{E}\to \topsp{X}$ takes
values in $\K_\topsp{G}^*(\pt)\simeq \topsp{R}(\topsp{G})$. We can turn this into a
homomorphism on $\K_*^\topsp{G}(\topsp{X})$ with values in $\zed$ by composing
with the projection $\topsp{R}(\topsp{G})\to\zed$ defined by taking the multiplicity
of a given representation
\beq
\gamma\,:\,\topsp{G}~\longrightarrow~\End(\topsp{V}_\gamma)
\label{gammaunitary}\eeq
of $\topsp{G}$ on a finite-dimensional complex vector space
$\topsp{V}_\gamma$. There is a corresponding class in the KK-theory group
$$[\gamma]~\in~\KK_*\big(\complex[\topsp{G}]\,,\,\End(\topsp{V}_\gamma)\big)$$
which is represented by the Kasparov module
$(\topsp{V}_\gamma,\gamma,0)$ associated with the extension of the
representation~(\ref{gammaunitary}) to a complex representation of
group ring $\complex[\topsp{G}]$. By Morita invariance, the Kasparov product
with $[\gamma]$ is the homomorphism on K-theory
$$\K_0\big(\complex[\topsp{G}]\big)~\longrightarrow~
\K_0\big(\End(\topsp{V}_\gamma)\big)\simeq\K_0(\complex)\simeq\zed$$ induced by
$\gamma:\complex[\topsp{G}]\to\End(\topsp{V}_\gamma)$ \cite{Bunke2007}. We may then define a
homomorphism $$\mu_\gamma\,:\,\K_0^\topsp{G}(\topsp{X})~\longrightarrow~\zed$$ of
abelian groups by
\beq
\mu_\gamma\big([\topsp{W},\topsp{E}, f]\big)\=\Index_\gamma
\big( f_*[\Dirac_\topsp{E}^\topsp{W}]\big)~:=~
\ass\big( f_*[\Dirac_\topsp{E}^\topsp{W}]\big)\otimes_{\complex[\topsp{G}]}[\gamma]
\label{mugammadef}\eeq
on equivariant K-cycles $(\topsp{W},\topsp{E}, f)\in\cat{D}^\topsp{G}(\topsp{X})$ (and extended
linearly), where
$$\ass\,:\,\K_*^\topsp{G}(\topsp{X})~\longrightarrow~
\K_*\big(\complex[\topsp{G}]\big)$$ is the analytic assembly map mentioned in {\appequiv}. We then naturally interpret \ref{mugammadef} as the topological charge of the D-brane represented by $(\topsp{W},\topsp{E},f)$. Notice that for $\topsp{G}=e$, (\ref{mugammadef}) reduces to the ordinary expression for the charge of a D-brane.\\

We may now consider a simple class of examples. Let $\topsp{V}$ be a complex vector space
of dimension $\dim_\complex(\topsp{V})=d\geq1$, and let $\topsp{G}$ be a finite
subgroup of $\SL(\topsp{V})$. Our spacetime $\topsp{X}$ is the $\topsp{G}$-space identified
with the product
$$
\topsp{X}=\real^{p,1}\times \topsp{V} \ ,
$$
where $\topsp{G}$ acts trivially on the Minkowski space $\real^{p,1}$, and $p$ is odd. We will consider fractional D-branes with worldvolume $\mathbb{R}^{p,1}\hookrightarrow\mathbb{R}^{p,1}\times{v}$, where $v\in{\topsp{V}}$ is a fixed vector under the linear action of $\topsp{G}$. In analogy with the nonequivariant case, the group of charge of these fractional D-branes is given by the compact support equivariant K-theory of the normal bundle, which in this case is given by \cite{Atiyah1969}
\begin{displaymath}
\K^*_{\topsp{G},\cpt}(\topsp{V})~\simeq~\K^*_{\topsp{G}}(\pt)~\simeq~\topsp{R}(\topsp{G})\=
\zed^{|\topsp{G}^\vee|} \ .
\end{displaymath}
since $\topsp{V}$ is G-contractible.\\
The same result can be obtained by considering the equivariant K-theory of the world volume $\mathbb{R}^{p,1}$, which is G-contactible. 
 It follows that the fractional D-branes, as
defined by elements of equivariant K-theory, can be identified with
representations of the orbifold group
$$
\gamma=\bigoplus_{a=1}^{|\topsp{G}^\vee|}\,N_a\,\gamma_a
$$
consisting of $N_a\geq0$ copies of the $a$-th irreducible
representation $$\gamma_a\,:\,\topsp{G}~\longrightarrow~\End(\topsp{V}_a) \ , \quad
a\=1,\dots,\big|\topsp{G}^\vee\big| \ , $$
which defines the action of $\topsp{G}$ on the fibres of the Chan-Paton
bundle. More precisely, each irreducible fractional brane is
associated to the $\topsp{G}$-bundle $\mathbb{R}^{p,1}\times \topsp{V}_a$ over $\mathbb{R}^{p,1}$.\\
We can then consider the K-cycles $(\mathbb{R}^{p,1},\mathbb{R}^{p,1}\times \topsp{V}_a,i_{v})$, where $i_{v}:\mathbb{R}^{p,1}\to{\mathbb{R}^{p,1}\times{v}\subset{\topsp{X}}}$. In equivariant K-homology, these
cycles can be contracted to $[\pt,\topsp{V}_a,i]$, where $i$ is the inclusion
of a point $\pt\subset \topsp{V}$ whose induced homomorphism
$$i_*\,:\,\K_*^\topsp{G}(\pt)~\longrightarrow~\K_*^\topsp{G}(\topsp{X})$$ can be
taken to be the identity map $\topsp{R}(\topsp{G})\to \topsp{R}(\topsp{G})$. The $\topsp{G}$-invariant Dirac operator $\Dirac_{\topsp{V}_a}^\pt$ is
just Clifford multiplication twisted by the $\topsp{G}$-module $\topsp{V}_a$, and thus
the topological charges (\ref{mugammadef}) of the corresponding
fractional branes in the twisted sector labelled by $b$ are given by
$$
\mu_b\big([\pt,\topsp{V}_a,i]\big)\=\Index_{\gamma_b}
\big([\Dirac_{\topsp{V}_a}^\pt]\big)\=
\big[\topsp{V}_a\otimes(\Delta^+\oplus\Delta^-)\big]\otimes_{\complex[\topsp{G}]}
[\gamma_b] \ ,
$$
where $\Delta^\pm$ are the half-spin representations of $\SO(p+1)$ on
$\complex^{\frac{p+1}{2}}$. Acting on the
character ring the projection gives
$[\topsp{W}]\otimes_{\complex[\topsp{G}]}[\gamma_b]=\gamma_{*}([\topsp{W}])$, where $$\gamma_{*}:\K_0\big(\complex[\topsp{G}]\big)~\longrightarrow~
\K_0\big(\End(\topsp{V}_\gamma)\big)$$ is the map induced by $\gamma$.
\subsection{Delocalization and Ramond-Ramond fields}\label{delocKtheory}
As discussed in chapter 1, the gauge theory of Ramond-Ramond fields arises as a low-energy limit of type II superstring theory from the Hilbert space of states of closed superstrings. Intuitively, the low-energy limit is the limit in which the string becomes pointlike, i.e. the lenght of the string goes to zero. As we have seen in the previous section, the Hilbert space for type II superstring theory defined on the good orbifold $[\topsp{X}/\topsp{G}]$ is given by  
\begin{displaymath}
\mathcal{H}:=\left(\bigoplus_{h\in{\topsp{\topsp{G}}}}\mathcal{H}_{h}\right)^{\topsp{\topsp{G}}}
\end{displaymath}
where the subspaces $\mathcal{H}_{h}$ are spaces of states of open strings satisfying the boundary condition (\ref{twisting}); we have also noticed that such open strings ``look like'' closed strings on the submanifolds $\topsp{X}^{h}$. We then expect massless fields, in particular Ramond-Ramond fields, arising from each of these sectors, defined on $\topsp{X}^{h}$. This is due to the fact that the center of mass of an open string satisfying condition (\ref{twisting}) is constrained to be a point of  $\topsp{X}^{h}$. We can mathematically ``organize'' the information about these Ramond-Ramond fields in the following way. We can associate to each $\topsp{G}$-manifold the space  
\begin{equation}\label{brilisnki}
\hat{\topsp{X}}:=\coprod_{h\in{\topsp{G}}}\topsp{X}^{h}
\end{equation}
Notice that we have an action of $\topsp{G}$ on $\hat{\topsp{X}}$, with $g\in{\topsp{G}}$ inducing the diffeomorphism  $\topsp{X}^{h}\to \topsp{X}^{ghg^{-1}}$. We can then consider Ramond-Ramond fields as elements of the differential complex
\begin{equation}\label{equivforms}
\Omega^{*}_{\topsp{G}}(\topsp{X};\mathbb{R}):=\Omega^{*}(\hat{\topsp{X}};\mathbb{R})^{\topsp{G}}=\left(\bigoplus_{h\in{\topsp{G}}}\Omega^{*}(\topsp{X}^{h};\mathbb{R})\right)^{\topsp{G}}
\end{equation}
equiped with the differential
\begin{displaymath}
{\rm d}_{\topsp{G}}:=\bigoplus_{h\in{\topsp{G}}}{\rm d}_{h}
\end{displaymath}
where ${\rm d}_{h}:\Omega^{*}(\topsp{X}^{h};\mathbb{R})\to\Omega^{*}(\topsp{X}^{h};\mathbb{R})$ is the usual deRham exterior derivative. Since $\topsp{X}^{h}$ is diffeomorphic to $\topsp{X}^{ghg^{-1}}$ for any $g\in{\topsp{G}}$, by making a choice of submanifolds $\topsp{X}^{g}$ we have
\begin{equation}
\Omega^{*}_{\topsp{G}}(\topsp{X};\mathbb{R})\simeq{\bigoplus_{[h]\in{\topsp{G}^{\vee}}}}\Omega^{*}(\topsp{X}^{h};\mathbb{R})^{Z_{\topsp{G}}(h)}
\end{equation}
The cohomology of the complex (\ref{equivforms}) with respect to the differential ${\rm d}_{\topsp{G}}$ is given by 
\begin{displaymath}
{\rm H^{*}}\left(\Omega^{*}_{\topsp{G}}(\topsp{X};\mathbb{R});{\rm d}_{\topsp{G}}\right)=\left(\bigoplus_{h\in{\topsp{G}}}{\rm H}^{*}(\topsp{X}^{h};\mathbb{R})\right)^{\topsp{G}}\simeq\bigoplus_{[h]\in{\topsp{G}}^{\vee}}{\rm H}^{*}(\topsp{X}^{h};\mathbb{R})^{Z_{\topsp{G}}(h)}
\end{displaymath}
where we have used
\begin{displaymath}
{\rm H}^{*}(\topsp{X}^{h}/Z_{\topsp{G}}(h);\mathbb{R})\simeq{\rm H}^{*}(\topsp{X}^{h};\mathbb{R})^{Z_{\topsp{G}}(h)}
\end{displaymath}
The cohomology groups above correspond to the \emph{delocalized equivariant cohomology} theory defined by Baum and Connes \cite{baumconnes}. Notice that the group ${\rm H^{*}}\left(\Omega^{*}_{\topsp{G}}(\topsp{X};\mathbb{R});{\rm d}_{\topsp{G}}\right)$ is non-canonically isomorphic to ${\rm H}^{*}(\topsp{X};\mathbb{R})\otimes{\rm R}(\topsp{G})$ when the $\topsp{G}$-action on $\topsp{X}$ is trivial.\\
   
We will now show how Bredon cohomology can be used to compute
the cohomology of the complex (\ref{equivforms}) of orbifold
Ramond-Ramond fields by giving a delocalized description of Bredon
cohomology with \emph{real} coefficients, following \cite{mislin} and~\cite{Luck1998} where further details can
be found. \\
Denote with $\underline{\real}(-)$ the real representation ring
functor $\mathbb{R}\otimes{\underline{R}(-)}$ on the orbit category
$\ocat{\topsp{G}}$. Let $\langle \topsp{G}\rangle$ denote the set of conjugacy classes
$[\topsp{C}]$ of cyclic subgroups $\topsp{C}$ of $\topsp{G}$. Let
$\underline{\real}\,_{\topsp{C}}(-)$ be the contravariant functor on
$\ocat{\topsp{G}}$ defined by $\underline{\real}\,_\topsp{C}(\topsp{G}/H)=0$ if $[\topsp{C}]$
contains no representative $g\,\topsp{C}\,g^{-1}<H$, and otherwise
$\underline{\real}\,_\topsp{C}(\topsp{G}/H)$ is isomorphic to the cyclotomic field
$\mathbb{R}(\zeta_{|\topsp{C}|})$ over $\mathbb{R}$ generated by the primitive
root of unity $\zeta_{|\topsp{C}|}$ of order $|\topsp{C}|$. A standard result from the
representation theory of finite groups then gives a natural
splitting
\begin{displaymath}
\underline{\real}(-)=\bigoplus_{[\topsp{C}]\in\langle \topsp{G}\rangle}\,
\underline{\real}\,_\topsp{C}(-) \ .
\end{displaymath}
By definition, for any module $\underline{M}\,(-)$ over the orbit
category one has
\bea\nonumber
\text{Hom}_{\ocat{\topsp{G}}}\big(\,\underline{M}\,(-)\,,\,
\underline{\real}\,_{\topsp{C}}(-)\big)&\simeq&
\text{Hom}_{N_{\topsp{G}}(\topsp{C})}\big(\,\underline{M}\,(\topsp{G}/\topsp{C})\,,\,
\underline{\real}\,_{\topsp{C}}(\topsp{G}/\topsp{C})\big) \\[4pt] \nonumber &\simeq&
{\rm Hom}\left(\underline{M}\,(\topsp{G}/\topsp{C});\mathbb{Z}\right)\otimes_{N_{\topsp{G}}(\topsp{C})}\,\underline{\real}\,_{\topsp{C}}(\topsp{G}/\topsp{C})
\eea
where the normalizer subgroup $N_{\topsp{G}}(\topsp{C})$ acts on
$\underline{\real}\,_{\topsp{C}}(\topsp{G}/\topsp{C})\simeq\mathbb{R}(\zeta_{|\topsp{C}|})$ via
identification of a generator of $\topsp{C}$ with $\zeta_{|\topsp{C}|}$.

These facts together imply that the cochain groups (\ref{CGnXF}) with
$\underline{F}=\underline{\real}(-)$ admit a splitting given by
\begin{displaymath}
C^*_{\topsp{G}}\big(\topsp{X}\,,\,\underline{\real}(-)\big)\simeq
\bigoplus_{[\topsp{C}]\in\langle
  \topsp{G}\rangle}\,C^*\big(\topsp{X}^{\topsp{C}}\big)\otimes_{N_{\topsp{G}}(\topsp{C})}\,
\underline{\real}\,_{\topsp{C}}(\topsp{G}/\topsp{C}) \ .
\end{displaymath}
As the centralizer $Z_{\topsp{G}}(\topsp{C})$ acts properly on $\topsp{X}^{\topsp{C}}$, the natural
map
\begin{displaymath}
\bigoplus_{[\topsp{C}]\in\langle \topsp{G}\rangle}\,C^*\big(\topsp{X}^{\topsp{C}}\big)\otimes_{N_{\topsp{G}}(\topsp{C})}\,
\underline{\real}\,_{\topsp{C}}(\topsp{G}/\topsp{C}) ~\longrightarrow~ 
\bigoplus_{[\topsp{C}]\in\langle \topsp{G}\rangle}\,C^*\big(\topsp{X}^{\topsp{C}}/Z_\topsp{G}(\topsp{C})\big)
\otimes_{W_{\topsp{G}}(\topsp{C})}\,\underline{\real}\,_{\topsp{C}}(\topsp{G}/\topsp{C})
\end{displaymath}
is a cohomology isomorphism, where $W_{\topsp{G}}(\topsp{C}):=N_{\topsp{G}}(\topsp{C})/Z_{\topsp{G}}(\topsp{C})$ is
the Weyl group of $\topsp{C}<\topsp{G}$ which acts by translation on
$\topsp{X}^\topsp{C}/Z_\topsp{G}(\topsp{C})$. Since $\underline{\real}\,_{\topsp{C}}(\topsp{G}/\topsp{C})$ is a projective
$\mathbb{R}[W_{\topsp{G}}(\topsp{C})]$-module, it follows that for any proper
$\topsp{G}$-complex $\topsp{X}$ the Bredon cohomology of $\topsp{X}$ with coefficient system
$\mathbb{R}\otimes{\underline{\topsp{R}}(-)}$ has a splitting
\beq\label{Bredoncyclic}
\H_{\topsp{G}}^{*}\big(\topsp{X}\,;\,\mathbb{R}\otimes{\underline{\topsp{R}}(-)}\big)
\simeq\bigoplus_{[\topsp{C}]\in\langle
  \topsp{G}\rangle}\,\H^{*}\big(\topsp{X}^{\topsp{C}}/Z_{\topsp{G}}(\topsp{C})\,;\,
\mathbb{R}\big)\otimes_{W_{\topsp{G}}(\topsp{C})}\,\underline{\real}\,_{\topsp{C}}(\topsp{G}/\topsp{C}) \ .
\eeq

At this point, we note that the dimension of the $\mathbb{R}$-vector
space
\begin{displaymath}
\underline{\real}\,_\topsp{C}(\topsp{G}/\topsp{C})^{W_\topsp{G}(\topsp{C})}\simeq
\mathbb{R}\otimes_{W_{\topsp{G}}(\topsp{C})}\,\underline{\real}\,_{\topsp{C}}(\topsp{G}/\topsp{C})
\end{displaymath}
is equal to the number of $\topsp{G}$-conjugacy classes of generators for
$\topsp{C}$. We also use the fact that for a finite group $\topsp{G}$ a sum over
conjugacy classes of cyclic subgroups is equivalent to a sum
over conjugacy classes of elements in $\topsp{G}$, and that $\topsp{X}^{\langle
  g\rangle}=\topsp{X}^g$ and $Z_\topsp{G}(\langle g\rangle)=Z_\topsp{G}(g)$. One finally
obtains a splitting of real Bredon cohomology groups\footnote{This
  splitting in fact holds over $\rat$~\cite{mislin}.}
\beq\label{Bredonsplit}
\H_{\topsp{G}}^{*}\big(\topsp{X}\,;\,\mathbb{R}\otimes{\underline{\topsp{R}}(-)}\big)
\simeq\bigoplus_{[g]\in \topsp{G}^\vee}\,\H^{*}\big(\topsp{X}^{g}\,;\,
\mathbb{R}\big)^{Z_{\topsp{G}}(g)}
\eeq
which is the cohomology of the differential complex (\ref{equivforms}).\\
By using Theorem~\ref{eqChernthm}, one also has a decomposition for equivariant
K-theory with real coefficients given by
\begin{displaymath}
\K_{\topsp{G}}^{*}(\topsp{X})\otimes\mathbb{R}
\simeq\bigoplus_{[g]\in \topsp{G}^\vee}\,\big(\K^{*}(\topsp{X}^{g})\otimes
\mathbb{R}\big)^{Z_{\topsp{G}}(g)} \ .
\end{displaymath}
\subsection{Delocalization of the equivariant Chern character}
It is well known that the ordinary Chern character, when tensored over $\mathbb{R}$, admits a Chern-Weyl refinement, expressed in terms of the curvature of an arbitrary connection on the given vector bundle. This is not the case for the equivariant Chern character defined in section {\ref{equivChern}}. This is expected, since the equivariant Chern character was constructed in terms of homomorphisms between K-theory and cohomology, without any reference to any geometric description. However, when tensored over $\mathbb{C}$, the equivariant Chern character admits a more geometric description. We will now explain this construction, referring the reader to \cite{Bunke2007} for the technical details. Consider a complex $\topsp{G}$-bundle $\topsp{E}$ over $\topsp{X}$ equiped
with a $\topsp{G}$-invariant hermitean metric and a $\topsp{G}$-invariant metric
connection~$\nabla^{\topsp{E}}$. One can then define a closed $\topsp{G}$-invariant
differential form $$\ch(\topsp{E})~\in~\Omega^*(\topsp{X};\complex)^{\topsp{G}}$$ in the
usual way by the Chern-Weil construction
\begin{displaymath}
\ch(\topsp{E}):=\Tr\big(\exp(-\topsp{F}^\topsp{E}/2\pi{\ii})\big)
\end{displaymath}
where $\topsp{F}^\topsp{E}$ is the curvature of the connection $\nabla^\topsp{E}$. It
represents a cohomology class
$$\big[\ch(\topsp{E})\big]~\in~{\H^*(\topsp{X};\mathbb{C})^{\topsp{G}}}$$ in the fixed
point subring of the action of $\topsp{G}$ as automorphisms of
$\H^*(\topsp{X};\mathbb{C})$. By using the definition of the
homomorphisms (\ref{chXH}), with $\mathbb{Q}$ substituted by
$\mathbb{C}$ and $H=e$, one can establish the equality
\begin{displaymath}
\big[\ch(\topsp{E})\big]=\ch^{e}_{\topsp{X}}\big([\topsp{E}]\big) \ .
\end{displaymath}

Let $\topsp{C}<{\topsp{G}}$ be a cyclic subgroup, and define the
cohomology class $$\big[\ch(g,\topsp{E})\big]~\in~
{\H^*\big(\topsp{X}^{\topsp{C}}\,;\,\mathbb{\topsp{C}}\big)^{Z_{\topsp{G}}(C)}}
\simeq\H^*\big(\topsp{X}^{\topsp{C}}/Z_{\topsp{G}}(\topsp{C})\,;\,\mathbb{C}\big)\simeq
\H\big(\Omega^*(\topsp{X}^\topsp{C};\complex)^{Z_\topsp{G}(\topsp{C})}\,,\,\dd\big)$$
represented by
\begin{displaymath}
\ch(g,\topsp{E}):=\Tr\big(\gamma(g)\,
\exp(-\topsp{F}_\topsp{C}^\topsp{E}/2\pi{\ii})\big)
\end{displaymath}
where $g$ is a generator of $\topsp{C}$, $\topsp{F}_\topsp{C}^\topsp{E}$ is the restriction of the
invariant curvature two-form $\topsp{F}^\topsp{E}$ to the fixed point subspace
$\topsp{X}^{\topsp{C}}$, and $\gamma$ is a representation of $\topsp{C}$ on the fibres of the
restriction bundle $\topsp{E}|_{\topsp{X}^{\topsp{C}}}$ which is an $N_\topsp{G}(\topsp{C})$-bundle over
$\topsp{X}^\topsp{C}$. The character $\chi_\topsp{C}$ naturally identifies
$\topsp{R}(\topsp{C})\otimes\complex$ with the $\complex$-vector space of class
functions $\topsp{C}\to\complex$. By using the splitting (\ref{Bredoncyclic})
for complex Bredon cohomology, one can then show that
\begin{displaymath}
\ch^{\topsp{C}}_{\topsp{X}}\big([\topsp{E}]\big)(g)=\big[\ch(g,\topsp{E})\big]
\end{displaymath}
up to the restriction homomorphism
$\topsp{R}(\topsp{C})\otimes\complex\to\underline{\complex}\,_{\topsp{C}}(\topsp{G}/\topsp{C})$ of rings with
kernel the ideal of elements whose characters vanish on all generators
of $\topsp{C}$.

Using {}~(\ref{chXprod}) we can then define the map
\begin{displaymath}
\ch^{\mathbb{C}}\,:\,{\rm Vect}^{\mathbb{C}}_{\topsp{G}}(\topsp{X})~
\longrightarrow~\Omega_{\topsp{G}}^{\rm even}\big(\topsp{X}\,;\,
\mathbb{C}\big)
\end{displaymath}
from complex $\topsp{G}$-bundles $\topsp{E}\to \topsp{X}$ given by
\beq
\ch^{\mathbb{C}}(\topsp{E})=\bigoplus_{[g]\in \topsp{G}^\vee}\,\Tr\big(\gamma(g)\,\exp(-\topsp{F}_g^\topsp{E}/2\pi{\ii})\big) \ .
\label{chCEdef}\eeq
At the level of equivariant K-theory, from Theorem~\ref{eqChernthm} it
follows that this map induces an isomorphism
\beq
\ch^{\mathbb{C}}\,:\,\K^*_{\topsp{G}}(\topsp{X})\otimes\mathbb{C}~
\xrightarrow{\approx}~\H_{\topsp{G}}\big(\topsp{X}\,;\,\mathbb{C}
\otimes{\underline{\pi_{-*}\K}\,_\topsp{G}(-)}\big)^{*}
\label{chCiso}\eeq
where we have used the splitting (\ref{Bredonsplit}). The map
(\ref{chCEdef}) coincides with the equivariant Chern character defined
in \cite{Atiyah1989}.  
\subsection{Ramond-Ramond couplings with D-branes}
We now have all the necessary ingredients to define a coupling of the
Ramond-Ramond fields to a D-brane in the orbifold $[\topsp{X}/\topsp{G}]$. In this
section we will only consider Ramond-Ramond fields which are
topologically trivial, {i.e.}, elements of the differential
complex (\ref{equivforms}), and use the delocalized cohomology theory
above by working throughout with complex coefficients. Moreover, we will only consider electric couplings to D-branes, i.e. we will not impose selfduality. Under these
conditions we can straightforwardly make contact with existing
examples in the physics literature and write down their appropriate
generalizations.

To this aim, we introduce the bilinear product
\begin{displaymath}
\wedge_\topsp{G}\,:\,\Omega_{\topsp{G}}^*(\topsp{X};\real)\otimes\Omega_{\topsp{G}}^*(\topsp{X};
\real)~\longrightarrow{}~\Omega_{\topsp{G}}^*(\topsp{X};\real)
\end{displaymath}
defined on $\omega=\bigoplus_{g\in \topsp{G}}\,\omega_{g}$ and
$\eta=\bigoplus_{g\in \topsp{G}}\,\eta_{g}$ by
\beq
\omega\wedge_\topsp{G}{\eta}~:=~
\bigoplus_{g\in \topsp{G}}\,\omega_{g}\wedge_g\eta_{g}
\label{orbdiffprod}\eeq
where $\wedge_g=\wedge$ is the usual exterior product on
$\Omega^*(\topsp{X}^g;\real)$. There is also an integration
$$\int_\topsp{X}^\topsp{G}\,:\,\Omega^*_\topsp{G}(\topsp{X};\real)~\longrightarrow~\real \
. $$ If $\omega\in\Omega^*_{\topsp{G}}( \topsp{X};\mathbb{R})$ then we set
\begin{displaymath}
\int_{\topsp{X}}^\topsp{G}\,\omega~:=~\frac1{|\,\topsp{G}^\vee\,|}~\sum_{[g]\in \topsp{G}^\vee}~
\int_{\topsp{X}^{g}}\,\omega_{[g]} \ .
\end{displaymath}
where we have used in the above construction that for a G-manifold $\topsp{X}$ admitting a G-equivariant spin structure, the fixed point manifold $\topsp{X}^{g}$ is naturally oriented, for any $g\in{\topsp{G}}$ \cite{Berline}.\\ 
The normalization ensures that $\int_{\topsp{X}}^\topsp{G}\,\omega=\int_{\topsp{X}}\,\omega$ when
$\topsp{G}$ acts trivially on $\topsp{X}$, and $\omega$ is ``diagonal'' in $\Omega^{*}({\rm X})\otimes{\rm R}({\rm G})$.

Suppose now that $ f:\topsp{W}\rightarrow{\topsp{X}}$ is the smooth immersed
worldvolume cycle of a wrapped D-brane state $(\topsp{W},\topsp{E},f)\in\cat{D}^\topsp{G}(\topsp{X})$
in the orbifold $[\topsp{X}/\topsp{G}]$, {i.e.}, $ \topsp{W}$ is a $\topsp{G}$-\spinc manifold
equiped with a $\topsp{G}$-bundle $\topsp{E}\to \topsp{W}$ and an invariant connection
$\nabla^\topsp{E}$ on $\topsp{E}$. We define the \emph{Wess-Zumino pairing}
$$
\CS\,:\,\cat{D}^\topsp{G}(\topsp{X})\times\Omega_\topsp{G}^*(\topsp{X};\complex)~\longrightarrow~
\complex
$$
between such D-branes and Ramond-Ramond fields as
\begin{equation}\label{coupling2}
\CS\big((\topsp{W},\topsp{E},f)\,,\,C\big)=
\int_{ \topsp{W}}^\topsp{G}\,\tilde{C}\wedge_\topsp{G}{\,\ch^{\mathbb{C}}}(\topsp{E})\wedge_\topsp{G}
\mathcal{R}(\topsp{W},f) \ ,
\end{equation}
where $\tilde C=f^{*}C$ is the pullback along $f:\topsp{W}\to \topsp{X}$ of the
total Ramond-Ramond field $$C=\bigoplus_{[g]\in \topsp{G}^\vee}\,C_{[g]}$$ and
the equivariant Chern character is given by {}~(\ref{chCEdef}) with
$\gamma$ giving the action of $\topsp{G}$ on the Chan-Paton factors of the
D-brane. The closed worldvolume form
$\mathcal{R}(\topsp{W},f)\in\Omega^{\rm even}_{\topsp{G},{\rm cl}}(\topsp{W};\complex)$
represents a complex Bredon cohomology class which accounts for
gravitational corrections due to curvature in the spacetime $\topsp{X}$ and
depends only on the bordism class of $(\topsp{W},f)$. We refer the reader to \cite{mine}, where a construction of $\mathcal{R}(\topsp{W},f)$ can be found.

Modulo the curvature contribution $\mathcal{R}(\topsp{W},f)$, the very natural expression~(\ref{coupling2})
reduces to the usual Wess-Zumino coupling of topologically trivial
Ramond-Ramond fields to D-branes in the case $\topsp{G}=e$. But even if a
group $\topsp{G}\neq e$ acts trivially on the brane worldvolume $ \topsp{W}$ (or on
the spacetime $\topsp{X}$), there can still be additional contributions to the
usual Ramond-Ramond coupling if $\topsp{E}$ is a \emph{non-trivial
  $\topsp{G}$-bundle}. This is the situation, for instance, for fractional
D-branes $$(\topsp{W},\topsp{E},f)~\in~\cat{D}_{\rm frac}^\topsp{G}(\topsp{X})$$  placed at
orbifold singularities. In this case, we may use the isotopical
decomposition (\ref{Eisodecomp}) of the Chan-Paton bundle along with
{}~(\ref{WgW}). Then the Wess-Zumino pairing (\ref{coupling2})
descends to a pairing $$\CS_{\rm frac}\,:\,\cat{D}_{\rm
  frac}^\topsp{G}(\topsp{X})\times\Omega_\topsp{G}^*(\topsp{X};\complex)~\longrightarrow~
\complex$$ with the additive subcategory of fractional branes at
orbifold singularities. 
\begin{example}\label{Linorb2}
We will now ``test'' our definition (\ref{coupling2}) on the class of
examples considered in section~\ref{equivkhom}. These are flat
orbifolds for which there are no non-trivial curvature contributions,
{i.e.}, $\mathcal{R}(\topsp{W},f)=1$. Let us specialize to the case of
cyclic orbifolds having twist group $\topsp{G}=\zed_n$ with $n\geq d$. In this
case, as $\mathbb{Z}_{n}$ is an abelian group, one has
$\zed_n^\vee=\zed_n$ (setwise) and we can label the non-trivial
twisted sectors of the orbifold String theory on $\topsp{X}$ by
$k=1,\dots,n-1$. The untwisted sector is labelled by $k=0$. We take a
generator $g$ of $\zed_n$ whose action on $V\simeq\complex^d$ is given
by
$$
g\cdot\big(z^1\,,\,\dots\,,\,z^d\big):=\big(\zeta^{a_1}\,z^1\,,\,
\dots\,,\,\zeta^{a_d}\,z^d\big) \ ,
$$
where $\zeta=\exp(2\pi\ii/n)$ and $a_1,\dots,a_d$ are integers
satisfying $a_1+\cdots+a_d\equiv0~{\rm mod}~n$\footnote{Both the
  requirement that the representation $V$ be complex and the form of
  the $\topsp{G}$-action are physical inputs ensuring that the closed string
  background $\topsp{X}$ preserves a sufficient amount of supersymmetry after
  orbifolding.}. In this case the action of any element in $\zed_n$ has
only one fixed point, an orbifold singularity at the origin
$(0,\dots,0)$. Hence for any $g\neq e$ one has
\begin{displaymath}
\topsp{X}^{g}\simeq\mathbb{R}^{p,1}
\end{displaymath}
and the differential complex (\ref{equivforms}) of orbifold
Ramond-Ramond fields is given by
\begin{displaymath}
\Omega_{\zed_n}^*(\topsp{X};\real)=\Omega^*(\topsp{X};\real)\oplus\Big(\,
\bigoplus_{k=1}^{n-1}\,\Omega^*\big(\mathbb{R}^{p,1};\real\big)
\Big) \ .
\end{displaymath}

Consider now a fractional D-brane $(\topsp{W},\topsp{E},f)\in\cat{D}_{\rm frac}^{\zed_n}(\topsp{X})$ with
worldvolume cycle $f(\topsp{W})\subset{\mathbb{R}^{p,1}}$ placed at the
orbifold singularity, {i.e.},
$f:\topsp{W}\rightarrow{\mathbb{R}^{p,1}\times{(0,\dots,0)}}\subset{\topsp{X}}$. Let
the generator $g$ act on the fibres of the Chan-Paton bundle $\topsp{E}\to \topsp{W}$
in the $n$-dimensional regular representation
$\gamma(g)_{ij}=\zeta^i\,\delta_{ij}$. The action on worldvolume
fermion fields is determined by a lift $\hat\zed_n$ acting on the spinor
bundle $S\to \topsp{W}$. Then the pairing (\ref{coupling2}) contains the
following terms. First of all, we have the coupling of the untwisted
Ramond-Ramond fields to $ \topsp{W}$ given by
\begin{displaymath}
\dfrac{1}{n}\int_{ \topsp{W}}\,\tilde C\wedge\Tr\big(\exp(-\topsp{F}^\topsp{E}/2\pi{\ii})\big) \ ,
\end{displaymath}
which is just the usual Wess-Zumino coupling and hence the untwisted Ramond-Ramond charge of the brane is 1. Then there are the contributions from the twisted sectors,
which by recalling {}~(\ref{WgW}) are given by the expression
\begin{displaymath}
\dfrac{1}{n}\int_{ \topsp{W}}\,~\sum_{k=1}^{n-1}\,\tilde
C_{k}\wedge\Tr\big(\gamma(g^{k})\,
\exp(-\topsp{F}^\topsp{E}/2\pi{\ii})\big)
\end{displaymath}
where $g^{k}$ is an element of $\zed_n$ of order $k$. Since
$\gamma(g^k)_{ii}=\zeta^{ik}$, the brane associated with the $i$-th irreducible
representation of $\zed_n$ has charge $\zeta^{ik}/n$ with respect to the top
Ramond-Ramond field in the $k$-th twisted sector. For $d=2$ and $d=3$,
the form of these couplings agrees with those computed in \cite{douglas1996}.
\end{example}
\section{An equivariant Riemann-Roch formula}\label{Gravcoupl}
Let $\topsp{X},\topsp{W}$ be smooth compact $\topsp{G}$-manifolds, and $f:
\topsp{W}\rightarrow{\topsp{X}}$ a smooth proper $\topsp{G}$-map. If we want to make sense of
the equations of motion for the Ramond-Ramond field $C$, which is a
quantity defined on the spacetime $\topsp{X}$, then we need to pushforward
classes defined on the brane worldvolume $\topsp{W}$ to classes defined on the
spacetime. This will enable the construction of Ramond-Ramond currents
in a later section induced by the background and
D-branes which appear as source terms in the Ramond-Ramond field
equations.\\
As we have seen in chapter 3, given a smooth embedding $f:\topsp{W}\to{\topsp{X}}$ with normal bundle $\nu\to{\topsp{W}}$ equiped with a $\rm spin^{c}$ structure, we have
\beq
\ch\big(f_!^{\K}(\xi)\big)=f^{\H}_{!}\big(\ch(\xi)\,\cup\,\Todd(\nu)^{-1}
\big)
\label{ordGRRthm}\eeq
for any class $\xi\in{\rm K}^{*}(\topsp{W})$, where $f_{!}$ denote the Gysin homomorphism defined in chapter 3, and $\Todd(\topsp{E})\in\Omega_{\rm
  cl}^{\rm even}(\topsp{W};\complex)$ denotes the Todd genus characteristic
class of a hermitean vector bundle $\topsp{E}$ over $\topsp{W}$, whose Chern-Weil
representative is
$$
\Todd(\topsp{E})=\sqrt{\det\Big(\,\frac{F^\topsp{E}/2\pi\ii}{\tanh\big(F^\topsp{E}/2\pi\ii
\big)}\,\Big)}
$$
where $F^\topsp{E}$ is the curvature of a hermitean connection $\nabla^\topsp{E}$ on
$\topsp{E}$. The important aspect is that the Chern character
does not commute with the Gysin pushforward maps, and the defect in
the commutation relation is precisely the Todd genus of the bundle
$\nu$. This ``twisting'' by the bundle $\nu$ over the D-brane
contributes in a crucial way to the Ramond-Ramond current in the
non-equivariant case~\cite{currents,Minasian1997,Olsen1999}.\\
We will now attempt to find an equivariant version of the Riemann-Roch
theorem. We will consider G-equivariant embeddings $f:W\rightarrow \topsp{X}$: in this case
normal bundle $\nu$ is itself a $\topsp{G}$-bundle. We assume that $\nu$ is $\K_\topsp{G}$-oriented. This requirement
is just the Freed-Witten anomaly cancellation
condition~\cite{Freed1999} in this case, generalized to global
worldsheet anomalies for D-branes represented by generic
$\topsp{G}$-equivariant K-cycles. It enables, analogously to the
non-equivariant case, the construction of an equivariant Gysin
homomorphism
\beq\label{eqGysin}
f_!^{\K_\topsp{G}}\,:\,\K^*_\topsp{G}(W)~\longrightarrow~\K^*_\topsp{G}(\topsp{X}) \ .
\eeq

We will show that, under some very
special conditions, one can construct a complex Bredon cohomology
class which is analogous to the Todd genus and which plays the role of
the equivariant commutativity defect as above.\\
For this we will need the following\footnote{We are grateful to U.Bunke for suggesting this result to us.}
\begin{lemma}
Let $\pi:\topsp{E}\to{\topsp{X}}$ be an equivariant $\topsp{G}$-vector bundle, where $\topsp{G}$ is a finite group acting properly. Then for any $g\in{\topsp{G}}$ the map $\pi|_{\topsp{E}^{g}}:\topsp{E}^{g}\to{\topsp{X}}^{g}$ is vector bundle projection. 
\end{lemma}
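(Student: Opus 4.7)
The plan is to exhibit $\pi|_{E^g}$ locally as the projection of a trivial vector bundle by exploiting the equivariant local structure of $E$ near fixed points of $g$.

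First I would verify the algebraic preliminaries. Equivariance of $\pi$ gives $\pi(g\cdot e)=g\cdot\pi(e)$, so $e\in E^g$ forces $\pi(e)\in X^g$ and the restriction is well defined. For any $x\in X^g$ the stabilizer contains $\langle g\rangle$, so $g$ acts linearly on the fiber $E_x$, and the set-theoretic fiber of $\pi|_{E^g}$ over $x$ is the linear subspace $E_x^g\subseteq E_x$ of $g$-invariant vectors. Moreover $X^g$ is already known from the preceding theorem in the paper to be a smooth submanifold of $X$ (the hypothesis that $G$ is finite acting properly lets us average any Riemannian metric to a $G$-invariant one, putting us in the isometric setting of that theorem).

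The key step is to produce a $\langle g\rangle$-equivariant local trivialization of $E$ around each $x\in X^g$. Write $H=\langle g\rangle$, a finite cyclic subgroup with $x\in X^H$. I would proceed by choosing a small open neighborhood $U_0$ of $x$ over which $E$ is trivial as an ordinary vector bundle via some $\phi_0:E|_{U_0}\xrightarrow{\simeq}U_0\times E_x$, then shrink to the $H$-invariant neighborhood $U=\bigcap_{h\in H}h\cdot U_0$, and replace $\phi_0$ by its $H$-average. Concretely, using the linear $H$-action $\rho$ on the fiber $E_x$, define
\[
\Phi(e):=\frac{1}{|H|}\sum_{h\in H}\bigl(h^{-1}\cdot\mathrm{pr}_{U}(\phi_0(h\cdot e))\,,\,\rho(h)^{-1}\mathrm{pr}_{E_x}(\phi_0(h\cdot e))\bigr),
\]
after first adjusting $\phi_0$ so that $\phi_0|_{E_x}=\mathrm{id}_{E_x}$ (possible by composing with a fiber automorphism). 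One checks that $\Phi$ is fiberwise linear, $H$-equivariant with respect to the diagonal $H$-action on $U\times E_x$, and an isomorphism in a possibly smaller neighborhood of $x$ by the implicit function theorem applied at the fiber over $x$ where $\Phi$ is the identity. Equivalently, this is the standard linearization/slice-theorem statement for smooth actions of compact groups on equivariant vector bundles near fixed points.

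Taking $g$-fixed points on both sides of this local trivialization gives
\[
(E|_U)^g\;\simeq\;U^g\times E_x^g\;=\;(U\cap X^g)\times E_x^g,
\]
which is a vector bundle chart for $\pi|_{E^g}$ over the open subset $U\cap X^g$ of $X^g$; the transition functions between two such charts are the $g$-fixed parts of the transition functions for $E$, hence smooth and $\mathrm{GL}(E_x^g)$-valued. The fiber dimension $\dim E_x^g$ is locally constant because the $\rho$-representation of $H$ on $E_x$ is locally constant across $X^g$ in these charts, so it is constant on connected components of $X^g$. The main obstacle I anticipate is the rigorous construction of the equivariant trivialization $\Phi$: fiberwise averaging alone does not automatically produce a smooth bundle isomorphism, and one must either invoke the equivariant tubular neighborhood theorem or perform the averaging in a way compatible with an $H$-invariant connection on $E$ (parallel transport along radial geodesics from $x$ in an $H$-invariant slice), which is the standard but slightly delicate input.
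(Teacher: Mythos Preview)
Your approach is correct but takes a genuinely different route from the paper. The paper argues at the infinitesimal level: for $w\in E^g$ it shows that the differential $d(\pi|_{E^g}):T_wE^g\to T_{\pi(w)}X^g$ is surjective by taking any preimage in $T_wE$ of a $g$-fixed tangent vector and averaging it over the $\langle g\rangle$-action (using $(T E)^g=T(E^g)$ and $(T X)^g=T(X^g)$); it then concludes that $\pi|_{E^g}$ is a proper submersion and invokes Ehresmann's theorem to deduce it is a fibre bundle, with the linear fibre structure inherited from $E$. You instead work at the level of local trivializations, averaging a trivialization of $E$ to produce a $\langle g\rangle$-equivariant one and then taking $g$-fixed points to obtain bundle charts for $E^g$ directly. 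The paper's route is shorter and sidesteps exactly the delicate point you flag about turning the averaged map into an honest bundle isomorphism; your route is more constructive, yields explicit charts, and makes the local constancy of the fibre representation (and hence of $\dim E_x^g$) transparent, which is useful for the applications that follow.
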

\begin{proof}
Consider a point $w\in{\topsp{E}}$. Then $d\pi:\topsp{T}_{w}\topsp{E}\to\topsp{T}_{\pi(w)}\topsp{X}$ is surjective, since it is the projection map of a fiber bundle. If $w\in{\topsp{E}^{g}}$, then $\pi(w)$ belongs to $\topsp{X}^{g}$, and the element $g$ acts linearly on $\topsp{T}_{\pi(w)}\topsp{X}$. If $v\in\topsp{T}_{\pi(w)}\topsp{X}$ is a $g$-fixed vector, one can then find a $g$-fixed preimage in $\topsp{T}_{w}\topsp{E}$ by taking any preimage and averaging with respect to the action of $g$. By using that $(\topsp{T}\topsp{X})^{g}=\topsp{T}\topsp{X}^{g}$, we have that the map $\topsp{T}_{w}\topsp{E}^{g}\to\topsp{T}_{\pi(w)}\topsp{X}^{g}$ is surjective, hence the restricted map $\pi_{\topsp{E}^{g}}$ is a proper surjection. Any proper submersion is automatically a fibre bundle, and the linear structure on the fibers of $\topsp{E}^{g}\to{\topsp{X}^{g}}$ is inherited by that on the fibers of $\topsp{E}\to{\topsp{X}}$.
\end{proof}
Notice that the above lemma strongly use the properties that the group $\topsp{G}$ is finite and acting properly. To see what can happen when these conditions are not satisfied\footnote{We are grateful to J.Figueroa-O'Farrill for suggesting this example to us}, let $\topsp{X}=\mathbb{R}$ and $\topsp{G}=\mathbb{R}^{+}$ the group of positive reals
under multiplication. Consider the $\topsp{G}$-bundle $\topsp{X}\times{\topsp{V}}\to{\topsp{X}}$ given
by projection onto the first factor, where $\topsp{V}$ is a finite-dimensional
real vector space and the $\topsp{G}$-action is
\begin{displaymath}
g\cdot(x,v)=\big(x\,,\,g^{x}\,v\big)
\end{displaymath}
for all $g\in \topsp{G}$. For any $g\neq1$, $(\topsp{X}\times{\topsp{V}})^{g}$ is not a fibre
bundle over $\topsp{X}^g=\topsp{X}$, as the $\topsp{G}$-invariant fibre space over $x=0$ is
$\topsp{V}$ while it is the null vector over any other point. In particular, $(\topsp{X}\times{\topsp{V}})^{g}$ is not even a manifold.\\
Consider now two G-manifolds $\topsp{W}$ and $\topsp{X}$ equiped with G-invariant $\rm spin^{c}$ structure, and a proper G-equivariant embedding $f:\topsp{W}\to{\topsp{X}}$. For any $g\in{\topsp{G}}$, the vector bundle $\nu^{g}=\nu(\topsp{W};f)^{g}\to{\topsp{W}}^{g}$ is the normal bundle $\nu^g=f^*|_{\topsp{W}^g}(\topsp{T}_{\topsp{X}^g})\oplus \topsp{T}_{\topsp{W}^g}$ over the immersion
$f|_{\topsp{W}^g}:\topsp{W}^{g}\rightarrow{\topsp{X}^{g}}$. Recall that this is a $Z_{\topsp{G}}(g)$-bundle. We will suppose that $\nu^{g}$ is equiped with a $Z_{\topsp{G}}(g)$-invariant $\rm spin^{c}$ structure.\\
As we have seen in section \ref{delocKtheory}, equivariant K-theory ``delocalizes'' when tensored over $\mathbb{R}$, thanks to the equivariant Chern character. We will show in the following that when working over $\mathbb{C}$ the delocalization of equivariant K-theory is compatible, in a certain sense, with the complex equivariant Chern character defined in (\ref{chCEdef}).\\
Let $\topsp{E}$ be a $\topsp{G}$-vector bundle over a $\topsp{X}$. For any $g\in{\topsp{G}}$, the restriction $\topsp{E}_{g}$ of $\topsp{E}$ to the fixed point subspace $\topsp{X}^{g}$ gives a $Z_{\topsp{G}}(g)$-invariant vector bundle carrying a representation $\gamma$ of $<g>$ on the fibers, where $<g>$ is the cyclic group generated by $g$. The decomposition of the representation $\gamma$ in terms of irreducible representations $\gamma_{\alpha}$ of $<g>$ induces the homomorphism 
\begin{displaymath}
{\rm K}^{*}_{\topsp{G}}(\topsp{X})\to{\rm K}^{*}(\topsp{X}^{g})\otimes{\topsp{R}(<g>)}
\end{displaymath}
given on vector bundles by 
\begin{displaymath}
\topsp{E}\to{\topsp{E}_{g}}\simeq{\bigoplus_{\gamma_{\alpha}}\topsp{E}_{\alpha}\otimes{\gamma_{\alpha}}}
\end{displaymath}
where $\topsp{E}_{\alpha}:={\rm Hom_{<g>}(\topsp{E},\id_{\alpha})}$, where $\id_{\alpha}=\topsp{X}^{g}\times{\topsp{V}_{\alpha}}$, with $\topsp{V}_{\alpha}$ carrying the irreducible representation $\gamma_{\alpha}$. We can then define the morphism
\begin{displaymath}
{\rm K}^{*}(\topsp{X}^{g})\otimes{\topsp{R}(<g>)}\to{\rm K}^{*}(\topsp{X}^{g})\otimes\mathbb{C}
\end{displaymath}
by tracing over the second factor, i.e.
\begin{displaymath}
[\topsp{E}_{g}]\to\sum_{\alpha}[\topsp{E}_{\alpha}]{\rm Tr}(\gamma_{\alpha}(g))
\end{displaymath}
Denoting $\phi_{g}$ the composition of the above morphism, we can define 
\begin{displaymath}
\phi:=\oplus_{[g]\in{\topsp{G}^{\vee}}}\phi_{g}
\end{displaymath}
Recalling that $\topsp{E}_{g}$ is a $Z_{\topsp{G}}(g)$-invariant bundle, we have the isomorphism \cite{Atiyah1989}
\begin{equation}\label{Kdecomp}
\phi:{\rm K}^{*}_{\topsp{G}}(\topsp{X})\otimes{\mathbb{C}}\simeq\bigoplus_{[g]\in{\topsp{G}^{\vee}}}\left({\rm K}^{*}(\topsp{X}^{g})\otimes\mathbb{C}\right)^{Z_{\topsp{G}}(g)}
\end{equation}
By using that \cite{Bunke2007}
\begin{displaymath}
\Tr\big(\gamma(g)\,\exp(-\topsp{F}_g^\topsp{E}/2\pi{\ii})\big)=\sum_{\alpha}\Tr\big(\exp(-\topsp{F}_\alpha^\topsp{E}/2\pi{\ii})\big)\Tr(\gamma_{\alpha}(g))
\end{displaymath}
where $F_\alpha^{\topsp{E}}$ is the restriction of the curvature form $F_{g}^{\topsp{E}}$ to the subbundle $\topsp{E}_{\alpha}$, we have that the complex equivariant Chern character (\ref{chCEdef}) coincides on the components of the decomposition (\ref{Kdecomp}) with the ordinary Chern character.\\

Suppose now that the equivariant Thom class
$\Thom_\topsp{G}(\nu)\in\K^*_{\topsp{G},\cpt}(\nu)$ can be decomposed according
to the splitting (\ref{Kdecomp}) in such a way that the component
in any subgroup $$\Thom\big(\nu^g\big)~\in~
\big(\K_\cpt^*(\nu^{g})\otimes\complex\big)^{Z_\topsp{G}({g})}$$
coincides with the (ordinary) Thom class of the vector bundle
$\nu^{g}\to{\topsp{W}^{g}}$. Under these conditions, the equivariant Gysin
homomorphism (\ref{eqGysin}) decomposes
according to the splitting
\begin{displaymath}
f^{\K_{\topsp{G}}}_{!}=\bigoplus_{[g]\in \topsp{G}^\vee}\,f^{\K}_{g}
\end{displaymath}
where $f^{\K}_{g}$ is the K-theoretic Gysin homomorphism associated
to the smooth map
$$f\big|_{\topsp{W}^g}\,:\,\topsp{W}^{g}~\longrightarrow~{\topsp{X}^{g}} \ . $$

Define the characteristic class $\Todd_{\topsp{G}}$ by %
\beq\label{ToddGdef}
\Todd_{\topsp{G}}(\nu):=\bigoplus_{[g]\in \topsp{G}^\vee}\,\Todd\big(\nu^{g}\big)
\eeq
This class defines an
element of the even degree complex Bredon cohomology of the brane
worldvolume $\topsp{W}$. Under the conditions spelled out above, we can now
use the equivariant Chern character (\ref{chCiso}) and the usual
Riemann-Roch theorem for each pair $(\topsp{W}^{g},\topsp{X}^{g})$ to prove the
identity
\beq
f^{\H_{\topsp{G}}}_{!}\big(\ch^{\mathbb{C}}(\xi)\,\cup_\topsp{G}\,
\Todd_{\topsp{G}}(\nu)^{-1}\big)=
\ch^{\mathbb{C}}\big(f^{\K_{\topsp{G}}}_{!}(\xi)\big)
\label{GRRthm}\eeq
for any class $\xi\in{\K_{\topsp{G}}^*( \topsp{W})}\otimes{\mathbb{C}}$, as all
quantities involved in the expression (\ref{GRRthm}) are compatible with
the $\topsp{G}$-equivariant decompositions given above.\\
This equivariant Riemann-Roch formula will be important when we will argue the flux quantization for Ramond-Ramond fields on orbifolds.
\section{Orbifold differential K-theory and flux quantization}
In this section we will propose an extension of
differential K-theory as defined in~section \ref{diffKHS} to
incorporate the case of a $\topsp{G}$-manifold. These are the groups needed to
extend the analysis of the previous section to topologically
non-trivial, real-valued Ramond-Ramond fields. While we do not have formal arguments that this is a proper definition of an equivariant
differential cohomology theory, we will see that it matches exactly
with expectations from String theory on orbifolds and also has the
correct limiting properties. For this reason we dub the theory that we
define `orbifold' differential K-theory, defering the terminology
`equivariant' to a more thorough treatment of our model. We will use the Riemann-Roch formula developed in the last section to argue that the Dirac quantization condition for Ramond-Ramond fieldstrengths on a good orbifold is dictated by equivariant K-theory via the equivariant Chern character, suggesting that our orbifold differential K-theory is the correct guess for the space gauge equivalent classes of Ramond-Ramond fields. Finally, we will study the classification of Ramond-Ramond fields on the linear orbifolds considered in section \ref{equivkhom}, and give a definition for the group of flat Ramond-Ramond fields on more general good orbifolds.
\subsection{Orbifold differential K-groups}
As mentioned above, we will generalize the definition of differential K-theory given in section \ref{diffKHS} to accomodate the action of a finite group.\\
First, let us recall some further basic
facts about equivariant K-theory. Similarly to ordinary K-theory, a
model for the classifying space of the functor $\K_{\topsp{G}}^{0}$ is given
by the $\topsp{G}$-algebra of Fredholm operators $\Fred_{\topsp{G}}$ acting on a
separable Hilbert space which is a representation space for $\topsp{G}$ in
which each irreducible representation occurs with infinite
multiplicity~\cite{Atiyah1967}. Then there is an isomorphism
\begin{displaymath}
\K^{0}_{\topsp{G}}(\topsp{X})\simeq\left[\topsp{X},\Fred_{\topsp{G}}\right]_{\topsp{G}}
\end{displaymath}
where $[-,-]_{\topsp{G}}$ denotes the set of equivalence classes of
$\topsp{G}$-homotopic maps, and the isomorphism is given by taking the
index bundle.

There is also a universal space $\text{Vect}^{n}_{\topsp{G}}$, equiped with a
universal $\topsp{G}$-bundle $\widetilde{\topsp{E}}^{n}_{\topsp{G}}$, such that
$\left[\topsp{X},\text{Vect}^{n}_{\topsp{G}}\right]_{\topsp{G}}$ corresponds to the set of
isomorphism classes of $n$-dimensional $\topsp{G}$-vector bundles over
$\topsp{X}$~\cite{Luck1998}. These spaces are constructed as follows. Let
$\cat{E}\topsp{G}$ be the category whose objects are the elements of $\topsp{G}$ and
with exactly one morphism between each pair of objects. The geometric
realization (or nerve) of the set of isomorphism classes in $\cat{E}\topsp{G}$
is, as a simplicial space, the total space of the classifying principal
$\topsp{G}$-bundle $\topsp{E}\topsp{G}$. With $\Vect^n(\pt)$ the category of $n$-dimensional
complex vector spaces $\topsp{V}\simeq\complex^n$, the universal space
$\text{Vect}^{n}_{\topsp{G}}$ is defined to be the geometric realization of
the functor category $[\cat{E}\topsp{G},\Vect^n(\pt)]$. The
universal $n$-dimensional $\topsp{G}$-vector bundle $\widetilde{\topsp{E}}_\topsp{G}^n$ is
then defined as
\beq
\widetilde{\topsp{E}}_\topsp{G}^n=\widetilde{\Vect}{}_\topsp{G}^n
\times_{{\rm GL}(n,\complex)}
\,\complex^n~\longrightarrow~\Vect_\topsp{G}^n \ ,
\label{univGbun}\eeq
where $\widetilde{\Vect}{}_\topsp{G}^n$ is the geometric realization of the
functor category defined as above but with $\Vect^n(\pt)$ replaced
with the category consisting of objects $\topsp{V}$ in $\Vect^n(\pt)$ together
with an oriented basis of $\topsp{V}$.

We assume sufficient regularity conditions on the
infinite-dimensional spaces $\Fred_{\topsp{G}}$ and
$\widetilde{\topsp{E}}^{n}_{\topsp{G}}$. Since $\Fred_\topsp{G}$ and the group completion
$\Vect_\topsp{G}$ are both classifying spaces for equivariant
K-theory, they are $\topsp{G}$-homotopic and we can thereby choose a cocycle
$$u_{\topsp{G}}~\in~{Z_{\topsp{G}}^{\rm even}(\Fred_{\topsp{G}};\real)}$$ representing the
equivariant Chern character of the universal $\topsp{G}$-bundle
(\ref{univGbun}). Generally, the group $Z_{\topsp{G}}^{\rm even}(\topsp{X};\real)$ is
the subgroup of closed cocycles in the complex
\beq
C_{\topsp{G}}^{{\rm even}}(\topsp{X};\mathbb{R}):=\bigoplus_{[g]\in \topsp{G}^\vee}\,
C^{\rm even}\big(\topsp{X}^{g}\,;\,
\mathbb{R}\big)^{Z_{\topsp{G}}(g)}
\label{CGevenXR}\eeq
which, by the results of section \ref{delocKtheory}, is a cochain model
for the Bredon cohomology group $\H_{\topsp{G}}^{{\rm even}-1}(\topsp{X};\mathbb{R}
\otimes{\underline{\topsp{R}}}(-))$. The equivariant Chern character is
understood to be composed with the delocalizing isomorphism of
section~\ref{delocKtheory}. Since it is a natural homomorphism, for any
$\topsp{G}$-bundle $\topsp{E}\to \topsp{X}$ classified by a $\topsp{G}$-map $f:\topsp{X}\to\Fred_\topsp{G}$ one has
\begin{displaymath}
\ch_{\topsp{X}}\big([\topsp{E}]\big)=\big[f^{*}u_{\topsp{G}}\big] \ .
\end{displaymath}

\begin{definition}
The \emph{orbifold differential K-theory}
$\check{\K}^{0}_{\topsp{G}}(\topsp{X})$ of the (global) orbifold $[\topsp{X}/\topsp{G}]$ is the group
of triples $(c,h,\omega)$, where $c:\topsp{X}\to\Fred_{\topsp{G}}$ is a $\topsp{G}$-map,
$\omega$ is an element in $\Omega_{\topsp{G},{\rm cl}}^{\rm
  even}(\topsp{X};\mathbb{R})$, and $h$ is an element
in $C_{\topsp{G}}^{{\rm even}-1}(\topsp{X};\mathbb{R})$
satisfying
\beq\label{Eqdeltah}
\delta{h}=\omega-c^{*}u_{\topsp{G}} \ .
\eeq
Two triples $(c_{0},h_{0},\omega_{0})$ and $(c_{1},h_{1},\omega_{1})$
are said to be \emph{equivalent} if there exists a triple
$(c,h,\omega)$ on $\topsp{X}\times[0,1]$, with the group $\topsp{G}$ acting trivially
on the interval $[0,1]$ and with $\omega$ constant along $[0,1]$, such
that
$$
(c,h,\omega)\big|_{t=0}\=(c_{0},h_{0},\omega_{0}) \qquad \mbox{and}
\qquad (c,h,\omega)\big|_{t=1}\=(c_{1},h_{1},\omega_{1}) \ .
$$
\label{orbdiffKdef}\end{definition}
In {}~(\ref{Eqdeltah}) the closed orbifold differential form $\omega$
is regarded as an orbifold cochain in the complex (\ref{CGevenXR}) by
applying the de~Rham map componentwise on the fixed point submanifolds
$\topsp{X}^g$, $g\in \topsp{G}$. The higher orbifold differential K-theory groups
$\check{\K}^{-n}_\topsp{G}(\topsp{X})$ are defined analogously to those of
section~\ref{diffKHS}. To confirm that this is a suitable
extension of the ordinary differential K-theory of $\topsp{X}$, we should show
that the orbifold differential K-theory groups fit into exact
sequences which reduce to those given by (\ref{exactdiffK}) when $\topsp{G}$ is taken to be the trivial
group. For this, we define the group
\begin{displaymath}
A_{\K_\topsp{G}}^{0}(\topsp{X}):=\left\{(\xi,\omega)\in \K_\topsp{G}^{0}(\topsp{X})\times\Omega_{\topsp{G},{\rm
    cl}}^{\rm even}(\topsp{X};\mathbb{R}
)~\big|~ \ch_\topsp{X}(\xi)=[\omega]_{\topsp{G}-{\rm dR}}\right\} \ .
\end{displaymath}

\begin{theorem}
The orbifold differential K-theory group $\check{\K}^{0}_{\topsp{G}}(\topsp{X})$
satisfies the exact sequence
\bea
&&0~\longrightarrow~{\frac{\H_{\topsp{G}}^{{\rm even}-1}\big(\topsp{X}\,;\,\mathbb{R}
\otimes{\underline{\topsp{R}}}(-)\big)}{\ch_\topsp{X}\big({\K_{\topsp{G}}^{-1}(\topsp{X})}\big)}}~
\longrightarrow~{\check{\K}_{\topsp{G}}^{0}(\topsp{X})}~\longrightarrow~
{A^{0}_{\K_{\topsp{G}}}(\topsp{X})}~\longrightarrow~{0} \ 
\label{exseqeqgen}\eea
\label{exseqthm}\end{theorem}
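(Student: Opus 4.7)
The plan is to mimic the proof of the corresponding exact sequence in ordinary differential K-theory (the third sequence of (\ref{exactdiffK})), replacing real singular cochains on $\topsp{X}$ by the Bredon cochain complex $C^{*}_{\topsp{G}}(\topsp{X};\mathbb{R})$ described in (\ref{CGevenXR}) and the classifying space $\topsp{Fred}$ by its equivariant analog $\topsp{Fred}_{\topsp{G}}$, using the universal cocycle $u_{\topsp{G}}\in Z^{\rm even}_{\topsp{G}}(\topsp{Fred}_{\topsp{G}};\mathbb{R})$ as a replacement for the universal Chern character cocycle. First I would define the right-hand map $p:\check{\K}_{\topsp{G}}^{0}(\topsp{X})\to{\rm A}^{0}_{\K_{\topsp{G}}}(\topsp{X})$ by
\[
p\big([(c,h,\omega)]\big):=\big([c]\,,\,\omega\big) \ ,
\]
where $[c]\in\K^{0}_{\topsp{G}}(\topsp{X})$ is the equivariant index class of $c:\topsp{X}\to\topsp{Fred}_{\topsp{G}}$. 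Well-definedness is immediate because the equivalence of Definition~\ref{orbdiffKdef} keeps $\omega$ constant and connects $c_{0},c_{1}$ by a $\topsp{G}$-homotopy, and the equation~(\ref{Eqdeltah}) together with delocalization guarantee that $\ch_{\topsp{X}}([c])=[\omega]_{\topsp{G}\text{-dR}}$ so the target lies in ${\rm A}^{0}_{\K_{\topsp{G}}}(\topsp{X})$.

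For surjectivity, given $(\xi,\omega)\in{\rm A}^{0}_{\K_{\topsp{G}}}(\topsp{X})$ I would pick any $\topsp{G}$-map $c:\topsp{X}\to\topsp{Fred}_{\topsp{G}}$ classifying $\xi$. Then $[c^{*}u_{\topsp{G}}]=\ch_{\topsp{X}}(\xi)=[\omega]$ as classes in $\H^{\rm even}_{\topsp{G}}(\topsp{X};\mathbb{R}\otimes\underline{\topsp{R}}(-))$, and by the definition of the Bredon cochain complex there exists $h\in C^{{\rm even}-1}_{\topsp{G}}(\topsp{X};\mathbb{R})$ with $\delta h=\omega-c^{*}u_{\topsp{G}}$. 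The triple $(c,h,\omega)$ is then a preimage of $(\xi,\omega)$ under~$p$.

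The main step, and the one that will require real work, is identifying $\ker p$ with the displayed quotient. Suppose $[(c,h,\omega)]\in\ker p$, so $\omega=0$ and $c$ is $\topsp{G}$-homotopic to a fixed basepoint map $c_{0}:\topsp{X}\to\topsp{Fred}_{\topsp{G}}$ representing $0\in\K^{0}_{\topsp{G}}(\topsp{X})$, chosen so that $c_{0}^{*}u_{\topsp{G}}=0$ as a cochain. Using an equivariant homotopy $F:\topsp{X}\times[0,1]\to\topsp{Fred}_{\topsp{G}}$ with $F|_{0}=c$, $F|_{1}=c_{0}$, I would mimic the analysis at the end of section~\ref{diffKHS}: the triple $(c,h,0)$ is equivalent to $(c_{0},h',0)$ with $h'=h+\pi_{*}F^{*}u_{\topsp{G}}$ (integration along $[0,1]$ being performed componentwise on the fixed point strata $\topsp{X}^{g}$). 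Since $\delta h'=0$, the cochain $h'$ represents a class in $\H^{{\rm even}-1}_{\topsp{G}}(\topsp{X};\mathbb{R}\otimes\underline{\topsp{R}}(-))$, and this construction yields a map $\ker p\to\H^{{\rm even}-1}_{\topsp{G}}(\topsp{X};\mathbb{R}\otimes\underline{\topsp{R}}(-))$. Conversely, for any Bredon cocycle $h'$ the triple $(c_{0},h',0)$ lies in $\ker p$, showing surjectivity of this assignment. The indeterminacy in $F$ is the hard point: two such homotopies $F_{0},F_{1}$ differ by a $\topsp{G}$-map $G:\topsp{X}\times\topsp{S}^{1}\to\topsp{Fred}_{\topsp{G}}$, i.e.\ by a class $\eta\in[\topsp{X},\Omega\topsp{Fred}_{\topsp{G}}]_{\topsp{G}}=\K^{-1}_{\topsp{G}}(\topsp{X})$, and I would verify (via the definition of $u^{-i}_{\topsp{G}}=\Pi_{*}\,\mathrm{ev}^{*}u_{\topsp{G}}$ used in section~\ref{diffKHS}, extended equivariantly and then delocalized) that $\pi_{*}G^{*}u_{\topsp{G}}$ is a Bredon cocycle representative of $\ch_{\topsp{X}}(\eta)$. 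This identifies the ambiguity precisely with $\ch_{\topsp{X}}(\K^{-1}_{\topsp{G}}(\topsp{X}))$ and produces the claimed sequence.

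The principal obstacle is this last point: the compatibility of $\pi_{*}F^{*}u_{\topsp{G}}$ with the equivariant Chern character on $\K^{-1}_{\topsp{G}}$ requires checking that the cocycle $u_{\topsp{G}}$, built on the universal equivariant Fredholm model, really does realise the delocalised Chern character of Theorem~\ref{eqChernthm} at the cochain level after applying the fixed-point splitting (\ref{Bredonsplit}). Once this compatibility is in hand, the exactness at $\check{\K}^{0}_{\topsp{G}}(\topsp{X})$ follows by the same diagram chase as in the non-equivariant case, and the sequence specialises to the third line of (\ref{exactdiffK}) for $\topsp{G}=e$, providing a consistency check.
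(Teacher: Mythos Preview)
Your proposal is correct and follows essentially the same route as the paper: define the projection $p([(c,h,\omega)])=([c],\omega)$, identify $\ker p$ with classes $[(\underline{c},h,0)]$ for $\underline{c}$ the constant map, and show via the fiber-integration identity $[\pi_*F^*u_\topsp{G}]=[\tilde{c}^{\,*}u_\topsp{G}^{-1}]$ that the resulting ambiguity in $h$ is exactly $\ch_\topsp{X}(\K_\topsp{G}^{-1}(\topsp{X}))$. The only cosmetic difference is that the paper phrases the middle step as computing the kernel of a map $g:\H_\topsp{G}^{{\rm even}-1}\to\check{\K}_\topsp{G}^0(\topsp{X})$, $[h]\mapsto[(\underline{c},h,0)]$, rather than first normalizing elements of $\ker p$ and then measuring the indeterminacy, but the underlying computation is the same.
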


\begin{proof}
Consider the subgroup of $\H_\topsp{G}^{{\rm
    even}-1}(\topsp{X};\mathbb{R}\otimes\underline{\topsp{R}}(-))$ defined as the
image of the equivariant K-theory group $\K_\topsp{G}^{-1}(\topsp{X})$ under the Chern
character $\ch_\topsp{X}$. It consists of Bredon cohomology classes of the
form $[\tilde{c}^{*}u_\topsp{G}^{-1}]$, where
$\tilde{c}:\topsp{X}\to{\Omega\Fred_\topsp{G}}$. There is a surjective map
\begin{eqnarray*}
f\,:\,\check{\K}_\topsp{G}^{0}(\topsp{X}) &\longrightarrow& A_{\K_\topsp{G}}^{0}(\topsp{X}) \\
\big[(c,h,\omega)\big] &\longmapsto& \big([c]\,,\,\omega\big)
\end{eqnarray*}
which is a well-defined homomorphism, {i.e.}, it does not depend
on the chosen representative of the orbifold differential K-theory
class. By definition, the kernel of $f$ consists of triples of the form
$(\,\underline{c}\,,h,0)$, where $\underline{c}$ is G-homotopic to the constant (identity) map. We also define the map
\begin{eqnarray*}
g\,:\,\H^{{\rm even}-1}_\topsp{G}\big(\topsp{X}\,;\,\mathbb{R}\otimes\underline{\topsp{R}}(-)
\big)&\longrightarrow& {\check{\K}_\topsp{G}^{0}(\topsp{X})}\\
\left[h\right] &\longmapsto& \big[(\,\underline{c}\,,h,0)\big] \ ,
\end{eqnarray*}
which is a well-defined homomorphism because the class
$\left[(\,\underline{c}\,,h,0)\right]$ depends only on the Bredon
cohomology class $[h]\in{\H_\topsp{G}^{{\rm
      even}-1}(\topsp{X};\mathbb{R}\otimes\underline{\topsp{R}}(-))}$.
Then by construction one has $\im({g})=\ker({f})$.

The homomorphism $g$ is not injective. To determine the kernel of $g$, we
use the fact that the zero element in $\check{\K}^{0}_\topsp{G}(\topsp{X})$ can be
represented as
\begin{displaymath}
\big[(\,\underline{c}\,,0,0)\big]=
\big[(\,\underline{c}\,,\pi_{*}\,F^{*}u_\topsp{G}+\dd_{\topsp{G}}\sigma,0)\big]
\end{displaymath}
with $F:\topsp{X}\times{\topsp{S}^1}\to\Fred_\topsp{G}$ and $\sigma\in\Omega^{{\rm
    even}-2}_\topsp{G}(\topsp{X};\real)$. To the map $F$ we can associate a map
$\tilde{c}:\topsp{X}\to\Omega\Fred_\topsp{G}$ such that
$F=\ev\circ(\tilde{c}\times{\Id_{\topsp{S}^1}})$. This follows from the
isomorphism
\begin{displaymath}
\K_\topsp{G}^{-1}(\topsp{X})\simeq\ker\left({i^{*}}:\K_\topsp{G}^{0}(\topsp{X}\times{\topsp{S}^1})\to{\K_\topsp{G}^{0}(\topsp{X})}
\right)
\end{displaymath}
where $i$ is the inclusion
$i:\topsp{X}\hookrightarrow{\topsp{X}\times{\pt}}\subset{\topsp{X}\times{\topsp{S}^1}}$. Now use the
fact that at the level of (real) Bredon cohomology one has an
equality
\begin{displaymath}
\pi_{*}\,\big(\tilde{c}\times{\Id_{\topsp{S}^1}}\big)^{*}=\tilde{c}{}^{*}\,\Pi_{*}
\end{displaymath}
since the projection homomorphisms $\pi_{*}$ and $\Pi_{*}$ both
correspond to integration (slant product) along the $\topsp{S}^1$ fibre. Then
one has the identity
\begin{displaymath}
\big[\pi_{*}\,F^{*}u_\topsp{G}\big]\=
\big[\pi_{*}\,(\tilde{c}\times{\Id_{\topsp{S}^1}})^{*}~
\ev^{*}u_\topsp{G}\big]\=\big[\tilde{c}^{*}\,\Pi_{*}~\ev^{*}u_\topsp{G}\big]\=
\big[\tilde{c}^{*}u_\topsp{G}^{-1}\big] \ .
\end{displaymath}
It follows that $\ker({g})$ is exactly the group
$\ch_\topsp{X}({\K_\topsp{G}^{-1}(\topsp{X})})$, and putting everything together we arrive at (\ref{exseqeqgen}).
\end{proof}
The torus $\H_\topsp{G}^{{\rm
    even}-1}\big(\topsp{X};\mathbb{R}\otimes\underline{\topsp{R}}(-)\big)/
\ch_\topsp{X}\big(\K_\topsp{G}^{-1}(\topsp{X})\big)\simeq \K_\topsp{G}^{-1}(\topsp{X})\otimes\real/\zed$
is called the group of \emph{topologically trivial flat fields}.\\
Consider now the characteristic class map
\begin{eqnarray*}
f_{\rm cc}\,:\,\check{\K}_\topsp{G}^{0}(\topsp{X})&\longrightarrow&{\K_\topsp{G}^{0}(\topsp{X})}\\
\big[(c,h,\omega)\big]&\longmapsto&\left[{c}\right]
 \end{eqnarray*}
and the map
\begin{eqnarray*}
g_{\rm cc}\,:\,\Omega_\topsp{G}^{{\rm even}-1}(\topsp{X};\mathbb{R})&\longrightarrow&
\check{\K}_\topsp{G}^{0}(\topsp{X})\\
h&\longmapsto&\big[(\,\underline{c}\,,h,\dd_{\topsp{G}}h)\big] \ .
\end{eqnarray*}
Let $\Omega_{\K_\topsp{G}}^{{\rm
    even}-1}(\topsp{X};\mathbb{R})$ be the
subgroup of elements in $\Omega_{\topsp{G},{\rm cl}}^{{\rm
    even}-1}(\topsp{X};\mathbb{R})$ whose Bredon
cohomology class lies in $\ch_\topsp{X}({\K_\topsp{G}^{-1}(\topsp{X})})$. Then by using
arguments similar to those used in arriving at the sequence
(\ref{exseqeqgen}), one finds the
\begin{cor}[{\bf Characteristic class exact sequence}]
In analogy with the ordinary case, the orbifold differential K-theory group $\check{\K}^{0}_{\topsp{G}}(\topsp{X})$
satisfies the exact sequence
\beq
0~\longrightarrow~\dfrac{\Omega_{\topsp{G}}^{{\rm even}-1}(\topsp{X};
\mathbb{R})}{\Omega_{\K_{\topsp{G}}}^{{\rm
    even}-1}(\topsp{X};\mathbb{R})}~
\longrightarrow~{\check{\K}_{\topsp{G}}^{0}(\topsp{X})}~\longrightarrow~
{\K_{\topsp{G}}^{0}(\topsp{X})}~\longrightarrow~{0} \ .
\label{charclasseq}\eeq
\label{charclasscorr}\end{cor}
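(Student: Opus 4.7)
The plan is to mimic the proof of Theorem \ref{exseqthm}, using the same key technical input (the compatibility between pullback along a G-homotopy $F$ and integration $\pi_{*}$ along the $[0,1]$ or $\topsp{S}^{1}$ fibre) but applied to the characteristic class map rather than the fieldstrength map. I will define the two homomorphisms appearing in the sequence and check exactness at each of the three spots $\Omega_{\topsp{G}}^{{\rm even}-1}(\topsp{X};\mathbb{R})/\Omega_{\K_\topsp{G}}^{{\rm even}-1}(\topsp{X};\mathbb{R})$, $\check{\K}^{0}_{\topsp{G}}(\topsp{X})$, and $\K^{0}_{\topsp{G}}(\topsp{X})$.

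First, surjectivity of $f_{\rm cc}$ is easy: given any $\topsp{G}$-map $c:\topsp{X}\to\Fred_{\topsp{G}}$, the class $\ch_{\topsp{X}}([c])\in\H^{\rm even}_{\topsp{G}}(\topsp{X};\mathbb{R}\otimes\underline{\topsp{R}}(-))$ is represented both by the cocycle $c^{*}u_{\topsp{G}}$ and by some closed orbifold differential form $\omega\in\Omega_{\topsp{G},{\rm cl}}^{\rm even}(\topsp{X};\mathbb{R})$ via the delocalizing isomorphism (\ref{chCiso}). Therefore $\omega-c^{*}u_{\topsp{G}}$ is exact in the complex (\ref{CGevenXR}) and one can pick an $h$ with $\delta h=\omega-c^{*}u_{\topsp{G}}$. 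The triple $(c,h,\omega)$ is then a valid representative of an orbifold differential K-theory class mapping to $[c]$.

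Next I would check that $g_{\rm cc}$ descends to a well-defined homomorphism from the quotient. For $h\in\Omega_{\K_{\topsp{G}}}^{{\rm even}-1}(\topsp{X};\mathbb{R})$ the form $h$ is closed (so $\dd_{\topsp{G}}h=0$) and its Bredon class is $\ch_{\topsp{X}}(\xi)$ for some $\xi\in\K^{-1}_{\topsp{G}}(\topsp{X})$. Represent $\xi$ by a $\topsp{G}$-map $\tilde c:\topsp{X}\to\Omega\Fred_{\topsp{G}}$ and let $F=\ev\circ(\tilde c\times{\Id_{\topsp{S}^{1}}}):\topsp{X}\times\topsp{S}^{1}\to\Fred_{\topsp{G}}$; exactly as in the proof of Theorem \ref{exseqthm}, the identity $[\pi_{*}\,F^{*}u_{\topsp{G}}]=[\tilde c^{*}u_{\topsp{G}}^{-1}]=\ch_{\topsp{X}}(\xi)=[h]$ in Bredon cohomology, together with the equivalent formulation of the homotopy relation on triples à la Freed, gives $(\underline{c},h,0)\sim(\underline{c},0,0)$ by choosing $\sigma$ appropriately. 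Hence $g_{\rm cc}(h)=0$ in $\check{\K}_{\topsp{G}}^{0}(\topsp{X})$.

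For $\im(g_{\rm cc})=\ker(f_{\rm cc})$, the inclusion $\im(g_{\rm cc})\subseteq\ker(f_{\rm cc})$ is immediate from the definitions, since $[\underline{c}\,]=0$ in $\K^{0}_{\topsp{G}}(\topsp{X})$. Conversely, if $(c,h,\omega)$ lies in the kernel, then $c$ is $\topsp{G}$-homotopic to $\underline{c}$ through some $F:\topsp{X}\times[0,1]\to\Fred_{\topsp{G}}$, and using this homotopy to modify $h$ by $\pi_{*}F^{*}u_{\topsp{G}}$ (noting $\omega$ is kept constant along the homotopy because $c^{*}u_{\topsp{G}}-\underline{c}^{*}u_{\topsp{G}}=\delta\pi_{*}F^{*}u_{\topsp{G}}$ as $u_{\topsp{G}}$ is closed) produces an equivalent triple $(\underline{c},h',\omega)$. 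The cocycle condition then forces $\omega=\dd_{\topsp{G}}h'$, identifying the original class with $g_{\rm cc}(h')$. Finally, injectivity of $g_{\rm cc}$ on the quotient follows by reversing this argument: if $g_{\rm cc}(h)=0$ then constancy of $\omega$ along any homotopy forces $\dd_{\topsp{G}}h=0$, and the difference $h-0$ must equal $\pi_{*}F^{*}u_{\topsp{G}}+\dd_{\topsp{G}}\sigma$ for some $F$ representing an element of $\K^{-1}_{\topsp{G}}(\topsp{X})$, so $[h]\in\ch_{\topsp{X}}(\K^{-1}_{\topsp{G}}(\topsp{X}))$ and $h\in\Omega^{{\rm even}-1}_{\K_{\topsp{G}}}(\topsp{X};\mathbb{R})$. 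The main obstacle is the bookkeeping in this last step, namely the careful translation between the two equivalent forms of the homotopy relation and verifying in the equivariant setting that $\pi_{*}$ along $[0,1]$ intertwines the pullback of the universal Chern character cocycle with the equivariant Chern character of the induced element of $\K^{-1}_{\topsp{G}}(\topsp{X})$, exactly as used in Theorem \ref{exseqthm}.
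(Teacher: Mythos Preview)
Your proposal is correct and follows precisely the approach the paper indicates: the paper does not give a detailed proof of this corollary, stating only that it follows ``by using arguments similar to those used in arriving at the sequence (\ref{exseqeqgen}),'' and you have carried out exactly those arguments with the maps $f_{\rm cc}$ and $g_{\rm cc}$ in place of $f$ and $g$. Your treatment of the kernel of $g_{\rm cc}$ via the identity $[\pi_{*}F^{*}u_\topsp{G}]=[\tilde c^{*}u_\topsp{G}^{-1}]$ is the same key computation as in Theorem~\ref{exseqthm}, so there is nothing materially different between your argument and what the paper has in mind.
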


The quotient space of orbifold differential forms in the exact
sequence (\ref{charclasseq}) is called the group of
topologically trivial fields.\\
Finally, consider the field strength map
\bea \nonumber
f_{\rm fs}\,:\,\check{\K}_\topsp{G}^{0}(\topsp{X})&\longrightarrow&
{\Omega_{\topsp{G},{\rm cl}}^{\rm
    even}(\topsp{X};\mathbb{R})}\\ 
\big[(c,h,\omega)\big]&\longmapsto&\omega \ .
\label{fieldstrmap}\eea
The kernel of the homomorphism $f_{\rm fs}$ is the group which
classifies the flat fields (which are not necessarily
topologically trivial) and is denoted
$\K_{\topsp{G}}^{-1}(\topsp{X};\mathbb{R}/\mathbb{Z})$. This group will be described
in more detail in the next section, where we shall also conjecture an
essentially purely algebraic definition of
$\K_{\topsp{G}}^{-1}(\topsp{X};\mathbb{R}/\mathbb{Z})$ which explains the notation. In
any case, we have the

\begin{cor}[{\bf Field strength exact sequence}]
The orbifold differential K-theory group $\check{\K}^{0}_{\topsp{G}}(\topsp{X})$
satisfies the exact sequence
\beq
0~\longrightarrow~{\K_{\topsp{G}}^{-1}(\topsp{X};\mathbb{R}/\mathbb{Z})}~
\longrightarrow~{\check{\K}_{\topsp{G}}^{0}(\topsp{X})}~\longrightarrow~
{\Omega_{\K_\topsp{G}}^{\rm even}(\topsp{X};\mathbb{R})}~\longrightarrow~{0} \ 
\label{diffKexseqGfs}\eeq
\label{fieldstrenghtcorr}\end{cor}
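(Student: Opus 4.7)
The plan is to adapt the proof of Theorem \ref{exseqthm} to the field strength map $f_{\rm fs}$ of \eqref{fieldstrmap}, with the role of the kernel being purely definitional and the content concentrated in surjectivity. First I would verify that $f_{\rm fs}$ is a well-defined homomorphism. Given an equivalence $(c,h,\omega)$ on $\topsp{X}\times[0,1]$ between two representatives, Definition \ref{orbdiffKdef} requires that $\omega$ be constant along $[0,1]$, so $\omega|_{t=0}=\omega|_{t=1}$ and $f_{\rm fs}$ descends to orbifold differential K-theory classes. Additivity is immediate.

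Next I would show that the image of $f_{\rm fs}$ lies in $\Omega_{\K_\topsp{G}}^{\rm even}(\topsp{X};\mathbb{R})$. From the cocycle condition \eqref{Eqdeltah}, $\omega - c^{*}u_{\topsp{G}} = \delta h$ is $\delta$-exact in the complex $C_\topsp{G}^{\rm even}(\topsp{X};\mathbb{R})$, and hence the de Rham class of $\omega$ in Bredon cohomology coincides with $[c^{*}u_\topsp{G}] = \ch_\topsp{X}([c])$, which lies in the image of the equivariant Chern character by construction of $u_\topsp{G}$. For surjectivity I would proceed as follows: given $\omega \in \Omega_{\K_\topsp{G}}^{\rm even}(\topsp{X};\mathbb{R})$, by definition its Bredon class is $\ch_\topsp{X}(\xi)$ for some $\xi \in \K_\topsp{G}^{0}(\topsp{X})$. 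Pick a $\topsp{G}$-map $c:\topsp{X}\to\Fred_\topsp{G}$ classifying $\xi$; then $[\omega] = [c^{*}u_\topsp{G}]$ in $\H_\topsp{G}^{\rm even}(\topsp{X};\mathbb{R}\otimes\underline{\topsp{R}}(-))$, so $\omega - c^{*}u_\topsp{G} = \delta h$ for some $h \in C_\topsp{G}^{{\rm even}-1}(\topsp{X};\mathbb{R})$. The triple $(c,h,\omega)$ satisfies \eqref{Eqdeltah}, so $f_{\rm fs}[(c,h,\omega)] = \omega$.

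Exactness at $\check{\K}_\topsp{G}^{0}(\topsp{X})$ is by construction: the group $\K_{\topsp{G}}^{-1}(\topsp{X};\mathbb{R}/\mathbb{Z})$ is defined as $\ker(f_{\rm fs})$, so $\im(g_{\rm fs}) = \ker(f_{\rm fs})$ where $g_{\rm fs}:\K_{\topsp{G}}^{-1}(\topsp{X};\mathbb{R}/\mathbb{Z}) \hookrightarrow \check{\K}_\topsp{G}^{0}(\topsp{X})$ is the inclusion. Exactness on the left (injectivity of $g_{\rm fs}$) is automatic. Concretely, a kernel element is represented by a triple $(c,h,0)$ with $\delta h = -c^{*}u_\topsp{G}$, which forces $\ch_\topsp{X}([c])=0$, so $[c]$ is a torsion class in $\K_\topsp{G}^{0}(\topsp{X})$, and the residual cochain $h$ (up to additions of exact cochains and pullbacks of universal transgressions from homotopies) carries the $\mathbb{R}/\mathbb{Z}$-information. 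This is precisely the structure that justifies the notation $\K_{\topsp{G}}^{-1}(\topsp{X};\mathbb{R}/\mathbb{Z})$ in analogy with the non-equivariant case of \eqref{exactdiffK}.

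The main obstacle will be the surjectivity step, where one must ensure that the choice of a classifying $\topsp{G}$-map $c$ for $\xi$ and of a Bredon primitive $h$ for the exact cocycle $\omega - c^{*}u_\topsp{G}$ can be made consistently. This relies on the fact that the universal cocycle $u_\topsp{G}$ represents the equivariant Chern character on the classifying space $\Fred_\topsp{G}$ in the delocalized model of Bredon cohomology of section \ref{delocKtheory}, together with $\topsp{G}$-homotopy invariance of the equivariant classification $\K_\topsp{G}^0(\topsp{X}) \simeq [\topsp{X},\Fred_\topsp{G}]_\topsp{G}$. Once these points are established, the surjectivity argument is formally identical to the non-equivariant case of \cite{Hopkins2005}, but the delocalized nature of the Bredon coefficient system $\mathbb{R}\otimes\underline{\topsp{R}}(-)$ must be tracked throughout to ensure compatibility of cochain representatives on each fixed point submanifold $\topsp{X}^g$.
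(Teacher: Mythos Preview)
Your proof is correct and follows the same approach as the paper. In the paper the corollary is stated without proof: the group $\K_{\topsp{G}}^{-1}(\topsp{X};\mathbb{R}/\mathbb{Z})$ is \emph{defined} immediately before the statement as $\ker(f_{\rm fs})$, so exactness on the left and middle are tautological, and surjectivity of $f_{\rm fs}$ onto $\Omega_{\K_\topsp{G}}^{\rm even}(\topsp{X};\mathbb{R})$ is implicit in the surjectivity of the map $f$ onto $A_{\K_\topsp{G}}^{0}(\topsp{X})$ already asserted in Theorem~\ref{exseqthm} (compose with the second projection). Your explicit construction of the preimage triple $(c,h,\omega)$ is precisely the argument underlying that surjectivity claim, so you have simply written out what the paper leaves to the reader.
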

Higher orbifold differential K-theory groups satisfy analogous exact
sequences, with the appropriate degree shifts throughout. It is clear
from our definition that one recovers the ordinary differential
K-theory groups in the case of the trivial group $\topsp{G}={e}$, and in this
sense our orbifold differential K-theory may be regarded as its
equivariant generalization. At this point we hasten to add that,
although our groups are well-defined and satisfy desired properties
which are useful for physical applications such as
the various exact sequences above, we have not showed that our
orbifold theory generalizes \emph{all} the properties of an ordinary differential cohomology theory. For instance, it would be
interesting to define a ring structure and an integration on
$\check{\K}^*_{\topsp{G}}(\topsp{X})$. We have not developed these constructions in this thesis.
\subsection{Flux quantization of orbifold Ramond-Ramond fields}
In this section we will argue the flux quantization condition for Ramond-Ramond fieldstrengths on orbifolds of Type~II superstring theory with vanishing $H$-flux, showing that it is dictated by equivariant K-theory via the complex equivariant Chern character. Essentially, we will closely follow the Moore-Witten argument discussed in section \ref{moorewitten}.\\
Suppose
that our spacetime $\topsp{X}$ is a non-compact $\topsp{G}$-manifold. Suppose further
that there are D-branes present in Type~II superstring theory on
$\topsp{X}/\topsp{G}$. Their Ramond-Ramond charges are classified by the equivariant
K-theory $\K^i_{\topsp{G},\cpt}(\topsp{X})$ with compact support, where $i=0$ in
Type~IIB theory and $i=-1$ in Type~IIA theory.

We require that the brane be a source for the equation of motion for
the total Ramond-Ramond field strength $\omega$. This means that it
creates a Ramond-Ramond current $J$. If we require that the
worldvolume $\topsp{W}$ be compact in equivariant K-cycles
$(\topsp{W},\topsp{E},f)\in\cat{D}^\topsp{G}(\topsp{X})$, then $J$ is supported in the interior
$\mathring{\topsp{X}}$ of $\topsp{X}$. Let $\topsp{X}_\infty$ be the ``boundary of $\topsp{X}$ at
infinity'', which we assume is preserved by the action of $\topsp{G}$. Then
$\K^*_{\topsp{G},\cpt}(\topsp{X})\simeq{\K^*_{\topsp{G}}(\topsp{X},\topsp{X}_\infty)}$. Since $J$ is
trivialized by $\omega$ in $\mathring{\topsp{X}}$, the D-brane charge lives in
the kernel of the natural forgetful homomorphism
\beq\label{forgetmap}
\mathfrak{f}^*\,:\,\K^*_{\topsp{G},\cpt}(\topsp{X})~\longrightarrow~
{\K^*_{\topsp{G}}(\topsp{X})}
\eeq
induced by the inclusion
$(\topsp{X},\emptyset)\hookrightarrow(\topsp{X},\topsp{X}_\infty)$. We denote by
$i:\topsp{X}_\infty\hookrightarrow \topsp{X}$ the canonical inclusion.

The long exact sequence for the pair $(\topsp{X},\topsp{X}_\infty)$ in equivariant
K-theory truncates, by Bott periodicity, to the six-term exact
sequence
\begin{displaymath}
\xymatrix{\K_{\topsp{G}}^{-1}(\topsp{X}_\infty)~\ar[r]&~{\K_{\topsp{G}}^{0}(\topsp{X},\topsp{X}_\infty)}~
\ar[r]^{~~~~~~ \ \ \mathfrak{f}^0}& ~\K_{\topsp{G}}^{0}(\topsp{X})\ar[d]^{i^*}\\
\K_{\topsp{G}}^{-1}(\topsp{X})~\ar[u]^{i^*}&\ar[l]^{\!\!\mathfrak{f}^{-1}}~\K_\topsp{G}^{-1}(\topsp{X},\topsp{X}_\infty)~&
\ar[l]~\K_{\topsp{G}}^{0}(\topsp{X}_\infty) \ . }
\end{displaymath}
It follows that the charge groups are given by
\begin{displaymath}
\ker\big(\mathfrak{f}^0\big)~\simeq~\frac{\K_{\topsp{G}}^{-1}(\topsp{X}_\infty)}
{i^*\big(\K_{\topsp{G}}^{-1}(\topsp{X})\big)} \qquad \mbox{and} \qquad
\ker\big(\mathfrak{f}^{-1}\big)~\simeq~\frac{\K_{\topsp{G}}^0(\topsp{X}_\infty)}
{i^*\big(\K_{\topsp{G}}^0(\topsp{X})\big)} \ .
\end{displaymath}
This formula means that the group of Type~IIB (resp.~Type~IIA) brane
charges is measured by the group $\K_\topsp{G}^{-1}(\topsp{X}_\infty)$
(resp.~$\K_\topsp{G}^0(\topsp{X}_\infty)$) of ``orbifold Ramond-Ramond fluxes at
infinity'' which cannot be extended to all of spacetime $\topsp{X}$. We may
then interpret, for arbitrary spacetimes $\topsp{X}$, the
group $\K_{\topsp{G}}^{-1}(\topsp{X})$ (resp.~$\K_{\topsp{G}}^{0}(\topsp{X})$) as the group classifying
Ramond-Ramond fields in the orbifold $\topsp{X}/\topsp{G}$ in absence of branes in Type~IIB (resp.~Type~IIA) String theory.

The Ramond-Ramond current can be described explicitly by using the delocalization of Bredon cohomology over $\mathbb{C}$. The Wess-Zumino
pairing (\ref{coupling2}) between a topologically trivial, complex
Ramond-Ramond potential and a D-brane represented by an equivariant
K-cycle $(\topsp{W},\topsp{E},f)\in\cat{D}^\topsp{G}(\topsp{X})$ contributes a source term to the
Ramond-Ramond equations of motion, which is the class
$$\big[Q(\topsp{W},\topsp{E},f)\big]~\in~\H_\topsp{G}^{\rm
  even}\big(\topsp{X}\,;\,\complex\otimes\underline{\topsp{R}}(-)\big)$$ represented
by the pushforward
$$
Q(\topsp{W},\topsp{E},f)=f_!^{\H_\topsp{G}}\big(\ch^\complex(\topsp{E})\wedge_\topsp{G}\,\mathcal{R}(\topsp{W},f)\big)
\ .
$$
We now use the Riemann-Roch formula (\ref{GRRthm}) and the fact that $f_!^{\H_\topsp{G}}\circ
f^*=\Id_{\H_\topsp{G}^*(\topsp{X};\,\underline{\complex}(-))}$. Using the explicit
expression for the curvature form and the definition for $\Lambda_\topsp{G}(\topsp{X})$ given in \cite{mine}, we can
then rewrite this class as
\beq
Q(\topsp{W},\topsp{E},f)=\ch^\complex\big(f_!^{\K_\topsp{G}}(\topsp{E})\big)
\wedge_\topsp{G}\,\sqrt{\Todd_\topsp{G}(\topsp{T}_\topsp{X})\wedge_\topsp{G}\,\Lambda_\topsp{G}(\topsp{X})} \ .
\label{QWEfclass}\eeq
This is the complex Bredon cohomology class of the Ramond-Ramond
current $J$ created by the D-brane $(\topsp{W},\topsp{E},f)$. In the case $\topsp{G}=e$, the
expression (\ref{QWEfclass}) reduces to the standard class of the
current for Ramond-Ramond fields in Type~II superstring theory on
$\topsp{X}$~\cite{currents,Minasian1997,Moore2000,Olsen1999}. We can then formally conclude, in analogy with the non-equivariant
case, that the complex Bredon cohomology class associated to a class
$\xi\in{\K^*_{\topsp{G}}(\topsp{X})\otimes\complex}$ representing a
Ramond-Ramond field is assigned by the equivariant Chern character, and that the total Ramond-Ramond fieldstrength $\omega$ associated to $\xi$ satisfies
\beq\label{FCxich}
\frac{[\omega(\xi)]}{2\pi\,[\sqrt{\Todd_\topsp{G}(\topsp{T}_\topsp{X})\wedge_\topsp{G}\,\Lambda_\topsp{G}(\topsp{X})}]}=
[\ch^\complex(\xi)] \ .
\eeq
The above expression the suggests that the orbifold differential K-theory developed in the previous section is a suitable candidate to describe Ramond-Ramond fields with Dirac quantization condition, at least in the complex case. Moreover, we should stress that this analysis of the delocalized theory assumes
the strong conditions spelled out in section~\ref{Gravcoupl}, which
require a deep geometrical compatibility of the equivariant K-cycle
$(\topsp{W},\topsp{E},f)$ with the orbifold structure of $[\topsp{X}/\topsp{G}]$. The example of the linear orbifolds considered in  section ~\ref{equivkhom} is simple enough to satisfy these
conditions. It would be very interesting to find a geometrically
non-trivial explicit example to test these requirements on. However, the linear orbifolds case is very useful to understand certain aspects of the orbifold differential K-theory groups.\\
Since the $\complex$-linear $\topsp{G}$-module $\topsp{X}$ is equivariantly contractible, one
has 
\begin{displaymath}
\H_\topsp{G}^{\rm odd}(\topsp{X};\real\otimes\underline{\topsp{R}}(-))=0
\end{displaymath}
 and
$\K_\topsp{G}^0(\topsp{X})=\topsp{R}(\topsp{G})$. From Theorem~\ref{exseqthm} it then follows that
$$
\check{\K}_\topsp{G}^0(\topsp{X})~\simeq~A_{\K_\topsp{G}}^0(\topsp{X})~
\simeq~\big\{(\gamma,\omega)\in \topsp{R}(\topsp{G})\times
\Omega_{\topsp{G},{\rm cl}}^{\rm even}(\topsp{X};\real)~
\big|~\ch_{\rm pt}(\gamma)=[\omega]_{\topsp{G}-{\rm dR}}\big\} \ .
$$
Since the equivariant Chern character $\ch_{\topsp{G}/H}:\topsp{R(H)}\to \topsp{R(H)}$ for
$\topsp{H}\leq \topsp{G}$ is the identity map, the setwise fibre product truncates to
the lattice of quantized orbifold differential forms and one has
\beq
\check{\K}_\topsp{G}^0(\topsp{V})=\Omega_{\K_\topsp{G}}^{\rm even}(\topsp{V};\real) \ .
\label{checkKG0V}\eeq
This is the group of Type~IIA Ramond-Ramond fieldstrengths on $\topsp{X}$. It
naturally contains those fields which trivialize the Ramond-Ramond
currents sourced by the stable fractional D0-branes of the Type~IIA
theory, corresponding to characteristic classes $[c]$ in the
representation ring $\topsp{R}(\topsp{G})$.

This can be explicitly described as an extension of the
equivariant K-theory of $\topsp{X}$ by the group of
topologically trivial Ramond-Ramond fields $C$ of odd degree, as implied by
Corollary~\ref{charclasscorr}. Since $\topsp{X}$ is connected and
$\topsp{G}$-contractible, one has $\Omega_{\topsp{G},{\rm cl}}^0(\topsp{X};\real)=\real\otimes
\topsp{R}(\topsp{G})$ and the group (\ref{checkKG0V}) has a natural splitting
\beq
\check{\K}_\topsp{G}^0(\topsp{X})=\topsp{R}(\topsp{G})\oplus\Big(\,\bigoplus_{k=1}^d\,
\Omega_{\topsp{G},{\rm cl}}^{2k}(\topsp{X};\real)\Big) \ .
\label{checkKG0Vsplit}\eeq
Any closed orbifold form $\omega$ on $\topsp{X}$ of positive degree
is exact, $\omega=\dd_{\topsp{G}}C$, with the gauge invariance $C\mapsto
C+\dd_{\topsp{G}}\xi$. It follows that there is a natural map
$$
\bigoplus_{k=1}^d\,\Omega_{\topsp{G},{\rm cl}}^{2k}(\topsp{X};\real)~\longrightarrow~
\frac{\Omega_\topsp{G}^{{\rm even}-1}(\topsp{X};\real)}{\Omega_{\K_\topsp{G}}^{{\rm even}-1}(\topsp{X};\real)}
$$
which associates to the field strength $\omega$ the corresponding
globally well-defined Ramond-Ramond potential $C$.

On the other hand, the orbifold differential K-theory group
$\check{\K}_\topsp{G}^{-1}(\topsp{X})$ of Type~IIB Ramond-Ramond fields on $\topsp{X}$ can be
computed by using the characteristic class exact sequence
(\ref{charclasseq}) with degree shifted by $-1$. Using
$\K_\topsp{G}^{-1}(\topsp{X})=0$, one finds
\beq
{\check{\K}_{\topsp{G}}^{-1}(\topsp{X})}=\dfrac{\Omega_{\topsp{G}}^{\rm even}(\topsp{X};
\mathbb{R})}{\Omega_{\K_{\topsp{G}}}^{{\rm even}}(\topsp{X};\mathbb{R})} \ .
\label{checkKG1V}\eeq
This result reflects the fact that the Type~IIB theory has no stable
fractional D0-branes. Hence there is no extension and the
Ramond-Ramond fields are induced solely by the closed string
background. Their field strengths $\omega=\dd_{\topsp{G}}C$ are
determined entirely by the potentials $C$, which are globally defined
differential forms of even degree.\\
Setting $\topsp{X}={\rm pt}$, we obtain the orbifold differential K-theory of the point, which are given by
\begin{displaymath}
\check{\K}_\topsp{G}^0({\rm pt})~\simeq~\topsp{R}(\topsp{G})
\end{displaymath}
and
\begin{displaymath}
\check{\K}_\topsp{G}^{-1}({\rm pt})~\simeq~\topsp{R}(\topsp{G})\otimes{\mathbb{R}}/\mathbb{Z}
\end{displaymath}
For $\topsp{G}=e$, these groups reduce to the usual differential K-teory groups of the point.\\

In the previous section we defined the group of flat Ramond-Ramond fields on a general orbifold $[\topsp{X}/\topsp{G}]$ as the subgroup of orbifold differential K-theory with vanishing
curvature. In the following we will conjecture a very natural
algebraic definition of these groups which ties them somewhat more
directly to equivariant K-theory groups.\\
To motivate this conjecture, we first compute the groups
$\K_\topsp{G}^*(\topsp{X};\real/\zed)$ for the linear orbifolds above, wherein the associated differential
K-theory groups were determined explicitly. Using the field strength
exact sequence (\ref{diffKexseqGfs}), by definition one has
$$
\K_\topsp{G}^{-1}(\topsp{X};\real/\zed)\simeq\ker\big(f_{\rm fs}:\check{\K}_\topsp{G}^0(\topsp{X})\to
\Omega_{\K_\topsp{G}}^{\rm even}(\topsp{X};\real)\big)
$$
which from the natural isomorphism (\ref{checkKG0V}) trivially gives
\beq
\K_\topsp{G}^{-1}(\topsp{X};\real/\zed)=0 \ .
\label{KGflat1V}\eeq
Similarly, using $\K_\topsp{G}^{-1}(\topsp{X})=0$ one has
$$
\K_\topsp{G}^0(\topsp{X};\real/\zed)\simeq\ker\big(f_{\rm fs}:\check{\K}_\topsp{G}^{-1}(\topsp{X})\to
\Omega_{\topsp{G},{\rm cl}}^{\rm odd}(\topsp{X};\real)\big) \ .
$$
Using the natural isomorphism (\ref{checkKG1V}), the field strength
map is $$f_{\rm fs}\big([C]\big)\=\dd_{\topsp{G}}C \qquad \mbox{for}
\quad C\in\Omega_{\topsp{G}}^{\rm
  even}(\topsp{X};\real) \ , $$ giving
$$
\K_\topsp{G}^0(\topsp{X};\real/\zed)\simeq\frac{\Omega_{\topsp{G},{\rm cl}}^{\rm even}(\topsp{X};
\mathbb{R})}{\Omega_{\K_{\topsp{G}}}^{{\rm even}}(\topsp{X};\mathbb{R})} \ .
$$
Similarly to {}~(\ref{checkKG0Vsplit}), there is a natural splitting
of the vector space of closed orbifold differential forms given by
$$
\Omega_{\topsp{G},{\rm cl}}^{\rm even}(\topsp{X};\mathbb{R})=
\big(\topsp{R}(\topsp{G})\otimes\real\big)~\oplus~\Big(\,\bigoplus_{k=1}^d\,
\Omega_{\topsp{G},{\rm cl}}^{2k}(\topsp{X};\real)\Big)
$$
and we arrive finally at
\beq
\K_\topsp{G}^0(\topsp{X};\real/\zed)=\topsp{R}(\topsp{G})\otimes\real/\zed \ .
\label{KGflat0V}\eeq

These results of course simply follow from the fact that $\topsp{X}$ is
$\topsp{G}$-contractible, so that every $\dd_{\topsp{G}}$-closed Ramond-Ramond
field is trivial, except in degree zero where the gauge equivalence
classes are naturally parametrized by the twisted sectors of the
String theory in~(\ref{KGflat0V}). Note that both groups of flat
fields (\ref{KGflat1V}) and (\ref{KGflat0V}) are unchanged by
(equivariant) contraction of the $\topsp{G}$-module $\topsp{X}$ to a point, as an
analogous (but simpler) calculation shows. This suggests that the
groups $\K_\topsp{G}^*(\topsp{X};\real/\zed)$ have at least some $\topsp{G}$-homotopy
invariance properties, unlike the differential $\K_\topsp{G}$-theory
groups. This motivates the following conjectural algebraic framework
for describing these groups.

We will propose that the group $\K_\topsp{G}^*(\topsp{X};\real/\zed)$ is an
\emph{extension} of the torus of topologically trivial flat orbifold
Ramond-Ramond fields by the torsion elements in $\K_\topsp{G}^{*+1}(\topsp{X})$,
as they have vanishing image under the equivariant Chern character
$\ch_\topsp{X}$. The resulting group may be called the ``equivariant K-theory
with coefficients in $\real/\zed$''. The short exact sequence of
coefficient groups
$$
0~\longrightarrow~\zed~\longrightarrow~\real~\longrightarrow~
\real/\zed~\longrightarrow~0
$$
induces a long exact sequence of equivariant K-theory groups which, by
Bott periodicity, truncates to the six-term exact sequence
\beq\label{Bocksteinseq}
\xymatrix{\K_{\topsp{G}}^{0}(\topsp{X})~\ar[r]&~{\K_{\topsp{G}}^{0}(\topsp{X};\real)}~
\ar[r]& ~\K_{\topsp{G}}^{0}(\topsp{X};\real/\zed)\ar[d]^{\beta}\\
\K_{\topsp{G}}^{-1}(\topsp{X};\real/\zed)~\ar[u]^{\beta}&\ar[l]~\K_\topsp{G}^{-1}(\topsp{X};\real)~&
\ar[l]~\K_{\topsp{G}}^{-1}(\topsp{X}) \ . }
\eeq
The connecting homomorphism $\beta$ is a suitable variant of the usual
Bockstein homomorphism. We assume that the equivariant K-theory with
real coefficients is defined simply by the $\zed_2$-graded ring
$$\K_\topsp{G}^*(\topsp{X};\real)\=\K_\topsp{G}^*(\topsp{X})\otimes\real~\simeq~
\H_\topsp{G}^*\big(\topsp{X}\,;\,\real\otimes\underline{\topsp{R}}(-)\big) \ , $$ where
we have used Theorem~\ref{eqChernthm}. The maps to real K-theory in
{}~(\ref{Bocksteinseq}) may then be identified with the equivariant
Chern character $\ch_\topsp{X}$, whose image is a full lattice in the Bredon
cohomology group $\H_\topsp{G}^*(\topsp{X};\real\otimes\underline{\topsp{R}}(-))$. Then
the abelian group $\K_\topsp{G}^*(\topsp{X};\mathbb{R}/\mathbb{Z})$ sits in the
exact sequence
\beq
0~\longrightarrow~\K_\topsp{G}^*(\topsp{X})\otimes\real/\zed~\longrightarrow~
\K_\topsp{G}^*(\topsp{X};\real/\zed)~\xrightarrow{\beta}~\Tor
\big(\K_\topsp{G}^{*+1}(\topsp{X})\big)~\longrightarrow~0 \ .
\label{RZcoeffsdef}\eeq

When $\topsp{G}=e$, (\ref{Bocksteinseq}) is the usual Bockstein exact
sequence for K-theory. In this case, an explicit geometric
realization of the groups $\K^*(\topsp{X};\real/\zed)$ in terms of
bundles with connection has been given by Lott~\cite{Lott1994}. 
Moreover, in ~\cite{Hopkins2005} a geometric construction of
the map ${\K^{-1}(\topsp{X};\mathbb{R}/\mathbb{Z})}\to{\check{\K}^{0}(\topsp{X})}$ in
the field strength exact sequence is given. Unfortunately, no such
geometrical description is immediately available for our equivariant
differential K-theory, due to the lack of a Chern-Weil theory for the
homotopy theoretic equivariant Chern character. Our conjectural definition (\ref{RZcoeffsdef})
is satisfied by the linear orbifold groups (\ref{KGflat1V})
and~(\ref{KGflat0V}).

In~\cite{deBoer2001} a very different definition of the groups
$\K_\topsp{G}^*(\topsp{X};\real/\zed)$ is given, by defining both equivariant
K-theory and cohomology using the Borel construction of
Example~\ref{Borelex}. Then the Bockstein exact sequence
(\ref{Bocksteinseq}) is written for the ordinary K-theory groups of
the homotopy quotient $\topsp{X}_\topsp{G}=\topsp{E}\topsp{G}\times_\topsp{G}\topsp{X}$. While these groups reduce,
like ours, to the usual K-theory groups of flat fields when $\topsp{G}=e$,
they do not obey the exact sequence (\ref{RZcoeffsdef}). The reason is
that the equivariant Chern character used is \emph{not} an isomorphism
over the reals, as explained in section~\ref{equivChernsec}. Moreover, an associated differential K-theory construction
would directly involve differential forms on the infinite-dimensional
space $\topsp{X}_\topsp{G}$ which is only homotopic to the finite-dimensional
CW-complex $\topsp{X}/\topsp{G}$. The physical interpretation of such fields is not
clear. Even in the simple case of the linear orbifolds $\topsp{X}$ studied
above, this description predicts an infinite set of equivariant fluxes
of arbitrarily high dimension on the infinite-dimensional classifying
space $B\topsp{G}$, and one must perform some non-canonical quotients in order
to try to isolate the physical fluxes. In contrast, with our
constructions the relation between orbifold flux groups and Bredon
cohomology is much more natural, and it involves 
Ramond-Ramond fields defined on submanifolds of the covering space $\topsp{X}$.   

\section{K-homology and flat fields in Type II String theory}
In this final section, we will briefly comment on how the flat Ramond-Ramond fields can in principle couple to the wrapped D-branes defined in section \ref{wrappedbranes}. The analysis will be restricted to type IIA String theory.\\
As we have seen in the previous sections, the Ramond-Ramond fields in ordinary type IIA String theory are described by differential $\K$ theory. Hence, the group of flat Ramond-Ramond fields is given by ${\rm K}^{-1}(\topsp{X};\mathbb{R}/\mathbb{Z})$. Consider the short exact
sequence of coefficient groups given by
$$
1~\longrightarrow~\zed~\longrightarrow~\real~\longrightarrow~
\real/\zed~\longrightarrow~1 \ ,
$$
It induces the 
corresponding long exact sequence
$$
\cdots~\longrightarrow~\K^i(\topsp{X})~\longrightarrow~\K^i(\topsp{X})\otimes\real~
\longrightarrow~\K^i(\topsp{X};\mathbb{R}/\mathbb{Z})~\longrightarrow~\K^{i+1}(\topsp{X})~
\longrightarrow~\cdots \ .
$$
By truncating the above long exact sequence, we have 
\beq
0~\to{\K}^{-1}(\topsp{X})\otimes\mathbb{R}/\mathbb{Z}~\longrightarrow~
\K^{-1}(\topsp{X};\mathbb{R}/\mathbb{Z})~
\stackrel{\beta}{\longrightarrow}~{\rm Tor}({\K^0(\topsp{X})})~\longrightarrow~0
\label{ChernRRphases}\eeq
where $\beta$ is the Bockstein homomorphism. Thus the identity
component of the circle coefficient K-theory group is the torus
$\K^{-1}(\topsp{X};\mathbb{R}/\mathbb{Z})$. Suppose now that $\K^{-1}(\topsp{X})$ is pure torsion. In this case, $\K^{-1}(\topsp{X};\mathbb{R}/\mathbb{Z})\simeq{\rm Tor}({\K^0(\topsp{X})})$ , and the corresponding flat Ramond-Ramond fields can be
represented by virtual vector bundles over $\topsp{X}$. 
A torsion Ramond-Ramond flat field $\xi\in\K^0(\topsp{X})$ gives an additional phase factor to
a D-brane in the String theory path integral \cite{deBoer2001}. Generally, the origin of
these phases can be understood from the topological classification of
the physical coupling between Type~I D-branes and Ramond-Ramond fields.\\
Let $\topsp{W}$ be a compact ${\rm spin}^{c}$ submanifold of $\topsp{X}$ of dimension $p+1$, and let the spacetime manifold $\topsp{X}$ be a spherical ${\rm spin}^{c}$ fibration $\pi:\topsp{X}\to{\topsp{W}}$ such that $\topsp{X}/\topsp{W}\simeq\topsp{S}^{9-p}$.\\
The group of flat Ramond-Ramond fields is given by
$$
\K^{-1}\big(\topsp{X};\mathbb{R}/\mathbb{Z}\big)=\Hom\big(\K^{\rm t}_{-1}(\topsp{X})\,,\,
\real/\zed\big)\simeq\Hom\big(\K_{p-10}^{\rm t}(\topsp{W})\,,\,\real/\zed\big)
$$
where we have use the following exact sequence \cite{yosimura,Reis2006}
\begin{displaymath}
0~\to~\Ext\big(\K_{i-1}^{\rm t}(\topsp{X})\,,\,\topsp{G}\big)~
\to~\K^{i}\big(\topsp{X};\topsp{G}\big)~\to~
\Hom\big(\K_i^{\rm t}(\topsp{X})\,,\topsp{G}\big)~\to~0
\end{displaymath}
for $\topsp{G}=\mathbb{R}/\mathbb{Z}$, and the Thom isomorphism.\\ 
Using  Bott periodicity, we have finally
\beq
\K^{-1}\big(\topsp{X};\mathbb{R}/\mathbb{Z}\big)\simeq\Hom\big(\K_{p+2}^{\rm t}
(\topsp{W})\,,\,\real/\zed\big) \ .
\label{RRfluxKOtoprel}\eeq
The K-homology group $\K_{p+2}^{\rm t}(\topsp{W})$ consists of wrapped D-branes $[M,\topsp{E},\phi]$ with the properties $\dim M=p+2$ and
$\phi(M)\subset \topsp{W}$. The dimension shift is related to the topological
anomaly in the worldvolume fermion path integral~\cite{Moore2000}, as the following argument seems to suggest.\\
Consider a one-parameter family of $p+1$-dimensional brane
worldvolumes specified by a circle bundle $U\to \topsp{W}$ whose total space
$U$ is a $p+2$-dimensional
submanifold of spacetime $\topsp{X}$ with the topology of $\topsp{W}\times\topsp{S}^1$. Complex vector bundles $\topsp{E}_g$ of rank $n$ over generic fibres $U/\topsp{W}\simeq\topsp{S}^1$
are determined by elements $g\in\U(n)$ by the clutching construction. Thus the family of self-adjoint Atiyah-Singer operators $\,\nslash{\mathfrak{D}}_{\topsp{E}_g}^{\topsp{S}^1}$
determined by (\ref{ASopS1}) is parametrized by the group $\U(n)$. The
anomaly~\cite{Moore2000} arises as the determinant line bundle of
this family, which is essentially defined as the highest exterior
power of the kernel of the family. This defines a non-trivial line bundle on the group $\U(n)$ called the \emph{Pfaffian
line bundle}, which has the property that its lift to $\Spin(n)$ is the
trivial line bundle. One can also construct a connection and
holonomy of the Pfaffian line bundle~\cite{Freed1999}. The manifold $U$ is wrapped by D-branes in $\K_{p+2}^{\rm
  t}(U)$. One can now restrict to the subgroup $\K_{p+2}^{\rm
  t}(\topsp{W})\subset\K_{p+2}^{\rm t}(U)$ by keeping only those D-branes
which are wrapped on the embedding $\topsp{W}\hookrightarrow U$ by the zero
section of $U\to \topsp{W}$. The isomorphism (\ref{RRfluxKOtoprel}) reflects
the fact that the topological anomaly could in principle be cancelled by coupling
D-branes to the Ramond-Ramond fields through a phase factor.\\

\setcounter{section}{0} 

\renewcommand{\thesection}{\Alph{section}}
\newpage
\pagestyle{plain}
\begin{center}\appendix{}\end{center}
\begin{center}
{\Large \textbf{Linear algebra in Functor categories}}
\end{center}
In this appendix we will summarize some notions about algebra in
functor categories that were used in the main text of the paper. They
generalize the more commonly used concepts for modules over a
ring. For further details see \cite{Dieck1987}.

Let $\topsp{R}$ be a commutative ring, and denote the category of (left)
$\topsp{R}$-modules by $\lmod{\topsp{R}}$. Let $\Gamma$ be a \emph{small} category,
\emph{i.e.}, its class of objects ${\rm Obj}(\Gamma)$ is a set. If
$\mathcal{C}$ is another category, then one denotes by
\begin{displaymath}
\left[\Gamma,\mathcal{C}\right]
\end{displaymath}
the \emph{functor category} of (covariant) functors
$\Gamma\to{\mathcal{C}}$. The objects of
$\left[\Gamma,\mathcal{C}\right]$ are (covariant) functors
$\phi:\Gamma\to\mathcal{C}$ and a morphism from $\phi_{1}$ to
$\phi_{2}$ is a natural transformation $\alpha:\phi_{1}\to\phi_{2}$
between functors.

In particular, in the main text we used the functor category
\begin{displaymath}
\lmod{\topsp{R}\Gamma}:=\left[\Gamma,\lmod{\topsp{R}}\right]
\end{displaymath}
whose objects are called \emph{left $\topsp{R}\Gamma$-modules}. If one denotes
with $\Gamma^{\text{op}}$ the dual category to $\Gamma$, then there is
also the functor category
\begin{displaymath}
\rmod{\topsp{R}\Gamma}:=\left[\Gamma^{\text{op}},\lmod{\topsp{R}}\right]
\end{displaymath}
of contravariant functors $\Gamma\to\lmod{\topsp{R}}$, whose objects are
called \emph{right $\topsp{R}\Gamma$-modules}. As an example, let $\topsp{G}$ be a
discrete group regarded as a category with a single object and a
morphism for each element of $\topsp{G}$. A covariant functor $\topsp{G}\to\lmod{\topsp{R}}$
is then the same thing as a left module over the group ring $\topsp{R}[\topsp{G}]$ of
$\topsp{G}$ over $\topsp{R}$.

As the name itself suggests, all standard definitions from the linear
algebra of modules have extensions to this more general setting. For
instance, the notions of \emph{submodule, kernel, cokernel, direct
  sum, coproduct, etc.} can be naturally defined objectwise. If $\topsp{M}$
and $\topsp{N}$ are $\topsp{R}\Gamma$-modules, then $\text{Hom}_{\topsp{R}\Gamma}(\topsp{M},\topsp{N})$ is the
$\topsp{R}$-module of all natural transformations $\topsp{M}\to{\topsp{N}}$. This notation
should not be confused with the one used for the set of all morphisms
between two objects in $\Gamma$, and usually it is clear from the
context.

If $\topsp{M}$ is a right ${\topsp{R}\Gamma}$-module and $\topsp{N}$ is a left
${\topsp{R}\Gamma}$-module, then one can define their categorical \emph{tensor
  product}
\begin{displaymath}
\topsp{M}\otimes_{\topsp{R}\Gamma}\topsp{N}
\end{displaymath}
in the following way. It is the $\topsp{R}$-module given by first forming the
direct sum
\begin{displaymath}
\topsp{F}=\bigoplus_{\lambda\in\text{Obj}(\Gamma)}\,
\topsp{M}(\lambda)\otimes_{\topsp{R}}\topsp{N}(\lambda)
\end{displaymath}
and then quotienting $\topsp{F}$ by the $\topsp{R}$-submodule generated by all
relations of the form
\begin{displaymath}
f^*(m)\otimes{n}-m\otimes{f_*(n)}=0 \ ,
\end{displaymath}
where $(f:\lambda\to{\rho})\in\text{Mor}(\Gamma)$,
$m\in{\topsp{M}(\rho)},\:n\in{\topsp{N}(\lambda)}$ and $f^*(m)=\topsp{M}(f)(m),\:f_*
(n)=\topsp{N}(f)(n)$. This tensor product commutes with coproducts.
If $\topsp{M}$ and $\topsp{N}$ are functors from $\Gamma$ to the category of vector
spaces over a field $\mathbb{K}$, then their tensor product is
naturally equiped with the structure of a vector space over
$\mathbb{K}$. When $\Gamma$ is the orbit category $\ocat{\topsp{G}}$ and
$\topsp{R}=\zed$, the tensor product has precise limiting cases. For an
arbitrary contravariant module $\topsp{M}$ and the constant covariant module
$\topsp{N}$, the categorical product $\topsp{M}\otimes_{\zed\ocat{\topsp{G}}}\topsp{N}$ is the tensor
product of the right $\zed[\topsp{G}]$-module $\topsp{M}(\topsp{G}/e)$ with the constant left
$\zed[\topsp{G}]$-module $\topsp{N}(\topsp{G}/e)$, $\topsp{M}(\topsp{G}/e)\otimes_{\zed[\topsp{G}]}\topsp{N}(\topsp{G}/e)$. On the
other hand, if the contravariant module $\topsp{M}$ is constant and the
covariant module $\topsp{N}$ is arbitrary, then $\topsp{M}\otimes_{\zed\ocat{\topsp{G}}}\topsp{N}$ is
just $\topsp{N}(\topsp{G}/\topsp{G})$.

\newpage
\begin{center}\appendix{}\end{center}
\begin{center}
{\Large \textbf{Clifford algebras and Spin manifolds}}
\end{center}
In this appendix we will briefly recall some basic properties of Clifford algebras and Spin manifolds used throughout this thesis. We direct the reader to \cite{ABS,LM} for an extensive treatment of these topics.\\
Let $\rm V$ be a real vector space, and let $q:{\rm V}\to\mathbb{R}$ be a quadratic form. Consider the tensor algebra of ${\rm V}$
\begin{displaymath}
\mathcal{F}({\rm V}):=\sum_{r=0}^{\infty}\bigotimes^{r}{\rm V}
\end{displaymath}
and denote with ${\rm I}_{q}(\rm V)$ the two-sided ideal in $\mathcal{F}(\rm V)$ generated by all elements of the form \\$v\otimes{v}+q(v)1$ for $v\in{V}$. Then the \emph{Clifford algebra} $\Cl({\rm V};q)$ associated to $V$ and $q$ is the associative algebra with unit defined as
\begin{displaymath}
\Cl({\rm V};q):=\mathcal{F}(\rm V)/{\rm I}_{q}(\rm V)
\end{displaymath}
The algebra $\Cl({\rm V};q)$ is generated by the vector space ${\rm V}\subset\Cl({\rm V};q)$ subject to the relations
\begin{displaymath}
v\cdot{v}=-q(v)1
\end{displaymath}
which give a ``universal'' characterization of the algebra.\\
Given two vector spaces ${\rm V},{\rm V'}$, equiped with quadratic forms $q,q'$, respectively, any linear map $f:{\rm V}\to{\rm V'}$ preserving the quadratic forms induces a homomorphism $\tilde{f}:\Cl({\rm V};q)\to\Cl({\rm V'};q')$. An important example is given by the homomorphism $\tilde{\alpha}$ induced by the map $\alpha(v)=-v$ on $\rm V$. Since $\alpha^{2}={\rm id}$, there is a decomposition
\begin{displaymath}
\Cl({\rm V};q)=\Cl^{0}({\rm V};q)\oplus\Cl^{1}({\rm V};q)
\end{displaymath}
where $\Cl^{i}({\rm V};q):=\left\{\varphi\in\Cl({\rm V};q):\alpha(\varphi)=(-1)^{i}\varphi\right\}$. The vector space $\Cl^{0}({\rm V};q)$ is a subalgebra of $\Cl({\rm V};q)$, and it is called the \emph{even part}, while the subspace $\Cl^{1}({\rm V};q)$ is called the \emph{odd part} of $\Cl({\rm V};q)$.\\
Given a Clifford algebra $\Cl({\rm V};q)$, we can define the \emph{multiplicative group of units} as
\begin{displaymath}
\Cl^{*}({\rm V};q):=\left\{\varphi\in\Cl({\rm V};q):\exists\:\varphi^{-1}\in\Cl({\rm V};q)\:\text{such that}\:\varphi^{-1}\cdot\varphi=\varphi\cdot\varphi^{-1}=1\right\}
\end{displaymath}
When ${\rm dimV}=n<\infty$, $\Cl^{*}({\rm V};q)$ is a Lie group of dimension $2^{n}$. Consider the subgroup ${\rm P}({\rm V};q)\subset\Cl^{*}({\rm V};q)$ generated by the elements $v\in{\rm V}$ with $q(v)\neq{0}$. Then the \emph{Pin group} associated to the pair $({\rm V};q)$ is the subgroup ${\rm Pin}({\rm V};q)$ of ${\rm P}({\rm V};q)$ generated by elements $v\in{V}$ with $q(v)=\pm{1}$. The associated \emph{Spin group} of $({\rm V};q)$ is defined as 
\begin{displaymath}
{\rm Spin}({\rm V};q):={\rm Pin}({\rm V};q)\cap\Cl^{0}({\rm V};q)
\end{displaymath}
We will now restrict to the case in which ${\rm V}=\mathbb{R}^{n}$, and 
\begin{displaymath}
q(x):=x^{2}_{1}+x^{2}_{2}+\dots+x^{2}_{n}
\end{displaymath}
is the quadratic form induced by the usual scalar product. Denote with $\Cl_{n}$ the Clifford algebra $\Cl({\mathbb{R}^{n}};q)$. If $e_{1},e_{2},\cdots,e_{n}$ is any orthonormal basis of $\mathbb{R}^{n}$, then $\Cl_{n}$ is generated as an algebra by $e_{1},e_{2},\cdots,e_{n}$ and $1$ subject to the relations
\begin{displaymath}
e_{i}\cdot{e}_{j}+e_{j}\cdot{e}_{j}=-2\delta_{ij}\:1
\end{displaymath}
It is straightforward to see that $\Cl_{1}\simeq{\mathbb{C}}$, and $\Cl_{2}\simeq{\mathbb{H}}$. Moreover, one can prove that ${\rm dim}_{\mathbb{R}}(\Cl_{n})=2^{n}$. The decomposition of $\Cl_{n}$ in even and odd part induces the following isomorphism
\begin{displaymath}
\Cl_{n}\simeq\Cl^{0}_{n+1}
\end{displaymath}
An important result concerning the Clifford algebra $\Cl_{n}$ is the following. There is a canonical isomorphism $\Cl_{n}\simeq\Lambda^{*}\mathbb{R}^{n}$, according to which the Clifford multiplication between $v\in\mathbb{R}^{n}$ and any $\varphi\in\Cl_{n}$ can be written as
\begin{displaymath}
v\cdot\varphi\simeq{v\wedge\varphi}-v\lfloor{\varphi}
\end{displaymath}
where we have identified $\mathbb{R}^{n}$ with its dual via the scalar product, and $\lfloor$ denotes the contraction of the vector $v$ with an element of $\Lambda^{*}\mathbb{R}^{n}$. The Clifford algebras can be described as matrix algebras over $\mathbb{R},\mathbb{C}$, or $\mathbb{H}$. Indeed, denote with $\mathbb{C}\ell_{n}$ the complexification $\Cl_{n}\otimes_{\mathbb{R}}\mathbb{C}$ of $\Cl_{n}$. Then for all $n\leq{0}$ we have the \emph{periodicity isomorphisms}
\begin{displaymath}
\begin{array}{c}
\Cl_{n+8}\simeq\Cl_{n}\otimes\Cl_{8}\\
\mathbb{C}\ell_{n+2}\simeq\mathbb{C}\ell_{n}\otimes_{\mathbb{C}}\mathbb{C}\ell_{2}
\end{array}
\end{displaymath}
The above isomorphisms allow to deduce all the algebras $\Cl_{n}$ and $\mathbb{C}\ell_{n}$ from the following table 
\begin{displaymath}
\begin{array}{c|cccccccc}
&1&2&3&4&5&6&7&8\\
\hline
\Cl_{n}&\mathbb{C}&\mathbb{H}&\mathbb{H}\oplus\mathbb{H}&\mathbb{H}(2)&\mathbb{C}(4)&\mathbb{R}(8)&\mathbb{R}(8)\oplus\mathbb{R}(8)&\mathbb{R}(16)\\
\hline
\mathbb{C}\ell_{n}&\mathbb{C}\oplus\mathbb{C}&\mathbb{C}(2)&\mathbb{C}(2)\oplus\mathbb{C}(2)&\mathbb{C}(4)&\mathbb{C}(4)\oplus\mathbb{C}(4)&\mathbb{C}(8)&\mathbb{C}(8)\oplus\mathbb{C}(8)&\mathbb{C}(16)
\end{array}
\end{displaymath}\\
where ${\rm K}(n)$ denotes the algebra of $n\times{n}$ matrices over the field ${\rm K}$, with ${\rm K}=\mathbb{R},\mathbb{C},\mathbb{H}$.\\
The above table also dictates the theory of representations of Clifford algebras. Indeed, if we see ${\rm K}(n)$ as an algebra over $\mathbb{R}$, the natural representation of ${\rm K}(n)$ on the vector space ${\rm K}^{n}$ is the only irreducible real representation of ${\rm K}(n)$ up to isomorphism, while the algebras ${\rm K}(n)\oplus{\rm K}(n)$ have exactly two equivalence classes of irreducible real representations \cite{LM}.\\
Given a module   
An important role is played by the Spin groups associated to $\mathbb{R}^{n}$ with the Euclidean quadratic form. Indeed, if we denote with ${\rm Spin}_{n}$ the group ${\rm Spin}(\mathbb{R}^{n};q)$, we have the following exact sequence
\begin{displaymath}
0\to\mathbb{Z}_{2}\to{\rm Spin}_{n}\xrightarrow{\xi_{0}}{\rm SO}_{n}\to{0}
\end{displaymath}
for all $n\geq{3}$. In particular, the map $\xi_{0}$ denotes the universal covering homomorphism of ${\rm SO}_{n}$. The representation theory of Clifford algebras can be used to construct representations of ${\rm Spin}_{n}\subset\Cl^{0}_{n}$ which are not trivial on the element -1, hence they do not arise from representation of the orthogonal group ${\rm SO}_{n}$. Indeed, we can define the \emph{real spinor representation} of ${\rm Spin}_{n}$ as the homomorphism
\begin{displaymath}
\Delta_{n}:{\rm Spin}_{n}\to{\rm GL}(\rm S)
\end{displaymath}
induced by retricting an irreducible real representation $\Cl_{n}\to\rm Hom_{\mathbb{R}}(S,S)$ to ${\rm Spin}_{n}$.\\
We can also define the \emph{complex spinor representation} of ${\rm Spin}_{n}$ as the homomorphism
\begin{displaymath}
\Delta^{\mathbb{C}}_{n}:{\rm Spin}_{n}\to{\rm GL}_{\mathbb{C}}({\rm S})
\end{displaymath}
induced by restricing an irreducible complex representation $\mathbb{C}\ell_{n}\to{\rm Hom}_{\mathbb{C}}({\rm S},{\rm S})$ to ${\rm Spin}_{n}\subset\Cl^{0}_{n}\subset\mathbb{C}\ell_{n}$. In particular, for $n$ odd, the complex spinor representation $\Delta^{\mathbb{C}}_{n}$ is irreducible, and independent of which irreducible represention of $\mathbb{C}\ell_{n}$ is used. When $n=2m$ , there is a decomposition
\begin{displaymath}
\Delta^{\mathbb{C}}_{2m}=\Delta^{\mathbb{C}+}_{2m}\oplus\Delta^{\mathbb{C}-}_{2m}
\end{displaymath}
into a direct sum of irreducible complex representations of ${\rm Spin}_{n}$. The representation $\Delta^{\mathbb{C}\pm}_{2m}$ is given by the composition of $\Delta^{\mathbb{C}}_{2m}$ with the projection $1\pm\omega_{\mathbb{C}}$, where $$\omega_{\mathbb{C}}=i^{m}e_{1}\cdot{e_{2}}\cdots{e_{2m}}$$ is the complex volume element. Notice that $\Delta^{\mathbb{C}\pm}_{2m}$ is \emph{not} a representation of $\mathbb{C}\ell_{n}$. Similar results hold for the real spinor representation.\\
Finally, a \emph{$\mathbb{Z}_{2}$-graded module} for $\Cl_{n}$ is a module $\rm W$ with a decomposition $\rm W=W^{0}\oplus W^{1}$ such that
\begin{displaymath}
\Cl^{i}_{n}{\rm W}^{j}\subseteq{\rm W}^{(i+j)(\rm mod 2)}
\end{displaymath}
Given a $\mathbb{Z}_{2}$-graded module $\rm W$ over $\Cl_{n}$, the even part ${\rm W}^{0}$ is a module over $\Cl^{0}_{n}$. Conversely, given a module ${\rm W}^{0}$ over $\Cl^{0}_{n}$, we can form the $\mathbb{Z}_{2}$-graded module
\begin{displaymath}
{\rm W}:=\Cl_{n}\otimes_{\Cl^{0}_{n}}{\rm W}^{0}
\end{displaymath}
Hence there is an equivalence between the category of $\mathbb{Z}_{2}$-graded modules over $\Cl_{n}$ and the category of ungraded modules over $\Cl_{n-1}$.\\

Let $\rm E$ be an oriented $n$-dimensional Riemannian vector bundle over a manifold $\rm X$, and suppose $n\geq{3}$. Let $P_{\rm SO}(\topsp{E})$ denote the orthonormal frame bundle of $\rm E$. Then a \emph{spin structure} on $E$ is a principal ${\rm Spin}_{n}$-bundle $P_{\rm Spin}(\rm E)$ together with a double covering
\begin{displaymath}
\xi:P_{\rm Spin}({\rm E})\to{P_{\rm SO}({\rm E})}
\end{displaymath}
such that $\xi(p\cdot{g})=\xi(p)\cdot\xi_{0}(g)$ for all $p\in{P_{\rm Spin}(\topsp{E})}$ and all $g\in{\rm Spin}_{n}$.\\
For $n=2$, a Spin structure on $\rm E$ is defined analogously, with $\rm Spin_{2}$ replaced by $\rm SO_{2}$ and $\xi_{0}:{\rm SO}_{2}\to{\rm SO}_{2}$ the connected double covering.\\
For $n=1$, $P_{\rm SO}(\rm E)\simeq{X}$ and a spin structure is simply defined to be a double covering of $\rm X$.\\
The existence of a spin structure on a vector bundle $E$ is dictated by the vanishing of the first and second Stiefel-Whitney class $w_{1}(\rm E)$ and $w_{2}(\rm E)$, respectively.\\
A \emph{spin manifold} is an oriented Riemannian manifold $\rm X$ with a spin structure on its tangent bundle. Moreover, the inequivalent spin structures on $\rm X$ are in one-to-one correspondence with elements of ${\rm H}^{1}({\rm X};\mathbb{Z}_{2})$.
Given an oriented Riemannian vector bundle, we can construct the associated \emph{Clifford bundle} defined as
\begin{displaymath}
\Cl({\rm E}):=P_{\rm SO}(\rm E)\times_{\rm c\ell(\rho)}\Cl_{n}
\end{displaymath}
where ${\rm c\ell}(\rho):{\rm SO}_{n}\to{\rm Aut}(\Cl_n)$ is induced by lifting orthogonal transformations of $\mathbb{R}^{n}$ to the Clifford algebra $\Cl_{n}$. The Clifford bundle $\Cl(\rm E)$ is a bundle of Clifford algebras over $\rm X$, and the fibrewise multiplication in $\Cl(\rm E)$ gives an algebra structure to the space of sections of $\Cl(\rm E)$. Hence, all the notions regarding Clifford algebras carry over to Clifford bundles. We can then look for bundles of irreducible modules over the Clifford bundle $\Cl(\rm E)$. These bundles can be constructed if $\rm E$ has a spin structure. Indeed, a \emph{real spinor bundle} of $\rm E$ is a bundle of the form  
\begin{displaymath}
{\rm S(E)}:=P_{\rm Spin}(\rm E)\times_{\mu}{\rm W}
\end{displaymath} 
where $\rm W$ is a left module for $\Cl_{n}$, and where ${\rm Spin}_{n}$ acts on $\rm W$ by left multiplication by elements of ${\rm Spin}_{n}\subset\Cl_{n}$. Analogously, a \emph{complex spinor bundle} of $\rm E$ is given by
\begin{displaymath}
{\rm S_{\mathbb{C}}(E)}:=P_{\rm Spin}(\rm E)\times_{\mu}{\rm W^{\mathbb{C}}}
\end{displaymath}
where $\rm W^{\mathbb{C}}$ is a complex left module for $\mathbb{C}\ell_{n}$. One can easily prove that the sections of the spinor bundle are a module over the sections of the Clifford bundle. A spinor bundle ${\rm S(E)}$ is called \emph{irreducible} if the left module $\rm W$(${\rm W}^{\mathbb{C}}$) is the irreducible real(complex) spinor representation of ${\rm Spin}_{n}$. In particular, when the vector bundle $\rm E$ has rank $n=2m$, the complex spinor
 bundle $S_{\mathbb{C}}(\rm E)$ associated to the irreducible representation of $\mathbb{C}\ell_{2m}$ decomposes as
\begin{displaymath}
{\rm S}_{\mathbb{C}}(\rm E)={\rm S}^{+}_{\mathbb{C}}(\rm E)\oplus{\rm S}^{-}_{\mathbb{C}}(\rm E)
\end{displaymath}
where ${\rm S}^{\pm}_{\mathbb{C}}(\rm E)$ is the $\pm{1}$ eigenbundle for Clifford multiplication by the complex volume element $\omega_{\mathbb{C}}$, and can be represented as
\begin{displaymath}
{\rm S}^{+}_{\mathbb{C}}({\rm E})\simeq{P_{\rm Spin}({\rm E})\times_{\Delta^{\mathbb{C}\pm}_{2m}}\mathbb{C}^{2m-1}}
\end{displaymath}
Let $\rm E$ be now equiped with a covariant derivative
\begin{displaymath}
\nabla:\Gamma(\rm E)\to\Gamma(\rm E\otimes\topsp{T}^{*}\rm X)
\end{displaymath}
induced by a connection $\tau$ on $P_{\rm SO}(\rm E)$. Since $\Cl(\rm E)$ is an associated vector bundle to $P_{\rm SO}(\rm E)$, $\tau$ induces a covariant derivative $\nabla_{\Cl}$ on $\Cl(\rm E)$ with the property that
\begin{displaymath}
\nabla_{\Cl}(\varphi\cdot\psi)=(\nabla_{\Cl}\varphi)\cdot\psi+\varphi\cdot(\nabla_{\Cl}\psi)
\end{displaymath} 
Moreover, the covariant derivative $\nabla_{\Cl}$ preserves the subbundles $\Cl^{0}(\rm E)$ and $\Cl^{1}(\rm E)$, and the volume element $\omega={e_{1}e_{2}\cdots{e_{n}}}$ is globally parallel, i.e. $\nabla_{\Cl}\omega=0$.\\ 
Suppose $\rm E$ is also equiped with a spin structure $\xi:P_{\rm Spin}({\rm E})\to{P_{\rm SO}({\rm E})}$. We can then lift the connection $\tau$ on $P_{\rm SO}(\rm E)$ to a connection $\tilde{\tau}$ on $P_{\rm Spin}(\rm E)$. Since a spinor bundle $\rm S(E)$ is an associated vector bundle to $P_{\rm SO}(\rm E)$, the connection $\tilde{\tau}$ induces a covariant derivative $\nabla_{\rm S}$ on $\rm S(E)$. In particular, the covariant derivative $\nabla_{\rm S}$ is compatible with $\nabla_{\Cl}$, in the sense that
\begin{displaymath}
\nabla_{\rm S}(\varphi\cdot\sigma)=(\nabla_{\Cl}\varphi)\cdot\sigma + \varphi\cdot(\nabla_{\rm S}\sigma)
\end{displaymath}
for any $\varphi\in\Gamma(\Cl(\rm E))$ and any $\sigma\in\Gamma(\rm S(E))$.\\
Given an $n$-dimensional oriented Riemannian manifold $\rm X$, we denote with $\Cl(\rm X)$ the Clifford bundle associated to $\rm TX$. The Clifford bundle carries a canonical covariant derivative, which is induced by the Levi-Civita connection on $P_{\rm SO}(\rm TX)$. Consider now any bundle $\rm S$ of left modules over $\Cl(X)$, not necessarily a spinor bundle, and suppose $\rm S$ is Riemannian and equiped with a Riemannian connection. We can then define a first-order elliptic differential operator $\rm D:\Gamma(\rm S)\to\Gamma(\rm S)$ called the \emph{Dirac operator} of $\rm S$, defined as
\begin{displaymath}
{\rm D}\sigma|_{x}:=\sum_{j=1}^{n}e_{j}\cdot(\nabla_{e_{j}}\sigma)|_{x}
\end{displaymath}
at $x\in{\rm X}$, where $e_{1},e_{2},\ldots,e_{n}$ is an orthonormal basis of $\rm T_{\it x}X$.\\
If we let ${\rm S}=\Cl(\rm X)$ with its canonical Riemannian connection, and view $\Cl(\rm X)$ as a bundle of left modules over itself by Clifford multiplication, then the Dirac operator is a square root of the classical Hodge laplacian.\\
Another case of major importance is the following. Let $\rm X$ be a spin manifold, and let $\rm S$ be any spinor bundle. The vector bundle $\rm S$ is Riemannian and carries a canonical covariant derivative $\nabla_{\rm Spin}$, called the \emph{spin connection}, induced by the lift of the Levi-Civita connection on $P_{\rm SO}({\rm TX})$. The Dirac operator in this case is called the \emph{Atiyah-Singer operator}, and plays a fundamental role in the Index theorem.
Finally, a Dirac operator associated to a bundle $\rm S$ of left modules over $\Cl(\rm X)$ can be \emph{twisted} by a Riemannian vector bundle with connection $\rm E$ by considering the Clifford multiplication on $\Gamma(\rm S\otimes{E})$ induced by  
\begin{displaymath}
\varphi\cdot(\sigma\otimes{e}):=(\varphi\cdot\sigma)\otimes{e}
\end{displaymath}
and equiping $\rm S\otimes{E}$ with the tensor product connection $\nabla$ defined on sections of the form $\sigma\otimes{e}$ by 
\begin{displaymath}
\nabla(\sigma\otimes{e}):=(\nabla^{\rm S}\sigma)\otimes{e}+\sigma\otimes(\nabla^{\rm E}e)
\end{displaymath}
where $\nabla^{\rm S}$ and $\nabla^{\rm E}$ are the covariant derivatives on $\rm S$ and $\rm E$, respectively.\\ 

Finally, we conclude this appendix by giving the definition of a $\rm Spin^{c}$ manifold. This involves first constructing the group $\rm Spin^{c}$. Consider the complex spinor representation 
\begin{displaymath}
\Delta_{\mathbb{C}}:{\rm Spin}_{n}\to{\rm GL}_{\mathbb{C}}(\rm W^{\mathbb{C}})
\end{displaymath}
and let $z:{\rm U(1)}\to{\rm GL}_{\mathbb{C}}(\rm W^{\mathbb{C}})$ denote the multiplication by scalar. We then get the homomorphism $\Delta_{\mathbb{C}}\times{z}:{\rm Spin}_{n}\times{\rm U(1)}\to{\Hom}_{\mathbb{C}}(\rm W^{\mathbb{C}},W^{\mathbb{C}})$, which has the element $(-1,-1)$ in its kernel. Dividing by this element gives the group
\begin{displaymath}
{\rm Spin^{c}}_{n}:={\rm Spin}_{n}\times_{\mathbb{Z}_{2}}{\rm U(1)}
\end{displaymath}
which satisfies the short exact sequence
\begin{displaymath}
0\to\mathbb{Z}_{2}\to{\rm Spin^{c}}_{n}\xrightarrow{\xi_{0}}{{\rm SO}_{n}\times{\rm U(1)}}\to{1}
\end{displaymath}
Let $\rm E$ be an oriented Riemannian vector bundle of rank $n$ over a manifold $X$. A \emph{$Spin^{c}$-structure} on $\rm E$ consists of a principal ${\rm Spin^{c}}_{n}$-bundle $P_{{\rm Spin^{c}}_{n}}(\rm E)$, and also a principal U(1)-bundle $P_{\rm U(1)}(\rm E)$ over $\rm X$ with a bundle map
\begin{displaymath}
P_{{\rm Spin^{c}}_{n}}({\rm E})\xrightarrow{\xi}P_{\rm SO_{n}}({\rm E})\times{P_{\rm U(1)}({\rm E})}
\end{displaymath}
such that $\xi$ satisfies $\xi(p\cdot{g})=\xi(p)\xi_{0}(g)$ for all $p\in P_{{\rm Spin_{\it n}^{c}}}(\rm E)$ and all $g\in{\rm Spin^{c}}_{n}$. The first Chern class $d(E)$ of the $\rm U(1)$-bundle $P_{\rm U(1)}(\rm E)$ is called the \emph{canonical class} of the $\rm Spin^{c}$-structure.\\
A \emph{$Spin^{c}$-manifold} is an oriented Riemannian manifold $\rm X$ with a $\rm Spin^{c}$-structure on the tangent bundle $\rm TX$. 
\thispagestyle{plain}
\newpage
\begin{center}\appendix{}\end{center}
\begin{center}
{\Large \textbf{Characteristic classes for vector bundles}}
\end{center}
In this appendix we will briefly recall some basic facts from the theory of characteristic classes, as developed in \cite{Milnorstash}.\\
All the spaces are assumed to be of the homotopy type of countable CW-complexes.
Let $\rm G$ be a Lie group. A \emph{classifying space} for $\rm G$ is a connected topological space $\rm BG$, together with a principal $\rm G$-bundle $\rm EG\to{BG}$ such that for any compact Hausdorff space $\rm X$ the set of homotopy classes of maps from $\rm X$ to $\rm BG$ is in bijective correspondence with the set of equivalence classes of principal G-bundles over $\rm X$. In particular, the above correspondence is induced by associating to each map $f:{\rm X\to{BG}}$ the pullback bundle $f^{*}{\rm EG}$ over $\rm X$.\\
The principal bundle $\rm EG\to{BG}$ is called the \emph{universal principal G-bundle}. It can be proven by direct construction that for any Lie group a classifying space does exist. Moreover, by the very properties of a classifying space, it is unique up to homotopy type.\\
Consider the singular cohomology ${\rm H}^*(\rm BG;\Lambda)$ with coefficients in a ring $\Lambda$. Each non-zero class in ${\rm H}^*(\rm BG;\Lambda)$ is a \emph{universal characteristic class} for principal $\rm G$-bundles. Fix a class $c\in{\rm H}^*(\rm BG;\Lambda)$. For each principal G-bundle $\rm P\to X$ we define the \emph{c-characteristic class} $c(\rm P)\in{\rm H}^*(\rm X;\Lambda)$ as 
\begin{displaymath}
c(\rm P):=f_{\rm P}^{*}(c)
\end{displaymath}
where $f_{\rm P}:{\rm X\to P}$ is a classifying map for $\rm P$, and is well defined, since $f_{\rm P}$ is uniquely defined up to homotopy. Moreover, any such characteristic class is ``natural'', in the sense that given a principal G-bundle P over X and a continuous map $\varphi:{\rm Y\to X}$ we have
\begin{displaymath}
c(\varphi^{*}\rm P)=\varphi^{*}c(\rm P)
\end{displaymath}
We will now specialize to the case $\rm G=BO_{\it n},BU_{\it n}$. In these particular cases, the universal bundle $\rm EG$ can be obtained as the appropriate bundle of frames of a \emph{universal real (or complex) vector bundle} $\mathbb{E}_{n}$ over $\rm BO_{\it n}$ (or $\rm BU_{\it n}$), which classifies real (or complex) vector bundles. Moreover, despite still difficult to compute, the cohomology rings of the classifying spaces $\rm BO_{\it n}$ and $\rm BU_{\it n}$ are quite manageable. The cohomology ring ${\rm H}^{*}({\rm BO}_{n};\mathbb{Z}_{2})$ is a $\mathbb{Z}_{2}$-polynomial ring
\begin{displaymath}
\mathbb{Z}_{2}[w_{1},w_{2},\ldots,w_{n}]
\end{displaymath}  
where $w_{k}$ is a canonical generators of ${\rm H}^{k}({\rm BO}_{n};\mathbb{Z}_{2})$, and  is called the \emph{universal k-th Stiefel-Whitney class}. To any $n$-dimensional real vector bundle $\rm E\to X$ classified by a map $f_{\rm E}:\rm X\to BO_{\it n}$, we can associate the $k$-th Stiefel-Whitney class of $\rm E$, $w_{k}(\rm E):=f^{*}_{\rm E}(w_{k})$. In particular, the \emph{total Stiefel-Whitney class} $w:=1+w_{1}+\cdots+w_{n}$ satisfies
\begin{displaymath}
w(\rm E\oplus{E'})=w(\rm E)\cup{w(\rm E')}
\end{displaymath}
If $\rm X$ is a smooth manifold, we can define the $k$-th Stiefel-Whitney class $w_{k}(X)$ of $X$ as $w_{k}(\rm TX)$. An important property of the Stiefel-Whitney classes is that for a compact smooth manifold they are invariants of the homotopy type of the manifold.\\

Given an $n$-dimensional real vector bundle $\rm E$, the first and second Stiefel-Whitney vanish exactly when $\rm E$ is orientable and admits a Spin structure, respectively. This can be seen by using the following equivalent definition of $w_{1}(\rm E)$ and $w_{2}(\rm E)$.\\
Recall that any isomorphism class of  principal G-bundles on a space $\rm X$ can be represented as a class in ${\rm H}^{1}(\rm X;G)$, the cohomology of $\rm X$ with coefficients in the sheaf of $\rm G$-valued functions, via the transition functions. When G is not abelian, ${\rm H}^{1}(\rm X;G)$ is not a group, but rather a set with a distinguished element given by the trivial $\rm G$-bundle. However, it still preserves some cohomological properties. Indeed, consider the short exact sequence
\begin{displaymath}
0\to{\rm SO}_{n}\xrightarrow{i}{\rm O}_{n}\xrightarrow{\rho}\mathbb{Z}_{2}\to{0},
\end{displaymath}
which induces the exact sequence 
\begin{displaymath}
{\rm H}^{1}({\rm X};{\rm SO}_{n})\xrightarrow{i_{*}}{\rm H}^{1}({\rm X};{\rm O}_{n})\xrightarrow{\rho_{*}}{\rm H}^{1}(\rm X;\mathbb{Z}_{2})
\end{displaymath}  
Given a rank $n$ vector bundle, we can define $w_{1}(\rm E)=\rho_{*}([P_{\rm O}(\rm E)])$, where $[P_{\rm O}(E)]$ denotes the class of the orthonormal frame bundle of $\rm E$. Hence, when $w_{1}(\rm E)=0$ we have that the class $[P_{\rm O}(E)]$ is the image of the class of an ${\rm SO}_{n}$-principal bundle, which is possible if and only if $\rm E$ is orientabale.\\
Similarly, the short exact sequence
\begin{displaymath}
0\to\mathbb{Z}_{2}\to{\rm Spin}_{n}\xrightarrow{\xi_{0}}{\rm SO}_{n}\to{0}
\end{displaymath}
induces the exact sequence
\begin{displaymath}
{\rm H}^{0}({\rm X};{\rm SO}_{n})\xrightarrow{\delta^{0}}{\rm H}^{1}({\rm X};\mathbb{Z}_{2})\to{\rm H}^{1}({\rm X};{\rm Spin}_{n})\xrightarrow{\xi_{0*}}{\rm H}^{1}({\rm X};{\rm SO}_{n})\xrightarrow{\delta}{\rm H}^{2}({\rm X};\mathbb{Z}_{2})
\end{displaymath} 
We can define the second Stiefel-Whitney class by $w_{2}(\rm E)=\delta([P_{\rm SO}(\rm E)])$. Hence, $w_{2}(\rm E)=0$ if and only if $P_{\rm SO}(\rm E)$ is equivalent to the $\mathbb{Z}_{2}$-quotient of a principal ${\rm Spin}_{n}$-bundle on $\rm X$.\\

For the classifying space ${\rm BU}_{n}$, we have that the cohomology ring ${\rm H}^{*}({\rm BU}_{n};\mathbb{Z})$ is a $\mathbb{Z}$-polynomial ring
\begin{displaymath}
\mathbb{Z}[c_{1},c_{2},\cdots,c_{n}]
\end{displaymath}
where $c_{k}\in{\rm H}^{k}({\rm BU}_{n};\mathbb{Z})$ is a canonical generator, and is called the \emph{universal k-th Chern class}. Thus, to any $n$-dimensional complex vector $\rm E\to{X}$ classified by a map $f_{\rm E}:{\rm X\to{BU}_{\it n}}$  we can associated the k-the Chern class $c_{k}(\rm E)=f_{\rm E}^{*}(c_{k})$. We can define the \emph{total Chern class} $c=1+c_{1}+\cdots +c_{n}$, which satisfies
\begin{equation}\label{split}
c(\rm E\oplus{E}')=c(\rm E)\cup{c(\rm E')}
\end{equation}  
For a given complex $n$-dimensional vector bundle $\rm E\to X$, we can compute the Chern classes which are \emph{nontorsion} in the following geometric way. Let $\rm F$ be the curvature of an arbitrary covariant derivative on $\rm E$. Recall that $\rm F$ is a differential form valued in the adjoint representation of $\rm U(n)$, hence in $n\times{n}$ antihermitian matrices. Then the $k$-th Chern class of $\rm E$ is given by the deRham class of $\alpha_{k}(\rm E)$, where
\begin{displaymath}
{\rm det}({\rm I}+\dfrac{{\rm F}t}{2\pi})=\sum_{k=1}^{n}\alpha_{k}(\rm E)t^{k}
\end{displaymath} 
The Chern classes defined above can be used as the basic ingredient to define other important characteristic class of vector bundle over a manifold. This is due to the following\\[2mm]
\textbf{Theorem C.1}\emph{
Let $\rm E$ be a $n$-dimensional complex vector bundle over a manifold $\rm X$. Then there exists a manifold $\mathcal{M}_{\rm E}$ and a smooth and proper fibration $\pi:\mathcal{M}_{\rm E}\to{\rm X}$ such that
\begin{itemize}
\item[i)] the homomorphism $\pi^{*}:{\rm H}^{*}(\rm X)\to{\rm H}^{*}(\mathcal{M}_{\rm E})$ is injective
\item[ii)] the bundle $\pi^{*}{\rm E}$ splits into the direct sum of complex line bundles
\begin{displaymath}
\pi^{*}{\rm E}\simeq\mathcal{L}_{1}\oplus\mathcal{L}_{2}\oplus\cdots\oplus\mathcal{L}_{n}
\end{displaymath}
\end{itemize}}
The above theorem ``induces'' the following \emph{splitting principle}: all polynomial identities in the Chern classes of complex vector bundles can be proven under the assumption that all vector bundles are direct sums of line bundles.\\
For a real vector bundle $\rm E$ of dimension $2n$ one can prove that the complexification $\rm E\otimes\mathbb{C}$ splits on $\mathcal{M}_{\rm E}$ as
\begin{displaymath}
\pi^{*}(\rm E\otimes\mathbb{C})\simeq\mathcal{L}_{1}\oplus\overline{\mathcal{L}}_{1}\oplus\cdots\oplus\mathcal{L}_{n}\oplus\overline{\mathcal{L}}_{n}
\end{displaymath}
where $\overline{\mathcal{L}}_{1}$ denotes the complex conjugate of the $\mathcal{L}_{1}$.\\

  Notice that by naturality property, any c-characteristic class of $\pi^{*}{\rm E}$ is in the image of the homomorphism $\pi^{*}$. We can then construct characteristic classes as the unique preimage via the splitting homomorphism $\pi^{*}$ of functions of the Chern classes of the splitting line bundles. For instance, by using the property (\ref{split}) of the total Chern class, we have
\begin{displaymath}
c({\rm E})=\prod_{k=1}^{n}(1+x_{k})
\end{displaymath}  
where $x_{k}=c_{1}(\mathcal{L}_{1})$.\\
In this way, we can associate to formal power series rational characteristic classes. We can define the \emph{total Todd class} of a complex vector bundle $\rm E$ by
\begin{displaymath}
{\rm Td}({\rm E}):=\prod_{k=1}^{n}\dfrac{x_{k}}{1-{e}^{-x_{k}}}
\end{displaymath} 
Given a real vector bundle $\rm E$ of dimension $2n$, the \emph{{total $\hat{A}$-class}} can be defined by
\begin{displaymath}
\hat{\rm A}({\rm E}):=\prod_{k=1}^{n}\dfrac{x_{k}/2}{{\rm sinh}(x_{k}/2)}
\end{displaymath} 
Finally, we can define the \emph{total Chern character} by
\begin{displaymath}
{\rm ch}({\rm E}):=\sum_{k=1}^{n}e^{x_{k}}
\end{displaymath}
The Chern charater can be represented in the deRham cohomology of $\rm X$ as
\begin{displaymath}
{\rm ch}({\rm E})=[{\rm Tr}(e^{{\rm F}/2\pi})]\in{\rm H}^{ev}({\rm X};\mathbb{R})
\end{displaymath}
where $\rm F$ is the curvature of a covariant derivative on $\rm E$, and the trace $\rm Tr$ is in the adjoint representation of $\rm U(n)$.
\thispagestyle{plain}
\newpage
\begin{center}\appendix{}\end{center}
\begin{center}
{\Large \textbf{Equivariant K-homology}}
\end{center}
\pagestyle{plain}
\subsection*{Spectral definition\label{EqKhom}}

A natural way to define the equivariant homology theory $\K_\bullet^\topsp{G}$
is by means of a \emph{spectrum} for equivariant topological K-theory
$\K^\bullet_\topsp{G}$, which  is a
particular covariant functor $\underline{\Vect}^\topsp{G}(-)$ from the orbit
category $\ocat{\topsp{G}}$ to the tensor category $\cat{Spec}$ of
spectra~\cite{Davis1998,mislin}. Given any $\topsp{G}$-complex $\topsp{X}$, the corresponding
pointed $\topsp{G}$-space is $\topsp{X}_+=\topsp{X}\amalg\pt$ and one defines the loop
spectrum $\topsp{X}_+\otimes_\topsp{G}\underline{\Vect}^\topsp{G}(-)$ by
\beq
\topsp{X}_+\otimes_\topsp{G}\underline{\Vect}^\topsp{G}(-)=\coprod_{\topsp{G}/\topsp{H}\in\ocat{\topsp{G}}}\,
\big(\topsp{X}_+^\topsp{H}\wedge\underline{\Vect}^\topsp{G}(\topsp{G}/\topsp{H})\big)\,\big/\,\sim \ ,
\label{loopspec}\eeq
where the equivalence relation $\sim$ is generated by the
identifications $f^*(x)\wedge s\sim x\wedge f_*(s)$ with $(f:\topsp{G}/{\rm K}\to
\topsp{G}/\topsp{H})\in{\rm Mor}(\ocat{\topsp{G}})$, $x\in \topsp{X}_+^\topsp{H}$,
and $s\in\underline{\Vect}^\topsp{G}(\topsp{G}/{\rm K})_{*}$. One then puts
\beq
\K_{*}^\topsp{G}(\topsp{X}):=\pi_{*}\big(\topsp{X}_+\otimes_\topsp{G}\underline{\Vect}^\topsp{G}(-)
\big) \ .
\label{KiGXspec}\eeq

By using various $\topsp{G}$-homotopy equivalences of the loop spectra
(\ref{loopspec}), one shows that this definition of equivariant
K-homology comes with a natural induction structure. For the trivial group it reduces to the
ordinary K-homology $\K_*^e=\K_*$ given by the Bott
spectrum ${BU}$. If $\topsp{G}$ is a finite group, any finite-dimensional
representation of $\topsp{G}$ naturally extends to a complex representation of
the group ring $\complex[\topsp{G}]$. Then there is an analytic assembly map
$$
\ass\,:\,\K_*^\topsp{G}(\topsp{X})~\longrightarrow~\K_*\big(\complex[\topsp{G}]\big)
$$
to the K-theory of the ring $\complex[\topsp{G}]$, induced by the collapsing
map $\topsp{X}\to\pt$ and the isomorphisms
$$\K_*\big(\complex[\topsp{H}]\big)~\cong~
\pi_*\big(\,\underline{\Vect}^\topsp{G}(\topsp{G}/\topsp{H})\big)~\cong~
\K_*^\topsp{G}(\topsp{G}/\topsp{H})~\cong~ {\rm R}(\topsp{H})$$ for any subgroup
$\topsp{H}\leq \topsp{G}$ \cite{mislin}. In the following we will give two concrete realizations of
the homotopy groups~(\ref{KiGXspec}).

\subsection*{Analytic definition\label{Andef}}

The simplest realization of the equivariant K-homology
group $\K_*^\topsp{G}(\topsp{X})$ is within the framework of an equivariant
version of Kasparov's KK-theory $\KK^\topsp{G}_*$. Let $\alg$ be a
$\topsp{G}$-algebra, \emph{i.e.}, a $C^*$-algebra $\alg$ together
with a group homomorphism $$\lambda\,:\,\topsp{G}~\longrightarrow~{\rm
  Aut}(\alg) \ . $$ By a Hilbert
$(\topsp{G},\alg)$-module we mean a Hilbert $\alg$-module $\bun$ together with
a $\topsp{G}$-action given by a homomorphism $\Lambda:\topsp{G}\to{\rm GL}(\bun)$ such
that
\beq
\Lambda_g(\varepsilon\cdot a)=\Lambda_g(\varepsilon)\cdot
\lambda_g(a)
\label{covrep}\eeq
for all $g\in \topsp{G}$, $\varepsilon\in\bun$ and $a\in\alg$. Let
$\lin(\bun)$ denote the $*$-algebra of $\alg$-linear maps
$\topsp{T}:\bun\to\bun$ admitting an adjoint with respect to the $\alg$-valued
inner product on $\bun$. The induced $\topsp{G}$-action on $\lin(\bun)$
is given by $g\cdot \topsp{T}:=\Lambda_g\circ \topsp{T}\circ\Lambda_{g^{-1}}$. Let
$\comp(\bun)$ be the subalgebra of $\lin(\bun)$ consisting of
generalized compact operators.

Given a pair $(\alg,\balg)$ of $\topsp{G}$-algebras, let
$\cat{D}^\topsp{G}(\alg,\balg)$ be the set of triples $(\bun,\phi,\topsp{T})$ where
$\bun$ is a countably generated Hilbert $(\topsp{G},\balg)$-module,
$\phi:\alg\to\lin(\bun)$ is a $*$-homomorphism which commutes with the
$\topsp{G}$-action,
\beq
\phi\big(\lambda_g(a)\big)=\Lambda_g\circ\phi(a)\circ\Lambda_{g^{-1}}
\label{covrep2}\eeq
for all $g\in \topsp{G}$ and $a\in\alg$, and $\topsp{T}\in\lin(\bun)$ such that
\begin{itemize}
\item[1)] $[\topsp{T},\phi(a)]\in\comp(\bun)$ for all $a\in\alg$; and
\item[2)] $\phi(a)\,(\topsp{T}-\topsp{T}^*)$, $\phi(a)\,(\topsp{T}^2-1)$, $\phi(a)\,(g\cdot
  \topsp{T}-\topsp{T})\in\comp(\bun)$ for all $a\in\alg$ and $g\in \topsp{G}$.
\end{itemize}
The standard equivalence relations of KK-theory are now analogously
defined. The set of equivalence classes in $\cat{D}^\topsp{G}(\alg,\balg)$
defines the equivariant KK-theory groups $\KK_*^\topsp{G}(\alg,\balg)$.

If $\topsp{X}$ is a smooth proper $\topsp{G}$-manifold without boundary, and $\topsp{G}$ acts
on $\topsp{X}$ by diffeomorphisms, then the algebra
$\alg=\C_0(\topsp{X})$ of continuous functions on $\topsp{X}$ vanishing at infinity is
a $\topsp{G}$-algebra with automorphism $\lambda_g$ on $\alg$ given by
$$\lambda_g(f)(x)~:=~\big(g^*f\big)(x)\=f\big(g^{-1}\cdot x\big) \ , $$
where $g^*$ denotes the pullback of the $\topsp{G}$-action on $\topsp{X}$ by left
translation by $g^{-1}\in \topsp{G}$. We define
\beq\label{KKGandef}
\K_*^\topsp{G}(\topsp{X}):=\KK_*^\topsp{G}\big(\C_0(\topsp{X})\,,\,\complex\big)
\eeq
with $\topsp{G}$ acting trivially on $\complex$. The conditions (\ref{covrep})
and (\ref{covrep2}) naturally capture the physical requirements that
physical orbifold string states are $\topsp{G}$-invariant and also that the
worldvolume fields on a fractional D-brane carry a ``covariant
representation'' of the orbifold group~\cite{douglas1996}.

\subsection*{The equivariant Dirac class\label{Diracclass}}

We can determine a canonical class in the abelian group
(\ref{KKGandef}) as follows. We refer to {\appclifford} for basic notions of Clifford algebras.\\
Let $\dim(\topsp{X})=2n$, and let $\topsp{G}$ be a finite
subgroup of the rotation group $\SO(2n)$.\footnote{Throughout the
  extension to $\K^\topsp{G}_1$ or $\K^{-1}_\topsp{G}$ and $\dim(\topsp{X})$ odd can be
  described in the same way as in degree zero by replacing $\topsp{X}$ with
  $\topsp{X}\times\S^1$.} 
A choice of a complete $\topsp{G}$-invariant riemannian
metric on $\topsp{X}$ naturally lieft the Clifford bundle $\Cl(\topsp{X})$. 

The $\topsp{G}$-manifold $\topsp{X}$ is said to have a \emph{$G$-\spinc structure} or
to be \emph{$\K_G$-oriented} if there is an extension of the orthonormal frame
bundle $P_{{\rm SO}_{2n}}(\topsp{X})$ to a principal $\Spin^c(2n)$-bundle $P_{{\rm Spin^{c}}_{2n}}(\topsp{X})$ over $\topsp{X}$ which is
compatible with the $\topsp{G}$-action. The extension $P_{{\rm Spin^{c}}_{2n}}(\topsp{X})$ may be
regarded as a principal circle bundle over $P_{{\rm SO}_{2n}}(\topsp{X})$,
$$
\xymatrix{ & & \U(1)\ar[ld]\ar[d] & \\ 
\hat{\topsp{G}}~\ar[r]\ar[d] & ~\Spin^c(2n)~\ar[r]\ar[d] & ~
P_{{\rm Spin^{c}}_{2n}}(\topsp{X})\ar[r]\ar[d]&~ \topsp{X} \ , \\
\topsp{G}~\ar[r] & {\rm SO}(2n)\ar[r] & P_{{\rm SO}_{2n}}(\topsp{X})\ar[ru] & }
$$
where the pullback square on the bottom left defines the required
covering of the orbifold group $\topsp{G}<\SO(2n)$ by a subgroup of the \spinc
group $\hat{\topsp{G}}<\Spin^c(2n)$. The
kernel of the homomorphism $\hat{\topsp{G}}\to\topsp{G}$ is identified with the circle
group $\U(1)<\Spin^c(2n)$. We fix
a choice of lift and hence assume that $\topsp{G}$ is a discrete subgroup of
the \spinc group. $\zed_2$-graded Clifford modules are likewise
extended to representations of $\complex[\topsp{G}]\otimes\Cliff(2n)$, with
$\complex[\topsp{G}]$ the group ring of $\topsp{G}$, called $\topsp{G}$-Clifford modules.\\ 
Since $\topsp{G}$ lifts to $\hat{\topsp{G}}$ in the \spinc group, the
 the spinor bundles bundles ${\rm S}^{\pm}$ associated to $P_{{\rm Spin^{c}}_{2n}}(\topsp{X})$ are naturally $\topsp{G}$-bundles. The
$\topsp{G}$-invariant Levi-Civita connection determines a connection
on $P_{{\rm SO}_{2n}}(\topsp{X})$, and together with a choice of $\topsp{G}$-invariant connection form on the principal $\U(1)$-bundle $P_{{\rm Spin^{c}}_{2n}}(\topsp{X})\to P_{{\rm SO}_{2n}}(\topsp{X})$,
they determine a connection one-form on the principal bundle $P_{{\rm Spin^{c}}_{2n}}(\topsp{X})\to \topsp{X}$ which is $\topsp{G}$-invariant. This
determines an invariant covariant derivative
$$
\nabla^{{\rm S}\otimes
  \topsp{E}}:\,
\Gamma\big({\rm S}^+\otimes \topsp{E}\big)~\to~
\Gamma\big(\,\topsp{T}^*_\topsp{X}\otimes{\rm S}^+\otimes \topsp{E}\big)
$$
where $\nabla^\topsp{E}$ is a $\topsp{G}$-invariant connection on a $\topsp{G}$-bundle $\topsp{E}\to
\topsp{X}$. The contraction given by Clifford multiplication defines a map
$$
\Cl\,:\,\Gamma\big(\,\topsp{T}^*_\topsp{X}\otimes{\rm S}^+\otimes \topsp{E}\big)~
\to~\Gamma\big({\rm S}^-\otimes \topsp{E}\big)
$$
which graded commutes with the $\topsp{G}$-action, and the $\topsp{G}$-invariant
\spinc Dirac operator on $\topsp{X}$ with coefficients in $\topsp{E}$ is defined as
the composition
\beq
\Dirac_\topsp{E}^{\topsp{X}}=\Cl\circ\nabla^{{\rm S}\otimes \topsp{E}} \ .
\label{DiracEX}\eeq

We will view the operator (\ref{DiracEX}) as an operator on
$\Ltwo$-spaces $$\Dirac_\topsp{E}^{\topsp{X}}\,:\,\Ltwo\big(\topsp{X}\,,\,{\rm S}^+\otimes
\topsp{E}\big)~\longrightarrow~\Ltwo\big(\topsp{X}\,,\,{\rm S}^-\otimes \topsp{E}\big) \ . $$
It induces a class
$\big[\Dirac_\topsp{E}^\topsp{X}\big]\in\K_0^\topsp{G}(\topsp{X})$ as follows. The $\topsp{G}$-algebra
$\C_0(\topsp{X})$ acts on the $\zed_2$-graded $\topsp{G}$-Hilbert space
$\bun:=\Ltwo(\topsp{X},{\rm S}\otimes \topsp{E})$ by multiplication. Define the
bounded $\topsp{G}$-invariant operator
$\topsp{T}:=\Dirac_\topsp{E}^\topsp{X}\,\big((\Dirac_\topsp{E}^\topsp{X})^2+1\big)^{-1/2}\in\Fred_\topsp{G}$. Then
$\big[\Dirac_\topsp{E}^\topsp{X}\big]$ is represented by the $\topsp{G}$-equivariant Fredholm
module $(\bun,\topsp{T})$.
\thispagestyle{plain}
\subsection*{Geometric definition\label{Topdef}}

Geometric equivariant K-homology can be defined for an arbitrary
discrete, countable group $\topsp{G}$ on the category of proper, finite
$\topsp{G}$-complexes $\topsp{X}$ and proven to be isomorphic to analytic equivariant
K-homology~\cite{baum-2007-3}. Recall that the topological equivariant
K-theory $\K_\topsp{G}^*(\topsp{X})$ is defined by applying the Grothendieck
functor $\K^*$ to the additive category $\Vect^\complex_\topsp{G}(\topsp{X})$
whose objects are complex $\topsp{G}$-vector bundles over $\topsp{X}$, \emph{i.e.},
$\K_\topsp{G}^*(\topsp{X}):=\K^*\big(\Vect^\complex_\topsp{G}(\topsp{X})\big)$.
In the homological setting, the relevant category
is instead the additive category of \emph{$G$-equivariant K-cycles}
$\cat{D}^\topsp{G}(\topsp{X})$, whose objects are triples $(\topsp{W},\topsp{E},f)$ where
\begin{itemize}
\item[(a)] $\topsp{W}$ is a manifold without boundary with a smooth proper
  cocompact $\topsp{G}$-action and $\topsp{G}$-\spinc structure;
\item[(b)] $\topsp{E}$ is an object in $\Vect^\complex_\topsp{G}(\topsp{W})$; and
\item[(c)] $ f:\topsp{W}\to \topsp{X}$ is a $\topsp{G}$-map.
\end{itemize}
Two $\topsp{G}$-equivariant K-cycles $(\topsp{W},\topsp{E}, f)$ and $(\topsp{W}',\topsp{E}', f'\,)$ are
said to be \emph{isomorphic} if there is a $\topsp{G}$-equivariant
diffeomorphism $h:\topsp{W}\to \topsp{W}'$ preserving the $\topsp{G}$-\spinc structures on
$\topsp{W},\topsp{W}'$ such that $h^*(\topsp{E}'\,)\cong \topsp{E}$ and $ f'\circ h= f$.

Define an equivalence relation $\sim$ on the category $\cat{D}^\topsp{G}(\topsp{X})$
generated by the operations of
\begin{itemize}
\item[i)] Bordism: $(\topsp{W}_i,\topsp{E}_i, f_i)\in\cat{D}^\topsp{G}(\topsp{X})$,
  $i=0,1$ are \emph{bordant} if there is a triple $(\topsp{M},\topsp{E}, f)$ where
  $\topsp{M}$ is a manifold with boundary $\partial \topsp{M}$, with a smooth proper
  cocompact $\topsp{G}$-action and $\topsp{G}$-\spinc structure, $\topsp{E}\to \topsp{M}$ is a complex
  $\topsp{G}$-vector bundle, and $ f:\topsp{M}\to \topsp{X}$ is a $\topsp{G}$-map such that
  $(\partial \topsp{M},\topsp{E}|_{\partial \topsp{M}}, f|_{\partial
    \topsp{M}})\cong(\topsp{W}_0,\topsp{E}_0, f_0)\amalg(-\topsp{W}_1,\topsp{E}_1, f_1)$. Here $-\topsp{W}_1$
  denotes $\topsp{W}_1$ with the reversed $\topsp{G}$-\spinc structure;
\item[ii)] Direct sum: If $(\topsp{W},\topsp{E}, f)\in\cat{D}^\topsp{G}(\topsp{X})$ and
  $\topsp{E}=\topsp{E}_0\oplus \topsp{E}_1$, then
  $$(\topsp{W},\topsp{E}, f)\cong(\topsp{W},\topsp{E}_0, f)\amalg(\topsp{W},\topsp{E}_1, f) \ ; $$ and
\item[iii)] Vector bundle modification: Let
  $(\topsp{W},\topsp{E}, f)\in\cat{D}^\topsp{G}(\topsp{X})$ and $\topsp{H}$ an even-dimensional $\topsp{G}$-\spinc
  vector bundle over $\topsp{W}$. Let $\widehat{\topsp{W}}=\S(\topsp{H}\oplus\id)$ denote the
  sphere bundle of $\topsp{H}\oplus\id$, which is canonically a $\topsp{G}$-\spinc
  manifold, with $\topsp{G}$-bundle projection $\pi:\widehat{\topsp{W}}\to \topsp{W}$. Let
  $${\rm S}(\topsp{H})={\rm S}(\topsp{H})^+\oplus{\rm S}(\topsp{H})^-$$ denote the
  $\zed_2$-graded $\topsp{G}$-bundle over $\topsp{W}$ of spinors on $\topsp{H}$. Set
  $\widehat{\topsp{E}}=\pi^*\big(({\rm S}(\topsp{H})^+)^\vee\otimes \topsp{E}\big)$ and
  $\widehat{ f}= f\circ\pi$. Then
  $\big(\,\widehat{\topsp{W}}\,,\,\widehat{\topsp{E}}\,,\,\widehat{
    f}~\big)\in\cat{D}^\topsp{G}(\topsp{X})$ is the \emph{vector bundle modification}
  of $(\topsp{W},\topsp{E}, f)$ by $\topsp{H}$.
\end{itemize}
We set
$$
\K^\topsp{G}_{0,1}(\topsp{X})=\cat{D}_{{\rm even},{\rm odd}}^\topsp{G}(\topsp{X})\,\big/\,\sim
$$
where the parity refers to the dimension of the K-cycle, which is
preserved by $\sim$.

Using the equivariant Dirac class, one can construct a
homomorphism from the geometric to the analytic K-homology group. On
K-cycles we define $(\topsp{W},\topsp{E},f)\mapsto f_*\big[\Dirac_\topsp{E}^\topsp{W}\big]$ and extend
linearly. This map can be used to express $\topsp{G}$-index theorems within
this homological framework and it extends to give an isomorphism
between the two equivariant K-homology groups~\cite{baum-2007-3}  
\newpage
\pagestyle{plain}
\addcontentsline{toc}{section}{Bibliography}
\bibliography{biblio}
\bibliographystyle{abbrv}
\newpage
\thispagestyle{empty}
\quad
\vspace{19.5cm}
\begin{center}
This document has been entirely\\
\includegraphics[scale=0.7]{mw-emacs2.jpg}\\
using {La\TeX} on Fedora 7 (Moonshine) Linux distribution.
\end{center}
\end{document}